\documentclass{lmcs}
\usepackage{versions}
\includeversion{full}
\excludeversion{slim}

\PassOptionsToPackage{hyphens}{url}
\usepackage{amsmath}
\usepackage{amssymb}
\usepackage{amsthm}
\usepackage{cite}
\usepackage{centernot}
\usepackage{xcolor}
\usepackage{stmaryrd}
\usepackage{url}
\usepackage{mathtools}
\usepackage{microtype}
\usepackage{prftree}
\usepackage{tikz-cd}
\usepackage{tikz}
\usetikzlibrary{calc}
\usepackage{enumitem}
\setlist{noitemsep}

\usepackage[mathscr]{eucal}
\usepackage{csquotes}
\usepackage{todonotes}
\setlength{\marginparwidth}{2.5cm}

\usepackage{fontawesome}	%

\newcommand\slimfull[2]{\processifversion{slim}{#1}\processifversion{full}{#2}}

\newcommand\REF[1]{#1} %

\makeatletter
\g@addto@macro{\UrlBreaks}{\UrlOrds\do\=\do\_}
\makeatother

\usepackage{xr}
\externaldocument[new-orders:]{new_orders}

\usepackage[
   a4paper,
   pdftitle={Optimistic Higher-Order Superposition},
   pdfauthor={},
   pdfkeywords={},
   pdfborder={0 0 0},
   draft=false,
   bookmarksnumbered,
   bookmarks,
   bookmarksdepth=3,
   bookmarksopenlevel=2,
   bookmarksopen]{hyperref}

\hyphenation{counter-example counter-examples Isa-belle Ma-try-osh-ka
  frame-work frame-works dis-cret-ized pa-ram-e-trized Pos-Ext non-empty
  sko-lem-iza-tion super-posi-tion Bent-kamp Agsy-HOL Alex-an-der Prune-Arg
  Zipper-posi-tion Ext-Inf Ext-Core Ext-Inst Ext-Neg}

\makeatletter
\def\@seccntformat#1{%
  \protect\textup{\protect\@secnumfont
    \ifnum\pdfstrcmp{subsection}{#1}=0 \bfseries\fi%
    \ifnum\pdfstrcmp{subsubsection}{#1}=0 \bfseries\fi%
    \csname the#1\endcsname
    \protect\@secnumpunct
  }%
}  
\makeatother

\DeclareFontFamily{OT1}{pzc}{}
\DeclareFontShape{OT1}{pzc}{m}{it}{<-> s * [1.10] pzcmi7t}{}
\DeclareMathAlphabet{\mathcalx}{OT1}{pzc}{m}{it}
\newcommand\cal[1]{\mathcalx{#1}}

\def\negvthinspace{\kern-0.083333em}
\def\vthinspace{\kern+0.083333em}
\def\vvthinspace{\kern+0.0416667em}
\def\negvvthinspace{\kern-0.0416667em}
\def\hypsep{\hskip1em}

\newcommand\omicron{o}

\newcommand\medrightarrow{\mathrel{{{\color{black}\relbar}\kern-0.9ex\rlap{\color{white}\ensuremath{\blacksquare}}\kern-0.9ex}\joinrel{\color{black}\rightarrow}}}
\newcommand\medleftarrow{\mathrel{{\color{black}\leftarrow}\kern-0.9ex\rlap{\color{white}\ensuremath{\blacksquare}}\kern-0.9ex\joinrel{{\color{black}\relbar}}}}
\newcommand\medleftrightarrow{\mathrel{\leftarrow\kern-1.685ex\rightarrow}}

\newcommand{\rewrite}{\rightarrow}

\newcommand{\leftrightrewrite}{\leftrightarrow}

\newcommand\Section{Sect.}

\definecolor{light-gray}{gray}{0.875}
\definecolor{darker-gray}{gray}{0.45}

\newcommand\smalllam{\raisebox{\depth}{\scalebox{1}[-1.05]{\ensuremath{_\mathsf{y}}}}\negvthinspace}

\newcommand\lkbg{\mathsf{g\smalllam kbo}}

\newcommand\lkb{\mathsf{\smalllam kbo}}

\newcommand\llpg{\mathsf{g\smalllam lpo}}

\newcommand\llp{\mathsf{\smalllam lpo}}

\newcommand{\Sigmaty}{\Sigma_\mathsf{ty}}
\newcommand{\VV}{\mathscr{V}}
\newcommand{\Vty}{\VV_\mathsf{ty}}
\newcommand{\III}{\mathscr{I}}
\newcommand{\II}{\mathscr{J}}
\newcommand{\IIty}{\II_\mathsf{ty}}
\newcommand{\IIIty}{\III_\mathsf{ty}}
\newcommand{\DD}{\mathscr{D}}

\newcommand{\RfN}{R}

\newcommand{\UU}{\mathscr{U}}

\newcommand{\EE}{\mathscr{E}}
\newcommand{\dho}{\DD}

\newcommand{\LL}{\mathscr{L}}
\newcommand\ty{\mathsf{ty}}

\newcommand{\Rbasic}{R}
\newcommand{\Deltabasic}{\Delta}

\newcommand{\infname}[1]{\textsc{#1}}

\newcommand{\Sup}{\infname{Sup}}
\newcommand{\FluidSup}{\infname{Fluid\-Sup}}
\newcommand{\EqRes}{\infname{EqRes}}
\newcommand{\EqFact}{\infname{EqFact}}
\newcommand{\Clausify}{\infname{Clausify}}
\newcommand{\BoolHoist}{\infname{Bool\-Hoist}}
\newcommand{\LoobHoist}{\infname{Loob\-Hoist}}
\newcommand{\FluidBoolHoist}{\infname{Fluid\-Bool\-Hoist}}
\newcommand{\FluidLoobHoist}{\infname{Fluid\-Loob\-Hoist}}
\newcommand{\FalseElim}{\infname{False\-Elim}}
\newcommand{\ArgCong}{\infname{Arg\-Cong}}
\newcommand{\Ext}{\infname{Ext}}  
\newcommand{\FluidExt}{\infname{FluidExt}}  
\newcommand{\Diff}{\infname{Diff}}

\newcommand{\IPGSup}{\infname{IPGSup}}
\newcommand{\IPGEqRes}{\infname{IPGEqRes}}
\newcommand{\IPGEqFact}{\infname{IPGEqFact}}

\newcommand{\IPGBoolHoist}{\infname{IPGBoolHoist}}
\newcommand{\IPGLoobHoist}{\infname{IPGLoobHoist}}
\newcommand{\IPGClausify}{\infname{IPGClausify}}
\newcommand{\IPGFalseElim}{\infname{IPGFalseElim}}
\newcommand{\IPGExt}{\infname{IPGExt}}
\newcommand{\IPGDiff}{\infname{IPGDiff}}

\newcommand{\IGSup}{\infname{IGSup}}
\newcommand{\IGEqRes}{\infname{IGEqRes}}
\newcommand{\IGEqFact}{\infname{IGEqFact}}

\newcommand{\IGBoolHoist}{\infname{IGBoolHoist}}
\newcommand{\IGLoobHoist}{\infname{IGLoobHoist}}
\newcommand{\IGClausify}{\infname{IGClausify}}
\newcommand{\IGFalseElim}{\infname{IGFalseElim}}
\newcommand{\IGExt}{\infname{IGExt}}
\newcommand{\IGDiff}{\infname{IGDiff}}

\newcommand{\PGSup}{\infname{PGSup}}
\newcommand{\PGEqRes}{\infname{PGEqRes}}
\newcommand{\PGEqFact}{\infname{PGEqFact}}
\newcommand{\PGArgCong}{\infname{PGArgCong}}
\newcommand{\PGBoolHoist}{\infname{PGBoolHoist}}
\newcommand{\PGLoobHoist}{\infname{PGLoobHoist}}
\newcommand{\PGClausify}{\infname{PGClausify}}
\newcommand{\PGFalseElim}{\infname{PGFalseElim}}
\newcommand{\PGExt}{\infname{PGExt}}
\newcommand{\PGDiff}{\infname{PGDiff}}

\newcommand{\GSup}{\infname{GSup}}
\newcommand{\GEqRes}{\infname{GEqRes}}
\newcommand{\GEqFact}{\infname{GEqFact}}
\newcommand{\GArgCong}{\infname{GArgCong}}
\newcommand{\GBoolHoist}{\infname{GBoolHoist}}
\newcommand{\GLoobHoist}{\infname{GLoobHoist}}
\newcommand{\GClausify}{\infname{GClausify}}
\newcommand{\GFalseElim}{\infname{GFalseElim}}
\newcommand{\GExt}{\infname{GExt}}
\newcommand{\GDiff}{\infname{GDiff}}

\newcommand{\PFSup}{\infname{PFSup}}
\newcommand{\PFExt}{\infname{PFExt}}
\newcommand{\PFEqRes}{\infname{PFEqRes}}
\newcommand{\PFEqFact}{\infname{PFEqFact}}
\newcommand{\PFBoolHoist}{\infname{PFBoolHoist}}
\newcommand{\PFLoobHoist}{\infname{PFLoobHoist}}
\newcommand{\PFFalseElim}{\infname{PFFalseElim}}
\newcommand{\PFClausify}{\infname{PFClausify}}
\newcommand{\PFArgCong}{\infname{PFArgCong}}
\newcommand{\PFDiff}{\infname{PFDiff}}

\newcommand{\FSup}{\infname{FSup}}
\newcommand{\FEqRes}{\infname{FEqRes}}
\newcommand{\FEqFact}{\infname{FEqFact}}
\newcommand{\FBoolHoist}{\infname{FBoolHoist}}
\newcommand{\FLoobHoist}{\infname{FLoobHoist}}
\newcommand{\FFalseElim}{\infname{FFalseElim}}
\newcommand{\FClausify}{\infname{FClausify}}
\newcommand{\FDiff}{\infname{FDiff}}
\newcommand{\FExt}{\infname{FExt}}
\newcommand{\FArgCong}{\infname{FArgCong}}

\newcommand{\leftsubterm}{[}
\newcommand{\rightsubterm}{]}
\newcommand{\subterm}[2]{#1\leftsubterm#2\rightsubterm}

\newcommand{\lang}{\begin{picture}(5,7)
\put(1.1,2.5){\rotatebox{45}{\line(1,0){6.0}}}
\put(1.1,2.5){\rotatebox{315}{\line(1,0){6.0}}}
\end{picture}}
\newcommand{\rang}{\begin{picture}(5,7)
\put(0,2.5){\rotatebox{135}{\line(1,0){6.0}}}
\put(0,2.5){\rotatebox{225}{\line(1,0){6.0}}}
\end{picture}}

\newcommand{\leftgreensubterm}{\lang\,}
\newcommand{\rightgreensubterm}{\rang}
\newcommand{\greensubterm}[2]{\ensuremath{#1\leftgreensubterm #2\rightgreensubterm}}

\newcommand{\leftorangesubterm}{\lang\!\!\leftgreensubterm}
\newcommand{\rightorangesubterm}{\rightgreensubterm\!\!\rang}
\newcommand{\orangesubterm}[2]{#1\leftorangesubterm #2\rightorangesubterm}

\newcommand{\leftinterpret}{\llbracket}
\newcommand{\rightinterpret}{\rrbracket}
\newcommand{\interpret}[3]{\smash{\leftinterpret #1\rightinterpret_{#2}^{#3}}}
\newcommand{\interpretR}[1]{\interpret{#1}{R}{}}
\newcommand{\interpretRmapF}[1]{\interpretR{\mapF{#1}}}

\newcommand{\interpretaxi}[1]{\interpret{#1}{\III}{\xi}}
\newcommand{\interpretfo}[2]{\interpret{#1}{\RfN}{#2}}

\newcommand{\xity}{\xi_\mathsf{ty}}
\newcommand{\xite}{\xi_\mathsf{te}}

\renewcommand{\doteq}{\mathrel{\dot\approx}}
\newcommand{\ceqneq}{\mathrel{\dot{\approx}}}
\newcommand{\eq}{\approx}
\newcommand{\noteq}{\not\eq}
\newcommand{\ceq}{\approx}
\newcommand{\cneq}{\not\approx}

\newcommand{\eqR}[2]{#1\sim#2}

\prfinterspace=1em
\newcommand{\namedinference}[3]{\prftree[r]{\relax{\infname{#1}}}{\strut#2}{\strut#3}}
\newcommand{\inference}[2]{\namedinference{}{\strut#1}{\strut#2}}

\newcommand{\namedsimp}[3]{\prftree[d][r]{\relax{\infname{#1}}}{\strut#2}{\strut#3}}

\DeclareMathOperator{\csu}{CSU} %
\newcommand{\csuupto}{\csu^{\mathrm{upto}}}

\newcommand\UNIF{\mathrel{\smash{\stackrel{\lower.1ex\hbox{\ensuremath{\scriptscriptstyle ?}}}{=}}}}

\newcommand{\tuple}[1]{\bar{#1}}

\newcommand{\cst}[1]{{\mathsf{#1}}}

\newcommand\defeq{=}

\newcommand\fun{\rightarrow}

\newcommand\foralltynospace[1]{\mathsf{\Pi}#1.}
\newcommand\forallty[1]{\foralltynospace{#1}\;}

\newcommand\fofun{\Rightarrow}

\newcommand{\typeargs}[1]{{\langle#1\rangle\negvthinspace}}

\newcommand{\params}[1]{{(#1)}}

\newcommand\oftype{:}
\newcommand\oftypedecl{:}

\newcommand{\diff}{\cst{diff}}

\newcommand{\db}{\cst{db}}
\newcommand\DB[1]{#1}

\newcommand{\llor}{\mathrel\lor}

\newcommand\benf[1]{#1\vthinspace{\downarrow}_{\beta\eta}}
\newcommand\bnf[1]{#1\vthinspace{\downarrow}_{\beta}}

\newcommand{\gnd}{{\mathsf{Gnd}}}
\newcommand{\concl}{\mathit{concl}}
\newcommand{\prems}{\mathit{prems}}
\newcommand{\mprem}{\mathit{mprem}}

\newcommand{\HRedC}{{\mathit{HRed}}_{\mathrm{C}}}

\newcommand{\HRedI}{{\mathit{HRed}}_{\mathrm{I}}}
\newcommand{\PGRedI}{\mathit{\slimfull{}{P}GRed}_{\mathrm{I}}}
\newcommand{\IPGRedI}{\mathit{I\slimfull{}{P}GRed}_{\mathrm{I}}}
\newcommand{\GRedI}{\mathit{GRed}_{\mathrm{I}}}
\newcommand{\IGRedI}{\mathit{IGRed}_{\mathrm{I}}}
\newcommand{\PFRedI}{\mathit{\slimfull{}{P}FRed}_{\mathrm{I}}}
\newcommand{\FRedI}{\mathit{FRed}_{\mathrm{I}}}

\newcommand{\HInf}{\mathit{HInf}}
\newcommand{\PGInf}{\mathit{\slimfull{}{P}GInf}}
\newcommand{\IGInf}{\mathit{IGInf}}
\newcommand{\IPGInf}{\mathit{I\slimfull{}{P}GInf}}
\newcommand{\PFInf}{\mathit{\slimfull{}{P}FInf}}
\newcommand{\FInf}{\mathit{FInf}}

\newcommand{\wrt}{\hbox{w.r.t.}}
\newcommand{\soundmodels}{\mathrel{|\kern-.1ex}\joinrel\approx}

\newcommand{\modelsolam}{\models_\mathrm{o\lambda}}
\newcommand{\modelsfol}{\models_\mathrm{fol}}

\newcommand{\constraint}[1]{[\![#1]\!]}
\newcommand{\closure}{\cdot}

\newcommand{\irred}{\mathrm{irred}}

\newcommand{\TT}{\mathcalx{T}}

\newcommand{\CC}{\mathcalx{C}}

\newcommand{\SigmaI}{\Sigma_\mathrm{I}}
\newcommand{\SigmaH}{\Sigma_\mathrm{H}}

\newcommand{\VVPG}{\VV_\mathrm{PG}}
\newcommand{\VVH}{\VV_\mathrm{H}}

\newcommand{\levelH}{\mathrm{H}}
\newcommand{\levelG}{\mathrm{G}}
\newcommand{\levelPG}{\mathrm{\slimfull{}{P}G}}
\newcommand{\levelIPG}{\mathrm{I\slimfull{}{P}G}}
\newcommand{\levelPF}{\mathrm{\slimfull{}{P}F}}
\newcommand{\levelF}{\mathrm{F}}

\newcommand{\termsPG}{\TT_\levelPG}
\newcommand{\termsIPG}{\TT_\levelIPG}
\newcommand{\termsPF}{\TT_\levelPF}
\newcommand{\termsF}{\TT_\levelF}

\newcommand{\clausesH}{\CC_\levelH}
\newcommand{\clausesG}{\CC_\levelG}

\newcommand{\clausesPG}{\CC_\levelPG}
\newcommand{\clausesIPG}{\CC_\levelIPG}
\newcommand{\clausesPF}{\CC_\levelPF}
\newcommand{\clausesF}{\CC_\levelF}

\newcommand{\mapGonly}{\mathcalx{G}}
\newcommand{\mapPGonly}{\mathcalx{PG}}
\newcommand{\mapPonly}{\mathcalx{P}}
\newcommand{\mapIonly}{\mathcalx{J}} %
\newcommand{\mapFonly}{\mathcalx{F}}
\newcommand{\mapTonly}{\mathcalx{T}}

\newcommand{\mapG}[1]{\mapGonly(#1)}
\newcommand{\mapPG}[1]{\mapPGonly(#1)}
\newcommand{\mapP}[1]{\mapPonly(#1)}
\newcommand{\mapI}[1]{\mapIonly(#1)}
\newcommand{\mapF}[1]{\mapFonly(#1)}
\newcommand{\mapT}[1]{\mapTonly(#1)}

\newcommand{\fipg}[1]{\mapFonly\mapIonly\mapPonly\mapGonly(#1)}
\newcommand{\fip}[1]{\mapFonly\mapIonly\mapPonly(#1)}
\newcommand{\tfip}[1]{\mapTonly\mapFonly\mapIonly\mapPonly(#1)}

\newcommand\mapponly{\mathfrak{p}}
\newcommand\mapqonly{\mathfrak{q}}
\newcommand\mapp[1]{\mapponly(#1)}
\newcommand\mapq[1]{\mapqonly(#1)}

\newcommand{\flooronly}{\mathcalx{F}}

\newcommand{\floor}[1]{\flooronly\!(#1)}

\newdimen\carpetH
\newdimen\carpetV
\def\carpet#1{\setbox0=\hbox{\ensuremath{#1}}%
  \kern+2\carpetH
    \raise+2\carpetV\copy0\kern-\wd0
    \raise+\carpetV\copy0\kern-\wd0
    \copy0\kern-\wd0
    \raise-\carpetV\copy0\kern-\wd0
    \raise-2\carpetV\copy0\kern-\wd0
  \kern-\carpetH
    \raise+2\carpetV\copy0\kern-\wd0
    \raise+\carpetV\copy0\kern-\wd0
    \copy0\kern-\wd0
    \raise-\carpetV\copy0\kern-\wd0
    \raise-2\carpetV\copy0\kern-\wd0
  \kern-\carpetH
    \raise+2\carpetV\copy0\kern-\wd0
    \raise+\carpetV\copy0\kern-\wd0
    \copy0\kern-\wd0
    \raise-\carpetV\copy0\kern-\wd0
    \raise-2\carpetV\copy0\kern-\wd0
  \kern-\carpetH
    \raise+2\carpetV\copy0\kern-\wd0
    \raise+\carpetV\copy0\kern-\wd0
    \copy0\kern-\wd0
    \raise-\carpetV\copy0\kern-\wd0
    \raise-2\carpetV\copy0\kern-\wd0
  \kern-\carpetH
    \raise+2\carpetV\copy0\kern-\wd0
    \raise+\carpetV\copy0\kern-\wd0
    \copy0\kern-\wd0
    \raise-\carpetV\copy0\kern-\wd0
    \raise-2\carpetV\copy0
  \kern2\carpetH
}
\newcommand\heavy[1]{\carpetH=.02ex\carpetV=.02ex\carpet{#1}}
\newcommand\light[1]{\carpetH=.01ex\carpetV=.01ex\carpet{\scriptstyle#1}}

\newcommand{\ifalse}{\heavy\bot}
\newcommand{\itrue}{\heavy\top}
\newcommand{\inot}{\heavy{\neg}}
\newcommand{\iand}{\mathbin{\heavy\land}}
\newcommand{\ior}{\mathbin{\heavy\lor}}
\newcommand{\iimplies}{\mathrel{\heavy\rightarrow}}

\newcommand{\ieq}{\mathrel{\heavy\approx}}
\newcommand{\ineq}{\mathrel{\heavy{\not\approx}}}

\newcommand{\inotlight}{\light{\neg}}
\newcommand{\ifalselight}{\light\bot}
\newcommand{\itruelight}{\light\top}
\newcommand{\iandlight}{\mathbin{\light\land}}
\newcommand{\iorlight}{\mathbin{\light\lor}}

\newcommand{\ineqlight}{\mathrel{\light{\not\approx}}}

\newcommand\mapdbl[5]{#1\{\DB{0}\mapsto #2_#3, \ldots, \DB{#5}\mapsto #2_#4\}}
\newcommand\mapdbtuple[3]{#1\{(\DB{0},\dots,\DB{#2})\mapsto #3\}}
\newcommand\mapdb[3]{\mapdbtuple{#1}{#3-1}{\tuple{#2}_#3}}
\newcommand{\dbsubst}[1]{\{\DB{0}\mapsto#1\}}

\begin{document}

\begin{full}

\title{Optimistic Lambda-Superposition}

\author[A.~Bentkamp]{Alexander Bentkamp\lmcsorcid{0000-0002-7158-3595}}[a]
\author[J.~Blanchette]{Jasmin Blanchette\lmcsorcid{0000-0002-8367-0936}}[a]
\author[M.~Hetzenberger]{Matthias Hetzenberger\lmcsorcid{0000-0002-2052-8772}}[b]
\author[U.~Waldmann]{Uwe Waldmann\lmcsorcid{0000-0002-0676-7195}}[c]

\address{Ludwig-Maximilians-Universität München, Geschwister-Scholl-Platz 1,
80539 München, Germany}
\email{a.bentkamp@ifi.lmu.de,jasmin.blanchette@ifi.lmu.de}

\address{TU Wien Informatics,
Favoritenstraße 9--11,
1040 Vienna, Austria}
\email{matthias.hetzenberger@tuwien.ac.at}

\address{Max Planck Institute for Informatics, Campus E1 4, 66123 Saarbrücken, Germany}
\email{uwe@mpi-inf.mpg.de}

\begin{abstract}
\noindent
The $\lambda$-superposition calculus is a successful approach to proving
higher-order formulas. However, some parts of the calculus are extremely
explosive, notably due to the higher-order unifier enumeration and the
functional extensionality axiom.
In the present work, we introduce an ``optimistic'' version of
$\lambda$-superposition that addresses these two issues.
Specifically, our new calculus delays explosive unification problems
using constraints stored along with the clauses, and it applies functional extensionality in a more targeted way.
The calculus is sound and refutationally complete with respect to a Henkin
semantics. We have yet to implement it in a prover, but examples
suggest that it will outperform, or at least usefully complement, the original
$\lambda$-superposition calculus.
\end{abstract}

\maketitle

\end{full}

\begin{slim}
\title{Simplified Variant of Optimistic Lambda-Superposition}
\maketitle
This document describes a simplified variant of the optimistic $\lambda$-superposition calculus.
The main difference is that the present variant does not annotate clauses with constraints.
This simplifies especially the completness proof because we can use ground clauses instead of
ground closures in the first-order part of the proof.
It also strengthens and simplifies the redundancy criterion.
However, we are forced to introduce superposition inferences into variables
when those variables also have occurrences inside parameters.
\end{slim}

\begin{full}

\section{Introduction}

The (Boolean) $\lambda$-superposition calculus \cite{bentkamp-et-al-2023-hosup-journal},
which generalizes Bachmair and Ganzinger's superposition calculus \cite{bachmair-ganzinger-1994},
has shown itself to be 
a powerful automated reasoning method for classical higher-order logic
with function and Boolean extensionality.
The calculus is sound and refutationally complete with respect to a Henkin semantics.
It is implemented in the Zipperposition prover \cite{vukmirovic-et-al-2022-making-hosup-work},
and a refutationally incomplete, pragmatic version of it 
drives the E prover's higher-order mode \cite{vukmirovic-et-al-2023-extending}.

These implementations of $\lambda$-superposition achieve remarkable empirical results,
but to do so, they must deprioritize or---in incomplete variants---disable
specific features of the calculus that would otherwise cause combinatorial explosion.
Among these features, the most problematic are the following:
\begin{itemize}
  \item 
  the hugely expensive computation of unifiers of flex--flex pairs,
  which the calculus requires instead of allowing Huet's preunification procedure;
  \item the functional extensionality axiom and its orientation according to the
    term order, which enforces a lot of wasteful extensionality reasoning unrelated to the actual proof goal; and
  \item the so-called fluid superposition rule, which simulates rewriting below applied variables
  and which causes lots of inferences that rarely lead to a successful proof.
\end{itemize}

In this article, we introduce the \emph{optimistic $\lambda$-superposition}
calculus (Section~\ref{sec:calculus}), which addresses the first two issues:
\begin{itemize}
\item
  For unification, our new calculus delays explosive unification problems
  using constraints stored along with the clauses.
\item For functional extensionality, it introduces a
  targeted inference rule that works in tandem with tailored term orders,
  described in a companion article \cite{bentkamp-et-al-optimistic-orders}.
  The new rule works by first assuming that two functions are equal
  and delays the proof that they are equal on all
  arguments until the assumption is found to be useful.
\end{itemize}
Both of these new features delay some work and can be considered
``optimistic,'' hence the calculus's name.

As a pleasant side effect of the new functional extensionality rule, we can
strengthen the redundancy criterion used to simplify clauses. Some inference
rules of the original $\lambda$-superposition calculus are now simplification
rules in our new calculus.

\begin{exa}
As an illustration of the
stronger redundancy criterion, consider a derivation starting from the following clauses:
\begin{align*}
&(1)\ (\lambda x.\> x + 1) \cneq (\lambda x.\> 1 + x)\\
&(2)\ y + z \ceq z + y
\end{align*}
A negative extensionality inference from (1) yields the clause
\[
(3)\ \diff(\lambda x.\> x + 1,\>\lambda x.\> 1 + x) + 1 \cneq 1 + \diff(\lambda x.\> x + 1,\>\lambda x.\> 1 + x)
\]
which eventually leads to a derivation of the empty clause using (2).
The original $\lambda$-superposition calculus required us to keep
clause (1) and perform further inferences with it,
whereas our new calculus can immediately discard (1) when generating (3).
\end{exa}

We prove our calculus sound (Section~\ref{sec:soundness})
and refutationally complete (Section~\ref{sec:refutational-completeness}).
The completeness proof is structured as
six levels, starting from superposition for a ground first-order logic
and culminating with nonground higher-order logic with functional extensionality.
The parts of the proof concerned with the constraints attached to clauses
and with the new functional extensionality rule
are inspired by basic superposition \cite{nieuwenhuis-rubio-1992,bachmair-et-al-1992}.
The two new features make the proof rather complicated,
but the calculus is simpler than the original $\lambda$-superposition calculus
in many respects:
\begin{itemize}
  \item The intricate notions of ``deep occurrences'' and ``fluid terms'' play no role in our calculus.
  \item We removed the support for inner clausification, which does not perform well in practice and complicates the original $\lambda$-superposition calculus.
  As an additional benefit, this enables us to select literals of the form $t \ceq \ifalse$
  (a claim made for the original $\lambda$-superposition calculus as well \cite{bentkamp-et-al-2023-hosup-journal}
  but corrected later \cite{bentkamp-et-al-2023-hosup-errata, nummelin-et-al-2021-boolsup-errata}).
  \item Our calculus does not require the presence of Hilbert's choice axiom.
\end{itemize}
In principle, these simplifications could be applied to the original $\lambda$-superposition calculus as well
without adding unification constraints or the new functional extensionality rule.

Our calculus's two main features are inspired by Vampire's higher-order mode \cite{bhayat-suda-2024},
which is currently the best performing higher-order prover in CASC \cite{sutcliffe-casc-j12,sutcliffe-desharnais-casc-29}.
Like our calculus, Vampire delays unification problems and functional extensionality proofs.
The mechanisms are slightly different, however, because
Vampire stores delayed unification problems in negative literals instead of constraints,
allowing inference rules to be applied to them,
and it uses unification with abstraction for
functional extensionality (which is also implemented in E
\cite[p.~13]{vukmirovic-et-al-2023-extending}) instead of an inference rule.
The similarities to performant provers, along with example problems we have
studied, suggest that our calculus not only is refutationally complete but will
also perform well empirically.
For further related work, we refer to Bentkamp et al.~\cite{bentkamp-et-al-2023-hosup-journal}.

\end{full}

\section{Logic}
\label{sec:logic}

Our formalism is higher-order logic
with functional and Boolean extensionality, rank-1 polymorphism,
but without choice and the axiom of infinity.
The logic closely resembles Gordon and Melham's HOL \cite{gordon-melham-1993},
the TPTP TH1 standard \cite{kaliszyk-et-al-2016},
and the logic underlying $\lambda$-superposition by Bentkamp et
al.~\cite{bentkamp-et-al-2023-hosup-journal}.

Departing from Bentkamp et al.,
in the present work, quantifiers are not supported and must always be encoded as
$(\lambda x.\> t) \ieq (\lambda x.\> \itrue)$
and $(\lambda x.\> t) \ineq (\lambda x.\> \ifalse)$.
This is necessary because quantifiers would
prevent us from constructing a suitable term order
for the extensionality behavior
that we want to achieve.
Moreover, we do not include the axiom of choice.

To make the positive literal of the extensionality axiom maximal, 
we introduce a special type of argument to constants into our syntax, the \emph{parameters}.
A constant that takes parameters cannot occur without them;
partial application is not allowed for parameters.
Moreover, parameters cannot contain variables bound by $\lambda$-abstractions.

As our semantics, we use Henkin semantics.
True statements in these semantics correspond exactly to
provable statements in the HOL systems. Since Henkin semantics
is not subject to G\"odel's first incompleteness theorem,
it allows us to prove refutational completeness.

\subsection{Syntax}

We use the notation $\tuple{a}_n$ or $\tuple{a}$ to denote
a tuple $(a_1, \ldots, a_n)$. If $f$ is a unary function,
we write $f(\tuple{a}_n)$ for the elementwise application
$(f(a_1), \ldots, f(a_n))$.

\subsubsection{Types}

To define our logic's types, we fix an infinite set $\Vty$
of type variables. A set $\Sigmaty$ of type constructors,
each associated with an arity, 
is a \emph{type signature}
if it contains at least one nullary type constructor $\omicron$ of Booleans
and a binary type constructor $\fun$ of functions.
A \emph{type} is either a type variable $\alpha \in \Vty$ or an applied type
constructor $\kappa(\tuple{\tau}_n)$ for some $n$-ary $\kappa \in \Sigmaty$
and types~$\tuple{\tau}_n$.
To indicate that an expression $e$ has type $\tau$,
we write $e \oftype \tau$.

\subsubsection{Lambda-Preterms and Lambda-Terms}

To define our logic's terms, for a given type signature $\Sigmaty$,
we fix a set $\VV$
of variables with associated types.
We write $x\langle\tau\rangle$ for a variable named $x$ with associated type $\tau$.
We require that
$\VV$ contains infinitely many variables of any type.

A \emph{term signature} $\Sigma$ is
a set of constants. Each constant is associated with
a type declaration of the form $\Pi\tuple{\alpha}_m.\>\tuple{\tau}_n\fofun\upsilon$,
where
$\bar\tau_n$ and $\upsilon$ are types
and $\bar\alpha_m$ is a tuple of distinct variables
that contains all type variables from $\bar\tau_n$ and $\upsilon$.
The types $\bar\tau_n$ are the types of the parameters of the constant,
and $\upsilon$ may be a function type if the constant takes
nonparameter arguments.
We require that $\Sigma$ contains the logical symbols
${\itrue,\ifalse}\oftype\omicron$;
$\inot\oftype\omicron\fun\omicron$;
${\iand,\ior,\iimplies}\oftype\omicron\fun\omicron\fun\omicron$;
and ${\ieq},{\ineq} \oftype \Pi\alpha.\>\alpha\fun\alpha\fun\omicron$.
A type signature and a term signature form a \emph{signature}.

Our syntax makes use of a locally nameless notation \cite{chargueraud-2012}
using De Bruijn indices \cite{de-bruijn-1972}. We distinguish between 
$\lambda$-preterms, $\lambda$-terms, preterms, and terms.
Roughly, $\lambda$-preterms are raw syntactic expressions,
$\lambda$-terms are the subset of locally closed $\lambda$-preterms,
preterms are $\beta\eta$-equivalence classes of $\lambda$-preterms,
and terms are $\beta\eta$-equivalence classes of $\lambda$-terms.
More precisely, we define these notions as follows.

The set of $\lambda$-preterms is built from the following expressions:
\begin{itemize}
\item a variable $x\langle\tau\rangle \oftype \tau$ for $x\langle\tau\rangle \in \VV$;
\item a symbol $\cst{f}\typeargs{\bar\upsilon_m}\params{\bar u_n} \oftype \tau$
for a constant $\cst{f}\in\Sigma$ with type declaration $\Pi\tuple{\alpha}_m.\>\tuple{\tau}_n\fofun\tau$, types $\tuple{\upsilon}_m$,
and $\lambda$-preterms $\bar u \oftype \tuple{\tau}_n$ such that all De Bruijn indices in $\bar u$ are bound;
\item a De Bruijn index $\DB{n}\langle\tau\rangle \oftype \tau$ for a natural number $n\geq 0$ and a type $\tau$, where $\tau$
  represents the type of the bound variable;
\item a $\lambda$-expression $\lambda\langle\tau\rangle\> t \oftype \tau\to\upsilon$
  for a type $\tau$ and a $\lambda$-preterm $t\oftype\upsilon$
  such that all De Bruijn indices bound by the new $\lambda\langle\tau\rangle$
  have type $\tau$;
\item an application $s\>t \oftype \upsilon$ for
  $\lambda$-preterms $s\oftype\tau\to\upsilon$ and $t\oftype\tau$.
\end{itemize}
The type arguments $\langle\bar\tau\rangle$ carry enough information to enable
typing of any $\lambda$-preterm without any context. We often leave them
implicit, when they are irrelevant or can be inferred.
In $\cst{f}\langle\bar\upsilon_m\rangle(\bar u_n) : \tau$,
we call $\bar u_n$ the parameters.
We omit $()$ when a symbol has no parameters.
Notice that it is possible for a term to contain multiple occurrences of the same free De Bruijn index
with different types. In contrast, the types of bound De Bruijn indices always match.

The set of \emph{$\lambda$-terms} is the subset
$\lambda$-preterms without free De Bruijn indices, i.e, the subset of locally closed $\lambda$-preterms.
We write
$\TT^\lambda(\Sigma,\VV)$ for the set of all $\lambda$-terms and
$\TT^{\lambda\mathrm{pre}}(\Sigma,\VV)$ for the set of all $\lambda$-preterms,
sometimes omitting the set $\VV$ when it is clear from the context.

A $\lambda$-preterm is called \emph{functional} if
its type is of the form $\tau \to \upsilon$ for some types $\tau$ and $\upsilon$.
It is called \emph{nonfunctional} otherwise.

Given a $\lambda$-preterm $t$ and $\lambda$-terms $s_0, \ldots, s_n$,
we write $\mapdbl{t}{s}{0}{n}{n}$ for the $\lambda$-preterm
resulting from substituting $s_i$ for each De Bruijn index $\DB{i + j}$ 
enclosed into exactly $j$ $\lambda$-abstractions in $t$.
For example, $(\cst{f}\>\DB{0}\>\DB{1}\>(\lambda\>\cst{g}\>\DB{1}\>\DB{2}))\{\DB{0}\mapsto\cst{a},\DB{1}\mapsto\cst{b}\}
=\cst{f}\>\cst{a}\>\cst{b}\>(\lambda\>\cst{g}\>\cst{a}\>\cst{b})$.
Given a $\lambda$-preterm $t$ and 
a tuple $\tuple{s}_n$ of $\lambda$-terms,
we abbreviate $\mapdbl{t}{s}{1}{n}{(n-1)}$ as $\mapdbtuple{t}{n-1}{\tuple{s}_n}$.

We write $\bnf{t}$ for the $\beta$-normal form of a $\lambda$-preterm $t$.

A $\lambda$-preterm $s$ is a \emph{subterm} of a $\lambda$-preterm $t$,
written $t = t[s]$,
if $t = s$,
if $t = \cst{f}\typeargs{\bar\tau_m}\params{\bar u}\>v$ with $u_i = u_i[s]$ or $v = v[s]$,
if $t = \lambda\>u[s]$,
if $t = (u[s])\>v$, or
if $t = u\>(v[s])$.
A subterm is \emph{proper} if it is distinct from the $\lambda$-preterm itself.

A $\lambda$-preterm is \emph{ground} if it contains no type variables and no term variables, i.e., if it is closed and monomorphic.
We write $\smash{\TT^{\lambda\mathrm{pre}}_\mathrm{ground}(\Sigma)}$ for the set of ground $\lambda$-preterms
and $\TT^\lambda_\mathrm{ground}(\Sigma)$ for the set of ground $\lambda$-terms.

\subsubsection{Preterms and Terms}

The set of (\emph{pre})\emph{terms} consists of the
$\beta\eta$-equivalence classes of $\lambda$-(pre)terms.
For a given set of variables $\VV$ and signature $\Sigma$, we write
$\TT(\Sigma,\VV)$ for the set of all terms and
$\TT^\mathrm{pre}(\Sigma,\VV)$ for the set of all preterms,
sometimes omitting the set $\VV$ when it is clear from the context.
We write $\TT_\mathrm{ground}(\Sigma)$ for the set of ground terms.

When referring to properties of a preterm
that depend on the representative of its equivalence class modulo $\beta$
(e.g., when checking whether a preterm is ground or whether a preterm contains a given variable $x$),
we use a $\beta$-normal representative
as the default representative of the $\beta\eta$-equivalence class.
When referring to properties of a preterm that depend on the choice of
representative modulo $\eta$, we state the intended representative explicitly.

Clearly, any preterm in $\beta$-normal form
has one of the following
four mutually exclusive forms:
\begin{itemize}
\item $x\langle\tau\rangle\> \bar{t}$ for a variable $x\langle\tau\rangle$ and terms $\bar{t}$;
\item $\cst{f}\langle\bar\tau\rangle(\bar u)\> \bar{t}$
for a symbol $f$, types $\tuple{\tau}$, and terms $\bar{u}$, $\bar{t}$;
\item $\DB{n}\langle\tau\rangle\> \bar{t}$
for a De Bruijn index $\DB{n}\langle\tau\rangle$ and terms $\bar{t}$;
\item $\lambda\langle\tau\rangle\> t$ for a term $t$.
\end{itemize}

\subsubsection{Substitutions}

A substitution is a mapping $\rho$ from type variables  $\alpha \in \Vty$ to types $\alpha\rho$
and from term variables $x \langle\tau\rangle \in \VV$ to ($\lambda$-)terms $x\rho \oftype \tau\rho$.
A substitution $\rho$ applied to a ($\lambda$-)term $t$ yields a ($\lambda$-)term $t\rho$
in which each variable $x$ is replaced by $x\rho$.
Similarly, subsitutions can be applied to types.
The notation $\{\tuple{\alpha} \mapsto \tuple{\tau}, \tuple{x} \mapsto \tuple{t}\}$
denotes a substitution that maps each $\alpha_i$ to $\tau_i$ and each $x_i$ to $t_i$,
and all other type and term variables to themselves.
The composition $\rho\sigma$ of two substitutions applies first $\rho$ and then $\sigma$:
$t\rho\sigma = (t\rho)\sigma$.
A \emph{grounding} substitution maps all variables to ground types and ground ($\lambda$-)terms.
The notation $\sigma[\tuple{x}\mapsto \tuple{t}]$
denotes the substitution that maps each $x_i$ to $t_i$ and otherwise coincides with $\sigma$.

\subsubsection{Clauses}

Finally,
we define the higher-order clauses on which
our calculus operates.
A \emph{literal} is an unordered pair of two terms $s$ and $t$
associated with a positive or negative sign.
We write positive literals as $s \ceq t$ 
and negative literals as $s \cneq t$.
The notation $s \doteq t$ stands for either $s \ceq t$ or $s \cneq t$.
Nonequational literals are not supported and must be
encoded as $s \ceq \itrue$ or $s \ceq \ifalse$.
A clause $L_1 \llor \cdots \llor L_n$ is a finite multiset of literals.
The empty clause is written as $\bot$.
\begin{slim}
Finally, we define a grounding function $\mapGonly$ on clauses as
$\mapG{C} =\{C\theta\mid \theta\text{ is a grounding substitution}\}$
\end{slim}

\begin{full}

\subsubsection{Constraints}
A constraint is a term pair, written as
$s \equiv t$.
A set of constraints 
$s_1 \equiv t_1,\> \dots,\> s_n \equiv t_n$ is \emph{true} if $s_i$ and $t_i$ are syntactically equal for all $i$.
A set of constraints $S$ is \emph{satisfiable} if there exists a substitution
such that $S\theta$ is true.
A \emph{constrained clause} $C\constraint{S}$ is a pair of a clause $C$ and a finite set of constraints $S$.
We write $\clausesH$ for the set of all constrained clauses.
Similarly, a \emph{constrained term} $t\constraint{S}$ is a pair of a term $t$ and a finite set of constraints $S$.
Terms and clauses are special cases of constrained terms and constrained clauses
where the set of constraints is empty.
Given $C\constraint{S} \in \clausesH$ and a grounding substitution $\theta$
such that $S\theta$ is true,
we call $C\theta$ a
\emph{ground instance} of $C\constraint{S}$.
We write $\gnd(C\constraint{S})$ for the set of
all ground instances of a constrained clause $C\constraint{S}$.
  
\end{full}

\subsection{Semantics}
The semantics is essentially the same as in Bentkamp et al. \cite{bentkamp-et-al-2023-hosup-journal},
adapted to the modified syntax.

A \emph{type interpretation} $\IIIty = (\UU, \IIty)$ is defined as follows.
The \emph{universe} $\UU$ is a collection of nonempty sets, called
\emph{domains}. We require that $\{0,1\}\in\UU$.
The function $\IIty$ associates a function
$\IIty(\kappa) : \UU^n \rightarrow \UU$
with each $n$-ary type constructor~$\kappa$,
such that $\IIty(\omicron) = \{0,1\}$
and for all domains $\DD_1,\DD_2\in\UU$, the set $\IIty(\fun)(\DD_1,\DD_2)$
is a subset of the function space from $\DD_1$ to $\DD_2$.
The semantics is \emph{standard} if
$\IIty({\fun})(\DD_1,\DD_2)$ is the entire function space for all $\DD_1,\DD_2$.
A \emph{type valuation} $\xity$ is a function that maps every type variable to a domain.
The \emph{denotation} of a type for a type interpretation $\IIIty$
and a type valuation $\xity$ is recursively defined by
$\interpret{\alpha}{\IIIty}{\xity}=\xity(\alpha)$ and
$\interpret{\kappa(\tuple{\tau})}{\IIIty}{\xity}=
\IIty(\kappa)(\interpret{\tuple{\tau}}{\IIIty}{\xity})$.

Given a type interpretation $\IIIty$ and a type valuation $\xity$, a \emph{term valuation} $\xite$
assigns an element $\xite(x)\in\interpret{\tau}{\IIIty}{\xity}$ to each variable $x \oftype \tau$.
A valuation $\xi = (\xity, \xite)$ is a pair of a type valuation $\xity$ and a term valuation $\xite$.

An \emph{interpretation function} $\II$ for a type interpretation $\IIIty$ associates with each symbol
$\cst{f}\oftypedecl\forallty{\tuple{\alpha}_m}\>\tuple{\tau}\fofun\upsilon$,
a domain tuple $\tuple{\DD}_m\in\UU^m$, and
values $\tuple{a} \in
\interpret{\tuple{\tau}}{\IIIty}{\xity}$
a value
$\II(\cst{f},\tuple{\DD}_m,\tuple{a}) \in
\interpret{\upsilon}{\IIIty}{\xity}$,
where $\xity$ is a type valuation that maps each $\alpha_i$ to $\DD_i$.
We require that

\medskip

\begin{minipage}{.35\textwidth}
\begin{enumerate}[label=(I\arabic*)]
	\item \label{item:interpretation:true}
	$\II(\itrue) = 1 $
	\item \label{item:interpretation:false}
	$\II(\ifalse) = 0$
	\item \label{item:interpretation:and}
	$\II(\iand)(a,b) = \min\,\{a,b\}$
	\item \label{item:interpretation:or}
	$\II(\ior)(a,b) = \max\,\{a,b\}$
\end{enumerate}
\end{minipage}
\begin{minipage}{.65\textwidth}
\begin{enumerate}[label=(I\arabic*), start=5]
	\item \label{item:interpretation:not}
	$\II(\inot)(a) = 1 - a $
	\item \label{item:interpretation:implies}
	$\II(\iimplies)(a,b) = \max\,\{1-a,b\}$
	\item \label{item:interpretation:eq}
	$\II(\ieq,\DD)(c,d) = 1$ if $c = d$ and $0$ otherwise
	\item \label{item:interpretation:neq}
	$\II(\ineq,\DD)(c,d) = 0$ if $c = d$ and $1$ otherwise
\end{enumerate}
\end{minipage}

\medskip

\noindent
for all $a,b\in\{0,1\}$, $\DD\in\UU$, and $c,d\in \DD$.

The comprehension principle states that every function designated by
a $\lambda$-expression is contained in the corresponding domain.
Loosely following Fitting~\cite[\Section~2.4]{fitting-2002}, we initially allow
$\lambda$-expressions to designate arbitrary elements of the domain, to be
able to define the denotation of a $\lambda$-term. We impose restrictions afterward
using the notion of a proper interpretation, enforcing comprehension.

A \emph{$\lambda$-designation function} $\LL$
for a type interpretation $\IIIty$ is a function that maps
a valuation $\xi$ and a $\lambda$-expression of type $\tau$ to elements
of $\interpret{\tau}{\IIIty}{\xity}$.
We require that the value $\LL(\xi,t)$ depends only
on values of $\xi$ at type and term variables that actually occur in $t$.
A type interpretation, an interpretation function, and a $\lambda$-designation function form an
\emph{interpretation} $\III = (\IIIty,\II,\LL)$.

For an interpretation~$\III$ and a valuation~$\xi$, the \relax{denotation of a $\lambda$-term} is defined
as
$\interpretaxi{x} \defeq \xite(x)$,
$\interpretaxi{\cst{f}\typeargs{\tuple{\tau}}\params{\tuple{s}}} \defeq
\II(\cst{f},\interpret{\tuple{\tau}}{\IIIty}{\xity},\interpretaxi{\tuple{s}})$,
$\interpretaxi{s\>t} \defeq \interpretaxi{s} (\interpretaxi{t})$, and
$\interpretaxi{\lambda\langle\tau\rangle\> t} \defeq \LL(\xi,\lambda\langle\tau\rangle\> t)$.
For ground $\lambda$-terms $t$, the denotation does not depend on the choice of the valuation $\xi$,
which is why we sometimes write $\interpret{t}{\III}{}$ for $\interpretaxi{t}$.

An interpretation $\III$ is \emph{proper} if
$\interpret{\lambda\langle\tau\rangle\>t}{\III}{(\xity,\xite)}(a) = \interpret{t\dbsubst{x}}{\III}{(\xity,\xite[x\mapsto a])}$ for all
$\lambda$-expressions $\lambda\langle\tau\rangle\>t$ and all valuations $\xi$, where $x$ is a fresh variable.
Given an interpretation $\III$ and a valuation $\xi$, a
positive literal $s\ceq t$ (resp.\ negative literal $s\cneq t$) is
\relax{true} if %
$\interpretaxi{s}$ and $\interpretaxi{t}$ are equal (resp.\ different).
A clause is \relax{true} if at least one of its literals is true.
\begin{full}
A constrained clause 
$C \constraint{s_1\equiv t_1,\ldots,s_n\equiv t_n}$ is \emph{true}
if $C \llor s_1 \cneq t_1 \llor \dots \llor s_n \cneq t_n$ is true.
\end{full}
A set of \begin{full}constrained\end{full} clauses is \relax{true} if all its elements are true.
A proper interpretation $\III$ is a \emph{model} of a set $N$ of \begin{full}constrained\end{full} clauses,
written $\III \models N$, if $N$ is true in $\III$ for all valuations $\xi$.
Given two sets $M, N$ of \begin{full}constrained\end{full} clauses,
we say that $M$ \emph{entails} $N$, written $M \models N$,
if every model of $M$ is also a model of $N$.

\subsection{The Extensionality Skolem Constant} %
Any given signature can be extended with a distinguished constant
$\cst{diff}\oftypedecl\Pi\alpha,\beta.\>(\alpha\fun\beta, \alpha\fun\beta) \fofun \alpha$,
which we require for our calculus. Interpretations as defined above
can interpret the constant $\cst{diff}$ arbitrarily.
The intended interpretation of $\cst{diff}$ is as follows:
\begin{defi}\label{def:H:diff-aware}
We call a proper interpretation $\III$
\emph{$\diff$-aware} if $\III$ is a model of the extensionality axiom---i.e.,
\[\III\models z\>(\diff\typeargs{\alpha,\beta}(z,y))\cneq y\>(\diff\typeargs{\alpha,\beta}(z,y)) \llor z\ceq y\]
Given two sets $M, N$ of \begin{full}constrained\end{full} clauses,
we write $M \soundmodels N$
if every $\diff$-aware interpretation that is a model of $M$ is also a model of $N$.
\end{defi}
Our calculus is sound and refutationally complete \wrt\ $\soundmodels$ but unsound \wrt\ $\models$.

\section{Calculus}
\label{sec:calculus}

\begin{full}

The optimistic $\lambda$-superposition calculus is designed to process
an unsatisfiable set of higher-order clauses
that have no constraints and do not contain constants with parameters,
to enrich this clause set with clauses that may have constraints and may contain the constant $\diff$,
and to eventually derive an empty clause with satisfiable constraints.

Central notions used to define the calculus are \emph{green subterms} (Section~\ref{ssec:green-subterms}), which many of the calculus rules are restricted to,
and \emph{complete sets of unifiers up to constraints} (Section~\ref{ssec:complete-sets-of-unifiers-up-to-constraints}), which replace the first-order notion of a most general unifier.
Existing unification algorithms must be adapted to cope with terms containing parameters (Section~\ref{ssec:unification}).
The calculus is parameterized by
a term order and a selection function,
which must fulfill certain requirements (Section~\ref{ssec:term-orders-and-selection-functions}).
The concrete term orders defined in a companion article fulfill the requirements (Section~\ref{ssec:concrete-term-orders}).
Our core inference rules describe how the calculus derives new clauses (Section~\ref{ssec:the-core-inference-rules}),
and
a redundancy criterion defines abstractly under which circumstances clauses may be deleted and when
inferences may be omitted (Section~\ref{ssec:redundancy}).
The abstract redundancy criterion supports a wide collection of concrete
simplification rules (Section~\ref{ssec:simplification-rules}).
Examples illustrate the calculus's strengths and limitations (Section~\ref{ssec:examples}).

\end{full}

\subsection{Orange, Yellow, and Green Subterms}\label{ssec:green-subterms}
As in the original $\lambda$-superposition calculus,
a central notion of our calculus is the notion of green subterms.
These are the subterms that we consider for superposition inferences.
For example, in the clause $\cst{f}\>\cst{a} \cneq \cst{b}$,
a superposition inference at $\cst{a}$ or $\cst{f}\;\cst{a}$ is possible,
but not at $\cst{f}$.
Our definition here deviates from Bentkamp et al.~\cite{bentkamp-et-al-2023-hosup-journal}
in that functional terms never have nontrivial green subterms.

In addition to green subterms,
we define yellow subterms, which extend green subterms with
subterms inside $\lambda$-expressions,
and orange subterms,
which extend yellow subterms with
subterms containing free De Bruijn indices.
Orange subterms are the subterms that our redundancy criterion allows
simplification rules to rewrite at.
For example, the clauses $\lambda\>\cst{c} \cneq \cst{b}$
and $\cst{f}\>x\>x \ceq \cst{c}$ can make $\lambda\>\cst{f}\>\DB{0}\>\DB{0} \cneq \cst{b}$
redundant (assuming a suitable clause order),
but $\cst{g}\>\cst{a} \cneq \cst{b}$
and $\cst{g} \ceq \cst{f}$
cannot make $\cst{f}\>\cst{a} \cneq \cst{b}$ redundant.
It is convenient to define orange subterms first,
then derive yellow and green subterms based on orange subterms.

Orange subterms depend on the choice of $\beta\eta$-normal form:
\begin{defi}[$\beta\eta$-Normalizer]
  Given a preterm $t$, let $t\downarrow_{\beta\eta\mathrm{long}}$ be its $\beta$-normal $\eta$-long form
  and
  let $t\downarrow_{\beta\eta\mathrm{short}}$ be its $\beta$-normal $\eta$-short form.
  A $\beta\eta$-normalizer is a function $\benf{} \in \{\downarrow_{\beta\eta\mathrm{long}}, \downarrow_{\beta\eta\mathrm{short}}\}$.
\end{defi}

\begin{defi}[Orange Subterms]
  \label{def:orange-subterms}
  We start by defining orange positions and orange subterms on $\lambda$-preterms.

  Given a list of natural numbers $p$ and $s,t \in \TT^{\lambda\mathrm{pre}}(\Sigma)$, we 
  say that $p$ is an \emph{orange position} of $t$,
  and $s$ is an \emph{orange subterm} of $t$ at $p$, written $t|_p = s$,
  if this can be derived inductively from the following rules:
  \begin{enumerate}[label=\arabic*.,ref=\arabic*]
  \item $u|_\varepsilon = u$ for all $u \in \TT^{\lambda\mathrm{pre}}(\Sigma)$, where $\varepsilon$ is the empty list.
  \item If $u_i|_p = v$, then $(\cst{f}\typeargs{\tuple{\tau}}(\bar s)\> \bar u_n)|_{i.p} = v$ for all $\cst{f} \in \Sigma$, types $\tuple{\tau}$, $\lambda$-preterms $\bar s, \bar{u}_n, v \in \TT^{\lambda\mathrm{pre}}(\Sigma)$, and $1\leq i \leq n$.
  \item If $u_i|_p = v$, then $(\DB{m}\langle\tau\rangle\> \bar u_n)|_{i.p} = v$ for all De Bruijn indices $m$, types $\tau$, $\lambda$-preterms $\bar{u}_n, v \in \TT^{\lambda\mathrm{pre}}(\Sigma)$, and $1\leq i \leq n$.
  \item If $u|_p = v$, then $(\lambda\langle\tau\rangle\> u)|_{1.p} = v$ for all types $\tau$ and $\lambda$-preterms $u, v \in \TT^{\lambda\mathrm{pre}}(\Sigma)$.
  \end{enumerate}
  We extend these notions to preterms as follows.
  Given a $\beta\eta$-normalizer $\benf{}$, a list of natural numbers $p$ and $s,t \in \TT^{\mathrm{pre}}(\Sigma)$, we 
  say that $p$ is an \emph{orange position} of $t$,
  and $s$ is an \emph{orange subterm} of $t$ at $p$ \wrt\ $\benf{}$, written $t|_p = s$,
  if $(\benf{t})|_p = \benf{s}$.

  The context $u[\phantom{i}]$ surrounding an orange subterm $s$ of $u[s]$ is
  called an \emph{orange context}. The notation $\orangesubterm{u}{s}_p$ or $\orangesubterm{u}{s}$ indicates
  that $s$ is an orange subterm in $u[s]$ at position $p$, and $\orangesubterm{u}{\phantom{i}}$
  indicates that $u[\phantom{i}]$ is an orange context.
\end{defi}

\begin{exa}
  Whether a preterm is an orange subterm of another preterm depends on the chosen $\beta\eta$-normal form $\benf{}$.
  For example, the preterms $\cst{f}\>\DB{0}$ and $\DB{0}$ are orange subterms of $\lambda\>\cst{f}\>\DB{0}$ in $\eta$-long form,
  but they are not orange subterms of the $\eta$-short form $\cst{f}$ of the same term.
\end{exa}

\begin{rem}
  The possible reasons for a subterm not to be orange are the following:
  \begin{itemize}
  \item It is applied to arguments.
  \item It occurs inside a parameter.
  \item It occurs inside an argument of an applied variable.
  \end{itemize}
\end{rem}

\begin{defi}[Yellow Subterms] \label{def:yellow-subterms}
  Let $\benf{}$ be a $\beta\eta$-normalizer.
  A \emph{yellow subterm} \wrt\ $\benf{}$ is an orange subterm that does not contain free De Bruijn indices.
  A \emph{yellow position} \wrt\ $\benf{}$ is an orange position that identifies a yellow subterm.
  The context surrounding a yellow subterm is called a \emph{yellow context}.
\end{defi}

\begin{lem}
Whether a preterm is a yellow subterm of another preterm is independent of $\benf{}$.
(On the other hand, its yellow position may differ.)
\end{lem}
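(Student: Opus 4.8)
The plan is to compare the two $\eta$-normal forms of a preterm and show that the set of yellow subterms they induce is the same. Let $t$ be a preterm, with $\eta$-long form $t_{\mathrm{long}} = t\downarrow_{\beta\eta\mathrm{long}}$ and $\eta$-short form $t_{\mathrm{short}} = t\downarrow_{\beta\eta\mathrm{short}}$. These two representatives differ only by inserting or deleting trailing $\lambda$-abstractions that wrap an application of a symbol, De Bruijn index, or variable to a De Bruijn index argument; more precisely, $\eta$-expansion at a position replaces a functional subterm $u \oftype \tau \fun \upsilon$ by $\lambda\langle\tau\rangle\> (u'\> \DB{0})$, where $u'$ is $u$ with its free De Bruijn indices shifted. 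First I would make this relationship precise: there is a confluent rewrite relation on $\beta$-normal preterms whose steps are exactly these single $\eta$-expansions/contractions, and $t_{\mathrm{long}}$ and $t_{\mathrm{short}}$ are the two extreme normal forms, so it suffices to prove the claim is preserved under a single $\eta$-step.

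So reduce to: if $t'$ is obtained from $t$ by a single $\eta$-expansion at some orange position $q$ of $t$, then $t$ and $t'$ have the same yellow subterms. The key observation is that the $\eta$-redex introduced at $q$ sits inside a fresh $\lambda\langle\tau\rangle$ whose body $u'\>\DB{0}$ contains a free De Bruijn index $\DB{0}$. Hence every newly created orange position — the position of $u'\>\DB{0}$ itself, and the position of $u'$ (reached by descending into the application head) — identifies a subterm that contains a free De Bruijn index, so it is \emph{not} yellow. Conversely, descending from $q$ through the new $\lambda$ into $u'$ and then replaying the old descent into $u$ recovers, up to the De Bruijn shift, exactly the orange subterms of $u$; a subterm of $u$ at orange position $p$ is yellow in $t$ iff it has no free De Bruijn indices, and the corresponding subterm of $u'$ is yellow in $t'$ iff the \emph{shifted} version has no free De Bruijn indices — but the shift (by one, under one extra binder) is a bijection that preserves "has no free De Bruijn index", since a closed subterm stays closed and a subterm with a free index keeps one. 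Orange subterms lying outside the affected position $q$, or strictly above it, are literally unchanged, as are their free-De-Bruijn-index status. Assembling these three cases gives that the yellow subterms of $t$ and $t'$ coincide (as preterms, i.e.\ as $\beta\eta$-equivalence classes), which is the statement; the parenthetical remark about positions is then immediate, since the position of a surviving yellow subterm gains a leading $1$ component when it lay below $q$.

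The main obstacle I anticipate is bookkeeping the De Bruijn index shifts cleanly: when an $\eta$-expansion happens under $k$ existing binders, the free indices of $u$ below the redex are renumbered, and one must verify that "being a yellow subterm" (no free De Bruijn indices after restricting to that subterm) is invariant under this renumbering and under the extra enclosing $\lambda$. A secondary subtlety is that the statement is about \emph{preterms} — $\beta\eta$-equivalence classes — so "same yellow subterm" must be read modulo $\eta$ as well; one should check that a yellow subterm of $t_{\mathrm{long}}$ and the matching yellow subterm of $t_{\mathrm{short}}$, though possibly in different $\eta$-normal forms themselves, are $\eta$-equivalent, which follows because the $\eta$-step on the ambient term restricts to at most an $\eta$-step on the subterm. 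Everything else is a routine induction on the derivation of the orange-position relation, mirroring the four clauses of Definition~\ref{def:orange-subterms}.
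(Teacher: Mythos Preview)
Your proposal is essentially correct and follows the same strategy as the paper: reduce to a single $\eta$-step between $\beta$-normal $\lambda$-preterms and argue that the set of yellow subterms is preserved. One technical slip to flag: you write that among the newly created orange positions is ``the position of $u'$ (reached by descending into the application head)''. Orange positions never descend into the function part of an application (only into arguments of symbol- or De~Bruijn-headed terms, and into $\lambda$-bodies); so $u'$ itself is not an orange subterm of the expanded term. Fortunately this works in your favor---there is one fewer new position to check, not one more---and the genuinely new orange subterms (the body $u'\>\DB{0}$ and the trailing argument $\DB{0}$) both contain a free $\DB{0}$, exactly as you need. The paper organizes the same case analysis as an explicit induction on the size of the ambient term, recursing into the subterm that contains the $\eta$-step; your above/at/below-$q$ decomposition is the same thing unrolled.
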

\begin{proof}
It suffices to show that a single $\eta$-expansion or $\eta$-contraction
from a $\beta$-reduced $\lambda$-preterm $s$ into another $\beta$-reduced $\lambda$-preterm
cannot remove yellow subterms.
This suffices because only such $\eta$-conversations are needed to
transform a $\beta$-normal $\eta$-long form into a $\beta$-normal $\eta$-short form and vice versa.

Assume $s$ has a yellow subterm at yellow position $p$.
Consider the possible forms that a $\beta$-reduced $\lambda$-preterm $s$ can have:
\begin{itemize}
  \item $x\langle\tau\rangle\> \bar{t}$ for a variable $x\langle\tau\rangle$ and $\lambda$-preterms $\bar{t}$;
  \item $\cst{f}\langle\bar\tau\rangle(\bar u)\> \bar{t}$
  for a symbol $\cst{f}$, types $\tuple{\tau}$, and $\lambda$-preterms $\bar{u}$, $\bar{t}$;
  \item $\DB{n}\langle\tau\rangle\> \bar{t}$
  for a De Bruijn index $\DB{n}\langle\tau\rangle$ and $\lambda$-preterms $\bar{t}$;
  \item $\lambda\langle\tau\rangle\> t$ for a $\lambda$-preterm $t$.
\end{itemize}
Consider where an $\eta$-conversion could happen:
If an $\eta$-expansion takes place at the left-hand side of an application,
the result is not $\beta$-reduced.
If an $\eta$-reduction takes place at the left-hand side of an application,
the original $\lambda$-preterm is not $\beta$-reduced.
If the yellow subterm at $p$ does not overlap with the place of $\eta$-conversion,
the $\eta$-conversion has no effect on the yellow subterm.
This excludes the case where the $\eta$-conversion takes place in an argument of
an applied variable or in a parameter.
So the only relevant subterms for $\eta$-conversions
are 
(a) the entire $\lambda$-preterm $s$,
(b) a subterm of $\tuple{t}$ in $\cst{f}\langle\bar\tau\rangle(\bar u)\> \bar{t}$,
(c) a subterm of $\tuple{t}$ in $\DB{n}\langle\tau\rangle\> \bar{t}$,
or (d) a subterm of $t$ in $\lambda\langle\tau\rangle\> t$.

Next, we consider the possible positions $p$.
If the $\eta$-conversion takes place inside of the yellow subterm,
it certainly remains orange
because orange subterms only depend on the outer structure of the $\lambda$-preterm.
It also remains yellow because $\eta$-conversion does not introduce free De Bruijn indices.
This covers in particular the case where $p$ is the empty list.
Otherwise,
the yellow subterm at $p$ is also
(i) a yellow subterm of $\tuple{t}$ in $\cst{f}\langle\bar\tau\rangle(\bar u)\> \bar{t}$,
(ii) a yellow subterm of $\tuple{t}$ in $\DB{n}\langle\tau\rangle\> \bar{t}$,
or (iii) a yellow subterm of $t$ in $\lambda\langle\tau\rangle\> t$.
In cases (b), (c), and (d),
we can apply the induction hypothesis to $\tuple{t}$ or $t$
and conclude that the yellow subterm of $\tuple{t}$ or $t$ remains yellow
and thus the yellow subterm of $s$ at $p$ remains yellow as well.
In case (a), we distinguish between the cases (i) to (iii) described above:
\begin{enumerate}
  \item[(i)] Then the only option is an $\eta$-expansion of $\cst{f}\langle\bar\tau\rangle(\bar u)\> \bar{t}$
  to $\lambda\>\cst{f}\langle\bar\tau\rangle(\bar u)\> \bar{t}\>\DB{0}$.
  Clearly, the yellow subterm in $\tuple{t}$ remains yellow, although its yellow position changes.
  \item[(ii)] Analogous to (i).
  \item[(iii)] Here, one option is an $\eta$-expansion of $\lambda\> t$ to
  $\lambda\>\lambda\> t\>\DB{0}$, which can be treated analogously to~(i).

  The other option is an $\eta$-reduction of
  $\lambda\> t$ to $t'$, where $t = t'\> \DB{0}$.
  We must show that a yellow subterm of $t$ is also a yellow subterm of $t'$.
  Since a yellow subterm of $t$ cannot contain the free De Bruijn index $\DB{0}$,
  the $\lambda$-preterm $t'$ must be of the form $v\>\tuple{w}$, where the preterm $v$ is a symbol or a De Bruijn index
  and the yellow subterm of $t = v\>\tuple{w}\>\DB{0}$ must be a yellow subterm of 
  one of the arguments $\tuple{w}$.
  Then it is also a yellow subterm of $v\>\tuple{w}=t'$.\qedhere
\end{enumerate}
\end{proof}

\begin{defi}[Green Subterms] \label{def:green-subterms}
  A \emph{green position} is an orange position $p$
  such that each orange subterm at a proper prefix of $p$
  is nonfunctional.
  \emph{Green subterms} are orange subterms at green positions.
  The context surrounding a green subterm $s$ of $u[s]$ is
  called a \emph{green context}. The notation $\greensubterm{u}{s}_p$ or $\greensubterm{u}{s}$ indicates
  that $s$ is a green subterm in $u[s]$ at position $p$, and $\greensubterm{u}{\phantom{i}}$
  indicates that $u[\phantom{i}]$ is a green context.
\end{defi}

Clearly, green subterms can equivalently be described as follows:
Every term is a green subterm of itself.
If $u$ is nonfunctional,
then every green subterm of one of its arguments $s_i$ is
a green subterm of $u = \cst{f}(\bar t)\> \bar s$
and of $u = \DB{n}\>\bar t$.
Moreover, since $\eta$-conversions can occur only at functional subterms,
both green subterms and green positions do not depend on the choice of
a $\beta\eta$-normalizer $\benf{}$.

\begin{exa}
  Let $\iota$ be a type constructor.
  Let $\alpha$ be a type variable.
  Let $x \oftype \iota\fun\iota$ be a variable.
  Let $\cst{a} \oftype \iota$, 
  $\cst{f} \oftypedecl \Pi \alpha.\> \iota \fofun (\iota \fun \iota) \fun \alpha$,
  and $\cst{g} \oftypedecl \iota \fun \iota \fun \iota$ be constants.
  Consider the term 
  $\cst{f}\typeargs{\alpha}(\cst{a})\>(\lambda\>\cst{g}\>(x\>\cst{a})\>\DB{0})$.
  Its green subterms
  are the entire term (at position $\varepsilon$)
  and $\lambda\>\cst{g}\>(x\>\cst{a})\>\DB{0}$ (at position $1$).
  Its yellow subterms are the green subterms and
  $x\>\cst{a}$ (at position $1.1.1$ \wrt\ $\downarrow_{\beta\eta\mathrm{long}}$
  or at position $1.1$ \wrt\ $\downarrow_{\beta\eta\mathrm{short}}$).
  Its orange subterms \wrt\ $\downarrow_{\beta\eta\mathrm{long}}$ are the yellow subterms
  and $\cst{g}\>(x\>\cst{a})\>\DB{0}$ (at position $1.1$) and $\DB{0}$
  (at position $1.1.2$).
  Using $\downarrow_{\beta\eta\mathrm{short}}$,
  the orange subterms of this term are exactly the yellow subterms.
\end{exa}

For positions in clauses, natural numbers are not appropriate because clauses
and literals are unordered.
A solution is the following definition:
\begin{defi}[Orange, Yellow, and Green Positions and Subterms in Clauses]
  Let $C$ be a clause, let $L = s \ceqneq t$ be a literal in $C$,
  and let $p$ be an orange position of $s$.
  Then we call the expression $L.s.p$ an \emph{orange position} in $C$,
  and the \emph{orange subterm} of $C$ at position $L.s.p$ is the
  orange subterm of $s$ at position $p$.
  \emph{Yellow positions/subterms} and \emph{green positions/subterms} of clauses are defined analogously.
\end{defi}

\begin{exa}
The clause $C = K \llor L$ with $K= \cst{f} \>\cst{a} \cneq \cst{b}$ and $L= \cst{c} \ceq \cst{f} \>\cst{a}$
contains the orange subterm $\cst{a}$ twice,
once at orange position $L.(\cst{f} \>\cst{a}).1$ and once at orange position $K.(\cst{f} \>\cst{a}).1$.
\end{exa}

\begin{full}

\subsection{Complete Sets of Unifiers up to Constraints}\label{ssec:complete-sets-of-unifiers-up-to-constraints}

Most of our calculus rules can be used in conjunction with Huet-style preunification, full unification, and various variants thereof.
Only some rules require full unification.
To formulate the calculus in full generality,
we introduce the notion of a complete set of unifiers up to constraints.
The definition closely resembles the definition of a complete set of unifiers,
but allows us to unify only partially and specify the remainder in form of constraints.

\begin{defi}\label{def:csu-upto}
  Given a set of
  constraints $S$ and a set~$X$ of variables,
  where $X$ contains at least the variables occurring in $S$,
  a \emph{complete set of unifiers up to constraints} is a
  set~$P$ whose elements are pairs, each containing a substitution
  and a set of constraints, with the following properties:
  \begin{itemize}
    \item Soundness: For every $(\sigma,T)  \in P$ and unifier~$\rho$ of $T$,
    $\sigma\rho$ is a unifier of $S.$
    \item Completeness: For every unifier $\theta$ of $S$,
    there exists a pair $(\sigma,T)  \in P$ and a unifier~$\rho$ of $T$
    such that
    $x\sigma\rho = x\theta$ for all $x \in X.$
  \end{itemize}

  Given a set of
  constraints $S$ and a set~$X$ of variables,
  we let $\csuupto_X(S)$ denote an arbitrary
  complete set of unifiers upto constraints
  with the following properties.
  First,
  to avoid ill-typed terms, we require that for the substittions in $\csuupto_X(S)$ unify the types of equated terms in $S$.
  In practice, this is not a severe restriction because type unification always terminates.
  Second, we require that the substitutions $\sigma$ in $\csuupto_X(S)$ are idempotent on $X$---i.e.,
  $x\sigma\sigma = x\sigma$ for all $x \in X,$
  which can always be achieved by renaming variables.

  The set~$X$ will
  consist of the free variables of the clauses that the constraints $S$ originate from and
  will be left implicit.
\end{defi}

\begin{exa}\label{ex:csu-upto}
For the constraint
$y\>\cst{a} \equiv \cst{f}\>(z\>\cst{b})$ and $X = \{y,z\}$,
the set
$\{(\sigma, \{w\>\cst{a} \equiv z\>\cst{b}\})\}$
with $\sigma = \{y \mapsto \lambda\>\cst{f}\>(w\>0)\}$
is a complete set of unifiers up to constraints.
It is sound 
because for every unifier $\rho$ of $w\>\cst{a} \equiv z\>\cst{b}$,
the subsitution $\sigma\rho$ is a unifier of
$y\>\cst{a} \equiv \cst{f}\>(z\>\cst{b})$ since
$(y\>\cst{a} \equiv \cst{f}\>(z\>\cst{b}))\sigma = (\cst{f}\>(w\>\cst{a})\equiv \cst{f}\>(z\>\cst{b}))$.
It is complete because, for every unifier $\theta$ of $y\>\cst{a} \equiv \cst{f}\>(z\>\cst{b})$,
the term $y\theta$ must be of the form $\lambda\>\cst{f}\>t$ for some preterm $t$,
and then
the substitution $\rho = \{w \mapsto \lambda\>t, z \mapsto z\theta\}$ is a unifier of $w\>\cst{a} \equiv z\>\cst{b}$ and
fulfills $x\sigma\rho = x\theta$ for $x \in \{y,z\}$.
\end{exa}

\end{full}
\begin{slim}
\subsection{Complete Sets of Unifiers}\phantom{.}
\end{slim}

\begin{defi}\label{def:csu}
  Given a set of
  constraints $S$ and a set~$X$ of variables,
  where $X$ contains at least the variables occurring in $S$,
  a \emph{complete set of unifiers}
  is a set $P$ of unifiers of $S$ such that
  for each unifier $\theta$ of $S$,
  there exists a substitution $\sigma \in P$
  and a substitution $\rho$ such that
  $x\sigma\rho = x\theta$ for all $x \in X.$

  Given a set of
  constraints $S$ and a set~$X$ of variables,
  we write $\csu_X(S)$ or $\csu(S)$ for an arbitrary complete set of unifiers.
  Again, we require that all elements of $\csu(S)$ unify at least the types of
  the terms pairs in $S$ and that all elements of $\csu(S)$ are idempotent.
\end{defi}

\begin{full}

Equivalently, we could define a
complete set of unifiers
as a set $P$ of substitutions such that
$\{(\sigma, \emptyset)\mid \sigma\in P\}$
is a complete set of unifiers up to constraints.

The definitions above require $x\sigma\rho = x\theta$ only for variables $x \in X,$
not for other variables,
because the substitutions should be allowed to use
auxiliary variables.
For instance,
in Example~\ref{ex:csu-upto} above,
for most unifiers $\theta$, it is impossible to find a suitable $\rho$
that fulfills $x\sigma\rho = x\theta$ for all variables $x$, including $x = w$.

When choosing a strategy to compute complete sets of unifiers up to constraints,
there is a trade-off between how much computation time is spent
and how precisely the resulting substitutions instantiate variables.
At one extreme, we can compute a complete set of unifiers,
which instantiates variables as much as possible.
At the other extreme,
the set containing only the identity substitution and the original set of constraints is
always a complete set of unifiers up to constraints,
demonstrating that there exist terminating procedures that compute
complete sets of unifiers up to constraints.
In between these extremes lies Huet's preunification procedure~\cite{huet-1975, dowek-2001}.
A good compromise in practice may be to run Huet's preunification procedure
and to abort after a fixed number of steps, as described in the following subsection.

\subsection{A Concrete Unification Strategy}
\label{ssec:unification}

As a strategy to compute $\csuupto$, we suggest the following
procedure, which is a bounded variant of Huet's preunification procedure, adapted to
cope with polymorphism and parameters.
This approach avoids coping with infinite streams of unifiers
(except for rules that must use $\csu$ instead of $\csuupto$)
and resembles Vampire's strategy~\cite{bhayat-suda-2024}.

Analogously to what we describe below, for $\csu$, one can extend procedures for
the computation of complete sets of unifiers,
such as Vukmirovi\'c et al.'s procedure~\cite{vukmirovic-et-al-2021-unif},
to cope with parameters.

\begin{defi}[Flex-Flex, Flex-Rigid, Rigid-Rigid]
\label{def:flex-rigid}
Let $s \equiv t$ be a constraint.
We write $s$ and $t$ in $\beta$-normal $\eta$-long form as
$s = \lambda \cdots \lambda\> a\>u_1\>\cdots u_p$ and
$t = \lambda \cdots \lambda \> b\>v_1\>\cdots v_q$,
where $a$ and $b$ are
variables, De Bruijn indices, or symbols (possibly with type arguments and parameters)
and $u_i$ and $v_i$ are preterms.
If $a$ and $b$ are both variables, we say that $s \equiv t$ is
a flex-flex constraint.
If only one of them is a variable, we say that $s \equiv t$ is
a flex-rigid constraint.
If neither $a$ nor $b$ is a variable, we say that $s \equiv t$ is
a rigid-rigid constraint.
\end{defi}

The Huet preunification procedure computes a substitution
that unifies a set of constraints up to flex-flex pairs. 
It works as follows.
Given a finite set of constraints $S_0$,
we construct a search tree whose nodes are
either failure nodes \faTimes\ or
pairs $(\sigma, S)$ of a substitution $\sigma$ and a set $S$ of constraints.
The root node is the pair $(\{\}, S_0)$.
Any node $(\sigma, S)$ where $S$ contains only flex-flex constraints
is a successful leaf node.
All failure nodes \faTimes\ are also leaf nodes.
To construct the children of any other node $(\sigma, S)$,
we pick one of the constraints $s \equiv t \in S$ that is not a flex-flex constraint
and apply the following rules:
\begin{itemize}
  \item Type unificaiton:
  We attempt to unify the types of $s$ and $t$,
  which can be done using a first-order unification procedure.
  If the types are unifiable with a most general type unifier $\rho$, we add a child node $(\sigma\rho, S\rho)$.
  Otherwise, we add a child node \faTimes.
  \item If the types of $s$ and $t$ are equal,
  we write
  $s = \lambda \cdots \lambda\> a\>u_1\>\cdots u_p$ and
  $t = \lambda \cdots \lambda \> b\>v_1\>\cdots v_q$ as in Definition~\ref{def:flex-rigid}
  and apply the following rules:
  \begin{itemize}
    \item Rigid-rigid cases: Let $S' = S \setminus \{s \equiv t\}$.
    \begin{itemize}
      \item If $a$ and $b$ are different De Bruijn indices, or if one of them is De Bruijn index and the other a symbol, we add a child node \faTimes.
      \item If $a$ and $b$ are identical De Bruijn indices, we add a child node
      $(\sigma, S' \cup \{u_1 \equiv v_1, \ldots, u_p \equiv v_p\})$.
      \item If $a = \cst{f}\typeargs{\tuple{\tau}}(s_1,\ldots,s_k)$ and $b = \cst{g}\typeargs{\tuple{\tau}}(t_1,\ldots,t_l)$ with $\cst{f} \ne \cst{g}$, we add a child node \faTimes.
      \item If $a = \cst{f}\typeargs{\tuple{\tau}}(s_1,\ldots,s_k)$ and $b = \cst{f}\typeargs{\tuple{\upsilon}}(t_1,\ldots,t_k)$ where $\tuple{\tau}$ and $\tuple{\upsilon}$ are not unifiable, we add a child node \faTimes.
      \item If $a = \cst{f}\typeargs{\tuple{\tau}}(s_1,\ldots,s_k)$ and $b = \cst{f}\typeargs{\tuple{\upsilon}}(t_1,\ldots,t_k)$ where $\tuple{\tau}$ and $\tuple{\upsilon}$ are unifiable with a most general type unifier $\rho$,
      we add a child node $(\sigma\rho, (S' \cup \{s_1 \equiv t_1, \ldots, s_k \equiv t_k, u_1 \equiv v_1, \ldots, u_p \equiv v_p\})\rho)$.
      \end{itemize}
      \item Flex-rigid cases: Let $\tau_1 \fun \cdots \fun \tau_p \fun \tau$ be the type of $a$ and $\upsilon_1 \fun \cdots \fun \upsilon_q \fun \tau$ be the type of $b$.
      \begin{itemize}
        \item Imitation: If $a$ is a variable $x$ and
        $b$ is either a De Bruijn index or a symbol (possibly with type arguments and parameters),
        we add a child node $(\sigma\rho, S\rho)$ 
        with $\rho = \{x \mapsto \lambda\langle\tau_1\rangle \cdots \lambda\langle\tau_p\rangle\>b\>(y_1\>\DB{(p-1)}\>\cdots\>0)\>\cdots\>(y_q\>\DB{(p-1)}\>\cdots\>0)\}$,
        where $y_1, \ldots, y_q$ are fresh variables with $y_i$ of type $\tau_1 \fun \cdots \fun \tau_{p} \fun \upsilon_i$ for each $i$.
        \item Projection: If $a$ is a variable $x$ and $b$ is either a De Bruijn index or a symbol (possibly with type arguments and parameters), then
        for each $0 \leq i < p$ where $\tau_i = \tau'_1 \fun \cdots \fun \tau'_k \fun \tau$
        for some $\tau'_1, \ldots, \tau'_k$,
        we add a child node $(\sigma\rho, S\rho)$
        with $\rho = \{x \mapsto \lambda\langle\tau_1\rangle \cdots \lambda\langle\tau_p\rangle\>\DB{i}\>(y_1\>\DB{(p-1)}\>\cdots\>0)\>\cdots\>(y_k\>\DB{(p-1)}\>\cdots\>0)\}$,
        where $y_1, \ldots, y_k$ are fresh variables with $y_i$ of type $\tau_1 \fun \cdots \fun \tau_{p} \fun \tau'_j$ for each $j$.
        \item The same applies with the roles of $a$ and $b$ swapped.
      \end{itemize}
    \end{itemize}
  \end{itemize}
Ultimately, the tree's leaf nodes are either failure nodes \faTimes{} or success nodes $(\sigma, S)$,
where $S$ contains only flex-flex constraints and $\sigma$ is the corresponding preunifier.
Collecting all the preunifiers in the leaves yields the result of the standard, i.e., unbounded,
Huet preunification procedure.

We propose to use a bounded variant instead to ensure that unification always terminates.
In the bounded version, we construct the tree only up to a predetermined depth.
Collecting all unifiers and their associated constraints in the leaves of the resulting tree also yields a complete
set of unifiers up to constraints, which we can use in the role of the $\csuupto$ function of our core inference rules.

In addition,
following Vukmirovi\'c et al.~\cite{vukmirovic-et-al-2021-unif},
we propose to extend this procedure with algorithms for decidable fragments such
as pattern unification \cite{miller-1992}, fixpoint unification \cite{huet-1975}, and solid unification \cite{vukmirovic-et-al-2021-unif}.
When one of these fragments applies to one of the constraints $s \equiv t$ of a node $(\sigma, S)$,
the most general unifier $\rho$ for this constraint can be determined in finite time,
and we can add a single child node $(\sigma\rho, (S \setminus \{s \equiv t\})\rho)$
instead of the child nodes that would be added by the standard procedure.

\begin{lem}\label{lem:huet-csuupto}
The above procedure yields a complete set of unifiers up to constraints
(Definition~\ref{def:csu-upto}).
\end{lem}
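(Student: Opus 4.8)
The plan is to adapt the standard correctness proof of Huet's preunification procedure, tracking the auxiliary variable set $X$ from Definition~\ref{def:csu-upto} and accommodating the three novelties of our variant: the depth bound, the presence of type arguments and parameters on symbols, and the optional decidable-fragment shortcuts. Fix the input constraint set $S$, henceforth written $S_0$; since each node has only finitely many children (imitation adds one, projection finitely many, decomposition and type unification one, and the fragment shortcut one) and the search tree is cut at a fixed depth, the tree is finite, so the emitted set $P$ of pairs $(\sigma, T)$ is a finite set of the required shape. I will prove the soundness and completeness clauses of Definition~\ref{def:csu-upto} by two separate inductions over this tree, and then dispatch the two side conditions (type unification of the input, idempotence on $X$).

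For soundness I prove, by induction on the distance of a node from the root, the invariant: for every node $(\sigma, S')$ and every substitution $\rho$ that unifies $S'$, the composition $\sigma\rho$ unifies $S_0$. The root $(\{\}, S_0)$ is immediate. In the inductive step I go through the rules. For the type-unification step, with child $(\sigma\rho_{\mathrm{ty}}, S'\rho_{\mathrm{ty}})$, any unifier $\rho$ of $S'\rho_{\mathrm{ty}}$ makes $\rho_{\mathrm{ty}}\rho$ a unifier of $S'$, so $\sigma\rho_{\mathrm{ty}}\rho$ unifies $S_0$ by the induction hypothesis; the same pattern covers the flex-rigid imitation and projection steps, where the picked flex variable is replaced by a partial binding $\rho$ and $S'$ by $S'\rho$, and the decidable-fragment shortcut, where $\rho$ is an mgu of the picked constraint and additionally $s\rho = t\rho$. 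For rigid-rigid decomposition, a unifier of $S'$ minus the picked constraint together with the argument pairs $u_i \equiv v_i$ (and, for parametrized constants, the type-argument and parameter pairs) is again a unifier of the picked constraint, since identical heads applied to componentwise equal arguments are equal. Clash cases produce only \faTimes\ leaves, which carry no output pair. Every leaf that is a pair $(\sigma, S')$ — a flex-flex success leaf or a depth-truncated leaf — is emitted with $T = S'$, and the invariant is precisely the required soundness condition for it.

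For completeness I fix a unifier $\theta$ of $S_0$ and trace a branch, maintaining at the visited node $(\sigma, S')$ the invariant: there is a substitution $\theta'$ that unifies $S'$ and satisfies $x\sigma\theta' = x\theta$ for all $x \in X$ (and, more generally, on all variables introduced so far). The root uses $\theta' = \theta$. At a non-leaf node with picked non-flex-flex constraint $s \equiv t$, the hypothesis $s\theta' = t\theta'$ drives the case split. If the types of $s$ and $t$ differ, $\theta'$ unifies them, hence factors through their mgu $\rho_{\mathrm{ty}}$ as $\theta' = \rho_{\mathrm{ty}}\theta''$, and I descend to $(\sigma\rho_{\mathrm{ty}}, S'\rho_{\mathrm{ty}})$ with witness $\theta''$. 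With equal types, a rigid-rigid head clash (or a parametrized constant with non-unifiable type arguments) is impossible because $\theta'$ unifies $s \equiv t$, so the branch never enters a \faTimes; equal heads force componentwise equality of the arguments (after $\beta$-normalization in $\eta$-long form) and of type arguments/parameters, so $\theta'$ itself witnesses the decomposition child; the fragment shortcut is handled by mgu-factoring as above. The flex-rigid case is the crux: I invoke the classical lemma that a unifier $\theta'$ of a flex-rigid pair must instantiate the flex head $x$ as a $\beta\eta$-long $\lambda$-term whose body head is either the rigid head $b$ (imitation) or one of the bound De Bruijn indices (projection), adapted so that (i) copying $b$ with its type arguments and parameters is harmless, since parameters contain no externally bound De Bruijn indices and the type arguments/parameters that $x\theta'$ must exhibit are necessarily the $\theta'$-instances of those copied, and (ii) result types agree because both sides of the constraint already share a type by this point. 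Reading appropriate values for the fresh variables $y_i$ off of $x\theta'$ yields $\theta''$ with $\rho\theta'' = \theta'$ on the relevant variables, and I descend to $(\sigma\rho, S'\rho)$ with witness $\theta''$. As the bounded procedure caps the depth, the branch is finite and ends at a leaf $(\sigma, S')$ that is a pair (never a \faTimes); emitting $(\sigma, S')$ with $\rho := \theta'$ discharges the completeness clause, since $\theta'$ unifies $S'$ and $x\sigma\theta' = x\theta$ for all $x \in X$.

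It remains to check the side conditions. Each binding produced along a branch — an imitation, a projection, or a type mgu — has only fresh variables and rigid material (symbols, De Bruijn indices, type constructors) in its range and binds each variable at most once; since every binding is applied to the entire state, no variable of $X$, indeed no already-bound variable, survives in the range of the accumulated $\sigma$, so $\sigma$ is idempotent on $X$ (and in any case Definition~\ref{def:csu-upto} permits restoring idempotence by renaming, which affects neither proved property). For the requirement that the emitted substitutions unify the types of the equated terms in $S$: the procedure runs type unification on every constraint before decomposing it, so the only constraints whose top-level types are not explicitly unified are the flex-flex ones handed into $T$; for constraint sets arising in our calculus both sides of every constraint carry the same type, and in general one may prepend a terminating type-unification pass over all constraints, so the condition holds. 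I expect the main obstacle to be exactly the flex-rigid completeness lemma, together with the bookkeeping required to keep the invariant's clause "$x\sigma\theta' = x\theta$ for all $x \in X$" honest across partial bindings that introduce fresh variables; the rest is a routine adaptation of the classical argument.
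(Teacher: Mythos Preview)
Your proposal is correct and follows essentially the same approach as the paper: both prove soundness by an induction over the tree showing that for every node $(\sigma,S')$ any unifier of $S'$ composes with $\sigma$ to a unifier of $S_0$, and both prove completeness by tracing a unifier of $S_0$ down to a leaf while maintaining the invariant $x\sigma\theta' = x\theta$ on $X$ (the paper phrases this as a structural induction with the hypothesis on children, you phrase it as a descent from the root, but the content is the same). Your additional discussion of the idempotence and type-unification side conditions goes slightly beyond what the lemma statement strictly requires, since those are properties of the chosen $\csuupto$ rather than of a complete set of unifiers up to constraints per se, but the remarks are correct and harmless.
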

\begin{proof}
  Let $S_0$ be a set of constraints.
  Consider a search tree constructed by the above procedure.
  We must show that the successful leaves $P$ of the tree form a complete set of unifiers up to constraints, i.e.,
  we must show:
  \begin{itemize}
    \item Soundness: For every $(\sigma,T)  \in P$ and unifier~$\rho$ of $T$,
    $\sigma\rho$ is a unifier of $S_0.$
    \item Completeness: For every unifier $\theta$ of $S_0$,
    there exists a pair $(\sigma,T)  \in P$ and a unifier~$\rho$ of $T$
    such that
    $x\sigma\rho = x\theta$ for all $x \in X.$
  \end{itemize}
  For soundness, we prove the following more general property:
  For every node $(\sigma, T)$ of the tree and every unifier $\rho$ of $T$,
  the substitution $\sigma\rho$ is a unifier of $S_0$.
  It is easy to check that 
  the initial node has this property and that for each of the rules above,
  the constucted child node has the property if the parent node has it.
  Thus, soundness follows by induction on the structure of the tree.

  For completeness, we prove the following more general property:
  Given a node $(\sigma_0, U_0)$ of the tree
  and a unifier $\theta_0$ of $U_0$,
  there exists a pair $(\omega,V)  \in P$ and a unifier~$\pi$ of $V$
  such that
  $x\omega\pi = x\sigma_0\theta_0$ for all $x \in X.$

  Since the search tree is clearly finite, we can apply structural induction on the tree.
  So we may assume that the property holds for all child nodes of $(\sigma_0, U_0)$.
  We proceed by a case distinction analogous to the cases describing the procedure above.

  If no decidable fragment applies to $U_0$ and $U_0$ contains only flex-flex pairs or
  if the depth limit has been reached, then $(\sigma_0, U_0)$ is a leaf node.
  Then $(\sigma_0, U_0)\in P$
  and the property holds with $\pi = \theta_0$.
  
  Otherwise, if a decidable fragment applies to a constraint $s \equiv t \in U_0$
  and provides a most general unifier $\rho$, then 
  we have a child node $(\sigma_0\rho, (U_0 \setminus \{s \equiv t\})\rho)$.
  Since $\rho$ is a most general unifier, there exists a substitution $\theta_1$
  such that $y\theta_0 = y\rho\theta_1$
  for all variables $y$ in $x\sigma_0$ with $x \in X$ and
  for all variables $y$ in $U_0$.
  So $\theta_1$ is a unifier of $(U_0 \setminus \{s \equiv t\})\rho$
  and by the induction hypothesis,
  there exists a pair $(\omega,V)  \in P$ and a unifier~$\pi$ of $V$
  such that
  $x\omega\pi = x\sigma_0\rho\theta_1$ for all $x \in X.$
  Thus, $x\omega\pi = x\sigma_0\rho\theta_1 = x\sigma_0\theta_0$ for all $x \in X,$ as required.

  Otherwise, no decidable fragment applies to $U_0$ and $U_0$ contains a pair that is not flex-flex.
  Then our procedure picks one such pair $s \equiv t \in U_0$.

  If the types of $s$ and $t$ are not equal, then they must be unifiable because $\theta_0$ is a unifier.
  So there exists a child node $(\sigma_0\rho, U_0\rho)$, where $\rho$ is the most general type unifier of $s$ and $t$.
  We can then proceed as in the decidable fragment case above.

  Otherwise, the types of $s$ and $t$ are equal.
  Let 
  $s = \lambda \cdots \lambda\> a\>u_1\>\cdots u_p$ and
  $t = \lambda \cdots \lambda \> b\>v_1\>\cdots v_q$ as in Definition~\ref{def:flex-rigid}.

  If $s \equiv t$ is a rigid-rigid pair,
  then $a$ and $b$ must be unifiable because $\theta_0$ is a unifier.
  So $a$ and $b$ are either identical De Bruijn indices or unifiable symbols.
  In both cases, we can then proceed analogously to the decidable fragment case above.

  If $s \equiv t$ is a flex-rigid pair,
  we assume without loss of generality that $a$ is a variable $x$.
  Since $\theta_0$ is a unifier of $s$ and $t$ and parameters cannot contain free De Bruijn indices, the term $a\theta_0$ must be 
  either of the form $\lambda\>\cdots\>\lambda\>b\>\tuple{s}$ for some terms $\tuple{s}$
  or of the form $\lambda\>\cdots\>\lambda\>i\>\tuple{s}$ for some De Bruijn index $i$ and terms $\tuple{s}$.
  In the first case, we apply the induction hypothesis to the child node produced by the imitation rule,
  and in the second case, we apply the induction hypothesis to the child node produced by the projection rule.
  In both cases, given the substitution $\rho$ used by the rule, it is easy to construct a substitution $\theta_1$ such that
  $y\theta_0 = y\rho\theta_1$ for all relevant variables $y$.
  Then we can proceed as in the decidable fragment case above.
\end{proof}
For an efficient implementation, it is important to $\beta\eta$-normalize terms and apply substitutions lazily,
similarly to the approach of Vukmirovi\'c et al.~\cite{vukmirovic-et-al-2021-unif}.
\end{full}

\subsection{Term Orders and Selection Functions}
\label{ssec:term-orders-and-selection-functions}

Our calculus is parameterized 
by a relation $\succ$ on \begin{full}constrained\end{full} terms, \begin{full}constrained\end{full} literals, and \begin{full}constrained\end{full} clauses.
We call $\succ$ the term order,
but it need not formally be a partial order.
Moreover, our calculus is parameterized by
a literal selection function.

The original $\lambda$-superposition calculus
also used 
a nonstrict term order $\succsim$ to compare
terms that may become equal when instatiated,
such as $x\>\cst{b} \succsim x\>\cst{a}$, where $\cst{b} \succ \cst{a}$.
However, contrary to the claims made for 
the original $\lambda$-superposition calculus,
employing the nonstrict term order can lead to incompleteness\cite{bentkamp-et-al-2021-lamsup-journal-errata},
which is why we do not use it in our calculus.

Moreover, the original $\lambda$-superposition calculus used
a Boolean selection function to restrict
inferences on clauses containing Boolean subterms.
For simplicity, we omit this feature in our calculus
because an evaluation did not reveal any practical benefit~\cite{nummelin-et-al-2021}.

\begin{defi}[Admissible Term Order]\label{def:admissible-term-order}
A relation $\succ$ on \begin{full}constrained\end{full} terms and on \begin{full}constrained\end{full} clauses is
an \emph{admissible term order}
if it fulfills the following criteria, where $\succeq$ denotes the
reflexive closure of $\succ$:
\begin{enumerate}[label=(O\arabic*), leftmargin=3em, ref=(O\arabic*)]
  \item the relation $\succ$ on ground terms is a well-founded total order\label{cond:order:total};
  \item ground compatibility with yellow contexts:\enskip $s' \succ s$ implies
  $\orangesubterm{t}{s'} \succ \orangesubterm{t}{s}$
  for ground terms $s$, $s'$, and $t$;\label{cond:order:comp-with-contexts}
  \item ground yellow subterm property:\enskip $\orangesubterm{t}{s} \succeq s$
  for ground terms $s$ and $t$;\label{cond:order:subterm}
  \item $u \succ \ifalse \succ \itrue$ for all ground terms $u \notin \{\itrue, \ifalse\}$;\label{cond:order:t-f-minimal}
  \item $u \succ u\>\diff\typeargs{\tau,\upsilon}(s,t)$ for all ground types $\tau, \upsilon$ and ground terms $s,t,u \oftype \tau \fun \upsilon$;\label{cond:order:ext}
  \item the relation $\succ$ on ground clauses is the standard extension of $\succ$ on ground terms via multisets \cite[\Section~2.4]{bachmair-ganzinger-1994};\label{cond:order:clause-extension}
  \item stability under grounding substitutions for terms:\enskip
  \begin{full}$t\constraint{T} \succ s\constraint{S}$\end{full}
  \begin{slim}$t\succ s$\end{slim}
  implies $t\theta \succ s\theta$ for all grounding substitutions~$\theta$%
  \begin{full} such that $T\theta$ and $S\theta$ are true\end{full};
  \label{cond:order:stability-terms}
  \item stability under grounding substitutions for clauses:\enskip
  \begin{full}$D\constraint{T} \succ C\constraint{S}$\end{full}
  \begin{slim}$D\succ C$\end{slim}
  implies $D\theta \succ C\theta$ for all grounding substitutions~$\theta$%
  \begin{full} such that $T\theta$ and $S\theta$ are true\end{full};
  \label{cond:order:stability-clauses}
  \item transitivity on \begin{full}constrained\end{full} literals: the relation $\succ$ on \begin{full}constrained\end{full} literals is transitive;\label{cond:order:transitive}
  \begin{full}
  \item for all terms $t$ and $s$ such that $t \succ s$
  and all substitutions $\theta$ such that
  for all type variables $\alpha$,
  the type $\alpha\theta$ is ground
  and such that
  for all variables $x$, all variables in $x\theta$
  are nonfunctional,
  if $s\theta$ contains a variable
  outside of parameters,
  then $t\theta$ must also contain that variable outside of parameters.\label{cond:order:variable}
  \end{full}
  \end{enumerate}
  \newcounter{ordercounter}
  \setcounter{ordercounter}{\value{enumi}}
\end{defi}

\begin{defi}[Maximality]\label{def:maximality}
  Given a term order $\succ$,
  a literal $K$ of a \begin{full}constrained\end{full} clause $C\begin{full}\constraint{S}\end{full}$ is \emph{maximal}
  if for all $L \in C$ such that $L\begin{full}\constraint{S}\end{full} \succeq K\begin{full}\constraint{S}\end{full}$,
  we have $L\begin{full}\constraint{S}\end{full} \preceq K\begin{full}\constraint{S}\end{full}$.
  It is \emph{strictly} maximal if it is maximal and occurs only once in $C$.    
\end{defi}

In addition to the term order, our calculus is parameterized by a selection function:

\begin{defi}[Literal Selection Function] \label{def:lit-sel}
  A literal selection function is a mapping from 
  each \begin{full}constrained\end{full} clause to a subset of its literals.
  The literals in this subset are called \emph{selected}.
  Only negative literals and
  literals of the form $t \ceq \ifalse$ may be selected.
\end{defi}

Based on the term order and the selection function, we define \emph{eligibility} as follows:
\begin{defi}[Eligibility] \label{def:ho-eligible}
  \,A literal $L$ is (\emph{strictly}) \emph{eligible} \wrt\ a substitution $\sigma$ in $C\begin{full}\constraint{S}\end{full}$ if it is
  selected in $C\begin{full}\constraint{S}\end{full}$ or there are no selected literals in $C\begin{full}\constraint{S}\end{full}$ 
  and $L\sigma$ is (strictly) maximal in 
  \begin{slim}$C\sigma$\end{slim}%
  \begin{full}$(C\constraint{S})\sigma$\end{full}.

  A green position
  $L.s.p$
  of a clause $C\begin{full}\constraint{S}\end{full}$ is \emph{eligible}  \wrt\ a substitution $\sigma$
  if the literal $L$ is either
  negative and
  eligible
  or positive and strictly eligible (\wrt\ $\sigma$ in $C\begin{full}\constraint{S}\end{full}$); and
  $L$ is of the form $s \ceqneq t \in C$ such that
  \begin{full}$(s\constraint{S})\sigma \not\preceq (t\constraint{S})\sigma$\end{full}%
  \begin{slim}$s\sigma \not\preceq t\sigma$\end{slim}.
  \begin{slim}
  
  When we do not specify a substitution, we mean eligibility \wrt\ the identity substitution.
  \end{slim}
\end{defi}

\subsection{Concrete Term Orders}
\label{ssec:concrete-term-orders}

A companion article \cite{bentkamp-et-al-optimistic-orders} defines two concrete term orders
fulfilling the criteria of Definition~\ref{def:admissible-term-order}:
$\lambda$KBO, inspired by the Knuth--Bendix order, and $\lambda$LPO,
inspired by the lexicographic path order.
Since the companion article defines the orders only on terms,
we extend $\succ_\lkb$ and $\succ_\llp$ to literals and clauses
via the standard extension using multisets \cite[\Section~2.4]{bachmair-ganzinger-1994}.
\begin{full}We extend the orders to constrained terms, constrained literals, and constrained clauses by ignoring the constraints.\end{full}

\begin{thm}
  Let $\succ_\lkb$ denote the strict variant of
  $\lambda$KBO as defined in the companion article.
  The order is parameterized by a precedence relation $>$ on symbols,
  a function $\cal w$ assigning weights to symbols,
  a constant $\cal w_\db$ defining the weight of De Bruijn indices,
  and a function $\cal k$ assigning argument coefficients to symbols.
  Assume that these parameters fulfill $\cal w(\itrue) = \cal w(\ifalse) = 1$,
   $\cal w_\db \ge \cal w(\cst{diff})$,
  $\cst{f} > \ifalse > \itrue$ for all symbols $\cst{f} \notin\{\itrue, \ifalse\}$,
  and $\cal k(\cst{diff}, i) = 1$ for every $i$.
  Using the extension defined above,
  $\succ_\lkb$ is an admissible term order.
\end{thm}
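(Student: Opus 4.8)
The plan is to verify each of the conditions \ref{cond:order:total}--\ref{cond:order:variable} in turn, leaning heavily on the properties of $\lambda$KBO proved in the companion article~\cite{bentkamp-et-al-optimistic-orders} and on the fact that constraints are simply ignored when lifting $\succ_\lkb$ to constrained objects. Since the companion article already establishes that $\succ_\lkb$ is a well-founded total order on ground terms, that it is stable under grounding substitutions, compatible with yellow contexts, and satisfies the yellow-subterm property, conditions \ref{cond:order:total}, \ref{cond:order:comp-with-contexts}, \ref{cond:order:subterm}, \ref{cond:order:stability-terms}, and \ref{cond:order:variable} should follow directly, modulo checking that the companion article's hypotheses are implied by the parameter constraints assumed here. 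Conditions \ref{cond:order:clause-extension} and \ref{cond:order:stability-clauses} hold by construction, since we defined $\succ_\lkb$ on literals and clauses via the standard multiset extension and extended it to constrained clauses by ignoring constraints; stability under grounding for clauses then reduces to stability for terms (\ref{cond:order:stability-terms}) together with the standard fact that the multiset extension preserves stability. Condition \ref{cond:order:transitive} on constrained literals follows because the multiset extension of a transitive relation over a total order on ground terms is transitive, and lifting from ground to nonground via "holds for all ground instances'' preserves transitivity.

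The conditions that genuinely use the specific parameter assumptions are \ref{cond:order:t-f-minimal} and \ref{cond:order:ext}. For \ref{cond:order:t-f-minimal}, I would use $\cal w(\itrue) = \cal w(\ifalse) = 1$ together with the precedence assumption $\cst{f} > \ifalse > \itrue$ for all other symbols: any ground term $u \notin\{\itrue,\ifalse\}$ either has weight strictly greater than $1$ (in which case $u \succ_\lkb \ifalse \succ_\lkb \itrue$ by the weight comparison) or has weight exactly $1$ and is headed by a symbol, De Bruijn index, or $\lambda$; in the symbol case the precedence $\cst{f} > \ifalse$ settles it, and one checks that the De Bruijn and $\lambda$ cases cannot occur at weight $1$ for a closed ground term of Boolean type while being distinct from $\itrue,\ifalse$ (here $\cal w_\db \ge \cal w(\cst{diff})$ is the relevant side condition ensuring De Bruijn weights are not too small). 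The comparison $\ifalse \succ_\lkb \itrue$ is then immediate from equal weights and the precedence $\ifalse > \itrue$.

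The main obstacle I expect is condition \ref{cond:order:ext}: $u \succ_\lkb u\>\diff\typeargs{\tau,\upsilon}(s,t)$ for all ground $s,t,u$ of appropriate type. Naively, appending an argument to $u$ should increase the KBO weight, so this inequality runs against the grain of a Knuth--Bendix order; it must be the argument-coefficient machinery that saves us. The assumptions $\cal k(\cst{diff},i)=1$ and $\cal w_\db \ge \cal w(\cst{diff})$ are clearly tailored to make this work: the idea is that $\diff\typeargs{\tau,\upsilon}(s,t)$, as an argument, contributes its weight multiplied by $u$'s head's argument coefficient, but because $\cst{diff}$ itself is "cheap'' relative to what it could be built from (in particular no cheaper than a De Bruijn index), passing to $u\>\diff(s,t)$ does not raise the weight enough to overcome whatever structural decrease the companion article's order assigns. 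I would look for a lemma in the companion article stating precisely $u \succ_\lkb u\>\diff\typeargs{\tau,\upsilon}(s,t)$ or a more general statement about appending $\diff$-headed arguments, and simply invoke it; if no such packaged lemma exists, I would unfold the definition of $\lambda$KBO on the two terms $u$ and $u\>\diff(s,t)$, compare their weights using $\cal k(\cst{diff},i)=1$ and $\cal w_\db\ge\cal w(\cst{diff})$, and fall back to the subterm/precedence tie-breakers, which is the one step where a careful calculation is unavoidable. All remaining verifications are routine bookkeeping, so the proof is essentially: "the companion article does the work for \ref{cond:order:total}--\ref{cond:order:subterm}, \ref{cond:order:stability-terms}, \ref{cond:order:variable}; the multiset extension handles \ref{cond:order:clause-extension}--\ref{cond:order:transitive}; and the parameter assumptions were chosen exactly to force \ref{cond:order:t-f-minimal} and \ref{cond:order:ext}.''
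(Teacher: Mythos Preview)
Your proposal is correct and takes essentially the same approach as the paper: each of \ref{cond:order:total}--\ref{cond:order:variable} is discharged by citing the corresponding theorem from the companion article, with \ref{cond:order:clause-extension}, \ref{cond:order:stability-clauses}, and \ref{cond:order:transitive} handled by the definition of the multiset extension together with the fact that constraints are ignored. In particular, the companion article does supply packaged lemmas for both \ref{cond:order:t-f-minimal} and \ref{cond:order:ext} under exactly the parameter hypotheses assumed here, so your anticipated fallback calculations are unnecessary; one small correction is that for \ref{cond:order:transitive} the nonground $\succ_\lkb$ is defined directly (and shown transitive) in the companion article rather than via ``holds for all ground instances,'' and the multiset extension then preserves transitivity.
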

\begin{proof}
For most of the criteria, we use that
by 
Theorems~\ref{new-orders:thm:lkbm-coincide-ground}
and~\ref{new-orders:thm:lkb-coincide-monomorphic} of the companion article,
$\succ_\lkbg$ is the restriction of
$\succ_\lkb$ to ground terms.
\begin{enumerate}
\item[\ref{cond:order:total}]
By 
Theorems~\ref{new-orders:thm:lkbg-strict-partial-order} and \ref{new-orders:thm:lkbg-total}  of the companion article,
$\succ_\lkbg$ is a
total order.
By
Theorem~\ref{new-orders:thm:lkbg-well-founded} of the companion article,
it is well founded. 
\item[\ref{cond:order:comp-with-contexts}]
By
Theorem~\ref{new-orders:thm:lkbg-compat-orange-contexts} of the companion article,
$\succ_\lkbg$ is compatible with orange contexts and thus also with yellow contexts.
\item[\ref{cond:order:subterm}] By Theorem~\ref{new-orders:thm:lkbg-orange-subterm-property} of the companion article,
$\succ_\lkbg$ enjoys the orange subterm property and thus also the yellow subterm property.
\item[\ref{cond:order:t-f-minimal}] 
By Theorem~\ref{new-orders:thm:lkbg-top-bot-smallest} of the companion article,
$u \succ_\lkbg \ifalse \succ_\lkbg \itrue$ for all ground terms $u \notin \{\itrue, \ifalse\}$,
using our assumptions about the weight and precedence of $\itrue$ and $\ifalse$.
\item[\ref{cond:order:ext}] By Theorem~\ref{new-orders:thm:lkbg-diff} of the companion article,
$u \succ_\lkbg u\>\cst{diff}\typeargs{\tau,\upsilon}(s,t)$ for all ground types $\tau, \upsilon$ and ground terms $s,t,u \oftype \tau \fun \upsilon$,
using our assumptions about the weight and argument coefficients of $\cst{diff}$.
\item[\ref{cond:order:clause-extension}] By definition of our extension of $\succ_\lkb$ to clauses.
\item[\ref{cond:order:stability-terms}] By Theorems~\ref{new-orders:thm:lkbm-grounding-subst-stable}
and~\ref{new-orders:thm:lkb-monomorphizing-subst-stable} of the companion article.
\begin{full}Since we ignore the constraints in the order, we also have stability under substitutions for constrained terms.\end{full}
\item[\ref{cond:order:stability-clauses}] Using the Dershowitz--Manna definition \cite{dershowitz-manna-1979} of a multiset, it is easy to see that
stability under substitutions for terms implies stability under substitutions for clauses.
\begin{full}Since we ignore the constraints in the order, we also have stability under substitutions for constrained clauses.\end{full}
\item[\ref{cond:order:transitive}] By Theorem~\ref{new-orders:thm:lkb-transitive} of the companion article,
$\succ_\lkb$ is transitive on terms. Since the multiset extension preserves transitivity,
it is also transitive on literals. \begin{full}Since we ignore the constraints in the order,
it is also transitive on constrained literals.\end{full}
\begin{full}\item[\ref{cond:order:variable}] By Theorem~\ref{new-orders:thm:lkb-variable-guarantee} of the companion article.\end{full}
\end{enumerate}
\end{proof}

\begin{thm}
Let $\succ_\llp$ denote the strict variant of
$\lambda$LPO as defined in the companion article.
The order is parameterized by a precedence relation $>$ on symbols
and a watershed symbol $\cst{ws}$.
Assume that $\cst{f} > \ifalse > \itrue$ for all symbols $\cst{f}\notin\{\itrue, \ifalse\}$,
that $\ifalse \leq \cst{ws}$, and that $\cst{diff} \leq \cst{ws}$.
Using the extension defined above,
$\succ_\llp$ is an admissible term order.
\end{thm}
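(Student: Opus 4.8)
The plan is to replay the proof of the preceding theorem for $\lambda$LPO, checking each criterion (O1)--(O10) of Definition~\ref{def:admissible-term-order} for $\succ_\llp$ against the corresponding facts about $\lambda$LPO established in the companion article~\cite{bentkamp-et-al-optimistic-orders}. As in the $\lambda$KBO case, the first step is to reduce all statements about ground terms to the ground variant $\succ_\llpg$, using that the companion article (just as for $\lambda$KBO) identifies $\succ_\llpg$ with the restriction of $\succ_\llp$ to ground terms and $\succ_\llp$ on monomorphic terms with the monomorphic variant of $\lambda$LPO. Since our extension to constrained terms, literals, and clauses simply ignores the constraints, every property of $\succ_\llp$ not mentioning constraints transfers to the constrained setting for free.

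The generic criteria then go through exactly as for $\lambda$KBO. Criterion~\ref{cond:order:total} is the companion article's statement that $\succ_\llpg$ is a well-founded total order on ground terms; \ref{cond:order:comp-with-contexts} and~\ref{cond:order:subterm} follow from its compatibility with orange contexts and its orange subterm property, which specialize to yellow contexts and yellow subterms; \ref{cond:order:clause-extension} holds by our definition of the multiset extension; \ref{cond:order:stability-terms} is the companion article's stability of $\lambda$LPO under grounding and monomorphizing substitutions, and~\ref{cond:order:stability-clauses} follows from it via the Dershowitz--Manna characterization~\cite{dershowitz-manna-1979} of the multiset order; and~\ref{cond:order:transitive} follows from transitivity of $\succ_\llp$ on terms together with the fact that the multiset extension preserves transitivity.

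The substance of the proof lies in the three criteria that use the hypotheses on the precedence and the watershed symbol $\cst{ws}$. For~\ref{cond:order:t-f-minimal} I would invoke the companion article's characterization of the $\succ_\llpg$-smallest ground terms, feeding in $\cst{f} > \ifalse > \itrue$; since $\lambda$LPO treats heads at or below the watershed specially, the hypothesis $\ifalse \le \cst{ws}$ is also needed for $\ifalse$ and $\itrue$ to come out as the two smallest terms. For~\ref{cond:order:ext}, the inequality $u \succ_\llpg u\>\diff\typeargs{\tau,\upsilon}(s,t)$ for ground $s,t,u \oftype \tau\fun\upsilon$ is precisely where $\cst{diff}\le\cst{ws}$ enters, since in $\lambda$LPO an applied term can shrink past its head only when that head lies at or below the watershed; I would cite the companion article's $\diff$-compression lemma for $\lambda$LPO. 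Criterion~\ref{cond:order:variable} (the variable-occurrence guarantee under monomorphizing substitutions whose variable images are nonfunctional) is the most delicate, and here I would cite the companion article's corresponding guarantee for $\lambda$LPO. I expect the main obstacle to be threading the watershed condition correctly through~\ref{cond:order:t-f-minimal}, \ref{cond:order:ext}, and~\ref{cond:order:variable}: unlike $\lambda$KBO, $\lambda$LPO has no numerical weight to fall back on, so the smallness of $\ifalse$, $\itrue$, and $\cst{diff}$ must be arranged purely through their position relative to $\cst{ws}$, and one has to verify that the hypotheses $\ifalse\le\cst{ws}$ and $\cst{diff}\le\cst{ws}$ stated here are exactly what the companion article's $\lambda$LPO lemmas require.
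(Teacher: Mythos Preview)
Your proposal is correct and mirrors the paper's proof almost exactly: both reduce the ground criteria to $\succ_\llpg$ via the companion article's coincidence theorems, then discharge each of \ref{cond:order:total}--\ref{cond:order:variable} by citing the corresponding companion-article theorem, handling the clause and constrained-literal extensions via the multiset extension and the fact that constraints are ignored. The only minor divergence is that the paper attributes \ref{cond:order:t-f-minimal} solely to the precedence assumption on $\itrue$ and $\ifalse$ (not to $\ifalse\le\cst{ws}$), but this does not affect the argument.
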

\begin{proof}
For most of the criteria, we use that
by 
Theorems~\ref{new-orders:thm:llpm-coincide-ground}
and~\ref{new-orders:thm:llp-coincide-monomorphic} of the companion article,
$\succ_\llpg$ is the restriction of
$\succ_\llp$ to ground terms.
\begin{enumerate}
\item[\ref{cond:order:total}]
By 
Theorems~\ref{new-orders:thm:llpg-strict-partial-order} and \ref{new-orders:thm:llpg-total}  of the companion article,
$\succ_\llpg$ is a total order.
By
Theorem~\ref{new-orders:thm:llpg-well-founded} of the companion article,
it is well founded. 
\item[\ref{cond:order:comp-with-contexts}]
By
Theorem~\ref{new-orders:thm:llpg-compat-orange-contexts} of the companion article,
$\succ_\llpg$ is compatible with orange contexts and thus also with yellow contexts.
\item[\ref{cond:order:subterm}] By Theorem~\ref{new-orders:thm:llpg-orange-subterm-property} of the companion article,
$\succ_\llpg$ enjoys the orange subterm property and thus also the yellow subterm property.
\item[\ref{cond:order:t-f-minimal}] 
By Theorem~\ref{new-orders:thm:llpg-top-bot-smallest} of the companion article,
$u \succ_\llpg \ifalse \succ_\llpg \itrue$ for all ground terms $u \notin \{\itrue, \ifalse\}$,
using our assumptions about the precedence of $\itrue$ and $\ifalse$.
\item[\ref{cond:order:ext}] By Theorem~\ref{new-orders:thm:llpg-diff} of the companion article,
$u \succ_\llpg u\>\cst{diff}\typeargs{\tau,\upsilon}(s,t)$ for all ground types $\tau, \upsilon$ and ground terms $s,t,u \oftype \tau \fun \upsilon$,
using our assumption about the precedence of $\cst{diff}$.
\item[\ref{cond:order:clause-extension}] By definition of our extension of $\succ_\llp$ to clauses.
\item[\ref{cond:order:stability-terms}] By Theorems~\ref{new-orders:thm:llpm-grounding-subst-stable}
and~\ref{new-orders:thm:llp-nsllp-monomorphizing-subst-stable} of the companion article.
\begin{full}Since we ignore the constraints in the order, we also have stability under substitutions for constrained terms.\end{full}
\item[\ref{cond:order:stability-clauses}] Using the Dershowitz--Manna definition \cite{dershowitz-manna-1979} of a multiset, it is easy to see that
stability under substitutions for terms implies stability under substitutions for clauses.
\begin{full}Since we ignore the constraints in the order, we also have stability under substitutions for constrained clauses.\end{full}
\item[\ref{cond:order:transitive}] By Theorem~\ref{new-orders:thm:llp-transitive} of the companion article,
$\succ_\llp$ is transitive on terms. Since the multiset extension preserves transitivity,
it is also transitive on literals. \begin{full}Since we ignore the constraints in the order,
it is also transitive on constrained literals.\end{full}
\begin{full}\item[\ref{cond:order:variable}] By Theorem~\ref{new-orders:thm:llp-variable-guarantee} of the companion article.\end{full}
\end{enumerate}
\end{proof}

\subsection{The Core Inference Rules}
\label{ssec:the-core-inference-rules}

\begin{full}
The optimistic $\lambda$-superposition calculus consists of the following core
inference rules, which a priori must be performed to guarantee refutational completeness.
\end{full}
The calculus is parameterized by an admissible term order $\succ$
and a selection function $\mathit{hsel}$.
We denote this calculus as $\HInf^{\succ,\mathit{hsel}}$ or just $\HInf$.

Each of our inference rules describes a collection of inferences, which we formally define as follows:
\begin{defi}\label{def:inference}
An \emph{inference} $\iota$ is a tuple $(C_1, C_2, \ldots, C_{n+1})$ of \begin{full}constrained\end{full} clauses, written
\[\inference{C_1\quad C_2\quad \cdots\quad C_n}{C_{n+1}}\]
The \begin{full}constrained\end{full} clauses $C_1, C_2, \ldots, C_n$ are called \emph{premises},
denoted by $\prems(\iota)$, and $C_{n+1}$ is called \emph{conclusion}, denoted by $\concl(\iota)$.
The clause $C_n$ is called the \emph{main premise} of $\iota$, denoted by $\mprem(\iota)$.
We assume that the premisses of an inference do not have any variables in common,
which can be achieved by renaming them apart when necessary.
\end{defi}

Our variant of the superposition rule, originating from the standard superposition calculus,
is stated as follows:
\[\namedinference{\Sup}
{\overbrace{D' \llor { t \ceq t'}}^{\vphantom{\cdot}\smash{D}} \slimfull{}{\>\constraint{T}} \hypsep
 \greensubterm{C}{u}\slimfull{}{\>\constraint{S}}}
{(D' \llor \greensubterm{C}{t'})\sigma\slimfull{}{\>\constraint{U}}}\]
\begin{enumerate}[label=\arabic*.,ref=\arabic*]
  \item\label{sup:one} 
  \begin{full}$(\sigma, U) \in \csuupto(T,S,t\equiv u)$;\end{full}
  \begin{slim}$\sigma\in\csu(t \equiv u)$;\end{slim}
  \item\label{sup:two} 
  \begin{full}$u$ is not a variable;\end{full}
  \begin{slim}$u$ is not a variable, unless there exists another occurrence of that variable inside of a parameter in $C$;\end{slim}
  \item\label{sup:three} $u\sigma$ is nonfunctional;
  \item\label{sup:four}
  \begin{full}$(t\constraint{T})\sigma \not\preceq (t'\constraint{T})\sigma$;\end{full}
  \begin{slim}$t\sigma \not\preceq t'\sigma$;\end{slim}
  \item\label{sup:five} the position of $u$ is eligible in $C\begin{full}\constraint{S}\end{full}$ \wrt\ $\sigma$;
  \item\label{sup:six} $t \ceq t'$ is strictly maximal in $D\begin{full}\constraint{T}\end{full}$ \wrt\ $\sigma$;
  \item\label{sup:seven} there are no selected literals in $D\begin{full}\constraint{T}\end{full}$.
\end{enumerate}

The rule \FluidSup{} simulates superposition below applied variables:
\[\namedinference{\FluidSup}
{\overbrace{D' \llor t \ceq t'}^{\vphantom{\cdot}\smash{D}}\slimfull{}{\>\constraint{T}} \hypsep
\greensubterm{C}{u}\slimfull{}{\>\constraint{S}}}
{(D' \llor \greensubterm{C}{z\>t'}\slimfull{}{\>\constraint{T,S}}) \sigma}\]
with the following side conditions, in addition to \Sup's conditions
3~to~7:

\begin{enumerate}
  \item[1.] $\sigma\in\csu(z\>t\equiv u)$;
  \item[2.] 
  \begin{full}$u$ is not a variable but is variable-headed;\end{full}
  \begin{slim}$u$ is variable-headed, and if $u$ is a variable,
    then there exists another occurrence of that variable inside of a parameter in $C$;\end{slim}
  \item[8.] $z$ is a fresh variable;
  \item[9.] $(z\>t)\sigma \not= (z\>t')\sigma$;
  \item[10.] $z\sigma \not= \lambda\>\DB{0}$.
\end{enumerate}

The equality resolution rule \EqRes{} and the equality factoring rule \EqFact{} also originate from the standard superposition calculus:
\begin{align*}
 &\namedinference{\EqRes}
{\overbrace{C' \llor {u \cneq u'}}^{\vphantom{\cdot}\smash{C}}\slimfull{}{\>\constraint{S}}}
{C'\sigma\slimfull{}{\>\constraint{U}}}
&&\namedinference{\EqFact}
{\overbrace{C' \llor {u'} \ceq v' \llor {u} \ceq v}^{\vphantom{\cdot}\smash{C}}\slimfull{}{\>\constraint{S}}}
{(C' \llor v \cneq v' \llor u \ceq v')\sigma\slimfull{}{\>\constraint{U}}}
\end{align*}

\noindent
Side conditions for \EqRes{}:

\begin{enumerate}[label=\arabic*.,ref=eqres-\arabic*]
\item
\begin{full}$(\sigma,U)\in\csuupto(S,u\equiv u')$;\end{full}
\begin{slim}$\sigma\in\csu(u\equiv u')$;\end{slim}
\item $u \cneq u'$ is eligible in $C\slimfull{}{\>\constraint{S}}$ \wrt\ $\sigma$.
\end{enumerate}

\noindent
Side conditions for~\EqFact{}:

\begin{enumerate}[label=\arabic*.,ref=eqfact-\arabic*]
\item
\begin{full}$(\sigma, U)\in\csuupto(S,u\equiv u')$;\end{full}
\begin{slim}$\sigma\in\csu(u\equiv u')$;\end{slim}
\item $u \ceq v$ is eligible in $C\slimfull{}{\>\constraint{S}}$ \wrt\ $\sigma$;
\item there are no selected literals in $C\slimfull{}{\>\constraint{S}}$;
\item $\slimfull{u}{(u{\>\constraint{S}})}\sigma \not\preceq \slimfull{v}{(v{\>\constraint{S}})}\sigma$.
\end{enumerate}

The following rules \Clausify{}, \BoolHoist{}, \LoobHoist{}, and \FalseElim{}
are responsible for converting Boolean terms into clausal form.
The rules \BoolHoist{} and \LoobHoist{} each
come with an analogue, respectively called \FluidBoolHoist{} and \FluidLoobHoist{},
which simulates their application below applied variables.
\begin{align*}
  \namedinference{\Clausify}{C' \llor s \ceq t \slimfull{}{\>\constraint{S}}}
  {(C' \llor D \slimfull{}{\>\constraint{S}})\sigma}
\end{align*}
with the following side conditions:
\begin{enumerate}[label=\arabic*.,ref=clausify-\arabic*]
\item $\sigma\in\csu(s\equiv s', t\equiv t')$;%
\item $s \ceq t$ is strictly eligible in $C\slimfull{}{\>\constraint{S}}$ \wrt\ $\sigma$;%
\item $s$ is not a variable%
\begin{slim}, unless there exists an occurrence of that variable inside of a parameter in $C$\end{slim};%
\item the triple $(s', t', D)$ is one of the following,
where $\alpha$ is a fresh type variable and $x$ and $y$ are fresh term variables:
\begin{align*}
  &(x \iand y,\ \itrue,\ x \ceq \itrue)&
  &(x \iand y,\ \itrue,\ y \ceq \itrue)&
  &(x \iand y,\ \ifalse,\ x \ceq \ifalse \llor y \ceq \ifalse)\\
  &(x \ior y,\ \itrue,\ x \ceq \itrue \llor y \ceq \itrue)&
  &(x \ior y,\ \ifalse,\ x \ceq \ifalse)&
  &(x \ior y,\ \ifalse,\ y \ceq \ifalse)\\
  &(x \iimplies y,\ \itrue,\ x \ceq \ifalse \llor y \ceq \itrue)&
  &(x \iimplies y,\ \ifalse,\ x \ceq \itrue)&
  &(x \iimplies y,\ \ifalse,\ y \ceq \ifalse)\\
  &(x \mathrel{{\ieq}\typeargs{\alpha}} y,\ \itrue,\ x \ceq y)&
  &(x \mathrel{{\ieq}\typeargs{\alpha}} y,\ \ifalse,\ x \cneq y)\\
  &(x \mathrel{{\ineq}\typeargs{\alpha}} y,\ \itrue,\ x \cneq y)&
  &(x \mathrel{{\ineq}\typeargs{\alpha}} y,\ \ifalse,\ x \ceq y)\\
  &(\inot x,\ \itrue,\ x \ceq \ifalse)&
  &(\inot x,\ \ifalse,\ x \ceq \itrue)
\end{align*}
\end{enumerate}

\begin{align*}
 &\namedinference{\BoolHoist}{\greensubterm{C}{u} \slimfull{}{\>\constraint{S}}}
    {(\greensubterm{C}{\ifalse} \llor u \ceq \itrue\slimfull{}{\>\constraint{S}})\sigma}
&&\namedinference{\LoobHoist}{\greensubterm{C}{u} \slimfull{}{\>\constraint{S}}}
    {(\greensubterm{C}{\itrue} \llor u \ceq \ifalse\slimfull{}{\>\constraint{S}})\sigma}
\end{align*}
each with the following side conditions:
\begin{enumerate}[label=\arabic*.,ref=\arabic*]
  \item $\sigma$ is the most general type substitution such that $u\sigma$ is of Boolean type
  (i.e., the identity if $u$ is of Boolean type or $\{\alpha \mapsto \omicron\}$
  if $u$ is
  of type $\alpha$ for some type variable~$\alpha$);%
  \item 
  \begin{full}$u$ is not a variable and is neither $\itrue$ nor $\ifalse$;\end{full}
  \begin{slim}$u$ is neither $\itrue$ nor $\ifalse$, and if $u$ is a variable,
     there exists another occurrence of that variable inside of a parameter in $C$;\end{slim}
  \item the position of $u$ is eligible in $C\slimfull{}{\>\constraint{S}}$ \wrt\ $\sigma$;%
  \item the occurrence of $u$ is not in a literal of the form $u \ceq \ifalse$ or $u \ceq \itrue$.
\end{enumerate}

\begin{align*}
  \namedinference{\FluidBoolHoist}
  {\greensubterm{C}{u}\slimfull{}{\>\constraint{S}}}
  {(\greensubterm{C}{z\>\ifalse} \llor x \ceq \itrue\slimfull{}{\>\constraint{S}}) \sigma}
  \end{align*}
  \begin{enumerate}[label=\arabic*.,ref=\arabic*]
    \item 
    \begin{full}$u$ is not a variable but is variable-headed;\end{full}
    \begin{slim}$u$ is variable-headed, and if $u$ is a variable,
      there exists another occurrence of that variable inside of a parameter in $C$;\end{slim}
    \item $u\sigma$ is nonfunctional;
    \item $x$ is a fresh variable of Boolean type, and $z$ is a fresh variable of function type from Boolean to the type of $u$;
    \item $\sigma\in\csu(z\>x\equiv u)$;
    \item\label{fluidboolhoist:five} $(z\>\ifalse)\sigma \not= (z\>x)\sigma$;
    \item $z\sigma \not= \lambda\>\DB{0}$;
    \item $x\sigma \not= \itrue$ and $x\sigma \not= \ifalse$;
    \item the position of $u$ is eligible in $C\slimfull{}{\>\constraint{S}}$ \wrt\ $\sigma$.
  \end{enumerate}
  \begin{align*}
  \namedinference{\FluidLoobHoist}
  {\greensubterm{C}{u}\slimfull{}{\>\constraint{S}}}
  {(\greensubterm{C}{z\>\itrue} \llor x \ceq \ifalse\slimfull{}{\>\constraint{S}}) \sigma}
  \end{align*}
with the same side conditions as \FluidBoolHoist, but where
$\ifalse$ is replaced by $\itrue$ in condition \ref{fluidboolhoist:five}.

\begin{align*}
  \namedinference{\FalseElim}{\overbrace{C' \llor s \ceq t}^C \slimfull{}{\>\constraint{S}}}
  {C'\sigma\slimfull{}{\>\constraint{U}}}
\end{align*}
with the following side conditions:
\begin{enumerate}[label=\arabic*.,ref=\arabic*]
  \item
  \begin{full}$(\sigma, U)\in\csuupto(S,s\equiv \ifalse,t\equiv \itrue)$;\end{full}
  \begin{slim}$\sigma\in\csu(s\equiv \ifalse,t\equiv \itrue)$;\end{slim}
  \item $s\ceq t$ is strictly eligible in $C\slimfull{}{\>\constraint{S}}$ \wrt\ $\sigma$.
\end{enumerate}

The argument congruence rule \ArgCong{} and the extensionality rule \Ext{} convert functional terms
into nonfunctional terms.
The rule \Ext{} also comes with an analogue \FluidExt{}, which simulates its application below applied variables.
\[\namedinference{ArgCong}
{\overbrace{C' \llor s \eq s'}^C \slimfull{}{\>\constraint{S}}}
{C'\sigma \llor s\sigma\>x \eq s'\sigma\>x\slimfull{}{\>\constraint{S\sigma}}}\]
with the following side conditions:
\begin{enumerate}[label=\arabic*.,ref=\arabic*]
  \item $\sigma$ is the most general type substitution such that $s\sigma$ is functional
  (i.e., the identity if $s$ is functional or $\{\alpha \mapsto (\beta \fun \gamma)\}$
  for fresh $\beta$ and $\gamma$ if $s$ is
  of type $\alpha$ for some type variable~$\alpha$);%
  \item $s \eq s'$ is strictly eligible in $C\slimfull{}{\>\constraint{S}}$ \wrt\ $\sigma$;
  \item $x$ is a fresh variable.
\end{enumerate}

\begin{align*}
\namedinference{\Ext}
{\greensubterm{C}{u}\slimfull{}{\>\constraint{S}}}
{\greensubterm{C\sigma}{y} \llor u\sigma\>(\diff\typeargs{\tau,\upsilon}(u\sigma,y)) \cneq y\>(\diff\typeargs{\tau,\upsilon}(u\sigma,y))\slimfull{}{\>\constraint{S\sigma}}}
\end{align*}
with the following side conditions:
\begin{enumerate}[label=\arabic*.,ref=\arabic*]
  \item $\sigma$ is the most general type substitution such that $u\sigma$ is of type $\tau \to \upsilon$ for some $\tau$ and $\upsilon$;
  \item $y$ is a fresh variable of the same type as $u\sigma$;
  \item the position of $u$ is eligible in $C\slimfull{}{\>\constraint{S}}$ \wrt\ $\sigma$.
\end{enumerate}

\begin{align*}
  \namedinference{\FluidExt}
  {\greensubterm{C}{u}\slimfull{}{\>\constraint{S}}}
  {(\greensubterm{C}{z\>y} \llor x\>(\diff\typeargs{\alpha,\beta}(x,y))\> \cneq y\>(\diff\typeargs{\alpha,\beta}(x,y))\slimfull{}{\>\constraint{S}}) \sigma}
  \end{align*}
with the following side conditions:
\begin{enumerate}[label=\arabic*.,ref=\arabic*]
  \item\label{fluidext:var} 
  \begin{full}$u$ is not a variable but is variable-headed;\end{full}
  \begin{slim}$u$ is variable-headed;\end{slim}
  \item\label{fluidext:nonfunctional} $u\sigma$ is nonfunctional;
  \item\label{fluidext:fresh} $x$ and $y$ are fresh variables of type $\alpha \to \beta$,
   and $z$ is a fresh variable of function type from $\alpha \to \beta$ to the type of $u$;
  \item\label{fluidext:csu} $\sigma\in\csu(S,z\>x \equiv u)$;
  \item\label{fluidext:csu-not-id} $(z\>x)\sigma \not= (z\>y)\sigma$;
  \item\label{fluidext:csu-not-proj} $z\sigma \not= \lambda\>\DB{0}$;
  \item\label{fluidext:eligible} the position of $u$ is eligible in $C\slimfull{}{\>\constraint{S}}$ \wrt\ $\sigma$.
\end{enumerate}

Our calculus also includes the following axiom (i.e., nullary inference rule), which
establishes the interpretation of the extensionality Skolem constant $\diff$.
\begin{gather*}
  \namedinference{\Diff}{}
   {y\>(\diff\typeargs{\alpha,\beta}(y,z))\cneq z\>(\diff\typeargs{\alpha,\beta}(y,z)) \llor y\>x\ceq z\>x}
\end{gather*}

\subsection{Redundancy}
\label{ssec:redundancy}

Our calculus includes a redundancy criterion that can be used to delete certain
clauses and avoid certain inferences deemed redundant. The criterion is based
on a translation to ground monomorphic first-order logic.

Let $\Sigma$ be a higher-order signature.
We require $\Sigma$ to contain a symbol $\cst{diff}\oftypedecl\Pi\alpha,\beta.\>(\alpha\fun\beta,\allowbreak \alpha\fun\beta) \fofun \alpha$.
Based on this higher-order signature,
we construct a first-order signature $\mapF{\Sigma}$
as follows.
The type constructors are the same, but $\fun$ is an uninterpreted symbol in the first-order logic.
For each ground higher-order term of the form 
$\cst{f}\langle\bar\tau\rangle(\tuple u)\oftype\tau_1\fun\cdots\fun\tau_m\fun\tau$,
with $m\geq 0$, we introduce a first-order symbol $\cst{f}^{\tuple{\tau}}_{\tuple{u}}\oftype
\tau_1 \times \cdots \times \tau_m \fofun \tau$.%
\begin{full}
Moreover, we introduce a first-order symbol $\cst{fun}_{t} \oftype 
\tau_1\times\dots\times\tau_n\fofun (\tau \fun\upsilon)$
for each expression $t$ 
obtained by replacing each
outermost proper yellow subterm in a
higher-order term of type $\tau \fun \upsilon$
by a placeholder symbol $\square$,
where $\tau_1,\dots,\tau_n$ are the types of the replaced subterms in order
of occurrence.
\end{full}
\begin{slim}
Moreover, we introduce a first-order symbol $\cst{fun}_{t} \oftype \tau \to \upsilon$
for each higher-order term $t$ of type $\tau \fun \upsilon$.
\end{slim}

We define an encoding $\mapFonly$ from higher-order
ground terms
to first-order terms:
\begin{defi}\label{def:fol} 
  For ground terms $t$, we define $\mapFonly$ recursively as follows:
  If $t$ is functional,
  then%
  \begin{full}
  let $t'$ be the expression
  obtained by replacing each
  outermost proper yellow subterm in $t$
  by the placeholder symbol $\square$,
  and let $\mapF{t} = \cst{fun}_{t'}(\mapF{\tuple{s}_{n}})$, where $\tuple{s}_{n}$
  are the replaced subterms in order of occurrence.
  \end{full}%
  \begin{slim}
  let $\mapF{t} = \cst{fun}_{t}$.
  \end{slim}%
  Otherwise, $t$ is of the form $\cst{f}\langle\bar\tau\rangle(\tuple u)\> \tuple{t}_m$,
  and we define $\mapF{t}
  = \cst{f}^{\tuple{\tau}}_{\tuple{u}} (\mapF{\tuple{t}_1},\dots,\mapF{\tuple{t}_m})$.  
  
  For clauses, we apply $\mapFonly$ on each side of each literal individually.
\end{defi}

\begin{full}
\begin{exa}
$\mapF{\lambda\;(\cst{f}\;(\lambda\;\DB{1})\;(\lambda\;(\lambda\;\DB{0})))}
  =\cst{fun}_{\lambda\;(\cst{f}\;(\lambda\;\DB{1})\;\square)}(\cst{fun}_{\lambda\;\square}(\cst{fun}_{\lambda\;\DB{0}}))$.
\end{exa}

\begin{rem}
A simpler yet equivalent formulation of the redundancy criterion
can be obtained by defining $\mapF{t} = \cst{fun}_{t}$ for functional terms $t$,
without using the $\square$ symbol.
The completeness proof, however, would become more complicated.
\end{rem}
\end{full}

\begin{lem}\label{lem:fol-bijection}
  The map $\mapFonly$ is a bijection between higher-order ground terms and first-order ground terms.
\end{lem}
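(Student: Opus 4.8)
The plan is to construct an explicit inverse $\ceil{\cdot}$ of $\mapF{\cdot}$ and prove the two round-trip identities $\ceil{\mapF{t}} = t$ and $\mapF{\ceil{s}} = s$. Throughout we work with the globally fixed $\beta\eta$-normalizer $\benf{\cdot}$, identifying each ground term with its $\benf{\cdot}$-representative (the lemma above guarantees that which subterms count as yellow does not depend on this choice). We define $\ceil{\cdot}$ on first-order ground terms by structural recursion, mirroring Definition~\ref{def:fol}: set $\ceil{\cst{f}^{\bar\tau}_{\bar u}(s_1,\dots,s_m)} = \cst{f}\typeargs{\bar\tau}\params{\bar u}\>\ceil{s_1}\cdots\ceil{s_m}$ for each first-order symbol $\cst{f}^{\bar\tau}_{\bar u}$ (coming from a ground term $\cst{f}\typeargs{\bar\tau}\params{\bar u}$ of nonfunctional return type, the only shape actually produced by $\mapF{\cdot}$), and set $\ceil{\cst{fun}_{t'}(s_1,\dots,s_n)}$ to be the $\beta\eta$-class of the expression $t'\{\square_1 \mapsto \ceil{s_1}, \dots, \square_n \mapsto \ceil{s_n}\}$, where $\square_1,\dots,\square_n$ are the placeholder occurrences of the shell $t'$ read left to right.

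Two facts make the maps well defined. First, the decompositions are unique: a ground term in $\benf{\cdot}$-normal form is nonfunctional exactly when it has the form $\cst{f}\typeargs{\bar\tau}\params{\bar u}\>\bar t_m$, with $m$ and $\bar t_m$ uniquely determined (uniqueness of $\beta$-normal forms), and is functional otherwise, in which case its outermost proper yellow subterms $\bar s_n$ and the shell $t'$ obtained by replacing each of them with $\square$ are uniquely determined. Second, the indexing of the constructed first-order symbols is faithful — distinct $(\cst{f},\bar\tau,\bar u)$ yield distinct $\cst{f}^{\bar\tau}_{\bar u}$, distinct shells $t'$ yield distinct $\cst{fun}_{t'}$ — and these symbols carry exactly the first-order arities and types imposed by the construction of $\mapF{\Sigma}$. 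Hence $\mapF{t}$ is always a well-formed first-order ground term, the recursion terminating because $\bar t_m$ (resp.\ $\bar s_n$) are proper subterms. Dually, an easy induction shows that $\ceil{s}$ is a well-formed ground higher-order term whose type is the first-order type of $s$; in the $\cst{fun}$-case this uses that $t'$ arose from a ground higher-order term of type $\tau \fun \upsilon$ by deleting subterms without free De~Bruijn indices, so that refilling the placeholders by the (ground, hence closed) terms $\ceil{s_i}$ of the matching types again yields a ground $\lambda$-term of type $\tau \fun \upsilon$.

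The round-trip identities then follow by induction: for $\ceil{\mapF{t}} = t$ we induct on the size of $\benf{t}$, and for $\mapF{\ceil{s}} = s$ we induct on $s$. In the first identity, if $\benf{t}$ is nonfunctional then $\ceil{\cdot}$ literally re-assembles $\cst{f}\typeargs{\bar\tau}\params{\bar u}\>\bar t_m$ from the head symbol and the recursively recovered arguments; if $\benf{t}$ is functional then $\ceil{\cdot}$ re-plugs the recursively recovered outermost proper yellow subterms into the shell $t'$, which by uniqueness of the decomposition reproduces $\benf{t}$. In the second identity it suffices to check that running Definition~\ref{def:fol} on $\benf{\ceil{s}}$ returns exactly the head symbol and arguments read off from $s$: for a $\cst{f}^{\bar\tau}_{\bar u}$-headed $s$ this is immediate (the re-assembled term is nonfunctional, so the nonfunctional branch applies), and for a $\cst{fun}_{t'}$-headed $s$ it reduces to the claim that the outermost proper yellow subterms of $t'$ with the placeholders refilled by the $\ceil{s_i}$ are exactly those $\ceil{s_i}$, sitting at the former placeholder positions; the induction hypothesis on the $s_i$ then finishes.

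The only genuinely delicate step is that last claim — equivalently, that substituting closed ground terms of the prescribed types into the placeholder positions of a shell $t'$ (i)~typechecks, (ii)~yields a locally closed, ground term, and (iii)~leaves it $\benf{\cdot}$-normal with exactly the same outermost-proper-yellow-subterm decomposition (each old piece replaced by the new one at the same position). Points (i) and (ii) are routine. For (iii), recall (the remark after Definition~\ref{def:orange-subterms}) that a subterm fails to be orange only by being applied to arguments, by lying inside a parameter, or by lying inside an argument of an applied variable; hence every placeholder position, being orange, is an argument slot or a $\lambda$-body, never the head of an application. Consequently the substitution creates no $\beta$-redex and (for either admissible choice of $\benf{\cdot}$) no $\eta$-redex — the substituted terms being closed, they can never be a free De~Bruijn index $\DB{0}$ — so $\benf{\cdot}$-normality is preserved; and no position above a placeholder can gain or lose a free De~Bruijn index, since both the removed and the inserted pieces are closed, so the set of outermost proper yellow positions is unchanged. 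With this invariant in hand the two inductions go through mechanically, exhibiting $\ceil{\cdot}$ as a two-sided inverse and hence showing that $\mapF{\cdot}$ is a bijection.
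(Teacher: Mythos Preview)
Your proposal is correct and follows essentially the same approach as the paper, which proves injectivity and surjectivity by two one-line structural inductions. Your explicit inverse construction and the careful analysis of why refilling the placeholder positions preserves $\beta\eta$-normality and the outermost-proper-yellow-subterm decomposition spell out precisely the details that the paper's two-sentence proof leaves implicit.
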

\begin{proof}
  We can see that $\mapF{s} = \mapF{t}$ implies $s = t$ for all ground $s$ and $t$
  by structural induction on $\mapF{s}$.
  Moreover, we can show that
  for each first-order ground term $t$, there exists an $s$ such that $\mapF{s} = t$
  by structural induction on $t$.
  Injectivity and surjectivity imply bijectivity.
\end{proof}

We consider two different semantics for our first-order logic:
$\modelsfol$ and $\modelsolam$.
The semantics $\modelsfol$ is the standard semantics of first-order logic.
The semantics $\modelsolam$ restricts $\modelsfol$ to interpretations $\III$
with the following properties:
\begin{itemize}
  \item Interpreted Booleans:
  The domain of the Boolean type has exactly two elements,
  $\interpret{\itrue}{\III}{}$ and $\interpret{\ifalse}{\III}{}$,
  and the symbols $\inot$, $\iand$, $\ior$, $\iimplies$, $\ieq^\tau$, $\ineq^\tau$ are interpreted as the corresponding logical operations.
  \item Extensionality \wrt\ $\diff$: For all ground $u,w \oftype \tau \fun \upsilon$,
  if $\III \modelsfol \mapF{u\>\diff\typeargs{\tau,\upsilon}(s,t)} \ceq \mapF{w\>\diff\typeargs{\tau,\upsilon}(s,t)}$ for all ground $s,t \oftype \tau \to \upsilon$,
  then $\III \modelsfol \mapF{u} \ceq \mapF{w}$.
  \item Argument congruence \wrt\ $\diff$: For all ground $u,w,s,t \oftype \tau \fun \upsilon$,
  if $\III \modelsfol \mapF{u} \ceq \mapF{w}$,
  then $\III \modelsfol \mapF{u\>\diff\typeargs{\tau,\upsilon}(s,t)} \ceq \mapF{w\>\diff\typeargs{\tau,\upsilon}(s,t)}$.
\end{itemize}

\begin{full}
As another building block of our redundancy criterion,
we introduce the notion of trust.
As a motivating example, consider the clauses
$\cst{b} \cneq \cst{a}$ and
$\cst{b} \ceq \cst{a}$, where $\cst{b} \succ \cst{a}$.
Clearly, the empty clause can be derived via a $\Sup$ and a $\EqRes$ inference.
If we replace the clause $\cst{b} \ceq \cst{a}$ with 
the logically equivalent clause $x \cneq \cst{a}\>\constraint{x \equiv \cst{b}}$, however,
an empty clause with satisfiable constraints cannot be derived because $\Sup$ does not apply at variables.
In this sense, the clause $\cst{b} \cneq \cst{a}$
is more powerful than $x \cneq \cst{a}\>\constraint{x \equiv \cst{b}}$.
Technically,
the reason for this is that
the calculus is only guaranteed to derive contradictions entailed by
so-called variable-irreducible instances
of clauses,
and the instance of $x \cneq \cst{a}\>\constraint{x \equiv \cst{b}}$ that maps
$x$ to $\cst{b}$ is not variable-irreducible.
Since variable-irreducibility cannot be computed in general, 
when we replace a clause with another,
we use the notion of trust to ensure that
for every variable-irreducible instance of the replaced clause,
there exists a corresponding variable-irreducible instance of the replacing clause.
Concretely,
for any variable that occurs in a constraint or in a parameter of the replacing clause,
there must exist a variable in a similar context in the replaced clause.
The formal definition is as follows.

\begin{defi}[Trust]\label{def:H:trust}
Let $C \theta$ be a ground instance of $C\constraint{S}\in \clausesH$
and $D\rho$ be a ground instance of $D \constraint{T} \in \clausesH$.
We say that the $\theta$-instance of $C\constraint{S}$
\emph{trusts} the $\rho$-instance of $D \constraint{T}$
if for each variable $x$ in $D$,
\begin{enumerate}[label=(\roman*)]
  \item\label{cond:H:trust:corresponding-var} for every literal $L \in D$
  containing $x$ outside of parameters,
  there exists a literal $K \in C$ and a substitution $\sigma$ such that
  $z\theta = z\sigma\rho$ for all variables $z$ in $C$
  and $L \preceq K\sigma$; or
  \item \label{cond:H:trust:unconstrained}  $x$ neither occurs in parameters in $D$ nor appears in $T$.
\end{enumerate}
\end{defi}

The most general form of redundancy criteria for constrained superposition calculi
are notoriously difficult to apply to concrete simplification rules.
In the spirit of Nieuwenhuis and Rubio \cite{nieuwenhuis-rubio-1992}, we therefore
introduce a simpler, less general notion of redundancy that suffices for most simplification rules.
We provide a simple criterion for clauses and one for inference rules.
\end{full}
\begin{full}\subsubsection{Simple Clause Redundancy}\mbox{}\end{full}%
\begin{slim}\subsubsection{Clause Redundancy}\mbox{}\end{slim}
\label{ssec:simple-clause-redundancy}
Our redundancy criterion for clauses provides two conditions that can make a clause redundant.
The first condition applies when the ground instances of a clause are entailed
by smaller ground instances of other clauses.
It generalizes the standard superposition redundancy criterion
to higher-order clauses\begin{full} with constraints\end{full}.
The second condition applies when there are other clauses with the same ground instances
and can be used to justify subsumption.
For this second condition, we fix a well-founded partial order
$\sqsupset$ on $\clausesH$,
which prevents infinite chains of clauses where each clause is made redundant by the next one.
\begin{slim}
For example, following Bentkamp et al.\cite[\Section~3.4]{bentkamp-et-al-2021-lamsup-journal},
a sensible choice is to define $C \sqsupset D$ 
if either $C$ is larger than $D$ in syntactic size (i.e., number of variables, constants, and De Bruijn indices),
or if $C$ and $D$ have the same syntactic size and $C$ contains fewer distinct variables than $D$.
\end{slim}

\begin{slim}
\begin{defi}\label{def:order-mapF}
Since $\mapFonly$ is bijective on ground terms by Lemma~\ref{lem:fol-bijection},
we can convert a term order $\succ$ on higher-order terms
into a relation $\succ_\mapFonly$ on ground first-order terms as follows.
For two ground first-order terms $s$ and $t$, let $s \mathrel{\succ_\mapFonly} t$ if $\mapFonly^{-1}(s) \succ \mapFonly^{-1}(t)$.
\end{defi}

\begin{defi}[Clause Redundancy]\label{def:H:RedC}
  Given a clause $C$ and a clause set $N$, let $C\in\HRedC(N)$
  if for each grounding substitution $\theta$ at least one of the following two conditions holds:
  \begin{enumerate}[label=\arabic*.,ref=\arabic*]
    \item \label{cond:H:RedC:entailment}
      $\{E \in \mapF{\mapG{N}} \mid E \prec_\mapFonly \mapF{C\theta}\}\modelsolam \mapF{C\theta}$; or
    \item \label{cond:H:RedC:subsumed} there exists a clause $D \in N$ and a grounding substitution $\rho$
    such that $C \sqsupset D$ and $D\rho = C\theta$.
  \end{enumerate}
\end{defi}
\end{slim}
\begin{full}
\begin{defi}[Simple Clause Redundancy]\label{def:H:simple-redundancy}
  Let $N \subseteq \clausesH$ and $C\constraint{S} \in \clausesH$.
  We call $C\constraint{S}$ \emph{simply redundant} \wrt{} $N$, written $C\constraint{S} \in \HRedC^\star(N)$,
  if for every $C\theta \in \gnd(C\constraint S)$
  at least one of the following two conditions holds:
\begin{enumerate}[label=\arabic*.,ref=\arabic*]
  \item\label{cond:red:entailed-by-smaller} There exist
  an indexing set $I$ and for each $i\in I$
  a ground instance $D_i \rho_i$ of a clause
  $D_i \constraint{T_i} \in N$,
  such that
  \begin{enumerate}
    \item \label{cond:red:entailment} $\mapF{\{D_i\rho_i\mid i\in I\}} \modelsolam \mapF{C\theta}$; %
    \item \label{cond:red:order} for all $i \in I$, $D_i \rho_i \prec C \theta$; and
    \item \label{cond:red:trust} for all $i \in I$, the $\theta$-instance of $C\constraint S$ trusts the $\rho_i$-instance
    of $D_i\constraint {T_i}$.
  \end{enumerate}
  \item\label{cond:red:subsumed} There exists a
  ground instance $D\rho$ of some
  $D\constraint {T}\in N$
  such that
  \begin{enumerate}
    \item \label{cond:red:subsumed:eq} $D\rho = C\theta$; %
    \item \label{cond:red:subsumed:sqsupset} $C \constraint S \sqsupset D\constraint {T}$; and
    \item \label{cond:red:subsumed:trust} the $\theta$-instance of $C\constraint S$ trusts the $\rho$-instance
    of $D\constraint {T}$.
  \end{enumerate}
\end{enumerate}
\end{defi}

\begin{rem}\label{rem:sqsupset}
Although the calculus is refutationally complete for any choice of $\sqsupset$,
we propose the following definition for $\sqsupset$.
Given a clause $C\constraint{S}$ with nonempty $S$ and a clause $D$ with no constraints,
we define $C\constraint{S}\sqsupset D$.
For two clauses $C$ and $D$ with no constraints, following Bentkamp et al.\cite[\Section~3.4]{bentkamp-et-al-2021-lamsup-journal}, we propose to define $C \sqsupset D$ 
if either $C$ is larger than $D$ in syntactic size (i.e., number of variables, constants, and De Bruijn indices),
or if $C$ and $D$ have the same syntactic size and $C$ contains fewer distinct variables than $D$.
\end{rem}
\end{full}

\begin{full}\subsubsection{Simple Inference Redundancy}\mbox{}\end{full}%
\begin{slim}\subsubsection{Inference Redundancy}\mbox{}\end{slim}
To define inference redundancy, 
we first define a calculus $\FInf$
on ground first-order logic with Booleans.
It is parameterized by a relation $\succ$ on
ground first-order terms\slimfull{ and a selection function on ground first-order clauses}{}.
\begin{full}
For simplicity, there is no selection function,
but our notion
of eligibility is adapted to overapproximate any possible selection function as follows.
A literal $L\in C$ is (\emph{strictly}) \emph{eligible} in $C$ if $L$ is
negative
or if $L$ is of the form $t \ceq \ifalse$
or if $L$ is (strictly) maximal in $C$.
A position
$L.s.p$
of a clause $C$ is \emph{eligible}
if the literal $L$ is
of the form $s \doteq t$ with $s \succ t$
and $L$ is either
negative and
eligible
or positive and strictly eligible.

We define green subterms on first-order terms as follows.
Every term is a green subterm of itself.
Every direct subterm of a nonfunctional green subterm is also a green subterm.
In keeping with our notation for higher-order terms, we write $\greensubterm{t}{u}$ for
a term $t$ containing a green subterm $u$.

\begin{gather*}
  \begin{aligned}
    &
    \namedinference{\FSup}
      {\overbrace{D' \llor t \ceq t'}^{D}
      \quad
      \greensubterm{C}{t}}
      {D' \llor \greensubterm{C}{t'}}
    &&\quad
    \namedinference{\FEqRes}
      {\overbrace{C' \llor u \cneq u}^{C}}
      {C'}
  \end{aligned}\\[\jot]
  \begin{aligned}
    &
    \namedinference{\FEqFact}
      {\overbrace{C' \llor u \ceq v' \llor u \ceq v}^{C}}
      {C' \llor v \cneq v' \llor u \ceq v'}
    \;\,
    ~~\quad
    \namedinference{\FClausify}
      {C' \llor s \ceq t }
      {C' \llor D}
  \end{aligned}\\[\jot]
  \begin{aligned}
    \namedinference{\FBoolHoist}
      {\greensubterm{C}{u}}
      {\greensubterm{C}{\ifalse} \llor u \ceq \itrue}
      \;\,
      ~~\quad
      \namedinference{\FLoobHoist}
        {\greensubterm{C}{u}}
        {\greensubterm{C}{\itrue} \llor u \ceq \ifalse}
  \end{aligned}\\[\jot]
  \begin{aligned}
    \namedinference{\FFalseElim}
      {\overbrace{(C' \llor \ifalse \ceq \itrue)}^{C}}
      {C'}
  \end{aligned}\\[\jot]
  \begin{aligned}
    \namedinference{\FArgCong}
      {\overbrace{(C' \llor \mapF{s} \ceq \mapF{s'})}^{C}}
      {C'\llor \mapF{s\>\diff\typeargs{\tau,\upsilon}(u,w)} \ceq \mapF{s'\>\diff\typeargs{\tau,\upsilon}(u,w)}}
      \;\,
  \end{aligned}\\[\jot]
  \begin{aligned}
      \namedinference{\FExt}
        {\greensubterm{C}{\mapF{u}}}
        {\greensubterm{C}{\mapF{w}}\llor \mapF{u\>\diff\typeargs{\tau,\upsilon}(u,w)} \cneq \mapF{w\>\diff\typeargs{\tau,\upsilon}(u,w)}}
  \end{aligned}\\[\jot]
  \begin{aligned}
      \namedinference{\FDiff}
        {}
        {\mapF{u\>\diff\typeargs{\tau,\upsilon}(u,w)} \cneq \mapF{w\>\diff\typeargs{\tau,\upsilon}(u,w)} \llor \mapF{u\>s} \ceq \mapF{w\>s}}
  \end{aligned}
  \end{gather*}

\noindent
Side conditions for $\FSup$:
\begin{enumerate}[label=\arabic*.,ref=\arabic*]
\item $t \succ t'$;
\item $D \prec C[t]$;
\item $t$ is nonfunctional;
\item the position of $t$ is eligible in $C[t]$;
\item $t \ceq t'$ is strictly eligible in $D$;
\item if $t'$ is Boolean, then $t' = \itrue$.
\end{enumerate}

\noindent
No side conditions for $\FEqRes$.
Side conditions for $\FEqFact$:
\begin{enumerate}[label=\arabic*.,ref=\arabic*]
\item $u\ceq v$ is maximal in $C$;
\item $u \succ v$.
\end{enumerate}
  
\noindent
Side conditions for $\FClausify$:
\begin{enumerate}[label=\arabic*.,ref=\arabic*]
\item $s \ceq t$ is strictly eligible in $C' \llor s \ceq t$;
\item the triple ($s, t, D)$ has one of the following forms, where $\tau$ is an arbitrary type and $u$, $v$ are arbitrary terms:%
    \begin{align*}
    &(u \iand v,\ \itrue,\ u \ceq \itrue)&
    &(u \iand v,\ \itrue,\ v \ceq \itrue)&
    &(u \iand v,\ \ifalse,\ u \ceq \ifalse \llor v \ceq \ifalse)\\
    &(u \ior v,\ \itrue,\ u \ceq \itrue \llor v \ceq \itrue)&
    &(u \ior v,\ \ifalse,\ u \ceq \ifalse)&
    &(u \ior v,\ \ifalse,\ v \ceq \ifalse)\\
    &(u \iimplies v,\ \itrue,\ u \ceq \ifalse \llor v \ceq \itrue)&
    &(u \iimplies v,\ \ifalse,\ u \ceq \itrue)&
    &(u \iimplies v,\ \ifalse,\ v \ceq \ifalse)\\
    &(u \ieq^\tau v,\ \itrue,\ u \ceq v)&
    &(u \ieq^\tau v,\ \ifalse,\ u \cneq v)\\
    &(u \ineq^\tau v,\ \itrue,\ u \cneq v)&
    &(u \ineq^\tau v,\ \ifalse,\ u \ceq v)\\
    &(\inot u,\ \itrue,\ u \ceq \ifalse)&
    &(\inot u,\ \ifalse,\ u \ceq \itrue)
    \end{align*}
\end{enumerate}

\noindent
Side conditions for $\FBoolHoist$ and $\FLoobHoist$:
\begin{enumerate}[label=\arabic*.,ref=\arabic*]
\item $u$ is of Boolean type;
\item $u \ne \ifalse$ and $u\ne\itrue$;
\item the position of $u$ is eligible in $C$;
\item the occurrence of $u$ is not in a literal of the form $u \ceq \ifalse$ or $u \ceq \itrue$.
\end{enumerate}

\noindent
Side condition for $\FFalseElim$:
\begin{enumerate}[label=\arabic*.,ref=\arabic*]
\item $\ifalse\ceq\itrue$ is strictly eligible in $C$.
\end{enumerate}

\noindent
Side conditions for $\FArgCong$:
\begin{enumerate}[label=\arabic*.,ref=\arabic*]
\item $s$ is of type $\tau \fun \upsilon$;
\item $u,w$ are ground terms of type $\tau \fun \upsilon$;
\item $\mapF{s}\ceq\mapF{s'}$ is eligible in $C$.
\end{enumerate}

\noindent
Side conditions for $\FExt$:
\begin{enumerate}[label=\arabic*.,ref=\arabic*]
\item the position of $\mapF{u}$ is eligible in $C$;
\item the type of $u$ is $\tau \fun \upsilon$;
\item $w$ is a ground term of type $\tau \fun \upsilon$;
\item $u \succ w$.
\end{enumerate}

\noindent
Side conditions for $\FDiff$:
\begin{enumerate}[label=\arabic*.,ref=\arabic*]
\item $\tau$ and $\upsilon$ are ground types;
\item $u$, $w$, and $s$ are ground terms.
\end{enumerate}
\end{full}%
\begin{slim}
\begin{gather*}
  \begin{aligned}
    &
    \namedinference{\FSup}
      {\overbrace{D' \llor t \ceq t'}^{D}
      \quad
      \subterm{C}{t}}
      {D' \llor \subterm{C}{t'}}\quad
    &&\quad
    \namedinference{\FEqRes}
      {\overbrace{C' \llor u \cneq u}^{C}}
      {C'}
  \end{aligned}\\[\jot]
  \begin{aligned}
    &
    \namedinference{\FEqFact}
      {\overbrace{C' \llor u \ceq v' \llor u \ceq v}^{C}}
      {C' \llor v \cneq v' \llor u \ceq v}
    \;\,
    ~~\quad
    \namedinference{\FClausify}
      {C' \llor s \ceq t}
      {C' \llor D}
  \end{aligned}\\[\jot]
  \begin{aligned}
    \namedinference{\FBoolHoist}
      {\subterm{C}{u}}
      {(\subterm{C}{\ifalse} \llor u \ceq \itrue)}
      \;\,
      ~~\quad
      \namedinference{\FLoobHoist}
        {\subterm{C}{u}}
        {\subterm{C}{\itrue} \llor u \ceq \ifalse}
  \end{aligned}\\[\jot]
  \begin{aligned}
    \namedinference{\FFalseElim}
      {\overbrace{C' \llor \ifalse \ceq \itrue}^{C}}
      {C'}
  \end{aligned}\\[\jot]
  \begin{aligned}
    \namedinference{\FArgCong}
      {\overbrace{C' \llor \mapF{s} \ceq \mapF{s'}}^{C}}
      {C'\llor \mapF{s\>\diff\typeargs{\tau,\upsilon}(u,w)} \ceq \mapF{s'\>\diff\typeargs{\tau,\upsilon}(u,w)}}
      \;\,
  \end{aligned}\\[\jot]
  \begin{aligned}
      \namedinference{\FExt}
        {\subterm{C}{\mapF{u}}}
        {\subterm{C}{\mapF{w}}\llor \mapF{u\>\diff\typeargs{\tau,\upsilon}(u,w)} \cneq \mapF{w\>\diff\typeargs{\tau,\upsilon}(u,w)}}
  \end{aligned}\\[\jot]
  \begin{aligned}
      \namedinference{\FDiff}
        {}
        {\mapF{u\>\diff\typeargs{\tau,\upsilon}(u,w)} \cneq \mapF{w\>\diff\typeargs{\tau,\upsilon}(u,w)} \llor \mapF{u\>s} \ceq \mapF{w\>s}}
  \end{aligned}
  \end{gather*}
  Side conditions for \FSup:
    \begin{enumerate}[label=\arabic*.,ref=\arabic*]
    \item $t$ is nonfunctional;
    \item $t \succ t'$;
    \item $D \prec C[t]$;
    \item the position of $t$ is eligible in $C$;
    \item $t \ceq t'$ is strictly eligible in $D$;
    \item if $t'$ is Boolean, then $t' = \itrue$.
    \end{enumerate}
  Side conditions for \FEqRes:
    \begin{enumerate}[label=\arabic*.,ref=\arabic*]
    \item $u \cneq u$ is eligible in $C$.
    \end{enumerate}
  Side conditions for \FEqFact:
    \begin{enumerate}[label=\arabic*.,ref=\arabic*]
    \item $u\ceq v$ is maximal in $C$;
    \item there are no selected literals in $C$;
    \item $u \succ v$,
    \end{enumerate}
  Side conditions for \FClausify:
    \begin{enumerate}[label=\arabic*.,ref=\arabic*]
    \item $s \ceq t$ is strictly eligible in $C' \llor s \ceq t$;
    \item The triple ($s, t, D)$ has one of the following forms, where $\tau$ is an arbitrary type and $u$, $v$ are arbitrary terms:%
    \begin{align*}
    &(u \iand v,\ \itrue,\ u \ceq \itrue)&
    &(u \iand v,\ \itrue,\ v \ceq \itrue)&
    &(u \iand v,\ \ifalse,\ u \ceq \ifalse \llor v \ceq \ifalse)\\
    &(u \ior v,\ \itrue,\ u \ceq \itrue \llor v \ceq \itrue)&
    &(u \ior v,\ \ifalse,\ u \ceq \ifalse)&
    &(u \ior v,\ \ifalse,\ v \ceq \ifalse)\\
    &(u \iimplies v,\ \itrue,\ u \ceq \ifalse \llor v \ceq \itrue)&
    &(u \iimplies v,\ \ifalse,\ u \ceq \itrue)&
    &(u \iimplies v,\ \ifalse,\ v \ceq \ifalse)\\
    &(u \ieq^\tau v,\ \itrue,\ u \ceq v)&
    &(u \ieq^\tau v,\ \ifalse,\ u \cneq v)\\
    &(u \ineq^\tau v,\ \itrue,\ u \cneq v)&
    &(u \ineq^\tau v,\ \ifalse,\ u \ceq v)\\
    &(\inot u,\ \itrue,\ u \ceq \ifalse)&
    &(\inot u,\ \ifalse,\ u \ceq \itrue)
    \end{align*}
    \end{enumerate}
  Side conditions for \FBoolHoist and \FLoobHoist:
    \begin{enumerate}[label=\arabic*.,ref=\arabic*]
    \item $u$ is of Boolean type
    \item $u$ is neither $\itrue$ nor $\ifalse$;
    \item the position of $u$ is eligible in $C$;
    \item the occurrence of $u$ is not in a literal $L$ with $L = u \ceq \ifalse$ or $L = u \ceq \itrue$.
    \end{enumerate}
  Side conditions for \FFalseElim:
    \begin{enumerate}[label=\arabic*.,ref=\arabic*]
    \item $\ifalse \ceq \itrue$ is strictly eligible in $C$.
    \end{enumerate}
  Side conditions for \FArgCong:
    \begin{enumerate}[label=\arabic*.,ref=\arabic*]
    \item $s$ is of type $\tau \fun \upsilon$;
    \item $u,w$ are ground terms of type $\tau \fun \upsilon$;
    \item $\mapF{s} \ceq \mapF{s'}$ is strictly eligible in $C$.
    \end{enumerate}
  Side conditions for \infname{FExt}:
    \begin{enumerate}[label=\arabic*.,ref=\arabic*]
    \item the position of $\mapF{u}$ is eligible in $C$;
    \item the type of $u$ is $\tau \fun \upsilon$;
    \item $w$ is a ground term of type $\tau \fun \upsilon$;
    \item $u \succ w$.
    \end{enumerate}
  Side conditions for \FDiff:
    \begin{enumerate}[label=\arabic*.,ref=\arabic*]
    \item $\tau$ and $\upsilon$ are ground types;
  \item $u,w,s$ are ground terms.
    \end{enumerate}
\end{slim}

\begin{full}
\begin{defi}
Since $\mapFonly$ is bijective on ground terms by Lemma~\ref{lem:fol-bijection},
we can convert a term order $\succ$ on higher-order terms
into a relation $\succ_\mapFonly$ on ground first-order terms as follows.
For two ground first-order terms $s$ and $t$, let $s \mathrel{\succ_\mapFonly} t$ if $\mapFonly^{-1}(s) \succ \mapFonly^{-1}(t)$.
\end{defi}
\end{full}

\begin{slim}
\begin{defi}\label{def:sel-mapF}
  We convert a selection function $\mathit{hsel}$ on higher-order clauses
  into a selection function $\mapF{\mathit{hsel}}$ on ground first-order clauses as follows:
  Let a literal $L$ of a first-order ground clause $C$ be selected
  if $\mapFonly^{-1}(L)$ is selected in $\mapFonly^{-1}(C)$.
\end{defi}
\end{slim}

\begin{defi}\label{def:fol-inferences}
  Let $\iota \in \HInf^{\succ,\mathit{hsel}}$
  for a term order $\succ$ and a selection function $\mathit{hsel}$.
  Let $C_1\slimfull{}{\constraint{S_1}}$, \dots, $C_m\slimfull{}{\constraint{S_m}}$ be its premises and
   $C_{m+1}\slimfull{}{\constraint{S_{m+1}}}$ its conclusion.
  Let $(\theta_1$, \dots, $\theta_{m+1})$ be a tuple of grounding substitutions.
  We say that $\iota$ is \emph{rooted} in $\FInf{}$ for $(\theta_1$, \dots, $\theta_{m+1})$
  if and only if
\begin{slim}
  \[\inference
  {\mapF{C_1\theta_1}\quad \cdots\quad \mapF{C_m\theta_m}}
  {\mapF{C_{m+1}\theta_{m+1}}}\]
  is a valid $\FInf^{\succ_\mapFonly, \mapF{\mathit{hsel}}}$ inference $\iota'$
  such that the rule names of $\iota$ and  $\iota'$ correspond up to the prefixes $\infname{F}$ and $\infname{Fluid}$.
\end{slim}
\begin{full}
\begin{itemize}
  \item$S_1\theta_1, \dots, S_{m+1}\theta_{m+1}$ are true
  and
  \item
  \[\inference
  {\mapF{C_1\theta_1}\quad \cdots\quad \mapF{C_m\theta_m}}
  {\mapF{C_{m+1}\theta_{m+1}}}\]
  is a valid $\FInf^{\succ_\mapFonly}$ inference $\iota'$
  such that the rule names of $\iota$ and  $\iota'$ correspond up to the prefixes $\infname{F}$ and $\infname{Fluid}$.
\end{itemize}
\end{full}
\end{defi}

\begin{full}
\begin{defi}[Simple Inference Redundancy]\label{def:H:simple-inference-redundancy}
  Let $N \subseteq \clausesH$.
  Let $\iota \in \HInf$
  an inference with premises $C_1\constraint{S_1}$, \dots, $C_m\constraint{S_m}$ and
   conclusion $C_{m+1}\constraint{S_{m+1}}$.
  We call $\iota$ \emph{simply redundant} \wrt{} $N$, written $\iota \in \HRedI^\star(N)$, 
  if for every tuple of substitutions $(\theta_1$, \dots, $\theta_{m+1})$
  for which $\iota$ is rooted in $\FInf$ (Definition~\ref{def:fol-inferences}),
  there exists an index set $I$ and 
  for each $i\in I$ a ground instance $D_i \rho_i$ of a clause
  $D_i \constraint{T_i} \in N$
  such that
  \begin{enumerate}[label=\arabic*.,ref=\arabic*]
    \item \label{cond:HRedI:entailment} $\mapF{\{D_i\rho_i\mid i\in I\}} \modelsolam \mapF{C_{m+1}\theta_{m+1}}$; %
    \item \label{cond:HRedI:order} $\iota$ is a $\Diff$ inference or for all $i \in I$, $D_i  \rho_i \prec C_m\theta_m$; and
    \item \label{cond:HRedI:trust} for all $i \in I$, the $\theta_{m+1}$-instance of $C_{m+1}\constraint{S_{m+1}}$ trusts the $\rho_i$-instance
    of $D_i\constraint {T_i}$.
  \end{enumerate}
\end{defi}
\end{full}

\begin{slim}
\begin{defi}[Inference Redundancy]\label{def:H:RedI}
    Let $N \subseteq \clausesH$.
    Let $\iota \in \HInf$
    an inference with premises $C_1$, \dots, $C_m$ and
     conclusion $C_{m+1}$.
    We define $\HRedI$ so that $\iota \in \HRedI(N)$
    if for all substitutions $(\theta_1, \dots, \theta_{m+1})$
    for which $\iota$ is rooted in $\FInf{}$ (Definition~\ref{def:fol-inferences}), we have
    \begin{itemize}
      \item $\iota$ is a $\Diff$ inference and $\mapF{\mapG{N}}\modelsolam \mapF{C_{m+1}\theta_{m+1}}$; or
      \item $\iota$ is some other inference and $\{E \in \mapF{\mapG{N}} \mid E \prec_{\mapFonly} \mapF{C_m\theta_m}\}\modelsolam \mapF{C_{m+1}\theta_{m+1}}$.
    \end{itemize}
\end{defi}
\end{slim}

\subsection{Simplification Rules}
\label{ssec:simplification-rules}

\subsubsection{Analogues of First-Order Simplification Rules}

Our notion of \slimfull{}{simple} clause redundancy (Definition~\slimfull{\ref{def:H:RedC}}{\ref{def:H:simple-redundancy}})
can justify most analogues of the
simplification rules implemented in Schulz's E prover \cite[Sections
2.3.1 and 2.3.2]{schulz-2002-brainiac}. Deletion of duplicated literals,
deletion of resolved literals, \begin{full}and\end{full} syntactic tautology deletion\begin{slim}, positive simplify-reflect, and negative simplify-reflect\end{slim}
adhere to our redundancy criterion\begin{full}, even when the involved clauses carry constraints\end{full}.
Semantic tautology deletion can be applied as
well, \begin{full}even on constrained clauses,\end{full} but we must use the entailment relation $\modelsolam$ under the encoding $\mapFonly$.
\begin{full}
Positive and negative simplify-reflect can be applied as well, even with constraints,
as long as the substitution makes each constraint of the unit clause true
or translates it into a constraint already present on the other clause.
\end{full}

Our analogue of clause subsumption is the following.
\begin{full}The subsumed clause can have constraints, but the subsuming clause cannot.\end{full}
\[\namedsimp{\infname{Subsumption}}
{C \quad C\sigma \llor D\slimfull{}{\>\constraint{S}}}
{C}\]
with the following side condition\slimfull{}{s}:
\begin{enumerate}[label=\arabic*.,ref=\arabic*]
  \item \label{cond:subsumption:sqsupset} $D \ne \bot$ or $C\sigma \slimfull{}{\constraint{S}} \sqsupset C$\slimfull{.}{;}
  \begin{full}
  \item \label{cond:subsumption:parameters} $C$ does not contain a variable occurring both inside and outside of parameters.
  \end{full}
\end{enumerate}

\begin{lem}
  \infname{Subsumption} can be justified by \slimfull{}{simple} clause redundancy.
\end{lem}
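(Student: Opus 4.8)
The plan is to show that the subsumed clause $C\sigma \llor D\constraint{S}$ is simply redundant \wrt{} the set $\{C\}$ of retained clauses (Definition~\ref{def:H:simple-redundancy}), which licenses its deletion. So I would fix an arbitrary ground instance $(C\sigma \llor D)\theta \in \gnd(C\sigma \llor D\constraint{S})$, where $\theta$ grounds all variables and makes $S\theta$ true; then $\sigma\theta$ is a grounding substitution for $C$ with $(C\sigma)\theta = C(\sigma\theta)$, so $C(\sigma\theta)$ is a ground instance of the clause $C \in \{C\}$.

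First I would dispose of the trust obligation, which is identical in both redundancy conditions and which is exactly where side condition~\ref{cond:subsumption:parameters} is needed. I claim the $\theta$-instance of $C\sigma \llor D\constraint{S}$ trusts the $(\sigma\theta)$-instance of $C$ (the latter carrying the empty constraint set). By Definition~\ref{def:H:trust}, I must check, for each variable $x$ of the subsuming clause $C$ --- which plays the role of ``$D$'' there --- that one of the two alternatives holds. If $x$ does not occur inside a parameter of $C$, then alternative~(ii) applies, since $C$ carries no constraints. If $x$ does occur inside a parameter of $C$, then by side condition~\ref{cond:subsumption:parameters} it cannot also occur outside of parameters in $C$, so $C$ has no literal containing $x$ outside of parameters and alternative~(i) is vacuously satisfied. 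Hence trust holds in every case.

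I would then split on whether $D = \bot$. If $D \ne \bot$, I would invoke condition~\ref{cond:red:entailed-by-smaller} with the single justifying instance $C(\sigma\theta)$: condition~\ref{cond:red:entailment} holds because $\mapF{C(\sigma\theta)}$ is, literal by literal, a sub-multiset of $\mapF{(C\sigma \llor D)\theta}$ (as $\mapFonly$ acts on each side of each literal individually), hence $\modelsfol$-entails it and a fortiori $\modelsolam$-entails it; condition~\ref{cond:red:order} holds because $D\theta$ is a nonempty multiset of literals, so $C(\sigma\theta)$ is a proper sub-multiset of $(C\sigma \llor D)\theta$ and therefore strictly smaller in the ground clause order, which by~\ref{cond:order:clause-extension} together with~\ref{cond:order:total} is the multiset extension of the ground term order; and condition~\ref{cond:red:trust} was settled above. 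If $D = \bot$, the ground instance under consideration is just $(C\sigma)\theta = C(\sigma\theta)$, so I would instead invoke condition~\ref{cond:red:subsumed} with $D\rho := C(\sigma\theta)$: condition~\ref{cond:red:subsumed:eq} is the identity just noted, condition~\ref{cond:red:subsumed:sqsupset} follows from side condition~\ref{cond:subsumption:sqsupset} since $D = \bot$ forces $C\sigma\constraint{S} \sqsupset C$, and condition~\ref{cond:red:subsumed:trust} was settled above.

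The argument involves no serious obstacle; the points that require care are the two notational clashes (the $C$, $D$ of Definitions~\ref{def:H:simple-redundancy} and~\ref{def:H:trust} versus those of the \infname{Subsumption} rule), the observation that sub-multiset inclusion of literals survives the encoding $\mapFonly$ and thus transfers both the entailment and the order comparison to first-order ground clauses, and the case split on $D = \bot$, which is genuinely needed because condition~\ref{cond:red:entailed-by-smaller} demands a \emph{strictly} smaller instance whereas condition~\ref{cond:red:subsumed} demands an \emph{equal} one.
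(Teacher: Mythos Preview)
Your proposal is correct and follows essentially the same approach as the paper: both split on whether $D = \bot$, invoking condition~\ref{cond:red:entailed-by-smaller} with the single justifying instance $C(\sigma\theta)$ when $D \ne \bot$ and condition~\ref{cond:red:subsumed} otherwise, and both discharge the trust obligation by using side condition~\ref{cond:subsumption:parameters} to ensure that each variable of $C$ falls under exactly one of the two trust alternatives. Your explicit remarks on the notational clash with Definitions~\ref{def:H:simple-redundancy} and~\ref{def:H:trust} and on why the case split is genuinely needed are helpful additions not present in the paper's terser proof.
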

\begin{slim}
\begin{proof}
Let $\theta$ be a grounding substitution.
If $D$ is nonempty, we apply condition~\ref{cond:H:RedC:entailment} of Definition~\ref{def:H:RedC},
which holds because
the clause $C\sigma\theta$ is a proper subclause of 
$(C\sigma \llor D)\theta$
and therefore
$\mapF{C\sigma\theta} \modelsolam \mapF{(C\sigma \llor D)\theta}$
and $C\sigma\theta \prec (C\sigma \llor D)\theta$.
If $D = \bot$, we apply condition~\ref{cond:H:RedC:subsumed} of Definition~\ref{def:H:RedC},
which holds by condition \ref{cond:subsumption:sqsupset} of $\infname{Subsumption}$.
\end{proof}
\end{slim}
\begin{full}
\begin{proof}
  Let $(C\sigma \llor D)\theta\in \gnd(C\sigma \llor D\constraint{S})$.

  If $D$ is nonempty, we apply condition~\ref{cond:red:entailed-by-smaller}
  of Definition~\ref{def:H:simple-redundancy},
  using $I = \{*\}$, $D_* = C$ and $\rho_* = \sigma\theta$.
  The clause $C\sigma\theta$ is a proper subclause of 
  $(C\sigma \llor D)\theta$
  and therefore
  $\mapF{C\sigma\theta} \modelsolam \mapF{(C\sigma \llor D)\theta}$
  (condition~\ref{cond:red:entailment})
  and $C\sigma\theta \prec (C\sigma \llor D)\theta$
  (condition~\ref{cond:red:order}).
  For condition~\ref{cond:red:trust},
  let $x$ be a variable in $C$.
  By condition~\ref{cond:subsumption:parameters} of \infname{Subsumption},
  $x$ occurs only inside parameters or only outside parameters.
  If it occurs only inside, we apply condition~\ref{cond:H:trust:corresponding-var} 
  of Definition~\ref{def:H:trust};
  if it occurs only outside, we apply condition~\ref{cond:H:trust:unconstrained} 
  of Definition~\ref{def:H:trust}.

  If $D$ is $\bot$, we apply condition~\ref{cond:red:subsumed} of Definition~\ref{def:H:simple-redundancy},
  using $C$ for $D$ and $\sigma\theta$ for $\rho$.
  Condition~\ref{cond:red:subsumed:eq}
  clearly holds.
  Condition~\ref{cond:red:subsumed:sqsupset} holds by condition~\ref{cond:subsumption:sqsupset}
  of \infname{Subsumption}.
  Condition~\ref{cond:red:subsumed:trust} follows from condition~\ref{cond:subsumption:parameters}
  of \infname{Subsumption} as above.
\end{proof}
\end{full}

For rewriting of positive and negative literals (demodulation) and equality subsumption,
we need to establish the following properties of orange subterms first:

\begin{lem}\label{lem:orange-green-chain}
Let $\benf{}$ be a $\beta\eta$-normalizer.
An orange subterm relation $\orangesubterm{u}{s}_p$ \wrt{} $\benf{}$ can be disassembled into a
sequence $s_1 \dots s_k$ as follows:
$s_1$ is a green subterm of $u$;
$s_k = s$; and
for each $i<k$,
$s_i = \lambda\>s'_i$
and $s_{i+1}$ is a green subterm of $s'_i$.
\end{lem}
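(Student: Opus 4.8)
The statement asks us to decompose an orange subterm relation into a chain of green subterms interleaved with single $\lambda$-abstractions. The plan is to induct on the derivation of the orange subterm relation $\orangesubterm{u}{s}_p$ using the four inductive rules from Definition~\ref{def:orange-subterms} (applied to the $\beta\eta$-normal form $\benf{u}$, since orange subterms of preterms are defined via the normalizer). Since green subterms and green positions do not depend on the choice of $\benf{}$, it suffices to reason on the $\lambda$-preterm level with $\benf{u}$ fixed.

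\textbf{Base case.} If $p = \varepsilon$, then $s = u$, and we take $k = 1$ with $s_1 = s = u$; every term is a green subterm of itself, so this is trivially a valid chain.

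\textbf{Inductive step.} Suppose the orange subterm relation is derived by one of the rules 2--4 applied at the head, reducing to $v|_{p'} = w$ for a strictly smaller derivation, where $v$ is an argument (rules 2, 3) or the body under a $\lambda$ (rule 4). In cases 2 and 3, the outer term is $\cst{f}\typeargs{\tuple\tau}(\bar s)\>\bar u_n$ or $\DB{m}\>\bar u_n$, which is nonfunctional (it cannot be $\eta$-reduced at the head and is already in $\beta$-normal form, so its type is the return type of the head), hence an argument $u_i$ that is an orange subterm is in fact a \emph{green} subterm position as far as the outermost step is concerned. Apply the induction hypothesis to $u_i|_{p'} = s$ to get a chain $s_1 \dots s_k$ with $s_1$ a green subterm of $u_i$; since a green subterm of an argument of a nonfunctional head is a green subterm of the whole term (by the alternative characterization of green subterms stated right after Definition~\ref{def:green-subterms}), $s_1$ is a green subterm of $u$, and the rest of the chain is unchanged. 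In case 4, the outer term is $\lambda\langle\tau\rangle\>u'$ and we have $u'|_{p'} = s$. Here $\lambda\langle\tau\rangle\>u'$ is functional, so it is \emph{not} a green subterm chain extension in the same way. Instead, apply the induction hypothesis to $u'|_{p'} = s$ to obtain a chain $s_1' \dots s_{k'}'$ with $s_1'$ a green subterm of $u'$, $s_{k'}' = s$, and the $\lambda$-linking property internally. Now prepend $u$ itself: set $s_1 = u = \lambda\langle\tau\rangle\>u'$, and $s_{i+1} = s_i'$ for $1 \le i \le k'$. Then $s_1$ is a green subterm of $u$ (namely $u$ itself), $s_1 = \lambda\>u'$ with $u'$ containing $s_1' = s_2$ as a green subterm, and the remaining links are exactly those supplied by the induction hypothesis. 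One should also consolidate: if $s_2$ in the new chain already has the form $\lambda\>(\cdot)$ one might be tempted to merge, but the statement does not require the $s_i$ to be maximal between $\lambda$'s, so no merging is needed --- though one could merge consecutive non-$\lambda$ green-subterm steps if $s_1$ were itself inside a nonfunctional context, which does not arise here since $s_1 = u$.

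\textbf{Main obstacle.} The delicate point is the interaction with the $\beta\eta$-normalizer in the $\eta$-short case: the inductive rules of Definition~\ref{def:orange-subterms} are stated on $\lambda$-preterms, and the lift to preterms goes through $\benf{}$. I expect the cleanest route is to prove the lemma purely for $\lambda$-preterms in $\beta\eta$-normal form (either $\eta$-long or $\eta$-short), observing that in such a form the case analysis on the head of $v$ in rules 2 and 3 genuinely forces nonfunctionality (so the green-subterm characterization applies), and then transfer to preterms by the definition $t|_p = s \iff (\benf{t})|_p = \benf{s}$ together with the already-established fact that green subterms and positions are $\benf{}$-independent. A secondary subtlety is making sure that each intermediate $s_i$ for $i < k$ is \emph{exactly} of the form $\lambda\>s_i'$ and not, say, an application whose head is a $\lambda$ --- but since we only ever introduce a new $s_i = \lambda\langle\tau\rangle\>u'$ when traversing rule 4, and in $\beta$-normal form the body of a $\lambda$ is again in $\beta$-normal form, this form is preserved. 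I would close the proof with \qedhere inside the last inductive case.
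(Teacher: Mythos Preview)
Your induction on the derivation works for $\eta$-long form but has a gap for $\eta$-short. In rules~2 and~3 the outer term $\cst{f}\typeargs{\tuple\tau}(\bar s)\>\bar u_n$ or $\DB{m}\>\bar u_n$ need \emph{not} be nonfunctional in $\eta$-short normal form: if $\cst{f} : \iota \to \iota \to \iota$, then $\cst{f}\>\cst{a}$ is in $\beta\eta$-short form, is functional, and rule~2 still yields $\cst{a}$ as an orange subterm at position~$1$. That position is not green (the subterm at the proper prefix $\varepsilon$ is functional), so your ``green-subterm extension'' step fails there. Your parenthetical justification (``its type is the return type of the head'') is true but does not imply nonfunctionality---the head can be under-applied. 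The transfer you propose via the $\benf{}$-independence of green positions does not close the gap either, because the \emph{hypothesis} of the lemma is an orange position, and orange positions do depend on $\benf{}$.

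The paper handles exactly this situation by inducting on the size of $u$ in $\eta$-long form rather than on the derivation, and by scanning for the \emph{shortest} prefix $q$ of $p$ at which the orange subterm $s_1$ is functional. By minimality of $q$, all strictly shorter prefixes identify nonfunctional subterms, so $q$ is green and $s_1$ is a green subterm of $u$. Since $s_1$ is a functional preterm, it equals $\lambda\>s'_1$ \emph{modulo $\eta$-conversion}---legitimate because preterms are $\beta\eta$-equivalence classes---even when $\benf{s_1}$ is not syntactically a $\lambda$-expression (the $\eta$-short case). A short argument then locates $s$ as an orange subterm of $s'_1$ at a residual position no longer than the tail $r$ of $p$, enabling the induction hypothesis. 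Your rule-4 ``prepend $u$ itself'' idea is the right move, but it must also fire for rules~2 and~3 whenever the outer term is functional; the paper's shortest-functional-prefix decomposition achieves precisely that in a single uniform step.
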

\begin{proof}
By induction on the size of $u$ in $\eta$-long form.

If each orange subterm at a proper prefix of $p$ is nonfunctional,
then $p$ is green, and we are done with $k = 1$ and $s_1 = s$.

Otherwise, let $p = q.r$ such that $q$ is the shortest prefix with nonempty $r$,
where the orange subterm $s_1$ at $q$ is functional.
Then $s_1$ is a green subterm of $u$ at $q$ because
there does not exist a shorter prefix with a functional orange subterm.
Moreover, since $s_1$ is functional, modulo $\eta$-conversion,
$s_1 = \lambda\>s'_1$ for some $s'_1$.
Since $r$ is nonempty and $s$ is the orange subterm of $s_1$ at $r$,
there exists $r'$ at most as long as $r$ such that $s$ is the orange subterm of $s'_1$ at $r'$.
Specifically, if $\benf{s_1}$ is a $\lambda$-abstraction,
we use $1.r' = r$ and otherwise $r' = r$.
By the induction hypothesis, since $s$ is an orange subterm of $s'_1$,
there exist $s_2, \dots, s_k$ with $s_k = s$
such that 
$s_i = \lambda\>s'_i$
and $s_{i+1}$ is a green subterm of $s'_i$
for each $i < k$.
\end{proof}

\begin{lem}\label{lem:orange-ext}
Let $\benf{}$ be a $\beta\eta$-normalizer.
Let $u$ be a ground term, and let $p$ be an orange position of $u$ \wrt\ $\benf{}$.
Let $v, v'$ be ground preterms such that
$\orangesubterm{u}{v}_p$ and $\orangesubterm{u}{v'}_p$ are terms.
Let $k$ be a number large enough such that
$\mapdb{v}{t}{k}$ and $\mapdb{v'}{t}{k}$
do not contain free De Bruijn indices for all tuples of terms $\tuple{t}_k$.
Then
\begin{gather*}
\{\mapF{\mapdb{v}{t}{k} \ceq \mapdb{v'}{t}{k}}
\mid \text{\upshape each } t_i \text{\upshape\ of the form } \diff\typeargs{\_,\_}(\_,\_)\}
\\\modelsolam 
\mapF{\orangesubterm{u}{v}_p \ceq \orangesubterm{u}{v'}_p}
\end{gather*}
\end{lem}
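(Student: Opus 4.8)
The plan is to fix a $\modelsolam$-model $\III$ of the left-hand set and to show $\III\modelsfol\mapF{\orangesubterm{u}{v}_p}\ceq\mapF{\orangesubterm{u}{v'}_p}$; since the right-hand side is a single ground equation, this is all that is required. By Lemma~\ref{lem:orange-green-chain} the orange context $u[\square]$—the same whether filled with $v$ or with $v'$—decomposes, modulo $\eta$, into a chain of green contexts separated by binders: $\orangesubterm{u}{v}_p=\greensubterm{g_1}{\lambda\>\greensubterm{g_2}{\lambda\>\cdots\lambda\>\greensubterm{g_\ell}{v}\cdots}}$, and likewise for $v'$, where $g_1,\dots,g_\ell$ are green contexts and the hole of $g_\ell$ lies under the $\ell-1$ displayed binders; in particular $v$ and $v'$ have free De~Bruijn indices only among $\DB{0},\dots,\DB{(\ell-2)}$. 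I argue by induction on $\ell$. Apart from the three closure properties of $\modelsolam$, the only fact I use about the encoding is that $\mapFonly$ commutes with ground green contexts, $\mapF{\greensubterm{g}{s}}=\greensubterm{\mapF{g}}{\mapF{s}}$ for the first-order notion of green context; this follows by induction on $g$ from Definition~\ref{def:fol}, since each descent through a ground green context passes through a nonfunctional, symbol-headed term into an argument, which $\mapFonly$ turns into a first-order argument—itself a first-order green subterm. As first-order equality is a congruence, $\III\modelsfol\mapF{s}\ceq\mapF{s'}$ then implies $\III\modelsfol\mapF{\greensubterm{g}{s}}\ceq\mapF{\greensubterm{g}{s'}}$. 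In the base case $\ell=1$, the position $p$ is green, so $v,v'$ are green subterms of the ground term $u$ and hence ground; then $\mapdb{v}{t}{k}=v$ and $\mapdb{v'}{t}{k}=v'$, the left-hand set collapses to $\{\mapF{v}\ceq\mapF{v'}\}$, and commutation with $g_1$ gives the claim.

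For the inductive step $\ell>1$, write $\orangesubterm{u}{v}_p=\greensubterm{g_1}{\lambda\>c[v]}$ with $c[\square]=\greensubterm{g_2}{\lambda\>\cdots\lambda\>\greensubterm{g_\ell}{\square}\cdots}$ an orange context of chain length $\ell-1$, and similarly for $v'$. By commutation with $g_1$ it suffices to show $\III\modelsfol\mapF{\lambda\>c[v]}\ceq\mapF{\lambda\>c[v']}$. Both sides are functional of some ground type $\tau\fun\upsilon$, so by the ``extensionality \wrt\ $\diff$'' property of $\III$ it suffices in turn to show $\III\modelsfol\mapF{(\lambda\>c[v])\>d}\ceq\mapF{(\lambda\>c[v'])\>d}$ for every ground $d=\diff\typeargs{\tau,\upsilon}(s,t)$. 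Since $d$ is closed and rigid-headed ($\diff$ applied to parameters, which contain no free De~Bruijn indices) and the green context $g_2$ heading $c$ carries no binder on the path to its hole, the $\beta$-reduct $\bnf{(\lambda\>c[v])\>d}=(c[v])\dbsubst{d}$ is again of the form $\greensubterm{g_2'}{\lambda\>\cdots\lambda\>\greensubterm{g_\ell'}{\hat v}\cdots}$, where $g_2',\dots,g_\ell'$ are green contexts (the substitution preserves types and rigid heads along the green paths) and $\hat v$ is the corresponding instance of $v$; similarly for $v'$, with instance $\hat v'$. Hence $\mapF{(\lambda\>c[v])\>d}=\mapF{\orangesubterm{\hat u}{\hat v}}$ and $\mapF{(\lambda\>c[v'])\>d}=\mapF{\orangesubterm{\hat u}{\hat v'}}$ for a common ground $\hat u$ of chain length $\ell-1$.

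The induction hypothesis now applies to this configuration, with the same bound $k$, which remains large enough since $\hat v,\hat v'$ have at most the free De~Bruijn indices of $v,v'$. Its left-hand set consists of the equations $\mapF{\hat v\{\cdots\}\ceq\hat v'\{\cdots\}}$ obtained by substituting length-$k$ tuples of $\diff$-terms into $\hat v,\hat v'$, and a routine reindexing—splicing the $\diff$-term $d$ into the tuple at the position of the free index $\DB{(\ell-2)}$ of $v$—shows each of these equations to be one of the equations in the original left-hand set; hence $\III$ models it. The induction hypothesis therefore yields $\III\modelsfol\mapF{(\lambda\>c[v])\>d}\ceq\mapF{(\lambda\>c[v'])\>d}$; as this holds for all admissible $d$, the ``extensionality \wrt\ $\diff$'' property gives $\III\modelsfol\mapF{\lambda\>c[v]}\ceq\mapF{\lambda\>c[v']}$, and commutation with $g_1$ completes the step.

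The step I expect to be the main obstacle is this reindexing in the inductive case: confirming that $\bnf{(\lambda\>c[v])\>d}$ genuinely is again an orange-subterm relation of chain length $\ell-1$—in particular that the leading green context survives the substitution $\dbsubst{d}$ precisely because $d$ is closed and rigid-headed, and that parameters are unaffected as they contain no free De~Bruijn indices—and that the free De~Bruijn indices of $v$ align so that, after splicing in $d$, the sub-configuration's hypothesis set becomes a subset of the original one. Everything else is the congruence of first-order equality together with one application of the ``extensionality \wrt\ $\diff$'' closure property of $\modelsolam$ per level of the chain; the ``argument congruence \wrt\ $\diff$'' property is not needed here.
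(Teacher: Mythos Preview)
Your proof is correct and uses the same decomposition (Lemma~\ref{lem:orange-green-chain}) and the same two ingredients---congruence of first-order equality through green contexts and the extensionality property of $\modelsolam$ through each $\lambda$---as the paper; the only difference is that you peel the chain outside-in by a formal induction on $\ell$, whereas the paper iterates inside-out from $v$ toward $u$, absorbing one De~Bruijn index per extensionality step. Both directions hinge on the same observation, which you correctly flag as the crux: the substituted $\diff$-terms are rigid-headed and closed, so they neither trigger $\beta$-redexes nor disturb the green structure of the intermediate contexts.
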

\begin{proof}
Let $\III$ be a $\modelsolam$-interpretation with 
\[\III\modelsolam \mapF{\mapdb{v}{t}{k}\ceq \mapdb{v'}{t}{k}}\]
for all tuples of terms $\tuple{t}_k$, where each $t_i$ is of the form $\diff\typeargs{\_,\_}(\_,\_)$
for arbitrary values of `$\_$'.
By Lemma~\ref{lem:orange-green-chain},
we have $\orangesubterm{u}{v}_p
= \greensubterm{u}{\lambda\>\greensubterm{w_1}{\lambda\>\greensubterm{w_2}{\cdots\greensubterm{w_n}{v}\cdots}}}$.

\medskip\noindent
\textsc{Step 1.}\enskip
Since $v$ is a green subterm of $\greensubterm{w_n}{v}$
and the terms $\tuple{t}_k$ have a form that does not trigger $\beta$-reductions
when substituting them for De Bruijn indices,
$\mapdb{v}{t}{k}$ is a green subterm of $\mapdb{\greensubterm{w_n}{v}}{t}{k}$
and thus
\[\III\modelsolam  \mapF{\mapdb{\greensubterm{w_n}{v}}{t}{k}\ceq \mapdb{\greensubterm{w_n}{v'}}{t}{k}}\]

\medskip\noindent
\textsc{Step 2.}\enskip
Using the property of extensionality \wrt\ $\diff$ of $\modelsolam$-interpretations
and using the fact that we have shown the above for all $t_1$ of the form $\diff\typeargs{\_,\_}(\_,\_)$,
we obtain
\[\III \modelsolam \mapF{\mapdbl{(\lambda\>\greensubterm{w_n}{v})}{t}{2}{k}{(k-2)}\ceq \mapdbl{(\lambda\>\greensubterm{w_n}{v'})}{t}{2}{k}{(k-2)}}\]
Iterating steps 1 and 2 over $w_n, \dots, w_1, u$, we obtain
\[\III \modelsolam \mapF{\orangesubterm{u}{v}_p \ceq \orangesubterm{u}{v'}_p}\qedhere\]
\end{proof}

Our variant of rewriting of positive and negative literals (demodulation)
is the following.
\begin{full}The rewritten clause can have constraints, but the rewriting clause cannot.\end{full}
\[
\namedsimp{\infname{Demod}}
{t\eq t' \quad \orangesubterm{C}{v} \slimfull{}{\constraint{S}}}
{t\eq t' \quad \orangesubterm{C}{v'} \slimfull{}{\constraint{S}}}
\]
with the following side conditions:
\begin{enumerate}[label=\arabic*.,ref=\arabic*]
  \item\label{cond:demod:mapdb} $t\sigma = \mapdb{v}{x}{k}$ and $t'\sigma = \mapdb{v'}{x}{k}$ for some fresh variables $\tuple{x}_k$ and a substitution $\sigma$.
  \item\label{cond:demod:term-order} $\orangesubterm{C}{v}\slimfull{}{\constraint{S}} \succ \orangesubterm{C}{v'}\slimfull{}{\constraint{S}}$;
  \item\label{cond:demod:clause-order} for each tuple $\tuple{t}_k$, where each $t_i$ is of the form $\diff\typeargs{\_,\_}(\_,\_)$,
  we have $\orangesubterm{C}{v} \succ \mapdb{v}{t}{k} \ceq \mapdb{v'}{t}{k}$;
  \begin{full}\item\label{cond:demod:params} $t \eq t'$ does not contain a variable occurring both inside and outside of parameters.\end{full}
\end{enumerate}
\begin{rem}\label{rem:demod:clause-order}
In general, it is unclear how to compute condition~\ref{cond:demod:clause-order} of \infname{Demod}.
For $\lambda$KBO and $\lambda$LPO described in Section~\ref{ssec:concrete-term-orders},
however, the condition can easily be overapproximated by
$\orangesubterm{C}{v} \succ v \ceq v'$,
using the fact that the orders are also defined on preterms.

To prove that this is a valid overapproximation, it suffices to show the following:
Let $u$ and $s$ be preterms with $u \succ s$ (resp. $u \succsim s$). Let $s'$ be the result of replacing some De Bruijn indices in $s$ by
terms of the form $\diff\typeargs{\_,\_}(\_,\_)$. Then  $u \succ s'$ (resp. $u \succsim s'$).

\medskip

\noindent
\textsc{Proof for $\lambda$KBO:}\enskip
By induction on the rule deriving $u \succ s$ or $u \succsim s$.
Since we assume in Section~\ref{ssec:concrete-term-orders} that
$\cal w_\db \ge \cal w(\cst{diff})$ and $\cal k(\cst{diff}, i) = 1$ for every $i$,
we have $\cal W(s) \geq \cal W(s')$.
It is easy to check that there is always a corresponding rule deriving $u \succ s'$ or $u \succsim s'$,
in some cases using the induction hypothesis.

\medskip

\noindent
\textsc{Proof for $\lambda$LPO:}\enskip
By induction on the rule deriving $u \succ s$ or $u \succsim s$.
Considering that we assume in Section~\ref{ssec:concrete-term-orders} that
$\cst{ws} \geq \cst{diff}$,
it is easy to check that there is always a corresponding rule deriving $u \succ s'$ or $u \succsim s'$,
in some cases using the induction hypothesis.
\end{rem}

Since \infname{Demod} makes use of orange subterms,
it depends on the choice of $\beta\eta$-normalizer.
Both $\downarrow_{\beta\eta\mathrm{long}}$ and $\downarrow_{\beta\eta\mathrm{short}}$
yield a valid simplification rule:

\begin{lem}\label{lem:demod}
  \infname{Demod} can be justified by \slimfull{}{simple} clause redundancy,
  regardless of the choice of $\beta\eta$-normalizer.
\end{lem}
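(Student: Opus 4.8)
The plan is to show, as in the \infname{Subsumption} lemma, that the deleted premise $\orangesubterm{C}{v}\constraint{S}$ is simply redundant (Definition~\ref{def:H:simple-redundancy}) \wrt{} any clause set containing the two conclusions $t \eq t'$ and $\orangesubterm{C}{v'}\constraint{S}$. So I would fix a ground instance $\orangesubterm{C}{v}\theta$ with $S\theta$ true and verify condition~\ref{cond:red:entailed-by-smaller} of Definition~\ref{def:H:simple-redundancy}. Write $C = C'' \llor (\orangesubterm{s}{v} \doteq b)$, so that the marked orange occurrence of $v$ sits, at orange position $p$, inside one side $\orangesubterm{s}{v}$ of one literal; then $\orangesubterm{C}{v'}$ is $C'' \llor (\orangesubterm{s}{v'} \doteq b)$. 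Let $\sigma$ be the substitution of condition~\ref{cond:demod:mapdb} of \infname{Demod}, so $t\sigma = \mapdb{v}{x}{k}$ and $t'\sigma = \mapdb{v'}{x}{k}$ for fresh $\tuple{x}_k$. As the witness family $I$ I would take the instance $\orangesubterm{C}{v'}\theta$ of $\orangesubterm{C}{v'}\constraint{S}$ (legitimate since $S\theta$ is true) together with, for every tuple $\tuple{r}_k$ of ground terms each of the form $\diff\typeargs{\_,\_}(\_,\_)$, the ground instance $(t \eq t')(\sigma\theta[\tuple{x} \mapsto \tuple{r}])$ of $t \eq t'$. Because a $\diff$-headed term is rigid, substituting it for a De Bruijn index produces no $\beta$-redex, so that last instance $\beta\eta$-normalizes to $\mapdb{(\benf{v\theta})}{r}{k} \eq \mapdb{(\benf{v'\theta})}{r}{k}$, i.e.\ to $(\mapdb{v}{r}{k} \eq \mapdb{v'}{r}{k})\theta$.

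For the entailment (condition~\ref{cond:red:entailment}), the key step is Lemma~\ref{lem:orange-ext}. First I would argue that since the spine of $\orangesubterm{s}{v}$ from its root to the hole runs only through symbol arguments, De Bruijn arguments, and $\lambda$-bodies, applying $\theta$ and renormalizing keeps $\benf{v\theta}$ an orange subterm of $u := \benf{(\orangesubterm{s}{v})\theta}$ \wrt{} $\benf{}$, at some orange position $p'$, with $\orangesubterm{u}{\benf{v'\theta}}_{p'} = \benf{(\orangesubterm{s}{v'})\theta}$ (a genuine term, since the \infname{Demod} conclusion is a valid locally closed clause). Lemma~\ref{lem:orange-ext} applied to $u$, $p'$, $\benf{v\theta}$, $\benf{v'\theta}$ then says that the $\mapFonly$-images of the equations $\mapdb{(\benf{v\theta})}{r}{k} \eq \mapdb{(\benf{v'\theta})}{r}{k}$, over all $\diff$-headed $\tuple{r}_k$, already $\modelsolam$-entail $\mapF{(\orangesubterm{s}{v})\theta} \eq \mapF{(\orangesubterm{s}{v'})\theta}$. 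Since $\mapFonly$ acts on each literal side individually and is a bijection (Lemma~\ref{lem:fol-bijection}), chaining this first-order equation with $\mapF{\orangesubterm{C}{v'}\theta}$ by transitivity of equality — once for a positive literal $\orangesubterm{s}{v} \ceq b$, once for a negative one — produces $\mapF{\orangesubterm{C}{v}\theta}$, which is condition~\ref{cond:red:entailment}.

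For condition~\ref{cond:red:order}: $\orangesubterm{C}{v'}\theta \prec \orangesubterm{C}{v}\theta$ holds by condition~\ref{cond:demod:term-order} of \infname{Demod} and stability~\ref{cond:order:stability-clauses} (its hypothesis that $S\theta$ is true being satisfied), and each $(t \eq t')(\sigma\theta[\tuple{x}\mapsto\tuple{r}])$, whose normal form is $(\mapdb{v}{r}{k} \eq \mapdb{v'}{r}{k})\theta$, lies $\prec \orangesubterm{C}{v}\theta$ by condition~\ref{cond:demod:clause-order} of \infname{Demod} instantiated at $\tuple{r}_k$ and then stability~\ref{cond:order:stability-clauses}. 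For condition~\ref{cond:red:trust}: for a variable $x$ of $\orangesubterm{C}{v'}$ occurring outside parameters in a literal $L$, I would invoke condition~\ref{cond:H:trust:corresponding-var} of Definition~\ref{def:H:trust} with the identity substitution and $K$ the literal of $\orangesubterm{C}{v}$ matching $L$ ($L$ itself if $L \in C''$, else $\orangesubterm{s}{v} \doteq b$), using that condition~\ref{cond:demod:term-order} of \infname{Demod}, via cancellation in the multiset clause extension (licensed by transitivity~\ref{cond:order:transitive} of the literal order), forces $(\orangesubterm{s}{v'} \doteq b) \prec (\orangesubterm{s}{v} \doteq b)$; for a variable $x$ of $t \eq t'$, condition~\ref{cond:demod:params} of \infname{Demod} prevents $x$ from occurring outside parameters once it occurs inside one, so condition~\ref{cond:H:trust:corresponding-var} of Definition~\ref{def:H:trust} is vacuous in that case and condition~\ref{cond:H:trust:unconstrained} applies otherwise ($t \eq t'$ carrying no constraints). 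Since $\benf{}$ enters only through Lemmas~\ref{lem:orange-green-chain} and~\ref{lem:orange-ext}, which hold for any normalizer, the argument is insensitive to the choice between $\downarrow_{\beta\eta\mathrm{long}}$ and $\downarrow_{\beta\eta\mathrm{short}}$. The hard part, I expect, will be the bookkeeping that makes Lemma~\ref{lem:orange-ext} literally applicable — tracking the orange occurrence of $v$ through $\theta$ and $\beta\eta$-normalization, and arranging the auxiliary substitution so that instances of $t \eq t'$ reproduce exactly the redex-free equations that lemma consumes; the multiset-cancellation step in the trust argument is a smaller nuisance.
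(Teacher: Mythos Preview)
Your proposal is correct and follows essentially the same route as the paper's own proof: apply condition~\ref{cond:red:entailed-by-smaller} of simple redundancy with the witness family consisting of $\orangesubterm{C}{v'}\constraint{S}$ instantiated by~$\theta$ together with all $\diff$-indexed instances of $t\eq t'$, invoke Lemma~\ref{lem:orange-ext} for the entailment, conditions~\ref{cond:demod:term-order} and~\ref{cond:demod:clause-order} of \infname{Demod} plus stability for the order, and condition~\ref{cond:demod:params} together with the identity substitution for trust. You are slightly more explicit than the paper in two places---tracking the orange position through~$\theta$ and $\beta\eta$-normalization, and justifying $L\preceq K$ via multiset cancellation---but these are elaborations of the same argument, not a different one.
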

\begin{slim}
\begin{proof}
Let $\theta$ be a grounding substitutiion.
We apply condition~\ref{cond:H:RedC:entailment} of Definition~\ref{def:H:RedC},
using $\orangesubterm{C}{v}$ for $C$.
Let $T = \{\tuple{t}_k \mid \text{each } t_i \text{ is a ground term of the form } \diff\typeargs{\_,\_}(\_,\_)\}$
and let
$\rho_{\tuple{t}_k} = \sigma\{\tuple{x}_k \mapsto \tuple{t}_k\}\theta$
for each $\tuple{t}_k \in T$.
By condition~\ref{cond:demod:mapdb} of $\infname{Demod}$,
\begin{align*}
(t \ceq t')\rho_{\tuple{t}_k} &= \mapdb{v}{t}{k}\theta \ceq \mapdb{v'}{t}{k}\theta\\
&= \mapdb{v\theta}{t}{k} \ceq \mapdb{v'\theta}{t}{k}
\end{align*}
for each tuple $\tuple{t}_k\in T$.

By Lemma~\ref{lem:orange-ext},
$\mapF{\{(t \ceq t')\rho_{\tuple{t}_k}\mid \tuple{t}_k\in T\}} \modelsolam \mapF{\orangesubterm{u\theta}{v\theta} \ceq \orangesubterm{u\theta}{v'\theta}}$,
where $u$ is a side of a literal in $\orangesubterm{C}{v}$ containing the orange subterm $v$.
Thus
\[\mapF{\{(t \ceq t')\rho_{\tuple{t}_k}\mid \tuple{t}_k\in T\}} \cup \{\mapF{\orangesubterm{C}{v'}\theta}\} \modelsolam \mapF{\orangesubterm{C}{v}\theta}\]
Condition \ref{cond:demod:term-order} of \infname{Demod} 
implies $\orangesubterm{C}{v'}\theta \prec \orangesubterm{C}{v}\theta$.
Condition~\ref{cond:demod:clause-order} of \infname{Demod} implies
$(t \ceq t') \rho_{\tuple{t}_k} \prec \orangesubterm{C}{v} \theta$ for all $\tuple{t}_k\in T$.
Thus condition~\ref{cond:H:RedC:entailment} of Definition~\ref{def:H:RedC} applies.
\end{proof}
\end{slim}%
\begin{full}
\begin{proof}
We apply condition~\ref{cond:red:entailed-by-smaller} of Definition~\ref{def:H:simple-redundancy},
using $\orangesubterm{C}{v}\constraint{S}$ for $C\constraint{S}$.
Let $\orangesubterm{C}{v} \theta \in \gnd(\orangesubterm{C}{v}\constraint S)$.
Let $*$ be a placeholder we use to extend a set of terms by an additional element.
Then we set
\begin{align*}
I &= \{\tuple{t}_k \mid \text{each } t_i \text{ is a ground term of the form } \diff\typeargs{\_,\_}(\_,\_)\} \cup \{*\} \\
D_{\tuple{t}_k}\constraint{T_{\tuple{t}_k}} &= t \eq t'\\
\rho_{\tuple{t}_k} &= \sigma\{\tuple{x}_k \mapsto \tuple{t}_k\}\theta\\
D_*\constraint{T_*} &= \orangesubterm{C}{v'}\constraint{S}\\
\rho_* &= \theta
\end{align*}
By condition~\ref{cond:demod:mapdb} of $\infname{Demod}$,
\begin{align*}
D_{\tuple{t}_k}\rho_{\tuple{t}_k} &= \mapdb{v}{t}{k}\theta \ceq \mapdb{v'}{t}{k}\theta\\
&= \mapdb{v\theta}{t}{k} \ceq \mapdb{v'\theta}{t}{k}
\end{align*}
for each tuple $\tuple{t}_k$, where each $t_i$ is of the form
$\diff\typeargs{\_,\_}(\_,\_)$.

By Lemma~\ref{lem:orange-ext},
$\mapF{\{D_i\rho_i\mid i\in I\setminus\{*\}\}} \modelsolam \mapF{\orangesubterm{u\theta}{v\theta} \ceq \orangesubterm{u\theta}{v'\theta}}$,
where $u$ is a side of a literal in $\orangesubterm{C}{v}$ containing the orange subterm $v$.
Thus
$\mapF{\{D_i\rho_i\mid i\in I\}} \modelsolam \mapF{\orangesubterm{C}{v}\theta}$
(condition~\ref{cond:red:entailment} of simple redundancy).
Condition \ref{cond:demod:term-order} and \ref{cond:demod:clause-order} of \infname{Demod} imply
$D_i \rho_i \prec \orangesubterm{C}{v} \theta$ for all $i\in I$
(condition~\ref{cond:red:order} of simple redundancy).

For condition \ref{cond:red:trust} of simple redundancy, 
consider first a variable in $D_{\tuple{t}_k} = t \eq t'$.
By condition~\ref{cond:demod:params} of \infname{Demod}, 
either condition \ref{cond:H:trust:unconstrained} or (if the variable occurs only inside parameters) \ref{cond:H:trust:corresponding-var} 
of trust is fulfilled.
Second, consider a variable $x$ in $\orangesubterm{C}{v'}$.
Then we apply condition \ref{cond:H:trust:corresponding-var} of trust.
For every literal $L \in \orangesubterm{C}{v'}$ that contains $x$ outside of parameters,
we use the corresponding literal $K \in \orangesubterm{C}{v}$
and the identity substitution for the $\sigma$ of condition \ref{cond:H:trust:corresponding-var}.
By condition~\ref{cond:demod:term-order},
$L \preceq K$.
\end{proof}
\end{full}

Our variant of equality subsumption
is the following:
\[
\namedsimp{\infname{EqualitySubsumption}}
{t\eq t' \quad C' \llor \orangesubterm{s}{v} \ceq \orangesubterm{s'}{v'} \slimfull{}{\constraint{S}}} 
{t\eq t'}
\]
with the following side conditions:
\begin{enumerate}[label=\arabic*.,ref=\arabic*]
  \item $t\sigma = \mapdb{v}{x}{k}$ and $t'\sigma = \mapdb{v'}{x}{k}$ for some fresh variables $\tuple{x}_k$ and a substitution $\sigma$;
  \item\label{cond:eqsubsump:clause-order} for each tuple $\tuple{t}$, where each $t_i$ is of the form $\diff\typeargs{\_,\_}(\_,\_)$,
  we have $\orangesubterm{C}{v} \succ \mapdb{v}{t}{k} \ceq \mapdb{v'}{t}{k}$;
  \begin{full}
  \item\label{cond:eqsubsump:params} $t \eq t'$ does not contain a variable occurring both inside and outside of parameters.
  \end{full}
\end{enumerate}
To compute condition \ref{cond:eqsubsump:clause-order}, we can exploit Remark~\ref{rem:demod:clause-order}.
\begin{lem}
  \infname{EqualitySubsumption} can be justified by simple clause redundancy,
  regardless of the choice of $\beta\eta$-normalizer.
\end{lem}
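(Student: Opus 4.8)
The plan is to reuse, with minor changes, the argument for \infname{Demod} in Lemma~\ref{lem:demod}; the present lemma is in fact easier, because the conclusion of \infname{EqualitySubsumption} retains only the unit clause $t \eq t'$, so there is no analogue of the ``rewritten'' premise to carry along. Write $C = C' \llor \orangesubterm{s}{v} \ceq \orangesubterm{s'}{v'}$, where the two sides of the displayed literal share a common orange context with a hole at some orange position $p$. Fix a ground instance $C\theta \in \gnd(C\constraint{S})$. I would verify condition~\ref{cond:red:entailed-by-smaller} of Definition~\ref{def:H:simple-redundancy} by taking $I$ to be the set of all tuples $\tuple{t}_k$ of ground terms of the form $\diff\typeargs{\_,\_}(\_,\_)$, setting each $D_{\tuple{t}_k}\constraint{T_{\tuple{t}_k}} = t \eq t'$ and $\rho_{\tuple{t}_k} = \sigma\{\tuple{x}_k \mapsto \tuple{t}_k\}\theta$, exactly as in the \infname{Demod} proof but without the extra placeholder element.

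For condition~\ref{cond:red:entailment}, the first side condition of \infname{EqualitySubsumption} gives $D_{\tuple{t}_k}\rho_{\tuple{t}_k} = (\mapdb{v}{t}{k} \ceq \mapdb{v'}{t}{k})\theta = \mapdb{v\theta}{t}{k} \ceq \mapdb{v'\theta}{t}{k}$, so Lemma~\ref{lem:orange-ext}, applied to the ground context of the literal $\orangesubterm{s}{v} \ceq \orangesubterm{s'}{v'}$ after instantiation by $\theta$, yields $\mapF{\{D_i\rho_i \mid i \in I\}} \modelsolam \mapF{\orangesubterm{s\theta}{v\theta} \ceq \orangesubterm{s'\theta}{v'\theta}}$; since that equation is one of the literals of $C\theta$, a fortiori $\mapF{\{D_i\rho_i \mid i \in I\}} \modelsolam \mapF{C\theta}$, and the choice of $\beta\eta$-normalizer is irrelevant here because Lemma~\ref{lem:orange-ext} does not depend on it. For condition~\ref{cond:red:order}, side condition~\ref{cond:eqsubsump:clause-order} gives $\orangesubterm{C}{v} \succ \mapdb{v}{t}{k} \ceq \mapdb{v'}{t}{k}$ for each $\tuple{t}_k \in I$, and applying $\theta$ together with stability under grounding substitutions for clauses~\ref{cond:order:stability-clauses} gives $D_{\tuple{t}_k}\rho_{\tuple{t}_k} \prec C\theta$. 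For condition~\ref{cond:red:trust}, since $t \eq t'$ has no constraints, a variable of $t \eq t'$ that does not occur in a parameter of $t \eq t'$ is handled by condition~\ref{cond:H:trust:unconstrained} of Definition~\ref{def:H:trust}, while by side condition~\ref{cond:eqsubsump:params} any remaining variable occurs only inside parameters, so no literal of $t \eq t'$ contains it outside parameters and condition~\ref{cond:H:trust:corresponding-var} is vacuously true.

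I expect the one point requiring care to be the bookkeeping of orange positions: one should make explicit that the two sides $\orangesubterm{s}{v}$ and $\orangesubterm{s'}{v'}$ arise from a common orange context, so that after grounding Lemma~\ref{lem:orange-ext} applies with that ground context and $v\theta$, $v'\theta$ plugged in at $p$, and one should check that $k$ is large enough that no free De Bruijn index escapes when $\tuple{t}_k$ is substituted for $\tuple{x}_k$; both points transfer unchanged from the \infname{Demod} proof via the first side condition of the rule. Everything else is routine verification.
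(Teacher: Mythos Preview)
Your proposal is correct and matches the paper's approach exactly: the paper's proof is simply ``Analogous to Lemma~\ref{lem:demod},'' and you have faithfully carried out that analogy, correctly noting that the placeholder element~$*$ from the \infname{Demod} argument is unnecessary here since there is no rewritten clause to carry along.
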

\begin{proof}
Analogous to Lemma~\ref{lem:demod}.
\end{proof}

\subsubsection{Additional Simplification Rules}

The core inference rules $\ArgCong$, $\Clausify$, $\FalseElim$, $\LoobHoist$, and $\BoolHoist$
described in Section~\ref{ssec:the-core-inference-rules}
can under certain conditions be applied as simplification rules.

\begin{lem}\label{lem:simp:arg-cong}
  $\ArgCong$ can be justified as a simplification rule by \slimfull{}{simple} clause redundancy
  when $\sigma$ is the identity.
  Moreover, it can even be applied when its eligibility condition does not hold.
\end{lem}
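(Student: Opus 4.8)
The plan is to show that the premise $C\constraint{S}=C'\llor s\ceq s'\constraint{S}$ of the inference belongs to $\HRedC^\star(\{D\constraint{S}\})$, where $D\constraint{S}=C'\llor s\>x\ceq s'\>x\constraint{S}$ is its conclusion; since $\sigma$ is the identity, $s$ is functional and $S\sigma=S$. By monotonicity of $\HRedC^\star$ this makes $C\constraint{S}$ redundant with respect to any clause set containing $D\constraint{S}$, which is all that licenses deleting the premise. As with \infname{Subsumption} and \infname{Demod}, the simplification version of \ArgCong{} carries the side conditions that $S$ is empty and that $C$ contains no variable occurring both inside and outside a parameter; these are used only for the trust condition. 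Now fix a grounding substitution $\theta$ with $C\theta\in\gnd(C\constraint{S})$ and let $\tau\fun\upsilon$ be the ground type of $s\theta$. I would verify condition~\ref{cond:red:entailed-by-smaller} of Definition~\ref{def:H:simple-redundancy} with index set $I$ consisting of all pairs $(p,q)$ of ground terms of type $\tau\fun\upsilon$, taking $D_i\constraint{T_i}=D\constraint{S}$ and $\rho_i=\theta[x\mapsto\diff\typeargs{\tau,\upsilon}(p,q)]$ for $i=(p,q)$; since $x$ does not occur in $S$ we have $S\rho_i=S\theta$, so each $D\rho_i=C'\theta\llor s\theta\>\diff\typeargs{\tau,\upsilon}(p,q)\ceq s'\theta\>\diff\typeargs{\tau,\upsilon}(p,q)$ is a ground instance of $D\constraint{S}$.

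For condition~\ref{cond:red:order} I would use \ref{cond:order:ext}: after cancelling the common subclause $C'\theta$, it suffices that the literal $s\theta\>\diff\typeargs{\tau,\upsilon}(p,q)\ceq s'\theta\>\diff\typeargs{\tau,\upsilon}(p,q)$ is below $s\theta\ceq s'\theta$, which holds because $\succ$ is total on ground terms (\ref{cond:order:total}) and, assuming w.l.o.g.\ $s\theta\succeq s'\theta$, by \ref{cond:order:ext} the term $s\theta$ strictly dominates both $s\theta\>\diff\typeargs{\tau,\upsilon}(p,q)$ and $s'\theta\>\diff\typeargs{\tau,\upsilon}(p,q)$, so the literal is smaller in the multiset literal order; hence $D\rho_i\prec C\theta$. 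For condition~\ref{cond:red:entailment}, let $\III$ be a $\modelsolam$-interpretation with $\III\modelsfol\mapF{D\rho_i}$ for all $i\in I$; if $\III\modelsfol\mapF{C'\theta}$ we are done, and otherwise $\III\modelsfol\mapF{s\theta\>\diff\typeargs{\tau,\upsilon}(p,q)}\ceq\mapF{s'\theta\>\diff\typeargs{\tau,\upsilon}(p,q)}$ for all ground $p,q\oftype\tau\fun\upsilon$, so the extensionality-with-respect-to-$\diff$ property of $\modelsolam$-interpretations (with $u:=s\theta$, $w:=s'\theta$) yields $\III\modelsfol\mapF{s\theta}\ceq\mapF{s'\theta}$ and hence $\III\modelsfol\mapF{C\theta}$; thus $\mapF{\{D\rho_i\mid i\in I\}}\modelsolam\mapF{C\theta}$. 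For condition~\ref{cond:red:trust}, the variables of $D$ are those of $C$ together with the fresh $x$: the variable $x$ occurs in no parameter of $D$ and, as $S$ is empty, not in $S$, so condition~\ref{cond:H:trust:unconstrained} of Definition~\ref{def:H:trust} applies; a variable $y$ of $C$ that occurs in no parameter of $D$ likewise satisfies condition~\ref{cond:H:trust:unconstrained}, while a variable $y$ of $C$ occurring in a parameter of $D$ occurs, by the side condition, only inside parameters in $C$ and hence only inside parameters in $D$, so no literal of $D$ contains $y$ outside a parameter and condition~\ref{cond:H:trust:corresponding-var} holds vacuously.

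The step I expect to require the most care is condition~\ref{cond:red:entailment}: the fresh variable $x$ must be instantiated with $\diff\typeargs{\tau,\upsilon}(p,q)$ for \emph{all} ground $p,q$ of type $\tau\fun\upsilon$ (not just a single witness) so that the extensionality-with-respect-to-$\diff$ property is applicable, and at the same time each of these instances must remain below $C\theta$, which is exactly what \ref{cond:order:ext} delivers; the trust bookkeeping is then routine under the stated side conditions. Finally, none of the argument uses the eligibility of $s\ceq s'$ in $C\constraint{S}$, so it also shows that the simplification stays justified when the eligibility condition~2 of \ArgCong{} fails, as claimed.
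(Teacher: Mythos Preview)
Your index set, the entailment argument via the extensionality clause of $\modelsolam$, and the order argument via \ref{cond:order:ext} all match the paper. The divergence is in the trust step: you impose two extra side conditions ($S$ empty, and no variable of $C$ occurs both inside and outside a parameter) that the lemma does not state and the paper does not need. Under those extra hypotheses your trust argument is fine, but you are then proving a strictly weaker statement.

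The paper dispenses with both hypotheses by handling every non-fresh variable $y$ of $D$ via condition~\ref{cond:H:trust:corresponding-var} of Definition~\ref{def:H:trust} (not~\ref{cond:H:trust:unconstrained}), taking $\sigma$ to be the identity. For a literal $L\in C'$ containing $y$ outside parameters one chooses $K=L$, and $L\preceq K\sigma$ is immediate; for the new literal $L=s\>x\ceq s'\>x$ one chooses $K=s\ceq s'$, and $L\preceq K$ comes from \ref{cond:order:ext}. In either case $z\theta=z\sigma\rho_i$ for $z$ in $C$ holds because $\rho_i=\theta[x\mapsto\ldots]$ with $x$ fresh. Since condition~\ref{cond:H:trust:corresponding-var} never mentions $T$, no assumption on $S$ is required. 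Only the fresh $x$ is handled by condition~\ref{cond:H:trust:unconstrained}; here $x\notin S$ holds by freshness (not by $S=\emptyset$), and the paper explicitly argues that $x$ cannot land inside a parameter of $s\>x$ or $s'\>x$ even after $\beta$-reduction because parameters are syntactically forbidden from containing De~Bruijn indices bound outside them---a point you assert but do not justify.
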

\begin{proof}
\begin{slim}
Let $\theta$ be a grounding substitution.
We apply condition~\ref{cond:H:RedC:entailment} of Definition~\ref{def:H:RedC}.
Let $\tau \fun \upsilon$ be the type of $s\theta$ and $s'\theta$.
By the extensionality property of $\modelsolam$,
we have 
\[\{\mapF{(C' \llor s\>x \eq s'\>x)\theta[x \mapsto \diff\typeargs{\tau,\upsilon}(u,w)]} \mid u,w \oftype \tau \fun \upsilon \text{ ground} \} \modelsolam \mapF{(C' \llor s \eq s')\theta}\]
By \ref{cond:order:ext},
we have $(C' \llor s'\>x \eq s\>x)\theta[x \mapsto \diff\typeargs{\tau,\upsilon}(u,w)] \prec (C' \llor s \eq s')\theta$ for all such $\tau$, $\upsilon$, $u$, and $w$.
Thus condition~\ref{cond:H:RedC:entailment} of Definition~\ref{def:H:RedC} applies.
\end{slim}%
\begin{full}
Let $C\theta$ be a ground instance of $C \constraint{S}$.
Let $\tau \fun \upsilon$ be the type of $s\theta$ and $s'\theta$.
We apply condition~\ref{cond:red:entailed-by-smaller} of Definition~\ref{def:H:simple-redundancy},
using $I = \{(u,w) \mid u,w \oftype \tau \fun \upsilon \text{ ground}\}$, $D_{(u,w)} = C' \llor s\>x \eq s'\>x$, $T_{(u,w)} = S$, and 
$\rho_{(u,w)} = \theta[x \mapsto \diff\typeargs{\tau,\upsilon}(u,w)]$.
Condition~\ref{cond:red:entailment} follows from the extensionality property of $\modelsolam$.
Condition~\ref{cond:red:order} follows from \ref{cond:order:ext}.

For condition~\ref{cond:red:trust},
first consider the fresh variable $x$.
Since $x$ is fresh
and parameters cannot contain free De Bruijn indices,
$x$ cannot occur in parameters in $C' \llor s\>x \eq s'\>x$, and thus
condition~\ref{cond:H:trust:unconstrained} of Definition~\ref{def:H:trust}
applies.

Now consider any other variable $y$ in $D_{(u,w)}$.
Such a variable must occur in $C$.
We apply condition~\ref{cond:H:trust:corresponding-var} of Definition~\ref{def:H:trust},
using the identity substitution for $\sigma$ and---if $y$ occurs in $s$ or $s'$---using
\ref{cond:order:ext}.
\end{full}
\end{proof}

\begin{lem}\label{lem:simp:clausify}
  $\Clausify$ can be justified as a simplification rule by \slimfull{}{simple} clause redundancy
  when $\sigma$ is the identity for all variables other than $x$ and $y$.
  Moreover, it can even be applied when its eligibility condition does not hold.
\end{lem}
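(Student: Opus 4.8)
Throughout, write $C\constraint S = (C' \llor s \ceq t)\constraint S$ for the premise. The plan is to instantiate simple clause redundancy (Definition~\ref{def:H:simple-redundancy}) for $C\constraint S$ against the clause set $N$ consisting of \emph{all} conclusions that \Clausify{} produces from $C\constraint S$, i.e.\ of one clause $(C' \llor D)\sigma$ for every triple $(s',t',D)$ of condition~4 whose first two components unify with $s$ and $t$. Treating the rule this way is the essential point: for the connectives $\iand$ and $\ior$ matched against the ``wrong'' truth value a single conclusion does not entail the premise, so the simplification must be read as deleting $C\constraint S$ in favour of the whole family of conclusions at once. Since by hypothesis $\sigma$ is the identity on every variable and type variable of $C$ and only instantiates the fresh $x$, $y$, $\alpha$, we have $C'\sigma = C'$, $s'\sigma = s$, $t'\sigma = t$, and $S\sigma = S$; hence any grounding substitution $\theta$ of $C\constraint S$ is simultaneously a grounding substitution of every conclusion, and I would verify condition~\ref{cond:red:entailed-by-smaller} with the index set $I$ ranging over the applicable triples, $D_p\constraint{T_p} = (C' \llor D_p\sigma)\constraint S$, and $\rho_p = \theta$.

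For the entailment (condition~\ref{cond:red:entailment}) I would argue semantically. Let $\III$ be a $\modelsolam$-interpretation with $\III \modelsolam \mapF{(C' \llor D_p\sigma)\theta}$ for every applicable $p$; if $\III \modelsolam \mapF{C'\theta}$ we are done, otherwise $\III \modelsolam \mapF{(D_p\sigma)\theta}$ for all $p$. Because $\sigma$ fixes $s$ and $t$, the literal $s \ceq t$ has a logical connective as the head of one side and $\itrue$ or $\ifalse$ on the other, and inspection of condition~4 shows that the family $\{(D_p\sigma)\theta\}$ is precisely the clausal description of this literal. Since $\modelsolam$-interpretations fix the Boolean domain to the two denotations of $\itrue$ and $\ifalse$ and interpret $\inot,\iand,\ior,\iimplies,\ieq,\ineq$ as the corresponding operations, and since $\mapFonly$ is transparent on the nonfunctional connective applications occurring here (it keeps the head symbol with its type arguments and merely recurses into the operands), a routine case split over the fifteen triples yields $\III \modelsolam \mapF{(s \ceq t)\theta}$, hence $\III \modelsolam \mapF{C\theta}$. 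For the order condition~\ref{cond:red:order}, note that $(C' \llor D_p\sigma)\theta$ arises from $C\theta$ by replacing the literal $(s\ceq t)\theta$ with the literals of $(D_p\sigma)\theta$, each of which has the form $s_i\theta \ceqneq b'$ with $b' \in \{\itrue,\ifalse\}$ and $s_i\theta$ a yellow subterm of $s\theta$. By \ref{cond:order:subterm} together with totality of $\succ$ on ground terms (\ref{cond:order:total}) we get $s\theta \succ s_i\theta$, and by \ref{cond:order:t-f-minimal} we get $s\theta \succ \ifalse \succ \itrue$ (the head of $s\theta$ is a connective, so $s\theta \notin \{\itrue,\ifalse\}$); thus $s\theta$ is strictly larger than every term occurring in $(D_p\sigma)\theta$, so $(s\ceq t)\theta$ strictly dominates each literal of $(D_p\sigma)\theta$ in the multiset literal order, giving $(C' \llor D_p\sigma)\theta \prec C\theta$.

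The trust condition~\ref{cond:red:trust} I would discharge exactly as in the proof of Lemma~\ref{lem:simp:arg-cong}: every variable of the conclusion $C' \llor D_p\sigma$ already occurs in $C$, so for a literal inherited from $C'$ we invoke condition~\ref{cond:H:trust:corresponding-var} of Definition~\ref{def:H:trust} with the same literal of $C$ and the identity substitution, for a literal of $D_p\sigma$ we use $K = (s\ceq t)$ with the identity substitution (the required comparison being the literal ordering just established), and a variable occurring only outside parameters and not in $S$ may alternatively be covered by condition~\ref{cond:H:trust:unconstrained}. Since none of conditions~\ref{cond:red:entailment}, \ref{cond:red:order}, \ref{cond:red:trust} refers to eligibility, the whole argument is insensitive to the \Clausify{} side condition demanding that $s\ceq t$ be strictly eligible in $C$, which establishes the ``moreover'' part; the condition that $s$ not be a variable is likewise automatic here, since a variable $s$ would force $\sigma$ to instantiate a variable of $C$, contrary to hypothesis. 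I expect the only real work to be the bookkeeping in the entailment step: each of the fifteen triples is an elementary Boolean validity, but one has to notice that a case such as $(x \iand y,\ \itrue,\ x \ceq \itrue)$ contributes a conclusion that is too weak on its own and is made sufficient only by its partner $(x \iand y,\ \itrue,\ y \ceq \itrue)$, whereas e.g.\ $(x \iand y,\ \ifalse,\ x \ceq \ifalse \llor y \ceq \ifalse)$ and $(\inot x,\ \itrue,\ x \ceq \ifalse)$ are already equivalent to the premise literal by themselves; the ordering and trust parts are immediate from the fact that the clausified literals sit strictly below $s\theta$ as subterms. In the slim formulation, whose redundancy criterion carries no trust clause and draws ground instances directly from $\mapGonly$, the same plan applies with the trust paragraph simply dropped.
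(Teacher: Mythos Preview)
Your approach is the same as the paper's: invoke condition~\ref{cond:red:entailed-by-smaller} of simple clause redundancy, using that $\modelsolam$ interprets the Boolean connectives. The paper's proof is a single sentence and leaves all the details you spell out (in particular the need to use the \emph{full family} of conclusions for the $\iand$/$\ior$ cases, the order comparison, and the trust bookkeeping) implicit, so your expansion is welcome.

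One small imprecision: you claim every literal of $D_p\sigma$ has the shape $s_i\theta \ceqneq b'$ with $b' \in \{\itrue,\ifalse\}$, but the four $\ieq$/$\ineq$ triples yield $u \ceq v$ or $u \cneq v$ with neither side equal to $\itrue$ or $\ifalse$. This does not damage your order argument, since $u\theta$ and $v\theta$ are still proper green subterms of $s\theta$ and hence strictly below it by \ref{cond:order:subterm} and \ref{cond:order:total}; just adjust the wording so the case split covers these literals as well.
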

\begin{proof}
\begin{full}By condition~\ref{cond:red:entailed-by-smaller} of Definition~\ref{def:H:simple-redundancy},\end{full}%
\begin{slim}By condition~\ref{cond:H:RedC:entailment} of Definition~\ref{def:H:RedC},\end{slim}
using the fact that $\modelsolam$ interprets Booleans.
\end{proof}

\begin{lem}\label{lem:simp:falseelim}
  $\FalseElim$ can be justified as a simplification rule by \slimfull{}{simple} clause redundancy
  when $\sigma$ is the identity.
  Moreover, it can even be applied when its eligibility condition does not hold.
\end{lem}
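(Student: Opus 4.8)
The plan is to follow the pattern of Lemmas~\ref{lem:simp:arg-cong} and~\ref{lem:simp:clausify}, reducing the claim to a check of the simple clause redundancy criterion; here the argument is even lighter, because once $\sigma$ is the identity the conclusion is a subclause of the premise. We may assume $s = \ifalse$ and $t = \itrue$ (this is what $\sigma$ being the identity amounts to for a simplification step, since otherwise the constrained conclusion $C'\constraint{S,\,s\equiv\ifalse,\,t\equiv\itrue}$ would not cover all ground instances of the premise and replacing the premise by it would not be a sound simplification). Then the premise is $C' \llor \ifalse \ceq \itrue \constraint{S}$ and the conclusion is $C'\constraint{S}$, and I would show $C' \llor \ifalse \ceq \itrue \constraint{S} \in \HRedC^\star(N)$ for any clause set $N$ containing $C'\constraint{S}$, by invoking condition~\ref{cond:red:entailed-by-smaller} of Definition~\ref{def:H:simple-redundancy}.

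Fix a grounding substitution $\theta$ with $S\theta$ true and write $C\theta = C'\theta \llor (\ifalse \ceq \itrue)$. Take $I$ to be a singleton, justified by the single ground instance $C'\theta$ of $C'\constraint{S}$. Condition~\ref{cond:red:entailment} holds because $C'\theta$ is a submultiset of $C\theta$, hence $\mapF{C'\theta} \modelsolam \mapF{C\theta}$; unlike in the $\Clausify$ case, this uses only that a subclause entails a superclause, not the interpreted-Booleans part of $\modelsolam$. Condition~\ref{cond:red:order} holds because deleting the literal $\ifalse \ceq \itrue$ strictly shrinks a clause in the multiset clause order (condition~\ref{cond:order:clause-extension}), so $C'\theta \prec C\theta$.

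Finally I would verify the trust condition~\ref{cond:red:trust}, i.e.\ that the $\theta$-instance of $C' \llor \ifalse \ceq \itrue \constraint{S}$ trusts the $\theta$-instance of $C'\constraint{S}$: for every variable $x$ of $C'$ and every literal $L \in C'$ containing $x$ outside of parameters, I take that same literal $L$ --- which also occurs in $C' \llor \ifalse \ceq \itrue$ --- as the witness $K$ and the identity substitution for the $\sigma$ of condition~\ref{cond:H:trust:corresponding-var}, so that $z\theta = z\,\sigma\theta$ for all $z$ and $L \preceq K\sigma = L$; this covers in particular variables reoccurring in the retained constraint set $S$. Since none of these three conditions refers to the eligibility of $s \ceq t$, the rule stays justified when the eligibility side condition is dropped, which is the final claim of the lemma. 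I do not expect a genuine obstacle: the entailment and order checks are immediate, and the only mild care is the trust bookkeeping for variables in $S$, which the self-witnessing choice $K = L$, $\sigma = \mathrm{id}$ handles directly.
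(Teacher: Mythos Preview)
Your proof is correct and follows the same route as the paper: apply condition~\ref{cond:red:entailed-by-smaller} of Definition~\ref{def:H:simple-redundancy} with the single witness $C'\theta$. The paper's proof is a one-liner citing ``the fact that $\modelsolam$ interprets Booleans,'' whereas you correctly observe that for this particular rule the entailment $\mapF{C'\theta}\modelsolam\mapF{C'\theta\llor\ifalse\ceq\itrue}$ is just subclause-entails-superclause and needs nothing about interpreted Booleans; that is a small sharpening. Your explicit verification of the trust condition via $K=L$, $\sigma=\mathrm{id}$ is fine and matches how the other simplification lemmas in the paper handle it.
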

\begin{proof}
\begin{full}By condition~\ref{cond:red:entailed-by-smaller} of Definition~\ref{def:H:simple-redundancy},\end{full}%
\begin{slim}By condition~\ref{cond:H:RedC:entailment} of Definition~\ref{def:H:RedC},\end{slim}
  using the fact that $\modelsolam$ interprets Booleans.
\end{proof}

\begin{lem}\label{lem:simp:boolhoist}
  $\BoolHoist$ and $\LoobHoist$ can be justified to be applied together as a simplification rule by \slimfull{}{simple} clause redundancy
  when $\sigma$ is the identity.
  Moreover, they can even be applied when their eligibility condition does not hold.
\end{lem}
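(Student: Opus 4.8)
The plan is to invoke condition~\ref{cond:red:entailed-by-smaller} of simple clause redundancy (Definition~\ref{def:H:simple-redundancy}). Write the premise as $\greensubterm{C}{u}\constraint{S}$ and let $D_1\constraint{S} = (\greensubterm{C}{\ifalse} \llor u\ceq\itrue)\constraint{S}$ and $D_2\constraint{S} = (\greensubterm{C}{\itrue} \llor u\ceq\ifalse)\constraint{S}$ be the two conclusions; after applying the combined simplification both lie in the surrounding clause set $N$. For every ground instance $\greensubterm{C}{u}\theta$ of the premise I would take a subset of $I=\{1,2\}$ with $D_i\constraint{T_i}=D_i\constraint{S}$ and $\rho_i=\theta$, so that $D_i\rho_i=D_i\theta$ is the ground instance of $D_i\constraint{S}$ obtained from the same~$\theta$.

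For the entailment (condition~\ref{cond:red:entailment}) the key observation is that $u$ has Boolean type, hence is nonfunctional and lies at a green position whose surrounding context consists of nonfunctional terms; by Definition~\ref{def:fol} the encoding $\mapFonly$ therefore turns this green context into a genuine first-order context, so $\mapF{D_1\theta}$, $\mapF{D_2\theta}$, and $\mapF{\greensubterm{C}{u}\theta}$ share the same first-order context and differ only at the subterm $\mapF{\ifalse}$, $\mapF{\itrue}$, resp.\ $\mapF{u\theta}$ at the position of~$u$. Let $\III$ be any $\modelsolam$-interpretation satisfying $\mapF{D_1\theta}$ and $\mapF{D_2\theta}$. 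Since $\modelsolam$ interprets Booleans, $\mapF{u\theta}$ denotes $\interpret{\mapF{\itrue}}{\III}{}$ or $\interpret{\mapF{\ifalse}}{\III}{}$, and these are distinct. In the first case the literal $\mapF{u\theta}\ceq\mapF{\ifalse}$ of $\mapF{D_2\theta}$ is false, so $\III$ satisfies $\mapF{\greensubterm{C\theta}{\itrue}}$, and by first-order congruence (equal denotations of $\mapF{u\theta}$ and $\mapF{\itrue}$) this coincides literal-wise with $\mapF{\greensubterm{C}{u}\theta}$; the other case is symmetric via $\mapF{D_1\theta}$. Hence $\mapF{\{D_1\theta,D_2\theta\}}\modelsolam\mapF{\greensubterm{C}{u}\theta}$. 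Trust (condition~\ref{cond:red:trust}) is easy since neither conclusion introduces fresh variables: every variable of $D_i$ occurs in $C$, so condition~\ref{cond:H:trust:corresponding-var} of Definition~\ref{def:H:trust} applies, matching each literal of $D_i$ with the corresponding literal of $\greensubterm{C}{u}$ under the identity substitution and using the subterm property~\ref{cond:order:subterm} for the newly added literal.

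The delicate point is the order condition~\ref{cond:red:order}, namely $D_i\theta\prec\greensubterm{C}{u}\theta$. In the typical case $u\theta\notin\{\itrue,\ifalse\}$ we have $u\theta\succ\ifalse\succ\itrue$ by~\ref{cond:order:t-f-minimal}, so $\greensubterm{C}{\ifalse}\theta$ and $\greensubterm{C}{\itrue}\theta$ are strictly below $\greensubterm{C}{u}\theta$ by ground compatibility with green contexts~\ref{cond:order:comp-with-contexts} and the clause multiset extension~\ref{cond:order:clause-extension}, while the added literals $u\theta\ceq\itrue$ and $u\theta\ceq\ifalse$ are dominated by the literal hosting $u\theta$ thanks to~\ref{cond:order:subterm}; combining yields $D_i\theta\prec\greensubterm{C}{u}\theta$. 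I expect the main obstacle to be the degenerate ground instances — those in which $u\theta$ $\beta$-reduces to $\itrue$ or $\ifalse$ (possible if $u$ is variable-headed), or in which the hosting literal grounds to the form $u\theta\ceq\itrue$ or $u\theta\ceq\ifalse$ — where one of $D_1\theta$, $D_2\theta$ becomes a $\modelsolam$-tautology (harmless, drop it from $I$) while the other is no longer strictly smaller. For these I would argue either that $\greensubterm{C}{u}\theta$ is itself a $\modelsolam$-tautology (so condition~\ref{cond:red:entailment} holds with $I=\emptyset$) or that it is subsumed via condition~\ref{cond:red:subsumed}; settling them so that the two conclusions together cover all ground instances is the crux, and it parallels the corresponding subtlety for \BoolHoist{}-as-simplification in the original $\lambda$-superposition calculus.
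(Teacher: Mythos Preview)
Your approach matches the paper's: both invoke condition~\ref{cond:red:entailed-by-smaller} of simple redundancy and appeal to the interpreted-Booleans property of~$\modelsolam$. The paper's proof is a single sentence and does not work through the three subconditions; your writeup is considerably more careful, and your handling of entailment and trust is correct.

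Your treatment of the degenerate ground instances, however, has a real gap. You propose that when $u\theta\in\{\itrue,\ifalse\}$, either $\greensubterm{C}{u}\theta$ is a $\modelsolam$-tautology or condition~\ref{cond:red:subsumed} applies; neither alternative holds in general. Take $C = (y\>\cst{a})\ceq x$ with $u=y\>\cst{a}$ (variable-headed, not a variable, Boolean) at the top of its literal---all side conditions of \BoolHoist{} and \LoobHoist{} are satisfied. With $y\theta=\lambda\>\itrue$ and $x\theta=\ifalse$ we get $u\theta=\itrue$ and $C\theta=\itrue\ceq\ifalse$, which is false in every $\modelsolam$-interpretation, hence not a tautology. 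The conclusions are $D_1=\ifalse\ceq x\lor(y\>\cst{a})\ceq\itrue$ and $D_2=\itrue\ceq x\lor(y\>\cst{a})\ceq\ifalse$. Every ground instance of $D_1$ contains a literal $\ifalse\ceq x\rho$ with multiset $\{\ifalse,x\rho\}\succeq\{\itrue,\ifalse\}$, and every ground instance of $D_2$ contains $(y\rho\>\cst{a})\ceq\ifalse$ with multiset $\succeq\{\itrue,\ifalse\}$; since $C\theta$ is the single literal $\{\itrue,\ifalse\}$, no instance of either $D_i$ is strictly smaller than $C\theta$, so condition~\ref{cond:red:order} is unattainable. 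Condition~\ref{cond:red:subsumed} also fails because both $D_i$ have two literals and cannot instantiate to the one-literal clause $C\theta$. Thus $C$ is not simply redundant with respect to $\{D_1,D_2\}$ for this~$\theta$. The paper's one-line proof does not address this either; the statement appears to need the extra hypothesis that $u$ is not variable-headed, which forces $u\theta\notin\{\itrue,\ifalse\}$ for every grounding and lets your non-degenerate argument go through uniformly.
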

\begin{proof}
\begin{full}By condition~\ref{cond:red:entailed-by-smaller} of Definition~\ref{def:H:simple-redundancy},\end{full}%
\begin{slim}By condition~\ref{cond:H:RedC:entailment} of Definition~\ref{def:H:RedC},\end{slim}
  using the fact that $\modelsolam$ interprets Booleans.
\end{proof}

The following two rules normalize negative literals with $\itrue$ and $\ifalse$ into positive literals.
\begin{align*}
 &\namedsimp{\infname{NotTrue}}
{C' \llor s \cneq \itrue\slimfull{}{\ \constraint{S}}}
{C' \llor s \ceq \ifalse\slimfull{}{\ \constraint{S}}}
&&\namedsimp{\infname{NotFalse}}
{C' \llor s \cneq \ifalse\slimfull{}{\ \constraint{S}}}
{C' \llor s \ceq \itrue\slimfull{}{\ \constraint{S}}}
\end{align*}

\begin{lem}\label{lem:simp:nottrue-notfalse}
  $\infname{NotTrue}$ and $\infname{NotFalse}$ can be justified as simplification rules by simple clause redundancy.
\end{lem}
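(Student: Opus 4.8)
The plan is to invoke the first clause-redundancy condition, \ref{cond:red:entailed-by-smaller} of Definition~\ref{def:H:simple-redundancy}, and to justify each replaced clause by its own conclusion. Take \infname{NotTrue}, with premise $C' \llor s \cneq \itrue\ \constraint{S}$ and conclusion $C' \llor s \ceq \ifalse\ \constraint{S}$; the goal is to show the premise lies in $\HRedC^\star(N)$ for any $N$ containing the conclusion. Fixing a ground instance $(C' \llor s \cneq \itrue)\theta$ (so $S\theta$ is true), I would take the index set $I = \{*\}$ with $D_*\constraint{T_*} = C' \llor s \ceq \ifalse\ \constraint{S}$ and $\rho_* = \theta$, so that $D_*\rho_* = (C' \llor s \ceq \ifalse)\theta$. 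Condition~\ref{cond:red:entailment} then requires $\mapF{(C' \llor s \ceq \ifalse)\theta} \modelsolam \mapF{(C' \llor s \cneq \itrue)\theta}$, which is immediate: every $\modelsolam$-interpretation has a two-element Boolean domain in which $\itrue$ and $\ifalse$ denote the two distinct elements, so for the Boolean term $\mapF{s\theta}$ the literals $\mapF{s\theta} \ceq \mapF{\ifalse}$ and $\mapF{s\theta} \cneq \mapF{\itrue}$ are equisatisfied. The rule \infname{NotFalse} is handled symmetrically, swapping $\itrue$ and $\ifalse$.

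For condition~\ref{cond:red:trust} (trust), the conclusion and the premise share the subclause $C'$ and the term $s$ and therefore have the same variables in matching positions: for a variable that occurs in neither a parameter nor $S$, condition~\ref{cond:H:trust:unconstrained} applies, and otherwise one uses condition~\ref{cond:H:trust:corresponding-var} with the identity substitution and the corresponding premise literal. For condition~\ref{cond:red:order} I need $(C' \llor s \ceq \ifalse)\theta \prec (C' \llor s \cneq \itrue)\theta$; after cancelling $C'\theta$ and one copy of $s\theta$, this reduces to the ground multiset comparison $\{\ifalse\} \prec \{s\theta, \itrue, \itrue\}$, which follows from $s\theta \succ \ifalse$, i.e.\ from~\ref{cond:order:t-f-minimal}, whenever $s\theta \notin \{\itrue, \ifalse\}$. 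The analogous reduction for \infname{NotFalse} is $\{\itrue\} \prec \{s\theta, \ifalse, \ifalse\}$, which also follows from~\ref{cond:order:t-f-minimal}, this time for \emph{every} ground $s\theta$.

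The remaining, delicate step — and the one I expect to be the main obstacle — is the analysis of the degenerate ground instances in which $s\theta$ is itself a Boolean constant. When $s\theta = \ifalse$ for \infname{NotTrue} (or $s\theta = \itrue$ for \infname{NotFalse}), the premise instance contains the literal $\ifalse \cneq \itrue$ (resp.\ $\itrue \cneq \ifalse$), which is true in every $\modelsolam$-interpretation, so that instance is a $\modelsolam$-tautology and condition~\ref{cond:red:entailed-by-smaller} applies with the empty index set $I = \emptyset$. The awkward case is \infname{NotTrue} with $s\theta = \itrue$: there the premise instance is $C'\theta \llor \itrue \cneq \itrue$ whereas the conclusion instance $C'\theta \llor \itrue \ceq \ifalse$ is \emph{strictly larger} (because $\ifalse \succ \itrue$), so the conclusion can no longer serve as the smaller clause, and both the order condition and the trust condition break down for this instance. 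I would resolve this either by observing that the rule need only be applied when $s$ is not a bare variable, so that $s\theta$ inherits a head distinct from $\itrue$ and the generic argument of the previous paragraph applies; or, if the rule is stated without such a restriction, by a dedicated argument exploiting that $C'\theta \llor \itrue \cneq \itrue$ is $\modelsolam$-equivalent to the strictly smaller clause $C'\theta$. Making this last point precise inside the simple-redundancy framework — where every witnessing clause must be an instance of something in $N$ — is the part I expect to cost the most effort.
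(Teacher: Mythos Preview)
Your approach is exactly the paper's: its entire proof is the one sentence ``By condition~\ref{cond:red:entailed-by-smaller} of Definition~\ref{def:H:simple-redundancy}, using the fact that $\modelsolam$ interprets Booleans.'' You have merely worked out the details that the paper leaves implicit.

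In doing so you have, I believe, uncovered a genuine corner case that the paper's terse proof glosses over. Your order analysis for \infname{NotTrue} with $s\theta = \itrue$ is correct: under the standard multiset encoding and $\ifalse \succ \itrue$ from~\ref{cond:order:t-f-minimal}, the literal $\itrue \ceq \ifalse$ is strictly larger than $\itrue \cneq \itrue$ (the single $\ifalse$ dominates every $\itrue$), so condition~\ref{cond:red:order} fails for the obvious witness. The premise instance $C'\theta \llor \itrue \cneq \itrue$ is equivalent to $C'\theta$, not a tautology, so the empty index set does not help; and \emph{every} instance of the conclusion $C' \llor s \ceq \ifalse$ contains an occurrence of $\ifalse$ outside parameters and is therefore not smaller than a clause whose only terms are bounded by $\itrue$. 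Your proposed fix --- adding the side condition that $s$ is not a variable (or simply $s \neq \itrue$) --- is the natural repair and matches how the analogous rules \BoolHoist, \LoobHoist, and \Clausify\ already exclude bare variables; without it the lemma as stated does not seem to go through for this instance. Your treatment of \infname{NotFalse}, by contrast, is complete and needs no such restriction.
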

\begin{proof}
\begin{full}By condition~\ref{cond:red:entailed-by-smaller} of Definition~\ref{def:H:simple-redundancy},\end{full}%
\begin{slim}By condition~\ref{cond:H:RedC:entailment} of Definition~\ref{def:H:RedC},\end{slim}
  using the fact that $\modelsolam$ interprets Booleans.
\end{proof}

\begin{full}
The following simplification rule, $\infname{Unif}$,
allows us to run a unification procedure to remove the constraints
of a clause.
\[\namedsimp{\infname{Unif}}
{C \constraint{S}}
{C\sigma_1 \quad \cdots \quad C\sigma_n}\]
with the following side conditions:
\begin{enumerate}[label=\arabic*.,ref=\arabic*]
  \item \label{cond:unif:csu} $\{\sigma_1, \dots, \sigma_n\}$ is a complete set of unifiers for $S$;
  \item \label{cond:unif:sqsupset} $C\constraint{S} \sqsupset C\sigma_i$ for all $i$.
\end{enumerate}

\begin{lem}
  \infname{Unif} can be justified by simple clause redundancy.
\end{lem}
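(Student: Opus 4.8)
The plan is to justify \infname{Unif} via condition~\ref{cond:red:subsumed} of simple clause redundancy (Definition~\ref{def:H:simple-redundancy}), taking the set of conclusions $N = \{C\sigma_1, \dots, C\sigma_n\}$ as the clause set that makes $C\constraint{S}$ redundant. Intuitively this is just ``subsumption by a complete set of instances'': every ground instance of $C\constraint{S}$ is already a ground instance of one of the $C\sigma_i$, and side condition~\ref{cond:unif:sqsupset} supplies the well-founded decrease $C\constraint{S}\sqsupset C\sigma_i$ that prevents simplification cycles.

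Concretely, I would fix a ground instance $C\theta\in\gnd(C\constraint{S})$. By definition of a ground instance, $S\theta$ is true, so $\theta$ is a unifier of $S$. By side condition~\ref{cond:unif:csu} and the completeness property in Definition~\ref{def:csu}, applied with $X$ the set of free variables of $C\constraint{S}$, there is an index $j$ and a substitution $\rho$ with $x\sigma_j\rho = x\theta$ for all $x\in X$; in particular $C\sigma_j\rho = C\theta$. Since $C\theta$ is ground, $\rho$ grounds every variable occurring in $C\sigma_j$, so I can replace $\rho$ by a grounding substitution $\rho'$ that coincides with it on those variables, obtaining a genuine ground instance $C\sigma_j\rho'\in\gnd(C\sigma_j)$ with $C\sigma_j\rho' = C\theta$. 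Then conditions~\ref{cond:red:subsumed:eq} and~\ref{cond:red:subsumed:sqsupset} of Definition~\ref{def:H:simple-redundancy} hold at once (the latter being side condition~\ref{cond:unif:sqsupset}), and it remains only to check the trust condition~\ref{cond:red:subsumed:trust}.

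For trust (Definition~\ref{def:H:trust}) I would argue that the $\theta$-instance of $C\constraint{S}$ trusts the $\rho'$-instance of $C\sigma_j$. Since $C\sigma_j$ carries no constraints, for each variable $x$ of $C\sigma_j$ it suffices to establish condition~\ref{cond:H:trust:corresponding-var}: every literal $L$ of $C\sigma_j$ is of the form $K\sigma_j$ for the corresponding literal $K$ of $C$, and choosing $\sigma_j$ itself as the witnessing substitution gives $z\theta = z\sigma_j\rho'$ for all variables $z$ of $C$ (using $z\in X$ and that $\rho'$ agrees with $\rho$ on the variables of $C\sigma_j$) together with $L\preceq K\sigma_j$ by reflexivity. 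I expect no real obstacle here; the only point that needs care is the bookkeeping around the auxiliary variables a complete set of unifiers may introduce, since the $\rho$ returned by completeness is neither required to be grounding nor to agree with $\theta$ outside $X$ — this is precisely why the passage to $\rho'$ and the restriction of the trust argument to the variables and literals actually occurring in $C\sigma_j$ are needed.
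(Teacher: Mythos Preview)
Your proposal is correct and follows essentially the same route as the paper's proof: both use condition~\ref{cond:red:subsumed} of Definition~\ref{def:H:simple-redundancy}, pick the appropriate $C\sigma_j$ via completeness of the CSU, and verify trust via condition~\ref{cond:H:trust:corresponding-var} with $\sigma_j$ as the witnessing substitution and $L = K\sigma_j$. Your extra step of extending $\rho$ to a grounding $\rho'$ is a point of care the paper glosses over (it simply uses the $\rho$ from Definition~\ref{def:csu} directly), so if anything your version is slightly more precise.
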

\begin{proof}
  Let $C\theta\in \gnd(C \constraint{S})$.
  By condition~\ref{cond:unif:csu} of \infname{Unif}
  and Definition~\ref{def:csu},
  there must exist an index $i$ and a substitution $\rho$ such that
  $z\sigma_i\rho = z\theta$ for all $z$ in $C\constraint{S}$.
  We apply condition \ref{cond:red:subsumed} of Definition~\ref{def:H:simple-redundancy}.
  We use $C\sigma_i$ for $D$
  and $\rho$ for $\rho$.
  Condition~\ref{cond:red:subsumed:eq} follows from 
  the fact that $z\sigma_i\rho = z\theta$ for all $z$ in $C\constraint{S}$.
  Condition~\ref{cond:red:subsumed:sqsupset} follows from condition \ref{cond:unif:sqsupset}
  of \infname{Unif}.
  
  For condition~\ref{cond:red:subsumed:trust}, we must show that
  the $\theta$-instance of $C\constraint{S}$ trusts the
  $\rho$-instance of $C\sigma_i$.
  We will use condition~\ref{cond:H:trust:corresponding-var} of trust.
  Let $L \in C\sigma_i$.
  Let $K$ be a literal in $C$ such that $K\sigma_i = L$.
  We use $\sigma_i$ for $\sigma$. Then we have
  $z\theta = z\sigma_i\rho = z\sigma\rho$ for all variables $z$ in $C$.
  Moreover, $K\sigma = K\sigma_i = L$ implies $L \preceq K\sigma$.
\end{proof}
\end{full}

The following rule is inspired by one of Leo-II's extensionality rules \cite{benzmuller-2015-leo2}:
\[\namedsimp{\infname{NegExt}}
{\overbrace{C' \llor s \cneq s'}^C\slimfull{}{\>\constraint{S}}}
{C' \llor s\>\diff\typeargs{\tau,\upsilon}(s,s')\>\cneq s'\>\diff\typeargs{\tau,\upsilon}(s,s')\slimfull{}{\>\constraint{S}}}\]

\begin{lem}
  \infname{NegExt} can be justified by simple clause redundancy.
\end{lem}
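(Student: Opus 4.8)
The plan is to show that the premise $C\constraint{S}$, with $C = C' \llor s\cneq s'$, is simply redundant with respect to the conclusion $D\constraint{S}$, where $D = C' \llor (s\>\diff\typeargs{\tau,\upsilon}(s,s')) \cneq (s'\>\diff\typeargs{\tau,\upsilon}(s,s'))$; since $\HRedC^\star$ is monotone in its clause-set argument, this also yields redundancy with respect to any set containing the conclusion. First I would fix a ground instance $C\theta\in\gnd(C\constraint{S})$, observe that $S\theta$ is true so that $D\theta\in\gnd(D\constraint{S})$, and then verify condition~\ref{cond:red:entailed-by-smaller} of Definition~\ref{def:H:simple-redundancy} with the singleton index set $I=\{*\}$, $D_*\constraint{T_*} = D\constraint{S}$, and $\rho_* = \theta$.

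For the entailment, condition~\ref{cond:red:entailment}, i.e.\ $\mapF{D\theta}\modelsolam\mapF{C\theta}$, I would take a $\modelsolam$-interpretation $\III$ with $\III\modelsolam\mapF{D\theta}$. If $\III$ satisfies a literal of $\mapF{C'\theta}$, it satisfies $\mapF{C\theta}$; otherwise it satisfies $\mapF{((s\>\diff\typeargs{\tau,\upsilon}(s,s')) \cneq (s'\>\diff\typeargs{\tau,\upsilon}(s,s')))\theta}$, i.e.\ the terms $\mapF{(s\>\diff\typeargs{\tau,\upsilon}(s,s'))\theta}$ and $\mapF{(s'\>\diff\typeargs{\tau,\upsilon}(s,s'))\theta}$ receive distinct values in $\III$. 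The contrapositive of the argument congruence property with respect to $\diff$ of $\modelsolam$-interpretations, instantiated with $u = s\theta$, $w = s'\theta$ and $\diff$-arguments $s\theta$ and $s'\theta$, then gives that $\mapF{s\theta}$ and $\mapF{s'\theta}$ differ in $\III$, so $\III$ satisfies $\mapF{(s\cneq s')\theta}$ and hence $\mapF{C\theta}$. For the order, condition~\ref{cond:red:order}, i.e.\ $D\theta\prec C\theta$: since $s\theta$ and $s'\theta$ are ground and functional, criterion~\ref{cond:order:ext} yields $s\theta \succ (s\>\diff\typeargs{\tau,\upsilon}(s,s'))\theta$ and $s'\theta \succ (s'\>\diff\typeargs{\tau,\upsilon}(s,s'))\theta$, so the rewritten literal is strictly below $s\theta\cneq s'\theta$ in the multiset extension on literals; as $C\theta$ and $D\theta$ agree on the literals of $C'\theta$, criterion~\ref{cond:order:clause-extension} gives $D\theta\prec C\theta$.

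For trust, condition~\ref{cond:red:trust}, I must show the $\theta$-instance of $C\constraint{S}$ trusts the $\theta$-instance of $D\constraint{S}$. The variables of $D$ are exactly those of $C$: the $\diff$ subterm introduces no term variable, and $\tau,\upsilon$ are the argument and result types of $s$, so no fresh type variable occurs. A variable occurring only in $C'$ I would handle as in the proof of Lemma~\ref{lem:demod}: if it lies neither in a parameter of $D$ nor in $S$, use condition~\ref{cond:H:trust:unconstrained} of Definition~\ref{def:H:trust}; otherwise, for each literal $L\in C'$ containing it, pick the identical literal $K = L\in C$ with the identity for $\sigma$, so $L\preceq K\sigma$ is immediate. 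A variable $x$ occurring in $s$ or $s'$ occurs inside the parameters of the $\diff$ subterm of $D$, so condition~\ref{cond:H:trust:unconstrained} is unavailable and I would invoke condition~\ref{cond:H:trust:corresponding-var}: apart from literals of $C'$ (handled as above), the only literal of $D$ with $x$ outside parameters is $L = (s\>\diff\typeargs{\tau,\upsilon}(s,s')) \cneq (s'\>\diff\typeargs{\tau,\upsilon}(s,s'))$, and for it I take $K = s\cneq s'\in C$ with the identity for $\sigma$, which satisfies $z\theta = z\sigma\theta$ for all $z$ in $C$ and reduces the obligation to $L\preceq s\cneq s'$.

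I expect this last inequality, which compares nonground literals, to be the main obstacle: it requires the nonground analogue $u \succ u\>\diff\typeargs{\tau,\upsilon}(v,w)$ of criterion~\ref{cond:order:ext} (used with $u = s$ and with $u = s'$). Admissibility in the sense of Definition~\ref{def:admissible-term-order} only guarantees the ground version, but this strengthening holds for $\lambda$KBO and $\lambda$LPO as defined in the companion article, essentially for the reasons in Remark~\ref{rem:demod:clause-order}: the assumptions $\cal w_\db\ge\cal w(\cst{diff})$ and $\cal k(\cst{diff},i)=1$ for $\lambda$KBO, and $\cst{diff}\le\cst{ws}$ for $\lambda$LPO, make a $\diff$ application strictly decreasing even on nonground terms. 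Granting this, $s\succ s\>\diff\typeargs{\tau,\upsilon}(s,s')$ and $s'\succ s'\>\diff\typeargs{\tau,\upsilon}(s,s')$, whence $s\cneq s'\succ L$ in the multiset literal order and thus $L\preceq s\cneq s'$, completing the check of trust and thereby of condition~\ref{cond:red:entailed-by-smaller}.
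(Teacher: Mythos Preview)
Your proof is correct and follows the same approach as the paper: apply condition~\ref{cond:red:entailed-by-smaller} of Definition~\ref{def:H:simple-redundancy} with the singleton index set, the conclusion as $D_*\constraint{T_*}$, and $\rho_*=\theta$; justify entailment via the argument-congruence property of $\modelsolam$, the order via \ref{cond:order:ext}, and trust via condition~\ref{cond:H:trust:corresponding-var} with the identity substitution.

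Your careful discussion of the nonground inequality $L \preceq K$ in the trust check is more explicit than the paper, which simply cites \ref{cond:order:ext} at that point (as it also does in the analogous proof for \ArgCong). You are right that Definition~\ref{def:H:trust} literally asks for a comparison of nonground literals, and that the abstract admissibility conditions only state the ground version of \ref{cond:order:ext}; the paper is tacitly relying on the fact that the concrete orders $\succ_\lkb$ and $\succ_\llp$ satisfy the nonground analogue, for the reasons you sketch (cf.\ Remark~\ref{rem:demod:clause-order}). So your extra paragraph is a genuine clarification rather than a deviation, and the rest of your argument matches the paper's proof.
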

\begin{proof}
\begin{slim}
Let $\theta$ be a grounding substitution.
We apply condition~\ref{cond:H:RedC:entailment} of Definition~\ref{def:H:RedC}.
By the argument congruence property of $\modelsolam$,
we have 
\[\mapF{(C' \llor s\>\diff\typeargs{\tau,\upsilon}(s,s')\>\cneq s'\>\diff\typeargs{\tau,\upsilon}(s,s'))\theta} \modelsolam \mapF{(C' \llor s \cneq s')\theta}\]
By \ref{cond:order:ext},
we have $(C' \llor s'\>\diff\typeargs{\tau,\upsilon}(s,s')\>\cneq s\>\diff\typeargs{\tau,\upsilon}(s,s'))\theta \prec (C' \llor s \cneq s')\theta$.
Thus condition~\ref{cond:H:RedC:entailment} of Definition~\ref{def:H:RedC} applies.
\end{slim}%
\begin{full}
Let $C\theta$ be a ground instance of $C \constraint{S}$.
We apply condition~\ref{cond:red:entailed-by-smaller} of Definition~\ref{def:H:simple-redundancy},
using $I = \{*\}$, $D_* = C' \llor s\>\diff\typeargs{\tau,\upsilon}(s,s')\>\cneq s'\>\diff\typeargs{\tau,\upsilon}(s,s')$, $T_* = S$, and 
$\rho_* = \theta$.
Condition~\ref{cond:red:entailment} follows from the argument congruence property of $\modelsolam$.
Condition~\ref{cond:red:order} follows from \ref{cond:order:ext}.
For condition~\ref{cond:red:trust},
consider a variable $y$ in $C$.
We apply condition~\ref{cond:H:trust:corresponding-var} of Definition~\ref{def:H:trust},
using the identity substitution for $\sigma$ and
\ref{cond:order:ext}.
\end{full}
\end{proof}

\subsection{Examples}
\label{ssec:examples}

In this subsection, we illustrate the various rules of our calculus on concrete
examples. For better readability, we use nominal $\lambda$~notation.

\begin{exa}[Selection of Negated Predicates]
This example demonstrates the value of allowing selection of literals of the form $t \ceq \ifalse$.
Although the original $\lambda$-superposition calculus was claimed
to support selection of such literals,
its completeness proof was flawed in this respect
\cite{bentkamp-et-al-2023-hosup-errata, nummelin-et-al-2021-boolsup-errata}.

Consider the following clause set:
\begin{align*}
&(1)\ \cst{p}\>\cst{a} \ceq \itrue\\
&(2)\ \cst{q}\>\cst{b} \ceq \itrue\\
&(3)\ \cst{r}\>\cst{c} \ceq \itrue\\
&(4)\ \cst{p}\>x \ceq \ifalse \llor \cst{q}\>y \ceq \ifalse \llor \cst{r}\>z \ceq \ifalse
\end{align*}
Let us first explore what happens without literal selection.
Due to the variables in (4), all of the literals in (4) are incomparable \wrt\ any term order.
So, since none of the literals is selected, there are three possible $\Sup$ inferences:
(1) into (4), (2) into (4), and (3) into (4). After applying $\FalseElim$ to their conclusions,
we obtain:
\begin{align*}
&(5)\ \cst{q}\>y \ceq \ifalse \llor \cst{r}\>z \ceq \ifalse\\
&(6)\ \cst{p}\>x \ceq \ifalse \llor \cst{r}\>z \ceq \ifalse\\
&(7)\ \cst{p}\>x \ceq \ifalse \llor \cst{q}\>y \ceq \ifalse
\end{align*}
For each of these clauses, we can again apply a $\Sup$ inference using (1), (2), or (3), in two different ways each.
After applying $\FalseElim$ to their conclusions, we obtain three more clauses: 
$\cst{p}\>x \ceq \ifalse$,  $\cst{q}\>y \ceq \ifalse$ and $\cst{r}\>z \ceq \ifalse$.
From each of these clauses, we can then derive the empty clause by another $\Sup$ and $\FalseElim$ inference.
So, without literal selection, depending on the prover's heuristics, a prover might in the worst case need to perform $3 + 3 \cdot 2 + 1 = 10$ $\Sup$ inferences
to derive the empty clause.

Now, let us consider the same initial clause set but we select exactly one literal whenever possible.
In (4), we can select one of the literals, say the first one. Then there is only one possible $\Sup$ inference:
(1) into (4), yielding (5) after applying \FalseElim. In (5), we can again select the first literal.
Again, only one $\Sup$ inference is possible, yielding
$\cst{r}\>z \ceq \ifalse$ after applying \FalseElim.
Another $\Sup$ and another $\FalseElim$ inference yield the empty clause.
Overall, there is a unique derivation of the empty clause, consisting of only three $\Sup$ inferences.
\end{exa}

\begin{exa}[Simplification of Functional Literals]
Consider the following clauses, where $\cst{f}$ and $\cst{g}$ are constants of type $\iota \to \iota$.
\begin{align*}
&(1)\ \cst{f} \ceq \cst{g}\\
&(2)\ \cst{f} \cneq \cst{g}
\end{align*}
A $\Sup$ inference from (1) into (2) is not possible because the terms are functional.
Instead, we can apply \infname{ArgCong} and \infname{NegExt} to derive the following clauses:
\begin{align*}
&(3)\ \cst{f}\>x \ceq \cst{g}\>x\text{\quad (by $\ArgCong$ from (1))}\\
&(4)\ \cst{f}\>\diff(\cst{f},\cst{g}) \cneq \cst{g}\>\diff(\cst{f},\cst{g})\text{\quad (by $\infname{NegExt}$ from (2))}
\end{align*}
Both \infname{ArgCong} and \infname{NegExt} are simplification rules,
so we can delete (1) and (2) after deriving (3) and (4).
Now, a $\Sup$ inference from (3) into (4) and a $\EqRes$ inference yield the empty clause.

In contrast, the original superposition calculus requires both the $\Sup$ inference from (1) into (2)
and also a derivation similar to the one above. Moreover, its redundancy criterion
does not allow us to delete (1) and (2). This amounts to doubling the number of clauses and
inferences---even more if $\cst{f}$ and $\cst{g}$ had more than one argument.
\end{exa}

\begin{exa}[Extensionality Reasoning]
Consider the following clauses:
\begin{align*}
&(1)\ \cst{map}\>(\lambda u.\>\cst{sqrt}\>(\cst{add}\>u\>1))\>x \cneq \cst{map}\>(\lambda u.\>\cst{sqrt}\>(\cst{add}\>1\>u))\>x \\
&(2)\ \cst{add}\>u\>v \ceq \cst{add}\>v\>u
\end{align*}
For better readability, we omit type arguments and use subscripts for the parameters of $\diff$.
Using our calculus, we derive the following clauses:
\begin{align*}
&(3)\ \cst{sqrt}\>(\cst{add}\>(\diff_{\lambda u.\>\cst{sqrt}\>(\cst{add}\>u\>1),z})\>1)\cneq 
         z\>(\diff_{\lambda u.\>\cst{sqrt}\>(\cst{add}\>u\>1),z}) \llor \\[-\jot]
&\phantom{(0)\ }\cst{map}\>z\>x \cneq \cst{map}\>(\lambda u.\>\cst{sqrt}\>(\cst{add}\>1\>u))\>x\text{\quad(by $\Ext$ from (1))}\\
&(4)\ \cst{sqrt}\>(\cst{add}\>\diff_{\lambda u.\>\cst{sqrt}\>(\cst{add}\>u\>1),\lambda u.\>\cst{sqrt}\>(\cst{add}\>1\>u)}\>1)\cneq \\[-\jot]
&\phantom{(0)\ }\cst{sqrt}\>(\cst{add}\>1\>\diff_{\lambda u.\>\cst{sqrt}\>(\cst{add}\>u\>1),\lambda u.\>\cst{sqrt}\>(\cst{add}\>1\>u)})\text{\quad (by $\EqRes$ from (3))}\\
&(5)\ \cst{sqrt}\>(\cst{add}\>1\>\diff_{\lambda u.\>\cst{sqrt}\>(\cst{add}\>u\>1),\lambda u.\>\cst{sqrt}\>(\cst{add}\>1\>u)}) \cneq \\[-\jot]
&\phantom{(0)\ }\cst{sqrt}\>(\cst{add}\>1\>\diff_{\lambda u.\>\cst{sqrt}\>(\cst{add}\>u\>1),\lambda u.\>\cst{sqrt}\>(\cst{add}\>1\>u)})
\text{\quad (by $\Sup$ from (2), (4))}\\
&(6)\ \bot\text{\quad (by $\EqRes$ from (5))}
\end{align*}
While such a derivation is also possible in the original $\lambda$-superposition calculus,
the term orders of the original calculus were not able to compare
the literals of the extensionality axiom
\[y\>\diff_{y,z}\cneq z\>\diff_{y,z} \llor y\ceq z\]
As a result, the extensionality axiom leads to an explosion of inferences.
Our calculus avoids this problem by ensuring that the positive literal of
the extensionality axiom is maximal, via the ordering property \ref{cond:order:ext}.
By replacing the extensionality axiom with the \infname{Ext} rule,
we avoid in addition $\Sup$ inferences into functional terms,
and it strengthens our redundancy criterion.
\end{exa}

\begin{full}
\begin{exa}[Delaying Unification Using Constraints]
Consider the following clause set:
\begin{align*}
&(1)\ \cst{map}\>(\lambda u.\>y\>(\cst{s}\>u))\>\cst{a} \cneq \cst{map}\>(\lambda u.\> z\>(\cst{s}\>(y\>u)))\>\cst{a} \llor \cst{lt}\>(y\>\cst{zero})\>(\cst{s}\>(\cst{s}\>(\cst{s}\>\cst{zero})))\ceq\itrue\\
&(2)\ \cst{lt}\>x\>x \ceq \ifalse
\end{align*}
We assume that the first literal of (1) is selected.
Using a $\csuupto$ function that implements Huet's preunification procedure, we can derive
the following clauses:
\begin{align*}
&(3)\ \cst{lt}\>(y\>\cst{zero})\>(\cst{s}\>(\cst{s}\>(\cst{s}\>\cst{zero})))\ceq\itrue\ 
\constraint{\lambda u.\>y\>(\cst{s}\>u) \equiv \lambda u.\> z\>(\cst{s}\>(y\>u))}
\text{\quad (by $\EqRes$ from (1))}\\
&(4)\ \ifalse \ceq \itrue \ 
\text{\quad (by $\Sup$ from (3),(2))}\\
&(5)\ \bot\ 
\text{\quad (by $\Clausify$ from (4))}
\end{align*}
If our calculus did not support constraints, we would have to solve the unification problem in (3) first,
which yields an infinite number of solutions among which the simplest ones are dead ends.
\end{exa}
\end{full}

\begin{exa}[Universal Quantification]
Consider the following clause set:
\begin{align*}
&(1)\ (\lambda x.\>\cst{p}\>x) \ceq (\lambda x.\> \itrue)\\
&(2)\ \cst{p}\>\cst{a} \ceq \ifalse
\end{align*}
Here, clause (1) encodes the universal quantification $\forall x.\> \cst{p}\>x$.
We can derive a contradiction as follows:
\begin{align*}
&(3)\ p\>x \ceq \itrue \text{\quad (by \infname{ArgCong} from (1))}\\
&(4)\ \itrue \ceq \ifalse \text{\quad (by \infname{Sup} from (2), (3))}\\
&(5)\ \bot \text{\quad (by \infname{FalseElim} from (4))}
\end{align*}
Since the $\ArgCong$ inference creating clause (3) can be used as a simplification
rule by Lemma~\ref{lem:simp:arg-cong},
clause (1) can be deleted when creating clause (3).
So we do not need to apply any $\Ext$ inferences into clause (1).
Except for inferences into (1) and except for a $\Diff$ inference, the
inferences required in the derivation above are the only ones possible.
In this sense, the encoding of the universal quantifier using $\lambda$-abstractions
has no overhead.
\end{exa}

\begin{exa}[Existential Quantification]
Negated universal quantification or existential quantification
can be dealt with similarly. Consider the following clause set:
\begin{align*}
&(1)\ (\lambda x.\>\cst{p}\>x) \cneq (\lambda x. \itrue)\\
&(2)\ \cst{p}\>x \ceq \itrue
\end{align*}
We can derive a contradiction as follows:
\begin{align*}
&(3)\ \cst{p}\>\diff\typeargs{\iota,\omicron}(\lambda x.\>\cst{p}\>x, \lambda x. \itrue) \cneq \itrue \text{\quad(by $\infname{NegExt}$ from (1))}\\
&(4)\ \cst{p}\>\diff\typeargs{\iota,\omicron}(\lambda x.\>\cst{p}\>x, \lambda x. \itrue) \ceq \ifalse \text{\quad(by $\infname{NotTrue}$ from (3))}\\
&(5)\ \itrue \ceq \ifalse \text{\quad (by \infname{Sup} from (2), (4))}\\
&(6)\ \bot \text{\quad (by \infname{FalseElim} from (5))}
\end{align*}
Again, we can delete (1) when creating (3), preventing any $\Ext$ inferences from (1).
Moreover, we can delete (3) when creating (4).
As a result, encoding existential quantification using $\lambda$-abstraction does not have overhead either.
\end{exa}

\begin{exa}
This example illustrates why 
\begin{full}%
condition \ref{cond:H:trust:unconstrained} of
our definition of trust (Definition~\ref{def:H:trust}) must require the variable not to occur in parameters.
\end{full}%
\begin{slim}%
condition \ref{sup:two} of $\Sup$ allows $u$ to be a variable if it has another occurrence inside of a parameter.
\end{slim}
Consider the following clause set:
\begin{align*}
&(1)\ \cst{b} \approx \cst{a}\\
&(2)\ (\lambda x.\> (\inot \cst{p}\>x\>y) \iand (\cst{p}\>x\>y \ior y \ineq \cst{a}))\cneq (\lambda x.\> \ifalse)\\
&(3)\ (\lambda x.\> (\inot \cst{p}\>x\>\cst{b}) \iand (\cst{p}\>x\>\cst{b} \ior \cst{b} \ineq \cst{a}))\cneq (\lambda x.\> \ifalse)
\end{align*}
Note that the clauses $(\lambda x.\>\dots) \cneq (\lambda x.\> \ifalse)$
can be read as $\exists x. \dots$ and that (3) is an instance of (2).
Clauses (1) and (3) alone are unsatisfiable
because (1) ensures that the right side of the disjunction $\cst{p}\>x\>\cst{b} \ior \cst{b} \ineq \cst{a}$ in (3) is false,
and since $(\inot \cst{p}\>x\>\cst{b}) \iand (\cst{p}\>x\>\cst{b})$ is clearly false,
clause (3) is false.

For the following derivation, we assume $\cst{b} \succ \cst{a}$.
Applying $\infname{NegExt}$ to (2) and (3) followed by $\infname{NotFalse}$ yields
\begin{align*}
&(4)\ \inot \cst{p}\>\diff_{\lambda x.\> (\inotlight \cst{p}\>x\>y) \iandlight (\cst{p}\>x\>y \iorlight y \ineqlight \cst{a}),\lambda x.\> \ifalselight}\>y \,\iand\, \cst{p}\>\diff_{\lambda x.\> (\inotlight \cst{p}\>x\>y) \iandlight (\cst{p}\>x\>y \iorlight y \ineqlight \cst{a}),\lambda x.\> \ifalselight}\>y \ior y \ineq \cst{a}\ceq \itrue\\
&(5)\ \inot \cst{p}\>\diff_{\lambda x.\> (\inotlight \cst{p}\>x\>\cst{b}) \iandlight (\cst{p}\>x\>\cst{b} \iorlight \cst{b} \ineqlight \cst{a}),\lambda x.\> \ifalselight}\>\cst{b} \,\iand\, \cst{p}\>\diff_{\lambda x.\> (\inotlight \cst{p}\>x\>\cst{b}) \iandlight (\cst{p}\>x\>\cst{b} \iorlight \cst{b} \ineqlight \cst{a}),\lambda x.\> \ifalselight}\>\cst{b} \ior \cst{b} \ineq \cst{a}\ceq \itrue
\end{align*}
For better readability, we omit the type arguments and write the parameters of $\diff$ as subscripts.
Applying $\infname{Clausify}$ several times yields
\begin{align*}
&(6)\ \cst{p}\>\diff_{\lambda x.\> (\inotlight \cst{p}\>x\>y) \iandlight (\cst{p}\>x\>y \iorlight y \ineqlight \cst{a}),\lambda x.\> \ifalselight}\>y \ceq \ifalse\\
&(7)\ \cst{p}\>\diff_{\lambda x.\> (\inotlight \cst{p}\>x\>y) \iandlight (\cst{p}\>x\>y \iorlight y \ineqlight \cst{a}),\lambda x.\> \ifalselight}\>y \ceq \itrue \llor y \cneq \cst{a}\\
&(8)\ \cst{p}\>\diff_{\lambda x.\> (\inotlight \cst{p}\>x\>\cst{b}) \iandlight (\cst{p}\>x\>\cst{b} \iorlight \cst{b} \ineqlight \cst{a}),\lambda x.\> \ifalselight}\>\cst{b} \ceq \ifalse\\
&(9)\ \cst{p}\>\diff_{\lambda x.\> (\inotlight \cst{p}\>x\>\cst{b}) \iandlight (\cst{p}\>x\>\cst{b} \iorlight \cst{b} \ineqlight \cst{a}),\lambda x.\> \ifalselight}\>\cst{b} \ceq \itrue \llor \cst{b} \cneq \cst{a}
\end{align*}
By positive simplify-reflect on (9), followed by $\infname{Demod}$ from (1) into the resulting clause, we obtain the clause
\begin{align*}
&(10)\  \cst{p}\>\diff_{\lambda x.\> (\inotlight \cst{p}\>x\>\cst{b}) \iandlight (\cst{p}\>x\>\cst{b} \iorlight \cst{b} \ineqlight \cst{a}),\lambda x.\> \ifalselight}\>\cst{a} \ceq \itrue
\end{align*}
In this derivation, (2), (3), (4), (5), and (9) can be deleted because $\infname{NegExt}$, $\infname{NotFalse}$, $\infname{Clausify}$, $\infname{Demod}$, and positive simplify-reflect can be applied as simplification rules.
\begin{slim}Moreover, (8) can be deleted by \infname{Subsumption} using (6) and a suitable relation $\sqsupset$.\end{slim}%
\begin{full}

To illustrate why condition~\ref{cond:H:trust:unconstrained} 
does not apply to variables that occur in parameters,
we also remove (8), which is against the redundancy criterion
but would be justified by \infname{Subsumption} of (8) by (6) if condition~\ref{cond:H:trust:unconstrained} ignored parameters.
\end{full}%
The following clauses remain:
\begin{align*}
  &(1)\ \cst{b} \approx \cst{a}\\
  &(6)\ \cst{p}\>\diff_{\lambda x.\> (\inotlight \cst{p}\>x\>y) \iandlight (\cst{p}\>x\>y \iorlight y \ineqlight \cst{a}),\lambda x.\> \ifalselight}\>y \ceq \ifalse\\
  &(7)\ \cst{p}\>\diff_{\lambda x.\> (\inotlight \cst{p}\>x\>y) \iandlight (\cst{p}\>x\>y \iorlight y \ineqlight \cst{a}),\lambda x.\> \ifalselight}\>y \ceq \itrue \llor y \cneq \cst{a}\\
  &(10)\  \cst{p}\>\diff_{\lambda x.\> (\inotlight \cst{p}\>x\>\cst{b}) \iandlight (\cst{p}\>x\>\cst{b} \iorlight \cst{b} \ineqlight \cst{a}),\lambda x.\> \ifalselight}\>\cst{a} \ceq \itrue
\end{align*}
\begin{slim}
Assuming that the negative literal in (7) is selected
and that $\cst{b} \succ \cst{a}$,
we now need a $\Sup$ inference from (1) into the variable $y$ in the second literal of (7),
which is possible because $y$ also appears in a parameter in (7).
This $\Sup$ inference yields:
\begin{align*}
&(11)\ \cst{p}\>\diff_{\lambda x.\> (\inotlight \cst{p}\>x\>\cst{b}) \iandlight (\cst{p}\>x\>\cst{b} \iorlight \cst{b} \ineqlight \cst{a}),\lambda x.\> \ifalselight}\>\cst{b} \ceq \itrue \llor \cst{a} \cneq \cst{a}
\end{align*}
The empty clause can then be derived using $\EqRes$, a $\Sup$ inference with (6), and $\FalseElim$.
\end{slim}%
\begin{full}
Assuming that the negative literal in (7) is selected
and that $\cst{b} \succ \cst{a}$,
no core inference rule other than $\Diff$ applies.
Due to the explosive nature of $\Diff$,
it is difficult to predict whether $\Diff$ inferences lead anywhere,
but we conjecture that this is indeed a counterexample
to a redundancy criterion that ignores parameters.

An alternative approach with a stronger redundancy criterion
that does not need to treat parameters specially
may be to enforce superposition inferences into
variables that have other occurrences inside parameters.
In the example above, this would entail a superposition inference
from (1) into the variable $y$ in the second literal of (7),
which would indeed lead to a refutation.
\end{full}
\end{exa}

\section{Soundness}
\label{sec:soundness}

To prove our calculus sound, we need a substitution lemma for terms and clauses,
which our logic fulfills:
\begin{lem}[Substitution Lemma]\label{lem:subst-lemma}
  Let $\theta$ be a substitution, and let $t$ be a term of type $\tau$.
  For any proper interpretation $\III = (\IIIty, \II, \LL)$ and any valuation $\xi$,
  \[
    \interpret{t\theta}{\III}{\xi} = \interpret{t}{\III}{\xi'}
  \]
  where the modified valuation $\xi'$ is defined
  by $\xity'(\alpha) = \interpret{\alpha\theta}{\IIIty}{\xity}$
  for type variables $\alpha$
  and $\xite'(x) = \interpret{x\theta}{\III}{\xi}$ for term variables $x$.
\end{lem}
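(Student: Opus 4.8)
The plan is to prove the equality by induction on term structure, after two preliminary moves. Since the denotation is defined on $\lambda$-terms and, for a proper interpretation, does not depend on the chosen $\beta\eta$-representative, it suffices to fix a $\lambda$-term representative of $t$ together with $\lambda$-term representatives of the images $x\theta$ and to establish the statement for $\lambda$-terms and substitutions whose term images are $\lambda$-terms; this is the substantive content. As the first preliminary I would record the type-level analogue $\interpret{\tau\theta}{\IIIty}{\xity} = \interpret{\tau}{\IIIty}{\xity'}$, which follows by a one-line induction on the structure of $\tau$ directly from the definitions of $\xity'$ and of the denotation of a type; in particular this shows that $\xi'$ is a legitimate valuation for $\III$, since if $x$ has type $\tau$ then $\xite'(x) = \interpret{x\theta}{\III}{\xi}$ lies in $\interpret{\tau\theta}{\IIIty}{\xity} = \interpret{\tau}{\IIIty}{\xity'}$. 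As the second I would use, where convenient, the standard relevance property that $\interpret{u}{\III}{\xi}$ depends only on the values of $\xi$ at the type and term variables occurring in $u$, which is immediate by structural induction together with the requirement imposed on the $\lambda$-designation function $\LL$.

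The main induction is on the size of the $\lambda$-term $t$. The case $t = x\langle\tau\rangle$ is immediate, since $\interpret{x\theta}{\III}{\xi} = \xite'(x) = \interpret{x}{\III}{\xi'}$. For $t = \cst{f}\typeargs{\bar\tau}\params{\bar s}$ we have $t\theta = \cst{f}\typeargs{\bar\tau\theta}\params{\bar s\theta}$, and since the parameters $\bar s$ are themselves $\lambda$-terms (all their De Bruijn indices being bound), unfolding the definition of the denotation and applying the induction hypothesis to each $s_i$ and the type-level lemma to each $\tau_j$ yields the claim. The case $t = s\>u$ is analogous: unfold the denotation of an application and apply the induction hypothesis to $s$ and to $u$.

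The only delicate case is $t = \lambda\langle\upsilon\rangle\>s$, which I would treat by comparing the two sides pointwise and applying properness to each. Fix $a \in \interpret{\upsilon\theta}{\IIIty}{\xity} = \interpret{\upsilon}{\IIIty}{\xity'}$ (equal sets, by the type-level lemma) and a fresh variable $x$. Using $(\lambda\langle\upsilon\rangle\>s)\theta = \lambda\langle\upsilon\theta\rangle\>(s\theta)$ and properness, $\interpret{(\lambda\langle\upsilon\rangle\>s)\theta}{\III}{\xi}(a) = \interpret{(s\theta)\dbsubst{x}}{\III}{\xi[x\mapsto a]}$, and likewise $\interpret{\lambda\langle\upsilon\rangle\>s}{\III}{\xi'}(a) = \interpret{s\dbsubst{x}}{\III}{\xi'[x\mapsto a]}$. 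The pivot is the syntactic identity $(s\theta)\dbsubst{x} = (s\dbsubst{x})\theta$, which holds because $x$ is fresh ($x\theta = x$ and $x$ does not occur in the range of $\theta$) and the term images of $\theta$ are locally closed (so opening the outermost De Bruijn index neither captures nor shifts anything from the range); here the opening variable must be given type $\upsilon\theta$ on the $\theta$-side, which is precisely the type $\theta$ assigns to the $\upsilon$-annotated variable it replaces. Since $s\dbsubst{x}$ is a $\lambda$-term of size no larger than $s$, hence strictly smaller than $t$, the induction hypothesis applies and gives $\interpret{(s\dbsubst{x})\theta}{\III}{\xi[x\mapsto a]} = \interpret{s\dbsubst{x}}{\III}{(\xi[x\mapsto a])'}$. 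It remains to check that $(\xi[x\mapsto a])'$ and $\xi'[x\mapsto a]$ agree on the variables occurring in $s\dbsubst{x}$: on type variables both equal $\xity'$; on $x$ both give $a$ (using $x\theta = x$); and on any other term variable $y$ the left-hand value is $\interpret{y\theta}{\III}{\xi[x\mapsto a]} = \interpret{y\theta}{\III}{\xi} = \xite'(y)$, since $y\theta$ avoids the fresh $x$, by the relevance property. Chaining these equalities closes this case and completes the induction.

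I expect this $\lambda$-abstraction case to be the main obstacle, though the difficulty is mostly bookkeeping: pinpointing exactly where freshness of $x$ and local closedness of the range of $\theta$ are used, commuting the named substitution with the De Bruijn opening, and carrying the correct type annotation on the opening variable. The type-level lemma, the relevance property, and the other cases are routine structural inductions.
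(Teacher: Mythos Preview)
Your proposal is correct and follows essentially the same approach as the paper: induction on term size with the $\lambda$-case handled via properness, the commutation $(s\theta)\dbsubst{x} = (s\dbsubst{x})\theta$, and the induction hypothesis applied to $s\dbsubst{x}$. You are more explicit than the paper about the supporting ingredients (the type-level lemma, the relevance property, and the valuation identity $(\xi[x\mapsto a])' = \xi'[x\mapsto a]$ on the relevant variables), all of which the paper uses silently.
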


\begin{proof}
  By induction on the size of the term $t$.

  \medskip
  \noindent
  \textsc{Case} $t = x\langle \tau \rangle$:
      \begin{align*}
        \interpret{t\theta}{\III}{\xi} &= \interpret{x\theta}{\III}{\xi} \\
        &= \xi'(x) \quad \text{(by the definition of interpretation)} \\
        &= \interpret{x}{\III}{\xi'} \quad \text{(since $x$ is mapped to $\interpret{x\theta}{\III}{\xi}$)} \\
        &= \interpret{t}{\III}{\xi'}
      \end{align*}

  \medskip
  \noindent
  \textsc{Case} $t = \cst{f}\typeargs{\bar{\tau}}\params{\bar{u}}$:
      \begin{align*}
        \interpret{t\theta}{\III}{\xi} &= \interpret{\cst{f}\typeargs{\bar{\tau}\theta}\params{\bar{u}\theta}}{\III}{\xi} \\
        &= \II\left(\cst{f}, \interpret{\bar{\tau}\theta}{\IIIty}{\xity}, \interpretaxi{\bar{u}\theta}\right) \quad \text{(by definition)} \\
        &= \II\left(\cst{f}, \interpret{\bar{\tau}}{\IIIty}{\xity'}, \interpret{\bar{u}}{\III}{\xi'}\right) \quad \text{(by induction hypothesis)} \\
        &= \interpret{\cst{f}\typeargs{\bar{\tau}}\params{\bar{u}}}{\III}{\xi'} \quad \text{(by definition)} \\
        &= \interpret{t}{\III}{\xi'}
      \end{align*}

  \medskip
  \noindent
  \textsc{Case} $t = s\>v$:
      \begin{align*}
        \interpret{t\theta}{\III}{\xi} &= \interpret{s\theta \> v\theta}{\III}{\xi} \\
        &= \interpret{s\theta}{\III}{\xi} \left( \interpret{v\theta}{\III}{\xi} \right) \quad \text{(by definition)} \\
        &= \interpret{s}{\III}{\xi'} \left( \interpret{v}{\III}{\xi'} \right) \quad \text{(by induction hypothesis)} \\
        &= \interpret{s \> v}{\III}{\xi'} \quad \text{(by definition)} \\
        &= \interpret{t}{\III}{\xi'}
      \end{align*}

  \medskip
  \noindent
  \textsc{Case} $t = \lambda\langle \tau \rangle\> u$:
      \begin{align*}
        \interpret{t\theta}{\III}{\xi}(a) 
        &= \interpret{\lambda\langle \tau\theta \rangle\> u\theta}{\III}{\xi}(a) \\
        &= \interpret{u\theta \dbsubst{x}}{\III}{(\xity,\xite[x\mapsto a])} \quad\text{(since $\III$ is proper; for some fresh variable $x$)}\\
        &= \interpret{u\dbsubst{x}\theta }{\III}{(\xity,\xite[x\mapsto a])}\\
        &= \interpret{u\dbsubst{x} }{\III}{(\xity',\xite'[x\mapsto a])} \quad\text{(by induction hypothesis)}\\
        &= \interpret{\lambda\langle \tau \rangle\> u}{\III}{\xi'}(a) \quad \text{(since $\III$ is proper)} \\
        &= \interpret{t}{\III}{\xi'}(a)
      \end{align*}
\end{proof}

\begin{lem}[Substitution Lemma for Clauses]\label{lem:subst-lemma-clause}
  Let $\theta$ be a substitution, and let $C$ be a clause.
  For any proper interpretation $\III = (\IIIty, \II, \LL)$ and any valuation $\xi$,
  $C\theta$ is true \wrt\ $\III$ and $\xi$ if and only if
  $C$ is true \wrt\ $\III$ and $\xi'$,
  where the modified valuation $\xi'$ is defined
  by $\xity'(\alpha) = \interpret{\alpha\theta}{\IIIty}{\xity}$
  for type variables $\alpha$
  and $\xite'(x) = \interpret{x\theta}{\III}{\xi}$ for term variables $x$.
\end{lem}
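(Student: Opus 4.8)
The plan is to reduce the statement to the term-level substitution lemma (Lemma~\ref{lem:subst-lemma}), handled literal by literal. First I would recall the relevant definitions: a clause $C = L_1 \llor \cdots \llor L_n$ is true \wrt\ $\III$ and $\xi$ exactly when at least one of its literals is true, a positive literal $s \ceq t$ is true \wrt\ $\III$ and $\xi$ precisely when $\interpret{s}{\III}{\xi} = \interpret{t}{\III}{\xi}$, and a negative literal $s \cneq t$ precisely when $\interpret{s}{\III}{\xi} \neq \interpret{t}{\III}{\xi}$. Moreover, applying $\theta$ to $C$ amounts to replacing each literal $s \doteq t$ by $s\theta \doteq t\theta$ while keeping its sign and the multiset structure of $C$ unchanged. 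Hence it suffices to prove the per-literal claim: $s\theta \doteq t\theta$ is true \wrt\ $\III$ and $\xi$ if and only if $s \doteq t$ is true \wrt\ $\III$ and $\xi'$.

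For the per-literal claim I would apply Lemma~\ref{lem:subst-lemma} twice, once to $s$ and once to $t$, obtaining $\interpret{s\theta}{\III}{\xi} = \interpret{s}{\III}{\xi'}$ and $\interpret{t\theta}{\III}{\xi} = \interpret{t}{\III}{\xi'}$. The key point is that the modified valuation $\xi'$ produced by Lemma~\ref{lem:subst-lemma} is the same in both applications, since its definition depends only on $\theta$, $\III$, and $\xi$ and not on the term being interpreted. With these two equalities in hand, the positive case follows because $\interpret{s\theta}{\III}{\xi} = \interpret{t\theta}{\III}{\xi}$ holds iff $\interpret{s}{\III}{\xi'} = \interpret{t}{\III}{\xi'}$, and the negative case is the contrapositive of the same equivalence. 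Combining over all literals, $C\theta$ has a true literal \wrt\ $\III$ and $\xi$ if and only if $C$ has one \wrt\ $\III$ and $\xi'$, which is exactly the assertion.

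I do not expect a genuine obstacle here: the lemma is essentially a corollary of Lemma~\ref{lem:subst-lemma} obtained by lifting from terms to literals to clauses. The only points requiring a moment's care are that the correspondence between the literals of $C$ and those of $C\theta$ preserves signs (so the positive/negative distinction is respected) and that $\xi'$ is independent of which side of which literal we inspect; both are immediate from the definitions. One could compress the whole argument into a single structural remark, but I would prefer to spell out the positive and negative cases explicitly so that the invocation of Lemma~\ref{lem:subst-lemma} is transparent.
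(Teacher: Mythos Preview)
Your proposal is correct and follows essentially the same route as the paper: unfold the semantics of clauses and literals, apply Lemma~\ref{lem:subst-lemma} to each side of each literal, and conclude by the resulting equality of denotations. The paper's proof is slightly more compressed but makes exactly the same moves.
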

\begin{proof}
By definition of the semantics of clauses,
$C\theta$ is true \wrt\ $\III$ and $\xi$ if and only if
one of its literals is true \wrt\ $\III$ and $\xi$.
By definition of the semantics of literals,
a positive literal $s\theta\ceq t\theta$ (resp.\ negative literal $s\theta\cneq t\theta$) 
of $C\theta$ is
true \wrt\ $\III$ and $\xi$ if and only if
$\interpretaxi{s\theta}$ and $\interpretaxi{t\theta}$ are equal (resp.\ different).
By Lemma~\ref{lem:subst-lemma},
$\interpretaxi{s\theta}$ and $\interpretaxi{t\theta}$ are equal (resp.\ different)
if and only if $\interpret{s}{\III}{\xi'}$ and $\interpret{t}{\III}{\xi'}$ are equal (resp.\ different)---i.e.,
if and only if a literal $s\ceq t$ (resp.\ $s\cneq t$) in $C$ is true \wrt\ $\III$ and $\xi'$.
This holds if and only if $C$ is true \wrt\ $\III$ and $\xi'$.
\end{proof}

\begin{thm}
  All core inference rules are sound \wrt\ $\soundmodels$ (Definition~\ref{def:H:diff-aware}).
  All core inference rules except for $\Ext$, $\FluidExt$, and $\Diff$ are also sound \wrt\ $\models$.
  This holds even when ignoring order, selection, and eligibility conditions.
\end{thm}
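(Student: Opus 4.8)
The statement is a standard soundness theorem, so the plan is to handle the inference rules one at a time (or in small families), in each case exhibiting for every ground instance — or, after invoking the substitution lemmas, for every valuation — a literal of the conclusion that must be true whenever all premises are true. Since the order, selection, and eligibility side conditions are explicitly to be ignored, I would strip each rule down to its "logical core": the unifier $\sigma$ (from $\csuupto$ or $\csu$) together with the constraint set it produces. The first observation to record is the soundness half of Definition~\ref{def:csu-upto} / Definition~\ref{def:csu}: if $(\sigma,U)\in\csuupto(S,\dots)$ and $\rho$ is a unifier of $U$, then $\sigma\rho$ unifies $S$ and the extra equations; equivalently, for a ground instance of the conclusion with constraint $U$ made true by a grounding substitution $\theta$, the composite $\sigma\theta$ makes $S$ true and equates the relevant terms. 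This reduces every rule to the familiar first-order argument applied pointwise to ground instances, with the substitution lemmas (Lemmas~\ref{lem:subst-lemma} and~\ref{lem:subst-lemma-clause}) bridging between "$C\sigma$ true under $\xi$" and "$C$ true under the shifted valuation $\xi'$".

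**Rule-by-rule sketch.** For \Sup\ and \FluidSup: fix a proper interpretation $\III$ and valuation $\xi$ satisfying both premises; if the negative/positive eligibility literal of the conclusion is already true we are done, otherwise $\interpret{t\sigma}{\III}{\xi}=\interpret{t'\sigma}{\III}{\xi}$ (resp.\ $\interpret{z\sigma\,t\sigma}{\III}{\xi}=\interpret{z\sigma\,t'\sigma}{\III}{\xi}$) from $D$, and congruence of equality in green/arbitrary contexts transports this into $C$, yielding the rewritten literal; the extensionality-free rules here are sound even for plain $\models$. For \EqRes: unifying $u$ and $u'$ makes $u\sigma\cneq u'\sigma$ false, so some literal of $C'\sigma$ is true. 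For \EqFact: the classical case split on whether $\interpret{v\sigma}{}{}=\interpret{v'\sigma}{}{}$. For \Clausify, \BoolHoist, \LoobHoist, \FalseElim\ and their fluid variants: use interpretation clauses \ref{item:interpretation:true}--\ref{item:interpretation:neq} and the fact that the Boolean domain is $\{0,1\}$ — e.g.\ for \BoolHoist, a green subterm $u\sigma$ of Boolean type denotes either $\interpret{\ifalse}{}{}$ or $\interpret{\itrue}{}{}$, giving one disjunct of the conclusion. For \ArgCong: apply both sides of $s\sigma\eq s'\sigma$ to the denotation of the fresh $x$; sound for $\models$. For \Ext, \FluidExt, \Diff: these are exactly where $\diff$-awareness is used — the extensionality axiom in Definition~\ref{def:H:diff-aware} says $z\,(\diff(z,y))\ne y\,(\diff(z,y))$ or $z=y$, which is precisely the conclusion shape (after unification / context insertion), so these three are sound \wrt\ $\soundmodels$ but visibly fail for $\models$ (any interpretation interpreting $\diff$ pathologically is a counterexample).

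**Main obstacle.** The routine cases are genuinely routine; the one place that needs care is the bookkeeping around constraints and the $\csuupto$ soundness clause, combined with $\beta\eta$-normalization. Concretely: a conclusion like $(D'\llor\greensubterm{C}{t'})\sigma\,\constraint{U}$ is declared true iff $(D'\llor\greensubterm{C}{t'})\sigma\llor\bigllor_i (U_i^{\mathrm{l}}\cneq U_i^{\mathrm{r}})$ is true, so to refute it under a model of the premises I must assume all the $U$-disjuncts are false, i.e.\ $U$ is "satisfied" by the current valuation, and only then may I invoke the soundness of $\csuupto$ to conclude $\sigma$ extended appropriately unifies the original constraint set $S$ and the rule's equation. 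Making this precise uniformly — moving between the syntactic "$U\theta$ true" of Definition~\ref{def:csu-upto} and the semantic "all $U$-literals false under $\xi$", and ensuring the substitution lemma is applied to the right composite substitution — is the only delicate step; I would isolate it as a short preliminary observation ("if $\III,\xi\models \lnot U_i$ for all $i$ then there is a grounding-style argument showing $\sigma$ solves $S$") and then the per-rule arguments become the textbook superposition soundness proofs. The \FluidSup/\FluidExt/\FluidBoolHoist\ rules add nothing new semantically: the fresh variable $z$ ranges over the full domain, and $z\sigma\ne\lambda\,\DB{0}$ plus $(z\,t)\sigma\ne(z\,t')\sigma$ are merely non-triviality guards that play no role in soundness.
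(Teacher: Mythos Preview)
Your strategy—substitution lemma first, then rule-by-rule case analysis using congruence at green positions, the Boolean interpretation axioms \ref{item:interpretation:true}--\ref{item:interpretation:neq}, and $\diff$-awareness where needed—is exactly the paper's approach, and the paper's proof is no more detailed than your sketch on the per-rule arguments.

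You have also correctly identified the only nonroutine point: bridging the \emph{semantic} hypothesis ``all $U$-disequalities are false under $\III,\xi$'' to something that lets you invoke the \emph{syntactic} soundness clause of Definition~\ref{def:csu-upto}. However, your proposed ``preliminary observation'' does not follow from that definition. Here is a counterexample: for distinct ground constants $\cst{a},\cst{b},\cst{c},\cst{d}$, the set $P=\{(\mathrm{id},\{\cst{a}\equiv \cst{b}\})\}$ is a valid $\csuupto(\{\cst{g}\,\cst{c}\equiv \cst{g}\,\cst{d}\})$---both clauses of Definition~\ref{def:csu-upto} hold vacuously since neither constraint set has a syntactic unifier---yet \EqRes\ with this $\csuupto$ applied to the premise $C'\llor \cst{g}\,\cst{c}\cneq \cst{g}\,\cst{d}$ yields $C'\,\constraint{\cst{a}\equiv \cst{b}}$, and an interpretation with $\interpret{\cst{a}}{}{}=\interpret{\cst{b}}{}{}$, $\interpret{\cst{g}\,\cst{c}}{}{}\ne\interpret{\cst{g}\,\cst{d}}{}{}$, and $C'$ false falsifies the conclusion while satisfying the premise. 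So there is no ``grounding-style argument showing $\sigma$ solves $S$'' in general.

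What the soundness argument actually requires is a \emph{semantic} soundness property for $\csuupto$: whenever $(\sigma,U)\in\csuupto(S)$ and $\III,\xi$ semantically satisfies all constraints in $U$, it also semantically satisfies all constraints in $S\sigma$. The concrete bounded-Huet procedure of Section~\ref{ssec:unification} enjoys this property (type unification, rigid-rigid decomposition, imitation, and projection are all congruence-sound transformations of the constraint set), but Definition~\ref{def:csu-upto} by itself does not imply it. The paper's own proof glosses over exactly the same point with the phrase ``the conditions that $\sigma$ is a unifier where applicable,'' so this is arguably a gap in the theorem's hypotheses rather than in your overall strategy; you should either add the semantic property as an explicit assumption on $\csuupto$ or restrict the claim to the concrete procedure, rather than assert a general preliminary lemma that fails.
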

\begin{proof}
  We fix an inference and an interpretation $\III$ that is a model of the premises.
  For $\Ext$, $\FluidExt$, and $\Diff$ inferences, we assume that $\III$ is $\diff$-aware.
  We need to show that it is also a model of the conclusion.
  By Lemma~\ref{lem:subst-lemma-clause}, $\III$ is a model of the $\sigma$-instances of the premises as well, where
  $\sigma$ is the substitution used for the inference.
  From the semantics of our logic, it is easy to see that congruence
  holds at green positions and at the left subterm of an application.
  To show that $\III$ is a model of the conclusion, it suffices to show that the conclusion
  is true under $\III,\xi$ for all valuations $\xi$.
  
  For most rules, it suffices to make
  distinctions on the truth under $\III,\xi$ of the literals of the $\sigma$-instances of the premises,
  to consider the conditions that $\sigma$ is a unifier where applicable,
  and to apply congruence.
  For $\BoolHoist$, $\LoobHoist$, $\FalseElim$, $\Clausify$, $\FluidBoolHoist$, $\FluidLoobHoist$,
  we also use the fact that $\III$ interprets logical symbols correctly.
  For $\Ext$, $\FluidExt$, and $\Diff$,
  we also use the assumption that $\III$ is $\diff$-aware.
\end{proof}

\section{Refutational Completeness}
\label{sec:refutational-completeness}

\begin{full}

Superposition is a saturation-based calculus.
Provers that implement it start from an initial clause set $N_0$
and 
repeatedly add new clauses by performing inferences
or remove clauses by determining them to be redundant.
In the limit, this process results in a (possibly infinite)
set $N_\infty$ of persistent clauses.
Assume that inferences are performed in a fair fashion; i.e., no nonredundant
inference is postponed forever.
Then the set $N_\infty$ is saturated, meaning that all
inferences are redundant (for example because their conclusion is in the set).
Refutational completeness is the property that
if $N_\infty$ does not contain the empty clause, $N_0$ has a model.
Since refutational completeness is the only kind of completeness that interests
us in this article, we will also refer to it as ``completeness.''

Due the role of constraints and parameters in our calculus,
our completeness result, stated in Corollary~\ref{cor:H:completeness}, makes two additional assumptions:
It assumes that
the clauses in $N_0$
have no constraints and do not contain constants with parameters.
And, instead of the usual assumption that $N_\infty$ does not contain the empty clause,
we assume that $N_\infty$ does not contain an empty clause with satisfiable constraints.

\subsection{Proof Outline}

The idea of superposition completeness proofs in general is the following:
We assume that $N_\infty$ does not contain the empty clause.
We construct a term rewrite system derived from the ground instances
of $N_\infty$.
We view this system as an interpretation $\III$ and show that it is
a model of the ground instances and thus of $N_\infty$ itself.
Since only redundant clauses are removed during saturation,
$\III$ must also be a model of $N_0$.

Completeness proofs of \emph{constrained} superposition calculi,
including our the completeness proof of our calculus,
must proceed differently.
The constraints prevent us from showing $\III$ to be a model
of all ground instances of $N_\infty$.
Instead, we restrict ourselves to proving that $\III$ is a model
of the variable-irreducible ground instances of $N_\infty$.
Roughly speaking, a variable-irreducible ground instance is one where
the terms used to instantiate variables
are irreducible \wrt\ the constructed term rewrite system.
The notion of redundancy must be based on
the notion of variable-irreducible ground instances as well,
so that if $\III$ is
a model
of the variable-irreducible ground instances of $N_\infty$,
it is also a model of the
variable-irreducible ground instances of $N_0$.
Assuming that the initial clauses $N_0$ do
not have constraints, $\III$
is then a model of all ground instances of $N_0$
because every ground instance has a corresponding
variable-irreducible instance with the same truth value in $\III$.
It follows that 
$\III$ is a model of $N_0$.

\begin{figure}[tb]
  \begin{tikzpicture}[every node/.style={outer sep=1mm}]
    \baselineskip=0pt
    \node (H)  at (0,10) [align=center] {$\levelH$\\higher-order\\constrained clauses};
    \node (G)  at (0,8) [align=center] {$\levelG$\\higher-order closures};
    \node (PG) at (0,6) [align=center] {$\levelPG$\\partly substituted\\higher-order closures};
    \node (IPG) at (0,4) [align=center] {$\levelIPG$\\indexed partly substituted\\higher-order closures};
    \node (PF)  at (0,2) [align=center] {$\levelPF$\\partly substituted\\ground first-order closures};
    \node (F)  at (0,0) [align=center] {$\levelF$\\ground first-order clauses};
    \draw[->] (H) -- (G)
      node[midway,right] {$\mapGonly$};
    \draw[->] (G) -- (PG)
    node[midway,right] {$\mapPonly$};
    \draw[->] (PG) -- (IPG)
    node[midway,right] {$\mapIonly$};
    \draw[->] (IPG) -- (PF)
      node[midway,right] {$\mapFonly$};
    \draw[->] (PF) -- (F)
      node[midway,right] {$\mapTonly$};
    \node (Hex)  at (7,10) [align=center] 
      {$x \>(\diff\typeargs{\alpha,\alpha}(\cst{g}\typeargs{\alpha},\cst{h}\typeargs{\alpha})\> z \ceq \cst{c} $\\
      $\constraint{x\>(\cst{g}\typeargs{\alpha}\>y) \equiv x\>(\cst{g}\typeargs{\alpha}\>\cst{a}\typeargs{\alpha}),\ z\equiv \lambda\typeargs{\alpha}\>\cst{k}\typeargs{\alpha}\>\DB{0}\>y}$};
    \node (Gex)  at (7,8) [align=center]
      {$x \>(\diff\typeargs{\iota,\iota} (\cst{g}\typeargs{\alpha},\cst{h}\typeargs{\alpha})) z \ceq \cst{c}  $\\
      $\closure\ \{\alpha\mapsto\iota,\ x \mapsto \cst{f}\typeargs{\iota},\ y \mapsto \cst{a}\typeargs{\iota},\ z \mapsto \lambda\typeargs{\iota}\>\cst{k}\typeargs{\iota}\>\DB{0}\>\cst{a}\typeargs{\iota}\}$};
    \node (PGex) at (7,6) [align=center] 
      {$\cst{f}\typeargs{\iota} \>(\diff\typeargs{\iota,\iota} (\cst{g}\typeargs{\iota},\cst{h}\typeargs{\iota})) \>(\lambda\typeargs{\iota}\>\cst{k}\typeargs{\iota}\>\DB{0}\>z_{1.2}) \ceq \cst{c}$\\
      $\closure\ \{y_1 \mapsto \cst{a}\typeargs{\iota},\ z_{1.2} \mapsto \cst{a}\typeargs{\iota}\}$};
    \node (IPGex) at (7,4) [align=center]
      {$\cst{f}\typeargs{\iota} \>(\diff^{\iota,\iota}_{(\cst{g}\typeargs{\iota},\cst{h}\typeargs{\iota})}) \>(\lambda\typeargs{\iota}\>\cst{k}\typeargs{\iota}\>\DB{0}\>z_{1.2})  \ceq \cst{c}$\\
      $\closure\ \{y_1 \mapsto \cst{a}\typeargs{\iota},\ z_{1.2} \mapsto \cst{a}\typeargs{\iota}\}$};
    \node (PFex)  at (7,2) [align=center]
      {$\cst{f}^\iota \>(\diff^{\iota,\iota}_{(\cst{g}\typeargs{\iota},\cst{h}\typeargs{\iota})},\>\cst{fun}_{\lambda\typeargs{\iota}\>\cst{k}\typeargs{\iota}\>\DB{0}\>\square}(z_{1.2})) \ceq \cst{c}$\\
      $\closure\ \{y_1 \mapsto \cst{a}^\iota_0,\  z_{1.2} \mapsto \cst{a}^\iota_0\}$};
    \node (Fex)  at (7,0) [align=center] {$\cst{f}^\iota_1 \>(\diff^{\iota,\iota}_{(\cst{g}\typeargs{\iota},\cst{h}\typeargs{\iota}),0},\> \cst{fun}_{\lambda\typeargs{\iota}\>\cst{k}\typeargs{\iota}\>\DB{0}\>\square}(\cst{a}^\iota_0))\ceq \cst{c}$};
    \draw[->] (Hex) -- (Gex)
      node[midway,right] {$\mapGonly$};
    \draw[->] (Gex) -- (PGex)
    node[midway,right] {$\mapPonly$};
    \draw[->] (PGex) -- (IPGex)
    node[midway,right] {$\mapIonly$};
    \draw[->] (IPGex) -- (PFex)
      node[midway,right] {$\mapFonly$};
    \draw[->] (PFex) -- (Fex)
      node[midway,right] {$\mapTonly$};
  \end{tikzpicture}
  \caption{Overview of the levels\label{fig:overview-levels}}
\end{figure}
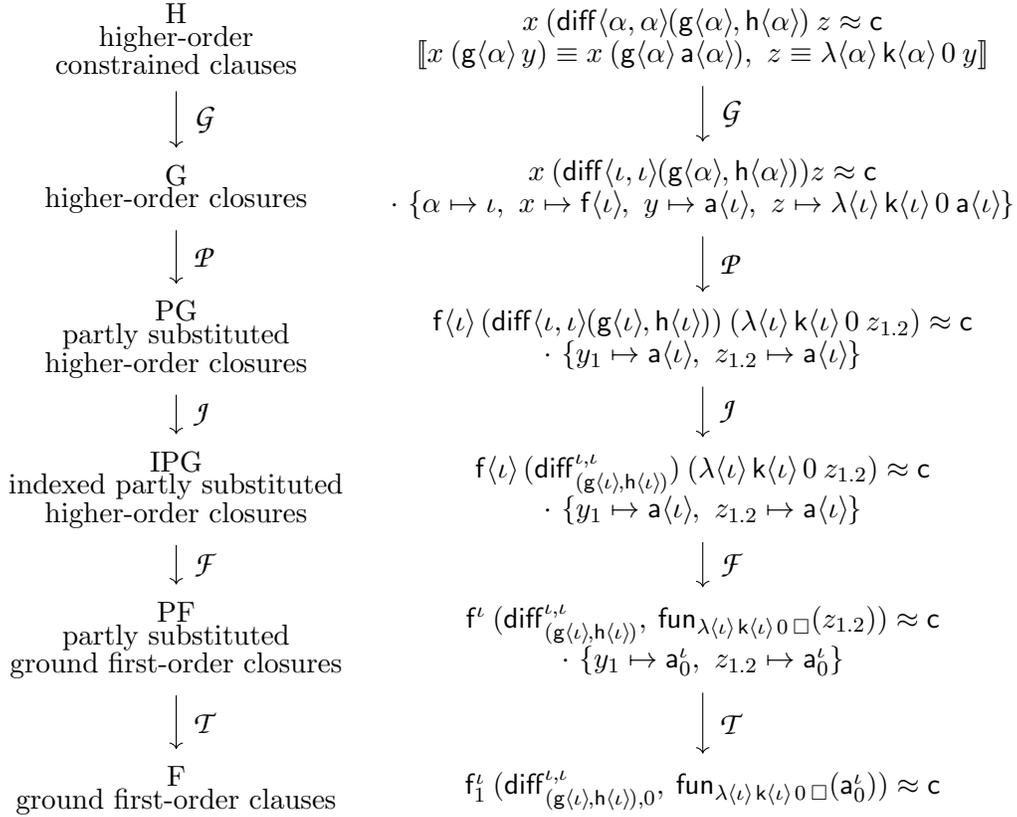
  
\smallskip
To separate concerns, our proof is structured as a sequence of six
levels, most of which have their own
logic, calculus, redundancy criterion, and completeness property.
The levels are called
$\levelH$, $\levelG$, $\levelPG$, $\levelIPG$, $\levelPF$, and $\levelF$.
They are connected by functions encoding clauses from one level to the next.

The level $\levelH$ is the level of higher-order constrained clauses, 
using the logic described
in Section~\ref{sec:logic} and the calculus described in Section~\ref{sec:calculus}.
Our ultimate goal is to prove completeness on this level.

The level $\levelG$ is the level of higher-order closures,
where a closure $C\closure\theta$ is a pair consisting of a clause $C$ and
a grounding substitution $\theta$.
The function $\mapGonly$
maps each clause from level $\levelH$ to a corresponding set of closures on
level $\levelG$ 
using all possible grounding substitutions.

The level $\levelPG$ is the level of \emph{partly substituted}
higher-order closures. It is the fragment of $\levelG$
that contains no type variables and no functional variables.
The map $\mapPonly$ encodes closures from $\levelG$
into level $\levelPG$ by applying a carefully crafted
substitution to functional variables.

The level $\levelIPG$ is the level of \emph{indexed}
partly substituted closures. It modifies the signature of the previous levels
by replacing each symbol with parameters
$\cst{f}\oftypedecl\forallty{\tuple{\alpha}_m}\>\tuple{\tau}_n\fofun\tau$
by a collection of symbols
$\cst{f}^{\tuple{\upsilon}_m}_{\tuple{u}_n}\oftype\tau$
for each tuple of
types $\tuple{\upsilon}_m$ and
each tuple of ground terms $\tuple{u}_n\oftype\tuple{\tau}_n$.
The map $\mapIonly$ encodes closures from $\levelPG$
into $\levelIPG$ by moving 
type arguments into the superscript indices $\tuple{\upsilon}_m$
parameters into the subscript
indices $\tuple{u}_n$.

The level $\levelPF$ is the level of partly substituted
ground first-order closures.
Its logic is the one described in Section~\ref{ssec:redundancy}
except with variables and closures.
We extend the encoding $\mapFonly$ (Definition~\ref{def:fol})
with variables, yielding an encoding from $\levelIPG$
into $\levelPF$.

The level $\levelF$ is the level of first-order clauses.
It uses the same logic as $\levelPF$ but uses ground clauses
instead of closures. The map $\mapTonly$ connects the two by
mapping a closure $C\closure\theta$ to the clause $C\theta$.

Figure~\ref{fig:overview-levels} gives
an overview of the hierarchy of levels
and an example of a clause instance across the levels.

\end{full}

\subsection{Logics and Encodings}

In our completeness proof, we use two higher-order signatures and one first-order signature.

Let $\SigmaH$ be the higher-order signature
used by the calculus described in Section~\ref{sec:calculus}.
It is required to contain a symbol
$\cst{diff}\oftypedecl\Pi\alpha,\beta.\>(\alpha\fun\beta, \alpha\fun\beta) \fofun \alpha$

Let $\SigmaI$ be the signature obtained from
$\SigmaH$ in the following way:
We replace each constant with parameters
${\cst{f}: \forallty{\bar{\alpha}_m} \bar{\tau}_n \Rightarrow \tau}$ in $\SigmaH$
with a family of constants
$\cst{f}^{\tuple{\upsilon}_m}_{\tuple{t}_n}: \tau$, indexed by all possible ground types
$\tuple{\upsilon}_m$ and ground terms
$\tuple{t}_n \in \TT_\mathrm{ground}(\SigmaH)$ of type $\tuple{\tau}_n\{\bar{\alpha}_m\mapsto\tuple{\upsilon}_m\}$.
Constants without parameters (even those with type arguments) are left as they are.

In some contexts,
it is more convenient to
use terms from $\TT_\mathrm{ground}(\SigmaI)$ instead of $\TT_\mathrm{ground}(\SigmaH)$
in the subscripts $t_i$ of the constants $\cst{f}^{\tuple{\upsilon}_m}_{\tuple{t}_n}$.
We follow this convention:
\begin{conv}\label{convention:indexing-subscripts}
  In the subscripts $t_i$ of constants $\cst{f}^{\tuple{\upsilon}_m}_{\tuple{t}_n} \in \SigmaI$,
  we identify each term of the form
  $\cst{f}\typeargs{\tuple{\upsilon}_m}(\tuple{t}_n) \in \TT_\mathrm{ground}(\SigmaH)$
  with the term
  $\cst{f}^{\tuple{\upsilon}_m}_{\tuple{t}_n} \in \TT_\mathrm{ground}(\SigmaI)$, 
  whenever $n > 0$.
\end{conv}

Similarly, the first-order signatures $\mapF{\SigmaI}$ and $\mapF{\SigmaH}$
as defined in Section~\ref{ssec:redundancy}
are almost identical, the only difference being that
the subscripts $t$ of the symbols $\cst{fun}_t \in \mapF{\SigmaH}$
may contain symbols with parameters,
whereas the subscripts $t$ of the symbols $\cst{fun}_t \in \mapF{\SigmaI}$
may not. To repair this mismatch, we adopt the following convention
using the obvious correspondence between the symbols in $\mapF{\SigmaH}$ and $\mapF{\SigmaI}$:
\begin{conv}\label{convention:first-order-signatures}
In the subscripts of constants $\cst{fun}_t$ in $\mapF{\SigmaH}$ and $\mapF{\SigmaI}$,
we identify each term of the form
$\cst{f}\typeargs{\tuple{\upsilon}_m}(\tuple{t}_n) \in \TT_\mathrm{ground}(\SigmaH)$
with the term
$\cst{f}^{\tuple{\upsilon}_m}_{\tuple{t}_n} \in \TT_\mathrm{ground}(\SigmaI)$, 
whenever $n > 0$.
Using this identification,
we can consider the first-order signatures $\mapF{\SigmaH}$ and $\mapF{\SigmaI}$
to be identical.
\end{conv}

\begin{full}
Our completeness proof uses two sets of variables.
Let  $\VVH$ be the set of variables 
used by the calculus described in Section~\ref{sec:calculus}.
Based on $\VVH$, we define the variables $\VVPG$ of the
$\levelPG$ level as
\[\VVPG = \VVH \cup \{y_p\langle\tau\rangle \mid y \in \VVH,\ p \text{ a list of natural numbers},\ \tau\text{ a nonfunctional type} \}\]
\end{full}

The table below summarizes our completeness proof's \slimfull{four}{six} levels, each with a set
of terms and a set of clauses. We write $\TT_\mathrm{X}$ for the set of terms
and $\CC_\mathrm{X}$ for the set of clauses of a given level $X$:\strut
\begin{full}
\begin{center}
\begin{tabular}{p{1cm}p{8cm}p{5cm}}
 Level & Terms & Clauses \\
 \hline
$\levelF$ &
ground first-order terms over $\mapF{\SigmaI}$
 & clauses over $\termsF$ \\
$\levelPF$ & 
first-order terms over $\mapF{\SigmaI}$ and $\VVPG$
that do not contain variables whose type is of the form 
$\tau \fun \upsilon$
 & closures over $\termsPF$ \\
$\levelIPG$ & $\{t \in \TT(\SigmaI, \VVPG)\mid$
 $t$ contains neither type variables nor functional variables$\}$ & closures over $\TT_\levelIPG$  \\
$\levelPG$ & $\{t \in \TT(\SigmaH, \VVPG)\mid$
 $t$ contains neither type variables nor functional variables$\}$ & closures over $\TT_\levelPG$ \\
$\levelG$ & $\TT(\SigmaH, \VVH)$ & closures over $\TT_\levelG$ \\
$\levelH$ & $\TT(\SigmaH, \VVH)$ & constrained clauses over $\TT_\levelH$
\end{tabular}
\end{center}
\end{full}
\begin{slim}
  \begin{center}
    \begin{tabular}{p{1cm}p{8cm}p{5cm}}
     Level & Terms & Clauses \\
     \hline
    $\levelF$ &
    ground first-order terms over $\mapF{\SigmaI}$
     & clauses over $\termsF$ \\
    $\levelIPG$ & $\TT_\mathrm{ground}(\SigmaI)$ & clauses over $\TT_\levelIPG$  \\
    $\levelG$ & $\TT_\mathrm{ground}(\SigmaH)$ & clauses over $\TT_\levelG$ \\
    $\levelH$ & $\TT(\SigmaH)$ & clauses over $\TT_\levelH$
    \end{tabular}
    \end{center}
\end{slim}

\begin{slim}
\subsubsection{First-Order Encoding}
We use the map $\mapFonly$ defined in Definition~\ref{def:fol}
both as an encoding from $\termsIPG$/$\clausesIPG$ to $\termsPF$/$\clausesPF$
and as an encoding from $\termsPG$/$\clausesPG$ to $\termsPF$/$\clausesPF$.
Potential for confusion is minimal because the two encodings coincide on the values that are in the domain of both.
\end{slim}
\begin{full}
\subsubsection{First-Order Encodings}
The transformation $\mapTonly$ from $\clausesPF$ to $\clausesF$ is simply defined as $\mapT{C\closure\theta} = C\theta$.
We also define a bijective encoding from 
$\termsIPG$ into
$\termsPF$
and from 
$\clausesIPG$ into 
$\clausesPF$.
It is very similar to the encoding $\mapFonly : \TT_\mathrm{ground}(\SigmaH) \to \termsF$
defined in Definition~\ref{def:fol},
but also encodes variables and does not encode parameters.
We reuse the name $\mapFonly$ for this new encoding.
Potential for confusion is minimal because the two encodings coincide on the values that are in the domain of both.

\begin{defi}[First-Order Encoding $\mapFonly$]\label{def:IPG:mapF}
We define $\mapFonly : \termsIPG \to \termsPF$ recursively as follows:
If $t$ is functional,
then
let $t'$ be the expression
obtained by replacing each
outermost proper yellow subterm in $t$
by the placeholder symbol $\square$,
and let $\mapF{t} = \cst{fun}_{t'}(\mapF{\tuple{s}_{n}})$, where $\tuple{s}_{n}$
are the replaced subterms in order of occurrence.
If $t$ is a variable $x$, we define $\mapF{t} = x$.
Otherwise, $t$ is of the form $\cst{f}\langle\bar\tau\rangle\> \tuple{t}_m$
and we define $\mapF{t}
= \cst{f}^{\tuple{\tau}} (\mapF{\tuple{t}_1},\dots,\mapF{\tuple{t}_m})$. 

Applied to a closure $C\closure\theta \in\clausesIPG$, the function $\flooronly$
is defined by $\floor{C\closure\theta} = \floor{C}\closure\floor{\theta}$,
where $\mapFonly$ maps each side of each literal
and each term in a substitution individually.
\end{defi}

\begin{lem}\label{lem:IPG:F-bijection}
The map $\mapFonly$ is a bijection between $\termsIPG$ and $\termsPF$
and between $\clausesIPG$ and $\clausesPF$.
\end{lem}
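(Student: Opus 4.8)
The plan is to extend the proof of Lemma~\ref{lem:fol-bijection} to the two new phenomena present at level $\levelIPG$: term variables, and the $\cst{fun}_{t'}$ case with its $\square$-placeholders. As there, I would establish injectivity and surjectivity of $\mapFonly$ on terms separately and then lift the result to closures.

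\emph{Injectivity.} I would show that $\mapF{s} = \mapF{t}$ implies $s = t$ for all $s,t \in \termsIPG$ by structural induction on $\mapF{s}$, according to the form of $\mapF{s}$ given by Definition~\ref{def:IPG:mapF}. If $\mapF{s}$ is a variable, then both $s$ and $t$ are that variable, since only variables are sent to variables and neither $\cst{f}^{\tuple\tau}(\ldots)$ nor $\cst{fun}_{t'}(\ldots)$ is a bare variable. If $\mapF{s} = \cst{f}^{\tuple\tau}(\Tuple r)$, then $s$ and $t$ are both nonfunctional of the form $\cst{f}\typeargs{\tuple\tau}\>\Tuple{\cdot}$ with componentwise $\mapFonly$-equal arguments, and the claim follows from the induction hypothesis. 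If $\mapF{s} = \cst{fun}_{t'}(\Tuple r_n)$, then $s$ and $t$ are functional, their $\square$-skeletons both coincide with $t'$ (the function symbol records the skeleton), the induction hypothesis on the $n$ arguments shows that the subterms of $s$ and of $t$ replaced by $\square$ agree, and since $s$ and $t$ are recovered by inserting these subterms into the $n$ holes of the common skeleton, $s = t$.

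\emph{Surjectivity.} I would define a candidate inverse $\mapFonly^{-1}$ on $\termsPF$ by recursion mirroring Definition~\ref{def:IPG:mapF} --- a variable to itself, $\cst{f}^{\tuple\tau}(\Tuple r)$ to $\cst{f}\typeargs{\tuple\tau}\>\mapFonly^{-1}(\Tuple r)$, and $\cst{fun}_{t'}(\Tuple r_n)$ to the term obtained by inserting $\mapFonly^{-1}(r_1),\dots,\mapFonly^{-1}(r_n)$ into the $n$ $\square$-holes of $t'$ in order --- and then check $\mapFonly^{-1}(t) \in \termsIPG$ and $\mapF{\mapFonly^{-1}(t)} = t$ by induction on $t$. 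Membership in $\termsIPG$ is immediate: $t'$ is the $\square$-skeleton of some higher-order functional term (hence $\cst{fun}_{t'} \in \mapF{\SigmaI}$), and the inserted subterms carry no type variables and no functional variables because neither $t'$ nor the $\mapFonly^{-1}(r_i)$ do. Injectivity together with surjectivity gives the term bijection. The closure bijection then follows routinely: by Definition~\ref{def:IPG:mapF}, $\mapFonly$ acts on $C\closure\theta$ by applying the term-level $\mapFonly$ to each side of each literal of $C$ and to each term in the range of $\theta$; it preserves literal signs and the multiset structure of clauses, and since it fixes variable names and preserves groundness of terms, it maps grounding substitutions to grounding substitutions and back, so bijectivity between $\clausesIPG$ and $\clausesPF$ is inherited from bijectivity on terms.

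\emph{Main obstacle.} The delicate point, hidden in both functional cases, is that the $\square$-decomposition round-trips, i.e., that for $t^\star = \mapFonly^{-1}(\cst{fun}_{t'}(\Tuple r_n))$ one indeed has $\mapF{t^\star} = \cst{fun}_{t'}(\Tuple r_n)$. This rests on a few elementary facts about orange, yellow, and green subterms (Definitions~\ref{def:orange-subterms}--\ref{def:green-subterms}): (i)~the outermost proper yellow subterms of a term occupy pairwise-incomparable positions and inherit an unambiguous left-to-right order, so $\mapFonly$ and its inverse are well defined and the $i$-th argument of $\cst{fun}_{t'}$ matches the $i$-th hole; (ii)~a position is an \emph{outermost} proper yellow position exactly when it is a nonroot yellow position whose only yellow prefixes are the root and itself; and (iii)~whether a position is a yellow position depends only on the outer structure along the path from the root --- whether that path avoids parameters and arguments of applied variables --- together with the absence of free De Bruijn indices in the subterm there; in particular a yellow subterm, having no free De Bruijn indices, may be reinserted at a yellow hole without capture and without changing these properties along the path. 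Combining (i)--(iii) shows that the outermost proper yellow subterms of $t^\star$ are precisely the inserted $\mapFonly^{-1}(r_i)$ at the hole positions of $t'$, closing the round-trip; everything else is a straightforward adaptation of the argument for Lemma~\ref{lem:fol-bijection}.
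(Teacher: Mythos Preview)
Your proposal is correct and follows essentially the same approach as the paper: structural induction for injectivity, an explicit inductive construction of a preimage for surjectivity, and a routine lift to closures. If anything, you are more careful than the paper about the round-trip property in the $\cst{fun}_{t'}$ case; the paper simply asserts ``Then $\mapF{s} = t$'' where you spell out why re-inserting the $\mapFonly^{-1}(r_i)$ into the holes of $t'$ recovers exactly those as the outermost proper yellow subterms.
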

\begin{proof}
Injectivity of $\mapFonly$ can be shown by structural induction.
For surjectivity, let $t \in \termsPF$.
We must show that there exists some $s \in \termsIPG$ such that $\mapF{s} = t$.
We proceed by induction on $t$.

If $t$ is of the form $\cst{fun}_{t'}(\tuple{t}_n)$,
we use the induction hypothesis to derive the existence of
some $\tuple{s}_n \in \termsIPG$
such that $\mapF{\tuple{s}_n} = \tuple{t}_n$.
Let $s$ be the term resulting from replacing the placeholder symbols $\square$ in $t'$ by $\tuple{s}_n$ in order of occurrence.
Then $\mapF{s} = t$.

If $t$ is a variable $x$, by definition of $\termsPF$, $t$'s type is not of the form $\tau \fun \upsilon$.
So, we can set $s = x \in \termsIPG$. Then $\mapF{s} = t$.

If $t = \cst{f}^{\tuple{\tau}}(\tuple{t}_n)$, where $\cst{f}^{\tuple{\tau}}$ is not a $\cst{fun}$ symbol,
by the induction hypothesis
there exist $\tuple{s}_n$ such that $\mapF{\tuple{s}_n} = \tuple{t}_n$ and set $s = \cst{f}\typeargs{\tuple{\tau}}\> \tuple{s}_n$. Then $\mapF{s} = t$.

It follows that $\mapFonly$ is also a bijection between $\clausesIPG$ and $\clausesPF$.
\end{proof}

\begin{lem}\label{lem:IPG:mapF-subst}
For all terms $t \in \termsIPG$,
all clauses over $\termsIPG$,
and all grounding substitutions $\theta$, we have
$\mapF{t}\mapF{\theta} = \mapF{t\theta}$
and 
$\mapF{C}\mapF{\theta} = \mapF{C\theta}$.
\end{lem}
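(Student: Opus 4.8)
The plan is to prove the term identity $\mapF{t}\mapF{\theta} = \mapF{t\theta}$ by structural induction on $t$, mirroring the three cases of Definition~\ref{def:IPG:mapF}, and then to lift it to clauses by applying it to each side of each literal. Throughout, I will use that a grounding substitution $\theta$ acts on a term of $\termsIPG$ only by replacing its (necessarily nonfunctional) variables with ground terms: since $\termsIPG$ has no parameters and no functional variables, no variable of such a term ever occurs applied, so substituting ground terms never creates a $\beta$-redex; hence for $t \in \termsIPG$ the canonical representative of $t\theta$ is obtained from that of $t$ by literally replacing variables, which is exactly what is needed for the induction to recurse cleanly into subterms.

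The easy cases are dispatched directly. If $t = x$ is a variable, then $\mapF{t}\mapF{\theta} = x\mapF{\theta} = \mapF{x\theta} = \mapF{t\theta}$ by the definition of $\mapF{}$ on variables and of $\mapF{\theta}$. If $t = \cst{f}\langle\bar\tau\rangle\,\bar t_m$ is nonfunctional, then by the observation above $t\theta = \cst{f}\langle\bar\tau\rangle\,(\bar t_1\theta,\dots,\bar t_m\theta)$ with no renormalization, and unfolding $\mapF{}$ on both $\mapF{t}\mapF{\theta}$ and $\mapF{t\theta}$ reduces the claim to $\mapF{\bar t_i}\mapF{\theta} = \mapF{\bar t_i\theta}$ for each $i$, which is the induction hypothesis. (We never need to call $\mapF{}$ on a De Bruijn-headed term, since such a term has a free De Bruijn index at its head and hence can be neither functional, nor a variable, nor a yellow subterm.)

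The substantive case is $t$ functional, where $\mapF{t} = \cst{fun}_{t'}(\mapF{s_1},\dots,\mapF{s_n})$ with $t'$ the skeleton obtained by replacing the outermost proper yellow subterms $s_1,\dots,s_n$ of $t$ by $\square$. The key claim is that $t\theta$ has the same skeleton $t'$ and that its outermost proper yellow subterms, in order, are $s_1\theta,\dots,s_n\theta$. For this I would argue that every variable occurrence of $t$ lies inside one of the $s_i$: such an occurrence is nonfunctional, and since $\termsIPG$ has no parameters and no applied variables it occupies an orange position; as a variable contains no free De Bruijn indices, that position is in fact yellow, and it is a proper position because $t$, being functional, is not a variable; hence the occurrence is contained in (or equal to) some outermost proper yellow subterm $s_i$. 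Consequently $\theta$ rewrites only inside the $s_i$, leaving $t'$ untouched; and each $s_i\theta$ is still free of De Bruijn indices (grounding substitutions introduce none) and still $\beta$-normal (no redex is created), so it is still the outermost proper yellow subterm of $t\theta$ at the corresponding position. Plugging this description of $t\theta$ into Definition~\ref{def:IPG:mapF} gives $\mapF{t\theta} = \cst{fun}_{t'}(\mapF{s_1\theta},\dots,\mapF{s_n\theta})$, and the induction hypothesis applied to each $s_i$ finishes the case. The clause identity $\mapF{C}\mapF{\theta} = \mapF{C\theta}$ is then immediate, since both sides apply $\mapF{}$ literal-by-literal and side-by-side and agree there by the term identity. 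The one point that requires care — and is the main obstacle — is the bookkeeping showing that substitution preserves the yellow/skeleton decomposition in the functional case, which hinges entirely on the absence of applied variables and parameters in $\termsIPG$.
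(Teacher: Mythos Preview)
Your proposal is correct and follows essentially the same approach as the paper: structural induction on $t$ with the same three cases, and in the functional case the same key observation that every (nonfunctional) variable of $t$ lies in a proper yellow subterm, so the skeleton $t'$ is preserved under $\theta$. You supply more justification than the paper (notably on why substitution creates no $\beta$-redex and why De Bruijn-headed terms never arise), but the argument is the same.
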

\begin{proof}
Since $\mapFonly$ maps each side of each literal individually,
it suffices to show that $\mapF{t}\mapF{\theta} = \mapF{t\theta}$
for all $t \in \termsIPG$.
We proceed by structural induction on $t$.

If $t$ is a variable, the claim is trivial.

If $t$ is nonfunctional and headed by a symbol, the claim follows from the induction hypothesis.

Finally, we consider the case where $t$ is functional.
Let $t'$ be the expression
obtained by replacing each
outermost proper yellow subterm in $t$
by the placeholder symbol $\square$,
and let $\tuple{s}_{n}$
be the replaced subterms in order of occurrence.
Since all variables in $t$ are nonfunctional,
they must be located in a proper yellow subterm of $t$,
and thus replacing
the outermost proper yellow subterm in $t\theta$
by the placeholder symbol $\square$
will result in $t'$ as well.
So, using the induction hypothesis,
$\mapF{t}\mapF{\theta}
= \cst{fun}_{t'}(\mapF{\tuple{s}_{n}}\mapF{\theta})
= \cst{fun}_{t'}(\mapF{\tuple{s}_{n}\theta})
= \mapF{t\theta}$.
\end{proof}

\begin{lem}\label{lem:IPG:mapF-subterms}
A term $s\in \termsIPG$ is a yellow subterm of $t \in \termsIPG$ if and only if $\mapF{s}$
is a subterm of $\mapF{t}$.
\end{lem}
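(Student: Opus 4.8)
The plan is to prove both directions by structural induction on $t$, using that $\mapFonly$ is injective (Lemma~\ref{lem:IPG:F-bijection}) and that a $\beta$-normal term of $\termsIPG$ is, apart from $\lambda$-abstractions, either an un-applied variable (no functional variables exist in $\termsIPG$) or headed by a symbol of $\SigmaI$. First I would record two routine facts. \textbf{(a) Transitivity of the yellow-subterm relation:} if $r$ is a yellow subterm of $t$ and $s$ is a yellow subterm of $r$, then $s$ is a yellow subterm of $t$; this follows because orange positions compose (an easy induction on the position of $r$ in $t$) and $s$ still contains no free De Bruijn indices. \textbf{(b) Argument descent:} for $\cst{f}\langle\bar\tau\rangle\>\bar t_m \in \termsIPG$, its proper yellow subterms are exactly the yellow subterms of the arguments $t_1,\dots,t_m$ (parameters of the underlying $\SigmaH$-symbol sit in the index of the $\SigmaI$-symbol and are excluded on both sides), and $\mapF{\cst{f}\langle\bar\tau\rangle\>\bar t_m} = \cst{f}^{\bar\tau}(\mapF{t_1},\dots,\mapF{t_m})$ has $\mapF{t_1},\dots,\mapF{t_m}$ as its direct first-order subterms.

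The crux is the following \emph{coverage} claim: every proper yellow subterm $s$ of a functional term $t$ is a yellow subterm of some \emph{outermost} proper yellow subterm of $t$. I would prove it by choosing $r^\ast$ of maximal syntactic size among all proper yellow subterms $r$ of $t$ for which $s$ is a yellow subterm of $r$; this set is finite and nonempty (it contains $s$). If $r^\ast$ were a proper yellow subterm of some proper yellow subterm $r'$ of $t$, then $r'$ would be strictly larger than $r^\ast$ and, by transitivity, $s$ would be a yellow subterm of $r'$, contradicting maximality. Hence $r^\ast$ is outermost and $s$ is a yellow subterm of it.

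With these in hand I would run the induction. For $(\Rightarrow)$, let $s$ be a yellow subterm of $t$; if $s = t$ then $\mapF{s} = \mapF{t}$ is a subterm of itself. Otherwise $s$ is a proper yellow subterm: the case $t$ = un-applied variable is vacuous; if $t = \cst{f}\langle\bar\tau\rangle\>\bar t_m$ then by (b) $s$ is a yellow subterm of some $t_j$, so by the induction hypothesis $\mapF{s}$ is a subterm of $\mapF{t_j}$, a direct subterm of $\mapF{t}$; and if $t$ is a $\lambda$-abstraction then $\mapF{t} = \cst{fun}_{t'}(\mapF{s_1},\dots,\mapF{s_n})$ with $s_1,\dots,s_n$ the outermost proper yellow subterms of $t$, and by the coverage claim $s$ is a yellow subterm of some $s_j$, so by the induction hypothesis $\mapF{s}$ is a subterm of $\mapF{s_j}$, hence of $\mapF{t}$. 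For $(\Leftarrow)$, suppose $\mapF{s}$ is a subterm of $\mapF{t}$; if $\mapF{s} = \mapF{t}$ then $s = t$ by injectivity of $\mapFonly$. Otherwise $\mapF{s}$ is a subterm of a direct subterm of $\mapF{t}$, and the same case split on $t$ lets me apply the induction hypothesis to the corresponding higher-order subterm ($t_j$ in the symbol-headed case, $s_j$ in the $\lambda$ case), which is in each case closed and a yellow subterm of $t$, and conclude by transitivity that $s$ is a yellow subterm of $t$.

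The main obstacle is the coverage claim together with keeping the bookkeeping straight between yellow \emph{positions} and yellow \emph{subterms}: a proper yellow subterm of a functional term need not be a direct orange child, and the outermost proper yellow subterms that $\mapFonly$ abstracts away can lie several orange positions deep inside nested $\lambda$-abstractions, so it is essential that $\mapFonly$ recurses \emph{into} each abstracted subterm rather than only into the top-level arguments. The minor technical point to get right is that orange positions compose even after passing to $\beta\eta$-normal-form representatives, which underlies both transitivity and coverage; this is routine but is where one has to be slightly careful.
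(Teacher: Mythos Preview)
Your proposal is correct and follows the same approach the paper gestures at (structural induction on $t$ using the definition of $\mapFonly$); you simply spell out the details that the paper's one-line proof omits. One small imprecision: your fact~(b) asserts $\mapF{\cst{f}\langle\bar\tau\rangle\>\bar t_m} = \cst{f}^{\bar\tau}(\mapF{t_1},\dots,\mapF{t_m})$, but this holds only when the term is nonfunctional---for a \emph{functional} symbol-headed term the head becomes a $\cst{fun}_{\ldots}$ symbol instead; however, since the outermost proper yellow subterms of such a term are exactly the arguments $t_1,\dots,t_m$, the direct first-order subterms of $\mapF{t}$ are still $\mapF{t_1},\dots,\mapF{t_m}$ and your argument goes through unchanged.
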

\begin{proof}
By induction using the definition of $\mapFonly$.
\end{proof}

\end{full}

\begin{lem}\label{lem:IPG:mapF-green-subterms}
  A term $s\in \termsIPG$ is a green subterm of $t \in \termsIPG$ if and only if $\mapF{s}$
  is a \slimfull{}{green} subterm \slimfull{}{(as defined in Section~\ref{ssec:redundancy})} of $\mapF{t}$.
\end{lem}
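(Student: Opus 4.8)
The plan is to prove the stronger statement that $\mapFonly$ restricts to a bijection between the set of green subterms of $t$ and the set of green subterms of $\mapF{t}$; both directions of the stated equivalence then follow at once, using that $\mapFonly$ is injective (Lemma~\ref{lem:IPG:F-bijection}). I would argue by structural induction on $t$ (equivalently, on $\mapF{t}$), following the recursive case split of Definition~\ref{def:IPG:mapF}.

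First I would record two auxiliary observations. \emph{(a)} $t$ is functional if and only if $\mapF{t}$ is functional: if $t$ has type $\tau \fun \upsilon$, then $\mapF{t} = \cst{fun}_{t'}(\mapF{\bar s_n})$, whose type is again $\tau \fun \upsilon$; otherwise $t$ is a variable or of the form $\cst{f}\langle\bar\tau\rangle\>\bar{t}_m$ with $\tau$ nonfunctional, and correspondingly $\mapF{t}$ is a variable or $\cst{f}^{\bar\tau}(\mapF{\bar{t}_1},\dots,\mapF{\bar{t}_m})$ of the same nonfunctional type $\tau$. \emph{(b)} Since every term in $\termsIPG$ is locally closed and contains no functional variables, a nonfunctional term in $\termsIPG$ that is not a variable must be headed by a symbol (a free De Bruijn index would break local closedness, and an applied variable head would be functional); hence the case split in Definition~\ref{def:IPG:mapF} is exhaustive and no De Bruijn-headed case arises under the recursion.

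The induction then proceeds by cases on $t$. In every case $t$ is a green subterm of itself and $\mapF{t}$ of $\mapF{t}$, and these correspond under $\mapFonly$; it remains to match the proper green subterms. If $t$ is functional, then by Definition~\ref{def:green-subterms}, and this calculus's convention that functional terms have no nontrivial green subterms, $t$ is its only green subterm; by observation \emph{(a)} the term $\mapF{t}$ is functional too, so by the first-order green subterm definition of Section~\ref{ssec:redundancy} it likewise has no proper green subterms, and there is nothing further to check. If $t$ is a (necessarily nonfunctional) variable, then $\mapF{t} = t$ and neither side has proper green subterms. If $t = \cst{f}\langle\bar\tau\rangle\>\bar{t}_m$ with $\tau$ nonfunctional, then $\mapF{t} = \cst{f}^{\bar\tau}(\mapF{\bar{t}_1},\dots,\mapF{\bar{t}_m})$ is nonfunctional, so by the equivalent characterizations of green subterms — on the higher-order side the one given just after Definition~\ref{def:green-subterms}, on the first-order side the one in Section~\ref{ssec:redundancy} — the proper green subterms of $t$ are exactly the green subterms of the arguments $\bar{t}_i$, and the proper green subterms of $\mapF{t}$ are exactly the green subterms of the $\mapF{\bar{t}_i}$. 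Applying the induction hypothesis to each $\bar{t}_i \in \termsIPG$ yields a bijection between the green subterms of $\bar{t}_i$ and those of $\mapF{\bar{t}_i}$; since $\mapFonly$ is injective, these assemble without overlap into the required bijection for $t$.

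The only point needing care — rather than a genuine difficulty — is the functional case: one must invoke the present calculus's deviation from the original $\lambda$-superposition calculus, namely that functional higher-order terms have no nontrivial green subterms, and pair it with the fact that $\mapF{t}$ being of function type blocks the downward propagation of first-order green subterms. Everything else is a routine structural induction, and the bijectivity of $\mapFonly$ does the bookkeeping.
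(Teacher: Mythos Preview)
Your proposal is correct and takes essentially the same approach as the paper, which dispatches the lemma in a single line: ``By induction using the definition of $\mapFonly$.'' You have simply spelled out the case analysis and the two auxiliary observations that the paper leaves implicit.
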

\begin{proof}
By induction using the definition of $\mapFonly$.
\end{proof}

\subsubsection{Indexing of Parameters}
\label{ssec:partly-substituted-ground-higher-order-level}

\begin{defi}[Indexing of Parameters]\label{def:PG:mapI}
The transformation $\mapIonly$ translates from $\termsPG$ to $\termsIPG$ by
encoding any occurrence of a constant with parameters $\cst{f}\typeargs{\tuple{\upsilon}}(\tuple{u})$ as
$\cst{f}^{\tuple{\upsilon}}_{\tuple{u}\slimfull{}{\theta}}$\slimfull{.}{,}
\begin{full}where $\theta$ denotes the substitution of the corresponding closure.\end{full}
Formally:
\begin{align*}
\mapIonly_{\slimfull{}{\theta}}(x) &= x \\
\mapIonly_{\slimfull{}{\theta}}(\lambda \> t) &= \lambda \> \mapIonly_{\slimfull{}{\theta}}(t) \\
\mapIonly_{\slimfull{}{\theta}}(\cst{f}\typeargs{\tuple{\upsilon}}  \> \tuple{s})
&= \cst{f}\typeargs{\tuple{\upsilon}} \> \mapIonly_{\slimfull{}{\theta}}(\tuple{s})
\\
\mapIonly_{\slimfull{}{\theta}}(\cst{f}\typeargs{\tuple{\upsilon}} \> (\tuple{u}_k) \> \tuple{s})
&= \cst{f}^{\tuple{\upsilon}}_{\tuple{u}_k\slimfull{}{\theta}} \> \mapIonly_{\slimfull{}{\theta}}(\tuple{s})
\text{ if $k > 0$}\\
\mapIonly_{\slimfull{}{\theta}}(m \> \tuple{s}) &= m \> \mapIonly_{\slimfull{}{\theta}}(\tuple{s})
\end{align*}%
We extend $\mapIonly_{\slimfull{}{\theta}}$ to clauses by mapping each side of each literal individually.
\begin{full}
If $t$ is a ground term, the given substitution is irrelevant, so we omit the subscript and simply write $\mapI{t}$.
We extend $\mapIonly$ to grounding substitutions by
defining $\mapI{\theta}$ as $x \mapsto \mapI{x\theta}$.
The transformation $\mapIonly$ w.r.t.\ a closure $C\closure\theta$ is defined as $\mapI{C\closure\theta} = \mapIonly_\theta(C)\closure\mapI{\theta}$.
\end{full}
\end{defi}

\begin{full}
\begin{lem}\label{lem:PG:mapI-subst}
For all $t \in \termsPG$,
all clauses $C$ over $\termsPG$,
and all grounding substitutions $\theta$, we have
$\mapI{t\theta} = \mapIonly_\theta(t)\mapI{\theta}$
and $\mapI{C\theta} = \mapIonly_\theta(C)\mapI{\theta}$.
\end{lem}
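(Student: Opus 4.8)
The plan is to establish the term identity $\mapI{t\theta} = \mapIonly_\theta(t)\mapI{\theta}$ by structural induction on the $\beta$-normal form of $t$, following the cases of Definition~\ref{def:PG:mapI}, and then to read off the clause identity, since both $\mapIonly$ and substitution act on each side of each literal separately. Two observations will be used throughout. First, because $\theta$ is grounding and $t \in \termsPG$ has no type variables, the term $t\theta$ is ground; hence every parameter occurring in $t\theta$ is ground, so $\mapIonly_\sigma(t\theta)$ does not depend on the subscript $\sigma$, and in particular $\mapI{t\theta} = \mapIonly_\theta(t\theta)$. Second, by definition $\mapI{\theta}$ is the substitution $z \mapsto \mapI{z\theta}$.

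I would first dispatch the base case $t = x$, a variable: since $\termsPG$ contains no functional variables, $x$ has nonfunctional type and occurs unapplied, $x\theta$ is ground, and both sides equal $\mapI{x\theta}$ --- the left-hand side by the first observation together with $\mapIonly_\theta(x) = x$, the right-hand side by the second observation. The cases $t = \lambda\> s$, $t = \cst{f}\typeargs{\tuple{\upsilon}}\>\tuple{s}$ (a constant without parameters, possibly applied, with ground type arguments $\tuple{\upsilon}$), and $t = m\>\tuple{s}$ (an applied De Bruijn index) all go the same way: the head symbol or binder is untouched by $\mapIonly_\theta$ and inert under the substitution $\mapI{\theta}$, so one pushes the claim into the subterm(s) using the induction hypothesis. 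In the $\lambda$ case one additionally invokes that the grounding substitution $\mapI{\theta}$ --- which acts only on term variables, not on De Bruijn indices, replacing them by closed terms --- commutes with descending under a binder in the locally nameless representation, so no index shifting or variable capture occurs.

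The only case with real content is $t = \cst{f}\typeargs{\tuple{\upsilon}}\>(\tuple{u}_k)\>\tuple{s}$ with $k > 0$. Then $t\theta = \cst{f}\typeargs{\tuple{\upsilon}}\>(\tuple{u}_k\theta)\>(\tuple{s}\theta)$, and as $\tuple{u}_k\theta$ is ground we get $\mapI{t\theta} = \mapIonly_\theta(t\theta) = \cst{f}^{\tuple{\upsilon}}_{\tuple{u}_k\theta}\>\mapIonly_\theta(\tuple{s}\theta)$, the essential point being that $\mapIonly$ does \emph{not} recurse into the now-ground subscript. On the other side, $\mapIonly_\theta(t) = \cst{f}^{\tuple{\upsilon}}_{\tuple{u}_k\theta}\>\mapIonly_\theta(\tuple{s})$, and since $\cst{f}^{\tuple{\upsilon}}_{\tuple{u}_k\theta}$ is a constant of the $\levelIPG$ signature $\SigmaI$ it is inert under $\mapI{\theta}$, so $\mapIonly_\theta(t)\mapI{\theta} = \cst{f}^{\tuple{\upsilon}}_{\tuple{u}_k\theta}\>(\mapIonly_\theta(\tuple{s})\mapI{\theta})$; the induction hypothesis applied componentwise to $\tuple{s}$ closes the case. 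For the clause identity, if $C$ has literals $s_i \doteq t_i$, then $\mapI{C\theta}$ has literals $\mapI{s_i\theta} \doteq \mapI{t_i\theta}$ while $\mapIonly_\theta(C)\mapI{\theta}$ has literals $(\mapIonly_\theta(s_i)\mapI{\theta}) \doteq (\mapIonly_\theta(t_i)\mapI{\theta})$, and these agree by the term identity.

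I expect the main, mild obstacle to be the bookkeeping around parameters: one must notice that the recursion is asymmetric --- $\mapIonly_\theta$ descends into ordinary arguments but copies parameters verbatim after applying $\theta$ --- and that this is precisely what makes the statement go through, because $\theta$ grounds the parameters before they are absorbed into the subscript of an $\SigmaI$-constant, after which the $\theta$-dependence of $\mapIonly_\theta$ evaporates. Everything else is routine, and the same argument pattern will recur for the analogous commutation lemmas on the other levels.
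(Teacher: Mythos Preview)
Your proposal is correct and follows the same approach as the paper: structural induction on $t$ along the cases of Definition~\ref{def:PG:mapI}, with the clause identity read off from the term identity. The paper's proof is considerably terser---it just says the variable case is trivial and the remaining cases follow from the definition and the induction hypothesis---but your more detailed treatment, in particular your handling of the parameter case and the two preliminary observations, fills in exactly what that sketch leaves implicit.
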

\begin{proof}
Since $\mapIonly$ maps each side of each literal individually,
it suffices to show that
$\mapI{t\theta} = \mapIonly_\theta(t)\mapI{\theta}$.
We prove this by induction on the structure of $t$.
If $t$ is a variable, the claim is trivial.
For all other cases, the claim follows from the definition of $\mapIonly_\theta$
and the induction hypothesis.
\end{proof}

\begin{lem}\label{lem:PG:mapI-preserves-identities}
If 
$\mapIonly_\theta(t)\mapI{\theta} = \mapIonly_\theta(t')\mapI{\theta}$,
then
$t\theta = t'\theta$.
\end{lem}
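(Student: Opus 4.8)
The plan is to reduce the claim to injectivity of $\mapIonly$ on ground terms. First I would apply Lemma~\ref{lem:PG:mapI-subst} to both sides, rewriting the hypothesis $\mapIonly_\theta(t)\mapI{\theta} = \mapIonly_\theta(t')\mapI{\theta}$ as $\mapI{t\theta} = \mapI{t'\theta}$. Since $\theta$ is a grounding substitution, $t\theta$ and $t'\theta$ are ground terms (elements of $\TT_\mathrm{ground}(\SigmaH) \subseteq \termsPG$), so it suffices to prove that $\mapI{s} = \mapI{s'}$ implies $s = s'$ for all ground $s, s' \in \termsPG$.

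For this I would argue by induction on the structure of $s$, reading off the cases of Definition~\ref{def:PG:mapI}: the head of $\mapI{s}$ determines the outermost form of $s$ --- a leading $\lambda$ comes from a $\lambda$-abstraction, a De Bruijn index $m$ from a term of the form $m\>\tuple{s}$, a symbol $\cst{f}\typeargs{\tuple\upsilon}$ without parameters from $\cst{f}\typeargs{\tuple\upsilon}\>\tuple{s}$, and a symbol $\cst{f}^{\tuple\upsilon}_{\tuple u}$ from $\cst{f}\typeargs{\tuple\upsilon}(\tuple u)\>\tuple{s}$. In the last case note that the parameters $\tuple u$ are already ground, so the $\theta$-subscript in $\mapIonly_\theta$ is irrelevant and, by Convention~\ref{convention:indexing-subscripts}, the symbol $\cst{f}^{\tuple\upsilon}_{\tuple u}$ itself records $\cst{f}$, $\tuple\upsilon$, and $\tuple u$. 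In every case the number of direct arguments and their $\mapIonly$-images are likewise visible in $\mapI{s}$, so the induction hypothesis recovers each argument and hence $s$. Comparing $\mapI{s}$ with $\mapI{s'}$ head-first therefore forces $s = s'$.

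Combining the two steps gives $t\theta = t'\theta$, as required. I do not anticipate a real obstacle; the only mild subtleties are keeping Convention~\ref{convention:indexing-subscripts} in mind when asserting that $\cst{f}^{\tuple\upsilon}_{\tuple u}$ pins down its sub- and superscripts, and observing that on ground terms the $\theta$-subscript of $\mapIonly_\theta$ can be dropped --- which is precisely what Lemma~\ref{lem:PG:mapI-subst} licenses.
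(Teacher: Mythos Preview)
Your proposal is correct. The reduction via Lemma~\ref{lem:PG:mapI-subst} to injectivity of $\mapIonly$ on ground terms is clean, and the structural induction you sketch goes through; the paper itself later relies on $\mapIonly$ being a bijection on ground terms (e.g.\ around Definition~\ref{def:PG:succ-transfer}), so your injectivity argument is not controversial.

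The paper's own proof takes a slightly different decomposition: instead of first collapsing $\mapIonly_\theta(t)\mapI{\theta}$ to $\mapI{t\theta}$ and then arguing injectivity, it views $\mapIonly_\theta(\cdot)\mapI{\theta}$ as the composite of two operations---first applying $\theta$ to all variables inside parameters, then moving type arguments and (now ground) parameters into indices---and observes that each of these two steps reflects equality. Your route has the advantage of reusing Lemma~\ref{lem:PG:mapI-subst} and isolating a single injectivity fact about ground $\mapIonly$; the paper's route avoids invoking that lemma but is more informal. Both arrive at the same place with essentially the same content.
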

\begin{proof}
  This becomes clear when viewing $\mapIonly$
  as composed of two operations: First,
  we apply the substitution to all variables in
  parameters. Second, we move type arguments and
  parameters into indices.
  Both parts clearly fulfill this lemma's statement.
\end{proof}

\end{full}

\begin{lem}\label{lem:PG:mapI-mapF-fol}
  Let $t\in\TT_\mathrm{ground}(\SigmaH)$, and
  let $C$ be a clause over $\TT_\mathrm{ground}(\SigmaH)$.
  Then $\mapF{\mapI{t}} = \mapF{t}$
  and $\mapF{\mapI{C}} = \mapF{C}$.
\end{lem}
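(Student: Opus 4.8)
The plan is to prove the statement by structural induction on the ground higher-order term $t \in \TT_\mathrm{ground}(\SigmaH)$, showing $\mapF{\mapI{t}} = \mapF{t}$; the clause case follows immediately since both $\mapFonly$ and $\mapIonly$ act on each side of each literal individually. The key observation is that $\mapIonly$ only rewrites symbol occurrences $\cst{f}\typeargs{\tuple{\upsilon}}(\tuple u)$ with $\tuple u$ nonempty into the indexed symbol $\cst{f}^{\tuple\upsilon}_{\tuple u}$, and that — by Convention~\ref{convention:first-order-signatures} — the first-order signatures $\mapF{\SigmaH}$ and $\mapF{\SigmaI}$ are identified, so that $\cst{f}^{\tuple\upsilon}_{\tuple u}$ as a first-order index and $\cst{f}\typeargs{\tuple\upsilon}(\tuple u)$ as a first-order index denote the same object. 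Thus the two maps should produce literally the same first-order term.

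First I would handle the case where $t$ is nonfunctional, so $t = \cst{f}\typeargs{\tuple\tau}(\tuple u)\>\tuple{t}_m$. Here $\mapI{t} = \cst{f}^{\tuple\tau}_{\tuple u}\>\mapI{\tuple t_m}$ if $\tuple u$ is nonempty (and $\mapI{t} = \cst{f}\typeargs{\tuple\tau}\>\mapI{\tuple t_m}$ otherwise, which is the same under the conventions). Applying $\mapFonly$ (Definition~\ref{def:fol}, extended as in Definition~\ref{def:IPG:mapF} to cover the case of $\SigmaI$-terms, whose heads carry no parameters) gives $\mapF{\mapI{t}} = \cst{f}^{\tuple\tau}_{\tuple u}(\mapF{\mapI{\tuple t_1}},\dots,\mapF{\mapI{\tuple t_m}})$, while $\mapF{t} = \cst{f}^{\tuple\tau}_{\tuple u}(\mapF{\tuple t_1},\dots,\mapF{\tuple t_m})$ directly. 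The induction hypothesis gives $\mapF{\mapI{\tuple t_i}} = \mapF{\tuple t_i}$ for each $i$, and the outer symbols agree by Conventions~\ref{convention:indexing-subscripts} and~\ref{convention:first-order-signatures}, so the two sides coincide.

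Next I would handle the functional case. Since $\mapIonly$ commutes with $\lambda$-abstraction and with application headed by symbols without parameters and by De Bruijn indices, $\mapIonly$ maps a functional term $t$ of type $\tau\fun\upsilon$ to a functional term $\mapI{t}$ whose outer $\lambda$/De Bruijn/symbol skeleton is unchanged, and whose outermost proper yellow subterms are exactly the $\mapIonly$-images of the outermost proper yellow subterms of $t$. Hence the placeholder abstraction $t'$ (replacing each outermost proper yellow subterm by $\square$) is the same for $t$ and $\mapI{t}$, identifying the subscript indices via the conventions. Therefore $\mapF{\mapI{t}} = \cst{fun}_{t'}(\mapF{\mapI{\tuple s_n}})$ and $\mapF{t} = \cst{fun}_{t'}(\mapF{\tuple s_n})$, where $\tuple s_n$ are the replaced subterms in order of occurrence; the induction hypothesis on each $s_i$ finishes the case.

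**The main obstacle** is bookkeeping around the conventions rather than any real mathematical difficulty: one must be careful that $\mapIonly$ and $\mapFonly$ agree on \emph{which} occurrences get rewritten and in \emph{which order}, so that the $\cst{fun}$-indices and the subscripts of indexed symbols line up exactly under Conventions~\ref{convention:indexing-subscripts} and~\ref{convention:first-order-signatures}. In particular one should note that the placeholder-replacement in $\mapFonly$ targets outermost proper \emph{yellow} subterms, which never lie inside parameters (parameters contain no free De Bruijn indices, but more to the point they are not reachable by yellow positions), so moving parameters into indices via $\mapIonly$ does not interfere with the choice of placeholder positions. Once this alignment is made explicit, the induction is routine.
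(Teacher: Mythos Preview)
Your proposal is correct and takes essentially the same approach as the paper: the paper's proof is a one-line remark that the claim follows directly from the definitions of $\mapIonly$ and $\mapFonly$ together with the identification in Convention~\ref{convention:first-order-signatures}, and your structural induction is exactly the routine unfolding of that remark. Your explicit observation that yellow positions never enter parameters (so $\mapIonly$ cannot disturb the placeholder skeleton used by $\mapFonly$ in the functional case) is precisely the point the paper leaves implicit.
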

\begin{proof}
For $t$, the claim follows directly from
the definitions of $\mapIonly$ (Definition~\ref{def:PG:mapI}) and $\mapFonly$
(Definition\slimfull{}{s}~\ref{def:fol}\begin{full} and~\ref{def:IPG:mapF}\end{full}),
relying on the identification of $\cst{fun}_t$ and $\cst{fun}_{\mapI{t}}$ (Convention~\ref{convention:first-order-signatures}).
For $C$, the claim holds because
$\mapIonly$ and $\mapFonly$ map each side of each literal individually.
\end{proof}

\begin{full}
\subsubsection{Partial Substitution}

  Let $\theta$ be a grounding substitution from $\VVH$ to $\TT_\mathrm{ground}(\SigmaH)$. We define a substitution
  $\mapp{\theta}$, mapping from $\VVH$ to $\TT(\SigmaH, \VVPG)$, and 
  a substitution $\mapq{\theta}$,
  mapping from $\VVPG$ to $\TT_\mathrm{ground}(\SigmaH)$,
  as follows.
  For each type variable $\alpha$,
  let $\alpha\mapp{\theta} = \alpha\theta$.
  For each variable $y\in \VVH$, let $y\mapp{\theta}$ be the term resulting
  from replacing each nonfunctional yellow subterm at a yellow position $p$ in $y\theta$ by
  $y_p \in \VVPG$.
  We call these variables $y_p$ the variables introduced by $\mapp{\theta}$.
  If a nonfunctional yellow subterm is contained inside another nonfunctional yellow subterm,
  the outermost nonfunctional yellow subterm is replaced by a variable.
  The substitution
  $\mapq{\theta}$ is defined
  as $y_p\mapq{\theta} = y\theta|_p$
  for all variables $y_p$ introduced by $\mapp{\theta}$,
  and for all other variables $y\in \VVPG$, we set
  $y\mapq{\theta}$ to be some arbitrary ground term that is independent of $\theta$.

  Finally, we define
   \[\mapPonly : \clausesG \to \clausesPG,\quad C \closure \theta\mapsto C\mapp{\theta}\closure\mapq{\theta}\]

For example, if $y\theta = \lambda\> \cst{f}\>(\DB{0}\>(\cst{g}\>\cst{a}))$,
then $y\mapp{\theta} = \lambda\> \cst{f}\>(\DB{0}\>y_{1.1.1})$
and $y_{1.1.1} \mapq{\theta} = \cst{g}\>\cst{a}$.

Technically, this definition of $\mapponly$, $\mapqonly$, and $\mapPonly$
depends on the choice of a $\beta\eta$-normalizer $\benf{}$
because it relies on yellow positions. However,
this choice affects the resulting terms and clauses only up to
renaming of variables, and our proofs
work for any choice of $\benf{}$
as long as we use the
same fixed $\benf{}$
for $\mapponly$, $\mapqonly$, and $\mapPonly$.

\begin{lem}\label{lem:G:mapp-comp-mapq}
  Let $\theta$ be a grounding substitution. Then
  $\mapp{\theta}\mapq{\theta} = \theta$.
\end{lem}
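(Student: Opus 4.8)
The plan is to prove $\mapp{\theta}\mapq{\theta} = \theta$ by checking the two substitutions agree on every variable (type and term) of $\VVH$, since both are substitutions with domain $\VVH$. For type variables $\alpha$ the claim is immediate: by definition $\alpha\mapp{\theta} = \alpha\theta$, and $\mapq{\theta}$ does not alter ground types, so $\alpha\mapp{\theta}\mapq{\theta} = \alpha\theta$.

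The substantive case is a term variable $y \in \VVH$. Recall that $y\mapp{\theta}$ is obtained from $y\theta$ by replacing each \emph{outermost} nonfunctional yellow subterm $y\theta|_p$ (at yellow position $p$) by the fresh variable $y_p \in \VVPG$, and that $\mapq{\theta}$ sends each such $y_p$ back to $y\theta|_p$. So applying $\mapq{\theta}$ to $y\mapp{\theta}$ should re-substitute exactly the subterms that were carved out, yielding $y\theta$ again. To make this rigorous I would argue by induction on the structure of (a $\beta\eta$-normal representative of) $y\theta$, tracking the set of replaced positions. The key observations are: (i) the replaced positions are yellow and hence do not lie under applied variables, inside parameters, or — crucially — inside one another (only the outermost such subterm is replaced), so the substitution $\mapp{\theta}$ acts on $y\theta$ as a simultaneous replacement at an antichain of positions; (ii) the variables $y_p$ introduced are fresh and pairwise distinct, so $\mapq{\theta}$ plugs each back in at precisely its own hole without interference; and (iii) for variables $z_q \in \VVPG$ that are \emph{not} introduced by $\mapp{\theta}$, the value $z_q\mapq{\theta}$ is irrelevant because such variables do not occur in $y\mapp{\theta}$. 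Putting these together, $y\mapp{\theta}\mapq{\theta}$ reconstructs $y\theta$ subterm by subterm.

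Concretely, the induction would be on the size of $y\theta$ in $\eta$-long form, with the inductive statement generalized to: for any term $t$ and the partial-replacement operation that replaces each outermost nonfunctional yellow subterm $t|_p$ of $t$ by $y_p$, re-substituting via $y_p \mapsto t|_p$ returns $t$. In the base case $t$ is itself a nonfunctional yellow subterm (e.g.\ $t = y\theta$ when $y$ is of nonfunctional type): it is replaced wholesale by $y_\varepsilon$, and $y_\varepsilon\mapq{\theta} = t$, so we are done. Otherwise $t$ is functional or has a head that is a constant/De Bruijn index whose green arguments are recursed into; in each case the outermost nonfunctional yellow subterms of $t$ are the disjoint unions of those of the relevant subparts, the induction hypothesis applies to each subpart, and reassembling gives $t$. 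The only mildly delicate point, and the one I expect to require the most care, is confirming that the notion of "outermost nonfunctional yellow subterm" behaves as a genuine antichain of positions that is compatible with the recursive structure of green/yellow subterms — essentially, that a yellow subterm of a yellow subterm is again yellow and that the "outermost" qualifier correctly cuts the term into independent pieces — so that the simultaneous-replacement reading of $\mapp{\theta}$ is justified. This can be discharged using the characterization of green subterms after Definition~\ref{def:green-subterms} and Lemma~\ref{lem:orange-green-chain}, together with the independence of yellow subterms from the choice of $\benf{}$. Finally I would note, as the statement's surrounding text already does, that the whole construction depends on $\benf{}$ only up to renaming of the fresh variables $y_p$, so fixing one $\benf{}$ throughout makes the equation hold on the nose.
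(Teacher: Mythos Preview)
Your proposal is correct and takes essentially the same approach as the paper: check agreement on type variables directly, then for term variables observe that $\mapq{\theta}$ undoes the replacements made by $\mapp{\theta}$. The paper's proof is much terser---it simply asserts that applying $\mapq{\theta}$ to $y\mapp{\theta}$ ``replace[s] each $y_p$ with the original subterm $y\theta|_p$, effectively reconstructing $y\theta$''---and does not spell out the induction or the antichain argument you outline, treating these as obvious from the definitions.
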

\begin{proof}
  For type variables $\alpha$, we have $\alpha\mapp{\theta}\mapq{\theta} = \alpha\theta$ by definition of $\mapponly$.
  We must show that $y\mapp{\theta}\mapq{\theta} = y\theta$ for all variables $y\in \VVH$.
  By definition of $\mapponly$, $y\mapp{\theta}$ is obtained from $y\theta$ by replacing each nonfunctional yellow subterm at a yellow position $p$ with $y_p$.
  By definition of $\mapqonly$, we have $y_p\mapq{\theta} = y\theta|_p$ for all such positions $p$.
  Therefore, when we apply $\mapq{\theta}$ to $y\mapp{\theta}$, we replace each $y_p$ with the original subterm $y\theta|_p$, effectively reconstructing $y\theta$.
\end{proof}

\begin{lem}\label{lem:G:mapp-mapp}
  Let $\theta$ be a substitution from $\VVH$ to $\TT_\mathrm{ground}(\SigmaH)$ and $\rho$ be a substitution
  from $\VVPG$ to $\TT_\mathrm{ground}(\SigmaH)$.
  Then
  $\mapp{\mapp{\theta}\rho} = \mapp{\theta}$.
\end{lem}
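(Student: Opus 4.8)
The plan is to evaluate both substitutions variable by variable. First I would record that $\mapp{\theta}\rho$ is itself a grounding substitution from $\VVH$ to $\TT_\mathrm{ground}(\SigmaH)$: for $y \in \VVH$ the term $y\mapp{\theta} \in \TT(\SigmaH, \VVPG)$ contains only the variables $y_p$ introduced by $\mapponly$, and $\rho$ maps each of them to a ground term, so $y(\mapp{\theta}\rho) = (y\mapp{\theta})\rho$ is ground; hence $\mapp{\mapp{\theta}\rho}$ is defined. I would fix the same $\beta\eta$-normalizer $\benf{}$ that $\mapponly$ uses. The type-variable case is immediate: $\alpha\mapp{\theta} = \alpha\theta$ is a ground type, so $\alpha\mapp{\mapp{\theta}\rho} = (\alpha\theta)\rho = \alpha\theta = \alpha\mapp{\theta}$. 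For a term variable $y$, write $Q$ for the set of yellow positions at which $\mapponly$ replaces a subterm of $y\theta$, i.e.\ the positions of the outermost nonfunctional yellow subterms of $y\theta$; then $y\mapp{\theta}$ is $y\theta$ with the subterm at each $p \in Q$ replaced by $y_p$.

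The crux is a claim about stability of $Q$ under $\rho$: the outermost nonfunctional yellow positions of $(y\mapp{\theta})\rho$ are again exactly $Q$, and the subterm of $(y\mapp{\theta})\rho$ at each $p \in Q$ is $y_p\rho$. To prove it I would observe that $(y\mapp{\theta})\rho$ arises from $y\theta$ by replacing, at each $p \in Q$, a subterm with no free De Bruijn indices and nonfunctional type by $y_p\rho$, a ground (hence De Bruijn-closed) term of the \emph{same} type. Since $y_p$ is nonfunctional, this replacement introduces no $\beta$-redex and no $\eta$-conversion above the replaced subterm, so the $\benf{}$-normal forms of $y\theta$, $y\mapp{\theta}$ and $(y\mapp{\theta})\rho$, together with the types they carry, coincide at every position that is a prefix of some $p \in Q$ or incomparable with all of $Q$. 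Since whether a position is an outermost nonfunctional yellow position is determined by exactly that data (orange positions are read off the head structure of $\benf{(\cdot)}$; ``yellow'' adds absence of free De Bruijn indices at that position; ``outermost nonfunctional'' adds that every shorter yellow prefix has functional type), and since every nonfunctional yellow position of $y\theta$ is comparable with an element of $Q$, no outermost nonfunctional yellow position can be gained or lost; and at each $p \in Q$ the subterm is now $y_p\rho$, still nonfunctional of the same type.

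Granting the claim, $\mapponly$ applied to $\mapp{\theta}\rho$ sends $y$ to the term obtained from $(y\mapp{\theta})\rho$ by replacing, at each $p \in Q$, the subterm $y_p\rho$ by the variable $y_p$; but $(y\mapp{\theta})\rho$ is precisely $y\mapp{\theta}$ with $y_p$ replaced by $y_p\rho$ at each $p \in Q$, so these replacements cancel and we recover $y\mapp{\theta}$. Combined with the type-variable case this yields $\mapp{\mapp{\theta}\rho} = \mapp{\theta}$. I expect the main obstacle to be the stability claim of the middle paragraph: one must argue carefully that substituting ground terms for the nonfunctional variables that $\mapponly$ introduces neither creates nor destroys outermost nonfunctional yellow positions. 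This reduces to three facts that are already available or routine to establish: yellow positions depend only on the outer $\benf{}$-normalized structure (cf.\ the lemma that being a yellow subterm is independent of $\benf{}$), the substitution in question is type-preserving, and replacing a nonfunctional subterm cannot trigger $\beta$- or $\eta$-activity in the enclosing context.
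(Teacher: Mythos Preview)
Your proposal is correct and follows essentially the same approach as the paper: both argue variable by variable, dispatch type variables directly, and for term variables establish that the outermost nonfunctional yellow positions of $(y\mapp{\theta})\rho$ coincide with those of $y\theta$, so that the second application of $\mapponly$ undoes $\rho$. You spell out the stability claim and its justification in considerably more detail than the paper, which simply asserts that ``the ground terms introduced by $\rho$ do not affect the overall structure of yellow positions''; your more careful treatment of why substituting ground terms for the nonfunctional variables $y_p$ cannot create or destroy such positions is a welcome elaboration but not a different argument.
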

\begin{proof}
  Let $y\in \VVH$.
  By definition of $\mapponly$, $y\mapp{\theta}$ is obtained from $y\theta$
  by replacing each nonfunctional yellow subterm at a yellow position $p$ with $y_p$.
  Now, consider $y\mapp{\theta}\rho$. Since $\rho$ is grounding,
  it will replace each $y_p$ with a ground term.
  To obtain $y\mapp{\mapp{\theta}\rho}$, we take $y\mapp{\theta}\rho$ and again replace each nonfunctional yellow subterm at a yellow position $p$ with $y_p$.
  These positions and the resulting structure will be identical to those in $y\mapp{\theta}$
  because the ground terms introduced by $\rho$ do not affect the overall structure of yellow positions.
  Therefore, $y\mapp{\mapp{\theta}\rho} = y\mapp{\theta}$ for each variable $y\in \VVH$.

  For type variables $\alpha$, we have $\alpha\mapp{\theta} = \alpha\theta = \alpha\mapp{\mapp{\theta}\rho}$.
  Thus, we can conclude that $\mapp{\mapp{\theta}\rho} = \mapp{\theta}$.
\end{proof}

\begin{lem}\label{lem:G:mapq-mapp}
  Let $\theta$ be a substitution from $\VVH$ to $\TT_\mathrm{ground}(\SigmaH)$.
  Let $\rho$ be a substitution
  from $\VVPG$ to $\TT_\mathrm{ground}(\SigmaH)$
  such that $y\rho = y\mapq{\theta}$ for all $y$ not introduced by $\mapp{\theta}$.
  Then
  $\mapq{\mapp{\theta}\rho} = \rho$.
\end{lem}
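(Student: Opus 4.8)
The plan is to show that $\mapq{\mapp{\theta}\rho}$ and $\rho$ agree on every variable of $\VVPG$, treating separately the variables introduced by $\mapp{\theta}$ and all other variables. First I would observe that $\mapp{\theta}\rho$ is a grounding substitution from $\VVH$ to $\TT_\mathrm{ground}(\SigmaH)$, since $\mapp{\theta}$ maps into $\TT(\SigmaH,\VVPG)$ and $\rho$ grounds the $\VVPG$-variables; hence $\mapq{\mapp{\theta}\rho}$ is well defined. Next I would invoke Lemma~\ref{lem:G:mapp-mapp} to get $\mapp{\mapp{\theta}\rho} = \mapp{\theta}$. Because for any grounding substitution $\sigma$ the $\VVPG$-variables introduced by $\mapp{\sigma}$ are precisely the (only) variables occurring in the otherwise ground terms $y\mapp{\sigma}$, this equality shows that $\mapp{\theta}\rho$ and $\theta$ introduce exactly the same variables $y_p$, each sitting at the same yellow position $p$ in $y\mapp{\theta}$.

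For a variable $y_p$ introduced by $\mapp{\theta}$, the definition of $\mapqonly$ applied to $\mapp{\theta}\rho$ gives $y_p\mapq{\mapp{\theta}\rho} = (y\mapp{\theta}\rho)|_p$. By construction of $\mapponly$, the term $y\mapp{\theta}$ carries the unapplied variable $y_p$ at position $p$; since orange, and hence yellow, positions never descend through the left-hand side of an application, substituting the ground nonfunctional term $y_p\rho$ for $y_p$ creates no $\beta$-redex and triggers no $\eta$-conversion above $p$, so $(y\mapp{\theta}\rho)|_p = y_p\rho$ under the fixed $\beta\eta$-normalizer. Therefore $y_p\mapq{\mapp{\theta}\rho} = y_p\rho$. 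For a variable $y\in\VVPG$ not introduced by $\mapp{\theta}$ — equivalently, by the previous paragraph, not introduced by $\mapp{\mapp{\theta}\rho}$ — the definition of $\mapqonly$ assigns to $y$ the fixed ground term chosen for $y$ independently of the substitution, which is the same term for $\mapq{\mapp{\theta}\rho}$ as for $\mapq{\theta}$; the hypothesis $y\rho = y\mapq{\theta}$ then yields $y\mapq{\mapp{\theta}\rho} = y\rho$. Combining the two cases gives $\mapq{\mapp{\theta}\rho} = \rho$.

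The only delicate step is the identity $(y\mapp{\theta}\rho)|_p = y_p\rho$: it requires verifying that replacing the unapplied nonfunctional variable $y_p$ by a ground nonfunctional term leaves the yellow position $p$ intact under the chosen $\benf{}$, which is exactly the observation that yellow positions never pass through application heads and that no new $\beta$- or $\eta$-redex is introduced by the substitution. Everything else is routine unfolding of the definitions of $\mapponly$ and $\mapqonly$ together with Lemma~\ref{lem:G:mapp-mapp}.
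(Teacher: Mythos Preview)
Your proposal is correct and follows essentially the same approach as the paper: both split on whether a variable is introduced by $\mapp{\theta}$ (equivalently by $\mapp{\mapp{\theta}\rho}$, via Lemma~\ref{lem:G:mapp-mapp}), use the defining equation $y_p\mapq{\mapp{\theta}\rho} = (y\mapp{\theta}\rho)|_p$ together with $y\mapp{\theta}|_p = y_p$ for introduced variables, and appeal to the fixed default value plus the hypothesis for the others. Your extra care about why substituting the ground nonfunctional term $y_p\rho$ preserves the yellow position $p$ is welcome; the paper simply asserts $y\mapp{\theta}\rho|_p = y_p\rho$ without elaboration.
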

\begin{proof}
  Let $y_p$ be a variable introduced by $\mapp{\mapp{\theta}\rho}$.
  By definition of $\mapqonly$, we have $y_p\mapq{\mapp{\theta}\rho} = y\mapp{\theta}\rho|_p$.
  Moreover, since $y\mapp{\theta}|_p = y_p$, we have $y\mapp{\theta}\rho|_p = y_p\rho$.
  So $y_p\mapq{\mapp{\theta}\rho} = y_p\rho$.

  Let $y$ be a variable not introduced by $\mapp{\mapp{\theta}\rho}$.
  It remains to show that $y\mapq{\mapp{\theta}\rho} = y\rho$.
  By Lemma~\ref{lem:G:mapp-mapp},
  the variables introduced by $\mapp{\theta}$
  are the same as the variables introduced by $\mapp{\mapp{\theta}\rho}$.
  So $y$ is not introduced by $\mapp{\theta}$ either.
  So, by definition of $\mapqonly$, we have $y\mapq{\mapp{\theta}\rho} = y\mapq{\theta}$, and
  with the assumption of this lemma that $y\rho = y\mapq{\theta}$,
  we conclude that $y\mapq{\mapp{\theta}\rho} = y\rho$.  
\end{proof}

\begin{lem}\label{lem:G:mapp-most-general}
  Let $\theta$ be a grounding substitution.
  For each variable $y \in \VVH$,
  $y\mapp{\theta}$ is the most general term $t$ (unique up to renaming of variables) with the following properties:
  \begin{enumerate}[label=\arabic*.,ref=\arabic*]
    \item \label{mapp-most-gen-one} there exists a substitution $\rho$ such that $t\rho = y\theta$;
    \item \label{mapp-most-gen-two} $t$ contains no type variables and no functional variables.
  \end{enumerate}
\end{lem}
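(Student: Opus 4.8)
The plan is to prove the two directions of the characterization separately. First I would verify that $y\mapp\theta$ itself satisfies properties~1 and~2, and then that it is more general than every term $t$ satisfying~1 and~2; uniqueness up to renaming then follows immediately, since $y\mapp\theta$ is itself one of the competing terms. Property~1 is immediate from Lemma~\ref{lem:G:mapp-comp-mapq}: since $\mapp\theta\mapq\theta = \theta$, the substitution $\rho := \mapq\theta$ witnesses $y\mapp\theta\,\rho = y\theta$. For property~2 I would note that $\theta$ is grounding, so $y\theta$ carries no type variables, and that $\mapponly$ replaces subterms of $y\theta$ only by the fresh variables $y_p$, each given the (ground) type of the subterm it replaces; since only \emph{nonfunctional} yellow subterms are ever replaced, every variable in $y\mapp\theta$ is nonfunctional, so $y\mapp\theta$ has neither type nor functional variables.

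The substance lies in the ``most general'' direction. Fix $t$ with $t\rho = y\theta$ for some $\rho$ and with $t$ free of type and functional variables; the goal is a substitution $\delta$ with $y\mapp\theta\,\delta = t$. Let $s_1,\dots,s_n$ be the outermost nonfunctional yellow subterms of $y\theta$, at the pairwise incomparable yellow positions $p_1,\dots,p_n$ that $\mapponly$ replaces by fresh variables $y_{p_1},\dots,y_{p_n}$. The key structural claim is that every variable occurrence of $t$ lies at a position having some $p_j$ as a prefix; equivalently, outside $\bigcup_j s_j$ the term $t$ agrees with $y\theta$, hence with $y\mapp\theta$ up to the variable names $y_{p_j}$. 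Granting the claim, $\delta := \{\, y_{p_j} \mapsto t|_{p_j} \mid 1 \le j \le n \,\}$ is well defined, since $t$ has no variable strictly above any $p_j$ and therefore genuinely has a subterm at $p_j$, and plainly $y\mapp\theta\,\delta = t$. Thus $t$ is an instance of $y\mapp\theta$; and since $y\mapp\theta$ also satisfies~1 and~2, any \emph{most} general such $t$ is both an instance of $y\mapp\theta$ and has $y\mapp\theta$ as an instance, hence is a variant of it, which gives uniqueness.

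To prove the structural claim I would consider a variable occurrence $x$ of $t$ at a term position $q$. Because $t\rho = y\theta$, the subterm of $y\theta$ at $q$ equals $x\rho$; it is nonfunctional since $x$ is (property~2), and it contains no De Bruijn index bound outside $q$ since $\rho$ maps $x$ to a locally closed $\lambda$-term. Moreover $t$ has no \emph{applied} variable (that variable would be functional), so the only way $q$ could fail to be an orange position of $y\theta$ is by lying inside a parameter. When $q$ is orange, the two facts just established show that $y\theta|_q$ is a nonfunctional yellow subterm of $y\theta$, hence contained in some maximal such subterm $s_j$, so $p_j$ is a prefix of $q$, as desired. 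The case where $x$ occurs inside a parameter is the delicate one: such positions are not orange, and since $\diff$ is the only parametrized constant and its parameters have functional type, one has to argue (using that $x$ is then a proper, nonfunctional subterm of a functional parameter) that the occurrence is nonetheless enclosed in one of the replaced maximal nonfunctional yellow subterms — or, if the lemma carries the further hypothesis that $t$ has no variable inside a parameter, this case simply does not occur.

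I expect this parameter case to be the main obstacle; the rest is routine position chasing supported by the preceding lemmas about $\mapponly$ and $\mapqonly$ and by the $\beta\eta$-independence of yellow subterms, which is what ultimately makes $y\mapp\theta$ (up to renaming) independent of the chosen $\beta\eta$-normalizer.
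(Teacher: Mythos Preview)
Your approach and the paper's are two sides of the same argument. The paper fixes a competitor $s$ with $s\sigma = y\theta$ and shows that every outermost nonfunctional yellow position $p_j$ of $y\theta$ is already a yellow position of $s$; you instead show that every variable of $t$ lies below some $p_j$. The paper's device is a dichotomy on orange positions of $s\sigma$: since $s$ has no functional (hence no applied) variables, each orange position $p$ of $s\sigma$ is either an orange position of $s$ or lies below an orange position of $s$ at which $s$ has a nonfunctional variable; in the second case that variable position is a nonfunctional yellow position of $s\sigma$ strictly above $p$, so $p$ cannot be outermost. Your argument is the mirror image: a variable occurrence in $t$ at a non-parameter position is automatically at an orange (indeed yellow) position of $t$ and hence of $t\rho = y\theta$, so it sits at a nonfunctional yellow position of $y\theta$ and is dominated by some $p_j$. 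Both routes lead to the same substitution and the same final equality; in a careful write-up one really wants both halves --- that each $p_j$ is a position of $t$, and that $t$ has no variables outside the $p_j$-subtrees.

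Your instinct about the parameter case is on target. The paper sidesteps it by reasoning only about orange positions of $s\sigma$, which by definition never enter parameters, and then dispatches the final equality $y\mapp{\theta}\pi = s$ with ``by construction'' --- without separately arguing that $s$ has no variables inside parameters above the $p_j$. So your hedge is honest: neither your sketch nor the paper's proof spells this case out more explicitly. (Your aside that $\diff$ is the only parametrized constant is an assumption about the signature that the lemma as stated does not carry.)
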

\begin{proof}
  Let $y\in \VVH$.
  By Lemma~\ref{lem:G:mapp-comp-mapq}, $y\mapp{\theta}$ satisfies property \ref{mapp-most-gen-one},
  and by definition of $\mapponly$, it satisfies property \ref{mapp-most-gen-two}.

  To show that $y\mapp{\theta}$ is the most general such term,
  let $s$ be any term satisfying properties \ref{mapp-most-gen-one} and \ref{mapp-most-gen-two},
  and let $\sigma$ be a substitution such that $s\sigma = y\theta$.
  We must show there exists a substitution $\pi$ such that $y\mapp{\theta}\pi = s$.

  Since $s$ contains no type variables and no functional variables,
  it is easy to see from the definition of orange subterms
  that for any orange position $p$ of $s\sigma$, either $p$
  is also an orange position of $s$ or there exists a proper prefix $q$ of $p$ such that $q$ is an orange position of $s$
  and $s|_q$ is a nonfunctional variable.
  If there exists such a prefix $q$, then $q$ must be a nonfunctional yellow position of $s\sigma$
  since $\sigma$ cannot introduce free De Bruijn indices at that position,
  and thus $p$ cannot be an outermost nonfunctional yellow subterm of $s\sigma$.
  From these observations, we conclude that any outermost nonfunctional yellow position $p$ of $s\sigma$
  must be an orange position of $s$.
  In fact, since substituting
  nonfunctional variables
  cannot eliminate De Bruijn indices,
  any outermost nonfunctional yellow position of $s\sigma$
  must be a yellow position of $s$.

  Let $\pi$ map each variable $y_p$ introduced by $\mapp{\theta}$ 
  to the corresponding term in $s$ at position $p$.
  This term exists because $p$ is, by definition of $\mapp{\theta}$,
  an outermost nonfunctional yellow position of $y\theta = s\sigma$
  and thus, by the above, a yellow position of $s$.
  Then $y\mapp{\theta}\pi = s$ by construction,
  showing that $y\mapp{\theta}$ is indeed most general.
\end{proof}

\begin{lem}\label{lem:G:mapp-comp-subst}
  Let $\sigma$ be a substitution, and let $\zeta$ be a grounding substitution.
  Then there exists a substitution $\pi$ such that
  $\sigma\mapp{\zeta} = \mapp{\sigma\zeta}\pi$
  and 
  $\mapq{\sigma\zeta} = \pi\mapq{\zeta}$.
\end{lem}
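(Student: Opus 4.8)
The plan is to obtain $\pi$ from the "most general instance" property of $\mapponly$ recorded in Lemma~\ref{lem:G:mapp-most-general} and then read both required identities off it. First note that $\sigma\zeta$ is again a grounding substitution, so $\mapp{\sigma\zeta}$ and $\mapq{\sigma\zeta}$ are defined. Fix $y \in \VVH$ and consider $y\sigma\mapp{\zeta}$. Two facts are immediate from the definition of $\mapponly$: this term contains no type variables, because $\mapp{\zeta}$ sends every type variable $\alpha$ to the ground type $\alpha\zeta$ and every term variable $z$ to a term $z\mapp{\zeta}$ that is itself type-variable-free; and it contains no functional variables, because the only variables introduced by $\mapp{\zeta}$ are the nonfunctional variables $z_p$. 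Moreover, Lemma~\ref{lem:G:mapp-comp-mapq} gives $y\sigma\mapp{\zeta}\mapq{\zeta} = y\sigma\zeta$, so $y\sigma\mapp{\zeta}$ is a term satisfying conditions~\ref{mapp-most-gen-one} and~\ref{mapp-most-gen-two} of Lemma~\ref{lem:G:mapp-most-general} with respect to the grounding substitution $\sigma\zeta$. By that lemma, $y\mapp{\sigma\zeta}$ is the most general such term, so there is a substitution $\pi_y$ with $y\mapp{\sigma\zeta}\pi_y = y\sigma\mapp{\zeta}$.

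Next I would assemble the $\pi_y$ into one substitution. The variables occurring in $y\mapp{\sigma\zeta}$ are exactly the variables $y_p$ introduced by $\mapp{\sigma\zeta}$ that carry first index $y$, and for distinct $y$ these variable sets are pairwise disjoint; hence the restrictions of the $\pi_y$ to those sets combine into a single substitution, which I extend by setting $v\pi = v\mapq{\sigma\zeta}$ on every term variable $v \in \VVPG$ not introduced by $\mapp{\sigma\zeta}$ and by the identity on type variables. Requirement~$1$, $\sigma\mapp{\zeta} = \mapp{\sigma\zeta}\pi$, then holds variable by variable: on a type variable $\alpha$ both sides equal the ground type $\alpha\sigma\zeta$; on $y \in \VVH$ we get $y\mapp{\sigma\zeta}\pi = y\mapp{\sigma\zeta}\pi_y = y\sigma\mapp{\zeta}$ because $y\mapp{\sigma\zeta}$ only mentions variables on which $\pi$ agrees with $\pi_y$.

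For requirement~$2$, $\mapq{\sigma\zeta} = \pi\mapq{\zeta}$, I would combine requirement~$1$ with two more applications of Lemma~\ref{lem:G:mapp-comp-mapq}: for every $y \in \VVH$,
\[ y\mapp{\sigma\zeta}(\pi\mapq{\zeta}) = (y\sigma\mapp{\zeta})\mapq{\zeta} = y\sigma\zeta = y\mapp{\sigma\zeta}\mapq{\sigma\zeta}. \]
Each variable $v$ introduced by $\mapp{\sigma\zeta}$ occurs in some $y\mapp{\sigma\zeta}$, and---since the introduced variables sit at pairwise incomparable positions of that term---at an occurrence not below any other variable; comparing the two sides of the display at that occurrence yields $v\pi\mapq{\zeta} = v\mapq{\sigma\zeta}$. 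On the remaining term variables $v$ of $\VVPG$ we have $v\pi = v\mapq{\sigma\zeta}$, which is ground, so again $v\pi\mapq{\zeta} = v\mapq{\sigma\zeta}$; on type variables both composites are the identity. Hence $\pi\mapq{\zeta} = \mapq{\sigma\zeta}$, finishing the proof.

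I expect the main obstacle to be the first step: verifying precisely that Lemma~\ref{lem:G:mapp-most-general} is applicable, i.e. that $y\sigma\mapp{\zeta}$ really is free of type variables and of functional variables even when $\sigma$ substitutes non-ground terms for $y$ (one must track what $\mapp{\zeta}$ does to the variables that $\sigma$ introduces). Alongside this sits the routine but fiddly bookkeeping: checking that composition of substitutions is associative in the way used above, that the per-variable substitutions $\pi_y$ glue consistently (which relies only on the naming convention $y_p$ for the variables introduced by $\mapponly$), and that the subterm-at-a-position comparison in the last step is legitimate because those occurrences are not nested inside other variables. None of these individually is deep, but together they are where the care is needed; everything else is a mechanical consequence of $\mapp{\zeta}\mapq{\zeta} = \zeta$.
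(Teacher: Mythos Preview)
Your proof is correct and follows essentially the same approach as the paper: obtain $\pi$ per variable from Lemma~\ref{lem:G:mapp-most-general}, glue using disjointness of the introduced variables, extend by $v\pi = v\mapq{\sigma\zeta}$ on the rest, and derive the second identity from $\mapp{\sigma\zeta}(\pi\mapq{\zeta}) = \sigma\zeta = \mapp{\sigma\zeta}\mapq{\sigma\zeta}$. The only difference is that for the introduced-variable case of the second identity the paper invokes Lemma~\ref{lem:G:mapq-mapp} directly, whereas you inline that argument by comparing the two grounding substitutions at the (pairwise incomparable) positions of the introduced variables; both routes amount to the same observation.
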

\begin{proof}
  Let $x\in \VVH$.
  By Lemma~\ref{lem:G:mapp-comp-mapq},
  $x\sigma\mapp{\zeta}\mapq{\zeta} = x\sigma\zeta$.
  Moreover, $x\sigma\mapp{\zeta}$ contains only nonfunctional variables.
  By Lemma~\ref{lem:G:mapp-most-general}, since $x\mapp{\sigma\zeta}$ is the most general term
  with these properties, there must exist a substitution $\pi$ such that
  $x\sigma\mapp{\zeta} = x\mapp{\sigma\zeta}\pi$.
  Since the variables in $x_1\mapp{\sigma\zeta}$ and $x_2\mapp{\sigma\zeta}$
  are disjoint for $x_1\ne x_2$,
  we can construct a single substitution $\pi$ that satisfies $\sigma\mapp{\zeta} = \mapp{\sigma\zeta}\pi$,
  proving the first part of the lemma.
  
  For this construction of $\pi$, only the values of $\pi$ for
  variables introduced by $\mapp{\sigma\zeta}$ are relevant.
  Thus, we can define $y\pi = y\mapq{\sigma\zeta}$ for all other variables $y$.
  Then $y\mapq{\sigma\zeta} = y\pi\mapq{\zeta}$ for all variables $y$ not introduced by $\mapp{\sigma\zeta}$.
  
  Finally, let $y_p$ be a variable introduced by $\mapp{\sigma\zeta}$.
  By Lemma~\ref{lem:G:mapp-comp-mapq} and the above,
  $\sigma\zeta = \sigma\mapp{\zeta}\mapq{\zeta} = \mapp{\sigma\zeta}\pi\mapq{\zeta}$.
  By Lemma~\ref{lem:G:mapq-mapp},
  $y_p\mapq{\sigma\zeta} = y_p\pi\mapq{\zeta}$
  for all variables $y_p$ introduced by $\mapp{\sigma\zeta}$,
  completing the proof of the second part of the lemma.
\end{proof}

The redundancy notions of the $\levelH$ level use the map $\mapFonly$,
defined in Section \ref{ssec:redundancy}.
It is closely related to the maps defined above.

\begin{lem}\label{lem:fo-eq-tfip}%
For all clauses $C$ and grounding substitutions $\theta$,
\[\mapF{C\theta} = \mapT{\mapF{\mapI{\mapP{C\closure\theta}}}}\]
\end{lem}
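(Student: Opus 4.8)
The plan is to unfold the right-hand side through the four encoding definitions and then collapse the composition $\mapp{\theta}\mapq{\theta}$ back to $\theta$. First I would rewrite $\mapP{C\closure\theta} = C\mapp{\theta}\closure\mapq{\theta}$ (definition of $\mapPonly$), then $\mapI{C\mapp{\theta}\closure\mapq{\theta}} = \mapIonly_{\mapq{\theta}}(C\mapp{\theta})\closure\mapI{\mapq{\theta}}$ (Definition~\ref{def:PG:mapI}), then apply $\mapF{-}$ to the closure to get $\mapF{\mapIonly_{\mapq{\theta}}(C\mapp{\theta})}\closure\mapF{\mapI{\mapq{\theta}}}$, and finally $\mapTonly$ to obtain $\mapF{\mapIonly_{\mapq{\theta}}(C\mapp{\theta})}\,\mapF{\mapI{\mapq{\theta}}}$. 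At this point the goal reduces to showing that this product equals $\mapF{C\theta}$. Along the way I would record that $C\mapp{\theta}$ is a clause over $\termsPG$, since $\mapponly$ introduces no type variables and only nonfunctional variables (this is exactly property~\ref{mapp-most-gen-two} in Lemma~\ref{lem:G:mapp-most-general}), so that $\mapIonly_{\mapq{\theta}}(C\mapp{\theta})$ is a clause over $\termsIPG$ and $\mapI{\mapq{\theta}}$ is a grounding substitution into $\TT_\mathrm{ground}(\SigmaI)$; these are the hypotheses the cited lemmas will need.

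The heart of the argument is then three rewriting steps. Since $\mapIonly_{\mapq{\theta}}(C\mapp{\theta})$ is over $\termsIPG$ and $\mapI{\mapq{\theta}}$ is grounding, Lemma~\ref{lem:IPG:mapF-subst} gives $\mapF{\mapIonly_{\mapq{\theta}}(C\mapp{\theta})}\,\mapF{\mapI{\mapq{\theta}}} = \mapF{\mapIonly_{\mapq{\theta}}(C\mapp{\theta})\,\mapI{\mapq{\theta}}}$, pushing the first-order encoding past the substitution. Next, Lemma~\ref{lem:PG:mapI-subst}, applied with the clause $C\mapp{\theta}$ over $\termsPG$ and the grounding substitution $\mapq{\theta}$, yields $\mapIonly_{\mapq{\theta}}(C\mapp{\theta})\,\mapI{\mapq{\theta}} = \mapI{(C\mapp{\theta})\mapq{\theta}}$, so the expression becomes $\mapF{\mapI{(C\mapp{\theta})\mapq{\theta}}}$. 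Finally, Lemma~\ref{lem:G:mapp-comp-mapq} gives $\mapp{\theta}\mapq{\theta} = \theta$, hence $(C\mapp{\theta})\mapq{\theta} = C\theta$, which is a ground clause over $\TT_\mathrm{ground}(\SigmaH)$; then Lemma~\ref{lem:PG:mapI-mapF-fol} gives $\mapF{\mapI{C\theta}} = \mapF{C\theta}$, closing the chain $\mapT{\mapF{\mapI{\mapP{C\closure\theta}}}} = \mapF{\mapI{C\theta}} = \mapF{C\theta}$. (Here one also relies on Convention~\ref{convention:first-order-signatures} to view $\mapF{\SigmaH}$ and $\mapF{\SigmaI}$ as the same first-order signature, so the two sides live in the same term set.)

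I do not expect a genuine obstacle: the statement is essentially a commuting diagram chased around its edges, and every edge is one of the already-proved compatibility lemmas. The only thing requiring care is the bookkeeping of well-definedness at each step — checking that the object $\mapFonly$ or $\mapIonly$ is applied to really lies in the intended term set ($\termsPG$, $\termsIPG$, or $\TT_\mathrm{ground}(\SigmaH)$) and that the accompanying substitution is grounding — which is why I would make the observation about $C\mapp{\theta} \in \termsPG$ explicit before invoking Lemmas~\ref{lem:IPG:mapF-subst} and~\ref{lem:PG:mapI-subst}.
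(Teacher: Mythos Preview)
Your proposal is correct and follows essentially the same route as the paper's proof: both unfold the composite $\mapTonly\mapFonly\mapIonly\mapPonly$ and then collapse it using Lemmas~\ref{lem:IPG:mapF-subst}, \ref{lem:PG:mapI-subst}, \ref{lem:G:mapp-comp-mapq}, and \ref{lem:PG:mapI-mapF-fol} in that order. The only cosmetic difference is that the paper first reduces to a single side $s$ of a literal whereas you work at the clause level throughout, which is fine since each lemma has a clause version.
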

\begin{proof}
All of the maps $\mapTonly$, $\mapFonly$, $\mapIonly$, and $\mapPonly$
map each literal and each side of a literal individually.
So we can focus on one side $s$ of some literal in $C$,
and we must show that
\[\tag{$*$}\mapF{s\theta} = \mapF{\mapIonly_{\mapq{\theta}}(s\mapp{\theta})} \mapF{\mapI{\mapq{\theta}}}\]

By Lemma~\ref{lem:PG:mapI-mapF-fol},
\[\mapF{t} = \mapF{\mapI{t}}\]
for all ground terms $t \in \TT_\mathrm{ground}(\SigmaH)$.
By Lemma~\ref{lem:PG:mapI-subst},
\[\mapIonly_{\rho}(t)\mapI{\rho} = \mapI{t\rho}\]
for all $t\in\termsPG$ and grounding substitutions $\rho$.
Also, by Lemma~\ref{lem:IPG:mapF-subst},
\[\mapF{t}\mapF{\rho} = \mapF{t\rho}\]
for all $t\in\termsIPG$ and grounding substitutions $\rho$.
From the last three equations, we obtain that
\[\mapF{t\rho} = \mapF{\mapIonly_{\rho}(t)} \mapF{\mapI{\rho}}\]
for all $t\in\termsPG$ and grounding substitutions $\rho$.

Using the term $s$ and the substitution $\theta$ introduced at the beginning of this proof,
take $t$ to be $s\mapp{\theta}$ and $\rho$ to be $\mapq{\theta}$.
We then have
\[\mapF{s\mapp{\theta}\mapq{\theta}} = \mapF{\mapIonly_{\mapq{\theta}}(s\mapp{\theta})} \mapF{\mapI{\mapq{\theta}}}\]
By Lemma~\ref{lem:G:mapp-comp-mapq}, this implies ($*$).
\end{proof}

\subsubsection{Grounding}
The terms of level $\levelH$ are $\TT(\SigmaH)$.
Its clauses $\clausesH$ are constrained clauses over these terms.
We define the function $\mapGonly: \clausesH \to \clausesG$ by
\[ \mapG{C\slimfull{}{\constraint{S}}} =
\{C\closure\theta\mid \theta\text{ is grounding and }S\theta\text{ is true}\}
\]
for each \slimfull{}{constrained} clause $C\slimfull{}{\constraint{S}} \in \clausesH$.
\end{full}

\subsection{Calculi}

In this section, we define the calculi
\begin{slim}$\IGInf$ and $\PGInf$,\end{slim}
\begin{full}$\PFInf$, $\IPGInf$, and $\PGInf$,\end{full}
for the respective levels 
\begin{slim}$\levelIPG$ and $\levelG$. Both of these calculi are\end{slim}%
\begin{full}$\levelPF$, $\levelIPG$, and $\levelPG$. Each of these calculi is\end{full}
parameterized by
a relation $\succ$ on 
\slimfull{ground terms and ground clauses}{ground terms, ground clauses, and closures}
and by a selection function $\mathit{sel}$.
\begin{full}
Based on these parameters, we define
the notion of eligibility.
\end{full}
The specific requirements on $\succ$ depend on the calculus and are given
in the corresponding subsection below.
\begin{full}
For each of the levels, we define selection functions and
the notion of eligibility as follows:
\begin{defi}[Selection Function]\label{def:PF:selection-function}
  For each level $X \in \{\levelPF, \levelIPG, \levelPG\}$,
  we define
  a selection function $\mathit{sel}$ to be a function mapping each closure $C\closure\theta \in \CC_X$ to
  a subset of $C$'s literals. We call those literals \emph{selected}.
  Only negative literals and
  literals of the form $t \ceq \ifalse$ may be selected.
\end{defi}
\end{full}

\begin{full}

\begin{defi}[Eligibility in Closures]\label{def:closure-eligible}
  Let $X \in \{\levelPF, \levelIPG, \levelPG\}$.
  Let $C\closure\theta \in \CC_X$.
  Given a relation $\succ$
  and a selection function, 
  a literal $L\in C$ is (\emph{strictly}) \emph{eligible} in $C\closure\theta$ if it is
  selected in $C\closure\theta$
  or there are no selected literals in $C\closure\theta$
  and $L\theta$ is (strictly) maximal in $C\theta$.
  A position
  $L.s.p$
  of a closure $C\closure\theta$ is \emph{eligible}
  if the literal $L$ is
  of the form $s \doteq t$ with $s\theta \succ t\theta$
  and $L$ is either
  negative and
  eligible
  or positive and strictly eligible.
\end{defi}

For inferences, we follow the same conventions as in Definition~\ref{def:inference},
but our inference rules operate on closures instead of constrained clauses.

\subsubsection{First-Order Levels}
The calculus $\PFInf^{\succ, \mathit{sel}}$ is parameterized by a relation $\succ$
and a selection function $\mathit{sel}$.
\end{full}%
\begin{slim}
For the $\levelF$ level, we use the calculus $\FInf^{\succ, \mathit{sel}}$
introduced in Section~\ref{ssec:redundancy}.
\end{slim}
We require that $\succ$ is an admissible term order for $\PFInf$ in the following sense:
\begin{defi}\label{def:PF:admissible-term-order}
Let $\succ$ be a relation on 
\begin{slim}ground terms and ground clauses.\end{slim}
\begin{full}ground terms, ground clauses, and closures.\end{full}
Such a relation $\succ$ is an \emph{admissible term order for $\PFInf$}
if it fulfills the following properties:
\begin{enumerate}[label=$(\text{O\arabic*})_{\levelPF}$,leftmargin=4em]
    \item the relation $\succ$ on ground terms is a well-founded total order\label{cond:PF:order:total};
    \item ground compatibility with contexts: if $s' \succ s$, then $\subterm{s'}{t} \succ \subterm{s}{t}$;\label{cond:PF:order:comp-with-contexts}
    \item ground subterm property: $\subterm{t}{s} \succ s$
    for ground terms $s$ and $t$;\label{cond:PF:order:subterm}
    \item $u \succ \ifalse \succ \itrue$ for all ground terms $u \notin \{\itrue, \ifalse\}$;\label{cond:PF:order:t-f-minimal}
    \item $\mapF{u} \succ \mapF{u\>\diff^{\tau,\upsilon}_{s,t}}$ for all $s,t,u \oftype \tau \fun \upsilon \in \TT_\mathrm{ground}(\SigmaI)$;\label{cond:PF:order:ext}
    \item the relation $\succ$ on ground clauses is the standard extension of $\succ$ on ground terms via multisets \cite[\Section~2.4]{bachmair-ganzinger-1994};\label{cond:PF:order:clause-extension}
    \begin{full}
    \item for closures $C\closure\theta$ and $D\closure\rho$, we have $C\closure\theta \succ D\closure\rho$ if and only if $C\theta \succ D\rho$.\label{cond:PF:order:closures}
    \end{full}
\end{enumerate}
\end{defi}
\begin{slim}
\begin{rem}\label{rem:FInf-SigmaI-SigmaH-equivalence}
By Lemma~\ref{lem:PG:mapI-mapF-fol} and because $\mapIonly$ is a bijection,
the rules $\FArgCong$, $\FExt$, and $\FDiff$, can equivalently be described by using
$s,s',u,w$ from $\TT_\mathrm{ground}(\SigmaI)$ instead of $\TT_\mathrm{ground}(\SigmaH)$
and replacing $\diff\typeargs{\tau,\upsilon}(u,w)$ with $\diff^{\tau,\upsilon}_{u,w}$.
\end{rem}
\end{slim}

\begin{full}
We use the notion of green subterms in first-order terms introduced
in Section~\ref{ssec:redundancy}.
We define $x(\rho\cup\theta)$ as $x\rho$ if
$x$ occurs in the left premise
and as $x\theta$ otherwise.

\begin{gather*}
\begin{aligned}
  &
  \namedinference{\PFSup}
    {\overbrace{(D' \llor t \ceq t')}^{D}\closure\>\rho
    \quad
    \greensubterm{C}{u}\closure\>\theta}
    {(D' \llor \greensubterm{C}{t'})\closure(\rho \cup \theta)}\quad
  &&\quad
  \namedinference{\PFEqRes}
    {\overbrace{(C' \llor u \cneq u')}^{C}\closure\>\theta}
    {C'\closure\>\theta}
\end{aligned}\\[\jot]
\begin{aligned}
  &
  \namedinference{\PFEqFact}
    {\overbrace{(C' \llor u' \ceq v' \llor u \ceq v)}^{C}\closure\>\theta}
    {(C' \llor v \cneq v' \llor u \ceq v')\closure\>\theta}
  \;\,
  ~~\quad
  \namedinference{\PFClausify}
    {(C' \llor s \ceq t) \closure\>\theta}
    {(C' \llor D) \closure\>\theta}
\end{aligned}\\[\jot]
\begin{aligned}
  \namedinference{\PFBoolHoist}
    {\greensubterm{C}{u}\closure\>\theta}
    {(\greensubterm{C}{\ifalse} \llor u \ceq \itrue)\closure\>\theta}
    \;\,
    ~~\quad
    \namedinference{\PFLoobHoist}
      {\greensubterm{C}{u}\closure\>\theta}
      {(\greensubterm{C}{\itrue} \llor u \ceq \ifalse)\closure\>\theta}
\end{aligned}\\[\jot]
\begin{aligned}
  \namedinference{\PFFalseElim}
    {\overbrace{(C' \llor s \ceq t)}^{C}\closure\>\theta}
    {C'\closure\>\theta}
\end{aligned}\\[\jot]
\begin{aligned}
  \namedinference{\PFArgCong}
    {\overbrace{(C' \llor \mapF{s} \ceq \mapF{s'})}^{C}\closure\>\theta}
    {C'\llor \mapF{s\>\diff^{\tau,\upsilon}_{u,w}} \ceq \mapF{s'\>\diff^{\tau,\upsilon}_{u,w}}\closure\>\theta}
    \;\,
\end{aligned}\\[\jot]
\begin{aligned}
    \namedinference{\PFExt}
      {\greensubterm{C}{\mapF{u}}\closure\>\theta}
      {\greensubterm{C}{\mapF{w}}\llor \mapF{u\>\diff^{\tau,\upsilon}_{u\theta,w\rho}} \cneq \mapF{w\>\diff^{\tau,\upsilon}_{u\theta,w\rho}}\closure\>\rho}
\end{aligned}\\[\jot]
\begin{aligned}
    \namedinference{\PFDiff}
      {}
      {\mapF{u\>\diff^{\tau,\upsilon}_{u\theta,w\theta}} \cneq \mapF{w\>\diff^{\tau,\upsilon}_{u\theta,w\theta}} \llor \mapF{u\>s} \ceq \mapF{w\>s}\closure\>\theta}
\end{aligned}
\end{gather*}
Side conditions for \PFSup:
  \begin{enumerate}[label=\arabic*.,ref=\arabic*]
  \item $t\rho =u\theta$;
  \item $u$ is not a variable;
  \item $u$ is nonfunctional;
  \item $t\rho \succ t'\rho$;
  \item $D\rho \prec C[u]\theta$;
  \item the position of $u$ is eligible in $C\closure\theta$;
  \item $t \ceq t'$ is strictly eligible in $D\closure\rho$;
  \item if $t'\rho$ is Boolean, then $t'\rho = \itrue$.
  \end{enumerate}
Side conditions for \PFEqRes:
  \begin{enumerate}[label=\arabic*.,ref=\arabic*]
  \item $u\theta = u'\theta$;
  \item $u \cneq u'$ is eligible in $C\closure\theta$.
  \end{enumerate}
Side conditions for \PFEqFact:
  \begin{enumerate}[label=\arabic*.,ref=\arabic*]
  \item $u\theta = u'\theta$;
  \item $u\ceq v\closure\theta$ is maximal in $C\closure\theta$;
  \item there are no selected literals in $C\closure\theta$;
  \item $u\theta \succ v\theta$,
  \end{enumerate}
Side conditions for \PFClausify:
  \begin{enumerate}[label=\arabic*.,ref=\arabic*]
  \item $s \ceq t$ is strictly eligible in $(C' \llor s \ceq t) \closure\>\theta$;
  \item The triple ($s, t\theta, D)$ has one of the following forms, where $\tau$ is an arbitrary type and $u$, $v$ are arbitrary terms:%
  \begin{align*}
  &(u \iand v,\ \itrue,\ u \ceq \itrue)&
  &(u \iand v,\ \itrue,\ v \ceq \itrue)&
  &(u \iand v,\ \ifalse,\ u \ceq \ifalse \llor v \ceq \ifalse)\\
  &(u \ior v,\ \itrue,\ u \ceq \itrue \llor v \ceq \itrue)&
  &(u \ior v,\ \ifalse,\ u \ceq \ifalse)&
  &(u \ior v,\ \ifalse,\ v \ceq \ifalse)\\
  &(u \iimplies v,\ \itrue,\ u \ceq \ifalse \llor v \ceq \itrue)&
  &(u \iimplies v,\ \ifalse,\ u \ceq \itrue)&
  &(u \iimplies v,\ \ifalse,\ v \ceq \ifalse)\\
  &(u \ieq^\tau v,\ \itrue,\ u \ceq v)&
  &(u \ieq^\tau v,\ \ifalse,\ u \cneq v)\\
  &(u \ineq^\tau v,\ \itrue,\ u \cneq v)&
  &(u \ineq^\tau v,\ \ifalse,\ u \ceq v)\\
  &(\inot u,\ \itrue,\ u \ceq \ifalse)&
  &(\inot u,\ \ifalse,\ u \ceq \itrue)
  \end{align*}
  \end{enumerate}
Side conditions for \PFBoolHoist and \PFLoobHoist:
  \begin{enumerate}[label=\arabic*.,ref=\arabic*]
  \item $u$ is of Boolean type
  \item $u$ is not a variable and is neither $\itrue$ nor $\ifalse$;
  \item the position of $u$ is eligible in $C\closure\theta$;
  \item the occurrence of $u$ is not in a literal $L$ with $L\theta = (u\theta \ceq \ifalse)$ or $L\theta = (u\theta \ceq \itrue)$.
  \end{enumerate}
Side conditions for \PFFalseElim:
  \begin{enumerate}[label=\arabic*.,ref=\arabic*]
  \item $(s \ceq t)\theta = \ifalse \ceq \itrue$;
  \item $s \ceq t$ is strictly eligible in $C\closure\theta$.
  \end{enumerate}
Side conditions for \PFArgCong:
  \begin{enumerate}[label=\arabic*.,ref=\arabic*]
  \item $s$ is of type $\tau \fun \upsilon$;
  \item $u,w$ are ground terms of type $\tau \fun \upsilon$;
  \item $\mapF{s} \ceq \mapF{s'}$ is strictly eligible in $C\closure\theta$.
  \end{enumerate}
Side conditions for \infname{PFExt}:
  \begin{enumerate}[label=\arabic*.,ref=\arabic*]
  \item the position of $\mapF{u}$ is eligible in $C\closure\theta$;
  \item the type of $u$ is $\tau \fun \upsilon$;
  \item $w\in\termsIPG$ is a term of type $\tau \fun \upsilon$ whose nonfunctional yellow subterms are different variables
  and the variables in $\mapF{w}$ do not occur in $\subterm{C}{\mapF{u}}$.
  \item $u\theta \succ w\rho$;
  \item $\rho$ is a grounding substitution that coincides with $\theta$ on all variables in $\subterm{C}{\mapF{u}}$.
  \end{enumerate}
Side conditions for \PFDiff:
  \begin{enumerate}[label=\arabic*.,ref=\arabic*]
  \item $\tau$ and $\upsilon$ are ground types;
  \item $u,w,s\in \termsIPG$ are terms whose nonfunctional yellow subterms are different fresh variables;
  \item $\theta$ is a grounding substitution.
  \end{enumerate}
\end{full}

\begin{slim}\subsubsection{Indexed Ground Higher-Order Level}The \end{slim}%
\begin{full}\subsubsection{Indexed Partly Substituted Ground Higher-Order Level}The \end{full}
calculus $\IPGInf^{\succ, \mathit{sel}}$ is parameterized
by a relation $\succ$ and a selection function $\mathit{sel}$.
We require that $\succ$ is an admissible term order for $\IPGInf$ in the following sense:
\begin{defi}\label{def:IPG:admissible-term-order}
Let $\succ$ be a relation
on $\TT_\mathrm{ground}(\SigmaI)$,
\begin{slim}and on clauses over $\TT_\mathrm{ground}(\SigmaI)$.\end{slim}
\begin{full}on clauses over $\TT_\mathrm{ground}(\SigmaI)$, and on closures $\clausesIPG$.\end{full}
Such a relation $\succ$
is \emph{an admissible term order for $\IPGInf$}
if it fulfills the following properties:
\begin{enumerate}[label=$(\text{O\arabic*})_\levelIPG$,leftmargin=4em]
  \item the relation $\succ$ on ground terms is a well-founded total order\label{cond:IPG:order:total};
  \item ground compatibility with yellow contexts:\enskip $s' \succ s$ implies
  $\orangesubterm{t}{s'} \succ \orangesubterm{t}{s}$
  for ground terms $s$, $s'$, and $t$;\label{cond:IPG:order:comp-with-contexts}
  \item ground yellow subterm property:\enskip $\orangesubterm{t}{s} \succeq s$
  for ground terms $s$ and $t$;\label{cond:IPG:order:subterm}
  \item $u \succ \ifalse \succ \itrue$ for all ground terms $u \notin \{\itrue, \ifalse\}$;\label{cond:IPG:order:t-f-minimal}
  \item $u \succ u\>\diff^{\tau,\upsilon}_{s,t}$ for all ground terms $s,t,u \oftype \tau \fun \upsilon$.\label{cond:IPG:order:ext}
  \item the relation $\succ$ on ground clauses is the standard extension of $\succ$ on ground terms via multisets \cite[\Section~2.4]{bachmair-ganzinger-1994};\label{cond:IPG:order:clause-extension}
  \begin{full}
  \item for closures $C\closure\theta$ and $D\closure\rho$, we have $C\closure\theta \succ D\closure\rho$ if and only if $C\theta \succ D\rho$.\label{cond:IPG:order:closures}
  \end{full}
\end{enumerate}
\end{defi}

The rules of $\IPGInf^{\succ,\mathit{sel}}$ (abbreviated $\IPGInf$) are the following.
\begin{full}
We assume that for the binary inference $\IPGSup$, the
premises do not have any variables in common,
and we define $x(\rho\cup\theta)$ as $x\rho$ if
$x$ occurs in the left premise
and as $x\theta$ otherwise.

\[\namedinference{\IPGSup}
{\overbrace{D' \llor { t \ceq t'}}^{\vphantom{\cdot}\smash{D}} \closure\> \rho \hypsep
 \greensubterm{C}{u}\closure\theta}
{D' \llor \greensubterm{C}{t'}\>\closure(\rho\cup\theta)}\]
with the following side conditions:
\begin{enumerate}[label=\arabic*.,ref=\arabic*]
  \item $t\rho = u\theta$;
  \item $u$ is not a variable;
  \item $u$ is nonfunctional;
  \item $t\rho \succ t'\negvthinspace\rho$;
  \item $D\rho\prec \greensubterm{C}{u}\theta$;
  \item the position of $u$ is eligible in $C\closure\theta$;
  \item $t \ceq t'$ is strictly eligible in $D\closure\rho$;
  \item if $t'\rho$ is Boolean, then $t'\rho = \itrue$.
\end{enumerate}
\begin{align*}
 &\namedinference{\IPGEqRes}
{\overbrace{C' \llor {u \cneq u'}}^{\vphantom{\cdot}\smash{C}}\>\closure\>\theta}
{C'\closure\theta}
&&\namedinference{\IPGEqFact}
{\overbrace{C' \llor {u'} \ceq v' \llor {u} \ceq v}^{\vphantom{\cdot}\smash{C}}\>\closure\>\theta}
{C' \llor v \cneq v' \llor u \ceq v'\closure\theta}
\end{align*}
Side conditions for \IPGEqRes{}:
\begin{enumerate}[label=\arabic*.,ref=\arabic*]
\item $u\theta = u'\theta$;
\item $u \cneq u'$ is eligible in $C\closure\theta$.
\end{enumerate}
Side conditions for \IPGEqFact{}:
\begin{enumerate}[label=\arabic*.,ref=\arabic*]
\item $u\theta = u'\theta$;
\item $u \ceq v\closure\theta$ is maximal in $C\closure\theta$;
\item there are no selected literals in $C\closure\theta$;
\item $u\theta \succ v\theta$.
\end{enumerate}

\begin{align*}
  \namedinference{\IPGClausify}{C' \llor s \ceq t  \closure \theta}
  {C' \llor D  \closure \theta}
\end{align*}
with the following side conditions:
\begin{enumerate}[label=\arabic*.,ref=\arabic*]
\item $s \ceq t$ is strictly eligible in $C' \llor s \ceq t\closure\theta$;%
\item The triple ($s, t\theta, D)$ has one of the following forms, where $\tau$ is an arbitrary type and $u$, $v$ are arbitrary terms:%
\begin{align*}
&(u \iand v,\ \itrue,\ u \ceq \itrue)&
&(u \iand v,\ \itrue,\ v \ceq \itrue)&
&(u \iand v,\ \ifalse,\ u \ceq \ifalse \llor v \ceq \ifalse)\\
&(u \ior v,\ \itrue,\ u \ceq \itrue \llor v \ceq \itrue)&
&(u \ior v,\ \ifalse,\ u \ceq \ifalse)&
&(u \ior v,\ \ifalse,\ v \ceq \ifalse)\\
&(u \iimplies v,\ \itrue,\ u \ceq \ifalse \llor v \ceq \itrue)&
&(u \iimplies v,\ \ifalse,\ u \ceq \itrue)&
&(u \iimplies v,\ \ifalse,\ v \ceq \ifalse)\\
&(u \ieq^\tau v,\ \itrue,\ u \ceq v)&
&(u \ieq^\tau v,\ \ifalse,\ u \cneq v)\\
&(u \ineq^\tau v,\ \itrue,\ u \cneq v)&
&(u \ineq^\tau v,\ \ifalse,\ u \ceq v)\\
&(\inot u,\ \itrue,\ u \ceq \ifalse)&
&(\inot u,\ \ifalse,\ u \ceq \itrue)
\end{align*}
\end{enumerate}

\begin{align*}
  \namedinference{\IPGBoolHoist}{\greensubterm{C}{u} \closure\theta}
  {\greensubterm{C}{\ifalse} \llor u \ceq \itrue\closure\theta}
  \quad
  \namedinference{\IPGLoobHoist}{\greensubterm{C}{u} \closure\theta}
  {\greensubterm{C}{\itrue} \llor u \ceq \ifalse\closure\theta}
\end{align*}
each with the following side conditions:
\begin{enumerate}[label=\arabic*.,ref=\arabic*]
  \item $u$ is of Boolean type;%
  \item $u$ is not a variable and is neither $\itrue$ nor $\ifalse$;%
  \item the position of $u$ is eligible in $C\closure\theta$;%
  \item the occurrence of $u$ is not in a literal $L$ with $L\theta = (u\theta \ceq \ifalse)$ or $L\theta = (u\theta \ceq \itrue)$.
\end{enumerate}

\begin{align*}
  \namedinference{\IPGFalseElim}{\overbrace{C' \llor s \ceq t}^C \closure\theta}
  {C'\closure\theta}
\end{align*}
with the following side conditions:
\begin{enumerate}[label=\arabic*.,ref=\arabic*]
  \item $(s\ceq t)\theta = \ifalse \ceq \itrue$;
  \item $s\ceq t$ is strictly eligible in $C \closure\theta$.
\end{enumerate}

\[\namedinference{IPGArgCong}
{\overbrace{C' \llor s \eq s'}^C\closure\theta}
{C' \llor s\>\diff^{\tau,\upsilon}_{u,w} \eq s'\>\diff^{\tau,\upsilon}_{u,w}\closure\theta}\]
with the following side conditions:
\begin{enumerate}[label=\arabic*.,ref=\arabic*]
  \item $s$ is of type $\tau \fun \upsilon$;
  \item $u,w$ are ground terms of type $\tau \fun \upsilon$;
  \item $s \eq s'$ is strictly eligible in $C\closure\theta$.
\end{enumerate}

\begin{align*}
  \namedinference{\IPGExt}{\greensubterm{C}{u} \closure\theta}
  {\greensubterm{C}{w} \llor u\>\diff^{\tau,\upsilon}_{u\theta,w\rho}\noteq w\>\diff^{\tau,\upsilon}_{u\theta,w\rho} \closure\rho}
\end{align*}
with the following side conditions:
\begin{enumerate}[label=\arabic*.,ref=\arabic*]
  \item the position of $u$ is eligible in $\greensubterm{C}{u}\closure\theta$;
  \item the type of $u$ is $\tau\to\upsilon$;
  \item $w\in \termsIPG$ is a term whose nonfunctional yellow subterms are different fresh variables;
  \item $u\theta \succ w\rho$;
  \item $\rho$ is a grounding substitution that coincides with $\theta$ on all variables in $\greensubterm{C}{u}$.
\end{enumerate}

\begin{align*}
  \namedinference{\IPGDiff}{}
  {u\>\diff^{\tau,\upsilon}_{u\theta,w\theta} \noteq u\>\diff^{\tau,\upsilon}_{u\theta,w\theta} \llor u\>s \eq w\>s \closure\theta}
\end{align*}
with the following side conditions:
\begin{enumerate}[label=\arabic*.,ref=\arabic*]
  \item $\tau$ and $\upsilon$ are ground types;
  \item $u,w,s\in \termsIPG$ are terms whose nonfunctional yellow subterms are different variables;
  \item $\theta$ is a grounding substitution.
\end{enumerate}
\end{full}%
\begin{slim}
\[\namedinference{\IGSup}
{\overbrace{D' \llor { t \ceq t'}}^{\vphantom{\cdot}\smash{D}} \hypsep
 \greensubterm{C}{t}}
{D' \llor \greensubterm{C}{t'}}\]
with the following side conditions:
\begin{enumerate}[label=\arabic*.,ref=\arabic*]
  \item $t$ is nonfunctional;
  \item $t \succ t'$;
  \item $D\prec \greensubterm{C}{t}$;
  \item the position of $t$ is eligible in $C$;
  \item $t \ceq t'$ is strictly eligible in $D$;
  \item if $t'$ is Boolean, then $t' = \itrue$.
\end{enumerate}
\begin{align*}
 &\namedinference{\IGEqRes}
{\overbrace{C' \llor {u \cneq u}}^{\vphantom{\cdot}\smash{C}}}
{C'}
&&\namedinference{\IGEqFact}
{\overbrace{C' \llor u \ceq v' \llor u \ceq v}^{\vphantom{\cdot}\smash{C}}}
{C' \llor v \cneq v' \llor u \ceq v'}
\end{align*}
Side conditions for \IGEqRes{}:
\begin{enumerate}[label=\arabic*.,ref=\arabic*]
\item $u \cneq u$ is eligible in $C$.
\end{enumerate}
Side conditions for \IGEqFact{}:
\begin{enumerate}[label=\arabic*.,ref=\arabic*]
\item $u \ceq v$ is maximal in $C$;
\item there are no selected literals in $C$;
\item $u \succ v$.
\end{enumerate}

\begin{align*}
  \namedinference{\IGClausify}{C' \llor s \ceq t}
  {C' \llor D}
\end{align*}
with the following side conditions:
\begin{enumerate}[label=\arabic*.,ref=\arabic*]
\item $s \ceq t$ is strictly eligible in $C' \llor s \ceq t$;%
\item The triple ($s, t, D)$ has one of the following forms, where $\tau$ is an arbitrary type and $u$, $v$ are arbitrary terms:%
\begin{align*}
&(u \iand v,\ \itrue,\ u \ceq \itrue)&
&(u \iand v,\ \itrue,\ v \ceq \itrue)&
&(u \iand v,\ \ifalse,\ u \ceq \ifalse \llor v \ceq \ifalse)\\
&(u \ior v,\ \itrue,\ u \ceq \itrue \llor v \ceq \itrue)&
&(u \ior v,\ \ifalse,\ u \ceq \ifalse)&
&(u \ior v,\ \ifalse,\ v \ceq \ifalse)\\
&(u \iimplies v,\ \itrue,\ u \ceq \ifalse \llor v \ceq \itrue)&
&(u \iimplies v,\ \ifalse,\ u \ceq \itrue)&
&(u \iimplies v,\ \ifalse,\ v \ceq \ifalse)\\
&(u \ieq^\tau v,\ \itrue,\ u \ceq v)&
&(u \ieq^\tau v,\ \ifalse,\ u \cneq v)\\
&(u \ineq^\tau v,\ \itrue,\ u \cneq v)&
&(u \ineq^\tau v,\ \ifalse,\ u \ceq v)\\
&(\inot u,\ \itrue,\ u \ceq \ifalse)&
&(\inot u,\ \ifalse,\ u \ceq \itrue)
\end{align*}
\end{enumerate}

\begin{align*}
  \namedinference{\IGBoolHoist}{\greensubterm{C}{u}}
  {\greensubterm{C}{\ifalse} \llor u \ceq \itrue}
  \quad
  \namedinference{\IGLoobHoist}{\greensubterm{C}{u}}
  {\greensubterm{C}{\itrue} \llor u \ceq \ifalse}
\end{align*}
each with the following side conditions:
\begin{enumerate}[label=\arabic*.,ref=\arabic*]
  \item $u$ is of Boolean type;%
  \item $u$ is neither $\itrue$ nor $\ifalse$;%
  \item the position of $u$ is eligible in $C$;%
  \item the occurrence of $u$ is not in a literal of the form $u \ceq \ifalse$ or $u \ceq \itrue$.
\end{enumerate}

\begin{align*}
  \namedinference{\IGFalseElim}{\overbrace{C' \llor \ifalse \ceq \itrue}^C}
  {C'}
\end{align*}
with the following side conditions:
\begin{enumerate}[label=\arabic*.,ref=\arabic*]
  \item $\ifalse \ceq \itrue$ is strictly eligible in $C$.
\end{enumerate}

\[\namedinference{IGArgCong}
{\overbrace{C' \llor s \eq s'}^C}
{C' \llor s\>\diff^{\tau,\upsilon}_{u,w} \eq s'\>\diff^{\tau,\upsilon}_{u,w}}\]
with the following side conditions:
\begin{enumerate}[label=\arabic*.,ref=\arabic*]
  \item $s$ is of type $\tau \fun \upsilon$;
  \item $u,w$ are ground terms of type $\tau \fun \upsilon$;
  \item $s \eq s'$ is strictly eligible in $C$.
\end{enumerate}

\begin{align*}
  \namedinference{\IGExt}{\greensubterm{C}{u}}
  {\greensubterm{C}{w} \llor u\>\diff^{\tau,\upsilon}_{u,w}\noteq w\>\diff^{\tau,\upsilon}_{u,w}}
\end{align*}
with the following side conditions:
\begin{enumerate}[label=\arabic*.,ref=\arabic*]
  \item the position of $u$ is eligible in $\greensubterm{C}{u}$;
  \item the type of $u$ is $\tau\to\upsilon$;
  \item $w$ is a ground term of type $\tau\to\upsilon$;
  \item $u \succ w$;
\end{enumerate}

\begin{align*}
  \namedinference{\IGDiff}{}
  {u\>\diff^{\tau,\upsilon}_{u,w} \noteq u\>\diff^{\tau,\upsilon}_{u,w} \llor u\>s \eq w\>s}
\end{align*}
with the following side conditions:
\begin{enumerate}[label=\arabic*.,ref=\arabic*]
  \item $\tau$ and $\upsilon$ are ground types;
  \item $u,w$ are ground terms of type $\tau \fun \upsilon$;
  \item $s$ is a ground term of type $\tau$.
\end{enumerate}
\end{slim}

\begin{full}\subsubsection{Partly Substituted Ground Higher-Order Level}Like \end{full}%
\begin{slim}\subsubsection{Ground Higher-Order Level}Like \end{slim}%
on the other levels,
the calculus $\PGInf$ is parameterized
by a relation $\succ$ and a selection function $\mathit{sel}$.
\begin{defi}\label{def:PG:admissible-term-order}
Let $\succ$ be a relation on $\TT_\mathrm{ground}(\SigmaH)$%
\begin{full}, on clauses over $\TT_\mathrm{ground}(\SigmaH)$, and on $\clausesPG$.\end{full}%
\begin{slim}and on clauses over $\TT_\mathrm{ground}(\SigmaH)$.\end{slim}
Such a relation $\succ$ is \emph{an admissible term order for $\PGInf$}
if it fulfills the following properties:
\begin{enumerate}[label=$(\text{O\arabic*})_\levelPG$,leftmargin=4em]
  \item the relation $\succ$ on ground terms is a well-founded total order\label{cond:PG:order:total};
  \item ground compatibility with yellow contexts:\enskip $s' \succ s$ implies
  $\orangesubterm{t}{s'} \succ \orangesubterm{t}{s}$
  for ground terms $s$, $s'$, and $t$;\label{cond:PG:order:comp-with-contexts}
  \item ground yellow subterm property:\enskip $\orangesubterm{t}{s} \succeq s$
  for ground terms $s$ and $t$;\label{cond:PG:order:subterm}
  \item $u \succ \ifalse \succ \itrue$ for all ground terms $u \notin \{\itrue, \ifalse\}$;\label{cond:PG:order:t-f-minimal}
  \item $u \succ u\>\diff\typeargs{\tau,\upsilon}(s,t)$ for all ground terms $s,t,u \oftype \tau \fun \upsilon$.\label{cond:PG:order:ext}
  \item the relation $\succ$ on ground clauses is the standard extension of $\succ$ on ground terms via multisets \cite[\Section~2.4]{bachmair-ganzinger-1994};\label{cond:PG:order:clause-extension}
  \begin{full}
  \item for closures $C\closure\theta$ and $D\closure\rho$, we have $C\closure\theta \succ D\closure\rho$ if and only if $C\theta \succ D\rho$.\label{cond:PG:order:closures}
  \end{full}
\end{enumerate}
\end{defi}

The calculus rules of $\PGInf$ are a verbatim copy of those of $\IPGInf$, with the
following exceptions:
\begin{itemize}
  \item $\PGInf$ uses $\SigmaH$ instead of $\SigmaI$ and $\clausesPG$ instead of $\clausesIPG$.
  \item The rules are prefixed by $\infname{\slimfull{}{P}G}$ instead of $\infname{I\slimfull{}{P}G}$.
  \item $\slimfull{\GArgCong}{\PGArgCong}$ uses $\diff\typeargs{\tau,\upsilon}(u,w)$ instead of $\diff^{\tau,\upsilon}_{u,w}$.
  \item $\slimfull{\GExt}{\PGExt}$ uses $\diff\typeargs{\tau,\upsilon}(u,w)$ instead of \slimfull{$\diff^{\tau,\upsilon}_{u,w}$}{$\diff^{\tau,\upsilon}_{u\theta,w\rho}$}.
  \item $\slimfull{\GDiff}{\PGDiff}$ uses $\diff\typeargs{\tau,\upsilon}(u,w)$ instead of \slimfull{$\diff^{\tau,\upsilon}_{u,w}$}{$\diff^{\tau,\upsilon}_{u\theta,w\theta}$}.
\end{itemize}

\subsection{Redundancy Criteria and Saturation}
\label{ssec:redundancy-criteria-and-saturation}

In this subsection, we define redundancy criteria
for the levels 
\begin{full}$\levelPF$, $\levelIPG$, $\levelPG$, and $\levelH$\end{full}
\begin{slim}$\levelF$, $\levelIPG$, and $\levelG$\end{slim}
and show that saturation up to redundancy on one level implies saturation up to
redundancy on the previous level.
We will use these results in Section~\ref{ssec:model-construction} to lift
refutational completeness from level $\levelPF$ to level $\levelH$.

\begin{defi}\label{def:saturation}
  A set $N$ of clauses is called \emph{saturated up to redundancy}
  if every inference with premises in $N$ is redundant \wrt\ $N$.
\end{defi}
\subsubsection{First-Order Level}

In this subsection, let $\succ$ be an admissible term order for 
$\PFInf$ (Definition~\ref{def:PF:admissible-term-order}),
and let $\mathit{\slimfull{}{p}fsel}$ be a selection function on
$\clausesPF$\slimfull{}{ (Definition~\ref{def:PF:selection-function})}.

\begin{full}
We define a notion of variable-irreducibility,
roughly following Nieuwenhuis and Rubio~\cite{nieuwenhuis-rubio-1995}
and Bachmair et al.~\cite{bachmair-et-al-1992} (where it is called ``order-irreducibility''):
\begin{defi}\label{def:PF:irred}
  A closure literal $L\closure \theta \in\clausesPF$ is \emph{variable-irreducible}
\wrt\ a ground term rewrite system $R$ if, for all variables $x$ in $L$,
$x\theta$ is irreducible \wrt\ the rules $s \rewrite t \in R$ with $L\theta \succ s \eq t$
and all Boolean subterms of $x\theta$ are either $\itrue$ or $\ifalse$.
A closure $C \closure \theta \in\clausesPF$ is \emph{variable-irreducible} \wrt\ $R$ if all its literals are variable-irreducible \wrt\ $R$.
Given a set $N$ of closures, we write $\irred_R(N)$ for the set of variable-irreducible closures in $N$ \wrt\ $R$.
\end{defi}
\begin{rem}
The restriction that $L\theta \succ s \eq t$
cannot be replaced by $C\theta \succ s \eq t$
because we need it to ensure that $\irred_R(N)$ is saturated when $N$ is.

Here is an example demonstrating that variable-irreducibilty would not
be closed under inferences if we used the entire clause for comparison:
Let $\cst{e} \succ \cst{d} \succ \cst{c} \succ \cst{b} \succ \cst{a}$ and 
$R = \{\cst{e} \rewrite \cst{b}\}$. Then a notion replacing the restriction $L\theta \succ s \eq t$
by $C\theta \succ s \eq t$
would say that
$x \ceq \cst{a}  \llor  x \ceq \cst{b} \ \constraint{x \equiv \cst{e}}$
and
$\cst{e} \ceq \cst{c}  \llor  \cst{e} \ceq \cst{d}$
are order-irreducible \wrt\ $R$. By \Sup{} (second literal of the first clause into
the second literal of the second clause), we obtain
\[x \ceq \cst{a}  \llor  \cst{e} \ceq \cst{c}  \llor  \cst{b} \ceq \cst{d} \ \constraint{ x \equiv e }\]
Now $x\theta$ in the first literal has become order-reducible by $\cst{e} \rewrite \cst{b}$
since $e \ceq b$ is now smaller than the largest literal $e \ceq c$ of the clause.
\end{rem}

\begin{defi}[Inference Redundancy]\label{def:PF:RedI}
  Given $\iota \in \PFInf$ and $N\subseteq\clausesPF$, let $\iota\in\PFRedI(N)$
  if for all confluent term rewrite systems $R$ oriented by $\succ$
  whose only Boolean normal forms are $\itrue$ and $\ifalse$
  such that $\concl(\iota)$ is variable-irreducible, we have
  $R \cup O \modelsolam \concl(\iota)$,
  where $O = \irred_R(N)$ if $\iota$ is a $\Diff$ inference, and $O = \{E \in \irred_R(N) \mid E \prec \mprem(\iota)\}$ otherwise.
\end{defi}
\end{full}

\begin{slim}
\begin{defi}[Inference Redundancy]\label{def:PF:RedI}
  Given $\iota \in \FInf$ and $N\subseteq\clausesF$, let $\iota\in\FRedI(N)$
  if $\iota$ is a $\Diff$ inference and 
  $N \modelsolam \concl(\iota)$
  or if $\iota$ is not a $\Diff$ inference and
  $\{E \in N \mid E \prec \mprem(\iota)\} \modelsolam \concl(\iota)$.
\end{defi}
\end{slim}

\begin{full}
To connect to the redundancy criteria of the higher levels, we need to establish
a connection to the $\FInf$ inference system defined in Section~\ref{ssec:redundancy}:
\begin{lem}\label{lem:PF:inference-to-F-inference}
  Let $\iota_\levelPF \in \PFInf^{\succ,\mathit{pfsel}}$.
  Let $C_1\closure\theta_1,\dots,C_m\closure\theta_m$ be its premises
  and $C_{m+1}\closure\theta_{m+1}$ its conclusion.
  Then
      \[\inference
      {C_1\theta_1 \cdots C_m\theta_m}
      {C_{m+1}\theta_{m+1}}\]
      is a valid $\FInf^{\succ}$ inference $\iota_\levelF$,
   and the rule names of $\iota_\levelPF$ and  $\iota_\levelF$ correspond up to the prefixes $\infname{PF}$ and $\infname{F}$.
\end{lem}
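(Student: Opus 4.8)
The plan is to verify, by a case analysis over the inference rules of $\PFInf$, that each $\PFInf$ rule becomes the corresponding $\FInf$ rule after applying the closure's substitutions to the premises and conclusion. The key observation is that the $\PFInf$ rules are designed so that every side condition involving a closure $C\closure\theta$ is stated in terms of the ground clause $C\theta$ (e.g.\ ``$u$ is nonfunctional'', ``$t\rho \succ t'\rho$'', ``the position of $u$ is eligible in $C\closure\theta$'', which by Definition~\ref{def:closure-eligible} is a statement about $C\theta$ with $\succ$). Since $\succ$ on closures agrees with $\succ$ on the underlying ground clauses by \ref{cond:PF:order:closures}, and since the selection function $\mathit{pfsel}$ on closures induces a selection function on their ground instances, all eligibility and order conditions transfer verbatim. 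So the bulk of the argument is bookkeeping: for each rule, apply $(\rho\cup\theta)$ (or $\theta$, or $\rho$) to premises and conclusion and check that the resulting clauses match an instance of the named $\FInf$ rule.

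First I would handle $\PFSup$: its conclusion $(D'\llor\greensubterm{C}{t'})\closure(\rho\cup\theta)$ becomes $D'\rho \llor \greensubterm{C}{t'}\theta$ after substitution; the side condition $t\rho = u\theta$ ensures that $t\theta_{\mathrm{eff}}$ and the subterm coincide in the ground clause, and by Lemma~\ref{lem:IPG:mapF-green-subterms} (applied on the $\levelPF$ side, where green subterms are as in Section~\ref{ssec:redundancy}) the green-subterm position of $u$ in $C\theta$ is genuinely a green position, so this is a valid $\FSup$ inference; the remaining $\FSup$ side conditions 1--6 are literal restatements of $\PFSup$'s conditions 3--8 after grounding. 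Then I would run through $\PFEqRes$, $\PFEqFact$, $\PFClausify$, $\PFBoolHoist$/$\PFLoobHoist$, $\PFFalseElim$ the same way; for $\PFClausify$ one uses that the triple $(s,t\theta,D)$ has the right shape, which is exactly what $\FClausify$ asks. For $\PFArgCong$, $\PFExt$, and $\PFDiff$ the conclusions already contain $\mapFonly$-encoded terms like $\mapF{s\>\diff^{\tau,\upsilon}_{u,w}}$, and after applying $\theta$ (resp.\ $\rho$) these match the $\FArgCong$, $\FExt$, $\FDiff$ conclusions directly; here one uses Remark~\ref{rem:FInf-SigmaI-SigmaH-equivalence} to reconcile the $\SigmaI$ notation $\diff^{\tau,\upsilon}_{u,w}$ with the $\SigmaH$ notation $\diff\typeargs{\tau,\upsilon}(u,w)$ in the $\FInf$ rules, and Lemma~\ref{lem:PG:mapI-mapF-fol} to see that the $\mapFonly$-images agree.

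The main obstacle I expect is the $\PFExt$ and $\PFDiff$ cases, because there the two substitutions $\theta$ and $\rho$ genuinely differ: $\PFExt$ takes the premise $\greensubterm{C}{\mapF{u}}\closure\theta$ but produces a conclusion with closure substitution $\rho$, where $\rho$ coincides with $\theta$ only on the variables of $\subterm{C}{\mapF{u}}$ and freely grounds the fresh variables inside $w$. I would need to check that after grounding, $\greensubterm{C}{\mapF{u}}\theta = \greensubterm{C}{\mapF{u}}\rho$ (which holds by the coincidence condition), that $\mapF{w}\rho$ is the ground term $\mapF{w\rho}$ matching the ``$w$ a ground term'' hypothesis of $\FExt$, and that the side condition $u\theta \succ w\rho$ is exactly $\FExt$'s condition $u \succ w$ on the resulting ground terms. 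A small subtlety is that $w\in\termsIPG$ in $\PFExt$ but the $\FExt$ rule as stated in Section~\ref{ssec:redundancy} takes a ground first-order term $w$; this is resolved by noting $\mapF{w}\rho = \mapF{w\rho}$ via Lemma~\ref{lem:IPG:mapF-subst} and that $w\rho$ is ground, so the $\FExt$ instance is the one with ground term $w\rho$. The analogous reconciliation for $\PFDiff$ is routine since there $\theta$ grounds everything uniformly.
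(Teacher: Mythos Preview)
Your case-analysis approach matches the paper's, which compresses the whole argument to two sentences (``compare the rules; eligibility matches''). One correction worth making: $\FInf^{\succ}$ carries \emph{no} selection function. Its eligibility notion is deliberately defined to overapproximate any selection (a literal is $\FInf$-eligible iff it is negative, of the form $t\ceq\ifalse$, or (strictly) maximal). So eligibility does not transfer ``verbatim'' via an induced selection on ground instances as you write; rather, $\PFInf$-eligibility of $L$ in $C\closure\theta$ implies $\FInf$-eligibility of $L\theta$ in $C\theta$ because a $\mathit{pfsel}$-selected literal must be negative or of the form $t\ceq\ifalse$ (hence automatically $\FInf$-eligible by the first two disjuncts), and an unselected (strictly) maximal literal is $\FInf$-eligible by the third disjunct. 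This one-way implication is exactly the point the paper singles out as ``crucial.'' With that fix, your plan---including the handling of $\PFExt$ and $\PFDiff$, where you correctly note that $\rho$ agrees with $\theta$ on the premise's variables---goes through.
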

\begin{proof}
  This is easy to see by comparing the rules of $\PFInf$ and $\FInf$.
  It is crucial that the concepts of eligibility match:
  If a literal or a position is (strictly) eligible in a closure $C\closure\theta \in \clausesPF$
    (according to the $\levelPF$ concept of eligibility),
    then the corresponding literal or position is (strictly) eligible in $C\theta$
    (according to the $\levelF$ concept of eligibility).
\end{proof}
\end{full}

\begin{full}\subsubsection{Indexed Partly Substituted Ground Higher-Order Level}\mbox{}\end{full}%
\begin{slim}\subsubsection{Indexed Ground Higher-Order Level}\mbox{}\end{slim}%
In this subsubsection, let $\succ$ be an admissible term order for $\IPGInf$ (Definition~\ref{def:IPG:admissible-term-order}),
and let $\mathit{i\slimfull{}{p}gsel}$ be a selection function on
\begin{slim}$\clausesIPG$\end{slim}%
\begin{full}$\clausesIPG$ (Definition~\ref{def:PF:selection-function})\end{full}.

To lift the notion of inference redundancy, we need to connect the inference systems
$\PFInf$ and $\IPGInf$ as follows.
\begin{full}
Since the mapping $\mapFonly$ is bijective (Lemma~\slimfull{\ref{lem:fol-bijection}}{\ref{lem:IPG:F-bijection}}),
we can transfer the order $\succ$ from the $\levelIPG$ level to the $\levelPF$ level:
\begin{defi}\label{def:IPG:succ-transfer}
Based on $\succ$,
we define
a relation $\succ_\mapFonly$
on ground terms $\termsF$\slimfull{ and}{,} ground clauses
$\clausesF$\slimfull{}{, and closures $\clausesPF$} as
 $d \mathrel{\succ_\mapFonly} e$ if and only if $\mapFonly^{-1}(d) \succ \mapFonly^{-1}(e)$
 for all terms\slimfull{ or}{,} clauses\slimfull{}{, or closures} $d$ and $e$.
\end{defi}
\end{full}
\begin{lem}\label{lem:IPG:succ-transfer}
  Since $\succ$ is an admissible term order for \slimfull{$\IGInf$}{$\IPGInf$} (Definition~\ref{def:IPG:admissible-term-order}),
  the relation $\succ_\mapFonly$ \begin{slim}defined in Definition~\ref{def:order-mapF}\end{slim} is an admissible term order for $\PFInf$ (Definition~\ref{def:PF:admissible-term-order}).
\end{lem}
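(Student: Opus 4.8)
The plan is to unfold Definitions~\ref{def:PF:admissible-term-order} and~\ref{def:IPG:admissible-term-order} and verify each required property of $\succ_\mapFonly$ by transporting the corresponding property of $\succ$ across the bijection $\mapFonly$. By \slimfull{Lemma~\ref{lem:fol-bijection}}{Lemma~\ref{lem:IPG:F-bijection}}, $\mapFonly$ is a bijection on the ground terms of the two logics, and by Definition~\ref{def:order-mapF} the relation $\succ_\mapFonly$ is precisely the image of $\succ$ under this bijection; since $\mapFonly$ moreover acts literal-by-literal and side-by-side on clauses \slimfull{}{and pointwise on closures}, it is an order isomorphism that respects the clause \slimfull{}{(and closure)} structure. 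So for each condition $(\text{O}i)_\levelPF$ it suffices to apply $\mapFonly^{-1}$ and invoke the matching condition $(\text{O}i)_\levelIPG$.

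The structural conditions are then immediate. $(\text{O1})_\levelPF$ (well-founded total order on ground terms) follows from $(\text{O1})_\levelIPG$ together with bijectivity. $(\text{O6})_\levelPF$ (clause order is the multiset extension) holds because $\mapFonly$ translates a clause into the corresponding multiset of literal sides, so the pushforward of the $\levelIPG$ clause order is the multiset extension of $\succ_\mapFonly$\slimfull{}{; the closure condition $(\text{O7})_\levelPF$ follows the same way from $(\text{O7})_\levelIPG$, since on both levels a closure is compared via its underlying applied clause}. For $(\text{O4})_\levelPF$ I would use that $\mapFonly$ fixes the nullary symbols $\itrue$ and $\ifalse$ and, by injectivity, maps every other first-order ground term to a higher-order term distinct from $\itrue$ and $\ifalse$, so $(\text{O4})_\levelIPG$ transfers. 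And $(\text{O5})_\levelPF$, which reads $\mapF{u} \succ_\mapFonly \mapF{u\>\diff^{\tau,\upsilon}_{s,t}}$, unfolds by the definition of $\succ_\mapFonly$ to the assertion $u \succ u\>\diff^{\tau,\upsilon}_{s,t}$, which is exactly $(\text{O5})_\levelIPG$.

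The conditions needing slightly more care are $(\text{O2})_\levelPF$ (compatibility with first-order contexts) and $(\text{O3})_\levelPF$ (the subterm property), which refer to arbitrary first-order subterms rather than to green subterms. Here I would use that, under $\mapFonly$, the first-order subterms of $\mapF{t}$ correspond exactly to the yellow subterms of $t$ (\slimfull{immediate from Definition~\ref{def:fol}, since in a nonfunctional symbol-headed ground term every argument is a yellow subterm and functional terms become atomic}{Lemma~\ref{lem:IPG:mapF-subterms}}), so that a first-order context pulls back to a yellow context; then $(\text{O2})_\levelIPG$ gives $(\text{O2})_\levelPF$. For $(\text{O3})_\levelPF$, a \emph{proper} first-order subterm of $\mapF{t}$ pulls back by injectivity to a proper yellow subterm $s \neq t$ of $t$, so $(\text{O3})_\levelIPG$ yields $\orangesubterm{t}{s} \succeq s$, and totality $(\text{O1})_\levelIPG$ upgrades this to the strict inequality demanded on the first-order side. \textbf{The only mildly delicate point} is exactly this subterm correspondence: one must check that \slimfull{making functional ground terms atomic under $\mapFonly$ introduces no extra first-order subterms}{the $\square$-encoding lines up the \emph{outermost} proper yellow subterms with the \emph{immediate} first-order subterms}, and that the reflexive-versus-strict mismatch in $(\text{O3})$ is closed by totality. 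Everything else is routine bookkeeping across the bijection.
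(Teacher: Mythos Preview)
Your proof is correct and follows essentially the same approach as the paper's: both rely on $\mapFonly$ being a bijection and on the correspondence between first-order subterms and yellow subterms (Lemma~\ref{lem:IPG:mapF-subterms} in the full version). The paper's proof is a one-sentence sketch listing exactly these ingredients plus, in the full version, Lemma~\ref{lem:IPG:mapF-subst} ($\mapF{C}\mapF{\theta}=\mapF{C\theta}$) for the closure condition $(\text{O7})_\levelPF$; you gloss over that lemma in your treatment of closures, but it is the technical content behind ``on both levels a closure is compared via its underlying applied clause,'' and your observation about the strict/reflexive mismatch in $(\text{O3})$ is a detail the paper leaves implicit.
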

\begin{proof}
  This is easy to see, considering
  that $\mapFonly$ is a bijection between $\TT_\mathrm{ground}(\SigmaI)$ and \slimfull{$\termsF$}{$\termsPF^\mathsf{gnd}$}
  (Lemma~\slimfull{\ref{lem:fol-bijection}}{\ref{lem:IPG:F-bijection}}),
  \begin{full}that higher-order yellow subterms and first-order subterms correspond\end{full}%
  \begin{slim}that every first-order subterm corresponds to a higher-order yellow subterm\end{slim}
  by Lemma~\slimfull{\ref{lem:IPG:mapF-green-subterms}}{\ref{lem:IPG:mapF-subterms}},
  \slimfull{and }{}that
  $\mapFonly$ maps each side of each literal individually\slimfull{.}{,}
  \begin{full}and that $\mapF{C}\mapF{\theta} = \mapF{C\theta}$
  for all $C\closure \theta \in\clausesIPG$ (Lemma~\ref{lem:IPG:mapF-subst}).\end{full}
\end{proof}

\begin{full}
Since $\mapFonly$ is bijective, we can transfer the selection function as follows:
\begin{defi}\label{def:IPG:sel-transfer}
  Based on $\mathit{i\slimfull{}{p}gsel}$,
  we define $\mapF{\mathit{i\slimfull{}{p}gsel}}$ as a selection function that selects
  the literals of $C \in \clausesPF$ corresponding to the $\mathit{i\slimfull{}{p}gsel}$-selected literals in
  $\mapFonly^{-1}(C)$.
\end{defi}
\end{full}

\begin{defi}\label{def:IPG:mapF-inference}
We extend $\mapFonly$ to inference rules
by mapping an inference $\iota \in \IPGInf$
to the inference
\[
\inference{\mapF{\prems(\iota)}}{\mapF{\concl(\iota)}}
\]
\end{defi}

\begin{lem}\label{IPG:lem:mapF-inference}
The mapping $\mapFonly$ is a bijection between $\IPGInf^{\succ,\mathit{i\slimfull{}{p}gsel}}$ and 
$\PFInf^{\succ_\mapFonly,\mapF{\mathit{i\slimfull{}{p}gsel}}}$\slimfull{, where $\mapF{\mathit{igsel}}$ is defined in Definition~\ref{def:sel-mapF}.}{.}
\end{lem}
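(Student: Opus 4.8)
The plan is to establish the bijection claimed in Lemma~\ref{IPG:lem:mapF-inference} by a rule-by-rule inspection, leveraging the fact that the encoding $\mapFonly$ is already known to be a bijection on terms and on clauses/closures (Lemma~\ref{lem:IPG:F-bijection}), that it interacts correctly with substitutions (Lemma~\ref{lem:IPG:mapF-subst}), that yellow subterms of higher-order terms correspond exactly to subterms of their first-order images (Lemma~\ref{lem:IPG:mapF-subterms}), and similarly for green subterms (Lemma~\ref{lem:IPG:mapF-green-subterms}). Concretely, I would first observe that $\mapFonly$ extended to inferences (Definition~\ref{def:IPG:mapF-inference}) is injective as a map on tuples of closures simply because it is injective on each component by Lemma~\ref{lem:IPG:F-bijection}; so the real content is that $\mapFonly$ maps valid $\IPGInf^{\succ,\mathit{ipgsel}}$ inferences onto exactly the valid $\PFInf^{\succ_\mapFonly,\mapF{\mathit{ipgsel}}}$ inferences.

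The heart of the argument is therefore a case analysis over the inference rule names. For each rule of $\IPGInf$ --- $\IPGSup$, $\IPGEqRes$, $\IPGEqFact$, $\IPGClausify$, $\IPGBoolHoist$, $\IPGLoobHoist$, $\IPGFalseElim$, $\IPGArgCong$, $\IPGExt$, $\IPGDiff$ --- I would check that applying $\mapFonly$ to the premises and conclusion yields precisely the shape of the corresponding $\PFInf$ rule (with prefix $\infname{PF}$), and that the side conditions translate into one another under $\mapFonly$. The translation of side conditions splits into a few recurring ingredients: (i) order conditions such as $t\rho\succ t'\rho$, $D\rho\prec\greensubterm{C}{u}\theta$, or $u\theta\succ w\rho$ translate directly by Definition~\ref{def:IPG:succ-transfer} of $\succ_\mapFonly$, using $\mapF{C\theta}=\mapF{C}\mapF{\theta}$ (Lemma~\ref{lem:IPG:mapF-subst}) to move between $\mapF{C\theta}$ and the closure image; (ii) eligibility and selection conditions translate because $\mapFonly$ is a bijection on literals and clauses and because yellow/green positions correspond to first-order positions by Lemmas~\ref{lem:IPG:mapF-subterms} and~\ref{lem:IPG:mapF-green-subterms}, together with the definition of $\mapF{\mathit{ipgsel}}$ (Definition~\ref{def:IPG:sel-transfer}); (iii) syntactic side conditions (``$u$ is not a variable'', ``$u$ is nonfunctional'', ``$u$ is neither $\itrue$ nor $\ifalse$'', the $\Clausify$ triple table, the ``Boolean type'' conditions) translate because $\mapFonly$ preserves the relevant syntactic structure --- in particular a higher-order term $u$ is a variable iff $\mapF{u}$ is, $u$ is functional iff $\mapF{u}$ is headed by a $\cst{fun}$ symbol, and the logical constants $\itrue,\ifalse,\inot,\iand,\dots$ are preserved verbatim since they are nonfunctional symbols without parameters. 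For the extensionality-flavoured rules $\IPGArgCong$, $\IPGExt$, $\IPGDiff$, one additionally notes that $\mapF{s\>\diff^{\tau,\upsilon}_{u,w}}$ is exactly the term appearing in the $\PFInf$ analogue, again by the recursive definition of $\mapFonly$ on nonfunctional terms.

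Finally, for surjectivity I would run the same correspondence backwards: given a valid $\PFInf^{\succ_\mapFonly,\mapF{\mathit{ipgsel}}}$ inference, apply $\mapFonly^{-1}$ to premises and conclusion (well-defined by Lemma~\ref{lem:IPG:F-bijection}) and check, using the same ingredients, that the result is a valid $\IPGInf$ inference whose image under $\mapFonly$ is the original; since $\mapFonly$ is injective on inferences this exhibits the preimage uniquely. The main obstacle I anticipate is not any single hard fact but the sheer bookkeeping: one must be careful that the \emph{shape} of the order relations used in the side conditions matches --- e.g.\ $\PFInf$'s conditions are stated on closures $C\closure\theta$ while $\IPGInf$'s comparisons like $D\rho\prec\greensubterm{C}{u}\theta$ are on \emph{grounded} clauses --- and reconcile this via condition~\ref{cond:IPG:order:closures} of Definition~\ref{def:IPG:admissible-term-order} and the analogous \ref{cond:PF:order:closures} of Definition~\ref{def:PF:admissible-term-order}, which equate a closure's order with that of its grounding. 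The two genuinely delicate points are (a) making sure the eligibility notions transfer faithfully, which is essentially the content of the remark inside the proof of Lemma~\ref{lem:PF:inference-to-F-inference} combined with the position-correspondence lemmas, and (b) handling $\IPGExt$ and $\IPGDiff$, where $w$ (resp.\ $u,w,s$) ranges over $\termsIPG$ terms whose nonfunctional yellow subterms are fresh variables, so that $\mapF{w}$ is a first-order term whose variables are fresh --- one checks that this ``fresh-variable skeleton'' condition on $\termsIPG$ terms is exactly matched, under $\mapFonly$, by the corresponding freshness condition in $\PFExt$/$\PFDiff$, using Lemma~\ref{lem:IPG:mapF-subterms} to identify nonfunctional yellow subterms with first-order subterms.
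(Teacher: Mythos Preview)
Your proposal is correct and follows essentially the same approach as the paper's own proof: a rule-by-rule comparison of $\IPGInf$ and $\PFInf$, relying on the correspondence of green subterms (Lemma~\ref{lem:IPG:mapF-green-subterms}), term orders (Definition~\ref{def:IPG:succ-transfer}), selection (Definition~\ref{def:IPG:sel-transfer}), and hence eligibility. The paper's proof is a much terser sketch listing only these four key correspondences, whereas you spell out the bookkeeping in more detail (including the fresh-variable-skeleton condition for $\IPGExt$/$\IPGDiff$ and the surjectivity direction), but the substance is the same.
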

\begin{proof}
This is easy to see by comparing the rules of $\IPGInf$ and $\PFInf$\slimfull{ and considering Remark~\ref{rem:FInf-SigmaI-SigmaH-equivalence}}{}.
It is crucial that the following concepts match:
  \begin{itemize}
    \item \slimfull{Subterms}{Green subterms} on the $\levelPF$ level correspond to green subterms on the $\levelIPG$ level
    by Lemma~\ref{lem:IPG:mapF-green-subterms}.
    \item The term orders correspond (Definition~\slimfull{\ref{def:order-mapF}}{\ref{def:IPG:succ-transfer}}).
    \item The selected literals correspond; i.e.,
    a literal $L$ is selected in a \slimfull{clause $C$}{closure $C\closure\theta$} if and only if
    the literal $\mapF{L}$ is selected in $\mapF{C\slimfull{}{\closure\theta}}$.
    This follows directly from the definition of $\mapF{\mathit{i\slimfull{}{p}gsel}}$ (Definition~\slimfull{\ref{def:sel-mapF}}{\ref{def:IPG:sel-transfer}}).
    \item The concepts of eligibility correspond; i.e.,
    a literal $L$ of a \slimfull{clause $C\in\clausesIPG$}{closure $C\closure\theta\in\clausesIPG$}
    is (strictly) eligible \wrt\ $\succeq$
    if and only if 
    the literal $\mapF{L}$ of the \slimfull{clause $C$}{closure $\mapF{C\closure\theta}$}
    is (strictly) eligible \wrt\ $\succeq_\mapFonly$; and
    a position $L.s.p$ of a \slimfull{clause $C\in\clausesIPG$}{closure $C\closure\theta\in\clausesIPG$} is eligible \wrt\ $\succeq$
    if and only if 
    the position $\mapF{L}.\mapF{s}.\slimfull{p}{q}$ of the \slimfull{clause $\mapF{C}$}{closure $\mapF{C\closure\theta}$}
    is eligible \wrt\ $\succeq_\mapFonly$\slimfull{.}{,}
    \begin{full}where $q$ is the position corresponding to $p$.\end{full}
    This is true because eligibility (Definition~\slimfull{\ref{def:ho-eligible}}{\ref{def:closure-eligible}}) depends only on the selected literals and the term order,
    which correspond as discussed above.\qedhere
  \end{itemize}
\end{proof}

\begin{full}
\begin{defi}\label{def:IPG:irred}
  Given a term rewrite system $R$ on $\termsF$,
  we say that a closure $C\closure\theta\in\clausesIPG$ is variable-irreducible \wrt\ $R$ and $\succ$
  if $\mapF{C\closure\theta}$ is variable-irreducible \wrt\ $R$ and $\succ_\mapFonly$.
  We write $\irred_R(N)$ for all variable-irreducible closures in a set $N\subseteq\clausesIPG$.
\end{defi}
\end{full}

\begin{defi}[Inference Redundancy]\label{def:IPG:RedI}
  Given $\iota \in \IPGInf^{\succ,\mathit{i\slimfull{}{p}gsel}}$ and $N\subseteq\clausesIPG$, let $\iota\in\IPGRedI(N)$ if
  \begin{full}
  for all confluent term rewrite systems $R$ on $\termsF$ oriented by $\succ_\mapFonly$
  whose only Boolean normal forms are $\itrue$ and $\ifalse$
  such that $\concl(\iota)$ is variable-irreducible \wrt\ $R$, we have
  \[
    R \cup O \modelsolam \mapF{\concl(\iota)}
  \]
  where $O = \irred_R(\mapF{N})$ if $\iota$ is a $\IPGDiff$ inference,
  and $O = \{ E \in \irred_R(\mapF{N}) \mid E \prec_\mapFonly \mapF{\mprem(\iota)} \}$ otherwise.
  \end{full}
  \begin{slim}
  $\mapF{\iota} \in \FRedI(\mapF{N})$ (Definition~\ref{def:PF:RedI}) \wrt\ $\succ_\mapFonly$.
  \end{slim}
\end{defi}

\begin{full}
\begin{lem}\label{lem:IPG:inference-to-F-inference}
  Let $\iota_\levelIPG \in \IPGInf^{\succ,\mathit{i\slimfull{}{p}gsel}}$.
  Let $C_1\closure\theta_1,\dots,C_m\closure\theta_m$ be its premises
  and $C_{m+1}\closure\theta_{m+1}$ its conclusion.
  Then
      \[\inference
      {\mapF{C_1\theta_1} \cdots \mapF{C_m\theta_m}}
      {\mapF{C_{m+1}\theta_{m+1}}}\]
      is a valid $\FInf^{\succ_\mapFonly}$ inference $\iota_\levelF$, and
  the rule names of $\iota_\levelIPG$ and  $\iota_\levelF$ correspond up to the prefixes $\infname{IPG}$ and $\infname{F}$.
\end{lem}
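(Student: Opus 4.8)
The plan is to deduce the claim by composing two facts already established: that $\mapFonly$ is a bijection between $\IPGInf$ and $\PFInf$ (Lemma~\ref{IPG:lem:mapF-inference}), and that a $\PFInf$ inference on closures yields a $\FInf$ inference on the corresponding ground clauses (Lemma~\ref{lem:PF:inference-to-F-inference}). No genuinely new reasoning is needed; the work is to check that the two translations fit together.

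First I would apply $\mapFonly$ to $\iota_\levelIPG$. By Lemma~\ref{lem:IPG:succ-transfer}, $\succ_\mapFonly$ is an admissible term order for $\PFInf$, and by Lemma~\ref{IPG:lem:mapF-inference} the inference $\mapF{\iota_\levelIPG}$, obtained by replacing each premise and the conclusion by its $\mapFonly$-image (Definition~\ref{def:IPG:mapF-inference}), is a valid $\PFInf^{\succ_\mapFonly,\mapF{\mathit{ipgsel}}}$ inference whose rule name agrees with that of $\iota_\levelIPG$ up to the prefixes $\infname{IPG}$ and $\infname{PF}$. By Definition~\ref{def:IPG:mapF}, its premises are the closures $\mapF{C_1\closure\theta_1},\dots,\mapF{C_m\closure\theta_m}$, i.e.\ $\mapF{C_1}\closure\mapF{\theta_1},\dots,\mapF{C_m}\closure\mapF{\theta_m}$, and its conclusion is $\mapF{C_{m+1}}\closure\mapF{\theta_{m+1}}$. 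Next I would feed $\mapF{\iota_\levelIPG}$ into Lemma~\ref{lem:PF:inference-to-F-inference}: taking the clause $\mapF{C_i}\mapF{\theta_i}$ for each premise closure (and likewise for the conclusion) produces a valid $\FInf^{\succ_\mapFonly}$ inference whose rule name agrees with that of $\mapF{\iota_\levelIPG}$ up to the prefixes $\infname{PF}$ and $\infname{F}$. Finally, by Lemma~\ref{lem:IPG:mapF-subst} we have $\mapF{C_i}\mapF{\theta_i} = \mapF{C_i\theta_i}$ for every $i$ (including $i = m+1$), so the resulting inference is precisely the one displayed in the statement. Chaining the two rule-name correspondences (and noting that the $\levelIPG$ level has no $\infname{Fluid}$ rules) gives agreement up to the prefixes $\infname{IPG}$ and $\infname{F}$.

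The only point that requires attention is the compatibility of the side conditions along this chain: eligibility, selection, and order conditions on $\levelIPG$ closures must survive the passage to $\levelPF$ closures and then to $\levelF$ ground clauses. But this is exactly what is packaged into Lemmas~\ref{IPG:lem:mapF-inference} and~\ref{lem:PF:inference-to-F-inference}—the former because $\mapFonly$ matches green subterms, term orders, selected literals, and eligibility (by Lemmas~\ref{lem:IPG:mapF-green-subterms} and~\ref{lem:IPG:mapF-subst} and Definitions~\ref{def:IPG:succ-transfer} and~\ref{def:IPG:sel-transfer}), the latter because $\levelPF$ eligibility in $C\closure\theta$ implies $\levelF$ eligibility in $C\theta$. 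Hence the composition is sound and the proof reduces to the bookkeeping just described.
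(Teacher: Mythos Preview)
Your proof is correct and follows essentially the same approach as the paper: apply Lemma~\ref{IPG:lem:mapF-inference} to obtain a $\PFInf$ inference, then Lemma~\ref{lem:PF:inference-to-F-inference} to obtain the $\FInf$ inference, use Lemma~\ref{lem:IPG:mapF-subst} to rewrite $\mapF{C_i}\mapF{\theta_i}$ as $\mapF{C_i\theta_i}$, and chain the rule-name correspondences. Your version is somewhat more detailed in spelling out the side-condition bookkeeping, but the argument is the same.
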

\begin{proof}
  By Lemma~\ref{IPG:lem:mapF-inference}, we know that $\mapF{\iota_\levelIPG}$ is a valid $\PFInf^{\succ_\mapFonly,\mapF{\mathit{ipgsel}}}$ inference, and the rule names coincide up to the prefixes $\infname{PF}$ and $\infname{IPG}$.

  Now, applying Lemma~\ref{lem:PF:inference-to-F-inference} to $\mapF{\iota_\levelIPG}$
  and using the fact that $\mapF{C_i}\mapF{\theta_i} = \mapF{C_i\theta_i}$ (Lemma~\ref{lem:IPG:mapF-subst}), we obtain that
  \[\inference
    {\mapF{C_1\theta_1} \quad \cdots \quad \mapF{C_m\theta_m}}
    {\mapF{C_{m+1}\theta_{m+1}}}\]
  is a valid $\FInf^{\succ_\mapFonly}$ inference $\iota_\levelF$, and the rule names of $\mapF{\iota_\levelIPG}$ and $\iota_\levelF$ correspond up to the prefixes $\infname{PF}$ and $\infname{F}$.

  Combining these two results, we conclude that the rule names of $\iota_\levelIPG$ and $\iota_\levelF$ correspond up to the prefixes $\infname{IPG}$ and $\infname{F}$, which completes the proof.
\end{proof}
\end{full}

Using the bijection between $\IPGInf$ and $\PFInf$,
we can show that saturation \wrt\ $\IPGInf$ implies saturation \wrt\ $\PFInf$:
\begin{lem}\label{lem:IPG:saturated}
  Let $N$ be saturated up to redundancy w.r.t.\ $\IPGInf^{\succ,\mathit{i\slimfull{}{p}gsel}}$.
  Then $\mapF{N}$ is saturated up to redundancy w.r.t.\ $\PFInf^{\succ_\mapFonly,\mapF{\mathit{i\slimfull{}{p}gsel}}}$.
\end{lem}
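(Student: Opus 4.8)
The plan is to transport saturation along the bijection $\mapFonly$ between the two inference systems that is established in Lemma~\ref{IPG:lem:mapF-inference}. First I would fix an arbitrary inference $\iota' \in \PFInf^{\succ_\mapFonly,\mapF{\mathit{i\slimfull{}{p}gsel}}}$ all of whose premises lie in $\mapF{N}$; the goal is to show $\iota' \in \PFRedI(\mapF{N})$, since then $\mapF{N}$ is saturated up to redundancy by Definition~\ref{def:saturation}. By Lemma~\ref{IPG:lem:mapF-inference} there is a unique $\iota \in \IPGInf^{\succ,\mathit{i\slimfull{}{p}gsel}}$ with $\mapF{\iota} = \iota'$, and $\iota$ and $\iota'$ have corresponding rule names up to the prefixes $\infname{IPG}$ and $\infname{PF}$; in particular $\iota$ is an $\IPGDiff$ inference iff $\iota'$ is a $\PFDiff$ inference.

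Next I would check that the premises of $\iota$ lie in $N$. By Definition~\ref{def:IPG:mapF-inference}, the premises of $\iota' = \mapF{\iota}$ are exactly the $\mapFonly$-images of the premises of $\iota$; since $\mapFonly$ is a bijection between $\clausesIPG$ and $\clausesPF$ (Lemma~\ref{lem:IPG:F-bijection}) and every premise of $\iota'$ lies in $\mapF{N}$, every premise of $\iota$ must lie in $N$. Saturation of $N$ then yields $\iota \in \IPGRedI(N)$.

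It remains to unfold the two redundancy definitions and observe that $\iota \in \IPGRedI(N)$ is literally the statement $\iota' \in \PFRedI(\mapF{N})$. Definition~\ref{def:IPG:RedI} requires, for every confluent rewrite system $R$ on $\termsF$ oriented by $\succ_\mapFonly$ whose only Boolean normal forms are $\itrue$ and $\ifalse$ and for which $\concl(\iota)$ is variable-irreducible, that $R \cup O \modelsolam \mapF{\concl(\iota)}$, where $O = \irred_R(\mapF{N})$ when $\iota$ is an $\IPGDiff$ inference and $O = \{E \in \irred_R(\mapF{N}) \mid E \prec_\mapFonly \mapF{\mprem(\iota)}\}$ otherwise. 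Using $\mapF{\concl(\iota)} = \concl(\iota')$ and $\mapF{\mprem(\iota)} = \mprem(\iota')$, the fact that variable-irreducibility of an $\IPGInf$ clause is by Definition~\ref{def:IPG:irred} defined as variable-irreducibility of its $\mapFonly$-image, the rule-name correspondence from the first paragraph (so the $\Diff$/$\IPGDiff$ case split matches), and Lemma~\ref{lem:IPG:succ-transfer} (which guarantees $\succ_\mapFonly$ is an admissible term order for $\PFInf$), this is exactly the condition of Definition~\ref{def:PF:RedI} for $\iota' \in \PFRedI(\mapF{N})$. Hence $\iota' \in \PFRedI(\mapF{N})$, and since $\iota'$ was an arbitrary $\PFInf$ inference with premises in $\mapF{N}$, the set $\mapF{N}$ is saturated up to redundancy.

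I expect the only delicate point to be the bookkeeping in this last step: checking that every ingredient of the two redundancy notions ($R$, its orientation, the Boolean-normal-form restriction, variable-irreducibility of the conclusion, the clause set $O$ with its case split, and the entailment $\modelsolam$) is preserved verbatim by $\mapFonly$. This is not a genuine obstacle, because Definitions~\ref{def:IPG:RedI} and~\ref{def:IPG:irred} were crafted precisely as the $\mapFonly$-pullbacks of the corresponding $\levelPF$ notions; the real content of the lemma is the bijection of inference systems, which we may assume by Lemma~\ref{IPG:lem:mapF-inference}.
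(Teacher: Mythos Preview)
Your proposal is correct and follows exactly the approach the paper takes; the paper's proof is a one-liner citing Lemma~\ref{IPG:lem:mapF-inference} and noting that the notions of inference redundancy correspond, and you have simply unpacked that correspondence in full detail. Your final paragraph correctly identifies that Definitions~\ref{def:IPG:RedI} and~\ref{def:IPG:irred} were designed as $\mapFonly$-pullbacks of the $\levelPF$ notions, which is precisely why the bookkeeping goes through verbatim.
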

\begin{proof}
  By Lemma~\ref{IPG:lem:mapF-inference}
  \begin{slim} and Definition~\ref{def:IPG:RedI}.\end{slim}%
  \begin{full} because the notions of inference redundancy correspond.\end{full}
\end{proof}

\begin{full}\subsubsection{Partly Substituted Ground Higher-Order Level}\mbox{}\end{full}%
\begin{slim}\subsubsection{Ground Higher-Order Level}\mbox{}\end{slim}%
In this subsubsection,
let $\succ$ be an admissible term order for $\PGInf$ (Definition~\ref{def:PG:admissible-term-order}),
and let $\mathit{\slimfull{}{p}gsel}$ be a selection function on $\clausesPG$\slimfull{}{ (Definition~\ref{def:PF:selection-function})}.

Since mapping $\mapIonly$ is clearly bijective%
\slimfull{,}{ for ground terms and ground clauses,}
we can transfer $\succ$ from the $\levelPG$ level to the $\levelIPG$ level
as follows:
\begin{defi}\label{def:PG:succ-transfer}
Let $\succ$ be a relation on
\begin{slim}$\TT_\mathrm{ground}(\SigmaH)$ and on clauses over $\TT_\mathrm{ground}(\SigmaH)$.\end{slim}%
\begin{full}$\TT_\mathrm{ground}(\SigmaH)$, on clauses over $\TT_\mathrm{ground}(\SigmaH)$, and on closures $\clausesPG$.\end{full}
We define
a relation $\succ_\mapIonly$
on $\TT_\mathrm{ground}(\SigmaI)$ and
on clauses over $\TT_\mathrm{ground}(\SigmaI)$
as
$d \mathrel{\succ_\mapIonly} e$ if and only if $\mapIonly^{-1}(d) \succ \mapIonly^{-1}(e)$
for all terms or clauses $d$ and $e$.
\begin{full}For closures $C\closure\theta, D\closure\rho \in \clausesIPG$, we
define $C\closure\theta \mathrel{\succ_\mapIonly} D\closure\rho$ if
$C\theta \mathrel{\succ_\mapIonly} D\rho$.\end{full}
\end{defi}

\begin{lem}\label{lem:PG:succ-transfer}
  Since $\succ$ is an admissible term order for $\PGInf$ (Definition~\ref{def:PG:admissible-term-order}),
  the relation $\succ_\mapIonly$ is an admissible term order for $\IPGInf$ (Definition~\ref{def:IPG:admissible-term-order}).
\end{lem}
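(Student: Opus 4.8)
The plan is to argue exactly as in the proof of Lemma~\ref{lem:IPG:succ-transfer}, transporting each of the defining conditions of Definition~\ref{def:IPG:admissible-term-order} backward along the bijection $\mapIonly$ to the corresponding condition of Definition~\ref{def:PG:admissible-term-order}. First I would record the two structural facts about $\mapIonly$ that do all the work. One: $\mapIonly$ is a bijection between $\TT_\mathrm{ground}(\SigmaH)$ and $\TT_\mathrm{ground}(\SigmaI)$ and, acting on each side of each literal, between clauses over these term sets; this is immediate from Definition~\ref{def:PG:mapI}, since on a ground term $\mapIonly$ only relabels each occurrence $\cst{f}\typeargs{\tuple\upsilon}(\tuple u)$ as $\cst{f}^{\tuple\upsilon}_{\tuple u}$, an invertible operation. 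Two: $\mapIonly$ preserves the orange, yellow, and green subterm relations together with their surrounding contexts, i.e.\ $s$ is a yellow (resp.\ green) subterm of $t$ iff $\mapIonly(s)$ is a yellow (resp.\ green) subterm of $\mapIonly(t)$, and $\mapIonly(\orangesubterm{t}{s}) = \orangesubterm{\mapIonly(t)}{\mapIonly(s)}$. The second fact holds because $\mapIonly$ acts as the identity on the head/argument/$\lambda$-layer of a term and rewrites only inside parameters, and by Definition~\ref{def:orange-subterms} no orange position ever descends into a parameter, so the relabeling there is invisible to the position calculus.

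Granting these facts, every admissibility condition for $\IPGInf$ becomes the direct $\mapIonly$-image of its $\levelPG$ counterpart. Well-foundedness and totality of $\succ_\mapIonly$ on ground terms transport straight through the bijection; compatibility with yellow contexts and the yellow subterm property follow from the corresponding $\levelPG$ properties via the preservation of yellow subterms and contexts; the condition $u \succ \ifalse \succ \itrue$ follows from the $\levelPG$ version because $\mapIonly(\itrue) = \itrue$ and $\mapIonly(\ifalse) = \ifalse$ (these symbols carry no parameters); the condition $u \succ u\>\diff^{\tau,\upsilon}_{s,t}$ follows from the $\levelPG$ condition $u \succ u\>\diff\typeargs{\tau,\upsilon}(s,t)$ because $\mapIonly$ sends $u\>\diff\typeargs{\tau,\upsilon}(s,t)$ to $\mapIonly(u)\>\diff^{\tau,\upsilon}_{s,t}$ for ground $s,t$ (since $\diff$ takes parameters); and the multiset-extension condition on clauses follows because $\mapIonly$ maps each side of each literal individually and hence commutes with the Dershowitz--Manna multiset extension. \begin{full}Finally, the closure condition follows because both closure orders are defined through the underlying ground clause order and, by Lemma~\ref{lem:PG:mapI-subst}, $\mapIonly(C\theta) = \mapIonly_\theta(C)\mapI{\theta}$; hence comparing $\mapI{C\closure\theta}$ and $\mapI{D\closure\rho}$ on the $\levelIPG$ level reduces to comparing $C\theta$ and $D\rho$, exactly as on the $\levelPG$ level.\end{full}

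I expect the only part requiring genuine, if routine, work to be the second structural fact, namely that $\mapIonly$ preserves orange, yellow, and green subterms and their contexts. I would establish it by induction on term structure following Definitions~\ref{def:orange-subterms}, \ref{def:yellow-subterms}, and \ref{def:green-subterms}, the crucial point being that the parameter case of the orange-subterm rules contributes no positions, so the relabeling $\mapIonly$ performs inside parameters has no effect on the position calculus, while for variables, De Bruijn indices, $\lambda$-abstractions, and non-parameter arguments $\mapIonly$ plainly commutes with the term formers. Once this lemma is in place, the remainder of the argument is a verbatim translation of the proof of Lemma~\ref{lem:IPG:succ-transfer}.
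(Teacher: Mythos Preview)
Your proposal is correct and follows the same approach as the paper. The paper's proof is a one-liner (``This is easy to see, considering that $\mapIonly$ is a bijection on ground terms and that $\mapIonly$ and $\mapIonly^{-1}$ preserve yellow subterms''); you have simply unpacked the individual conditions, which is a reasonable elaboration of the same argument.
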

\begin{proof}
  This is easy to see, considering that
  $\mapIonly$ is a bijection \begin{full}on ground terms\end{full}
  and that $\mapIonly$ and $\mapIonly^{-1}$ preserve yellow subterms.
\end{proof}

\begin{full}%
\begin{lem}\label{lem:PG:closure-extension-mapI}
  Given $D \closure\rho, C \closure\theta \in \clausesPG$, we have
  $D\closure\rho \succ C\closure\theta$ if and only if
  $\mapI{D\closure\rho} \mathrel{\succ_\mapIonly} \mapI{C\closure\theta}$.
\end{lem}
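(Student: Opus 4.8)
The plan is to peel off the closure structure on both sides and reduce everything to the already-known correspondence on ground clauses. On the $\levelPG$ side, property \ref{cond:PG:order:closures} of the admissible term order gives $D\closure\rho \succ C\closure\theta$ if and only if $D\rho \succ C\theta$. On the $\levelIPG$ side, by Lemma~\ref{lem:PG:succ-transfer} the relation $\succ_\mapIonly$ is an admissible term order for $\IPGInf$, so it likewise satisfies property \ref{cond:IPG:order:closures}; hence $\mapI{D\closure\rho} \mathrel{\succ_\mapIonly} \mapI{C\closure\theta}$ holds if and only if the underlying ground clauses of these two closures are related by $\succ_\mapIonly$.

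The key step is then to identify those underlying ground clauses. Since $\mapI{C\closure\theta} = \mapIonly_\theta(C)\closure\mapI{\theta}$ by Definition~\ref{def:PG:mapI}, its underlying ground clause is $\mapIonly_\theta(C)\mapI{\theta}$, which equals $\mapI{C\theta}$ by Lemma~\ref{lem:PG:mapI-subst}; similarly for $D\closure\rho$. So $\mapI{D\closure\rho} \mathrel{\succ_\mapIonly} \mapI{C\closure\theta}$ is equivalent to $\mapI{D\rho} \mathrel{\succ_\mapIonly} \mapI{C\theta}$. Finally, by the definition of $\succ_\mapIonly$ on ground clauses (Definition~\ref{def:PG:succ-transfer}), and since $\mapIonly$ is a bijection on ground clauses so that $\mapIonly^{-1}(\mapI{D\rho}) = D\rho$ and $\mapIonly^{-1}(\mapI{C\theta}) = C\theta$, this is in turn equivalent to $D\rho \succ C\theta$. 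Chaining the three equivalences yields $D\closure\rho \succ C\closure\theta \iff \mapI{D\closure\rho} \mathrel{\succ_\mapIonly} \mapI{C\closure\theta}$.

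There is no real obstacle here: the argument is a routine unfolding of definitions, and the only points that need care are (i) invoking the correct closure-extension axiom on each level and (ii) applying Lemma~\ref{lem:PG:mapI-subst} to commute $\mapIonly$ with the application of the grounding substitution. I would write the proof as a short three-line chain of ``if and only if'' steps citing \ref{cond:PG:order:closures}, Lemmas~\ref{lem:PG:succ-transfer} and~\ref{lem:PG:mapI-subst} together with \ref{cond:IPG:order:closures}, and Definition~\ref{def:PG:succ-transfer}.
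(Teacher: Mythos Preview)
Your proposal is correct and follows essentially the same route as the paper's proof: both arguments peel off the closure structure via \ref{cond:PG:order:closures} and \ref{cond:IPG:order:closures}, use Lemma~\ref{lem:PG:mapI-subst} to commute $\mapIonly$ with the grounding substitution, and appeal to Definition~\ref{def:PG:succ-transfer} for the correspondence on ground clauses. One small simplification: you need not go through Lemma~\ref{lem:PG:succ-transfer} to obtain \ref{cond:IPG:order:closures}, since the closure extension of $\succ_\mapIonly$ is defined in Definition~\ref{def:PG:succ-transfer} precisely so that this condition holds by construction.
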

\begin{proof}
By \ref{cond:IPG:order:closures},
$D\closure\rho \succ C\closure\theta$ if and only if
$C\theta \succ D\rho$.
By Definition~\ref{def:PG:succ-transfer},
this is equivalent to $\mapI{D\rho} \mathrel{\succ_\mapIonly} \mapI{C\theta}$.
Since $\mapI{D\rho} = \mapIonly_\rho(D)\mapI{\rho}$
and $\mapI{C\theta} = \mapIonly_\theta(C)\mapI{\theta}$ by Lemma~\ref{lem:PG:mapI-subst},
this is equivalent to
$\mapIonly_\rho(D)\mapI{\rho} \mathrel{\succ_\mapIonly} \mapIonly_\theta(C)\mapI{\theta}$.
By \ref{cond:IPG:order:closures}, this is equivalent to
$\mapI{D\closure\rho} \mathrel{\succ_\mapIonly} \mapI{C\closure\theta}$.
\end{proof}

\begin{defi}\label{def:PG:irred}
  Given a term rewrite system $R$ on $\termsF$,
  we say that a closure $C\closure\theta\in\clausesPG$ is variable-irreducible \wrt\ $R$
  if $\mapF{\mapI{C\closure\theta}}$ is.
  We write $\irred_R(N)$ for all variable-irreducible closures in a set $N\subseteq\clausesPG$.
\end{defi}
\end{full}

\begin{slim}

  Since $\mapIonly$ is bijective, we can transfer the selection function as follows:
  \begin{defi}\label{def:PG:sel-transfer}
    Based on $\mathit{gsel}$,
    we define $\mapI{\mathit{gsel}}$ as a selection function that selects
    the literals of $C \in \clausesIPG$ corresponding to the $\mathit{gsel}$-selected literals in
    $\mapIonly^{-1}(C)$.
  \end{defi}

  \begin{defi}\label{def:PG:mapI-inference}
    We extend $\mapIonly$ to inference rules
    by mapping an inference $\iota \in \PGInf$
    to the inference
    \[
    \inference{\mapI{\prems(\iota)}}{\mapI{\concl(\iota)}}
    \]
    \end{defi}

    \begin{lem}\label{PG:lem:mapI-inference}
    The mapping $\mapIonly$ is a bijection between $\PGInf^{\succ,gsel}$ and $\IGInf^{\succ_\mapIonly,\mapI{\mathit{gsel}}}$.
    \end{lem}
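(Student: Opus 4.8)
The plan is to prove this by a direct, rule-by-rule comparison, in the same style as the proof of Lemma~\ref{IPG:lem:mapF-inference}. Two facts about $\mapIonly$ are already available and do the bulk of the work: $\mapIonly$ is a bijection on ground terms and on ground clauses (noted just before Definition~\ref{def:PG:succ-transfer}), and both $\mapIonly$ and $\mapIonly^{-1}$ preserve yellow subterms, hence also green subterms (this was used in the proof of Lemma~\ref{lem:PG:succ-transfer}). Unfolding Definition~\ref{def:PG:mapI-inference}, I would show that for every inference $\iota \in \PGInf^{\succ,gsel}$ with premises $C_1,\dots,C_m$ and conclusion $C_{m+1}$, the tuple $(\mapI{C_1},\dots,\mapI{C_{m+1}})$ is a valid $\IGInf^{\succ_\mapIonly,\mapI{\mathit{gsel}}}$ inference whose rule name is obtained from $\iota$'s by replacing the prefix $\infname{G}$ with $\infname{IG}$, and, conversely, that every $\IGInf^{\succ_\mapIonly,\mapI{\mathit{gsel}}}$ inference arises in this way from a unique $\PGInf^{\succ,gsel}$ inference.

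By construction, the rules of $\PGInf$ are verbatim copies of those of $\IGInf$ apart from (i) the use of $\SigmaH$ rather than $\SigmaI$, (ii) the rule-name prefixes, and (iii) writing the parameter-bearing constant $\diff$ as $\diff\typeargs{\tau,\upsilon}(u,w)$ rather than $\diff^{\tau,\upsilon}_{u,w}$ in the \ArgCong, \Ext, and \Diff rules. Difference (i) is bridged bijectively by $\mapIonly$, and difference (iii) is exactly the action of $\mapIonly$ on a symbol applied to a nonempty parameter list (Definition~\ref{def:PG:mapI}); the substitution argument of $\mapIonly$ plays no role here because $u$ and $w$ are ground. Hence applying $\mapIonly$ (respectively $\mapIonly^{-1}$) to all clauses of a valid inference carries its term-level schema to that of the matching rule on the other level, and because $\mapIonly$ is bijective on $\clausesPG$ this yields the desired correspondence at the level of inference schemata.

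What then remains is to check that the side conditions transfer in both directions, i.e., that the notions they mention are preserved by $\mapIonly$: the green-subterm relation (a special case of the preserved yellow-subterm relation); the syntactic predicates ``nonfunctional,'' ``Boolean,'' ``distinct from $\itrue$ and $\ifalse$,'' and the like (manifestly invariant); the term order, since $s \succ t$ iff $\mapI{s} \succ_\mapIonly \mapI{t}$ and likewise on clauses by Definition~\ref{def:PG:succ-transfer}; the selection function, since $L$ is selected in $C$ iff $\mapI{L}$ is selected in $\mapI{C}$ by Definition~\ref{def:PG:sel-transfer}; and consequently the (strict) eligibility of literals and of positions, which depends only on the term order and the selection. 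Putting these observations together gives that a tuple of clauses is a valid $\PGInf^{\succ,gsel}$ inference exactly when its $\mapIonly$-image is a valid $\IGInf^{\succ_\mapIonly,\mapI{\mathit{gsel}}}$ inference with the corresponding rule name, which is the claim. The one point requiring care---though it is minor---is the bookkeeping around $\diff$: one must confirm that the identification of $\diff\typeargs{\tau,\upsilon}(u,w)$ with $\diff^{\tau,\upsilon}_{u,w}$ (Convention~\ref{convention:indexing-subscripts}) is compatible with how $\mapIonly$ acts, so that the \ArgCong, \Ext, and \Diff rules really are exact images of one another and the order conditions mentioning $\diff$ line up through $\succ_\mapIonly$. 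Everything else is a routine transcription of the argument for Lemma~\ref{IPG:lem:mapF-inference}.
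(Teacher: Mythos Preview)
Your proposal is correct and follows essentially the same approach as the paper's proof: a rule-by-rule comparison noting that green subterms, the term order, selection, and hence eligibility are all preserved by $\mapIonly$. You add a bit more detail (the explicit handling of the $\diff$ notation via Convention~\ref{convention:indexing-subscripts} and the syntactic predicates), but the structure and key observations are the same.
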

    \begin{proof}
    This is easy to see by comparing the rules of $\PGInf$ and $\IGInf$.
    It is crucial that the following concepts match:
      \begin{itemize}
        \item Green subterms on the $\levelIPG$ level correspond to green subterms on the $\levelG$ level.
        \item The term orders correspond (Definition~\ref{def:PG:succ-transfer}).
        \item The selected literals correspond; i.e.,
        a literal $L$ is selected in a clause $C$ if and only if
        the literal $\mapI{L}$ is selected in $\mapI{C}$.
        This follows directly from the definition of $\mapI{\mathit{gsel}}$ (Definition~\ref{def:PG:sel-transfer}).
        \item The concepts of eligibility correspond; i.e.,
        a literal $L$ of a clause $C\in\clausesG$ is (strictly) eligible \wrt\ $\succeq$
        if and only if 
        the literal $\mapI{L}$ of the clause $C$ is (strictly) eligible \wrt\ $\succeq_\mapIonly$; and
        a position $L.s.p$ of a clause $C\in\clausesG$ is eligible \wrt\ $\succeq$
        if and only if 
        the position $\mapI{L}.\mapI{s}.p$ of the clause $\mapI{C}$
        is eligible \wrt\ $\succeq_\mapIonly$.
        This is true because eligibility (Definition~\ref{def:ho-eligible}) depends only on the selected literals and the term order,
        which correspond as discussed above.\qedhere
      \end{itemize}
    \end{proof}
\end{slim}

\begin{defi}[Inference Redundancy]\label{def:PG:RedI}
\begin{full}
Let $N \subseteq \clausesPG$.
Let $\iota \in \PGInf$
an inference with premises $C_1\closure\theta_1$, \dots, $C_m\closure\theta_m$ and
conclusion $C_{m+1}\closure\theta_{m+1}$.
We define $\iota \in \PGRedI(N)$ if
\begin{enumerate}[label=\arabic*.,ref=\arabic*]
\item\label{cond:PG:RedI:FInf-not-valid}
the inference $\iota'$ given as\strut
\[
\inference{\mapF{\mapI{C_1\theta_1}}\quad \cdots \quad \mapF{\mapI{C_m\theta_m}}}
{\mapF{\mapI{C_{m+1}\theta_{m+1}}}}
\]
is not a valid $\FInf^{\succ_{\mapIonly\mapFonly}}$ inference
such that the names of $\iota$ and $\iota'$ correspond up to the prefixes $\infname{PG}$ and $\infname{F}$;
or
\item\label{cond:PG:RedI:entailed} for all confluent term rewrite systems $R$ oriented by $\succ_{\mapIonly\mapFonly}$
  whose only Boolean normal forms are $\itrue$ and $\ifalse$
  such that $C_{m+1}\closure\theta_{m+1}$ is variable-irreducible, we have
\[R \cup O \modelsolam \mapF{\mapI{C_{m+1}\closure\theta_{m+1}}}\]
where 
$O = \irred_R(\mapF{\mapI{N}})$
if $\iota$ is a $\PGDiff$ inference and
$O = \{E \in \irred_R(\mapF{\mapI{N}}) \mid E \prec_{\mapIonly\mapFonly} \mapF{\mapI{C_m\theta_m}}\}$
if $\iota$ is some other inference.
\end{enumerate}
\end{full}
\begin{slim}
Let $N \subseteq \clausesG$ and $\iota \in \PGInf$.
We define $\iota \in \GRedI$ if $\mapI{\iota} \in \IGRedI(\mapI{N})$.
\end{slim}
\end{defi}

\begin{full}%
We transfer the selection function $\mathit{pgsel}$ as follows:
\begin{defi}\label{def:PG:sel-transfer}
Let $N \subseteq \clausesPG$ be a set of closures.
Then we choose a function $\mapIonly^{-1}_N$, depending on this set $N$,
such that $\mapIonly_N^{-1}(C) \in N$ and $\mapI{\mapIonly^{-1}_N(C)} = C$ for all $C \in \mapI{N}$.
Then we define $\mapI{\mathit{pgsel},N}$ as a selection function that selects
the literals of $C \in \mapI{N}$ corresponding to the $\mathit{pgsel}$-selected literals in $\mapIonly^{-1}_N(C)$
and that selects arbitrary literals in all other closures.
\end{defi}
\end{full}

\begin{lem}\label{lem:PG:saturated}
Let $N \subseteq \clausesPG$ be saturated up to redundancy \wrt{} $\PGInf^{\succ,\mathit{\slimfull{}{p}gsel}}$.
Then $\mapI{N}$ is saturated up to redundancy \wrt{} $\IPGInf^{\succ_\mapIonly,\mapI{\slimfull{\mathit{gsel}}{\mathit{pgsel},N}}}$. 
\end{lem}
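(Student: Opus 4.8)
The plan is to mirror the proof of Lemma~\ref{lem:IPG:saturated}: starting from an arbitrary $\IPGInf$ inference $\iota_\levelIPG$ with premises in $\mapI{N}$, pull it back along the bijection $\mapIonly$ to a $\PGInf$ inference $\iota_\levelPG$ with premises in $N$, invoke the saturation hypothesis on $\iota_\levelPG$, and push the resulting redundancy statement forward again. The groundwork is already in place: $\mapIonly$ is bijective on ground terms and ground clauses and preserves yellow and green subterms, so $\succ_\mapIonly$ is an admissible term order for $\IPGInf$ (Lemma~\ref{lem:PG:succ-transfer}) and the closure order transfers (Lemma~\ref{lem:PG:closure-extension-mapI}); and the selection function $\mapI{\mathit{pgsel},N}$ was defined (Definition~\ref{def:PG:sel-transfer}) precisely so that, on a closure $\mapI{D\closure\theta}$ with $D\closure\theta \in N$ picked out by the choice function $\mapIonly^{-1}_N$, its selected literals correspond under $\mapIonly$ to the $\mathit{pgsel}$-selected literals of $D\closure\theta$.

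First I would establish the inference-level correspondence: for every $\iota_\levelIPG \in \IPGInf^{\succ_\mapIonly,\mapI{\mathit{pgsel},N}}$ whose premises lie in $\mapI{N}$, there is an inference $\iota_\levelPG \in \PGInf^{\succ,\mathit{pgsel}}$ with $\mapI{\iota_\levelPG} = \iota_\levelIPG$ whose premises are the $\mapIonly^{-1}_N$-preimages of the premises of $\iota_\levelIPG$ and hence lie in $N$. This is checked rule by rule, exactly as in the corresponding bijection argument between $\IPGInf$ and $\PFInf$ (Lemma~\ref{IPG:lem:mapF-inference}); the only things to verify are that green subterms, the term and closure orders, the selected literals, and hence eligibility (Definition~\ref{def:closure-eligible}) all transfer, which is what the previous paragraph records, together with the fact that the $\diff$-related side conditions of $\PGInf$ and $\IPGInf$ differ only through the identification $\diff\typeargs{\tau,\upsilon}(u,w) \leftrightarrow \diff^{\tau,\upsilon}_{u,w}$ built into $\mapIonly$ (cf.\ the list of exceptions defining $\PGInf$).

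Then I would apply the saturation hypothesis to $\iota_\levelPG$, giving $\iota_\levelPG \in \PGRedI(N)$. Since $\iota_\levelPG$ is a valid $\PGInf$ inference and $\mapFonly \circ \mapIonly$ carries it to a valid $\FInf^{\succ_{\mapIonly\mapFonly}}$ inference with corresponding rule name (again by the rule-by-rule check, composed with Lemma~\ref{lem:IPG:inference-to-F-inference}), condition~\ref{cond:PG:RedI:FInf-not-valid} of Definition~\ref{def:PG:RedI} fails, so condition~\ref{cond:PG:RedI:entailed} holds: for every confluent rewrite system $R$ oriented by $\succ_{\mapIonly\mapFonly}$ whose only Boolean normal forms are $\itrue$ and $\ifalse$ and for which $\concl(\iota_\levelPG)$ is variable-irreducible, $R \cup O \modelsolam \mapF{\mapI{\concl(\iota_\levelPG)}}$, where $O = \irred_R(\mapF{\mapI{N}})$ if $\iota_\levelPG$ is a $\PGDiff$ inference and $O = \{E \in \irred_R(\mapF{\mapI{N}}) \mid E \prec_{\mapIonly\mapFonly} \mapF{\mapI{\mprem(\iota_\levelPG)}}\}$ otherwise. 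Using $\mapI{\iota_\levelPG} = \iota_\levelIPG$, so that $\concl(\iota_\levelIPG) = \mapI{\concl(\iota_\levelPG)}$ and $\mprem(\iota_\levelIPG) = \mapI{\mprem(\iota_\levelPG)}$, together with Lemma~\ref{lem:PG:mapI-subst} and the closure-order convention (condition~\ref{cond:PF:order:closures}), this is verbatim the defining condition for $\iota_\levelIPG \in \IPGRedI(\mapI{N})$ in Definition~\ref{def:IPG:RedI} (with the orders and the notion of variable-irreducibility matched via Definitions~\ref{def:PG:succ-transfer} and~\ref{def:PG:irred}, and $\PGDiff$ inferences corresponding to $\IPGDiff$ inferences). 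As $\iota_\levelIPG$ was arbitrary, $\mapI{N}$ is saturated up to redundancy.

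The main obstacle I expect is making the inference-level correspondence airtight. Because $\mapIonly$ absorbs type arguments and parameters into indices, distinct pairs $(C,\theta)$ can share an image, so one cannot simply invert $\mapIonly$ on the premises; one must commit to the fixed choice function $\mapIonly^{-1}_N$ and then check that a consistent choice of conclusion and of all side conditions — in particular eligibility, which depends on $N$ through $\mapI{\mathit{pgsel},N}$ — still yields a valid $\PGInf$ inference with exactly those premises. Everything after that is bookkeeping on top of the structure-preservation lemmas already proved.
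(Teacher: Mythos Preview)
Your proposal is correct and follows essentially the same approach as the paper: pull back an $\IPGInf$ inference along $\mapIonly^{-1}_N$ to a $\PGInf$ inference, invoke saturation, rule out condition~\ref{cond:PG:RedI:FInf-not-valid} of Definition~\ref{def:PG:RedI} via Lemma~\ref{lem:IPG:inference-to-F-inference}, and identify condition~\ref{cond:PG:RedI:entailed} with Definition~\ref{def:IPG:RedI}. The paper presents the inference-level correspondence more tersely (illustrating only $\IPGEqRes$ and citing Lemma~\ref{lem:PG:mapI-preserves-identities} for the equality side conditions), but the substance is the same, and your caveat about committing to the fixed choice function $\mapIonly^{-1}_N$ is exactly the point the paper handles implicitly.
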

\begin{slim}
\begin{proof}
  By Lemma~\ref{PG:lem:mapI-inference} and Defintion~\ref{def:PG:RedI}.
\end{proof}
\end{slim}%
\begin{full}
\begin{proof}
Let $\iota'$ be a $\IPGInf^{\succ_\mapIonly,\mapI{\mathit{pgsel},N}}$ inference
from $\mapI{N}$.
We must show that $\iota' \in \IPGRedI(\mapI{N})$.
It suffices to construct a $\PGInf$ inference $\iota$ 
with premises $\mapIonly^{-1}_N(\prems(\iota'))$
such that $\mapI{\concl(\iota)} = \concl(\iota')$
and the rule names of $\iota$ and $\iota'$ coincide up to the prefixes
$\infname{PG}$ and $\infname{IPG}$.
Then, by saturation, $\iota \in \PGRedI(N)$; i.e., 
condition~\ref{cond:PG:RedI:FInf-not-valid}
or condition~\ref{cond:PG:RedI:entailed}
of Definition~\ref{def:PG:RedI},
is satisfied.
Condition~\ref{cond:PG:RedI:FInf-not-valid} cannot be satisfied
because it contradicts Lemma~\ref{lem:IPG:inference-to-F-inference}
applied to $\iota'$.
Thus, condition~\ref{cond:PG:RedI:entailed} must be satisfied.
Then, by Definition~\ref{def:IPG:RedI},
$\iota' \in \IPGRedI(\mapI{N})$.

Finding such an inference $\iota$ is straightforward for all inference rules.
We illustrate it with the rule $\IPGEqRes$:
Let $\iota'$ be\strut
\[\namedinference{\IPGEqRes}
{C' \llor {u \cneq u'}\>\closure\>\theta}
{C'\closure\theta}\]
Then there exists a corresponding $\PGInf$ inference $\iota$ from 
$\mapIonly^{-1}_N(C' \llor {u \cneq u'}\>\closure\>\theta)$.
(See Definition~\ref{def:PG:sel-transfer} for the definition of $\mapIonly^{-1}_N$.)
The eligibility condition is fulfilled because the term order, and 
the selections are transferred according to Definitions
\ref{def:PG:succ-transfer} and \ref{def:PG:sel-transfer}
and Lemma~\ref{lem:PG:closure-extension-mapI}.
The equality condition is fulfilled by Lemma~\ref{lem:PG:mapI-preserves-identities}.
The conclusion $\concl(\iota)$ of this inference has the property
$\mapI{\concl(\iota)} = \concl(\iota')$, as desired.
\end{proof}
\end{full}

\subsubsection{Full Higher-Order Level}
\label{ssec:H:redundancy}

In this subsubsection, let $\succ$ be an admissible term order
(Definition~\ref{def:admissible-term-order}) and let
$\mathit{hsel}$ be a selection function (Definition~\ref{def:lit-sel}).
\begin{full}
We extend $\succ$ to closures $\clausesPG$ by
$C\closure\theta \succ D\closure\rho$ if and only if $C\theta \succ D\rho$.
Then we can use it for $\PGInf$ as well:
\end{full}
\begin{lem}\label{lem:G:admissible-term-order}
  The relation $\succ$ is an admissible term order for $\PGInf$.
\end{lem}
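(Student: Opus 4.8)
The plan is to check the conditions of Definition~\ref{def:PG:admissible-term-order} in turn and observe that each is either literally one of the conditions of Definition~\ref{def:admissible-term-order} or follows immediately from the way $\succ$ has just been extended to closures.

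First I would record that the ground terms manipulated by $\PGInf$ are precisely the elements of $\TT_\mathrm{ground}(\SigmaH)$, and that such a term, having no variables at all, is a constrained term with empty constraint set in the sense of Section~\ref{sec:logic}; likewise a ground clause over $\TT_\mathrm{ground}(\SigmaH)$ is a constrained clause with empty constraints. Hence the restriction of $\succ$ to ground $\SigmaH$-terms (respectively ground $\SigmaH$-clauses) that Definition~\ref{def:PG:admissible-term-order} refers to is exactly the restriction to ground terms (respectively ground clauses) that Definition~\ref{def:admissible-term-order} talks about. With this identification, conditions \ref{cond:PG:order:total}, \ref{cond:PG:order:comp-with-contexts}, \ref{cond:PG:order:subterm}, \ref{cond:PG:order:t-f-minimal}, \ref{cond:PG:order:ext}, and \ref{cond:PG:order:clause-extension} are word-for-word conditions \ref{cond:order:total}, \ref{cond:order:comp-with-contexts}, \ref{cond:order:subterm}, \ref{cond:order:t-f-minimal}, \ref{cond:order:ext}, and \ref{cond:order:clause-extension} of Definition~\ref{def:admissible-term-order}, so nothing further needs to be shown for them.

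For the one remaining condition, \ref{cond:PG:order:closures}, which demands $C\closure\theta \succ D\closure\rho$ if and only if $C\theta \succ D\rho$ for closures in $\clausesPG$, I would simply appeal to the definition of the extension of $\succ$ to $\clausesPG$ stated immediately before the lemma, which asserts precisely this equivalence.

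I do not expect a substantive obstacle: the only care needed is the bookkeeping in the first step that the notions of ``ground term'' and ``ground clause'' coincide across the two definitions. Everything that Definition~\ref{def:admissible-term-order} provides beyond Definition~\ref{def:PG:admissible-term-order}---stability under grounding substitutions (\ref{cond:order:stability-terms} and \ref{cond:order:stability-clauses}), transitivity on constrained literals (\ref{cond:order:transitive}), and the variable guarantee (\ref{cond:order:variable})---is simply not required here, so those conditions can be ignored.
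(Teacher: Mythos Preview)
Your proposal is correct and takes essentially the same approach as the paper's proof, which simply notes that conditions \ref{cond:order:total} through \ref{cond:order:clause-extension} are identical to \ref{cond:PG:order:total} through \ref{cond:PG:order:clause-extension} and that \ref{cond:PG:order:closures} holds by the given extension of $\succ$ to closures. Your additional bookkeeping remark about ground terms being constrained terms with empty constraints is a harmless elaboration of what the paper leaves implicit.
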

\begin{proof}
  Conditions~\ref{cond:order:total} to \ref{cond:order:clause-extension}
  are identical to conditions~\ref{cond:PG:order:total} to \ref{cond:PG:order:clause-extension}.
  \begin{full}Condition~\ref{cond:PG:order:closures} is fulfilled by the given extension of $\succ$ to closures.\end{full}
\end{proof}

\begin{full}
\begin{defi}\label{def:G:irred}
  Given a term rewrite system $R$ on $\termsF$,
  we say that a closure $C\closure\theta\in\clausesG$ is variable-irreducible \wrt\ $R$
  if $\mapF{\mapI{\mapP{C\closure\theta}}}$ is.
  We write $\irred_R(N)$ for all variable-irreducible closures in a set $N\subseteq\clausesG$.
\end{defi}

For the $\levelH$ level, we define both clause and inference redundancy.
Below, we write $\fipg{C}$ for $\mapF{\mapI{\mapP{\mapG{C}}}}$
and $\fip{C}$ for $\mapF{\mapI{\mapP{C}}}$.

\begin{defi}[Clause Redundancy]\label{def:H:RedC}
  Given a constrained clause $C \in \clausesH$ and a set $N\subseteq\clausesH$, let $C\in\HRedC(N)$
  if for all confluent term rewrite systems $R$ on $\termsF$ oriented by $\succ_{\mapIonly\mapFonly}$
  whose only Boolean normal forms are $\itrue$ and $\ifalse$
  and all $C' \in \irred_R(\fipg{C})$, at least one of the following two conditions holds:
  \begin{enumerate}[label=\arabic*.,ref=\arabic*]
    \item \label{cond:H:RedC:entailment}
      $R \cup \{E \in \irred_R(\fipg{N}) \mid E \prec_{\mapIonly\mapFonly} C'\}\modelsolam C'$; or
    \item \label{cond:H:RedC:subsumed} there exists clauses $D \in N$ and $D' \in \irred_R(\fipg{D})$
    such that $C \sqsupset D$ and $\mapT{D'} = \mapT{C'}$.
  \end{enumerate}
\end{defi}

\begin{defi}[Inference Redundancy]\label{def:H:RedI}
  Let $N \subseteq \clausesH$.
  Let $\iota \in \HInf$
  an inference with premises $C_1\constraint{S_1}$, \dots, $C_m\constraint{S_m}$ and
   conclusion $C_{m+1}\constraint{S_{m+1}}$.
  We define $\HRedI$ so that $\iota \in \HRedI(N)$
  if for all substitutions $(\theta_1, \dots, \theta_{m+1})$
  for which $\iota$ is rooted in $\FInf{}$ (Definition~\ref{def:fol-inferences}),
  and for all confluent term rewrite systems $R$ oriented by $\succ_{\mapIonly\mapFonly}$
  whose only Boolean normal forms are $\itrue$ and $\ifalse$
  such that $C_{m+1}\closure\theta_{m+1}$ is variable-irreducible, we have
\[R \cup O \modelsolam \mapF{C_{m+1}\theta_{m+1}}\]
where 
$O = \irred_R(\fipg{N})$
if $\iota$ is a $\Diff$ inference and
$O = \{E \in \irred_R(\fipg{N}) \mid E \prec_{\mapIonly\mapFonly} \mapF{C_m\theta_m}\}$
if $\iota$ is some other inference.
\end{defi}
\end{full}

The selection function is transferred
\begin{slim}as follows:\end{slim}
\begin{full}in a similar way as with $\mapIonly$:\end{full}
\begin{defi}\label{def:H:sel-transfer}
\begin{full}
Let $N \subseteq \clausesG$.
We choose a function $\mapPonly^{-1}_N$, depending on this set $N$,
such that $\mapPonly_N^{-1}(C) \in N$ and $\mapP{\mapPonly^{-1}_N(C)} = C$ for all $C \in \mapP{N}$.
\end{full}
\slimfull{Let $N \subseteq \clausesH$. We}{Similarly, for $N \subseteq \clausesH$, we}
choose a function $\mapGonly^{-1}_N$, depending on this set $N$,
such that $\mapGonly_N^{-1}(C) \in N$ and $\mapG{\mapGonly^{-1}_N(C)} = C$ for all $C \in \mapG{N}$.
\begin{slim}
Then we define $\mapG{\mathit{hsel},N}$ as a selection function that selects
the literals of $C \in \mapG{N}$ corresponding to the $\mathit{hsel}$-selected literals in $\mapGonly^{-1}_N(C)$
and that selects arbitrary literals in all other clauses.
\end{slim}
\begin{full}

Then we define $\mapPG{\mathit{hsel},N}$ as a selection function for $\clausesPG$ as follows:
Given a clause $C_\levelPG \in \mapPG{N}$,
let $C_\levelH\closure\theta = \mapPonly^{-1}_{\mapG{N}}(C_\levelPG)$
and $C_\levelH\constraint{S} = \mapGonly^{-1}_N(C_\levelH\closure\theta)$.
We define $\mapPG{\mathit{hsel},N}$ to select
$L_\levelPG \in C_\levelPG$ if and only if there exists a
literal $L_\levelH$ selected in $C_\levelH\constraint{S}$ by $\mathit{hsel}$ such that
$L_\levelPG = L_\levelH\mapp{\theta}$.
Given a clause $C_\levelPG \not\in \mapPG{N}$,
$\mapPG{\mathit{hsel},N}$ can select arbitrary literals.
\end{full}
\end{defi}

\begin{full}
\begin{lem}\label{lem:instance-varirred}
  Let $R$ be a confluent term rewrite system on $\termsPF$ oriented by $\succ_{\mapIonly\mapFonly}$
  whose only Boolean normal forms are $\itrue$ and $\ifalse$.
  Let $C\closure\theta,D\closure\rho \in \clausesPG$.
  Let $C\closure\theta$ be variable-irreducible \wrt\ $R$.
  Let $\sigma$ be a substitution such that 
  $z\theta = z\sigma\rho$ for all variables $z$ in $C$ and $D = C\sigma$.
  Then $D\closure\rho$ is variable-irreducible \wrt\ $R$.
\end{lem}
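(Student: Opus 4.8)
The plan is to unfold the $\levelPG$-level notion of variable-irreducibility (Definition~\ref{def:PG:irred}, which delegates through Definitions~\ref{def:IPG:irred} and~\ref{def:PF:irred} to the $\levelPF$ level) and then transfer variable-irreducibility from $C\closure\theta$ to $D\closure\rho$ literal by literal and variable by variable. By Definition~\ref{def:PG:mapI} and Lemmas~\ref{lem:PG:mapI-subst} and~\ref{lem:IPG:mapF-subst}, we have $\mapF{\mapI{D\closure\rho}} = \mapF{\mapIonly_\rho(D)}\closure\mapF{\mapI{\rho}}$, a variable $x$ of $\mapF{\mapIonly_\rho(D)}$ has $x\mapF{\mapI{\rho}} = \mapF{\mapI{x\rho}}$, and a literal $\mapF{\mapIonly_\rho(K)}$ of $\mapF{\mapIonly_\rho(D)}$ has ground instance $\mapF{\mapIonly_\rho(K)}\mapF{\mapI{\rho}} = \mapF{\mapI{K\rho}}$. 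So, knowing that $\mapF{\mapI{C\closure\theta}} = \mapF{\mapIonly_\theta(C)}\closure\mapF{\mapI{\theta}}$ is variable-irreducible \wrt\ $R$ and $\succ_{\mapIonly\mapFonly}$, I must show the same for $\mapF{\mapIonly_\rho(D)}\closure\mapF{\mapI{\rho}}$. The starting observation is that $D\rho = C\sigma\rho = C\theta$, so the two closures encode the same ground first-order clause.

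First I would fix a literal $\mapF{\mapIonly_\rho(K)}$ of $\mapF{\mapIonly_\rho(D)}$; since $D = C\sigma$, we have $K = L\sigma$ for a literal $L$ of $C$, and $K\rho = L\sigma\rho = L\theta$. Hence the threshold $\mapF{\mapIonly_\rho(K)}\mapF{\mapI{\rho}} = \mapF{\mapI{L\theta}}$ equals the threshold $\mapF{\mapIonly_\theta(L)}\mapF{\mapI{\theta}}$ of the corresponding literal of $\mapF{\mapI{C\closure\theta}}$, so the relevant rewrite rules $\{s\rewrite t\in R \mid \mapF{\mapI{L\theta}}\succ_{\mapIonly\mapFonly}s\eq t\}$ are literally the same in both. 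Next, for a variable $x$ occurring in $\mapF{\mapIonly_\rho(K)}$: since $\mapF{\mapIonly_\rho}$ was applied to $K = L\sigma$, this $x$ traces back to an occurrence of $x$ in $L\sigma$ that survives both encodings, hence — using that variables of $\clausesPG$-clauses are nonfunctional, Lemma~\ref{lem:IPG:mapF-subterms}, and the fact that $\mapIonly$ preserves yellow subterms — to a yellow subterm occurrence of $x$ in some $y\sigma$, where $y$ is a variable of $C$ occurring in $L$ (take $y = x$ if $\sigma$ fixes $x$). Applying $\rho$, $x\rho$ is a yellow subterm of $y\sigma\rho = y\theta$, and $y$ survives the $C$-side encodings, so $x\mapF{\mapI{\rho}} = \mapF{\mapI{x\rho}}$ is a subterm of $y\mapF{\mapI{\theta}} = \mapF{\mapI{y\theta}}$, which is an instance of the variable $y$ of the literal $\mapF{\mapIonly_\theta(L)}$ of $\mapF{\mapI{C\closure\theta}}$. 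By variable-irreducibility of $\mapF{\mapI{C\closure\theta}}$, $\mapF{\mapI{y\theta}}$ is irreducible \wrt\ that rule set and all its Boolean subterms are $\itrue$ or $\ifalse$; since subterms of an $R$-irreducible term are $R$-irreducible and the Boolean-normal-form property is inherited by subterms, $x\mapF{\mapI{\rho}}$ enjoys both. As $x$ and the literal were arbitrary, $\mapF{\mapI{D\closure\rho}}$ is variable-irreducible, i.e., $D\closure\rho$ is variable-irreducible \wrt\ $R$.

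I expect the main obstacle to be the traceback step: making precise that a variable of the doubly-encoded skeleton $\mapF{\mapIonly_\rho(D)}$ corresponds to an occurrence in $D = C\sigma$ that is neither inside a parameter (otherwise $\mapIonly$ would absorb it into an index) nor inside the $\square$-skeleton of a functional term (otherwise $\mapFonly$ would absorb it), so that it sits at a yellow position, and that tracing such an occurrence back into the relevant $y\sigma$ really yields a yellow subterm of $y\theta$ — together with checking that the enclosing variable $y$ of $C$, which is nonfunctional because $C\in\clausesPG$, itself survives the encodings on the $C$ side so that it genuinely is a variable of $\mapF{\mapIonly_\theta(C)}$ carrying the matching threshold. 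The threshold computation and the ``subterm of irreducible is irreducible'' step are routine.
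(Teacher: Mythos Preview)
Your proposal is correct and follows essentially the same route as the paper's proof: fix a literal of $\mapF{\mapIonly_\rho(D)}$, trace it back to a literal $L'_0\in C$ with the same ground instance (hence the same threshold), trace a surviving variable $x$ back to an occurrence in some $z\sigma$ for a variable $z$ of $L'_0$, conclude that $\mapF{\mapI{x\rho}}$ is a subterm of $\mapF{\mapI{z\theta}}$, and inherit both the irreducibility and the Boolean-subterm condition from the variable-irreducibility of $C\closure\theta$. The only cosmetic difference is that the paper phrases the traceback as ``$x$ occurs outside of parameters in $z\sigma$'' rather than ``$x$ is a yellow subterm of $y\sigma$''; in $\clausesPG$ (no functional variables, hence no applied variables) these coincide, and your anticipated obstacle is exactly the step the paper spells out.
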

\begin{proof}
Let $L$ be a literal in $\mapF{\mapIonly_\rho(D)}$.
We must show that
for all variables $x$ in $L$,
$\mapF{\mapI{x\rho}}$ is irreducible \wrt\ the rules $s \rewrite t \in R$ with $L\mapF{\mapI{\rho}} \succ_{\mapIonly\mapFonly} s \eq t$
and all Boolean subterms of $\mapF{\mapI{x\rho}}$ are either $\itrue$ or $\ifalse$.
Then there exists a literal $L_0 \in D$ such that $L = \mapF{\mapIonly_\rho(L_0)}$.
Let $x$ be a variable in $L$. Then it is also a variable in $L_0$,
occurring outside of parameters.
Since $D = C\sigma$, $x$ occurs in $C\sigma$ outside of parameters.
Since $C\closure\theta \in \clausesPG$, 
the clause $C$ contains only nonfunctional variables,
and thus a literal $L'_0 \in C$ with $L'_0\sigma = L_0$ must contain a variable $z$ outside of parameters
such that $x$ occurs outside of parameters in $z\sigma$.
So $\mapF{\mapI{x\rho}}$ is a subterm of $\mapF{\mapI{z\sigma\rho}} = \mapF{\mapI{z\theta}}$.
Let $L' = \mapF{\mapIonly_\theta(L'_0)} \in \mapF{\mapIonly_\theta(C)}$.
Then $L'$ also contains $z$.
By variable-irreducibility of $C\closure\theta$,
$\mapF{\mapI{z\theta}}$ is irreducible \wrt\ the rules $s \rewrite t \in R$ with $L'\mapF{\mapI{\theta}} \succ_{\mapIonly\mapFonly} s \eq t$
and all Boolean subterms of $\mapF{\mapI{z\theta}}$ are either $\itrue$ or $\ifalse$.
Then the subterm $\mapF{\mapI{x\rho}}$ of $\mapF{\mapI{z\theta}}$ 
is also irreducible \wrt\ the rules $s \rewrite t \in R$ with $L'\mapF{\mapI{\theta}} \succ_{\mapIonly\mapFonly} s \eq t$
and all Boolean subterms of $\mapF{\mapI{x\rho}}$ are either $\itrue$ or $\ifalse$.
It remains to show that $L'\mapF{\mapI{\theta}} = L\mapF{\mapI{\rho}}$.
We have 
$L'\mapF{\mapI{\theta}}
= \mapF{\mapIonly_\theta(L'_0)}\mapF{\mapI{\theta}}
= \mapF{\mapIonly_\theta(L'_0\theta)}
= \mapF{\mapIonly_\theta(L'_0\sigma\rho)}
= \mapF{\mapIonly_\rho(L_0\rho)}
= \mapF{\mapIonly_\rho(L_0)}\mapF{\mapI{\rho}}
= L\mapF{\mapI{\rho}}$, using Lemma~\ref{lem:IPG:mapF-subst} and Lemma~\ref{lem:PG:mapI-subst}.
\end{proof}
\end{full}

\begin{lem}[Lifting of Order Conditions]\label{lem:H:order-lifting}
  \begin{full}Let $t\constraint{T}$ and $s\constraint{S}$ be constrained terms over $\TT(\SigmaH)$,\end{full}%
  \begin{slim}Let $t,s \in \TT(\SigmaH)$,\end{slim}
  and let $\zeta$ be a grounding substitution\begin{full} such that $S\zeta$ and $T\zeta$ are true\end{full}.
  If $t\zeta \succ s\zeta$, then $t\slimfull{}{\constraint{T}} \not\preceq s\slimfull{}{\constraint{S}}$.
  The same holds for \slimfull{}{constrained} literals.
\end{lem}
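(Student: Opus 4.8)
The plan is to prove the statement by contraposition, reducing it entirely to the stability property \ref{cond:order:stability-terms} (resp.\ its literal analogue) together with two elementary consequences of \ref{cond:order:total}: the relation $\succ$ on ground terms is irreflexive, and it is asymmetric, since a well-founded total order is in particular a strict order. So I assume $t\slimfull{}{\constraint{T}} \preceq s\slimfull{}{\constraint{S}}$ and show that this is incompatible with $t\zeta \succ s\zeta$.

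By definition, $t\slimfull{}{\constraint{T}} \preceq s\slimfull{}{\constraint{S}}$ means that $s\slimfull{}{\constraint{S}}$ is related to $t\slimfull{}{\constraint{T}}$ by the reflexive closure of $\succ$, so either $s\slimfull{}{\constraint{S}} = t\slimfull{}{\constraint{T}}$ or $s\slimfull{}{\constraint{S}} \succ t\slimfull{}{\constraint{T}}$. In the first case the two (constrained) terms coincide, hence $s\zeta = t\zeta$ (in the full version this uses that $\zeta$ makes both constraints true, so both instances are the same ground term), and then $t\zeta \succ s\zeta$ would contradict irreflexivity. In the second case, \ref{cond:order:stability-terms} applies — its side condition that $S\zeta$ and $T\zeta$ be true is part of the hypothesis on $\zeta$ in the full version, and is vacuous in the slim version — giving $s\zeta \succ t\zeta$; together with the assumed $t\zeta \succ s\zeta$ this contradicts asymmetry. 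Either way we reach a contradiction, so $t\slimfull{}{\constraint{T}} \not\preceq s\slimfull{}{\constraint{S}}$.

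For the literal case I would run exactly the same case split, after noting that the two ingredients used above carry over to literals: (i) $\succ$ restricted to ground literals is again a strict (in particular irreflexive and asymmetric) order, and (ii) $\succ$ on literals is stable under grounding substitutions. Both follow from the term-level facts in the by-now-standard way: a literal is, up to the chosen Dershowitz--Manna/Bachmair--Ganzinger encoding, a finite multiset of ground terms, this encoding is injective on literals, and the multiset extension of a strict (resp.\ substitution-stable) order is again strict (resp.\ substitution-stable) — this is precisely the reasoning already invoked to obtain \ref{cond:order:stability-clauses} from \ref{cond:order:stability-terms}. I do not anticipate any genuine difficulty here; the statement is essentially the contrapositive of \ref{cond:order:stability-terms}, and the only point that needs a sentence of justification is that the order on ground literals inherits irreflexivity, asymmetry, and substitution-stability from the order on ground terms.
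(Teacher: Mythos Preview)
Your proof is correct and follows essentially the same approach as the paper: both argue by contraposition, invoke stability under grounding substitutions \ref{cond:order:stability-terms}, and finish with asymmetry from \ref{cond:order:total}. The paper is slightly more terse (collapsing your two-case split into a single ``$\preceq$ lifts to $\preceq$'' step) and for literals cites \ref{cond:order:clause-extension} and \ref{cond:order:stability-clauses} directly rather than rederiving the literal-level properties, but the substance is the same.
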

\begin{proof}
We prove the contrapositive. If $t\slimfull{}{\constraint{T}} \preceq s\slimfull{}{\constraint{S}}$, then, by
\ref{cond:order:stability-terms},
$t\zeta \preceq s\zeta$.
Therefore, since $\succ$ is asymmetric by \ref{cond:order:total}, $t\zeta \not\succ s\zeta$.
The proof for \slimfull{}{constrained} literals is analogous, using \ref{cond:order:clause-extension}
and \ref{cond:order:stability-clauses}.
\end{proof}

\begin{lem}[Lifting of Maximality Conditions]\label{lem:H:maximality-lifting}
  Let $C\slimfull{}{\constraint{S}} \in \clausesH$.
  Let $\theta$ be a grounding substitution.
  Let $L_0$ be (strictly) maximal in $C\theta$.
  Then there exists a literal $L$ that is (strictly) maximal in $C\slimfull{}{\constraint{S}}$
  such that $L\theta = L_0$.
\end{lem}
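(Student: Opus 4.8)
The plan is to single out the sub-multiset $\mathcal{E}$ of those literals of $C$ that $\theta$ sends to $L_0$, and to take for $L$ a literal of $\mathcal{E}$ that is highest inside $\mathcal{E}$ with respect to the constrained-literal order $\succ$. First I would observe that $\mathcal{E}$ is finite and nonempty: it is nonempty because $L_0$ is a literal of the multiset $C\theta = \{K\theta \mid K \in C\}$. Since $\succ$ on constrained literals is transitive by \ref{cond:order:transitive}, the relation ``$K \gg K'$'' defined by $K\constraint{S} \succ K'\constraint{S}$, $K'\constraint{S} \not\succ K\constraint{S}$, and $K\constraint{S} \neq K'\constraint{S}$ is a strict partial order, so it has a maximal element $L$ in the finite nonempty $\mathcal{E}$. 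Unfolding Definition~\ref{def:maximality}, a literal $K$ of $C\constraint{S}$ is maximal precisely when no literal $K' \in C$ satisfies $K' \gg K$; thus the remaining task is to upgrade maximality of $L$ inside $\mathcal{E}$ to maximality inside all of $C$.

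So suppose, for contradiction, that some $L' \in C$ satisfies $L' \gg L$. Then $L' \notin \mathcal{E}$, for otherwise $L'$ would contradict the choice of $L$ as a $\gg$-maximal element of $\mathcal{E}$. Hence $L'\theta \neq L_0$. As $L_0$ is maximal in $C\theta$ and the ground literal order is a strict total order (it is the multiset-based extension, in the sense of \ref{cond:order:clause-extension}, of the total well-founded ground term order of \ref{cond:order:total}, and is therefore also asymmetric), every literal of $C\theta$, in particular $L'\theta$, is $\preceq L_0$, and being distinct from $L_0$ it is $\prec L_0$. On the other hand, $L' \gg L$ gives $L'\constraint{S} \succ L\constraint{S}$, and by stability under grounding substitutions for the literal order---which follows from \ref{cond:order:stability-terms} through the standard literal/clause multiset encoding, just as \ref{cond:order:stability-clauses} does, using that $S\theta$ is true---we obtain $L'\theta \succ L\theta = L_0$. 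This contradicts $L'\theta \prec L_0$. Therefore $L$ is maximal in $C\constraint{S}$, and $L\theta = L_0$ by the definition of $\mathcal{E}$.

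For the strict version, assume in addition that $L_0$ occurs only once in $C\theta$. Then the total size of $\mathcal{E}$ is one, so $\mathcal{E}$ consists of the single literal $L$ occurring with multiplicity one in $C$; since a literal that is maximal and occurs only once is strictly maximal by Definition~\ref{def:maximality}, we are done. The step I expect to be the main obstacle is exactly this upgrade from ``maximal in $\mathcal{E}$'' to ``maximal in $C$'': because $\succ$ need not be antisymmetric or irreflexive, one must argue directly from the wording of Definition~\ref{def:maximality} rather than from ``greatest element'', and one has to ensure that stability under grounding substitutions is genuinely available for the literal order and not only for terms and clauses---I would secure this by deriving it from \ref{cond:order:stability-terms} via the multiset encoding, or equivalently by applying \ref{cond:order:stability-clauses} to singleton clauses.
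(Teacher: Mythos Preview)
Your proposal is correct and follows essentially the same approach as the paper: pick $L$ maximal among the literals of $C$ that $\theta$ maps to $L_0$, then show any $K$ with $K\constraint{S}\succeq L\constraint{S}$ must satisfy $K\theta=L_0$ (via stability/order-lifting and totality on ground literals) and hence $K\constraint{S}\preceq L\constraint{S}$ by the choice of $L$. The paper argues this directly rather than by contradiction and packages the stability step as Lemma~\ref{lem:H:order-lifting}, but the content is the same; your remark about deriving literal-level stability from \ref{cond:order:stability-clauses} via singleton clauses is exactly how the paper handles it.
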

\begin{proof}
  By Definition~\ref{def:maximality}, a literal $L$ of a \begin{full}constrained\end{full} clause $C\begin{full}\constraint{S}\end{full}$ is maximal
  if for all $K \in C$ such that $K\begin{full}\constraint{S}\end{full} \succeq L\begin{full}\constraint{S}\end{full}$,
  we have $K\begin{full}\constraint{S}\end{full} \preceq L\begin{full}\constraint{S}\end{full}$.

  Since $L_0 \in C\theta$, there exist literals $L$ in $C\slimfull{}{\constraint{S}}$ such that $L\theta = L_0$.
  Let $L$ be a maximal one among these literals.
  A maximal one must exist because $\succ$ is transitive on \begin{full}constrained\end{full} literals by
  \ref{cond:order:transitive}
  and transitivity implies existence of maximal elements in nonempty finite sets.
  Let $K$ be a literal in $C\slimfull{}{\constraint{S}}$ such that $K\slimfull{}{\constraint{S}} \succeq L\slimfull{}{\constraint{S}}$.
  We must show that $K\slimfull{}{\constraint{S}} \preceq L\slimfull{}{\constraint{S}}$.
  By Lemma~\ref{lem:H:order-lifting}, $K\theta \not\prec L\theta = L_0$.
  By \ref{cond:order:total}, $\succ$ is a total order on ground terms, and thus $K\theta \succeq L_0$.
  By maximality of $L_0$ in $C\theta$, we have $K\theta \preceq L_0$
  and thus $K\theta = L_0$ by \ref{cond:order:total}.
  Then $K\slimfull{}{\constraint{S}} \preceq L\slimfull{}{\constraint{S}}$
  because we chose $L$ to be maximal among all literals in $C\slimfull{}{\constraint{S}}$ such that $L\theta = L_0$.

  For \emph{strict} maximality, we simply observe that if $L$ occurs more than once in $C$,
  it also occurs more than once in $C\theta$.
\end{proof}

\begin{lem}[Lifting of Eligibility]\label{lem:H:eligibility-lifting}
Let $N \subseteq \clausesH$.
\begin{full}
Let $C_\levelPG\closure\theta_\levelPG \in \mapPG{N}$,
let $C_\levelH\closure\theta = \mapPonly^{-1}_{\mapG{N}}(C_\levelPG\closure\theta_\levelPG)$
and let $C_\levelH\constraint{S} = \mapGonly^{-1}_N(C_\levelH\closure\theta)$.
\end{full}%
\begin{slim}
Let $C_\levelG\in\mapG{N}$, let $C_\levelH = \mapGonly^{-1}_N(C_\levelG)$,
and let $\theta$ be a grounding substitution such that $C_\levelG = C_\levelH\theta$.
\end{slim}
  \begin{itemize}
    \item 
    Let $L_{\levelPG}$ be a literal in $C_{\levelPG}\slimfull{}{\closure\theta_\levelPG}$
    that is (strictly) eligible \wrt\ $\slimfull{\mapG{\mathit{hsel},N}}{\mapPG{\mathit{hsel},N}}$.
    Then there exists a literal $L_\levelH$ in $C_\levelH$
    such that $\slimfull{L_\levelG = L_\levelH\theta}{L_\levelPG = L_\levelH\mapp{\theta}}$ and,
    given substitutions $\sigma$ and $\zeta$ with $x\theta = x\sigma\zeta$ for all variables $x$ in $C_\levelH\slimfull{}{\constraint{S}}$,
    $L_\levelH$ is (strictly) eligible in $C_\levelH\slimfull{}{\constraint{S}}$  \wrt\ $\sigma$ and $\mathit{hsel}$.
    \item Let $\slimfull{L_\levelG.s_\levelG.p_\levelG}{L_\levelPG.s_\levelPG.p_\levelPG}$ be a green position of $\slimfull{C_\levelG}{C_\levelPG\closure\theta_\levelPG}$
    that is eligible \wrt\ $\slimfull{\mapG{\mathit{hsel},N}}{\mapPG{\mathit{hsel},N}}$.
    Then there exists 
    a green position $L_\levelH.s_\levelH.p_\levelH$ of $C_\levelH$
    such that
    \begin{itemize}
      \item $\slimfull{L_\levelG = L_\levelH\theta}{L_\levelPG = L_\levelH\mapp{\theta}}$;
      \item $\slimfull{s_\levelG = s_\levelH\theta}{s_\levelPG = s_\levelH\mapp{\theta}}$;
      \item 
      \begin{itemize}
        \item
        $\slimfull{p_\levelG = p_\levelH}{p_\levelPG = p_\levelH}$, or
        \item 
        $\slimfull{p_\levelG = p_\levelH.q}{p_\levelPG = p_\levelH.q}$ for some nonempty $q$,
        the subterm $u_\levelH$ at position $L_\levelH.s_\levelH.p_\levelH$ of $C_\levelH$ is
        \begin{full}not a variable but is\end{full}
        variable-headed, and
        $u_\levelH\theta$ is nonfunctional; and
      \end{itemize}
      \item given substitutions $\sigma$ and $\zeta$ with $x\theta = x\sigma\zeta$ for all variables $x$ in $C_\levelH\slimfull{}{\constraint{S}}$,
      $L_\levelH.s_\levelH.p_\levelH$ is eligible in $C_\levelH\slimfull{}{\constraint{S}}$  \wrt\ $\sigma$ and $\mathit{hsel}$.
    \end{itemize}
  \end{itemize}
\end{lem}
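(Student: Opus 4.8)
The plan is to derive the full statement from the literal claim first, then bootstrap the green-position claim from it, handling the full variant (the two encoding steps $\mapGonly$ followed by $\mapPonly$) and the slim variant ($\mapGonly$ only) uniformly via the identity $C_{\levelPG}\theta_{\levelPG}=C_{\levelH}\theta$, which holds because $C_{\levelPG}=C_{\levelH}\mapp{\theta}$ and $\theta_{\levelPG}=\mapq{\theta}$ by the definition of $\mapPonly$ together with $\mapp{\theta}\mapq{\theta}=\theta$ (Lemma~\ref{lem:G:mapp-comp-mapq}). The same identity also gives $L_{\levelPG}\theta_{\levelPG}=L_{\levelH}\theta$ for any literal $L_{\levelH}\in C_{\levelH}$ with $L_{\levelH}\mapp{\theta}=L_{\levelPG}$.

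\emph{Literal case.} First I use the definition of $\mapPG{\mathit{hsel},N}$ (Definition~\ref{def:H:sel-transfer}): the selected literals of $C_{\levelPG}\closure\theta_{\levelPG}$ are exactly the $\mapp{\theta}$-images of the $\mathit{hsel}$-selected literals of $C_{\levelH}\constraint{S}$, so one side has a selected literal iff the other does. If $C_{\levelH}\constraint{S}$ has a selected literal, then eligibility of $L_{\levelPG}$ in the closure forces $L_{\levelPG}$ to be selected, hence $L_{\levelPG}=L_{\levelH}\mapp{\theta}$ for some $\mathit{hsel}$-selected $L_{\levelH}\in C_{\levelH}$, which is then both eligible and strictly eligible in $C_{\levelH}\constraint{S}$ w.r.t.\ any $\sigma$ by Definition~\ref{def:ho-eligible}. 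Otherwise eligibility of $L_{\levelPG}$ means $L_{\levelPG}\theta_{\levelPG}$ is (strictly) maximal in $C_{\levelPG}\theta_{\levelPG}=C_{\levelH}\theta$; I pick any copy $L_{\levelH}$ of a literal of $C_{\levelH}$ with $L_{\levelH}\mapp{\theta}=L_{\levelPG}$ (one exists since $C_{\levelPG}=C_{\levelH}\mapp{\theta}$) and show $L_{\levelH}\sigma$ is (strictly) maximal in $(C_{\levelH}\constraint{S})\sigma$. For maximality: for every $K\in C_{\levelH}$ we have $K\sigma\zeta=K\theta=(K\mapp{\theta})\theta_{\levelPG}$ with $K\mapp{\theta}\in C_{\levelPG}$, so maximality of $L_{\levelPG}\theta_{\levelPG}$ yields $K\sigma\zeta\preceq L_{\levelPG}\theta_{\levelPG}=L_{\levelH}\theta=L_{\levelH}\sigma\zeta$, and stability under grounding substitutions (\ref{cond:order:stability-clauses}, applied at the level of literals via \ref{cond:order:clause-extension}) rules out $K\sigma\succ L_{\levelH}\sigma$, which is exactly maximality in the sense of Definition~\ref{def:maximality}. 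For strict maximality, a second copy of $L_{\levelH}\sigma$ in $(C_{\levelH}\constraint{S})\sigma$ would, since $\sigma\zeta$ agrees with $\theta$ and $\mapp{\theta}\mapq{\theta}=\theta$, produce a second copy of $L_{\levelPG}\theta_{\levelPG}$ in $C_{\levelPG}\theta_{\levelPG}$, a contradiction. In both cases $L_{\levelH}$ is (strictly) eligible in $C_{\levelH}\constraint{S}$ w.r.t.\ $\sigma$.

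\emph{Green-position case.} Given the eligible green position \slimfull{$L_\levelG.s_\levelG.p_\levelG$}{$L_\levelPG.s_\levelPG.p_\levelPG$}, I apply the literal case to its literal (the strict version when it is positive, the nonstrict version when negative), obtaining $L_{\levelH}=s_{\levelH}\doteq t_{\levelH}\in C_{\levelH}$ with $L_{\levelH}\mapp{\theta}=L_{\levelPG}$ and $L_{\levelH}$ eligible, resp.\ strictly eligible, w.r.t.\ $\sigma$; naming the sides so that $s_{\levelH}\mapp{\theta}=s_{\levelPG}$, the closure condition $s_{\levelPG}\theta_{\levelPG}\succ t_{\levelPG}\theta_{\levelPG}$ becomes $s_{\levelH}\sigma\zeta\succ t_{\levelH}\sigma\zeta$, whence $s_{\levelH}\sigma\not\preceq t_{\levelH}\sigma$ by Lemma~\ref{lem:H:order-lifting}. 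It remains to lift the green position $p_{\levelPG}$ of $s_{\levelPG}=s_{\levelH}\mapp{\theta}$ to a green position $p_{\levelH}$ of $s_{\levelH}$. For this I would establish, by a routine induction on the inductive characterization of green subterms (the remark following Definition~\ref{def:green-subterms}), that a green position of $s\rho$ is either already a green position of $s$ (so that the green subterm at $p_{\levelPG}$ of $s\rho$ is the $\rho$-image of the green subterm at $p_{\levelPG}$ of $s$), or factors as $p_{\levelH}.q$ with $q$ nonempty, where $p_{\levelH}$ is the longest proper prefix that is a green position of $s$ and its subterm $u_{\levelH}$ is variable-headed. In the second case, since substitution preserves types and a term has a nonempty green position only when it is nonfunctional and headed by a symbol or De Bruijn index, $u_{\levelH}$ cannot be a bare variable and $u_{\levelH}\mapp{\theta}$---hence $u_{\levelH}\theta$, of the same type---is nonfunctional, as required. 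Combining this with $s_{\levelH}\sigma\not\preceq t_{\levelH}\sigma$ and the eligibility of $L_{\levelH}$ gives that $L_{\levelH}.s_{\levelH}.p_{\levelH}$ is eligible in $C_{\levelH}\constraint{S}$ w.r.t.\ $\sigma$.

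\emph{Main obstacle.} The delicate step is the maximality argument in the literal case. One cannot simply invoke Lemma~\ref{lem:H:maximality-lifting}: it would produce \emph{some} literal maximal in $(C_{\levelH}\constraint{S})\sigma$, but not necessarily one whose $\mapp{\theta}$-image is $L_{\levelPG}$, since the action of $\mapp{\theta}$ on a literal is not determined by its ground instance alone (it depends on where the variables sit in the literal). The remedy, carried out above, is to show that \emph{every} lift of $L_{\levelPG}$ is already $\sigma$-maximal, which works precisely because each ground instance of a literal of $C_{\levelH}$ arises as $(K\mapp{\theta})\theta_{\levelPG}$ with $K\mapp{\theta}$ a literal of $C_{\levelPG}$. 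The only other nontrivial ingredient is the structural lemma on green positions under substitution, which is routine but must be phrased sharply enough to yield the ``variable-headed, nonfunctional instance'' side condition demanded by the statement.
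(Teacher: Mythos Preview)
Your proposal is correct and structurally matches the paper's proof: both split into a literal case (selected vs.\ maximal) and then derive the green-position case from it by analyzing how green positions of $s_\levelH\mapp{\theta}$ relate to green positions of $s_\levelH$.

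There is one point where you are more careful than the paper. In the nonselected branch the paper simply invokes Lemma~\ref{lem:H:maximality-lifting} (applied to $C_\levelH\sigma$ with grounding $\zeta$) to obtain an $L_\levelH$ with $L_\levelH\sigma$ (strictly) maximal. As you note, that lemma only yields $L_\levelH\theta = L_\levelPG\theta_\levelPG$, not $L_\levelH\mapp{\theta} = L_\levelPG$, and the paper does not argue the latter. Your remedy---fix an arbitrary $\mapp{\theta}$-preimage of $L_\levelPG$ and show directly, via stability under grounding and the ground maximality of $L_\levelPG\theta_\levelPG$, that \emph{every} such preimage is $\sigma$-maximal---closes this cleanly and is the right fix.

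For the green-position part your argument coincides with the paper's. One remark on the ``not a variable'' clause in the full variant: your phrasing leans on a general substitution lemma, but the conclusion actually needs the specific property of $\mapp{\theta}$ that it sends each variable either to a variable (when the ground image is nonfunctional) or to a functional term; in either case the image has no nonempty green position, so a bare-variable $u_\levelH$ is impossible. The paper states exactly this (``$\mapp{\theta}$ maps nonfunctional variables to nonfunctional variables''), and your argument implicitly relies on it too.
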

\begin{proof}
Let $L_{\levelPG}$ be a literal in $\slimfull{C_\levelG}{C_\levelPG\closure\theta_\levelPG}$ that is (strictly) eligible \wrt\ $\slimfull{\mapG{\mathit{hsel},N}}{\mapPG{\mathit{hsel},N}}$.
By the definition of eligibility \slimfull{(Definition~\ref{def:ho-eligible})}{(Definition~\ref{def:closure-eligible})},
there are two ways to be (strictly) eligible:
\begin{itemize}
  \item $L_{\levelPG}$ is selected by $\slimfull{\mapG{\mathit{hsel},N}}{\mapPG{\mathit{hsel},N}}$.
  By Definition~\ref{def:H:sel-transfer},
  there exists a literal $L_\levelH$ selected by $\mathit{hsel}$
  such that $\slimfull{L_\levelG = L_\levelH\theta}{L_\levelPG = L_\levelH\mapp{\theta}}$.
  By Definition~\ref{def:ho-eligible}, $L_\levelH$ is (strictly) eligible in $C_\levelH\slimfull{}{\constraint{S}}$ \wrt\ $\sigma$
  because it is selected.
  \item There are no selected literals in $\slimfull{C_\levelG}{C_\levelPG\closure\theta_\levelPG}$ and
  $L_{\levelPG}\slimfull{}{\theta_\levelPG}$ is (strictly) maximal in $\slimfull{C_\levelG}{C_\levelPG\theta_\levelPG}$.
  By Definition~\ref{def:H:sel-transfer},
  there are no selected literals in $C_\levelH\slimfull{}{\constraint{S}}$.
  Since $\slimfull{C_\levelG}{C_\levelPG\theta_\levelPG} = C_\levelH\theta = C_\levelH\sigma\zeta$,
  by Lemma~\ref{lem:H:maximality-lifting}, 
  there exists a literal
  $L_\levelH\in C_\levelH$ such that $L_\levelH\sigma$ is (strictly) maximal in $C_\levelH\sigma$.
  By Definition~\ref{def:ho-eligible}, $L_\levelH$ is (strictly) eligible in $C_\levelH\constraint{S}$ \wrt\ $\sigma$.
\end{itemize}

For the second part of the lemma,
let $\slimfull{L_\levelG.s_\levelG.p_\levelG}{L_\levelPG.s_\levelPG.p_\levelPG}$ be a green position of $\slimfull{C_\levelG}{C_\levelPG\closure\theta_\levelPG}$
that is eligible \wrt\ $\slimfull{\mapG{\mathit{hsel},N}}{\mapPG{\mathit{hsel},N}}$.
By Definition~\slimfull{\ref{def:ho-eligible}}{\ref{def:closure-eligible}},
the literal $L_{\levelPG}$ is
of the form $s_{\levelPG} \doteq t_{\levelPG}$
with $\slimfull{s_\levelG \succ t_\levelG}{s_\levelPG\theta_\levelPG \succ t_\levelPG\theta_\levelPG}$
and $L_{\levelPG}$ is either
negative and
eligible
or positive and strictly eligible.
By the first part of this lemma,
there exists a literal $L_\levelH$
in $C_\levelH$ that is
either
negative and
eligible
or positive and strictly eligible in $C_\levelH\slimfull{}{\constraint{S}}$
\wrt\ $\sigma$ and $\mathit{hsel}$ such that $\slimfull{L_\levelG = L_\levelH\theta}{L_\levelPG = L_\levelH\mapp{\theta}}$.
Then $L_\levelH$ must be of the form $s_\levelH \doteq t_\levelH$
with $\slimfull{s_\levelG = s_\levelH\theta}{s_\levelPG = s_\levelH\mapp{\theta}}$ and $\slimfull{t_\levelG = t_\levelH\theta}{t_\levelPG = t_\levelH\mapp{\theta}}$.
Since $\slimfull{s_\levelG \succ t_\levelG}{s_\levelPG\theta_\levelPG \succ t_\levelPG\theta_\levelPG}$,
we have $s_\levelH \centernot\preceq t_\levelH$.
By Definition~\ref{def:ho-eligible},
every green position in $L_\levelH.s_\levelH$ is eligible
in $C_\levelH\slimfull{}{\constraint{S}}$  \wrt\ $\sigma$ and $\mathit{hsel}$.

It remains to show that there exists a green position $L_\levelH.s_\levelH.p_\levelH$ in $C_\levelH$ such that
either $p_{\levelPG} = p_\levelH$ or
$p_{\levelPG} = p_\levelH.q$ for some nonempty $q$,
the subterm $u_\levelH$ at position $L_\levelH.s_\levelH.p_\levelH$ of $C_\levelH$
is
\begin{full}not a variable but variable-headed,\end{full}%
\begin{slim}variable-headed,\end{slim}
and $u_\levelH\theta$ is nonfunctional.

Since $p_{\levelPG}$ is a green position of $s_{\levelPG} = s_\levelH\slimfull{\theta}{\mapp{\theta}}$,
position $p_{\levelPG}$ must either be a green position of $s_\levelH$
or be below a variable-headed term in $s_\levelH$.
In the first case, we set $p_\levelH = p_{\levelPG}$.
In the second case, 
\begin{full}let $p_\levelH$ be the position of the variable-headed term.\end{full}%
\begin{slim}let $u_\levelH$ be that variable-headed term and let $p_\levelH$ be its position.\end{slim}
Then $p_\levelH.q = p_{\levelPG}$ for some nonempty $q$.
Moreover, since $p_{\levelPG}$ is a green position of $s_{\levelPG}$,
the subterm of $s_{\levelPG}$ at position $p_\levelH$, which is $u_\levelH\theta$, cannot be functional.
\begin{full}
Since $\mapp{\theta}$ maps nonfunctional variables to nonfunctional variables,
the subterm of $s_\levelH$ at position $p_\levelH$
cannot be a variable, but it is variable-headed.
\end{full}
\end{proof}

\begin{lem}[Lifting Lemma]\label{lem:H:saturation}
  Let $N \subseteq \clausesH$ be saturated up to redundancy \wrt{} $\HInf^{\succ,\mathit{hsel}}$.
  Then $\slimfull{\mapG{N}}{\mapPG{N}}$ is saturated up to redundancy \wrt{} $\PGInf^{\succ, \slimfull{\mapG{\mathit{hsel},N}}{\mapPG{\mathit{hsel},N}}}$.
\end{lem}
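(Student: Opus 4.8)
The plan is the familiar ``lifting'' argument, now carried across the encodings $\mapGonly$ and $\mapPonly$. Let $\iota_\levelPG \in \PGInf^{\succ,\mapPG{\mathit{hsel},N}}$ be an inference whose premises $C_1^\levelPG\closure\theta_1^\levelPG,\dots,C_m^\levelPG\closure\theta_m^\levelPG$ all lie in $\mapPG{N}$ and whose conclusion is $C_{m+1}^\levelPG\closure\theta_{m+1}^\levelPG$; we must show $\iota_\levelPG \in \PGRedI(\mapPG{N})$. First I would use the choice functions $\mapPonly^{-1}_{\mapG{N}}$ and $\mapGonly^{-1}_N$ from Definition~\ref{def:H:sel-transfer} to recover, for each premise $i$, a constrained clause $C_i\constraint{S_i} \in N$ and a grounding substitution $\theta_i$ with $S_i\theta_i$ true such that $C_i^\levelPG = C_i\mapp{\theta_i}$ and $\theta_i^\levelPG = \mapq{\theta_i}$; note $C_i^\levelPG\theta_i^\levelPG = C_i\mapp{\theta_i}\mapq{\theta_i} = C_i\theta_i$ by Lemma~\ref{lem:G:mapp-comp-mapq}. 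Since the premises of an inference are renamed apart, the $\theta_i$ may be regarded as a single grounding substitution $\theta$.

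Second, I would reconstruct a corresponding $\HInf$ inference $\iota_\levelH$ from the $C_i\constraint{S_i}$. The term‑pair equalities enforced by the ground inference $\iota_\levelPG$ translate, via Lemmas~\ref{lem:G:mapp-comp-mapq} and~\ref{lem:G:mapp-comp-subst}, into the statement that $\theta$ (restricted to the relevant variables) is a unifier of the pertinent constraint set; hence, by the completeness clause of Definition~\ref{def:csu-upto} (resp.\ Definition~\ref{def:csu} for the rules using $\csu$), there is a representative $(\sigma,U)$ in the complete set of unifiers up to constraints and a grounding substitution $\rho$ with $U\rho$ true and $x\sigma\rho = x\theta$ on all premise variables. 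Taking $\theta_{m+1}$ to be the grounding induced on the conclusion, so that the ground clause $C_{m+1}\theta_{m+1}$ is exactly the clause underlying $C_{m+1}^\levelPG\closure\theta_{m+1}^\levelPG$, the remaining side conditions of $\iota_\levelH$ — a rewritten position being ``not a variable'', being a green subterm, eligibility, and strict maximality — follow from the Lifting of Eligibility lemma (Lemma~\ref{lem:H:eligibility-lifting}) together with Lemmas~\ref{lem:H:order-lifting} and~\ref{lem:H:maximality-lifting}. As usual, the case of Lemma~\ref{lem:H:eligibility-lifting} in which the ground inference acts at a position properly below an applied variable forces me to lift to the corresponding \emph{fluid} rule (\FluidSup, \FluidExt, \FluidBoolHoist, or \FluidLoobHoist), whose fresh variable $z$ absorbs the context that is present at level $\levelPG$ but not at $\levelH$; $\PGDiff$ lifts to $\Diff$, and the other rules lift to their evident $\HInf$ counterparts. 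Each rule is treated in turn, the checks being routine once the fluid/non‑fluid split is made.

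Third, I would invoke saturation. After applying $\mapFonly$, the ground instances of $\iota_\levelH$ under $(\theta_1,\dots,\theta_{m+1})$ are precisely the $\FInf$ inference underlying $\iota_\levelPG$ (up to the \infname{F}/\infname{Fluid} rule‑name correspondence), as one sees by combining the bookkeeping behind Lemma~\ref{lem:PF:inference-to-F-inference} with the definitions of $\mapPonly$, $\mapIonly$, $\mapFonly$; hence $\iota_\levelH$ is rooted in $\FInf$ for this tuple (Definition~\ref{def:fol-inferences}). Since $N$ is saturated, $\iota_\levelH \in \HRedI(N)$, so for every confluent rewrite system $R$ oriented by $\succ_{\mapIonly\mapFonly}$ whose only Boolean normal forms are $\itrue,\ifalse$ and for which $C_{m+1}\closure\theta_{m+1}$ is variable‑irreducible, $R \cup O \modelsolam \mapF{C_{m+1}\theta_{m+1}}$, with $O = \irred_R(\fipg{N})$ when $\iota_\levelH$ is a $\Diff$ inference and $O = \{E \in \irred_R(\fipg{N}) \mid E \prec_{\mapIonly\mapFonly} \mapF{C_m\theta_m}\}$ otherwise. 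I would then read this off as $\iota_\levelPG \in \PGRedI(\mapPG{N})$: condition~\ref{cond:PG:RedI:FInf-not-valid} of Definition~\ref{def:PG:RedI} is unavailable because the $\FInf$ inference under $\iota_\levelPG$ is valid, so I verify condition~\ref{cond:PG:RedI:entailed}. Here $\mapPG{N} = \mapP{\mapG{N}}$ gives $\mapF{\mapI{\mapPG{N}}} = \fipg{N}$, so the sets $O$ match; Lemma~\ref{lem:fo-eq-tfip} identifies $\mapF{C_{m+1}\theta_{m+1}}$ with $\mapT{\mapF{\mapI{C_{m+1}^\levelPG\closure\theta_{m+1}^\levelPG}}}$, and since $\modelsolam$ on closures depends only on the underlying ground clause via $\mapTonly$, the entailment is exactly the one demanded; variable‑irreducibility of the $\levelPG$ conclusion closure is literally the same condition on $\mapF{\mapI{\cdot}}$; and the main‑premise comparison agrees because $C_m^\levelPG\theta_m^\levelPG = C_m\theta_m$ and $\mapF{\mapI{C_m\theta_m}} = \mapF{C_m\theta_m}$ by Lemma~\ref{lem:PG:mapI-mapF-fol}, with the $\Diff$/$\PGDiff$ alternative aligning.

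The main obstacle I expect is the second step: faithfully reconstructing $\iota_\levelH$ — in particular deciding when a fluid rule is required, and checking that all its side conditions (unification, order, eligibility, the ``not a variable'' restrictions) lift simultaneously — and then confirming in the third step that this $\iota_\levelH$ is genuinely rooted in $\FInf$ for the recovered substitution tuple with matching rule names. This is where the auxiliary variables introduced by $\mapponly$, the interaction of $\csuupto$ (resp.\ $\csu$) with the given ground unifier, and the fluid/non‑fluid rule correspondence all have to be reconciled; the remaining bookkeeping (matching the sets $O$ and the variable‑irreducibility notions across the encodings) is comparatively mechanical.
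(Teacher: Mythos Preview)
Your plan is essentially the paper's proof, and the rule-by-rule reconstruction (including the fluid/non-fluid split driven by Lemma~\ref{lem:H:eligibility-lifting}) is correct.

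There is one point you have under-sold. You write that ``variable-irreducibility of the $\levelPG$ conclusion closure is literally the same condition on $\mapF{\mapI{\cdot}}$''. That is true by Definition~\ref{def:PG:irred}, but it is \emph{not} what you need: $\HRedI$ asks for variable-irreducibility of the $\levelG$ closure $C_{m+1}\closure\theta_{m+1}$, which by Definition~\ref{def:G:irred} means variable-irreducibility of $\mapF{\mapI{\mapP{C_{m+1}\closure\theta_{m+1}}}}$. In general $\mapP{C_{m+1}\closure\theta_{m+1}}$ and the actual $\levelPG$ conclusion $E\closure\xi$ are different closures (same ground clause, different clause part), especially for the fluid rules where the $\levelH$ conclusion has been instantiated by~$\sigma$. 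The paper makes the connection explicit: it records the relationship $C_{m+1}\mapp{\theta_{m+1}} = E\pi$ with $x\pi\mapq{\theta_{m+1}} = x\xi$ (obtained via Lemma~\ref{lem:G:mapp-comp-subst}), and then invokes Lemma~\ref{lem:instance-varirred} to conclude that variable-irreducibility of $E\closure\xi$ implies variable-irreducibility of $C_{m+1}\mapp{\theta_{m+1}}\closure\mapq{\theta_{m+1}} = \mapP{C_{m+1}\closure\theta_{m+1}}$. You should not file this under ``comparatively mechanical'' bookkeeping; it is the reason Lemma~\ref{lem:instance-varirred} exists, and without it the third step does not go through.
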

\begin{slim}
\begin{proof}
Let $\iota_\levelG$ be a $\PGInf$ inference from $\mapG{N}$.
We must show that $\iota_\levelPG \in \GRedI(\mapG{N})$.
By Definitions~\ref{def:PG:RedI}, \ref{def:IPG:RedI}, \ref{def:PF:RedI},
and Lemma~\ref{lem:PG:mapI-mapF-fol},
it suffices to show that
\begin{align*}
  \mapF{\mapG{N}} &\modelsolam \mapF{\concl(\iota_\levelG)}&&\text{ if $\iota_\levelG$ is a $\Diff$ inference, or}\\
  \{ E \in \mapF{\mapG{N}} \mid E \prec_\mapFonly \mapF{\mprem(\iota_\levelG)} \}&\modelsolam \mapF{\concl(\iota_\levelG)}
  &&\text{ if $\iota_\levelG$ is some other inference.}\tag{*}
\end{align*}
One strategy that we will apply below is to construct
an inference $\iota_\levelH$
such that the prefixes of $\iota_\levelH$ and $\iota_\levelG$ match up to the prefixes $\infname{G}$ and $\infname{Fluid}$
and to construct substitutions
$\theta_1, \dots, \theta_{m+1}$
such that
applying $\theta_1, \dots, \theta_{m}$ to $\prems(\iota_\levelH)$
yields $\prems(\iota_\levelG)$ and
applying $\theta_{m+1}$ to $\concl(\iota_\levelH)$
yields $\concl(\iota_\levelG)$. ($**$)

By
Lemmas~\ref{PG:lem:mapI-inference} and~\ref{IPG:lem:mapF-inference},
$\mapF{\mapI{\iota_\levelG}}$ is a valid $\PFInf^{\succ_{\mapIonly\mapFonly},\mapF{\mapI{\mapG{\mathit{hsel},N}}}}$ inference,
By Lemma~\ref{lem:PG:mapI-mapF-fol},
$\succ_{\mapIonly\mapFonly} = \succ_{\mapFonly}$ and
$\mapF{\mapI{\mapG{\mathit{hsel},N}}} = \mapF{{\mapG{\mathit{hsel},N}}}$.
Moreover, Lemma~\ref{lem:PG:mapI-mapF-fol} tells us that
$\mapF{\mapI{\iota_\levelG}}$ can also be obtained by applying $\mapFonly$ directly to
premisses and conclusion of $\iota_\levelG$.
Therefore, 
$\iota_\levelH$ is rooted in $\FInf$ for $(\theta_1, \dots, \theta_{m+1})$  (Definition~\ref{def:fol-inferences}).
By saturation of $N$ up to redundancy \wrt\ $\HInf{}$,
$\iota_\levelH$ is redundant and thus ($*$) by Definition~\ref{def:H:RedI}.

\medskip

\noindent{\GSup}:\enskip
Assume that $\iota_\levelG$ is a $\GSup$ inference
\[\namedinference{\IGSup}
{\overbrace{D'_\levelG \llor { t_\levelG \ceq t'_\levelG}}^{\vphantom{\cdot}\smash{D_\levelG}} \hypsep
 \greensubterm{C_\levelG}{t_\levelG}}
{D'_\levelG \llor \greensubterm{C_\levelG}{t'_\levelG}}\]
Let $C_\levelH = \mapGonly_{N}^{-1}(C_\levelG)$,
and let $\theta$ be a grounding substitution such that $C_\levelG = C_\levelH\theta$.
By condition~\REF{4} of $\GSup$, the position $L_\levelG.s_\levelG.p_\levelG$ of $t_\levelG$ is eligible in $C_\levelG$.
By Lemma~\ref{lem:H:eligibility-lifting},
there exists 
a green position $L_\levelH.s_\levelH.p_\levelH$ of $C_\levelH$
such that
\begin{itemize}
  \item $L_\levelG = L_\levelH\theta$;
  \item $s_\levelG = s_\levelH\theta$;
  \item given substitutions $\sigma$ and $\zeta$ with $x\theta = x\sigma\zeta$ for all variables $x$ in $C_\levelH\slimfull{}{\constraint{S}}$,
  $L_\levelH.s_\levelH.p_\levelH$ is eligible in $C_\levelH$  \wrt\ $\sigma$ and $\mathit{hsel}$
  (condition~\ref{sup:five} of $\Sup$ or $\FluidSup$);
\end{itemize}
and one of the following cases applies:
\medskip

\noindent\textsc{Case 1:}\enskip
$p_\levelG = p_\levelH$.

\medskip

\noindent\textsc{Case 1a:}\enskip
The subterm at position $L_\levelH.s_\levelH.p_\levelH$ of $C_\levelH$ is a variable $x$
and no occurrence of $x$ is inside of a parameter in $C_\levelH$.

Since free De Bruijn indices cannot occur in parameters,
$x$ does not occur in parameters in
$C_\levelH\theta[x \mapsto x]$ either.
By condition~\REF{1} of $\GSup$, $t_\levelG$ is nonfunctional and so $x$ is nonfunctional.
Thus, all occurrences of $x$ in $C_\levelH\theta[x \mapsto x]$
are in green positions.
Thus, $C_\levelH\theta[x \mapsto t'_\levelG]$
results from $C_\levelH\theta = C_\levelG$ by replacing $t_\levelG$ with $t'_\levelG$
at green positions.
Thus, by the $\termsPG$-analogue of Lemma~\ref{lem:IPG:mapF-green-subterms},
$\mapF{C_\levelG}$ results from $\mapF{C_\levelH\theta[x \mapsto t'_\levelG]}$ by replacing
$\mapF{t_\levelG}$ with $\mapF{t'_\levelG}$.
Therefore,
$\{\mapF{t_\levelG} \ceq \mapF{t'_\levelG}, \mapF{C_\levelH\theta[x \mapsto t'_\levelG]}\} \modelsolam \mapF{C_\levelG}$
and thus
\[\{\mapF{D_\levelG}, \mapF{C_\levelH\theta[x \mapsto t'_\levelG]}\} \modelsolam \mapF{D'_\levelG \llor \greensubterm{C_\levelG}{t'_\levelG}}\]
By condition~\REF{3} of $\GSup$, $D_\levelG \prec \mprem(\iota_\levelG)$,
and thus $\mapF{D_\levelG} \prec_\mapFonly \mapF{\mprem(\iota_\levelG)}$.
By Condition~\REF{2} of $\GSup$, $t_\levelG \succ t'_\levelG$,
and thus by \ref{cond:PF:order:comp-with-contexts},
$\mapF{\mprem(\iota_\levelG)} = \mapF{C_\levelH\theta[x \mapsto t_\levelG]} \succ_\mapFonly \mapF{C_\levelH\theta[x \mapsto t'_\levelG]}$.
Therefore,
\[\{ E \in \mapF{\mapG{N}} \mid E \prec_\mapFonly \mapF{\mprem(\iota_\levelG)} \}\modelsolam \mapF{\concl(\iota_\levelG)}\]
and thus ($*$).

\medskip

\noindent\textsc{Case 1b:}\enskip
The subterm $u_\levelH$ at position $L_\levelH.s_\levelH.p_\levelH$ of $C_\levelH$ is not a variable
or there exists an occurrence of that variable inside of a parameter in $C_\levelH$.

Then we construct a $\Sup$ inference $\iota_\levelH$ from 
$D_\levelH = \mapGonly_{N}^{-1}(D_\levelG)$,
and
$C_\levelH = \mapGonly_{N}^{-1}(C_\levelG)$.
Let $\rho$ be the grounding substitution such that $D_\levelG = D_\levelH\rho$ and
let $\theta$ be the grounding substitution such that $C_\levelG = C_\levelH\theta$.

The assumption of this cases matches exactly condition~\REF{2} of $\Sup$.

Condition~\REF{5} of $\GSup$ states that $t_\levelG \ceq t'_\levelG$ is strictly eligible in $D_\levelG$.
Let $D_\levelH = D'_\levelH \llor t_\levelH \ceq t'_\levelH$,
where $t_\levelH \ceq t'_\levelH$
is the literal that Lemma~\ref{lem:H:eligibility-lifting} guarantees to be strictly eligible
\wrt\ any suitable $\sigma$ (condition~\ref{sup:six} of $\Sup$),
with $t_\levelH\rho = t_\levelG$ and $t'_\levelH\rho = t'_\levelG$.
Condition~\REF{6} of $\GSup$ states that 
if $t'_\levelG$ is Boolean, then $t'_\levelG = \itrue$.
Thus, there are no selected literals in $D_\levelG$.
By Definition~\ref{def:H:sel-transfer},
it follows that there are no selected literals in $D_\levelH$
(condition~\ref{sup:seven} of $\Sup$).

Since $L_\levelG = L_\levelH\theta$, $s_\levelG = s_\levelH\theta$,
and $p_\levelG = p_\levelH$, we have
$u_\levelH\theta = t_\levelG = t_\levelH\rho$.
Since the variables of $D_\levelH$ and $C_\levelH$ are disjoint,
there exists a subsitution $\theta \cup \rho$
that matches $\theta$ on all variables of $C_\levelH$ and $\rho$ on all variables of $D_\levelH$.
This substitution is a unifier of $t_\levelH \equiv u_\levelH$.
By Definition~\ref{def:csu},
there exists a unifier $\sigma \in \csu(t_\levelH \equiv u_\levelH)$
and a substitution $\zeta$
such that $x\sigma\zeta = x(\theta \cup \rho)$ for all variables $x$ in $C_\levelH$ or $D_\levelH$ (condition~\REF{1} of $\Sup$).

Since $t_\levelG$ is nonfunctional by condition~\REF{1} of $\GSup$,
and since $u_\levelH\sigma\zeta = u_\levelH\theta = t_\levelG$,
$u_\levelH\sigma$ is nonfunctional (condition~\REF{3} of $\Sup$).

By condition~\REF{2} of $\GSup$, $t_\levelH\sigma\zeta = t_\levelG \succ t'_\levelG = t'_\levelH\sigma\zeta$,
and thus by Lemma~\ref{lem:H:order-lifting},
$t_\levelH\sigma \not\preceq t'_\levelH\sigma$ (condition~\REF{4} of $\Sup$).

Moreover, $\concl(\iota_\levelH)\zeta = \concl(\iota_\levelG)$.
Thus, we can apply ($**$).

\medskip

\noindent\textsc{Case 2:}\enskip
$p_\levelG = p_\levelH.q$ for some nonempty $q$, 
the subterm $u_\levelH$ at position $L_\levelH.s_\levelH.p_\levelH$ of $C_\levelH$ is
variable-headed, and $u_\levelH\theta$ is nonfunctional.

\medskip

\noindent\textsc{Case 2a:}\enskip
The subterm at position $L_\levelH.s_\levelH.p_\levelH$ of $C_\levelH$ is a variable
and no occurrence of that variable is inside of a parameter in $C_\levelH$.

Then we can proceed as in Case 1a.

\medskip

\noindent\textsc{Case 2b:}\enskip
The subterm $u_\levelH$ at position $L_\levelH.s_\levelH.p_\levelH$ of $C_\levelH$ is not a variable
or there exists an occurrence of that variable inside of a parameter in $C_\levelH$.

Then we construct a $\FluidSup$ inference $\iota_\levelH$ from 
$D_\levelH = \mapGonly_{N}^{-1}(D_\levelG)$,
and
$C_\levelH = \mapGonly_{N}^{-1}(C_\levelG)$.
Let $\rho$ be the grounding substitution such that $D_\levelG = D_\levelH\rho$ and
let $\theta$ be the grounding substitution such that $C_\levelG = C_\levelH\theta$.

The assumptions of this case imply condition~\REF{2} of $\FluidSup$.
We define $D'_\levelH$, $t_\levelH$, and $t'_\levelH$
in the same way as in Case 1b.
This ensures condition~\REF{6} and condition~\REF{7} of $\FluidSup$.

Let $z$ be a fresh variable (condition~\REF{8} of $\FluidSup$).
Let $v = \lambda\>\greensubterm{(u_\levelH\theta)}{\DB{n}}_q$,
where $n$ is the appropriate De Bruijn index to refer to the initial $\lambda$.
We define $\theta'$
by $z\theta' = v$,
$x\theta' = x\rho$ for all variables $x$ in $D_\levelH$
and $x\theta' = x\theta$ for all other variables $x$.
Then,
$(z\>t_\levelH)\theta' =
v\>(t_\levelH\rho) =
v\>t_\levelG =
\greensubterm{(u_\levelH\theta)}{t_\levelG}_q 
= u_\levelH\theta
= u_\levelH\theta'$.
So $\theta'$ is a unifier of $z\>t_\levelH$ and $u_\levelH$.
Thus, by definition of $\csu$ (Definition~\ref{def:csu}),
there exists a unifier $\sigma \in \csu(z\>t_\levelH \equiv u_\levelH)$ 
and a substitution $\zeta$ such that $x\sigma\zeta = x\theta'_\levelH$ 
for all relevant variables $x$
(condition~\REF{1} of $\FluidSup$).

The assumption of Case 2 tells us that $u_\levelH\theta = u_\levelH\sigma\zeta$ is nonfunctional.
It follows that $u_\levelH\sigma$ is nonfunctional (condition~\REF{3} of $\FluidSup$).

By condition~\REF{2} of $\GSup$, $t_\levelH\sigma\zeta = t_\levelG \succ t'_\levelG = t'_\levelH\sigma\zeta$,
and thus by Lemma~\ref{lem:H:order-lifting},
$t_\levelH\sigma \not\preceq t'_\levelH\sigma$ (condition~\REF{4} of $\FluidSup$).

By condition~\REF{2} of $\GSup$, $t_\levelH\rho = t_\levelG \ne t'_\levelG = t'_\levelH\rho$.
Thus, $(z\>t_\levelH)\sigma\zeta =
v\>(t_\levelH\rho) =
\greensubterm{u_\levelH\theta}{t_\levelH\rho}_q \ne
\greensubterm{u_\levelH\theta}{t'_\levelH\rho}_q =
v\>(t'_\levelH\rho) = (z\>t'_\levelH)\sigma\zeta$.
So,
$(z\>t'_\levelH)\sigma \ne (z\>t_\levelH)\sigma$ (condition~\REF{9} of $\FluidSup$).

Since $z\sigma\zeta = v$ and $v \ne \lambda\>\DB{0}$
because $q$ is nonempty,
we have $z\sigma \ne \lambda\>\DB{0}$
(condition~\REF{10} of $\FluidSup$).

Moreover, $\concl(\iota_\levelH)\zeta 
= (D'_\levelH \llor \greensubterm{C_\levelH}{z\>t'_\levelH}_{p_\levelH})\sigma\zeta
= D'_\levelG \llor \greensubterm{C_\levelG}{\greensubterm{(u_\levelH\theta)}{t'_\levelG}_q}_{p_\levelH}
= D'_\levelG \llor  \greensubterm{C_\levelG}{t'_\levelG}_{p_\levelG}
= \concl(\iota_\levelG)$.
Thus, we can apply ($**$).

\medskip

\noindent{\GEqRes}:\enskip
Assume that $\iota_\levelG$ is a $\GEqRes$ inference
\[\namedinference{\GEqRes}
{\overbrace{C'_\levelG \llor {u_\levelG \cneq u_\levelG}}^{\vphantom{\cdot}\smash{C_\levelG}}}
{C'_\levelG}\]
Let $C_\levelH = \mapGonly_{N}^{-1}(C_\levelG)$,
and let $\theta$ be a grounding substitution such that $C_\levelG = C_\levelH\theta$.
Condition~\REF{1} of $\GEqRes$ states that $u_\levelG \cneq u_\levelG$ is strictly eligible in $C_\levelG$.
Let $C_\levelH = C'_\levelH \llor u_\levelH \cneq u'_\levelH$,
where $u_\levelH \cneq u'_\levelH$
is the literal that Lemma~\ref{lem:H:eligibility-lifting} guarantees to be strictly eligible \wrt\ any suitable $\sigma$
(condition~\REF{2} of $\GEqRes$),
with $u_\levelH\theta = u_\levelG$ and $u'_\levelH\theta = u_\levelG$.
Then $\theta$ is a unifier of $u_\levelH$ and $u'_\levelH$,
and thus there exists a unifier $\sigma \in \csu(u_\levelH \equiv u'_\levelH)$
and a substitution $\zeta$ such that $x\sigma\zeta = x\theta$ for all variables $x$ in $C_\levelH$ (condition~\REF{1} of $\GEqRes$).

Moreover, $\concl(\iota_\levelH)\zeta = \concl(\iota_\levelG)$.
Thus, we can apply ($**$).

\medskip

\noindent{\GEqFact}:\enskip
Assume that $\iota_\levelG$ is a $\GEqFact$ inference
\[\namedinference{\GEqFact}
{\overbrace{C'_\levelG \llor u_\levelG \ceq v'_\levelG \llor u_\levelG \ceq v_\levelG}^{\vphantom{\cdot}\smash{C_\levelG}}}
{C'_\levelG \llor v_\levelG \cneq v'_\levelG \llor u_\levelG \ceq v_\levelG}\]
Let $C_\levelH = \mapGonly_{N}^{-1}(C_\levelG)$,
and let $\theta$ be a grounding substitution such that $C_\levelG = C_\levelH\theta$.
Condition~\REF{1} of $\GEqFact$ states that $u_\levelG \ceq v_\levelG$ is maximal in $C_\levelG$.
Let $u_\levelH \ceq v_\levelH$ be the literal in $C_\levelH$
that Lemma~\ref{lem:H:maximality-lifting} guarantees to be strictly maximal \wrt\ any suitable $\sigma$
(condition~\REF{2} of $\EqFact$),
with $u_\levelH\theta = u_\levelG$ and $v_\levelH\theta = v_\levelG$.
Choose $C'_\levelH$, $u'_\levelH$, and $v'_\levelH$ such that
$C_\levelH = C'_\levelH \llor u'_\levelH \ceq v'_\levelH \llor u_\levelH \ceq v_\levelH$,
$C'_\levelH\theta = C'_\levelPG$,
$u'_\levelH\theta = u_\levelG$, and $v'_\levelH\theta = v'_\levelPG$.

Then $\theta$ is a unifier of $u_\levelH$ and $u'_\levelH$,
and thus there exists a unifier $\sigma \in \csu(u_\levelH \equiv u'_\levelH)$
and a substitution $\zeta$ such that $x\sigma\zeta = x\theta$ for all variables $x$ in $C_\levelH$ (condition~\REF{1} of $\EqFact$).

By condition~\REF{2} of $\GEqFact$, there are no selected literals 
in $C_\levelG$ and thus there are no selected literals in $C_\levelH$ (condition~\REF{3} of $\EqFact$).

By condition~\REF{3} of $\GEqFact$, $u_\levelH\sigma\zeta = u_\levelG \succ v_\levelG = v_\levelH\sigma\zeta$.
By Lemma~\ref{lem:H:order-lifting}, $u_\levelH\sigma \not\preceq v_\levelH\sigma$ (condition~\REF{4} of $\EqFact$).

Moreover, $\concl(\iota_\levelH)\zeta = \concl(\iota_\levelG)$.
Thus, we can apply ($**$).

\medskip

\noindent{\GClausify}:\enskip
Assume $\iota_\levelG$ is a $\GClausify$ inference
\[\namedinference{\GClausify}{\overbrace{C'_\levelG \llor s_\levelG \ceq t_\levelG}^{C_\levelG}}
{C'_\levelG \llor D_\levelG}\]
with $\tau_\levelG$ being the type and $u_\levelG$ and $v_\levelG$ being the terms used for condition~\REF{2}.
Let $C_\levelH = \mapGonly_{N}^{-1}(C_\levelG)$,
and let $\theta$ be a grounding substitution such that $C_\levelG = C_\levelH\theta$.

Condition~\REF{1} of $\GClausify$ is that $s_\levelG \ceq t_\levelG$ is strictly eligible in $C_\levelG$.
Let $C_\levelH = C'_\levelH \llor s_\levelH \ceq t_\levelH$,
where $s_\levelH \ceq t_\levelH$ is the literal that Lemma~\ref{lem:H:eligibility-lifting} guarantees to be strictly eligible
\wrt\ any suitable $\sigma$ (condition~\REF{2} of $\Clausify$),
with $s_\levelH\theta = s_\levelG$ and $t_\levelH\theta = t_\levelG$.

We distinguish two cases:

\noindent\textsc{Case 1:}\enskip
$s_\levelH$ is a variable
and no occurrence of that variable is inside of a parameter in $C_\levelH$.

Then we define substitutions $\theta_{\itruelight}$ and $\theta_{\ifalselight}$
that coincide with $\theta$, except for mapping $s_\levelH$ to $\itrue$ and $\ifalse$, respectively.
Since the semantics of $\modelsolam$ interpret Booleans,
we have $\{\mapF{C_\levelH\theta_{\itruelight}}, \mapF{C_\levelH\theta_{\ifalselight}}\} \modelsolam \mapF{C_\levelH\theta}$
because $s_\levelH$ does not appear in parameters in $C_\levelH$.
By \ref{cond:PF:order:t-f-minimal} and \ref{cond:PF:order:comp-with-contexts}, $\mapF{C_\levelH\theta_{\itruelight}} \prec \mapF{C_\levelH\theta_{\ifalselight}} \prec \mapF{C_\levelH\theta}$.
Therefore, we can apply ($*$).

\medskip

\noindent\textsc{Case 2:}\enskip
$s_\levelH$ is not a variable
or there exists an occurrence of that variable inside of a parameter in $C_\levelH$
(condition~\REF{3} of $\Clausify$).
Then we construct a corresponding $\Clausify$ inference $\iota_\levelH$.

Comparing the listed triples in $\GClausify$ and $\Clausify$,
we see that there must be a triple
$(s'_\levelH, t'_\levelH, D_\levelH)$ listed for $\Clausify$
such that
$(s'_\levelH\rho, t'_\levelH\rho, D_\levelH\rho) = (s_\levelG, t_\levelG, D_\levelG)$
with $\rho = \{\alpha \mapsto \tau_\levelG, x\mapsto u_\levelG,y\mapsto v_\levelG\}$
is the triple used for $\iota_\levelG$ (condition~\REF{4} of $\Clausify$).

Moreover, we observe that $s_\levelH\theta = s_\levelG = s'_\levelH\rho$
and $t_\levelH\theta = t_\levelG = t'_\levelH\rho$.
Thus the substitution $\theta'$ mapping all variables $x$ in
$s'_\levelH$ and $t'_\levelH$ to $x\rho$ 
and all other variables $x$ to $x\theta$ is a unifier of $s_\levelH \equiv s'_\levelH$ and $t_\levelH \equiv t'_\levelH$.
So 
there exists a unifier $\sigma\in\csu(s_\levelH \equiv s'_\levelH, t_\levelH \equiv t'_\levelH)$ (condition~\REF{1} of $\Clausify$)
and a substitution $\zeta$
such that $x\sigma\zeta = x\theta'$ for all variables $x$ in $C_\levelH$.

Moreover, it follows that $\concl(\iota_\levelH)\zeta = \concl(\iota_\levelG)$.
Thus, we can apply ($**$).

\medskip

\noindent{\GBoolHoist}:\enskip
Analogous to $\GSup$, using the substitutions $\theta_{\itruelight}$ and $\theta_{\ifalselight}$
as in Case 1 of $\GClausify$.

\medskip

\noindent{\GLoobHoist}:\enskip
Analogous to $\GBoolHoist$.

\medskip

\noindent{\GFalseElim}:\enskip
Analogous to $\GEqRes$.

\medskip

\noindent{\GArgCong}:\enskip
Analogous to $\GEqRes$.

\medskip

\noindent{\GExt}:\enskip
Assume that $\iota_\levelG$ is a $\GExt$ inference
\begin{align*}
  \namedinference{\IGExt}{\greensubterm{C_\levelG}{u_\levelG}}
  {\greensubterm{C_\levelG}{w_\levelG} \llor u_\levelG\>\diff\typeargs{\tau_\levelG,\upsilon_\levelG}(u_\levelG,w_\levelG)\noteq w_\levelG\>\diff\typeargs{\tau_\levelG,\upsilon_\levelG}(u_\levelG,w_\levelG)}
\end{align*}
Let $C_\levelH = \mapGonly_{N}^{-1}(C_\levelG)$,
and let $\theta$ be a grounding substitution such that $C_\levelG = C_\levelH\theta$.
By condition~\REF{1} of $\GExt$, the position $L_\levelG.s_\levelG.p_\levelG$ of $u_\levelG$ is eligible in $C_\levelG$.
By Lemma~\ref{lem:H:eligibility-lifting},
there exists 
a green position $L_\levelH.s_\levelH.p_\levelH$ of $C_\levelH$
such that
\begin{itemize}
  \item $L_\levelG = L_\levelH\theta$;
  \item $s_\levelG = s_\levelH\theta$;
  \item given substitutions $\sigma$ and $\zeta$ with $x\theta = x\sigma\zeta$ for all variables $x$ in $C_\levelH\slimfull{}{\constraint{S}}$,
  $L_\levelH.s_\levelH.p_\levelH$ is eligible in $C_\levelH$  \wrt\ $\sigma$ and $\mathit{hsel}$
  (condition~\REF{3} of $\Ext$/condition~\ref{fluidext:eligible} of $\FluidExt$);
\end{itemize}
and one of the following cases applies:
\medskip

\noindent\textsc{Case 1:}\enskip
$p_\levelG = p_\levelH$.
Then we construct an $\Ext$ inference $\iota_\levelH$ from $C_\levelH$.

Let $u_\levelH$ be the subterm of $C_\levelH$ at position $L_\levelH.s_\levelH.p_\levelH$.
Since $p_\levelG = p_\levelH$, we have $u_\levelH\theta = u_\levelG$.
By condition~\REF{2} of $\GExt$, $u_\levelG$ is functional and thus
there exists a most general type substitution $\sigma$ such that $u_\levelH\sigma$ is functional (condition~\REF{1} of $\Ext$).
By definition of `most general',
there exists a substitution $\zeta$
such that $x\sigma\zeta = x\theta$ for all variables $x$ in $C_\levelH$.

Let $y$ be a fresh variable of the same type as $u_\levelH\sigma$ (condition~\REF{2} of $\Ext$).
Let $\zeta' = \zeta[y\mapsto w_\levelG]$.
Then, $\concl(\iota_\levelH)\zeta' = \concl(\iota_\levelG)$.
Thus, we can apply ($**$).

\medskip

\noindent\textsc{Case 2:}\enskip
$p_\levelG = p_\levelH.q$ for some nonempty $q$, 
the subterm $u_\levelH$ at position $L_\levelH.s_\levelH.p_\levelH$ of $C_\levelH$ is
variable-headed, and $u_\levelH\theta$ is nonfunctional.

Then we construct a $\FluidSup$ inference $\iota_\levelH$ from
$C_\levelH = \mapGonly_{N}^{-1}(C_\levelG)$.

The assumption of this case implies condition~\ref{fluidext:var} of $\FluidExt$.

Let $\alpha$ and $\beta$ be fresh type variables.
Let $x$ and $y$ be fresh variables of type $\alpha \fun \beta$,
and let $z$ be a fresh variable of function type from $\alpha \fun \beta$ to the type of $u_\levelH$
(condition~\ref{fluidext:fresh} of $\FluidExt$).

Let $v = \lambda\>\greensubterm{(u_\levelH\theta)}{\DB{n}}_q$,
where $n$ is the appropriate De Bruijn index to refer to the initial $\lambda$.
We define a substitution$\theta'$
that coincides with $\theta$ on all variables except
$\alpha\theta' = \tau_\levelG$,
$\beta\theta' = \upsilon_\levelG$,
$z\theta' = v$,
$x\theta' = u_\levelG$, and $y\theta' = w_\levelG$.
Then,
$(z\>x)\theta' =
v\>u_\levelG =
\greensubterm{(u_\levelH\theta)}{u_\levelG}_q
= u_\levelH\theta
= u_\levelH\theta'$.
So $\theta'$ is a unifier of $z\>x$ and $u_\levelH$.
Thus, by definition of $\csu$ (Definition~\ref{def:csu}),
there exists a unifier $\sigma \in \csu(z\>x \equiv u_\levelH)$ 
and a substitution $\zeta$ such that $x\sigma\zeta = x\theta'$ 
for all relevant variables $x$
(condition~\ref{fluidext:csu} of $\FluidExt$).

The assumption of Case 2 tells us that $u_\levelH\theta = u_\levelH\sigma\zeta$ is nonfunctional.
It follows that $u_\levelH\sigma$ is nonfunctional (condition~\ref{fluidext:nonfunctional} of $\FluidExt$).

By condition~\REF{4} of $\GExt$, $u_\levelG \ne w_\levelG$.
Thus, $(z\>x)\sigma\zeta =
v\>u_\levelG =
\greensubterm{(u_\levelH\theta)}{u_\levelG}_q \ne
\greensubterm{(u_\levelH\theta)}{w_\levelG}_q =
v\>w_\levelG = (z\>y)\sigma\zeta$.
So,
$(z\>x)\sigma \ne (z\>y)\sigma$ (condition~\ref{fluidext:csu-not-id} of $\FluidExt$).

Since $z\sigma\zeta = v$ and $v \ne \lambda\>\DB{0}$
because $q$ is nonempty,
we have $z\sigma \ne \lambda\>\DB{0}$
(condition~\ref{fluidext:csu-not-proj} of $\FluidExt$).

Moreover,
\begin{align*}
\concl(\iota_\levelH)\zeta 
&= (\greensubterm{C_\levelH}{z\>y}\llor x\>\diff\typeargs{\alpha,\beta}(x,y)\noteq y\>\diff\typeargs{\alpha,\beta}(x,y))\sigma\zeta\\
&= \greensubterm{C_\levelG}{\greensubterm{(u_\levelH\theta)}{w_\levelG}_q}_{p_\levelH} \llor u_\levelG\>\diff\typeargs{\tau_\levelG,\upsilon_\levelG}(u_\levelG,w_\levelG)\noteq w_\levelG\>\diff\typeargs{\tau_\levelG,\upsilon_\levelG}(u_\levelG,w_\levelG)\\
&= \greensubterm{C_\levelG}{w_\levelG}_{p_\levelG} \llor u_\levelG\>\diff\typeargs{\tau_\levelG,\upsilon_\levelG}(u_\levelG,w_\levelG)\noteq w_\levelG\>\diff\typeargs{\tau_\levelG,\upsilon_\levelG}(u_\levelG,w_\levelG)\\
&= \concl(\iota_\levelG)
\end{align*}
Thus, we can apply ($**$).

\medskip

\noindent{\GDiff}:\enskip
Assume that $\iota_\levelPG$ is a $\GDiff$ inference
\begin{align*}
  \namedinference{\GDiff}{}
  {u\>\diff\typeargs{\tau,\upsilon}(u,w) \noteq u\>\diff\typeargs{\tau,\upsilon}(u,w) \llor u\>s \eq w\>s}
\end{align*}
Then we use the following $\Diff$ inference $\iota_\levelH$:
\begin{gather*}
  \namedinference{\Diff}{}
  {{y\>(\diff\typeargs{\alpha,\beta}(y,z))\cneq z\>(\diff\typeargs{\alpha,\beta}(y,z)) \llor y\>x\ceq z\>x}}
\end{gather*}
Clearly, $\concl(\iota_\levelH)\theta = \concl(\iota_\levelPG)$
for $\theta = \{\alpha \mapsto \tau, \beta \mapsto \upsilon, y \mapsto u, z \mapsto w, x \mapsto s\}$.
Thus, we can apply ($**$).
\end{proof}
\end{slim}
\begin{full}
\begin{proof}
Let $\iota_\levelPG$ be a $\PGInf$ inference from $\mapPG{N}$.
We must show that $\iota_\levelPG \in \PGRedI(\mapPG{N})$.
It suffices to construct a $\HInf$ inference $\iota_\levelH$ from $N$ such that $\iota_\levelH$ and $\iota_\levelPG$ are of the form\strut
\[\namedinference{$\iota_\levelH$}
{C_1\constraint{S_1}\quad\cdots\quad C_m\constraint{S_m}}
{C_{m+1}\constraint{S_{m+1}}}
\quad\quad
\namedinference{$\iota_\levelPG$}
{\mapP{C_1\closure\theta_1}\quad\dots\quad \mapP{C_m\closure\theta_m}}
{E\closure\xi}\tag{$*$}\]
for some $C_1\constraint{S_1},\cdots, C_{m+1}\constraint{S_{m+1}} \in \clausesH$, $E\closure\xi\in\clausesPG$, and 
grounding substitutions
$\theta_1, \dots,\allowbreak \theta_{m+1}$ such that $S_1\theta_1$,\dots,$S_{m+1}\theta_{m+1}$ are true
and $C_{m+1}\mapp{\theta_{m+1}} = E\pi$ and
$x\pi\mapq{\theta_{m+1}} = x\xi$ for some substitution $\pi$ and all variables $x$ in $E$.

Here is why this suffices:
By Definition~\ref{def:PG:RedI},
it suffices to show that
for all confluent term rewrite systems $R$ on $\termsF$ oriented by $\succ_{\mapIonly\mapFonly}$
whose only Boolean normal forms are $\itrue$ and $\ifalse$,
such that $E\closure\xi$ is variable-irreducible \wrt\ $R$, we have
\[
  R \cup O \modelsolam \mapF{\mapI{E\closure\xi}}
\]
where $O = \irred_R(\mapF{\mapI{\mapG{N}}})$ if $\iota_\levelPG$ is a $\Diff$ inference,
and otherwise $O = \{ E \in \irred_R(\mapF{\mapI{\mapG{N}}}) \mid E \prec \mapF{\mapI{\mapP{C_m\closure\theta_m}}} \}$.
Let $R$ be such a rewrite system.
By Lemma~\ref{lem:instance-varirred},
since $E\closure\xi$ is variable-irreducible,
using $E\pi = C_{m+1}\mapp{\theta_{m+1}}$, and
$x\xi = x\pi\mapq{\theta_{m+1}}$,
also
$C_{m+1}\mapp{\theta_{m+1}}\closure\mapq{\theta_{m+1}}$ 
and thus $C_{m+1}\closure\theta_{m+1}$ are variable-irreducible \wrt\ $R$.
By saturation, $\iota_\levelH \in \HRedI(N)$,
and thus, by definition of $\HRedI$, (Definition~\ref{def:H:RedI}),
it suffices to show that $\iota_\levelH$ is rooted in $\FInf{}$ for $(\theta_1, \dots, \theta_{m+1})$
(Definition~\ref{def:fol-inferences}),
which holds by Lemma~\ref{lem:PG:mapI-mapF-fol}, Definition~\ref{def:PG:RedI}
and the fact that 
$C_{m+1}\theta_{m+1} = C_{m+1}\mapp{\theta_{m+1}}\mapq{\theta_{m+1}} =  E\pi\mapq{\theta_{m+1}} = E\xi$.

\smallskip

For most rules, the following special case of ($*$) suffices:
We construct a $\HInf$ inference $\iota_\levelH$ from $N$ such that
$\iota_\levelH$ and $\iota_\levelPG$ are of the form\strut
\[\namedinference{$\iota_\levelH$}
{C_1\constraint{S_1}\quad\cdots\quad C_m\constraint{S_m}}
{C_{m+1}'\sigma\>\constraint{S_{m+1}'}}
\quad\quad
\namedinference{$\iota_\levelPG$}
{\mapP{C_1\closure\theta_1}\quad\cdots\quad \mapP{C_m\closure\theta_m}}
{C_{m+1}'\mapp{\sigma\zeta}\closure\xi}\tag{$**$}\]
for some $C_1\constraint{S_1},\cdots, C_m\constraint{S_m}, C_{m+1}'\constraint{S_{m+1}'} \in \clausesH$ and substitutions
$\theta_1, \dots, \theta_m, \zeta, \sigma, \xi$ such that $S_1\theta_1$,\dots,$S_{m}\theta_{m}$, $S_{m+1}'\zeta$ are true
and $x\xi =x\mapq{\sigma\zeta}$ for all variables $x$ in $C_{m+1}'\mapp{\sigma\zeta}$.

Here is why this is a special case of ($*$) with
$C_{m+1} = C_{m+1}'\sigma$,
$S_{m+1} = S_{m+1}'$,
$E = C_{m+1}'\mapp{\sigma\zeta}$:
By Lemma~\ref{lem:G:mapp-comp-subst}, 
there exists a substitution $\pi$ such that
$\sigma\mapp{\zeta} = \mapp{\sigma\zeta}\pi$
and
$\mapq{\sigma\zeta} = \pi\mapq{\zeta}$.
Thus,
$C_{m+1}\mapp{\zeta} = C_{m+1}'\sigma\mapp{\zeta} = C_{m+1}'\mapp{\sigma\zeta}\pi = E\pi$
and
$x\pi\mapq{\zeta} = x\mapq{\sigma\zeta} = x\xi$ for all variables $x$ in $E$,
as required for ($*$).

\medskip

\noindent{\PGSup}:\enskip
If $\iota_\levelPG$ is a $\PGSup$ inference
\[\namedinference{\PGSup}
{\overbrace{D'_\levelPG \llor { t_\levelPG \ceq t'_\levelPG}}^{\vphantom{\cdot}\smash{D_\levelPG}} \closure\> \rho_\levelPG \hypsep
 \greensubterm{C_\levelPG}{u_\levelPG}\closure\theta_\levelPG}
{D'_\levelPG \llor \greensubterm{C_\levelPG}{t'_\levelPG}\>\closure(\rho_\levelPG\cup\theta_\levelPG)}\]
then we construct a corresponding $\Sup$ or $\FluidSup$ inference $\iota_\levelH$.
Let $D_\levelH\closure\rho_\levelH = \mapPonly_{\mapG{N}}^{-1}(D_\levelPG\closure\rho_\levelPG)$
and $C_\levelH\closure\theta_\levelH = \mapPonly_{\mapG{N}}^{-1}(C_\levelPG\closure\> \theta_\levelPG)$.
(See Definition~\ref{def:H:sel-transfer} for the definition of $\mapPonly_{\mapG{N}}^{-1}$.)
Let $D_\levelH\constraint{T} = \mapGonly_N^{-1}(D_\levelH\closure\> \rho_\levelH)$
and $C_\levelH\constraint{S} = \mapGonly_N^{-1}(C_\levelH\closure\> \theta_\levelH)$.
(See Definition~\ref{def:H:sel-transfer} for the definition of $\mapGonly_N^{-1}$.)
These clauses $D_\levelH\constraint{T}$ and $C_\levelH\constraint{S}$ will be the premises of $\iota_\levelH$.
Condition~\REF{7} of $\PGSup$ states that $t_\levelH \ceq t'_\levelH$ is strictly eligible in $D_\levelPG\closure\rho_\levelPG$.
Let $D_\levelH = D'_\levelH \llor t_\levelH \ceq t'_\levelH$,
where $t_\levelH \ceq t'_\levelH$
is the literal that Lemma~\ref{lem:H:maximality-lifting} guarantees to be strictly eligible
with $t_\levelH\mapp{\theta_\levelH} = t_\levelPG$ and $t'_\levelH\mapp{\theta_\levelH} = t'_\levelPG$.
Condition~\REF{8} of $\PGSup$ states that 
if $t'_\levelPG\rho_\levelPG$ is Boolean, then $t'_\levelPG\rho_\levelPG = \itrue$.
Thus, there are no selected literals in $D_\levelPG\closure\rho_\levelPG$.
By Definition~\ref{def:H:sel-transfer},
it follows that there are no selected literals in $D_\levelH\constraint{T}\closure\rho_\levelH$
(condition~\ref{sup:seven} of $\Sup$ or $\FluidSup$)
and that $t_\levelH \ceq t'_\levelH$ is strictly maximal
(condition~\ref{sup:six} of $\Sup$ or $\FluidSup$).
Let $L_\levelPG.s_\levelPG.p_\levelPG$ be the position of the green subterm $u_\levelPG$ in $C_\levelPG$.
Condition~\REF{6} of $\PGSup$ states that 
$L_\levelPG.s_\levelPG.p_\levelPG$ is eligible in $C_\levelPG\closure\theta_\levelPG$.
Let $L_\levelH.s_\levelH.p_\levelH$ be the position in $C_\levelH$
that Lemma~\ref{lem:H:eligibility-lifting} guarantees to be eligible in $C_\levelH\constraint{S}$ \wrt\ any suitable $\sigma$
(condition~\ref{sup:five} of $\Sup$ or $\FluidSup$).
Let $u_\levelH$ be the subterm of $C_\levelH$ at position $L_\levelH.s_\levelH.p_\levelH$.
By Lemma~\ref{lem:H:eligibility-lifting}, one of the following cases applies:

\medskip

\noindent\textsc{Case 1:}\enskip $p_\levelPG = p_\levelH$.
Then we construct a $\Sup$ inference.

Lemma~\ref{lem:H:eligibility-lifting} tells us
that $s_\levelPG = s_\levelH\mapp{\theta_\levelH}$ and thus
$u_\levelPG = u_\levelH\mapp{\theta_\levelH}$ because $p_\levelPG = p_\levelH$.
By condition~\REF{1} of $\PGSup$, $t_\levelPG\rho_\levelPG = u_\levelPG\theta_\levelPG$.
It follows that
$t_\levelH\rho_\levelH = t_\levelH \mapp{\rho_\levelH}\mapq{\rho_\levelH} =
t_\levelPG\rho_\levelPG = u_\levelPG\theta_\levelPG =
u_\levelH \mapp{\theta_\levelH}\mapq{\theta_\levelH} = u_\levelH\theta_\levelH$.
Let $\rho_\levelH \cup \theta_\levelH$ be the substitution that coincides with 
$\rho_\levelH$ on all variables in $D_\levelH\constraint{T}$ and 
with $\theta_\levelH$ on all other variables.
Then $\rho_\levelH \cup \theta_\levelH$ is a unifier of $t_\levelH$ and $u_\levelH$.
Moreover, by construction, $T\rho_\levelH$ and $S\theta_\levelH$ are true.
Thus, by definition of $\csuupto$ (Definition~\ref{def:csu-upto}),
there exists $(\sigma, U) \in \csuupto(T,S,t_\levelH \equiv u_\levelH)$ 
(condition~\ref{sup:one} of $\Sup$)
and a substitution $\zeta$ such that
$U\zeta$ is true and
$x\sigma\zeta = x(\rho_\levelH \cup \theta_\levelH)$ 
for all relevant variables $x$.

Conditions~\REF{2} and \REF{3} of $\PGSup$ state
that $u_\levelPG = u_\levelH\mapp{\theta_\levelH}$ is not a variable
and nonfunctional.
Thus, by definition of $\mapponly$,
$u_\levelH$ is not a variable and $u_\levelH\sigma$ is nonfunctional
(conditions \ref{sup:two}~and~\ref{sup:three} of $\Sup$).

By Lemma~\ref{lem:H:order-lifting},
the condition $(t_\levelH\constraint{T})\not\preceq(t'_\levelH\constraint{T})$
(condition~\ref{sup:four} of $\Sup$)
follows from condition~\REF{4} of $\PGSup$.

Finally, the given inference $\iota_\levelPG$ and 
the constructed inference $\iota_\levelH$ are of the form ($**$) with
$C_1\constraint{S_1} = D_\levelH\constraint{T}$,
$C_2\constraint{S_2} = C_\levelH\constraint{S}$,
$C_3' = D'_\levelH \llor \greensubterm{C_\levelH}{t'_\levelH}_{p_\levelH}$
$S_3' = U$,
$\theta_1 = \rho_\levelH$,
$\theta_2 = \theta_\levelH$, and
$\xi = \rho_\levelPG \cup \theta_\levelPG$.

\medskip

\noindent\textsc{Case 2:}\enskip $p_\levelPG = p_\levelH.q$ for some nonempty $q$,
$u_\levelH$ is not a variable but is variable-headed,
and $u_\levelH\theta$ is nonfunctional.
Then we construct a $\FluidSup$ inference.

Lemma~\ref{lem:H:eligibility-lifting} tells us
that $s_\levelPG = s_\levelH\mapp{\theta_\levelH}$.
Thus, the subterm of $s_\levelPG$ at position $p_\levelH$ is
$u_\levelH\mapp{\theta_\levelH}$.
So $q$ is a green position of
$u_\levelH\mapp{\theta_\levelH}$,
and the subterm at that position is $u_\levelPG$---i.e.,
$u_\levelH\mapp{\theta_\levelH} = \greensubterm{(u_\levelH\mapp{\theta_\levelH})}{u_\levelPG}_q$.
Let $v = \lambda\>\greensubterm{(u_\levelH\mapp{\theta_\levelH})}{\DB{n}}_q$,
where $n$ is the appropriate De Bruijn index to refer to the initial $\lambda$.

Let $z$ be a fresh variable (condition~\REF{8} of $\FluidSup$).
We define $\theta'_\levelH$
by $z\theta'_\levelH = v\theta_\levelPG$,
$x\theta'_\levelH = x\rho_\levelH$ for all variables $x$ in $D_\levelH\constraint{T}$
and $x\theta'_\levelH = x\theta_\levelH$ for all other variables $x$.
Then, using condition~\REF{1} of $\PGSup$, $z\>t_\levelH\theta'_\levelH =
v\theta_\levelPG\>(t_\levelH\rho_\levelH) =
v\theta_\levelPG\>(t_\levelPG\rho_\levelPG) =
v\theta_\levelPG\>(u_\levelPG\theta_\levelPG) =
(v\>u_\levelPG)\theta_\levelPG = 
\greensubterm{(u_\levelH\mapp{\theta_\levelH})}{u_\levelPG}_q \theta_\levelPG
= u_\levelH\mapp{\theta_\levelH}\theta_\levelPG
= u_\levelH\theta_\levelH
= u_\levelH\theta'_\levelH$.
So $\theta'_\levelH$ is a unifier of $z\>t_\levelH$ and $u_\levelH$.
Thus, by definition of $\csu$ (Definition~\ref{def:csu}),
there exists a unifier $\sigma \in \csu(z\>t_\levelH \equiv u_\levelH)$ 
and a substitution $\zeta$ such that $x\sigma\zeta = x\theta'_\levelH$ 
for all relevant variables $x$
(condition~\REF{1} of $\FluidSup$).

By the assumption of this case, 
$u_\levelH$ is not a variable but is variable-headed (condition~\REF{2} of $\FluidSup$).

Since $q$ is a green position of
$u_\levelH\mapp{\theta_\levelH}$, the type of $u_\levelH\mapp{\theta_\levelH}$ and
the type of $u_\levelH\sigma$
is nonfunctional (condition~\REF{3} of $\FluidSup$).

By Lemma~\ref{lem:H:order-lifting},
the condition $t_\levelH\constraint{T}\not\preceq t'_\levelH\constraint{T}$
(condition~\REF{4} of $\FluidSup$)
follows from condition~\REF{4} of $\PGSup$.

By condition~\REF{4} of $\PGSup$, $t_\levelH\rho_\levelH = t_\levelPG\rho_\levelPG \ne t'_\levelPG\rho_\levelPG = t'_\levelH\rho_\levelH$.
Thus, $(z\>t_\levelH)\sigma\zeta =
v\theta_\levelPG\>(t_\levelH\rho_\levelH) =
\greensubterm{(u_\levelH\mapp{\theta_\levelH})\theta_\levelPG}{t_\levelH\rho_\levelH}_q \ne
\greensubterm{(u_\levelH\mapp{\theta_\levelH})\theta_\levelPG}{t'_\levelH\rho_\levelH}_q =
v\theta_\levelPG\>(t'_\levelH\rho_\levelH) = (z\>t'_\levelH)\sigma\zeta$.
So,
$(z\>t'_\levelH)\sigma \ne (z\>t_\levelH)\sigma$ (condition~\REF{9} of $\FluidSup$).

Since $z\sigma\zeta = v\theta_\levelPG$ and $v\theta_\levelPG \ne \lambda\>\DB{0}$
because $q$ is nonempty,
we have $z\sigma \ne \lambda\>\DB{0}$
(condition~\REF{10} of $\FluidSup$).

The inferences $\iota_\levelPG$ and $\iota_\levelH$ are of the form ($*$) with
$C_1\constraint{S_1} = D_\levelH\constraint{T}$,
$C_2\constraint{S_2} = C_\levelH\constraint{S}$,
$C_3\constraint{S_3} = (D'_\levelH \llor \greensubterm{C_\levelH}{z\>t'_\levelH}_{p_\levelH}\constraint{T,S})\sigma$,
$\theta_1 = \rho_\levelH$,
$\theta_2 = \theta_\levelH$,
$\theta_3 = \zeta$,
$E = D'_\levelPG\llor\greensubterm{C_\levelPG}{t'_\levelPG}_{p_\levelPG}$, and
$\xi = \rho_\levelPG\cup\theta_\levelPG$.
In the following, we elaborate why
$C_3\mapp{\theta_3} = E\pi$ and
$x\pi\mapq{\theta_3} = x\xi$ for 
some substitution $\pi$ and all variables $x$ in $E$,
as required for ($*$).
We invoke Lemma~\ref{lem:G:mapp-comp-subst} to
obtain a substitution $\pi$ such that
$\sigma\mapp{\zeta} = \mapp{\sigma\zeta}\pi$
and
$\mapq{\sigma\zeta} = \pi\mapq{\zeta}$.
Then
\begin{align*}
  C_3\mapp{\theta_3}
  &= (D'_\levelH \llor \greensubterm{C_\levelH}{z\>t'_\levelH}_{p_\levelH})\sigma\mapp{\zeta}\\
  &=(D'_\levelH\sigma\mapp{\zeta} \llor \greensubterm{C_\levelH\sigma\mapp{\zeta}}{(z\>t'_\levelH)\sigma\mapp{\zeta}}_{p_\levelH})\\
  &\stackrel{(1)}{=}(D'_\levelH\sigma\mapp{\zeta} \llor \greensubterm{C_\levelH\sigma\mapp{\zeta}}{\greensubterm{(z\>t_\levelH)\sigma\mapp{\zeta}}{t'_\levelH\sigma\mapp{\zeta}}_q}_{p_\levelH})\\
  &\stackrel{(2)}{=}(D'_\levelH\sigma\mapp{\zeta} \llor \greensubterm{C_\levelH\sigma\mapp{\zeta}}{\greensubterm{u_\levelH\sigma\mapp{\zeta}}{t'_\levelH\sigma\mapp{\zeta}}_q}_{p_\levelH})\\
  &\stackrel{(3)}{=}(D'_\levelH\sigma\mapp{\zeta} \llor \greensubterm{C_\levelH\sigma\mapp{\zeta}}{t'_\levelH\sigma\mapp{\zeta}}_{p_\levelPG})\\
  &=(D'_\levelH\mapp{\sigma\zeta}\pi \llor \greensubterm{C_\levelH\mapp{\sigma\zeta}\pi}{t'_\levelH\mapp{\sigma\zeta}\pi}_{p_\levelPG})\\
  &=(D'_\levelH\mapp{\sigma\zeta} \llor \greensubterm{C_\levelH\mapp{\sigma\zeta}}{t'_\levelH\mapp{\sigma\zeta}}_{p_\levelPG})\pi\\
  &= (D'_\levelPG\llor\greensubterm{C_\levelPG}{t'_\levelPG}_{p_\levelPG})\pi\\
  &= E\pi
\end{align*}
Step (1) can be justified as follows: By Lemma~\ref{lem:G:mapp-comp-mapq},
$z\sigma\mapp{\zeta}\mapq{\zeta} = v\theta_\levelPG$.
Since $z\sigma\mapp{\zeta}$ and $v\theta_\levelPG$ contain only nonfunctional variables,
$z\sigma\mapp{\zeta}$ must be a $\lambda$-abstraction
whose $\lambda$ binds exactly one De Bruijn index,
which is located at orange position $1.q$ \wrt\ $\downarrow_{\beta\eta\mathrm{long}}$.
Thus, $(z\>t_\levelH)\sigma\mapp{\zeta}$
and $(z\>t'_\levelH)\sigma\mapp{\zeta}$
are identical up to the subterm at green position $q$,
which is $t_\levelH\sigma\mapp{\zeta}$ and $t'_\levelH\sigma\mapp{\zeta}$ respectively.
For step (2), we use that $(z\>t_\levelH)\sigma = u_\levelH\sigma$.
For step (3), we use that $u_\levelH$ is the green subterm of $C_\levelH$ at position $p_\levelH$.

\medskip

\noindent{\PGEqRes}:\enskip
If $\iota_\levelPG$ is an $\PGEqRes$ inference
\[\namedinference{\PGEqRes}
{\overbrace{C'_\levelPG \llor {u_\levelPG  \cneq u'_\levelPG}}^{\vphantom{\cdot}\smash{C_\levelPG}}\>\closure\>\theta_\levelPG}
{C'_\levelPG \closure\theta_\levelPG}\],
then we construct a corresponding $\EqRes$ inference $\iota_\levelH$.
Let $C_\levelH\closure\theta_\levelH = \mapPonly_{\mapG{N}}^{-1}(C_\levelPG\closure\theta_\levelPG)$
and $C_\levelH\constraint{S} = \mapGonly_N^{-1}(C_\levelH\closure\theta_\levelH)$.
This clause $C_\levelH\constraint{S}$ will be the premise of $\iota_\levelH$.

A condition of $\PGEqRes$ is that $u_\levelPG \cneq u'_\levelPG$ is eligible in $C_\levelPG\closure\theta_\levelPG$.
Let $C_\levelH = C'_\levelH \llor u_\levelH \cneq u'_\levelH$,
where $u_\levelH \cneq u'_\levelH$ is the literal that Lemma~\ref{lem:H:eligibility-lifting} guarantees to be eligible
\wrt\ any suitable $\sigma$,
with $u_\levelH\mapp{\theta_\levelH} = u_\levelPG$ and $u'_\levelH\mapp{\theta_\levelH} = u'_\levelPG$.

Another condition of $\PGEqRes$ states that $u_\levelPG\theta_\levelPG = u'_\levelPG\theta_\levelPG$.
Thus, $u_\levelH\theta_\levelH = u'_\levelH\theta_\levelH$, and therefore
there exists $(\sigma, U)\in\csuupto(S, u_\levelH \equiv u'_\levelH)$
and a substitution $\zeta$
such that $U\zeta$ is true
and $x\sigma\zeta = x\theta_\levelH$ for all variables $x$ in $C_\levelH\constraint{S}$.

Finally, the given inference $\iota_\levelPG$ and 
the constructed inference $\iota_\levelH$ are of the form ($**$) with
$C_1\constraint{S_1} = C_\levelH\constraint{S}$,
$C_2' = C'_\levelH$,
$S_2' = U$,
$\theta_1 = \theta_\levelH$,
and $\xi = \theta_\levelPG$.

\medskip

\noindent{\PGEqFact}:\enskip
If $\iota_\levelPG$ is an $\PGEqFact$ inference
\[\namedinference{\PGEqFact}
{\overbrace{C'_\levelPG \llor u'_\levelPG \ceq v'_\levelPG \llor {u_\levelPG} \ceq v_\levelPG}^{\vphantom{\cdot}\smash{C_\levelPG}}\>\closure\>\theta_\levelPG}
{C'_\levelPG \llor v_\levelPG \cneq v'_\levelPG \llor u_\levelPG \ceq v'_\levelPG\closure\theta_\levelPG}
\]
then we construct a corresponding $\EqFact$ inference $\iota_\levelH$.
Let $C_\levelH\closure\theta_\levelH = \mapPonly_{\mapG{N}}^{-1}(C_\levelPG\closure\theta_\levelPG)$
and $C_\levelH\constraint{S} = \mapGonly_N^{-1}(C_\levelH\closure\theta_\levelH)$.
This clause $C_\levelH\constraint{S}$ will be the premise of $\iota_\levelH$.

A condition of $\PGEqFact$ is that $u_\levelPG \ceq v_\levelPG\closure\theta_\levelPG$ is maximal in $C_\levelPG\closure\theta_\levelPG$.
Let $u_\levelH \ceq v_\levelH$ be the literal that Lemma~\ref{lem:H:maximality-lifting} guarantees to be maximal in $C_\levelH\constraint{S}$
\wrt\ any suitable $\sigma$,
with $u_\levelH\mapp{\theta_\levelH} = u_\levelPG$ and $v_\levelH\mapp{\theta_\levelH} = v_\levelPG$.
Choose $C'_\levelH$, $u'_\levelH$, and $v'_\levelH$ such that
$C_\levelH = C'_\levelH \llor u'_\levelH \ceq v'_\levelH \llor u_\levelH \ceq v_\levelH$,
$C'_\levelH\mapp{\theta_\levelH} = C'_\levelPG$,
$u'_\levelH\mapp{\theta_\levelH} = u'_\levelPG$, and $v'_\levelH\mapp{\theta_\levelH} = v'_\levelPG$.

Another condition of $\PGEqFact$ states that there are no selected literals in $C_\levelPG\closure\theta_\levelPG$.
By Definition~\ref{def:H:sel-transfer},
it follows that there are no selected literals in $C_\levelH\constraint{S}$.

Another condition of $\PGEqFact$ states that $u_\levelPG\theta_\levelPG = u'_\levelPG\theta_\levelPG$.
Thus, $u_\levelH\theta_\levelH = u'_\levelH\theta_\levelH$, and therefore
there exists $(\sigma, U)\in\csuupto(S, u_\levelH \equiv u'_\levelH)$
and a substitution $\zeta$
such that $U\zeta$ is true
and $x\sigma\zeta = x\theta_\levelH$ for all variables $x$ in $C_\levelH\constraint{S}$.

The last condition of $\PGEqFact$ is that $u_\levelPG\theta_\levelPG \succ v_\levelPG\theta_\levelPG$---i.e.,
$u_\levelH\sigma\zeta \succ v_\levelH\sigma\zeta$.
By Lemma~\ref{lem:H:order-lifting},
$(u_\levelH\constraint{S})\sigma \not\preceq (v_\levelH\constraint{S})\sigma$.

Finally, the given inference $\iota_\levelPG$ and 
the constructed inference $\iota_\levelH$ are of the form ($**$) with
$C_1\constraint{S_1} = C_\levelH\constraint{S}$,
$C_2' = C'_\levelH \llor v_\levelH \cneq v'_\levelH \llor u_\levelH \ceq v'_\levelH$,
$S_2' = U$,
$\theta_1 = \theta_\levelH$,
and $\xi = \theta_\levelPG$.

\medskip

\noindent{\PGClausify}:\enskip
If $\iota_\levelPG$ is a $\PGClausify$ inference
\[\namedinference{\PGClausify}{\overbrace{C'_\levelPG \llor s_\levelPG \ceq t_\levelPG}^{C_\levelPG}  \closure \theta_\levelPG}
{C'_\levelPG \llor D_\levelPG  \closure \theta_\levelPG}\]
with $\tau_\levelPG$ being the type and $u_\levelPG$ and $v_\levelPG$ being the terms used for condition~\REF{2}.
Then we construct a corresponding $\Clausify$ inference $\iota_\levelH$.
Let $C_\levelH \closure\theta_\levelH = \mapPonly_{\mapG{N}}^{-1}(C_\levelPG  \closure \theta_\levelPG)$
and $C_\levelH \constraint{S} = \mapGonly_N^{-1}(C_\levelH \closure\theta_\levelH)$.
This clause $C_\levelH \constraint{S}$ will be the premise of $\iota_\levelH$.

Condition~\REF{1} of $\PGClausify$ is that $s_\levelPG \ceq t_\levelPG$ is strictly eligible in $C_\levelPG\closure\theta_\levelPG$.
Let $C_\levelH = C'_\levelH \llor s_\levelH \ceq t_\levelH$,
where $s_\levelH \ceq t_\levelH$ is the literal that Lemma~\ref{lem:H:eligibility-lifting} guarantees to be strictly eligible
\wrt\ any suitable $\sigma$ (condition~\REF{2} of $\Clausify$),
with $s_\levelH\mapp{\theta_\levelH} = s_\levelPG$ and $t_\levelH\mapp{\theta_\levelH} = t_\levelPG$.

Comparing the listed triples in $\PGClausify$ and $\Clausify$,
we see that there must be a triple
$(s'_\levelH, t'_\levelH, D_\levelH)$ listed for $\Clausify$
such that
$(s'_\levelH\rho, t'_\levelH\rho, D_\levelH\rho) = (s_\levelPG, t_\levelPG\theta_\levelPG, D_\levelPG)$
with $\rho = \{\alpha \mapsto \tau_\levelPG, x\mapsto u_\levelPG,y\mapsto v_\levelPG\}$
is the triple used for $\iota_\levelPG$ (condition~\REF{4} of $\Clausify$).
Inspecting the listed triples, we see that $s_\levelPG$
cannot be a variable and that $s_\levelPG\theta_\levelPG = s_\levelH\theta_\levelH$
is of Boolean type.
It follows that 
$s_\levelH$ is not a variable (condition~\REF{3} of $\Clausify$)
because  if it were, then, by definition of $\mapponly$,
$s_\levelH\mapp{\theta_\levelH} = s_\levelPG$ would be a variable.

Moreover, we observe that $s_\levelH\theta_\levelH = s_\levelPG\theta_\levelPG = s'_\levelH\rho\theta_\levelPG$
and $t_\levelH\theta_\levelH = t_\levelPG\theta_\levelPG = t'_\levelH\rho = t'_\levelH\rho\theta_\levelPG$.
Thus the substitution mapping all variables $x$ in
$s'_\levelH$ and $t'_\levelH$ to $x\rho\theta_\levelPG$ 
and all other variables $x$ to $x\theta_\levelH$ is a unifier of $s_\levelH \equiv s'_\levelH$ and $t_\levelH \equiv t'_\levelH$.
So 
there exists a unifier $\sigma\in\csu(s_\levelH \equiv s'_\levelH, t_\levelH \equiv t'_\levelH)$ (condition~\REF{1} of $\Clausify$)
and a substitution $\zeta$
such that $x\sigma\zeta = x\theta_\levelH$ for all variables $x$ in $C_\levelH\constraint{S}$.

Finally, the given inference $\iota_\levelPG$ and 
the constructed inference $\iota_\levelH$ are of the form ($**$) with
$C_1\constraint{S_1} = C_\levelH\constraint{S}$,
$C_2' = C'_\levelH\llor D_\levelH$,
$S_2' = S\sigma$,
$\theta_1 = \theta_\levelH$,
and $\xi = \theta_\levelPG$.

\medskip

\noindent{\PGBoolHoist}:\enskip
If $\iota_\levelPG$ is a $\PGBoolHoist$ inference
\begin{align*}
  \namedinference{\PGBoolHoist}{\greensubterm{C_\levelPG}{u_\levelPG} \closure\theta_\levelPG}
  {\greensubterm{C_\levelPG}{\ifalse} \llor u_\levelPG \ceq \itrue\closure\theta_\levelPG}
\end{align*}
then we construct a corresponding $\BoolHoist$ or $\FluidBoolHoist$ inference $\iota_\levelH$.
Let $C_\levelH\closure\theta_\levelH = \mapPonly_{\mapG{N}}^{-1}(C_\levelPG\closure\theta_\levelPG)$
and $C_\levelH\constraint{S} = \mapGonly_N^{-1}(C_\levelH\closure\theta_\levelH)$.
This clause $C_\levelH\constraint{S}$ will be the premise of $\iota_\levelH$.
Let $L_\levelPG.s_\levelPG.p_\levelPG$ be the position of $u_\levelPG$ in $C_\levelPG$.
Condition~\REF{3} of $\PGBoolHoist$ states that $L_\levelPG.s_\levelPG.p_\levelPG$ is eligible in $C_\levelPG\closure\theta_\levelPG$.
Let $L_\levelH.s_\levelH.p_\levelH$ be the position that Lemma~\ref{lem:H:eligibility-lifting} guarantees to be eligible
in $C_\levelH\constraint{S}$ \wrt\ any suitable $\sigma$ (condition~\REF{3} of $\BoolHoist$ or condition~\REF{8} of $\FluidBoolHoist$).
Let $u_\levelH$ be the subterm at position $L_\levelH.s_\levelH.p_\levelH$ in $C_\levelH\constraint{S}$.
By Lemma~\ref{lem:H:eligibility-lifting}, one of the following cases applies:

\medskip

\noindent\textsc{Case 1:}\enskip
$p_\levelPG = p_\levelH$.
Then we construct a $\BoolHoist$ inference.

Lemma~\ref{lem:H:eligibility-lifting} tells us that $s_\levelPG = s_\levelH\mapp{\theta_\levelH}$
and thus $u_\levelPG = u_\levelH\mapp{\theta_\levelH}$ because $p_\levelPG = p_\levelH$.
By Condition~\REF{1} of $\PGBoolHoist$,
$u_\levelPG = u_\levelH\mapp{\theta_\levelH}$ is of Boolean type
and thus $u_\levelH\theta_\levelH$ is of Boolean type.
So there exists a 
most general type substitution
$\sigma$ such that $u_\levelH$ is Boolean (condition~\REF{1} of $\BoolHoist$)
and a substitution $\zeta$ such that $x\sigma\zeta = x\theta_\levelH$ for all variables $x$ in $C_\levelH\constraint{S}$.

By condition~\REF{2} of $\PGBoolHoist$,
$u_\levelPG = u_\levelH\mapp{\theta_\levelH}$ is
not a variable and is neither $\itrue$ nor $\ifalse$.
By condition~\REF{1} of $\PGBoolHoist$, $u_\levelPG= u_\levelH\mapp{\theta_\levelH}$
is nonfunctional.
So, using the definition of $\mapponly$, $u_\levelH$ is not a variable
and is neither $\itrue$ nor $\ifalse$ (condition~\REF{2} of $\BoolHoist$).

By condition~\REF{4} of $\PGBoolHoist$,
$L_\levelPG\theta_\levelPG$ is not of the form
$u_\levelPG\theta_\levelPG \ceq \itrue$ or $u_\levelPG\theta_\levelPG \ceq \ifalse$.
Since $L_\levelPG\theta_\levelPG = L_\levelH\theta_\levelH$
and $u_\levelPG\theta_\levelPG = u_\levelH\theta_\levelH$, it follows that
$L_\levelH$ is not of the form $u_\levelH \ceq \itrue$ or $u_\levelH \ceq \ifalse$ (condition~\REF{4} of $\BoolHoist$).

Finally, the given inference $\iota_\levelPG$ and 
the constructed inference $\iota_\levelH$ are of the form ($**$) with
$C_1\constraint{S_1} = C_\levelH\constraint{S}$,
$C_2' = \greensubterm{C_\levelH}{\ifalse}\llor u_\levelH \ceq \itrue$,
$S_2' = S\sigma$,
$\theta_1 = \theta_\levelH$,
and $\xi = \theta_\levelPG$.

\medskip

\noindent\textsc{Case 2:}\enskip
$p_\levelPG = p_\levelH.q$ for some nonempty $q$,
$u_\levelH$ is not a variable but is variable-headed,
and $u_\levelH\theta$ is nonfunctional.
Then we construct a $\FluidBoolHoist$ inference.
By the assumption of this case, $u_\levelH$ is not a variable but is variable-headed
(condition~\REF{1} of $\FluidBoolHoist$).

Lemma~\ref{lem:H:eligibility-lifting} tells us
that $s_\levelPG = s_\levelH\mapp{\theta_\levelH}$.
Thus, the subterm of $s_\levelPG$ at position $p_\levelH$ is
$u_\levelH\mapp{\theta_\levelH}$.
So $q$ is a green position of
$u_\levelH\mapp{\theta_\levelH}$,
and the subterm at that position is $u_\levelPG$.

Since $u_\levelPG$ is the green subterm at position $q$ of $u_\levelH\mapp{\theta_\levelH}$,
$u_\levelH\mapp{\theta_\levelH} = \greensubterm{(u_\levelH\mapp{\theta_\levelH})}{u_\levelPG}_q$.
Let $v = \lambda\>\greensubterm{(u_\levelH\mapp{\theta_\levelH})}{\DB{n}}_q$,
where $n$ is the appropriate De Bruijn index to refer to the initial $\lambda$.

Let $z_\levelH$ and $x_\levelH$ be fresh variables (condition~\REF{3} of $\FluidBoolHoist$).
We define
$\theta'_\levelH = (\theta_\levelH[z_\levelH \mapsto v\theta_\levelPG, x_\levelH \mapsto u_\levelPG\theta_\levelPG])$.
Then $(z_\levelH\>x_\levelH)\theta'_\levelH = 
v\theta_\levelPG\>(u_\levelPG\theta_\levelPG) =
(v\>u_\levelPG)\theta_\levelPG =
(v\>u_\levelH\mapp{\theta_\levelH})\theta_\levelPG =
 \greensubterm{(u_\levelH\mapp{\theta_\levelH})}{u_\levelH\mapp{\theta_\levelH}}_q\theta_\levelPG
 u_\levelH\mapp{\theta_\levelH}\theta_\levelPG
 = u_\levelH\mapp{\theta_\levelH}\mapq{\theta_\levelH} = u_\levelH\theta_\levelH
 u_\levelH\theta'_\levelH$.
 So $\theta'_\levelH$ is a unifier of $z_\levelH\>x_\levelH$ and $u_\levelH$.
Thus, by definition of $\csu$ (Definition~\ref{def:csu}),
there exists a unifier $\sigma \in \csu(z_\levelH\>x_\levelH \equiv u_\levelH)$ 
and a substitution $\zeta$ such that $x\sigma\zeta = x\theta'_\levelH$ 
for all relevant variables $x$
(condition~\REF{4} of $\FluidBoolHoist$).

Since $q$ is a green position of
$u_\levelH\mapp{\theta_\levelH}$, the type of $u_\levelH\mapp{\theta_\levelH}$ and
the type of $u_\levelH\sigma$
is nonfunctional (condition~\REF{2} of $\FluidBoolHoist$).

Since $z\sigma\zeta = v\theta_\levelPG$ and $v\theta_\levelPG \ne \lambda\>\DB{0}$
because $q$ is nonempty,
we have $z\sigma \ne \lambda\>\DB{0}$
(condition~\REF{6} of $\FluidBoolHoist$).

Condition~\REF{3} of $\PGBoolHoist$ states that
$u_\levelPG$ is not a variable and is neither $\itrue$ nor $\ifalse$.
So 
$u_\levelPG\theta_\levelPG = x_\levelH\sigma\zeta$
is neither $\itrue$ nor $\ifalse$ and
thus $x_\levelH\sigma$ is neither $\itrue$ nor $\ifalse$ (condition~\REF{7} of $\FluidBoolHoist$).
Moreover,
$(z_\levelH\>x_\levelH)\sigma\zeta =
v\theta_\levelPG\>(u_\levelPG\theta_\levelPG) =
\greensubterm{(u_\levelH\mapp{\theta_\levelH}\theta_\levelPG )}{u_\levelPG\theta_\levelPG}_q \ne
\greensubterm{(u_\levelH\mapp{\theta_\levelH}\theta_\levelPG )}{\ifalse}_q \theta_\levelPG =
v\theta_\levelPG\>\ifalse = (z_\levelH\>\ifalse)\sigma\zeta$.
Thus,
$(z_\levelH\>x_\levelH)\sigma \ne (z_\levelH\>\ifalse)\sigma$  (condition~\REF{5} of $\FluidBoolHoist$).

The inferences $\iota_\levelPG$ and $\iota_\levelH$ are of the form ($*$) with
$C_1\constraint{S_1} = C_\levelH\constraint{S}$,
$C_2\constraint{S_2} = (\greensubterm{C_\levelH}{z_\levelH\>\ifalse}_{p_\levelH}\llor x_\levelH \ceq \itrue\constraint{S})\sigma$,
$\theta_1 = \theta_\levelH$,
$\theta_2 = \zeta$,
$E = \greensubterm{C_\levelPG}{\ifalse}_{p_\levelPG}\llor u_\levelPG \ceq \itrue$,
and $\xi = \theta_\levelPG$.
In the following, we elaborate why
$C_2\mapp{\theta_2} = E\pi$ and
$x\pi\mapq{\theta_2} = x\xi$ for 
some substitution $\pi$ and all variables $x$ in $E$,
as required for ($*$).
The reasoning is similar to that in the $\FluidSup$ case.
We invoke Lemma~\ref{lem:G:mapp-comp-subst} to
obtain a substitution $\pi$ such that
$\sigma\mapp{\zeta} = \mapp{\sigma\zeta}\pi$
and
$\mapq{\sigma\zeta} = \pi\mapq{\zeta}$.
Then
\begin{align*}
  C_2\mapp{\theta_2}
  &= (\greensubterm{C_\levelH}{z_\levelH\>\ifalse}_{p_\levelH}\llor x_\levelH \ceq \itrue)\sigma\mapp{\zeta}\\
  &=\greensubterm{C_\levelH\sigma\mapp{\zeta}}{(z_\levelH\>\ifalse)\sigma\mapp{\zeta}}_{p_\levelH} \llor x_\levelH\sigma\mapp{\zeta} \ceq \itrue\\
  &\stackrel{(1)}{=}\greensubterm{C_\levelH\sigma\mapp{\zeta}}{\greensubterm{(z_\levelH\>x_\levelH)\sigma\mapp{\zeta}}{\ifalse}_q}_{p_\levelH}\llor u_\levelPG\pi \ceq \itrue\\
  &\stackrel{(2)}{=}\greensubterm{C_\levelH\sigma\mapp{\zeta}}{\greensubterm{u_\levelH\sigma\mapp{\zeta}}{\ifalse}_q}_{p_\levelH}\llor u_\levelPG\pi \ceq \itrue\\
  &\stackrel{(3)}{=}\greensubterm{C_\levelH\sigma\mapp{\zeta}}{\ifalse}_{p_\levelPG}\llor u_\levelPG\pi \ceq \itrue\\
  &=\greensubterm{C_\levelH\mapp{\sigma\zeta}\pi}{\ifalse}_{p_\levelPG}\llor u_\levelPG\pi \ceq \itrue\\
  &=(\greensubterm{C_\levelH\mapp{\sigma\zeta}}{\ifalse}_{p_\levelPG}\llor u_\levelPG \ceq \itrue)\pi\\
  &= (\greensubterm{C_\levelPG}{\ifalse}_{p_\levelPG}\llor u_\levelPG \ceq \itrue)\pi\\
  &= E\pi
\end{align*}
Step (1) can be justified as follows: By Lemma~\ref{lem:G:mapp-comp-mapq},
$z_\levelH\sigma\mapp{\zeta}\mapq{\zeta} = v\theta_\levelPG$.
Since $z_\levelH\sigma\mapp{\zeta}$ and $v$ contain only nonfunctional variables,
$z_\levelH\sigma\mapp{\zeta}$ must be a $\lambda$-abstraction
whose $\lambda$ binds exactly one De Bruijn index,
which is located at orange position $1.q$ \wrt\ $\downarrow_{\beta\eta\mathrm{long}}$.
Thus, $(z_\levelH\>x_\levelH)\sigma\mapp{\zeta}$
and $(z_\levelH\>\ifalse)\sigma\mapp{\zeta}$
are identical up to the subterm at green position $q$,
which is $x_\levelH\sigma\mapp{\zeta}$ and $\ifalse$ respectively.
So, $(z_\levelH\>\ifalse)\sigma\mapp{\zeta} = \greensubterm{(z_\levelH\>x_\levelH)\sigma\mapp{\zeta}}{\ifalse}_q$.
Moreover, since 
$\greensubterm{(z_\levelH\>x_\levelH)\sigma\mapp{\zeta}}{x_\levelH\sigma\mapp{\zeta}}_q = (z_\levelH\>x_\levelH)\sigma\mapp{\zeta} = u_\levelH\sigma\mapp{\zeta} = u_\levelH\mapp{\sigma\theta}\pi = u_\levelH\mapp{\theta_\levelH}\pi = \greensubterm{u_\levelH\mapp{\theta_\levelH}}{u_\levelPG}_q\pi$,
we have $x_\levelH\sigma\mapp{\zeta} = u_\levelPG\pi$.
For step (2), we use that $(z_\levelH\>x_\levelH)\sigma = u_\levelH\sigma$.
For step (3), we use that $u_\levelH$ is the green subterm of $C_\levelH$ at position $p_\levelH$.

\medskip

\noindent{\PGLoobHoist}:\enskip
Analogous to $\PGBoolHoist$.

\medskip

\noindent{\PGFalseElim}:\enskip
If $\iota_\levelPG$ is a $\PGFalseElim$ inference
\[\namedinference{PGFalseElim}
{\overbrace{C'_\levelPG \llor s_\levelPG \ceq t_\levelPG}^{C_\levelPG} \closure\theta_\levelPG}
{C'_\levelPG \closure\theta_\levelPG}\]
then we construct a corresponding $\FalseElim$ inference $\iota_\levelH$.
Let $C_\levelH \closure\theta_\levelH = \mapPonly_{\mapG{N}}^{-1}(C_\levelPG  \closure \theta_\levelPG)$
and $C_\levelH \constraint{S} = \mapGonly_N^{-1}(C_\levelH \closure\theta_\levelH)$.
This clause $C_\levelH \constraint{S}$ will be the premise of $\iota_\levelH$.

Condition~\REF{2} of $\PGFalseElim$ states that 
$s_\levelPG \eq t_\levelPG$ is strictly eligible in in $C_\levelPG\closure\theta_\levelPG$.
Let $C_\levelH = C'_\levelH \llor s_\levelH \ceq t_\levelH$,
where $s_\levelH \ceq t_\levelH$ is the literal that Lemma~\ref{lem:H:eligibility-lifting} guarantees to be strictly eligible
\wrt\ any suitable $\sigma$ (condition~\REF{2} of $\FalseElim$),
with $s_\levelH\mapp{\theta_\levelH} = s_\levelPG$ and $t_\levelH\mapp{\theta_\levelH} = t_\levelPG$.

Condition~\REF{1} of $\PGFalseElim$ states that
$(s_\levelPG \ceq t_\levelPG)\theta_\levelPG = \ifalse \ceq \itrue$.
So $\theta_\levelH$ is a unifier of $s_\levelH \equiv \ifalse$ and 
$t_\levelH \equiv \itrue$. By construction, $S\theta_\levelH$ is true.
So there exists $(\sigma, U) \in \csuupto(S,s_\levelH \equiv \ifalse, t_\levelH \equiv \itrue)$
(condition~\REF{1} of $\FalseElim$) and a substitution $\zeta$
such that $U\zeta$ is true
and $x\sigma\zeta = x\theta_\levelH$ for all variables $x$ in $C_\levelH\constraint{S}$.

The inferences $\iota_\levelPG$ and $\iota_\levelH$ are of the form ($**$) with
$C_1\constraint{S_1} = C_\levelH\constraint{S}$,
$C_2' = C'_\levelH\llor s_\levelH \ceq t_\levelH$,
$S_2' = U$,
$\theta_1 = \theta_\levelH$,
and $\xi = \theta_\levelPG$.

\medskip

\noindent{\PGArgCong}:\enskip
If $\iota_\levelPG$ is a $\PGArgCong$ inference
\[\namedinference{PGArgCong}
{\overbrace{C'_\levelPG \llor s_\levelPG \eq s'_\levelPG}^{C_\levelPG} \closure\theta_\levelPG}
{C'_\levelPG \llor s_\levelPG\>\diff\typeargs{\tau,\upsilon}(u_\levelPG,w_\levelPG) \eq s'_\levelPG\>\diff\typeargs{\tau,\upsilon}(u_\levelPG,w_\levelPG)\closure\theta_\levelPG}\]
then we construct a corresponding $\ArgCong$ inference $\iota_\levelH$.
Let $C_\levelH \closure\theta_\levelH = \mapPonly_{\mapG{N}}^{-1}(C_\levelPG  \closure \theta_\levelPG)$
and $C_\levelH \constraint{S} = \mapGonly_N^{-1}(C_\levelH \closure\theta_\levelH)$.
This clause $C_\levelH \constraint{S}$ will be the premise of $\iota_\levelH$.

Condition~\REF{3} of $\PGArgCong$ states that 
$s_\levelPG \eq s'_\levelPG$ is strictly eligible in in $C_\levelPG\closure\theta_\levelPG$.
Let $C_\levelH = C'_\levelH \llor s_\levelH \ceq s'_\levelH$,
where $s_\levelH \ceq s'_\levelH$ is the literal that Lemma~\ref{lem:H:eligibility-lifting} guarantees to be strictly eligible
\wrt\ any suitable $\sigma$ (condition~\REF{2} of $\ArgCong$),
with $s_\levelH\mapp{\theta_\levelH} = s_\levelPG$ and $s'_\levelH\mapp{\theta_\levelH} = s'_\levelPG$.

Let $x$ be a fresh variable (condition~\REF{3} of $\ArgCong$).
Let $\theta'_\levelH = \theta_\levelH[x \mapsto \diff\typeargs{\tau,\upsilon}(u_\levelPG,\allowbreak w_\levelPG)]$.

Condition~\REF{1} of $\PGArgCong$ states that $s_\levelPG$ is of type $\tau \fun \upsilon$.
Since $\levelPG$ does not use type variables, $\tau$ and $\upsilon$ are
ground types, and 
thus $s_\levelPG\theta_\levelPG = s_\levelH\theta_\levelH = s_\levelH\theta'_\levelH$ is of type $\tau \fun \upsilon$.
Let $\sigma$ be the most general type substitution such that $s_\levelH\sigma$
is of functional type (condition~\REF{1} of $\ArgCong$), and let $\zeta$ be a
substitution such that $y\sigma\zeta = y\theta'_\levelH$ for all type and term variables $y$ in $C_\levelH\constraint{S}$.

Then, the given inference $\iota_\levelPG$ and
the constructed inference $\iota_\levelH$ are of the form ($**$) with
$C_1\constraint{S_1} = C_\levelH\constraint{S}$,
$C_2' =  C'_\levelH\sigma\llor s_\levelH\sigma\>x \ceq s'_\levelH\sigma\>x$,
$S_2' = S\sigma$,
$\theta_1 = \theta'_\levelH$,
$\xi = \mapq{\theta'_\levelH}$,
$\zeta = \theta'_\levelH$,
and $\sigma = \{\}$.

\medskip

\noindent{\PGExt}:\enskip
If $\iota_\levelPG$ is a $\PGExt$ inference
\begin{align*}
  \namedinference{\PGExt}{\greensubterm{C_\levelPG}{u_\levelPG} \closure\theta_\levelPG}
  {\greensubterm{C_\levelPG}{w_\levelPG} \llor u_\levelPG\>\diff\typeargs{\tau,\upsilon}(u_\levelPG,w_\levelPG) \noteq w_\levelPG\>\diff\typeargs{\tau,\upsilon}(u_\levelPG,w_\levelPG) \closure\rho}
\end{align*}
then we construct a corresponding $\Ext$ or $\FluidExt$ inference $\iota_\levelH$.
Let $C_\levelH \closure\theta_\levelH = \mapPonly_{\mapG{N}}^{-1}(C_\levelPG  \closure \theta_\levelPG)$
and $C_\levelH \constraint{S} = \mapGonly_N^{-1}(C_\levelH \closure\theta_\levelH)$.
This clause $C_\levelH \constraint{S}$ will be the premise of $\iota_\levelH$.
Let $L_\levelPG.s_\levelPG.p_\levelPG$ be the position of the green subterm $u_\levelPG$ in $C_\levelPG$.
Condition~\REF{1} of $\PGExt$ states that 
$L_\levelPG.s_\levelPG.p_\levelPG$ is eligible in $C_\levelPG\closure\theta_\levelPG$.
Let $L_\levelH.s_\levelH.p_\levelH$ be the position in $C_\levelH$
that Lemma~\ref{lem:H:eligibility-lifting} guarantees to be eligible in $C_\levelH\constraint{S}$ \wrt\ any suitable $\sigma$
(condition~\REF{3} of $\Ext$ or condition~\ref{fluidext:eligible} of $\FluidExt$).
Let $u_\levelH$ be the subterm of $C_\levelH$ at position $L_\levelH.s_\levelH.p_\levelH$.
By Lemma~\ref{lem:H:eligibility-lifting}, one of the following cases applies.

\medskip

\noindent\textsc{Case 1:}\enskip
$p_\levelPG = p_\levelH$.
Then we construct an $\Ext$ inference.

Condition~\REF{2} of $\PGExt$ states that $u_\levelPG = u_\levelH\mapp{\theta_\levelH}$ is
of functional type.
Let $\sigma$ be the most general type substitution such that $u_\levelH\sigma$
is of type $\tau\fun\upsilon$ for some types $\tau$ and $\upsilon$ (condition~\REF{1} of $\Ext$).
Let $y$ be a fresh variable of the same type as $u_\levelH\sigma$ (condition~\REF{2} of $\Ext$).
Let $\theta'_\levelH = \theta_\levelH[y\mapsto w_\levelPG\rho]$.
Let $\zeta$ be a substitution such that $x\sigma\zeta = x\theta'_\levelH$ for all type and term variables $x$ in $C_\levelH\constraint{S}$
and for $x = y$.

By condition~\REF{3} of $\PGExt$,
$w_\levelPG\in \termsPG$ is a term whose nonfunctional yellow subterms are different fresh variables.
Then $w_\levelPG$ and $y\mapp{\theta_\levelH}$
are equal up to renaming of variables because $\mapponly$
replaces the nonfunctional yellow subterms of $w_\levelPG\theta_\levelH$
by distinct fresh variables.
Since the purpose of this proof is to show that $\iota_\levelPG$ is redundant,
a property that is independent of the variable names in $\iota_\levelPG$'s conclusion,
we can assume without loss of generality that
$w_\levelPG = y\mapp{\theta_\levelH}$ and
$\theta_\levelPG = \mapq{\theta'_\levelH}$.
Then the given inference $\iota_\levelPG$ and 
the constructed inference $\iota_\levelH$ are of the form ($**$) with
$C_1\constraint{S_1} = C_\levelH\constraint{S}$,
$C_2' = \greensubterm{C_\levelH\sigma}{y}\llor u_\levelH\sigma\>\diff\typeargs{\tau,\upsilon}(u_\levelH\sigma,y)\noteq y\>\diff\typeargs{\tau,\upsilon}(u_\levelH\sigma,y)$,
$S_2' = S\sigma$,
$\theta_1 = \theta'_\levelH$,
$\xi = \rho$, $\zeta=\theta'_\levelH$, and $\sigma = \{\}$.

\medskip

\noindent\textsc{Case 2:}\enskip $p_\levelPG = p_\levelH.q$ for some nonempty $q$,
$u_\levelH$ is not a variable but is variable-headed,
and $u_\levelH\theta$ is nonfunctional.
Then we construct a $\FluidExt$ inference.

Lemma~\ref{lem:H:eligibility-lifting} tells us
that $s_\levelPG = s_\levelH\mapp{\theta_\levelH}$.
Thus, the subterm of $s_\levelPG$ at position $p_\levelH$ is
$u_\levelH\mapp{\theta_\levelH}$.
So $q$ is a green position of
$u_\levelH\mapp{\theta_\levelH}$,
and the subterm at that position is $u_\levelPG$---i.e.,
$u_\levelH\mapp{\theta_\levelH} = \greensubterm{(u_\levelH\mapp{\theta_\levelH})}{u_\levelPG}_q$.
Let $t = \lambda\>\greensubterm{(u_\levelH\mapp{\theta_\levelH})}{\DB{n}}_q$,
where $n$ is the appropriate De Bruijn index to refer to the initial $\lambda$.

We define $\theta'_\levelH = \theta_\levelH [x \mapsto u_\levelPG\theta_\levelPG, y\mapsto w_\levelPG\rho, z\mapsto t\theta_\levelPG]$.
Then $(z\>x)\theta'_\levelH = 
(t\>u_\levelPG)\theta_\levelPG
 = u_\levelH\mapp{\theta_\levelH}\theta_\levelPG
 = u_\levelH\mapp{\theta_\levelH}\mapq{\theta_\levelH} = u_\levelH\theta_\levelH
 = u_\levelH\theta'_\levelH$.
So
$\theta'_\levelH$ is a unifier of $z\>x$ and $u_\levelH$.
Thus, by definition of $\csu$ (Definition~\ref{def:csu}),
there exists a unifier $\sigma \in \csu(z\>x \equiv u_\levelH)$ 
and a substitution $\zeta$ such that $x\sigma\zeta = x\theta'_\levelH$ 
for all relevant variables $x$
(condition~\ref{fluidext:csu} of $\FluidExt$).

By the assumption of this case, $u_\levelH$ is not a variable but is variable-headed (condition~\ref{fluidext:var} of $\FluidExt$)
and $u_\levelH\theta$ is nonfunctional (condition~\ref{fluidext:nonfunctional} of $\FluidExt$).

By condition~\REF{4} of $\PGExt$, $u_\levelPG\theta_\levelPG \ne w_\levelPG\rho$.
Thus, $(z\>x)\sigma\zeta =
(z\>x)\theta'_\levelH = 
(t\>u_\levelPG)\theta_\levelPG =
\greensubterm{(u_\levelH\mapp{\theta_\levelH})}{u_\levelPG\theta_\levelPG}_q \ne
\greensubterm{(u_\levelH\mapp{\theta_\levelH})}{w_\levelPG\rho}_q =
(z\>y)\theta'_\levelH  =
(z\>y)\sigma\zeta$.
So,
$(z\>x)\sigma \ne (z\>y)\sigma$  (condition~\ref{fluidext:csu-not-id} of $\FluidExt$).

Since $z\sigma\zeta = t\theta_\levelPG$ and $t\theta_\levelPG \ne \lambda\>\DB{0}$
because $q$ is nonempty,
we have $z\sigma \ne \lambda\>\DB{0}$
(condition~\ref{fluidext:csu-not-proj} of $\FluidExt$).

By condition~\REF{3} of $\PGExt$,
$w_\levelPG\in \termsPG$ is a term whose nonfunctional yellow subterms are different fresh variables.
Then $w_\levelPG$ and $y\mapp{\theta'_\levelH}$
are equal up to renaming of variables because $\mapponly$
replaces the nonfunctional yellow subterms of $w_\levelPG\rho$
by distinct fresh variables.
As above, we can assume without loss of generality that $y\mapp{\theta'_\levelH} = w_\levelPG$
and $w\rho = w\mapq{\theta'_\levelH}$ for all variables $w$ in $C_\levelPG$ and in $w_\levelPG$,
using the fact that $\rho$ coincides with $\theta_\levelPG$ on all variables in $C_\levelPG$.

The inferences $\iota_\levelPG$ and $\iota_\levelH$ are of the form ($*$) with
$C_1\constraint{S_1} = C_\levelH\constraint{S}$,
$C_2\constraint{S_2} = (\greensubterm{C_\levelH}{z\>y}_{p_\levelH}\llor x\>\diff\typeargs{\alpha,\beta}(x,y)\noteq y\>\diff\typeargs{\alpha,\beta}(x,y)\constraint{S})\sigma$,
$\theta_1 = \theta_\levelH$,
$\theta_2 = \zeta$,
$E = \greensubterm{C_\levelPG}{w_\levelPG}_{p_\levelPG}\llor u_\levelPG\>\diff\typeargs{\tau,\upsilon}(u_\levelPG,w_\levelPG)\noteq w_\levelPG\>\diff\typeargs{\tau,\upsilon}(u_\levelPG,w_\levelPG)$,
$\xi = \rho$.
In the following, we elaborate why
$C_2\mapp{\theta_2} = E\pi$ and
$x\pi\mapq{\theta_2} = x\xi$ for 
some substitution $\pi$ and all variables $x$ in $E$,
as required for ($*$).
The reasoning is similar to that in the $\FluidSup$ case.
We invoke Lemma~\ref{lem:G:mapp-comp-subst} to
obtain a substitution $\pi$ such that
$\sigma\mapp{\zeta} = \mapp{\sigma\zeta}\pi$
and
$\mapq{\sigma\zeta} = \pi\mapq{\zeta}$.
Then
\begin{align*}
  C_2\mapp{\theta_2}
  &= (\greensubterm{C_\levelH}{z\>y}_{p_\levelH}\llor x\>\diff\typeargs{\alpha,\beta}(x,y)\noteq y\>\diff\typeargs{\alpha,\beta}(x,y))\sigma\mapp{\zeta}\\
  &=\greensubterm{C_\levelH\sigma\mapp{\zeta}}{(z\>y)\sigma\mapp{\zeta}}_{p_\levelH} \llor (x\>\diff\typeargs{\alpha,\beta}(x,y)\noteq y\>\diff\typeargs{\alpha,\beta}(x,y))\sigma\mapp{\zeta}\\
  &\stackrel{(1)}{=}(\greensubterm{C_\levelH\sigma\mapp{\zeta}}{\greensubterm{(z\>x)\sigma\mapp{\zeta}}{y\sigma\mapp{\zeta}}_q}_{p_\levelH}
  \llor (x\>\diff\typeargs{\alpha,\beta}(x,y)\noteq y\>\diff\typeargs{\alpha,\beta}(x,y))\sigma\mapp{\zeta}\\
  &\stackrel{(2)}{=}(\greensubterm{C_\levelH\sigma\mapp{\zeta}}{\greensubterm{u_\levelH\sigma\mapp{\zeta}}{y\sigma\mapp{\zeta}}_q}_{p_\levelH}
  \llor (x\>\diff\typeargs{\alpha,\beta}(x,y)\noteq y\>\diff\typeargs{\alpha,\beta}(x,y))\sigma\mapp{\zeta}\\
  &\stackrel{(3)}{=}\greensubterm{C_\levelH\sigma\mapp{\zeta}}{y\sigma\mapp{\zeta}}_{p_\levelPG}
  \llor (x\>\diff\typeargs{\alpha,\beta}(x,y)\noteq y\>\diff\typeargs{\alpha,\beta}(x,y))\sigma\mapp{\zeta}\\
  &\stackrel{(4)}{=}\greensubterm{C_\levelH\sigma\mapp{\zeta}}{w_\levelPG}_{p_\levelPG}\llor (u_\levelPG\>\diff\typeargs{\tau,\upsilon}(u_\levelPG,w_\levelPG)\noteq w_\levelPG\>\diff\typeargs{\tau,\upsilon}(u_\levelPG,w_\levelPG))\pi\\
  &=(\greensubterm{C_\levelH\mapp{\sigma\zeta}}{w_\levelPG}_{p_\levelPG}\llor u_\levelPG\>\diff\typeargs{\tau,\upsilon}(u_\levelPG,w_\levelPG)\noteq w_\levelPG\>\diff\typeargs{\tau,\upsilon}(u_\levelPG,w_\levelPG))\pi\\
  &= (\greensubterm{C_\levelPG}{w_\levelPG}_{p_\levelPG}\llor u_\levelPG\>\diff\typeargs{\tau,\upsilon}(u_\levelPG,w_\levelPG)\noteq w_\levelPG\>\diff\typeargs{\tau,\upsilon}(u_\levelPG,w_\levelPG))\pi\\
  &= E\pi
\end{align*}
Step (1) can be justified as follows: By Lemma~\ref{lem:G:mapp-comp-mapq},
$z\sigma\mapp{\zeta}\mapq{\zeta} = t\theta_\levelPG$.
Since $z\sigma\mapp{\zeta}$ and $t$ contain only nonfunctional variables,
$z\sigma\mapp{\zeta}$ must be a $\lambda$-abstraction
whose $\lambda$ binds exactly one De Bruijn index,
which is located at orange position $1.q$ \wrt\ $\downarrow_{\beta\eta\mathrm{long}}$.
Thus, $(z\>x)\sigma\mapp{\zeta}$
and $(z\>y)\sigma\mapp{\zeta}$
are identical up to the subterm at green position $q$,
which is $x\sigma\mapp{\zeta}$ and $y\sigma\mapp{\zeta}$ respectively.
So, $(z\>y)\sigma\mapp{\zeta} = \greensubterm{(z\>x)\sigma\mapp{\zeta}}{y\sigma\mapp{\zeta}}_q$.

For step (2), we use that $(z\>x)\sigma = u_\levelH\sigma$.
For step (3), we use that $u_\levelH$ is the green subterm of $C_\levelH$ at position $p_\levelH$.
For step (4),
we use that
$x\sigma\mapp{\zeta} =  x\mapp{\sigma\zeta}\pi = x\mapp{\theta'_\levelH}\pi = u_\levelPG\pi$
and similarly
$y\sigma\mapp{\zeta} = w_\levelPG\pi$.

\medskip

\noindent{\PGDiff}:\enskip
If $\iota_\levelPG$ is a $\PGDiff$ inference
\begin{align*}
  \namedinference{\PGDiff}{}
  {u\>\diff\typeargs{\tau,\upsilon}(u,w) \noteq u\>\diff\typeargs{\tau,\upsilon}(u,w) \llor u\>s \eq w\>s \closure\theta_\levelPG}
\end{align*}
then we use the following $\Diff$ inference $\iota_\levelH$:
\begin{gather*}
  \namedinference{\Diff}{}
  {\underbrace{y\>(\diff\typeargs{\alpha,\beta}(y,z))\cneq z\>(\diff\typeargs{\alpha,\beta}(y,z)) \llor y\>x\ceq z\>x}_{C_\levelH}}
\end{gather*}
Let $\theta_\levelH$ be a grounding substitution with
$\alpha\theta_\levelH = \tau$, $\beta\theta_\levelH = \upsilon$, $y\theta_\levelH = u\theta_\levelPG$, and $z\theta_\levelH = w\theta_\levelPG$.
By condition~\REF{2} of $\PGDiff$,
$u,w,s\in \termsPG$ are terms whose nonfunctional yellow subterms are different fresh variables.
Then $u$ and $y\mapp{\theta_\levelH}$
are equal up to renaming of variables because $\mapponly$
replaces the nonfunctional yellow subterms of $u\theta_\levelPG$
by distinct fresh variables.
The same holds for $w$ and $z\mapp{\theta_\levelH}$ and for $s$ and $x\mapp{\theta_\levelH}$.
Since the purpose of this proof is to show that $\iota_\levelPG$ is redundant,
a property that is independent of the variable names in $\iota_\levelPG$'s conclusion,
we can assume without loss of generality that
$u = y\mapp{\theta_\levelH}$,
$w = z\mapp{\theta_\levelH}$,
$s = x\mapp{\theta_\levelH}$, and
$\theta_\levelPG = \mapq{\theta_\levelH}$.
Then 
$C_\levelH\mapp{\theta_\levelH} = C_\levelPG$
and $x_0\mapq{\theta_\levelH} = x_0\theta_\levelPG$
for all variables $x_0$ in $C_\levelPG$.
So $\iota_\levelPG$ and $\iota_\levelH$ have the form ($*$)
with $C_1 = C_\levelH$, $E = C_\levelPG$, $\xi = \theta_\levelPG$, $\pi = \{\}$, and $\theta_1 = \theta_\levelH$.
\end{proof}

\subsection{Trust and Simple Redundancy}

In this subsection, we define a notion of trust for each level and connect them.
Ultimately, we prove that simple redundancy $(\HRedC^\star, \HRedI^\star)$ as defined in Section~\ref{ssec:redundancy}
implies redundancy $(\HRedC, \HRedI)$ as defined in Section~\ref{ssec:H:redundancy}.

\subsubsection{First-Order Level}

In this subsubsection, let $\succ$ be an admissible term order for $\PFInf$ (Definition~\ref{def:PF:admissible-term-order}).

\begin{defi}[Trust]\label{def:PF:trust}
  A closure $C\closure \theta\in \clausesPF$ \emph{trusts} a closure $D\closure\rho \in N \subseteq \clausesPF$ if
  the variables in $D$ can be split into two sets $X$ and $Y$ such that
    \begin{enumerate}[label=\arabic*.,ref=\arabic*]
      \item
      \label{cond:pf:trust:corresponding-var}
      for every literal $L \in D$ containing a variable $x\in X$,
      there exists a variable $z$ in a literal $K \in C$ such that $x\rho$ is a subterm of $z\theta$
      and $L\rho \preceq K\theta$; and
      \item
      \label{cond:pf:trust:unconstrained}
      for all grounding substitutions $\rho'$ with $x\rho' = x\rho$ for all $x \not\in Y$,
      we have $D\closure \rho' \in N$.
    \end{enumerate}
  \end{defi}
  
  \begin{lem} \label{lem:pf:trust-irred}
    Let $R$ be a confluent term rewrite system oriented by $\succ$ whose only Boolean normal forms are $\itrue$ and $\ifalse$.
    If a closure $C\closure \theta\in \clausesPF$ trusts a closure $D\closure\rho \in N \subseteq \clausesPF$
    and $C\closure\theta$ is variable-irreducible, then
    there exists a closure $D\closure\rho' \in \irred_R(N)$
    with $D\closure\rho' \preceq D\closure\rho$
    such that $R \cup \{ D\closure\rho' \} \modelsolam D\closure\rho$.
  \end{lem}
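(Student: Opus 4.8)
The plan is to exploit the decomposition of the variables of $D$ supplied by the trust relation: the ``corresponding'' variables (those in $X$) already have $R$-well-behaved images, because $C\closure\theta$ is variable-irreducible, whereas the remaining variables (those in $Y$) may be freely re-instantiated, so we $R$-normalize them.

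First I would fix a split of the variables of $D$ into sets $X$ and $Y$ as in Definition~\ref{def:PF:trust}. Since $R$ is oriented by the well-founded order $\succ$ it terminates, and being confluent it has unique normal forms; write $(x\rho){\downarrow_R}$ for the $R$-normal form of the ground term $x\rho$. Define the grounding substitution $\rho'$ by $x\rho' = (x\rho){\downarrow_R}$ for $x \in Y$ and $x\rho' = x\rho$ otherwise. Because each rewrite step strictly decreases a term w.r.t.\ $\succ$, we have $x\rho' \preceq x\rho$ for every variable $x$, and by construction $\rho'$ coincides with $\rho$ on all variables outside $Y$.

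Next I would check the three requirements. (i)~\emph{$D\closure\rho' \in \irred_R(N)$.} Membership in $N$ follows directly from condition~\ref{cond:pf:trust:unconstrained} of trust, since $\rho'$ agrees with $\rho$ outside $Y$. For variable-irreducibility (Definition~\ref{def:PF:irred}), take a literal $L \in D$ and a variable $x$ occurring in $L$. If $x \in Y$, then $x\rho'$ is an $R$-normal form, hence irreducible w.r.t.\ all of $R$, and each of its Boolean subterms is itself a normal form of Boolean type and therefore equal to $\itrue$ or $\ifalse$. If $x \in X$, condition~\ref{cond:pf:trust:corresponding-var} yields a variable $z$ of some literal $K \in C$ with $x\rho$ a subterm of $z\theta$ and $L\rho \preceq K\theta$; variable-irreducibility of $C\closure\theta$ says $z\theta$ is irreducible w.r.t.\ the rules $s \rewrite t \in R$ with $K\theta \succ s \eq t$ and that its Boolean subterms are $\itrue$ or $\ifalse$, and both facts pass to the subterm $x\rho = x\rho'$. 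Since replacing the $Y$-images by $\preceq$-smaller terms gives $L\rho' \preceq L\rho \preceq K\theta$ (compatibility with contexts \ref{cond:PF:order:comp-with-contexts}, the clause extension \ref{cond:PF:order:clause-extension}, transitivity), every rule with $L\rho' \succ s \eq t$ also satisfies $K\theta \succ s \eq t$, so $x\rho'$ is irreducible w.r.t.\ all such rules. (ii)~\emph{$D\closure\rho' \preceq D\closure\rho$.} From $x\rho' \preceq x\rho$ and \ref{cond:PF:order:comp-with-contexts}, applied occurrence by occurrence, we get $D\rho' \preceq D\rho$, hence $D\closure\rho' \preceq D\closure\rho$ by \ref{cond:PF:order:closures}. (iii)~\emph{$R \cup \{D\closure\rho'\} \modelsolam D\closure\rho$.} Let $\III$ be a $\modelsolam$-interpretation satisfying $R$ and $D\closure\rho'$ (i.e.\ making $D\rho'$ true); since $\modelsolam$ refines ordinary first-order semantics, $\III$ respects the congruence generated by the equations of $R$, so $x\rho \rewrite^*_R x\rho'$ gives $\interpret{x\rho}{\III}{} = \interpret{x\rho'}{\III}{}$ for every $x$, whence $D\rho$ and $D\rho'$ have the same truth value in $\III$, so $\III \modelsolam D\closure\rho$.

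The main obstacle is the order bookkeeping in step~(i): one must argue that irreducibility of $z\theta$ \emph{relative to $K\theta$} descends to irreducibility of $x\rho'$ \emph{relative to $L\rho'$}, which hinges on the chain $L\rho' \preceq L\rho \preceq K\theta$ together with the observation that irreducibility w.r.t.\ a larger set of rules implies irreducibility w.r.t.\ any subset of it. Everything else is routine termination/confluence reasoning and congruence of first-order models.
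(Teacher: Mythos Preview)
Your proof is correct and follows essentially the same approach as the paper: define $\rho'$ by $R$-normalizing the $Y$-variables while leaving the $X$-variables alone, then verify membership in $N$ via condition~\ref{cond:pf:trust:unconstrained}, variable-irreducibility via condition~\ref{cond:pf:trust:corresponding-var} and the chain $L\rho' \preceq L\rho \preceq K\theta$, the order comparison via compatibility with contexts, and the entailment via $x\rho \rewrite_R^* x\rho'$. Your treatment of the Boolean-subterm condition is in fact slightly more explicit than the paper's, which compresses it into a single sentence.
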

  \begin{proof}
    Let $X$ and $Y$ the sets from Definition~\ref{def:PF:trust}.
    We define a substitution $\rho'$
    by $y\rho' = y\rho{\downarrow_R}$ for all variables $y\in Y$ and
    $x\rho' = x\rho$ for all variables $x\not\in Y$.
    By condition \ref{cond:pf:trust:unconstrained} of the definition of trust, we have $D\closure\rho'\in N$.
    Moreover, $D\closure\rho' \preceq D\closure\rho$ by \ref{cond:PF:order:comp-with-contexts}
    because $R$ is oriented by $\succ$.
  
    We show that $D\closure\rho'$ is also variable-irreducible.
    If a variable $y$ is in $Y$, then $y\rho'$ is reduced \wrt\ $R$ by definition of $\rho'$.
    If a variable $x$ is in $X$, then
    consider an occurrence of $x$ in a literal $L\closure\rho' \in D\closure\rho'$.
    We must show that $x\rho' = x\rho$ is irreducible \wrt\ the rules in $R$ smaller than $L \closure \rho'$.
    By condition \ref{cond:pf:trust:corresponding-var},
    there exists a variable $z$ in a literal $K \in C$ such that $x\rho$  is a subterm of $z\theta$ and $L\rho \preceq K\theta$.
    Since $C\closure\theta$ and thus $K\closure\theta$ is variable-irreducible \wrt\ $R$,
    $z\theta$ is irreducible \wrt\ the rules in $R$ smaller than $K\closure\theta$.
    Since $L\rho' \preceq L\rho \preceq K\theta$ and $x\rho = x\rho'$ is a subterm of $z\theta$, this implies that
    $x\rho'$ is irreducible \wrt\ the rules in $R$ smaller than $L \closure \rho'$.
  
    For the Boolean condition of variable-irreducibility,
    we use that $R$'s only Boolean normal forms are $\itrue$ and $\ifalse$.
  
    To show that $R \cup \{ D\closure\rho' \} \models D\closure\rho$,
    it suffices to prove that $x\rho \rewrite_R^* x\rho'$ for all $x$ in $D$.
    If $x$ is in $X$ then $x\rho = x\rho'$ and hence $x\rho \rewrite_R^* x\rho'$. Otherwise, $x$ is in $Y$ and $x\rho' = x\rho{\downarrow_R}$ and thus $x\rho \rewrite_R^* x\rho'$.
  \end{proof}

\subsubsection{Indexed Partly Substituted Ground Higher-Order Level}

In this subsubsection, let $\succ$ be an admissible term order for $\IPGInf$ (Definition~\ref{def:IPG:admissible-term-order}).

The definition of trust is almost identical to the
corresponding definition on the $\levelPF{}$ level.
However, we need to take into account that
$\levelPF{}$ level subterms correspond
to yellow subterms on the $\levelIPG{}$ level.
\begin{defi}[Trust]\label{def:IPG:trust}
A closure $C\closure\theta \in \clausesIPG$ \emph{trusts} a closure $D\closure\rho\in N\subseteq\clausesIPG$ if the variables in $D$ can be split into two sets $X$ and $Y$ such that
\begin{enumerate}[label=\arabic*.,ref=\arabic*]
\item \label{cond:IPG:trust:corresponding-var} for every literal $L\in D$ containing a variable $x\in X$,
there exists a variable $z$ in a literal $K\in C$ such that $x\rho$ is a yellow subterm of $z\theta$ and $L\rho \preceq K\theta$; and
\item \label{cond:IPG:trust:unconstrained}
for all grounding substitutions $\rho'$ with $x\rho' = x\rho$ for all $x \not\in Y$,
we have $D\closure \rho' \in N$.
\end{enumerate}
\end{defi}

\begin{lem}\label{lem:IPG:mapF-trust}
If a closure $C\closure\theta \in \clausesIPG$ trusts a closure $D\closure\rho \in N \subseteq \clausesIPG$
\wrt\ $\succ$, then $\mapF{C\closure\theta}$ trusts $\mapF{D\closure\rho} \in \mapF{N}$
\wrt\ $\succ_\mapFonly$.
\end{lem}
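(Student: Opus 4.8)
The plan is to transfer the trust structure from the $\levelIPG$ level to the $\levelPF$ level by pushing the given splitting of the variables through the encoding $\mapFonly$. Recall that by Lemma~\ref{lem:IPG:F-bijection} the map $\mapFonly$ is a bijection on closures, that $\mapF{C}\mapF{\theta} = \mapF{C\theta}$ by Lemma~\ref{lem:IPG:mapF-subst}, that yellow subterms on the $\levelIPG$ level correspond to first-order subterms by Lemma~\ref{lem:IPG:mapF-subterms}, and that $\succ_\mapFonly$ is defined so that $d \mathrel{\succ_\mapFonly} e$ iff $\mapFonly^{-1}(d) \succ \mapFonly^{-1}(e)$ (Definition~\ref{def:IPG:succ-transfer}).

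First I would fix the splitting $X, Y$ of the variables of $D$ that witnesses trust at level $\levelIPG$ according to Definition~\ref{def:IPG:trust}. Observe that the variables of $\mapF{D}$ are exactly the variables of $D$, since $\mapFonly$ encodes variables by themselves and only affects functional subterms, which contain no variables (all variables at the $\levelIPG$ level that survive into first-order terms are nonfunctional). Hence I can use the same sets $X$ and $Y$ as a candidate splitting for trust of $\mapF{C\closure\theta}$ over $\mapF{D\closure\rho}$ according to Definition~\ref{def:PF:trust}, and the substitution underlying $\mapF{D\closure\rho}$ is $\mapF{\rho}$ (with $\mapF{D}$ the clause) by Definition~\ref{def:IPG:mapF} and Lemma~\ref{lem:IPG:mapF-subst}.

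Next I would verify the two conditions of Definition~\ref{def:PF:trust}. For condition~\ref{cond:pf:trust:corresponding-var}: a literal of $\mapF{D}$ containing $x \in X$ is of the form $\mapF{L}$ for some $L \in D$ containing $x$; applying the level-$\levelIPG$ hypothesis gives a literal $K \in C$ and a variable $z$ in $K$ with $x\rho$ a yellow subterm of $z\theta$ and $L\rho \preceq K\theta$. Then $\mapF{x\rho} = x\mapF{\rho}$ is a subterm of $\mapF{z\theta} = z\mapF{\theta}$ by Lemma~\ref{lem:IPG:mapF-subterms}, the variable $z$ occurs in $\mapF{K} \in \mapF{C}$, and $L\rho \preceq K\theta$ gives $\mapF{L}\mapF{\rho} = \mapF{L\rho} \preceq_\mapFonly \mapF{K\theta} = \mapF{K}\mapF{\theta}$ by the definition of $\succ_\mapFonly$ and Lemma~\ref{lem:IPG:mapF-subst}. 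For condition~\ref{cond:pf:trust:unconstrained}: a grounding substitution $\sigma'$ on the first-order signature that agrees with $\mapF{\rho}$ off $Y$ has the form $\mapF{\rho'}$ for a unique level-$\levelIPG$ grounding substitution $\rho'$ (using surjectivity of $\mapFonly$ from Lemma~\ref{lem:IPG:F-bijection}), and $\rho'$ agrees with $\rho$ off $Y$; hence $D\closure\rho' \in N$ by the level-$\levelIPG$ hypothesis, so $\mapF{D\closure\rho'} = \mapF{D}\closure\mapF{\rho'} \in \mapF{N}$.

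I do not expect a serious obstacle here; the statement is essentially a bookkeeping transfer across the bijective encoding. The main point requiring care is making sure that the variables of $\mapF{D}$ really do coincide with those of $D$ (so that the same $X,Y$ can be reused) and that the order comparison in condition~\ref{cond:pf:trust:corresponding-var} is genuinely preserved — both of which follow directly from the already-established lemmas on $\mapFonly$ (Lemmas~\ref{lem:IPG:F-bijection}, \ref{lem:IPG:mapF-subst}, \ref{lem:IPG:mapF-subterms}) and from Definition~\ref{def:IPG:succ-transfer}. Thus the proof is a short induction-free verification of the two trust conditions.
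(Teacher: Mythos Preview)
Your proposal is correct and follows essentially the same approach as the paper: reuse the same splitting $X, Y$, verify condition~\ref{cond:pf:trust:corresponding-var} via Lemma~\ref{lem:IPG:mapF-subterms} and the definition of $\succ_\mapFonly$, and verify condition~\ref{cond:pf:trust:unconstrained} by pulling back the first-order substitution through the bijection $\mapFonly$. The paper's proof is nearly identical, only slightly terser in justifying that the variables of $D$ and $\mapF{D}$ coincide.
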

\begin{proof}
Assume $C\closure\theta$ trusts $D\closure\rho$. Thus, the variables of $D$ can be split into two sets $X$ and $Y$ such that conditions \ref{cond:IPG:trust:corresponding-var} and \ref{cond:IPG:trust:unconstrained} of the definition of trust are satisfied.
Note that the variables of $D$ and $\mapF{D}$ coincide.
We claim that $\mapF{C\closure\theta}$ trusts $\mapF{D\closure\rho}$ via the variable sets $X$ and $Y$.

Let $\mapF{L} \in \mapF{D}$ be a literal containing a variable $x\in X$.
Then $L\in D$ contains $x$ as well.
By the definition of trust (Definition~\ref{def:IPG:trust}), there exists a variable $z$ in some literal $K\in C$ such that $x\rho$ is a yellow subterm of $z\theta$ and $L\rho \preceq K\theta$.
It holds that $\mapF{L} \in \mapF{D}$ is a literal containing $x\in X$.
Since $x\rho$ is a yellow subterm of $z\theta$ we have that $\mapF{x\rho}$ is a subterm of $\mapF{z\theta}$.
Moreover, we have $\mapF{L\rho} \preceq_\mapFonly \mapF{K\theta}$.
Hence, condition \ref{cond:pf:trust:corresponding-var} of Definition~\ref{def:PF:trust} is satisfied.

It remains to show that also condition \ref{cond:pf:trust:unconstrained}
of Definition~\ref{def:PF:trust} holds---i.e.,
that for any grounding substitution $\rho'$ with $x\rho' = x\mapF{\rho}$ for all $x \not\in Y$, we have
$\mapF{D}\closure \rho' \in \mapF{N}$.
We define a substitution 
$\rho'' : x \mapsto \mapFonly^{-1}(\rho'(x))$.
Since $\mapFonly^{-1}$ maps ground terms to ground terms,
$\rho''$ is grounding.
Moreover, 
since $\mapFonly$ is bijective on ground terms,
$x\rho'' = \mapFonly^{-1}(\rho'(x)) = \mapFonly^{-1}(x\mapF{\rho}) = x\rho$ for all $x \not\in Y$.
By the definition of trust (Definition~\ref{def:IPG:trust}),
$D\closure \rho'' \in N$
and therefore $\mapF{D}\closure\rho' = \mapF{D\closure \rho''} \in \mapF{N}$.
\end{proof}

\begin{lem} \label{lem:ipg:trust-irred}
  Let $R$ be a confluent term rewrite system on $\termsPF$ oriented by $\succ_\mapFonly$ whose only Boolean normal forms are $\itrue$ and $\ifalse$.
  If a closure $C\closure \theta\in \clausesIPG$ trusts a closure $D\closure\rho \in N \subseteq \clausesIPG$
  and $C\closure\theta$ is variable-irreducible, then
  there exists a closure $D\closure\rho' \in \irred_R(N)$
  with $\mapF{D\closure\rho'} \preceq_\mapFonly \mapF{D\closure\rho}$
  such that $R \cup \{ \mapF{D\closure\rho'} \} \modelsolam \mapF{D\closure\rho}$.
\end{lem}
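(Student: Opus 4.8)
The plan is to reduce Lemma~\ref{lem:ipg:trust-irred} to the already-proven first-order version, Lemma~\ref{lem:pf:trust-irred}, by transporting the trust relation through the bijective encoding $\mapFonly$. Concretely, I would first invoke Lemma~\ref{lem:IPG:mapF-trust} to conclude that, since $C\closure\theta$ trusts $D\closure\rho$ in $\clausesIPG$ \wrt\ $\succ$, the first-order closure $\mapF{C\closure\theta}$ trusts $\mapF{D\closure\rho} \in \mapF{N}$ \wrt\ $\succ_\mapFonly$. By Lemma~\ref{lem:IPG:succ-transfer}, $\succ_\mapFonly$ is an admissible term order for $\PFInf$, so Lemma~\ref{lem:pf:trust-irred} is applicable.

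Next I would observe that variable-irreducibility of $C\closure\theta$ \wrt\ $R$ and $\succ_\mapFonly$ means, by Definition~\ref{def:IPG:irred}, precisely that $\mapF{C\closure\theta}$ is variable-irreducible \wrt\ $R$ (in the first-order sense of Definition~\ref{def:PF:irred}). Applying Lemma~\ref{lem:pf:trust-irred} to $\mapF{C\closure\theta}$, $\mapF{D\closure\rho}$, and the set $\mapF{N}$ yields a first-order closure $E'\closure\rho'_{\mathrm{F}} \in \irred_R(\mapF{N})$ with $E'\closure\rho'_{\mathrm{F}} \preceq_\mapFonly \mapF{D\closure\rho}$ and $R \cup \{E'\closure\rho'_{\mathrm{F}}\} \modelsolam \mapF{D\closure\rho}$. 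Here the inspection of the proof of Lemma~\ref{lem:pf:trust-irred} shows that $E'$ is in fact $\mapF{D}$ with a possibly-reduced substitution; more robustly, since $E'\closure\rho'_{\mathrm{F}} \in \mapF{N}$ and $\mapFonly$ is a bijection between $\clausesIPG$ and $\clausesPF$ (Lemma~\ref{lem:IPG:F-bijection}), there is a unique $D\closure\rho' \in N$ with $\mapF{D\closure\rho'} = E'\closure\rho'_{\mathrm{F}}$, and because the encoding preserves the clause part, this $D\closure\rho'$ has the same clause $D$. Then $D\closure\rho' \in \irred_R(N)$ by Definition~\ref{def:IPG:irred}, $\mapF{D\closure\rho'} = E'\closure\rho'_{\mathrm{F}} \preceq_\mapFonly \mapF{D\closure\rho}$, and $R \cup \{\mapF{D\closure\rho'}\} \modelsolam \mapF{D\closure\rho}$, which is exactly the conclusion.

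The main obstacle I anticipate is bookkeeping around the exact form of the closure produced by Lemma~\ref{lem:pf:trust-irred}: its statement guarantees a closure $D\closure\rho'$ with the \emph{same} clause $D$ (only the substitution is reduced on $Y$), so I must make sure that pulling this back along $\mapFonly$ again gives a closure over the clause $D$ rather than some unrelated first-order clause. This is fine because $\mapFonly$ acts componentwise—on each side of each literal and on each term of the substitution separately (Definition~\ref{def:IPG:mapF}/\ref{def:IPG:mapF})—so $\mapFonly$ restricted to closures with clause part $\mapF{D}$ is a bijection onto closures with clause part $\mapF{D}$, and reducing only the substitution on the $Y$-variables commutes with this correspondence. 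A minor point to state explicitly is that $D$ and $\mapF{D}$ have the same variables and the same partition $X \uplus Y$ is reused, which is already noted inside the proof of Lemma~\ref{lem:IPG:mapF-trust}. Once these identifications are made, the rest is a direct translation, so the proof is short.

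\begin{proof}
By Lemma~\ref{lem:IPG:mapF-trust}, $\mapF{C\closure\theta}$ trusts $\mapF{D\closure\rho} \in \mapF{N}$ \wrt\ $\succ_\mapFonly$.
By Lemma~\ref{lem:IPG:succ-transfer}, $\succ_\mapFonly$ is an admissible term order for $\PFInf$, and by Definition~\ref{def:IPG:irred}, $\mapF{C\closure\theta}$ is variable-irreducible \wrt\ $R$ and $\succ_\mapFonly$.
Hence Lemma~\ref{lem:pf:trust-irred} applies, yielding a closure $E' \in \irred_R(\mapF{N})$ with $E' \preceq_\mapFonly \mapF{D\closure\rho}$ and $R \cup \{E'\} \modelsolam \mapF{D\closure\rho}$; moreover, by the construction in that proof, $E'$ has the form $\mapF{D}\closure\rho'_\mathrm{F}$ for some grounding substitution $\rho'_\mathrm{F}$, so by Lemma~\ref{lem:IPG:F-bijection} there is a unique closure $D\closure\rho' \in N$ with $\mapF{D\closure\rho'} = E'$.
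Then $D\closure\rho' \in \irred_R(N)$ by Definition~\ref{def:IPG:irred}, $\mapF{D\closure\rho'} = E' \preceq_\mapFonly \mapF{D\closure\rho}$, and $R \cup \{\mapF{D\closure\rho'}\} \modelsolam \mapF{D\closure\rho}$, as required.
\end{proof}
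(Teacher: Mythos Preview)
Your proof is correct and follows essentially the same approach as the paper: transport trust along $\mapFonly$ via Lemma~\ref{lem:IPG:mapF-trust}, apply Lemma~\ref{lem:pf:trust-irred} at the $\levelPF$ level, and pull the resulting closure back using the bijectivity of $\mapFonly$ and Definition~\ref{def:IPG:irred}. Your version is slightly more explicit about why the pulled-back closure has clause part $D$ and about the admissibility of $\succ_\mapFonly$, but these are the same steps the paper either states or leaves implicit.
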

\begin{proof}
By Lemma~\ref{lem:IPG:mapF-trust},
$\mapF{C\closure\theta}$ trusts $\mapF{D\closure\rho} \in \mapF{N}$.
By Lemma~\ref{lem:pf:trust-irred},
there exists a closure $D_0\closure\rho'_0 \in \irred_R(\mapF{N}) = \mapF{\irred_R(N)}$
with $D_0\closure\rho'_0 \preceq_\mapFonly \mapF{D\closure\rho}$
such that $R \cup \{ D_0\closure\rho'_0 \} \modelsolam \mapF{D\closure\rho}$.
Thus, there must exist a closure $D\closure\rho' \in \irred_R(N)$
with $\mapF{D\closure\rho'} \preceq_\mapFonly \mapF{D\closure\rho}$
such that $R \cup \{ \mapF{D\closure\rho'} \} \modelsolam \mapF{D\closure\rho}$.
\end{proof}

\subsubsection{Partly Substituted Ground Higher-Order Level}

In this subsubsection, let $\succ$ be an admissible term order for $\PGInf$ (Definition~\ref{def:PG:admissible-term-order}).

As terms can also contain parameters we also need to account for this case in the definition of trust for the $\levelPG$ level.
Specifically, we must add a condition that the variables
in $Y$ may not appear in parameters in the trusted closure.

\begin{defi}[Trust]\label{def:PG:trust}
A closure $C\closure\theta \in \clausesPG$ \emph{trusts} a closure $D\closure\rho\in N \subseteq\clausesPG$ if the variables in $D$ can be partitioned into two sets $X$ and $Y$ such that
\begin{enumerate}[label=\arabic*.,ref=\arabic*]
	\item\label{cond:PG:trust:corresponding-var} for every literal $L\in D$ containing a variable $x\in X$ outside of parameters,
  there exists a literal $K\in C$ containing a variable $z$ outside of parameters such that $x\rho$ is a yellow subterm of $z\theta$ and $L\rho \preceq K\theta$; and
	\item\label{cond:PG:trust:unconstrained}
  all variables in $Y$ do not appear in parameters in $D$ and
  for all grounding substitutions $\rho'$ with $x\rho' = x\rho$ for all $x \not\in Y$, $D\closure \rho'\in N$.
	\end{enumerate}
\end{defi}

\begin{lem}\label{lem:PG:mapI-trust}
If a closure $C\closure\theta \in \clausesPG$ trusts a closure $D\closure\rho \in  N \subseteq\clausesPG$ \wrt\ $\succ$,
then $\mapI{C\closure\theta}$ trusts $\mapI{D\closure\rho} \in \mapI{N}$ \wrt\ $\succ_\mapIonly$.
\end{lem}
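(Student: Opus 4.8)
The plan is to mimic the proof of Lemma~\ref{lem:IPG:mapF-trust}, transferring the trust conditions through the encoding $\mapIonly$ rather than $\mapFonly$. Assume $C\closure\theta \in \clausesPG$ trusts $D\closure\rho \in N$, so the variables of $D$ split into sets $X$ and $Y$ satisfying conditions~\ref{cond:PG:trust:corresponding-var} and~\ref{cond:PG:trust:unconstrained} of Definition~\ref{def:PG:trust}. First I would observe that $\mapIonly_\rho$ applied to a clause only relocates type arguments and parameters into sub-/superscript indices and substitutes $\rho$ into parameters; in particular, the variables occurring outside of parameters in $\mapIonly_\rho(D)$ are exactly the variables occurring outside of parameters in $D$, while the variables occurring inside parameters in $D$ disappear entirely (they get absorbed into the ground index term). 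I would claim that $\mapI{C\closure\theta}$ trusts $\mapI{D\closure\rho}$ via the same partition, restricted to the variables that survive in $\mapIonly_\rho(D)$ — and since $Y$-variables, by condition~\ref{cond:PG:trust:unconstrained}, do not appear in parameters in $D$, all of $Y$ survives, and we only possibly drop some $X$-variables, which is harmless.

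Next I would verify condition~\ref{cond:IPG:trust:corresponding-var} of Definition~\ref{def:IPG:trust}: let $L'\in \mapIonly_\rho(D)$ be a literal containing a variable $x\in X$. Then $L'$ is of the form $\mapIonly_\rho(L)$ for some $L\in D$ containing $x$, and $x$ occurs in $L$ outside of parameters (since $x$ survives into $L'$). By condition~\ref{cond:PG:trust:corresponding-var} there is a literal $K\in C$ containing some $z$ outside of parameters with $x\rho$ a yellow subterm of $z\theta$ and $L\rho \preceq K\theta$. Now $K' = \mapIonly_\theta(K)$ is a literal of $\mapIonly_\theta(C)$ containing $z$, and I would check that $x\rho = x\mapI{\rho}$ is a yellow subterm of $z\theta = z\mapI{\theta}$: both are ground terms, so $\mapI{\cdot} = \mapIonly_{(\cdot)}(\cdot)$ is just the index-moving operation, which preserves yellow subterm relationships between ground terms (this should follow from the fact noted in the proof of Lemma~\ref{lem:PG:succ-transfer} that $\mapIonly$ and $\mapIonly^{-1}$ preserve yellow subterms, plus Lemma~\ref{lem:PG:mapI-subst}). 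Finally, $L\rho \preceq K\theta$ gives $\mapI{L\rho} \preceq_\mapIonly \mapI{K\theta}$ by Definition~\ref{def:PG:succ-transfer}, and $\mapI{L\rho} = \mapIonly_\rho(L)\mapI{\rho} = L'\mapI{\rho}$ and $\mapI{K\theta} = K'\mapI{\theta}$ by Lemma~\ref{lem:PG:mapI-subst}, as required.

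For condition~\ref{cond:IPG:trust:unconstrained}, I would take a grounding substitution $\rho'$ on $\VVH$ (domain of $\mapI{\rho}$) with $x\rho' = x\mapI{\rho}$ for all $x\notin Y$, and need $\mapIonly_\rho(D)\closure \rho' \in \mapI{N}$. The natural move is to build $\rho'' : x\mapsto \mapIonly^{-1}(x\rho')$, which is well-defined and grounding because $\mapIonly$ is a bijection on ground terms, and satisfies $x\rho'' = \mapIonly^{-1}(x\mapI{\rho}) = x\rho$ for all $x\notin Y$; then condition~\ref{cond:PG:trust:unconstrained} of Definition~\ref{def:PG:trust} gives $D\closure\rho''\in N$, whence $\mapI{D\closure\rho''} \in \mapI{N}$. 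The main obstacle — and the step I would scrutinize most carefully — is confirming that $\mapI{D\closure\rho''}$ actually equals $\mapIonly_\rho(D)\closure\rho'$: this requires $\mapIonly_{\rho''}(D) = \mapIonly_\rho(D)$, which holds because $\mapIonly_{(\cdot)}(D)$ only depends on the substitution through its action on the variables occurring in parameters in $D$, and those are precisely the variables outside of $Y$ (by condition~\ref{cond:PG:trust:unconstrained}), on which $\rho$ and $\rho''$ agree; and it requires $\mapI{\rho''} = \rho'$, which holds because $\mapI{\rho''}$ maps $x\mapsto \mapI{x\rho''} = \mapI{\mapIonly^{-1}(x\rho')} = x\rho'$. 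Once these identities are nailed down, the two conditions combine to give trust of $\mapI{D\closure\rho}$ by $\mapI{C\closure\theta}$, completing the proof.
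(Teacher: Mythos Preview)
Your proposal is correct and follows essentially the same approach as the paper: restrict $X$ to the variables that survive after $\mapIonly_\rho$ absorbs parameter occurrences (the paper calls this $X'$), keep $Y$ intact, and then verify both trust conditions using that $\mapIonly$ is a bijection on ground terms preserving yellow subterms, together with Lemma~\ref{lem:PG:mapI-subst}. One notational slip to tighten: you write ``$x\rho = x\mapI{\rho}$'' and ``$z\theta = z\mapI{\theta}$'', but these are not literally equal terms---$x\mapI{\rho} = \mapI{x\rho}$ lives in $\TT_\mathrm{ground}(\SigmaI)$ while $x\rho$ lives in $\TT_\mathrm{ground}(\SigmaH)$; what you actually need (and what your subsequent sentence correctly argues) is that $\mapI{x\rho}$ is a yellow subterm of $\mapI{z\theta}$, which follows because $\mapIonly$ preserves yellow subterms on ground terms.
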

\begin{proof}
	Let $C\closure\theta \in \clausesPG$ and $D\closure\rho \in \clausesPG$ such that $C\closure\theta$ trusts $D\closure\rho$.
	Thus, the variables of $D$ can be partitioned into two sets $X$ and $Y$
  such that conditions \ref{cond:PG:trust:corresponding-var} and \ref{cond:PG:trust:unconstrained} of the definition of trust are satisfied.
	Let $X'$ be the set $X$ without the variables that only occur in parameters in $D$.
  We claim that $\mapI{C\closure\theta}$ trusts $\mapI{D\closure\rho}$ using the sets $X'$ and $Y$ that split the variables in $\mapI{D\closure\rho}$.

	Since the parameters disappear after the application of the transformation $\mapIonly$,
  the set $X' \cup Y$ contains all variables occurring in $\mapI{D\closure\rho}$.
	As all the other variables are preserved by $\mapIonly$ we have that $X'$ and $Y$ are partition of the variables in $\mapI{D\closure\rho}$.

  Let $\mapIonly_{\rho}(L) \in \mapIonly_{\rho}(D)$ be
  a literal containing a variable $x\in X$.
	Then $L\in D$ contains $x\in X' \subseteq X$ outside of parameters.
	By Defintion~\ref{def:PG:trust}, there exists a literal $K\in C$ containing a variable $z$ outside of parameters such that $x\rho$ is a yellow subterm of $z\theta$ and $L\rho \preceq K\theta$.
	It holds that $\mapIonly_{\rho}(K) \in \mapIonly_{\rho}(D)$ is a literal containing $x\in X$.
	Since $x\rho$ is a yellow subterm of $z\theta$, we have that $\mapI{x\rho}$ is a yellow subterm of $\mapI{z\theta}$, since $\mapIonly$ preserves yellow subterms.
	Then, by Lemma~\ref{lem:PG:mapI-subst}, $\mapIonly_\rho(x)\mapI{\rho}$ is a yellow subterm of $\mapIonly_\theta(z)\mapI{\theta}$.
  Moreover, we have $\mapIonly_\rho(K)\mapI{\rho}=\mapI{K\rho} \preceq_\mapIonly \mapI{L\theta} = \mapIonly_\rho(L)\mapI{\theta}$ again using Lemma~\ref{lem:PG:mapI-subst}.
	Hence, condition \ref{cond:IPG:trust:corresponding-var} of Definition~\ref{def:IPG:trust} is satisfied.

  Let $\rho'$ be a grounding substitution on the $\levelIPG$ level with $x\rho' = x\mapI{\rho}$ for all $x \not\in Y$.
  Since $\mapI{D\closure\rho} = \mapIonly_\rho (D)\closure \mapI{\rho}$,
  we must show that
   $\mapIonly_\rho (D) \closure \rho' \in \mapI{N}.$
   Since $\rho'$ is a grounding substitution
   and $\mapIonly$ is a bijection on ground
   terms, there exists a substitution $\rho'' = \mapIonly^{-1}(\rho')$.
   This substitution $\rho''$ is grounding and $x\rho'' = \mapIonly^{-1}(x\rho') = \mapIonly^{-1}(\mapI{x\rho}) =  x\rho$ for all $x \not\in Y$.
  Since $C\closure\theta$ trusts $D\closure\rho$, we have
  $D\closure\rho''\in N$ and thus $\mapIonly_{\rho''}(D)\closure I(\rho'') = \mapI{D\closure\rho''} \in \mapI{N}$.
  Since the variables in $Y$ do not occur in parameters in $D$,
  we have $\mapIonly_{\rho''}(D) = \mapIonly_{\rho}(D)$.
  Moreover, $\mapI{\rho''} = \rho'$.
  Hence, condition \ref{cond:IPG:trust:unconstrained} of Definition~\ref{def:IPG:trust} is satisfied.
\end{proof}

\begin{lem} \label{lem:pg:trust-irred}
  Let $R$ be a confluent term rewrite system on $\termsPF$ oriented by $\succ_{\mapIonly\mapFonly}$
  whose only Boolean normal forms are $\itrue$ and $\ifalse$.
  If a closure $C\closure \theta\in \clausesPG$ trusts a closure $D\closure\rho \in N \subseteq \clausesPG$
  and $C\closure\theta$ is variable-irreducible, then
  there exists a closure $D\closure\rho' \in \irred_R(N)$
  with $\mapF{\mapI{D\closure\rho'}} \preceq_{\mapIonly\mapFonly} \mapF{\mapI{D\closure\rho}}$
  such that $R \cup \{ \mapF{\mapI{D\closure\rho'}} \} \modelsolam \mapF{\mapI{D\closure\rho}}$.
\end{lem}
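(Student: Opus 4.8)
The statement (Lemma~\ref{lem:pg:trust-irred}) is the $\levelPG$-level analogue of Lemmas~\ref{lem:pf:trust-irred} and~\ref{lem:ipg:trust-irred}, and the plan is to reduce it to Lemma~\ref{lem:ipg:trust-irred} via the encoding $\mapIonly$, exactly as Lemma~\ref{lem:ipg:trust-irred} itself was reduced to Lemma~\ref{lem:pf:trust-irred} via $\mapFonly$. First I would record the hypotheses: $R$ is a confluent term rewrite system on $\termsPF$ oriented by $\succ_{\mapIonly\mapFonly}$ whose only Boolean normal forms are $\itrue$ and $\ifalse$, the closure $C\closure\theta \in \clausesPG$ trusts $D\closure\rho \in N \subseteq \clausesPG$, and $C\closure\theta$ is variable-irreducible \wrt\ $R$ (which by Definition~\ref{def:PG:irred} means $\mapF{\mapI{C\closure\theta}}$ is variable-irreducible \wrt\ $R$ and $\succ_{\mapIonly\mapFonly}$).

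The key steps, in order, are: (1) Apply Lemma~\ref{lem:PG:mapI-trust} to conclude that $\mapI{C\closure\theta}$ trusts $\mapI{D\closure\rho} \in \mapI{N}$ \wrt\ $\succ_\mapIonly$. (2) Observe that $\mapI{C\closure\theta}$ is variable-irreducible \wrt\ $R$ in the $\levelIPG$ sense of Definition~\ref{def:IPG:irred}, since by Definition~\ref{def:PG:irred} this is literally the statement that $\mapF{\mapI{C\closure\theta}}$ is variable-irreducible \wrt\ $R$ and $\succ_{\mapIonly\mapFonly} = (\succ_\mapIonly)_\mapFonly$. (3) By Lemma~\ref{lem:PG:succ-transfer}, $\succ_\mapIonly$ is an admissible term order for $\IPGInf$, so the hypotheses of Lemma~\ref{lem:ipg:trust-irred} are met with $\succ$ replaced by $\succ_\mapIonly$ and $N$ replaced by $\mapI{N}$. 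Applying it yields a closure $D_0\closure\rho_0 \in \irred_R(\mapI{N})$ with $\mapF{D_0\closure\rho_0} \preceq_{(\succ_\mapIonly)_\mapFonly} \mapF{\mapI{D\closure\rho}}$ and $R \cup \{\mapF{D_0\closure\rho_0}\} \modelsolam \mapF{\mapI{D\closure\rho}}$. (4) Since $\mapIonly$ is a bijection on ground clauses and closures, and $\irred_R(\mapI{N})$ corresponds to $\mapI{\irred_R(N)}$ by Definition~\ref{def:PG:irred}, there exists $D\closure\rho' \in \irred_R(N)$ with $\mapI{D\closure\rho'} = D_0\closure\rho_0$. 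Unwinding the definitions — $\succ_{\mapIonly\mapFonly}$ is by definition $(\succ_\mapIonly)_\mapFonly$, and the entailment/ordering statements transport back verbatim — gives $\mapF{\mapI{D\closure\rho'}} \preceq_{\mapIonly\mapFonly} \mapF{\mapI{D\closure\rho}}$ and $R \cup \{\mapF{\mapI{D\closure\rho'}}\} \modelsolam \mapF{\mapI{D\closure\rho}}$, which is the claim.

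The main obstacle will be the bookkeeping in step (4), specifically verifying that the image under $\mapIonly$ of $\irred_R(N)$ is exactly $\irred_R(\mapI{N})$ in the relevant sense, so that pulling $D_0\closure\rho_0$ back to a $D\closure\rho'$ in $\irred_R(N)$ is legitimate. This follows because $\mapIonly$ is a bijection between $\clausesPG$ and $\clausesIPG$ and because $\levelPG$-variable-irreducibility was \emph{defined} through $\mapFonly \circ \mapIonly$ precisely to make this compatibility automatic (Definitions~\ref{def:PG:irred} and~\ref{def:IPG:irred}); I would spell this out in one or two sentences rather than belaboring it. Everything else is routine once the three feeder lemmas (Lemma~\ref{lem:PG:mapI-trust}, Lemma~\ref{lem:PG:succ-transfer}, Lemma~\ref{lem:ipg:trust-irred}) are invoked in the right order.
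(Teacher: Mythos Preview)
Your proposal is correct and follows essentially the same route as the paper's proof: apply Lemma~\ref{lem:PG:mapI-trust} to push trust along $\mapIonly$, invoke Lemma~\ref{lem:ipg:trust-irred} on the $\levelIPG$ level, and then pull the resulting closure back through the equality $\irred_R(\mapI{N}) = \mapI{\irred_R(N)}$ afforded by Definitions~\ref{def:PG:irred} and~\ref{def:IPG:irred}. The paper's version is simply terser, omitting the explicit mention of Lemma~\ref{lem:PG:succ-transfer} and the variable-irreducibility transfer you spell out in step~(2).
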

\begin{proof}
By Lemma~\ref{lem:PG:mapI-trust},
$\mapI{C\closure\theta}$ trusts $\mapI{D\closure\rho} \in \mapI{N}$.
By Lemma~\ref{lem:ipg:trust-irred},
there exists a closure $D_0\closure\rho'_0 \in \irred_R(\mapI{N}) = \mapI{\irred_R(N)}$
with $\mapF{D_0\closure\rho'_0} \preceq \mapF{\mapI{D\closure\rho}}$
such that $R \cup \{ \mapF{D_0\closure\rho'_0} \} \modelsolam \mapF{\mapI{D\closure\rho}}$.
Thus, there must exist a closure $D\closure\rho' \in \irred_R(N)$
with $\mapF{\mapI{D\closure\rho'}} \preceq \mapF{\mapI{D\closure\rho}}$
such that $R \cup \{ \mapF{\mapI{D\closure\rho'}} \} \modelsolam \mapF{\mapI{D\closure\rho}}$.
\end{proof}

\subsubsection{Full Higher-Order Level}

In this subsubsection, let $\succ$ be an admissible term order (Definition~\ref{def:admissible-term-order}),
extended to be an admissible term order for $\PGInf$ as in Section~\ref{ssec:H:redundancy}.
We have defined trust for level $\levelH$ in Defintion~\ref{def:H:trust}.

\begin{lem}\label{lem:G:mapP-trust}
Let $C\constraint{S} \in \clausesH$ and $D\constraint{T} \in N \subseteq \clausesH$.
Let $C\theta \in \gnd(C\constraint{S})$ and  $D\rho \in \gnd(D\constraint{T})$.
If the $\theta$-instance of $C\constraint{S}$ trusts the $\rho$-instance of $D\constraint{T}$,
then $\mapP{C\closure\theta}$ trusts $\mapP{D\closure\rho} \in \mapPG{N}$.
\end{lem}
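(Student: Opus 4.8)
The plan is to unfold the definition of trust on level $\levelH$ (Definition~\ref{def:H:trust}) and the definition of trust on level $\levelPG$ (Definition~\ref{def:PG:trust}) and exhibit an explicit partition of the variables of $\mapPonly_\rho(D)$ into sets $X$ and $Y$ witnessing the trust relation. The natural choice is to let $Y$ collect exactly the variables introduced by $\mapp{\rho}$ that correspond to subterms of $x\rho$ for variables $x$ of $D$ falling under condition~\ref{cond:H:trust:unconstrained} of Definition~\ref{def:H:trust}, and to let $X$ be all remaining variables of $\mapp{\rho}$-image; i.e., those introduced from variables of $D$ satisfying condition~\ref{cond:H:trust:corresponding-var}.

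First I would recall that $\mapP{D\closure\rho} = D\mapp{\rho}\closure\mapq{\rho}$ by definition of $\mapPonly$, and likewise $\mapP{C\closure\theta} = C\mapp{\theta}\closure\mapq{\theta}$; the variables of $D\mapp{\rho}$ are exactly the variables $y_p$ introduced by $\mapp{\rho}$ together with the (nonfunctional) variables of $D$ that $\mapp{\rho}$ leaves untouched, which in turn are precisely the variables that $\rho$ maps to nonfunctional ground terms. For each variable $x$ of $D$, Definition~\ref{def:H:trust} gives us one of two cases. In the case~\ref{cond:H:trust:corresponding-var}, for each literal $L \in D$ with $x$ outside parameters there is a literal $K \in C$ and a substitution $\sigma$ with $z\theta = z\sigma\rho$ for all $z$ in $C$ and $L \preceq K\sigma$; I would then verify condition~\ref{cond:PG:trust:corresponding-var} of Definition~\ref{def:PG:trust} by tracking how $\mapp{\rho}$ replaces an outermost nonfunctional yellow subterm of $x\rho$ at a position $q$ by a variable $y_q$, noting that $y_q\mapq{\rho} = x\rho|_q$ is a yellow subterm of $x\rho = x\mapp{\rho}\mapq{\rho}$, which via $z\theta = z\sigma\rho$ and the stability/eligibility-lifting bookkeeping is a yellow subterm of some $z\theta = z\mapp{\theta}\mapq{\theta}$; the order comparison $L\rho \preceq K\theta$ on level $\levelPG$ would follow from $L \preceq K\sigma$ by \ref{cond:order:stability-clauses} applied with $\rho$, using $K\sigma\rho = K\theta$ on the relevant variables. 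In the case~\ref{cond:H:trust:unconstrained}, where $x$ neither occurs in parameters in $D$ nor in $T$, I would place all the $\mapp{\rho}$-introduced variables coming from $x\rho$ (and $x$ itself if untouched) into $Y$, and check condition~\ref{cond:PG:trust:unconstrained}: they do not occur in parameters of $D\mapp{\rho}$ (immediate, since $x$ does not occur in parameters of $D$, and De Bruijn-bound material never sits in parameters), and for any grounding $\rho'$ on level $\levelPG$ agreeing with $\mapq{\rho}$ off $Y$, the closure $D\mapp{\rho}\closure\rho'$ is again a ground instance of $\mapP{D\constraint{T}}$ — here I would use that $x \notin T$ to conclude that re-substituting the values of the $Y$-variables still yields a valid ground instance of $D\constraint{T}$ whose image under $\mapPonly$ is $D\mapp{\rho}\closure\rho'$, hence lies in $\mapPG{N}$; Lemmas~\ref{lem:G:mapp-comp-mapq}, \ref{lem:G:mapp-mapp}, and \ref{lem:G:mapq-mapp} about the interaction of $\mapponly$, $\mapqonly$ with further substitution would do the technical reconciliation.

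The main obstacle I expect is the precise match-up between the variable $z$ of $C$ supplied by condition~\ref{cond:H:trust:corresponding-var} (which lives on the $\levelH$ clause, with the auxiliary substitution $\sigma$) and a variable of $\mapp{\theta}$-image in $\mapP{C\closure\theta}$ that Definition~\ref{def:PG:trust} demands occur outside parameters with $x\rho$ as a yellow subterm of its $\mapq{\theta}$-value: one has to chase $x\rho$ through $z\theta = z\sigma\rho$, observe that $x\rho$ being a yellow subterm of $z\sigma\rho$ and $z\sigma\rho = z\mapp{\theta}\mapq{\theta}$ means it descends to a yellow subterm of some $\mapp{\theta}$-introduced variable's value (or of $z\mapp{\theta}$ itself when $z\mapp{\theta}$ is already ground-free), using the analysis in the proof of Lemma~\ref{lem:G:mapp-most-general} that outermost nonfunctional yellow positions are preserved, and then pick that variable as the required witness. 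Getting the ``outside of parameters'' clause right on both sides — remembering that $\mapponly$ never touches parameters, so a variable occurring outside parameters in $D$ yields $\mapp{\rho}$-introduced variables occurring outside parameters in $D\mapp{\rho}$ — is the delicate bookkeeping step, but it is bounded by careful case analysis rather than any deep new idea. Once the two conditions of Definition~\ref{def:PG:trust} are established with this $X$/$Y$ partition, the lemma follows, and it slots into the chain culminating in Lemmas~\ref{lem:PG:mapI-trust} and \ref{lem:pg:trust-irred} to relate simple redundancy to redundancy.
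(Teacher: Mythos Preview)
Your overall plan and your treatment of the $Y'$ case match the paper's proof: define $X'$ and $Y'$ as the $\mapp{\rho}$-introduced variables stemming from variables of $D$ satisfying~\ref{cond:H:trust:corresponding-var} and~\ref{cond:H:trust:unconstrained} respectively, and for $Y'$ use that the relevant variables avoid parameters and $T$, together with Lemmas~\ref{lem:G:mapp-mapp} and~\ref{lem:G:mapq-mapp}, to show $D\mapp{\rho}\closure\rho' \in \mapPG{N}$.

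The $X'$ case, however, has a real gap. Condition~\ref{cond:H:trust:corresponding-var} of Definition~\ref{def:H:trust} does \emph{not} supply a specific variable $z$ of $C$; it only supplies a literal $K \in C$ and a substitution $\sigma$ with $L \preceq K\sigma$ and $z\theta = z\sigma\rho$ for \emph{all} $z$ in $C$. From this alone there is no reason why a variable $x'$ occurring in $L\mapp{\rho}$ outside parameters should occur anywhere in $K\sigma\mapp{\rho}$: the relation $L \preceq K\sigma$ is an order comparison, not a containment, and $x$ need not occur in $K\sigma$ at all. Your ``stability/eligibility-lifting bookkeeping'' and the analysis from Lemma~\ref{lem:G:mapp-most-general} do not bridge this; they only tell you how $\mapp{\cdot}$ behaves on a single term, not how variable occurrences transfer across an order inequality.

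The paper closes this gap with the order property~\ref{cond:order:variable}, which is tailor-made for exactly this step: since $K\sigma \succeq L$ and $\mapp{\rho}$ maps type variables to ground types and term variables to terms with only nonfunctional variables, any variable occurring outside parameters in $L\mapp{\rho}$ must also occur outside parameters in $K\sigma\mapp{\rho}$. Only then does the paper invoke Lemma~\ref{lem:G:mapp-comp-subst} (rather than Lemma~\ref{lem:G:mapp-most-general}) to write $K\sigma\mapp{\rho} = K\mapp{\sigma\rho}\pi = K\mapp{\theta}\pi$, locate a variable $z'$ of $K\mapp{\theta}$ outside parameters with $x'$ a yellow subterm of $z'\pi$, and conclude that $x'\mapq{\rho}$ is a yellow subterm of $z'\pi\mapq{\rho} = z'\mapq{\theta}$. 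Without~\ref{cond:order:variable} your argument for condition~\ref{cond:PG:trust:corresponding-var} does not go through.
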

\begin{proof}
Let $X$ and $Y$ be a partition of the variables in $D$
such that the variables in $X$ fulfill condition~\ref{cond:H:trust:corresponding-var}
and the variables in $Y$ fulfill condition~\ref{cond:H:trust:unconstrained}.
Let $X'$ be the set of variables occurring in $\mapP{D\closure\rho}$
originating from $x\mapp{\rho}$ for some $x \in X$.
Define the set $Y'$ analogously.
We claim that $\mapP{C\closure\theta}$ trusts $\mapP{D\closure\rho}$ using the sets $X'$ and $Y'$.
It holds that $X'$ and $Y'$ are a partition of the variables in $\mapP{D\closure\rho}$.

\medskip\noindent
\textsc{Regarding $X'$:}\enskip We need to show that for every literal $L' \in D\mapp{\rho}$ containing a variable $x'\in X'$ outside of parameters, there
exists a literal $K' \in C\mapp{\theta}$ containing a variable $z'$ outside of parameters such that $x'\mapq{\rho}$
is a yellow subterm of $z'\mapq{\theta}$ and $L'\mapq{\rho} \preceq K'\mapq{\theta}$.

Any literal $L' \in D\mapp{\rho}$ containing a variable $x'\in X'$ outside of parameters
must originate from a literal $L \in D$ containing a variable 
$x \in X$ outside of parameters, where $L' = L\mapp{\rho}$ and $x'$ occurs in $x\mapp{\rho}$.
By condition \ref{cond:H:trust:corresponding-var}
of Definition~\ref{def:H:trust},
this implies that there exists a literal $K \in C$ 
and a substitution $\sigma$ such that $z\theta = z\sigma\rho$ for all variables $z$ in
$C$ and $L \preceq K\sigma$.

By \ref{cond:order:variable} with the substitution $\mapp{\rho}$,
since $x'$ occurs outside of parameters of $L' = L\mapp{\rho} = L\mapp{\rho}$,
it also occurs outside of parameters of $K\sigma\mapp{\rho}$.

By 
Lemma~\ref{lem:G:mapp-comp-subst},
there exists a substitution $\pi$ such that
$\sigma\mapp{\rho} = \mapp{\sigma\rho}\pi$ 
and $\mapq{\sigma\rho} = \pi\mapq{\rho}$.
So $K\sigma\mapp{\rho} = K\mapp{\sigma\rho}\pi = K\mapp{\theta}\pi$
and $x'$ occurs outside of parameters in this literal.
Since $K\mapp{\theta}$ contains only nonfunctional variables,
there exists a variable $z'$ occurring outside of parameters in $K\mapp{\theta}$ such that
$x'$ is a yellow subterm of $z'\pi$.
Thus,
$x'\mapq{\rho}$ 
is a yellow subterm of $z'\pi\mapq{\rho} = z'\mapq{\sigma\rho} = z'\mapq{\theta}$, which is what we needed to show.

\medskip\noindent
\textsc{Regarding $Y'$:}\enskip We must show that for all grounding substitutions $\rho'$ 
with $x\rho' = x\mapq{\rho}$ for all $x \not\in Y'$, we have $D\mapp{\rho}\closure\rho' \in \mapPG{N}$.

Let $\rho'$ be a substitution with $x\rho' = x\mapq{\rho}$ for all $x \not\in Y'$.
Then, for all variables $x \not\in Y$, we have $x\mapp{\rho}\rho' = x\mapq{\rho}\mapq{\rho} = x\rho$
by Lemma~\ref{lem:G:mapp-comp-mapq}.
By condition \ref{cond:H:trust:unconstrained}
of Definition~\ref{def:H:trust},
the variables in $Y$ do not appear in the constraints $T$ of $D\constraint{T}$.
So, $T\mapp{\rho}\rho' = T\rho$ and thus
$\mapp{\rho}\rho'$ is true. Therefore,
$D\closure\mapp{\rho}\rho' \in \mapG{N}$
and 
$\mapP{D\closure\mapp{\rho}\rho'} \in \mapPG{N}$.
By Lemma~\ref{lem:G:mapp-mapp} and \ref{lem:G:mapq-mapp},
$\mapP{D\closure\mapp{\rho}\rho'} = 
D\mapp{\mapp{\rho}\rho'}\closure\mapq{\mapp{\rho}\rho'} =
D\mapp{\rho}\closure\rho'$
because $x\rho' = x\mapq{\rho}$ for all $x \not\in Y'$
and in particular for all $x$ not introduced by $\mapp{\rho}$.
Therefore, $D\mapp{\rho}\closure\rho' \in \mapPG{N}$.

A variable in $y\mapp{\rho}$ can occur in a parameter in $\mapP{D\closure\rho}$
only if the variable $y$ occurs in a parameter in $D$.
Since all variables in $Y$ do not appear in parameters, the variables in $Y'$ do not appear in parameters either.
\end{proof}

\begin{lem} \label{lem:g:trust-irred}
  Let $R$ be a confluent term rewrite system on $\termsPF$ oriented by $\succ_{\mapIonly\mapFonly}$
  whose only Boolean normal forms are $\itrue$ and $\ifalse$.
  Let $C\constraint{S} \in \clausesH$ and $D\constraint{T} \in N \subseteq \clausesH$.
  Let $C\theta \in \gnd(C\constraint{S})$ and  $D\rho \in \gnd(D\constraint{T})$.
  If the $\theta$-instance of $C\constraint{S}$ trusts the $\rho$-instance of $D\constraint{T}$
  and $C\closure\theta$ is variable-irreducible, then
  there exists a closure $D\closure\rho' \in \irred_R(\mapG{N})$
  with $\mapF{\mapI{\mapP{D\closure\rho'}}} \preceq_{\mapIonly\mapFonly} \mapF{\mapI{\mapP{D\closure\rho}}}$
  such that $R \cup \{ \mapF{\mapI{\mapP{D\closure\rho'}}} \} \modelsolam \mapF{\mapI{\mapP{D\closure\rho}}}$.
\end{lem}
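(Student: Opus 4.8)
The plan is to climb one level: reuse the $\levelPG$-level result Lemma~\ref{lem:pg:trust-irred} exactly as that lemma was obtained from its $\levelIPG$-level counterpart, bridging the gap with Lemma~\ref{lem:G:mapP-trust}. First I would apply Lemma~\ref{lem:G:mapP-trust}: since $C\theta \in \gnd(C\constraint{S})$, $D\rho \in \gnd(D\constraint{T})$, and the $\theta$-instance of $C\constraint{S}$ trusts the $\rho$-instance of $D\constraint{T}$, the closure $\mapP{C\closure\theta}$ trusts $\mapP{D\closure\rho} \in \mapPG{N}$ in the sense of Definition~\ref{def:PG:trust}. Next I would observe that $\mapP{C\closure\theta}$ is variable-irreducible \wrt\ $R$ as a $\levelPG$-closure: by Definition~\ref{def:PG:irred} this means that $\mapF{\mapI{\mapP{C\closure\theta}}}$ is variable-irreducible, which is precisely what Definition~\ref{def:G:irred} says when it asserts (as we are given) that $C\closure\theta$ is variable-irreducible. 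So the two hypotheses of Lemma~\ref{lem:pg:trust-irred}, instantiated with $\mapP{C\closure\theta}$, $\mapP{D\closure\rho}$, and the clause set $\mapPG{N} \subseteq \clausesPG$, are satisfied.

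Applying Lemma~\ref{lem:pg:trust-irred} then produces, since the clause part of $\mapP{D\closure\rho}$ is $D\mapp{\rho}$, a closure $(D\mapp{\rho})\closure\rho'_0 \in \irred_R(\mapPG{N})$ with $\mapF{\mapI{(D\mapp{\rho})\closure\rho'_0}} \preceq_{\mapIonly\mapFonly} \mapF{\mapI{\mapP{D\closure\rho}}}$ and $R \cup \{\mapF{\mapI{(D\mapp{\rho})\closure\rho'_0}}\} \modelsolam \mapF{\mapI{\mapP{D\closure\rho}}}$. Now I would pull this back to level $\levelG$. Because $\irred_R(\mapPG{N}) \subseteq \mapPG{N} = \mapP{\mapG{N}}$, there is a closure $D\closure\rho' \in \mapG{N}$ with $\mapP{D\closure\rho'} = (D\mapp{\rho})\closure\rho'_0$; here I would check, just as in the closing paragraph of the proof of Lemma~\ref{lem:G:mapP-trust}, that the clause part can indeed be taken to be the original $D$, using that $\mapPonly$ acts on the clause part only through $\mapponly$ and that the $\rho'_0$ coming out of Lemma~\ref{lem:pg:trust-irred} differs from $\mapq{\rho}$ only by $R$-normalising some variables, so the unconstrainedness half of trust keeps it in $\mapP{\mapG{N}}$ over $D$. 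Rewriting $(D\mapp{\rho})\closure\rho'_0$ as $\mapP{D\closure\rho'}$ then turns the two displayed statements into exactly the required $\mapF{\mapI{\mapP{D\closure\rho'}}} \preceq_{\mapIonly\mapFonly} \mapF{\mapI{\mapP{D\closure\rho}}}$ and $R \cup \{\mapF{\mapI{\mapP{D\closure\rho'}}}\} \modelsolam \mapF{\mapI{\mapP{D\closure\rho}}}$, and $D\closure\rho' \in \irred_R(\mapG{N})$ follows since variable-irreducibility at level $\levelG$ and at level $\levelPG$ are literally the same condition on $\mapF{\mapI{\mapP{D\closure\rho'}}}$ (Definitions~\ref{def:G:irred} and~\ref{def:PG:irred}).

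The bulk of the argument is thus the routine composition of two already-proved lemmas, and the order/entailment claims transfer verbatim because both levels phrase them through the composite encoding $\mapFonly\circ\mapIonly\circ\mapPonly$. The one spot that needs a little care, and which I expect to be the main obstacle, is the pullback step: confirming that the $\levelPG$-closure delivered by Lemma~\ref{lem:pg:trust-irred} actually lies in $\mapP{\mapG{N}}$ with $D$ itself as its clause part, rather than merely in $\clausesPG$. This is exactly where condition~\ref{cond:H:trust:unconstrained} of the $\levelH$-definition of trust (transported down via Lemma~\ref{lem:G:mapP-trust}) is used, mirroring the role of condition~\ref{cond:pf:trust:unconstrained} and its analogues on the lower levels, so I would keep this sub-argument parallel to the corresponding reasoning in the proofs of Lemmas~\ref{lem:G:mapP-trust} and~\ref{lem:pg:trust-irred}.
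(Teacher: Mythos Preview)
Your approach is essentially the paper's: apply Lemma~\ref{lem:G:mapP-trust} to push trust down to level $\levelPG$, invoke Lemma~\ref{lem:pg:trust-irred} there, and pull the resulting closure back using that variable-irreducibility at levels $\levelG$ and $\levelPG$ are the same condition (so $\irred_R(\mapPG{N}) = \mapP{\irred_R(\mapG{N})}$). The paper's proof is terser and in fact slightly sloppy (it writes $\mapI{C\closure\theta}$ where $\mapP{C\closure\theta}$ is meant, and $\irred_R(\mapP{N})$, $\irred_R(N)$ where $\irred_R(\mapPG{N})$, $\irred_R(\mapG{N})$ are meant); your version is more careful, particularly in flagging the pullback step, but the argument is the same.
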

\begin{proof}
By Lemma~\ref{lem:G:mapP-trust},
$\mapI{C\closure\theta}$ trusts $\mapP{D\closure\rho} \in \mapPG{N}$.
By Lemma~\ref{lem:pg:trust-irred},
there exists a closure $D_0\closure\rho'_0 \in \irred_R(\mapP{N}) = \mapP{\irred_R(N)}$
with $\mapF{\mapI{D_0\closure\rho'_0}} \preceq_{\mapIonly\mapFonly} \mapF{\mapI{\mapP{D\closure\rho}}}$
such that $R \cup \{ \mapF{\mapI{D_0\closure\rho'_0}} \} \modelsolam \mapF{\mapI{\mapP{D\closure\rho}}}$.
Thus, there must exist a closure $D\closure\rho' \in \irred_R(N)$
with $\mapF{\mapI{\mapP{D\closure\rho'}}} \preceq_{\mapIonly\mapFonly} \mapF{\mapI{\mapP{D\closure\rho}}}$
such that $R \cup \{ \mapF{\mapI{\mapP{D\closure\rho'}}} \} \modelsolam \mapF{\mapI{\mapP{D\closure\rho}}}$.
\end{proof}

\begin{lem}\label{lem:H:simple-RedC}
  Let $N \subseteq \clausesH$. Then $\HRedC^\star(N) \subseteq \HRedC(N)$.
\end{lem}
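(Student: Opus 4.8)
The plan is to fix a constrained clause $C\constraint{S}\in\HRedC^\star(N)$, an arbitrary confluent term rewrite system $R$ on $\termsPF$ oriented by $\succ_{\mapIonly\mapFonly}$ whose only Boolean normal forms are $\itrue$ and $\ifalse$, and an arbitrary $C'\in\irred_R(\fipg{C\constraint{S}})$, and to verify that $C'$ satisfies one of the two conditions of Definition~\ref{def:H:RedC}. By the definition of $\mapGonly$ and the encodings, $C' = \fip{C\closure\theta}$ for some grounding substitution $\theta$ with $S\theta$ true, so $C\theta\in\gnd(C\constraint{S})$; moreover, since $C'$ is variable-irreducible, $C\closure\theta$ is variable-irreducible \wrt\ $R$ by Definition~\ref{def:G:irred}. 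Simple redundancy (Definition~\ref{def:H:simple-redundancy}) applied to this $\theta$ yields either condition~\ref{cond:red:entailed-by-smaller} or condition~\ref{cond:red:subsumed}, and I would treat the two cases separately. The two workhorses throughout are Lemma~\ref{lem:fo-eq-tfip}, which identifies $\mapT{\fip{D\closure\zeta}}$ with $\mapF{D\zeta}$ for any clause $D$ and grounding $\zeta$, and Lemma~\ref{lem:g:trust-irred}, which, from a trust relation plus variable-irreducibility of $C\closure\theta$, produces a variable-irreducible instance that is $\preceq_{\mapIonly\mapFonly}$-below the original and entails it over $R$.

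In the first case we are given an index set $I$, ground instances $D_i\rho_i$ of clauses $D_i\constraint{T_i}\in N$ with $\mapF{\{D_i\rho_i\mid i\in I\}}\modelsolam\mapF{C\theta}$, with $D_i\rho_i\prec C\theta$, and with the $\theta$-instance of $C\constraint{S}$ trusting the $\rho_i$-instance of $D_i\constraint{T_i}$. For each $i$, Lemma~\ref{lem:g:trust-irred} furnishes a closure $D_i\closure\rho_i'\in\irred_R(\mapG{N})$ with $\fip{D_i\closure\rho_i'}\preceq_{\mapIonly\mapFonly}\fip{D_i\closure\rho_i}$ and $R\cup\{\fip{D_i\closure\rho_i'}\}\modelsolam\fip{D_i\closure\rho_i}$. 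Next I would translate the order condition: using Lemma~\ref{lem:fo-eq-tfip}, Lemma~\ref{lem:PG:mapI-mapF-fol}, the definition of $\succ_{\mapIonly\mapFonly}$, and property~\ref{cond:PF:order:closures} of admissibility for $\PFInf$ (which holds for $\succ_{\mapIonly\mapFonly}$ by Lemmas~\ref{lem:G:admissible-term-order}, \ref{lem:PG:succ-transfer}, and~\ref{lem:IPG:succ-transfer}), the relation $D_i\rho_i\prec C\theta$ becomes $\fip{D_i\closure\rho_i}\prec_{\mapIonly\mapFonly}C'$, hence $\fip{D_i\closure\rho_i'}\prec_{\mapIonly\mapFonly}C'$. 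Unfolding the closure semantics through $\mapTonly$ and again invoking Lemma~\ref{lem:fo-eq-tfip}, the entailments $R\cup\{\fip{D_i\closure\rho_i'}\}\modelsolam\fip{D_i\closure\rho_i}$ together with $\mapF{\{D_i\rho_i\mid i\in I\}}\modelsolam\mapF{C\theta}$ yield $R\cup\{\fip{D_i\closure\rho_i'}\mid i\in I\}\modelsolam C'$. Since $\{\fip{D_i\closure\rho_i'}\mid i\in I\}\subseteq\{E\in\irred_R(\fipg{N})\mid E\prec_{\mapIonly\mapFonly}C'\}$ and $\modelsolam$ is monotone, condition~\ref{cond:H:RedC:entailment} of Definition~\ref{def:H:RedC} holds.

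In the second case we are given a ground instance $D\rho$ of some $D\constraint{T}\in N$ with $D\rho = C\theta$, with $C\constraint{S}\sqsupset D\constraint{T}$, and with the $\theta$-instance of $C\constraint{S}$ trusting the $\rho$-instance of $D\constraint{T}$. Applying Lemma~\ref{lem:g:trust-irred} with the singleton set $\{D\constraint{T}\}$ in place of $N$ — legitimate because trust and $R$ do not depend on $N$ — produces $D\closure\rho'\in\irred_R(\mapG{D\constraint{T}})$ with $\fip{D\closure\rho'}\preceq_{\mapIonly\mapFonly}\fip{D\closure\rho}$ and $R\cup\{\fip{D\closure\rho'}\}\modelsolam\fip{D\closure\rho}$. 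By Lemma~\ref{lem:fo-eq-tfip} and $D\rho = C\theta$, the closures $\fip{D\closure\rho}$ and $C'$ have the same $\mapTonly$-image $\mapF{C\theta}$. If $\mapT{\fip{D\closure\rho'}} = \mapT{C'}$, I would take $D' := \fip{D\closure\rho'}$ and check condition~\ref{cond:H:RedC:subsumed} of Definition~\ref{def:H:RedC}: $D'\in\irred_R(\fipg{D\constraint{T}})$ by construction, $C\constraint{S}\sqsupset D\constraint{T}$ is given, and $\mapT{D'}=\mapT{C'}$. Otherwise $\mapT{\fip{D\closure\rho'}}\neq\mapT{C'}$, so $\fip{D\closure\rho'}\prec_{\mapIonly\mapFonly}\fip{D\closure\rho}$ strictly and hence $\fip{D\closure\rho'}\prec_{\mapIonly\mapFonly}C'$; combining $R\cup\{\fip{D\closure\rho'}\}\modelsolam\fip{D\closure\rho}$ with $\mapT{\fip{D\closure\rho}}=\mapT{C'}$ gives $R\cup\{\fip{D\closure\rho'}\}\modelsolam C'$, and since $\fip{D\closure\rho'}\in\irred_R(\fipg{N})$ this establishes condition~\ref{cond:H:RedC:entailment} of Definition~\ref{def:H:RedC} with a one-element side-set.

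The step I expect to cause the most trouble is the bookkeeping around transferring order and entailment facts up and down the tower of encodings: confirming that $\succ_{\mapIonly\mapFonly}$ really is an admissible term order for $\PFInf$ (so that its behaviour on closures is governed by property~\ref{cond:PF:order:closures}), keeping track of how $\fip{\cdot}$ interacts with grounding substitutions via Lemma~\ref{lem:fo-eq-tfip}, and, in the second case, the dichotomy on whether the trust-induced instance $D\closure\rho'$ flattens exactly to $\mapT{C'}$ (forcing condition~\ref{cond:H:RedC:subsumed}) or strictly below it (forcing condition~\ref{cond:H:RedC:entailment}). Apart from that, the proof is a direct unfolding of the definitions.
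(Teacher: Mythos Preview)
Your proposal is correct and follows essentially the same approach as the paper's proof: unpack $C'$ as $\fip{C\closure\theta}$, branch on which condition of simple redundancy applies, and in each case invoke Lemma~\ref{lem:g:trust-irred} together with Lemma~\ref{lem:fo-eq-tfip} to produce the required variable-irreducible witnesses. Your two small deviations---applying Lemma~\ref{lem:g:trust-irred} with the singleton $\{D\constraint{T}\}$ rather than $N$, and splitting the second case on equality of $\mapTonly$-images rather than on equality of closures---are harmless (indeed the $\mapTonly$-image split is arguably cleaner, since that is exactly what condition~\ref{cond:H:RedC:subsumed} of Definition~\ref{def:H:RedC} tests).
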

\begin{proof}
Let $C\constraint{S} \in \HRedC^\star(N)$.
Let $R$ be a confluent term rewrite system on $\termsF$ oriented by $\succ_{\mapIonly\mapFonly}$
whose only Boolean normal forms are $\itrue$ and $\ifalse$.
Let $C'\closure\theta' \in \irred_R(\fipg{C\constraint{S}})$, i.e,
there exists some $C\theta \in \gnd(C\constraint{S})$
such that $\fip{C\closure\theta} = C'\closure\theta'$.
By Lemma~\ref{lem:fo-eq-tfip},
$\mapF{C\theta} = C'\theta'$.

We make a case distinction on which condition of Definition~\ref{def:H:simple-redundancy}
applies to $C\theta \in \gnd(C\constraint{S})$.

\medskip

\noindent\textsc{Condition~\ref{cond:red:entailed-by-smaller}:}\enskip
There exist
an indexing set $I$ and for each $i\in I$
a ground instance $D_i \rho_i$ of a clause
$D_i \constraint{T_i} \in N$,
such that
\begin{enumerate}[label=(\alph*)]
  \item\label{cond:redproof:entailment} $\mapF{\{D_i\rho_i\mid i\in I\}} \modelsolam \mapF{C\theta}$; %
  \item\label{cond:redproof:order} for all $i \in I$, $D_i \rho_i \prec C \theta$; and
  \item\label{cond:redproof:trust} for all $i \in I$, the $\theta$-instance of $C\constraint S$ trusts the $\rho_i$-instance
  of $D_i\constraint {T_i}$.
\end{enumerate}
We show that $C\constraint{S} \in \HRedC(N)$
by condition~\ref{cond:H:RedC:entailment} of Definition~\ref{def:H:RedC}, i.e.,
we show that
\[R \cup \{E \in \irred_R(\fipg{N}) \mid E \prec_{\mapIonly\mapFonly} C'\theta'\}\modelsolam C'\theta'\]
By Lemma~\ref{lem:g:trust-irred} and point \ref{cond:redproof:trust} above,
there exists a closure $D'_i\closure\rho'_i \in \irred_R(\fipg{N})$
with $D'_i\closure\rho'_i \preceq_{\mapIonly\mapFonly} \mapF{\mapI{\mapP{D_i\closure\rho_i}}}$
such that $R \cup \{ D'_i\closure\rho'_i \} \modelsolam \mapF{\mapI{\mapP{D_i\closure\rho_i}}}$.
By Lemma~\ref{lem:fo-eq-tfip},
$D'_i\rho'_i \preceq_{\mapIonly\mapFonly} \mapF{D_i\rho_i}$
and
$R \cup \{ D'_i\rho'_i \} \modelsolam \mapF{D_i\rho_i}$.
With point \ref{cond:redproof:entailment} above, it follows that
\[R \cup \{ D'_i\rho'_i \mid i \in I \} \modelsolam \mapF{C\theta}\]
It remains to show that
$D'_i\rho'_i \prec_{\mapIonly\mapFonly} \mapF{C\theta}$ for all $i \in I$.
Since $D'_i\rho'_i \preceq_{\mapIonly\mapFonly} \mapF{D_i\rho_i}$,
it suffices to show that
$\mapF{D_i\rho_i} \prec_{\mapIonly\mapFonly} \mapF{C\theta}$.
This follows from point \ref{cond:redproof:order} above,
Lemma~\ref{lem:fo-eq-tfip},
and Definitions~\ref{def:IPG:succ-transfer} and \ref{def:PG:succ-transfer}.

\medskip

\noindent\textsc{Condition~\ref{cond:red:subsumed}:}\enskip
There exists a
ground instance $D\rho$ of some
$D\constraint {T}\in N$
such that
\begin{enumerate}[label=(\alph*)]
  \item\label{cond:redproof:subsumed:eq} $D\rho = C\theta$; %
  \item\label{cond:redproof:subsumed:sqsupset} $C \constraint S \sqsupset D\constraint {T}$; and
  \item\label{cond:redproof:subsumed:trust} the $\theta$-instance of $C\constraint S$ trusts the $\rho$-instance
  of $D\constraint {T}$.
\end{enumerate}
By Lemma~\ref{lem:g:trust-irred} and point \ref{cond:redproof:subsumed:trust} above,
there exists a closure $D'\closure\rho' \in \irred_R(\fipg{N})$
with $D'\closure\rho' \preceq_{\mapIonly\mapFonly} \mapF{\mapI{\mapP{D\closure\rho}}}$
such that $R \cup \{ D'\closure\rho'\} \modelsolam \mapF{\mapI{\mapP{D\closure\rho}}}$.

We distinguish two subcases.

\medskip

\noindent\textsc{Case 1:}\enskip
$D'\closure\rho' = \mapF{\mapI{\mapP{D\closure\rho}}}$.
Then we can show that $C\constraint{S} \in \HRedC(N)$
by condition~\ref{cond:H:RedC:subsumed} of Definition~\ref{def:H:RedC},
using points \ref{cond:redproof:subsumed:eq} and \ref{cond:redproof:subsumed:sqsupset} above
and Lemma~\ref{lem:fo-eq-tfip}.

\medskip

\noindent\textsc{Case 2:}\enskip
$D'\closure\rho' \prec_{\mapIonly\mapFonly} \mapF{\mapI{\mapP{D\closure\rho}}}$.
Then we show that $C\constraint{S} \in \HRedC(N)$
by condition~\ref{cond:H:RedC:entailment} of Definition~\ref{def:H:RedC}---i.e.,
\[R \cup \{E \in \irred_R(\fipg{N}) \mid E \prec_{\mapIonly\mapFonly} C'\theta'\}\modelsolam C'\theta'\]
Since $C'\theta' = \mapF{C\theta} = \mapF{D\rho} = \tfip{D\closure\rho}$ by point \ref{cond:redproof:subsumed:eq} above and Lemma~\ref{lem:fo-eq-tfip},
it suffices to show
\[R \cup \{E \in \irred_R(\fipg{N}) \mid E \prec_{\mapIonly\mapFonly} \fip{D\closure\rho}\}\modelsolam \fip{D\closure\rho}\]
This follows directly from the three defining properties of $D'\closure\rho'$ above.
\end{proof}

\begin{lem}\label{lem:H:simple-RedI}
  Let $N \subseteq \clausesH$. Then $\HRedI^\star(N) \subseteq \HRedI(N)$.
 \end{lem}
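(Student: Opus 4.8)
The plan is to follow the proof of Lemma~\ref{lem:H:simple-RedC} almost verbatim, replacing the single clause instance by the conclusion instance $C_{m+1}\theta_{m+1}$ and the main-premise instance $C_m\theta_m$, and splitting the argument according to whether $\iota$ is a $\Diff$ inference. Concretely, let $\iota\in\HRedI^\star(N)$ have premises $C_1\constraint{S_1},\dots,C_m\constraint{S_m}$ and conclusion $C_{m+1}\constraint{S_{m+1}}$. To show $\iota\in\HRedI(N)$ I would fix an arbitrary tuple $(\theta_1,\dots,\theta_{m+1})$ for which $\iota$ is rooted in $\FInf$ (so in particular $S_{m+1}\theta_{m+1}$ is true by Definition~\ref{def:fol-inferences}, hence $C_{m+1}\theta_{m+1}$ is a ground instance of $C_{m+1}\constraint{S_{m+1}}$) and an arbitrary confluent term rewrite system $R$ oriented by $\succ_{\mapIonly\mapFonly}$ whose only Boolean normal forms are $\itrue$ and $\ifalse$ such that $C_{m+1}\closure\theta_{m+1}$ is variable-irreducible. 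Applying Definition~\ref{def:H:simple-inference-redundancy} to this same tuple---the definitions of $\HRedI^\star$ and $\HRedI$ quantify over exactly the same tuples---yields an index set $I$ and, for each $i\in I$, a ground instance $D_i\rho_i$ of a clause $D_i\constraint{T_i}\in N$ satisfying conditions~\ref{cond:HRedI:entailment}--\ref{cond:HRedI:trust}.

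For each $i\in I$, condition~\ref{cond:HRedI:trust} states that the $\theta_{m+1}$-instance of $C_{m+1}\constraint{S_{m+1}}$ trusts the $\rho_i$-instance of $D_i\constraint{T_i}$, and $C_{m+1}\closure\theta_{m+1}$ is variable-irreducible by assumption; so Lemma~\ref{lem:g:trust-irred}, followed by Lemma~\ref{lem:fo-eq-tfip} and Definition~\ref{def:G:irred} to pass from closures to ground clauses (exactly as in the proof of Lemma~\ref{lem:H:simple-RedC}), provides a closure $E_i\in\irred_R(\fipg{N})$ with $E_i\preceq_{\mapIonly\mapFonly}\mapF{D_i\rho_i}$ and $R\cup\{E_i\}\modelsolam\mapF{D_i\rho_i}$. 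Combining these entailments over all $i\in I$ with condition~\ref{cond:HRedI:entailment} gives $R\cup\{E_i\mid i\in I\}\modelsolam\mapF{C_{m+1}\theta_{m+1}}$.

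It remains to check that each $E_i$ belongs to the set $O$ of Definition~\ref{def:H:RedI}. If $\iota$ is a $\Diff$ inference, then $O=\irred_R(\fipg{N})$ and the entailment above is already the desired $R\cup O\modelsolam\mapF{C_{m+1}\theta_{m+1}}$. Otherwise, condition~\ref{cond:HRedI:order} gives $D_i\rho_i\prec C_m\theta_m$ for all $i$; as in the proof of Lemma~\ref{lem:H:simple-RedC}, Lemma~\ref{lem:fo-eq-tfip} together with Definitions~\ref{def:IPG:succ-transfer} and~\ref{def:PG:succ-transfer} turns this into $\mapF{D_i\rho_i}\prec_{\mapIonly\mapFonly}\mapF{C_m\theta_m}$, so $E_i\preceq_{\mapIonly\mapFonly}\mapF{D_i\rho_i}\prec_{\mapIonly\mapFonly}\mapF{C_m\theta_m}$ and hence $E_i\in O=\{E\in\irred_R(\fipg{N})\mid E\prec_{\mapIonly\mapFonly}\mapF{C_m\theta_m}\}$. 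Either way $R\cup O\modelsolam\mapF{C_{m+1}\theta_{m+1}}$, which is what Definition~\ref{def:H:RedI} requires. The argument introduces no new ideas beyond Lemma~\ref{lem:H:simple-RedC}; the only things requiring care are that the quantified tuple is shared between the two definitions of inference redundancy and that the level encodings $\mapGonly,\mapPonly,\mapIonly,\mapFonly,\mapTonly$ are tracked consistently when passing between closures and ground clauses. The mild $\Diff$/non-$\Diff$ case split is the one place where a slip would matter, since the order bound on $O$ is present precisely for the non-$\Diff$ rules---this is the main (and only mildly tricky) obstacle.
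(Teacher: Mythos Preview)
Your proposal is correct and follows essentially the same approach as the paper's proof: fix the tuple and the rewrite system, extract the witnesses $D_i\rho_i$ from Definition~\ref{def:H:simple-inference-redundancy}, apply Lemma~\ref{lem:g:trust-irred} to each via the trust condition, pass to the first-order encoding using Lemma~\ref{lem:fo-eq-tfip}, combine the resulting entailments with condition~\ref{cond:HRedI:entailment}, and handle the order bound via condition~\ref{cond:HRedI:order} together with Definitions~\ref{def:IPG:succ-transfer} and~\ref{def:PG:succ-transfer} in the non-$\Diff$ case. The only cosmetic difference is that the paper names the witness closures $D'_i\closure\rho'_i$ and tracks the closure/ground-clause distinction explicitly, whereas you write $E_i$ for both; this does not affect correctness.
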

\begin{proof}
Let $\iota \in \HRedI^\star(N)$.
Let $C_1\constraint{S_1}$, \dots, $C_m\constraint{S_m}$ be its premises and
  $C_{m+1}\constraint{S_{m+1}}$ its conclusion.
Let $\theta_1, \dots, \theta_{m+1}$ be a tuple of substitutions
for which $\iota$ is rooted in $\FInf{}$ (Definition~\ref{def:fol-inferences}).
Since $\iota \in \HRedI^\star(N)$, by Definition~\ref{def:H:simple-inference-redundancy},
there exists an indexing set $I$ and for each $i \in I$
  a ground instance $D_i \rho_i$ of a clause
  $D_i \constraint{T_i} \in N$,
  such that
  \begin{enumerate}[label=\arabic*.,ref=\arabic*]
    \item[\ref{cond:HRedI:entailment}.] $\mapF{\{D_i\rho_i\mid i\in I\}} \modelsolam \mapF{C_{m+1}\theta_{m+1}}$; %
    \item[\ref{cond:HRedI:order}.] $\iota$ is a $\Diff$ inference or for all $i \in I$, $D_i  \rho_i \prec C_m\theta_m$; and
    \item[\ref{cond:HRedI:trust}.] for all $i \in I$, the $\theta_{m+1}$-instance of $C_{m+1}\constraint{S_{m+1}}$ trusts the $\rho_i$-instance
    of $D_i\constraint {T_i}$.
  \end{enumerate}
By Definition~\ref{def:H:RedI}, we must show that for all confluent term rewrite systems $R$ oriented by $\succ_{\mapIonly\mapFonly}$
whose only Boolean normal forms are $\itrue$ and $\ifalse$
such that $C_{m+1}\closure\theta_{m+1}$ is variable-irreducible, we have
\[R \cup O \modelsolam \mapF{C_{m+1}\theta_{m+1}}\]
where 
$O = \irred_R(\fipg{N})$
if $\iota$ is a $\Diff$ inference and
$O = \{E \in \irred_R(\fipg{N}) \mid E \prec_{\mapIonly\mapFonly} \mapF{C_m\theta_m}\}$
otherwise.

By Lemma~\ref{lem:g:trust-irred} and point \ref{cond:HRedI:trust} above,
there exists a closure $D'_i\closure\rho'_i \in \irred_R(\fipg{N})$
with $D'_i\closure\rho'_i \preceq_{\mapIonly\mapFonly} \mapF{\mapI{\mapP{D_i\closure\rho_i}}}$
such that $R \cup \{ D'_i\closure\rho'_i \} \modelsolam \mapF{\mapI{\mapP{D_i\closure\rho_i}}}$.
By Lemma~\ref{lem:fo-eq-tfip},
$D'_i\rho'_i \preceq_{\mapIonly\mapFonly} \mapF{D_i\rho_i}$
and
$R \cup \{ D'_i\rho'_i \} \modelsolam \mapF{D_i\rho_i}$.
With point \ref{cond:HRedI:entailment} above, it follows that
\[R \cup \{ D'_i\rho'_i \mid i \in I \} \modelsolam \mapF{C_{m+1}\theta_{m+1}}\]
If $\iota$ is a $\Diff$ inference, we are done.
For the other inferences, it remains to show that
$D'_i\rho'_i \prec_{\mapIonly\mapFonly} \mapF{C_m\theta_m}$ for all $i \in I$.
Since $D'_i\rho'_i \preceq_{\mapIonly\mapFonly} \mapF{D_i\rho_i}$,
it suffices to show that
$\mapF{D_i\rho_i} \prec_{\mapIonly\mapFonly} \mapF{C_m\theta_m}$.
This follows from point \ref{cond:HRedI:order} above,
Lemma~\ref{lem:fo-eq-tfip},
and Definitions~\ref{def:IPG:succ-transfer} and \ref{def:PG:succ-transfer}.
\end{proof}

\end{full}

\subsection{Model Construction}
\label{ssec:model-construction}

In this subsection,
we construct models of saturated clause sets,
starting with a first-order model and lifting it through the levels.
Using the results of Section~\ref{ssec:redundancy-criteria-and-saturation},
we prove a completeness property for each of the calculi
that roughly states
the following.
\begin{full}%
For any saturated set $N_\infty$
that does not contain an empty closure,
there exists a term rewrite system $R$ and a corresponding
interpretation $\III$ such that $\III$ is a model of the
closures in $N_\infty$ that are variable-irreducible \wrt\ $R$.
\end{full}%
\begin{slim}%
For any saturated set $N_\infty$ that does not contain the empty clause,
there exists a model of $N_\infty$.
\end{slim}

\begin{full}%
Moreover, to prepare the eventual extension the model of the variable-irreducible instances
to all instances, for each level,
we show a property that roughly states the following:
For any set of closures $N_0$
that contains all closures
of the form $C\closure\rho$ for all $\rho$ whenever it contains a closure $C\closure\theta$,
then the variable-irreducible instances of $N_0$ entail all of $N_0$.
\end{full}%

Finally, in level $\levelH$,
we bring everything together by showing that the constructed model is
also a model 
\begin{full}of the variable-irreducible ground instances\end{full}
of $N_0$\begin{full} and thus of $N_0$ itself\end{full}.
It follows that the calculus $\HInf$ is refutationally complete.

\subsubsection{First-Order Levels}

In this subsubsection, let $\succ$ be an admissible term order for $\PFInf$ (Definition~\ref{def:PF:admissible-term-order}),
and let $\mathit{\slimfull{}{p}fsel}$ be a selection function on 
\slimfull{$\clausesF$}{$\clausesPF$ (Definition~\ref{def:PF:selection-function})}.

The completeness proof for $\PFInf$
relies on constructing a first-order term rewrite system.
For any first-order term rewrite system $R$, there exists
a first-order interpretation, which we also denote $R$, such that
$R \modelsfol s \ceq t$ if and only if $s \leftrightarrow^*_R t$.
Formally, this can be implemented by
a first-order interpretation whose universe for each type $\tau$ consists of
the $R$-equivalence classes of ground terms of type $\tau$.

\begin{defi}[$\Rbasic_N$]
\label{def:RbasicN}
Let $N$ be a set of ground first-order \slimfull{clauses}{closures}
with $\bot \not\in \slimfull{N}{\mapT{N}}$.
By well-founded induction, we define
term rewrite systems $\Rbasic_e$ and $\Deltabasic_e$
for all ground clauses and ground terms $e \in \termsF \cup \clausesF$
and finally a term rewrite system $R_N$.
As our well-founded order on $\termsF \cup \clausesF$,
we employ our term and clause order $\succ$.
To compare terms with clauses, we define a term $s$ to be larger than a clause $C$ if and only if
$s$ is larger than every term in $C$.
Formally, this can be defined using the clause order by Bachmair and Ganzinger \cite[\Section~2.4]{bachmair-ganzinger-1994}
and encoding a term $s$ as the multiset $\{\{\{s\}\}\}$.

\begin{enumerate}[label=($\Delta$\arabic*)]
  \item\label{cond:deltac-logical-boolean} \textsc{Logical Boolean rewrites:} Given a term $s$, let $\Delta_s = \{s \rewrite t\}$ if
  \begin{itemize}
    \item 
  $(s, t)$ is one of the following:
  \begin{align*}
    &(\inot \ifalse, \itrue)                     && (\itrue \iand \itrue, \itrue)   && (\itrue \ior \itrue, \itrue)     && (\itrue \iimplies \itrue, \itrue) && (u \ieq^\tau u, \itrue) \\
    &(\inot \itrue, \ifalse)                     && (\itrue \iand \ifalse, \ifalse) && (\itrue \ior \ifalse, \itrue)    && (\itrue \iimplies \ifalse, \ifalse) && (u \ieq^\tau v, \ifalse) \text{ with } u \ne v \\
    &                                            && (\ifalse \iand \itrue, \ifalse) && (\ifalse \ior \itrue, \itrue)    && (\ifalse \iimplies \itrue, \itrue) && (u \ineq^\tau u, \ifalse) \\
    &                                            && (\ifalse \iand \ifalse, \ifalse) && (\ifalse \ior \ifalse, \ifalse) && (\ifalse \iimplies \ifalse, \itrue) && (u \ineq^\tau v, \itrue) \text{ with } u \ne v
    \end{align*}
    \item $s$ is irreducible \wrt\ $\Rbasic_s$.
  \end{itemize}

  \item\label{cond:deltac-backstop-boolean} \textsc{Backstop Boolean rewrites:} Given a clause $C$, let $\Delta_C = \{s \rewrite \ifalse\}$ if
  \begin{itemize}
    \item $C = s \ceq \ifalse$;
    \item $s\notin\{\ifalse,\itrue\}$;
    \item $s$ is irreducible \wrt\ $\Rbasic_C$.
  \end{itemize}

  \item\label{cond:deltac-function} \textsc{Function rewrites:}
  Given a clause $C$, let $\Delta_C = \{ \mapF{u}\rewrite \mapF{w} \}$ if
  \begin{itemize}
    \item $C = \mapF{u} \ceq \mapF{w}$ for functional terms $u$ and $w$;
    \item 
    $\mapF{u} \succ \mapF{w}$
    \item
    $\mapF{u\>\diff^{\tau,\upsilon}_{s,t}} \leftrightrewrite^*_{R_C} \mapF{v\>\diff^{\tau,\upsilon}_{s,t}}$ for all $s,t$;
    \item $\mapF{u}$ is irreducible \wrt\ $\Rbasic_C$.
  \end{itemize}

  \item\label{cond:deltac-produced} \textsc{Produced rewrites:}
  Given a clause $C$, let $\Deltabasic_C = \{s \rewrite t\}$ if
  \begin{enumerate}[label=(CC\arabic*), leftmargin=3.2em]
    \begin{full}
    \item there exists a closure $C_0\closure\theta \in N$ such that $C = C_0\theta$; \label{cond:deltac-instance}
    \item $C_0\closure\theta$ is variable-irreducible \wrt\ $R_C$; \label{cond:deltac-varirred}
    \end{full}
    \item $C = C' \llor s \ceq t$ for some clause $C'$ and terms $s$ and $t$;\label{cond:deltac-split-clause}
    \item $s$ is nonfunctional; \label{cond:deltac-no-fun}
    \item the root of $s$ is not a logical symbol;\label{cond:deltac-logical-symbol}
    \item if $t$ is Boolean, then $t = \itrue$ \label{cond:deltac-boolean}
    \item $s \succ t$; \label{cond:deltac-order}
    \item $s \eq t$ is maximal in $C$;\label{cond:deltac-maximal}
    \item there are no selected literals in $\slimfull{C}{C_0\closure\theta}$;\label{cond:deltac-no-sel}
    \item $s$ is irreducible by $\Rbasic_C$;  \label{cond:deltac-irreducible}
    \item $\Rbasic_C \not\modelsfol C$; \label{cond:deltac-false}
    \item $\Rbasic_C \cup \{s \rewrite t\} \not\modelsfol C'$.\label{cond:deltac-rest-false}
  \end{enumerate}
  In this case, we say that \slimfull{$C$}{$C_0\closure\theta$} produces $s \rewrite t$
  and that \slimfull{$C$}{$C_0\closure\theta$} is \emph{productive}.
  \item
  For all other terms and clauses $e$, Let $\Deltabasic_e = \emptyset$.
\end{enumerate}

Let $\Rbasic_e = \bigcup_{f\prec e} \Deltabasic_f$.
Let $\Rbasic_N = \bigcup_{e\in\termsF \cup\clausesF} \Deltabasic_e$.
\end{defi}

\begin{lem}\label{lem:RN-oriented}
  The rewrite systems $\Rbasic_C$ and $\Rbasic_N$
  do not have critical pairs and are oriented by $\succ$.
\end{lem}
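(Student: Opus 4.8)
The plan is to prove both claims—the absence of critical pairs and orientation by $\succ$—by a combined well-founded induction on the index $e \in \termsF \cup \clausesF$ (ordered by $\succ$), analyzing each of the five cases in Definition~\ref{def:RbasicN} that can put a rule into some $\Deltabasic_e$.

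First I would handle orientation, which is the easier half. For rules added by \ref{cond:deltac-logical-boolean} and \ref{cond:deltac-backstop-boolean} I would invoke \ref{cond:PF:order:t-f-minimal} together with \ref{cond:PF:order:comp-with-contexts}/\ref{cond:PF:order:subterm}: in every listed pair $(s,t)$ the right-hand side is $\itrue$, $\ifalse$, or a proper subterm, all of which are strictly smaller than $s$ (noting $s \notin \{\itrue,\ifalse\}$ in the backstop case). For \ref{cond:deltac-function}, orientation is immediate from the explicit side condition $\mapF{u} \succ \mapF{w}$. For \ref{cond:deltac-produced} it is the side condition $s \succ t$ (CC-order). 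Since $\Rbasic_C = \bigcup_{f \prec C} \Deltabasic_f$ and $\Rbasic_N$ is the union over all $e$, and each individual $\Deltabasic_e$ is oriented, the unions are oriented by $\succ$ as well.

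For the absence of critical pairs I would argue that the left-hand sides of all rules in $\Rbasic_N$ are pairwise distinct and that no left-hand side is a proper subterm of another left-hand side. The key structural observation is that whenever a rule $s \rewrite t$ is produced while processing index $e$, the side conditions demand that $s$ be \emph{irreducible with respect to $\Rbasic_e$} (conditions \ref{cond:deltac-logical-boolean}'s last bullet, \ref{cond:deltac-backstop-boolean}'s last bullet, \ref{cond:deltac-function}'s last bullet, and \ref{cond:deltac-irreducible}). Consider two distinct rules $s_1 \rewrite t_1 \in \Deltabasic_{e_1}$ and $s_2 \rewrite t_2 \in \Deltabasic_{e_2}$ with, say, $e_1 \prec e_2$; I would show $s_1 \not\succeq s_2$ is impossible to exploit because $s_2$ is irreducible w.r.t.\ $\Rbasic_{e_2} \supseteq \Deltabasic_{e_1}$, so $s_1$ cannot occur as a subterm of $s_2$, ruling out overlaps where the smaller left-hand side sits inside the larger one; and an overlap the other way (a subterm of $s_1$ unifying with $s_2$) would force $s_2 \preceq s_1 \prec s_2$ by the subterm property, a contradiction. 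Equality of left-hand sides $s_1 = s_2$ I would exclude within each case: for the logical/backstop Boolean rules the left-hand sides are all syntactically distinct ground terms headed by logical symbols or of the shape $s \ceq \ifalse$; for function rules, a given $\mapF{u}$ determines at most one rule since the target is fixed by the clause; and for produced rules, standard superposition-style reasoning using conditions \ref{cond:deltac-logical-symbol}, \ref{cond:deltac-irreducible}, \ref{cond:deltac-false}, \ref{cond:deltac-rest-false} shows that a term $s$ is produced by at most one clause and that a produced $s$ cannot coincide with a left-hand side from another case (the root of a produced $s$ is not logical, excluding \ref{cond:deltac-logical-boolean}; and a produced $s$ is nonfunctional, excluding \ref{cond:deltac-function}; the backstop case \ref{cond:deltac-backstop-boolean} only fires when $s$ is $\Rbasic_C$-irreducible, so it cannot collide with an earlier produced rule either).

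I expect the main obstacle to be the bookkeeping around the \emph{mutual} dependence of the $\Delta$ and $\Rbasic$ families: each $\Deltabasic_e$ is defined in terms of $\Rbasic_e$, which is itself a union of $\Deltabasic_f$ over $f \prec e$, so the irreducibility side conditions must be threaded carefully through the induction to guarantee that once a left-hand side $s$ is "locked in" at some index it is never reducible by any later-added rule, which is what ultimately forbids overlaps. This is the standard Bachmair--Ganzinger model-construction argument, and the cleanest way to present it is to first prove, as an auxiliary invariant carried through the induction, that for every $e$ the system $\Rbasic_e$ has no critical pairs and that every $s$ that is the left-hand side of a rule in $\Rbasic_N$ is irreducible w.r.t.\ all rules of $\Rbasic_N$ other than $s \rewrite t$ itself; the two statements of the lemma then follow immediately.
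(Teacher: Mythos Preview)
Your approach is essentially the paper's, but one step is not justified as written. In your second bullet for critical pairs you claim that if $s_2$ is a subterm of $s_1$ (with $e_1 \prec e_2$) this ``would force $s_2 \preceq s_1 \prec s_2$ by the subterm property.'' The subterm property only gives you $s_2 \preceq s_1$; the inequality $s_1 \prec s_2$ does \emph{not} follow from the subterm property and you never say where it comes from. What you need here is precisely the observation the paper makes explicit: inspecting the four cases of Definition~\ref{def:RbasicN}, the left-hand side of the rule in $\Deltabasic_e$ is always the maximal term appearing in $e$ (for term indices it is $e$ itself; for clause indices it is the larger side of the distinguished literal), and hence $e_1 \prec e_2$ implies $s_1 \preceq s_2$. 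Once you state this, your two bullets collapse into the paper's single argument: $s_1 \preceq s_2$ rules out $s_2$ being a proper subterm of $s_1$ by the subterm property, and irreducibility of $s_2$ with respect to $R_{e_2} \supseteq \Deltabasic_{e_1}$ rules out $s_1$ being a subterm of $s_2$.

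Your subsequent paragraph on excluding $s_1 = s_2$ ``within each case'' is entirely unnecessary: the case $s_1 = s_2$ with $e_1 \prec e_2$ is already covered by the irreducibility argument (then $s_2 = s_1$ is reducible by $\Deltabasic_{e_1} \subseteq R_{e_2}$), and $e_1 = e_2$ cannot produce two distinct rules since each $\Deltabasic_e$ is a singleton or empty. So all of that case analysis, and the invariant you propose to carry through an induction, can be dropped.
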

\begin{proof}
  It is easy to check that all rules in $\Rbasic_C$ and $\Rbasic_N$ are oriented by $\succ$,
  using \ref{cond:PF:order:t-f-minimal}.

  To show the absence of critical pairs, suppose that there exists a critical pair $s \rewrite t$ and $s' \rewrite t'$ in $\Rbasic_N$,
  originating from $\Delta_e$ and $\Delta_{e'}$ respectively, for some $e,e' \in \termsF \cup \clausesF$.
  Without loss, we assume $e \succ e'$.
  Inspecting the rules of Definition~\ref{def:RbasicN},
  it follows that $s \succeq s'$.
  By the subterm property \ref{cond:PF:order:subterm},
  $s$ cannot be a proper subterm of $s'$.
  So for the rules to be a critical pair, $s'$ must be a subterm of $s$.
  But then $s$ is not irreducible by $\Delta_{e'}\subseteq \Rbasic_{e}$, contradicting
  the irreducibility conditions of Definition~\ref{def:RbasicN}.
\end{proof}

\begin{lem}\label{lem:RN-bool-normal}
  The normal form of any ground Boolean term \wrt\ $\Rbasic_N$ is $\itrue$ or $\ifalse$.
\end{lem}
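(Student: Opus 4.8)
The statement asserts that for any ground Boolean term $s$, its $\Rbasic_N$-normal form is $\itrue$ or $\ifalse$. Since $\Rbasic_N$ is terminating (oriented by the well-founded $\succ$ via Lemma~\ref{lem:RN-oriented}) and has no critical pairs (hence is confluent), every $s$ has a unique normal form $s{\downarrow}$, so it suffices to show that if $s$ is a ground Boolean term \emph{irreducible} \wrt\ $\Rbasic_N$, then $s \in \{\itrue, \ifalse\}$. I would argue by well-founded induction on $s$ \wrt\ $\succ$, assuming the claim for all smaller ground Boolean terms.

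\textbf{The case analysis on the head of $s$.} Since $s$ is a ground first-order term over $\mapF{\SigmaI}$, write $s = \cst{f}(\bar{a})$ for its head symbol $\cst{f}$ and arguments $\bar{a}$. First, if $s = \itrue$ or $s = \ifalse$ we are done. Otherwise, I would split on whether $\cst{f}$ is a logical symbol. \emph{Logical head:} if $\cst{f} \in \{\inot, \iand, \ior, \iimplies, \ieq^\tau, \ineq^\tau\}$, then each argument of $s$ that is of Boolean type is a smaller ground Boolean term (by the subterm property \ref{cond:PF:order:subterm}, and these arguments are $\preceq s$, in fact $\prec s$), so by the induction hypothesis each such argument is $\itrue$ or $\ifalse$. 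For $\inot$, $\iand$, $\ior$, $\iimplies$, that makes $s$ match one of the left-hand sides in rule \ref{cond:deltac-logical-boolean}; for $\ieq^\tau$ and $\ineq^\tau$ the two arguments $u, v$ are irreducible (subterms of the irreducible $s$) and $s$ matches $u \ieq^\tau u$/$u \ieq^\tau v$ etc.\ depending on whether $u = v$. In every subcase $\Delta_s = \{s \rewrite t\}$ would fire provided $s$ is irreducible \wrt\ $\Rbasic_s$ --- which it is, since $\Rbasic_s \subseteq \Rbasic_N$ and $s$ is $\Rbasic_N$-irreducible --- contradicting irreducibility of $s$. \emph{Non-logical head:} here I would use the backstop rule \ref{cond:deltac-backstop-boolean}. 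Consider the singleton clause $C = (s \ceq \ifalse)$. We have $s \notin \{\itrue, \ifalse\}$ by assumption, and $s$ is irreducible \wrt\ $\Rbasic_C \subseteq \Rbasic_N$. Hence $\Delta_C = \{s \rewrite \ifalse\} \subseteq \Rbasic_N$, again contradicting irreducibility of $s$.

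\textbf{Subtleties and the main obstacle.} Two technical points need care. First, I should confirm that $s$ cannot be headed by a $\cst{fun}_t$ symbol: such a symbol has a function type $\tau \fun \upsilon$, not the Boolean type $\omicron$, so a Boolean $s$ is never of that form, and likewise it cannot be an $\cst{f}^{\bar\tau}_{\bar u}$ of functional type --- so the head is either a logical symbol or a non-logical symbol returning Boolean, exactly the two cases above. Second, I need that $\Rbasic_N$ is confluent to speak of \emph{the} normal form; Lemma~\ref{lem:RN-oriented} gives termination plus no critical pairs, and for a terminating rewrite system the absence of critical pairs yields confluence by Newman's lemma together with the critical pair lemma, so unique normal forms exist. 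The main obstacle --- really the only subtle spot --- is making the applicability of the $\Delta$-rules airtight: I must check that being irreducible \wrt\ $\Rbasic_N$ implies being irreducible \wrt\ the smaller systems $\Rbasic_s$ and $\Rbasic_C$ (immediate from $\Rbasic_s, \Rbasic_C \subseteq \Rbasic_N$) and that all the \emph{other} side conditions of \ref{cond:deltac-logical-boolean} and \ref{cond:deltac-backstop-boolean} are vacuous or already verified, so that the rule genuinely would be added to $\Rbasic_N$, producing the desired contradiction. Once that bookkeeping is in place, the induction closes.
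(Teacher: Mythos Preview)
Your proof is correct and follows essentially the same inductive skeleton as the paper: well-founded induction on $s$ with respect to $\succ$, reducing the claim to showing that every ground Boolean term other than $\itrue$ or $\ifalse$ is reducible by $\Rbasic_N$. The one difference is that your case split on logical versus non-logical head is unnecessary: the backstop rule \ref{cond:deltac-backstop-boolean} applies uniformly to \emph{any} Boolean term $s \notin \{\itrue,\ifalse\}$ that is irreducible by $\Rbasic_C$ (where $C = s \ceq \ifalse$), regardless of whether the head is logical. The paper's proof exploits this and dispenses with the logical-head case entirely, never invoking \ref{cond:deltac-logical-boolean} or the induction hypothesis on arguments. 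Your extra case does no harm, but it can simply be dropped.
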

\begin{proof}
  Inspecting the rules of Definition~\ref{def:RbasicN},
  in particular \ref{cond:deltac-logical-symbol},
  we see that $\itrue$ and $\ifalse$ are irreducible \wrt\ $\Rbasic_N$.

  It remains to show that any ground Boolean term $s$ reduces to $\itrue$ or $\ifalse$.
  We prove the claim by induction on $s$ \wrt\ $\succ$.
  If $s = \itrue$ or $s = \ifalse$, we are done.
  Otherwise, consider the rule \ref{cond:deltac-backstop-boolean}
  for $C = s \ceq \ifalse$.
  Either $s$ is reducible by $\Rbasic_C$ or
  \ref{cond:deltac-backstop-boolean} triggers, making $s$ reducible by $\Delta_C$.
  In both cases, $s$ is reducible by $\Rbasic_N$.
  Let $s'$ be the result of reducing $s$ by $\Rbasic_N$.
  By Lemma~\ref{lem:RN-oriented}, $s \succ s'$.
  By the induction hypothesis, $s'$ reduces to $\itrue$ or $\ifalse$.
  Therefore, $s$ reduces to $\itrue$ or $\ifalse$.
\end{proof}

\begin{lem} \label{lem:preservation-of-truth}
  For all ground clauses $C$, if $\Rbasic_C \modelsfol C$, then $\Rbasic_N \modelsfol C$.
\end{lem}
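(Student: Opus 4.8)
The plan is to exploit that both $\Rbasic_C$ and $\Rbasic_N$ are convergent (by Lemma~\ref{lem:RN-oriented} they have no critical pairs and are oriented by the well-founded order $\succ$, hence terminating and confluent), so that the truth of a literal in such a system is decided by comparing $\succ$-normal forms. I would fix a literal $L \in C$ that is true in $\Rbasic_C$; since $\Rbasic_C \subseteq \Rbasic_N$, it suffices to show that $L$, and hence $C$, is still true in $\Rbasic_N$. If $L$ is a positive literal $s \ceq t$, then $s \leftrightarrow^*_{\Rbasic_C} t$ immediately gives $s \leftrightarrow^*_{\Rbasic_N} t$, because adding rules only enlarges the conversion relation; this case is trivial and all the work is in the negative case.

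So suppose $L = s \cneq t$ is true in $\Rbasic_C$, i.e.\ $s{\downarrow}_{\Rbasic_C} \neq t{\downarrow}_{\Rbasic_C}$. The key claim is that every rewrite step $\Rbasic_N$ applies in reducing $s$, $t$, or any of their $\Rbasic_N$-reducts uses a rule that already belongs to $\Rbasic_C$. Granting this, an $\Rbasic_N$-reduction of $s$ to its normal form is also an $\Rbasic_C$-reduction terminating at a term that is $\Rbasic_N$-normal, hence $\Rbasic_C$-normal, so by confluence of $\Rbasic_C$ we get $s{\downarrow}_{\Rbasic_N} = s{\downarrow}_{\Rbasic_C}$; symmetrically $t{\downarrow}_{\Rbasic_N} = t{\downarrow}_{\Rbasic_C}$. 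Since these differ, $L$ stays true in $\Rbasic_N$ and the lemma follows.

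To prove the claim I would follow a reduction sequence $s = s_0 \succ s_1 \succ \cdots$, where strict descent holds because every rule of $\Rbasic_N$ is oriented by $\succ$ and $\succ$ is compatible with contexts (\ref{cond:PF:order:comp-with-contexts}). A step at $s_i$ rewrites a subterm $u$, and by the subterm property \ref{cond:PF:order:subterm} we have $u \preceq s$ when $i = 0$ and $u \prec s$ when $i \geq 1$. Now I inspect which $\Deltabasic_f$ the applied rule $u \rewrite v$ comes from: in every case (logical Boolean rewrites \ref{cond:deltac-logical-boolean}, backstop Boolean rewrites \ref{cond:deltac-backstop-boolean}, function rewrites \ref{cond:deltac-function}, produced rewrites \ref{cond:deltac-produced}) the index $f$ is either the term $u$ itself or a clause $D$ whose $\succ$-maximal literal is a positive equation with maximal term $u$ (namely $u \ceq v$, $u \ceq \ifalse$, or $\mapF{u} \ceq \mapF{w}$, in each case with $u$ strictly above the other side). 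Since $s$ occurs in the negative literal $s \cneq t$ of $C$, the multiset by which the standard clause order compares that literal strictly dominates any positive literal all of whose terms are $\preceq s$; in particular it dominates the maximal literal of $D$ (whose terms are $u \preceq s$ and $v \prec u$), so $D \prec C$, and a parallel, easier computation gives $u \prec C$ in the combined term/clause order for the term-indexed rules. Either way $\Deltabasic_f \subseteq \Rbasic_C$, which proves the claim for $s$; the argument for $t$ is identical, using that $t$ too appears in the same negative literal. The one place that needs care — and which I expect to be the only real obstacle — is the boundary case $u = s$ (resp.\ $u = t$) at the first step, where one must check via the multiset encoding of the negative literal that a clause producing a rule with exactly this left-hand side is still strictly below $C$; once this bookkeeping is in place, everything else is routine.
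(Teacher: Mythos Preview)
Your argument is correct and follows the same underlying idea as the paper: in the negative-literal case, the rules in $\Rbasic_N \setminus \Rbasic_C$ have left-hand sides too large to participate in normalizing the two terms of $L$, so the $\Rbasic_C$- and $\Rbasic_N$-normal forms coincide. The paper organizes this dually---it fixes the maximal literal $s \ceqneq s'$ of $C$, observes that the larger side $t$ of the true negative literal satisfies $s \succ t$ (if the maximal literal is positive) or $s \succeq t$ (if negative), and then argues directly that every rule in $\bigcup_{e \succeq C}\Deltabasic_e$ has left-hand side strictly larger than $t$. Your route instead tracks individual reduction steps and pushes each applied rule into some $\Deltabasic_f$ with $f \prec C$, using the negativity of $L$ itself (rather than the maximal literal of $C$) to win the multiset comparison. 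Both need the same multiset bookkeeping in the boundary case you flag; the paper simply hides it behind ``inspecting Definition~\ref{def:RbasicN}'', while you make it explicit. Neither approach is materially simpler than the other.
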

\begin{proof}
  We assume that $\Rbasic_C \modelsfol C$.
  Then we have $\Rbasic_C \modelsfol L$ for some literal $L$ of $C$.
  It suffices to show that $R_N \modelsfol L$.

  If $L = t \ceq t'$ is a positive literal,
  then $t\leftrightrewrite_{\Rbasic_C}^* t'$.
  Since $\Rbasic_C\subseteq \Rbasic_N$,
  this implies $t\leftrightrewrite_{\Rbasic_N}^* t'$.
  Thus, $R_N \modelsfol L$.

  If $L = t \cneq t'$ is a negative literal,
  then $t\not\leftrightrewrite_{\Rbasic_C}^* t'$.
  By Lemma~\ref{lem:RN-oriented},
  this means that $t$ and $t'$ have different normal forms \wrt\ $\Rbasic_C$.
  Without loss of generality, let $t \succ t'$. Let $s \ceqneq s'$ be the maximal literal in $C$ with $s\succeq s'$.
  We have $s \succ t$ if $s \ceqneq s'$ is positive and
  $s \succeq t$ if $s \ceqneq s'$ is negative.
  Hence, inspecting Definition~\ref{def:RbasicN},
  we see that the left-hand sides of rules in $\bigcup_{e\succeq C} \Delta_e$
  are larger than $t$.
  Since only rules with a left-hand side smaller or equal to $t$ can be involved in normalizing $t$ and $t'$
  and $\Rbasic_C \cup \bigcup_{e\succeq C} \Delta_e = \Rbasic_N$,
  it follows that $t$ and $t'$ have different normal forms \wrt\ $\Rbasic_N$.
  Therefore, $t\not\leftrightrewrite_{\Rbasic_N}^* t'$ and $R_N \modelsfol L$.
  \qedhere
\end{proof}

\begin{lem} \label{lem:producing-rest-false}
If a%
\begin{full}
closure $C_0 = C'_0\vee s_0\approx t_0 \closure \theta\in\clausesPF$
produces $s_0\theta\rewrite t_0\theta$,
then $\Rbasic_{N}\not\modelsfol C'_0\theta$.
\end{full}%
\begin{slim}
clause $C = C' \llor s \ceq t \in\clausesF$
produces $s\rewrite t$,
then $\Rbasic_{N}\not\modelsfol C'$.
\end{slim}
\end{lem}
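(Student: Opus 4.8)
The proof follows the classical Bachmair--Ganzinger pattern: the rules of $\Rbasic_N$ that are not already contained in $\Rbasic_C \cup \{s \rewrite t\}$ have left-hand sides too large to affect any term occurring in $C'$, so $C'$ has the same truth value under $\Rbasic_N$ as under $\Rbasic_C \cup \{s \rewrite t\}$, which condition~\ref{cond:deltac-rest-false} of Definition~\ref{def:RbasicN} already declares false. Since $C$ produces $s \rewrite t$, we have $\Deltabasic_C = \{s \rewrite t\}$; and since $\succ$ is total on $\termsF \cup \clausesF$ and $\Rbasic_C = \bigcup_{f \prec C} \Deltabasic_f$, it follows that $\Rbasic_N = (\Rbasic_C \cup \{s \rewrite t\}) \cup \bigcup_{e \succ C} \Deltabasic_e$.

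The key step is to show that every rule in $\bigcup_{e \succ C} \Deltabasic_e$ has a left-hand side strictly greater than $s$. If $e$ is a term, then $\Deltabasic_e$ is nonempty only in the logical Boolean case~\ref{cond:deltac-logical-boolean}, where its left-hand side is $e$ itself; since $e \succ C$ and $s$ is a term of $C$, we get $e \succ s$. If $e$ is a clause $D \succ C$, then inspection of the cases~\ref{cond:deltac-backstop-boolean}, \ref{cond:deltac-function}, and~\ref{cond:deltac-produced} shows that the left-hand side $\ell$ of $\Deltabasic_D$ is the maximal term of $D$ and that it satisfies an irreducibility condition \wrt\ $\Rbasic_D$. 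We cannot have $\ell = s$, for $C \prec D$ gives $\{s \rewrite t\} = \Deltabasic_C \subseteq \Rbasic_D$, so $s$ would be $\Rbasic_D$-reducible. We also cannot have $\ell \prec s$: were every term of $D$ strictly below $s$, then every literal of $D$ would be strictly below the literal $s \ceq t$, which is maximal in $C$ by condition~\ref{cond:deltac-maximal}, whence $D \prec C$, a contradiction. As $\succ$ is total on ground terms by~\ref{cond:PF:order:total}, we conclude $\ell \succ s$. Hence every rule of $\Rbasic_N$ whose left-hand side is $\preceq s$ already lies in $\Rbasic_C \cup \{s \rewrite t\}$.

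Now suppose, for contradiction, that $\Rbasic_N \modelsfol C'$; then $\Rbasic_N \modelsfol L'$ for some literal $L' \in C'$. Because $L'$ is a literal of $C$ and $s \ceq t$ is maximal in $C$, we have $L' \preceq (s \ceq t)$, and therefore every term occurring in $L'$ is $\preceq s$. Starting from a term $r \preceq s$, each $\Rbasic_N$-rewrite step uses a rule whose left-hand side is a subterm of $r$, hence $\preceq s$ by the subterm property~\ref{cond:PF:order:subterm}, hence a rule of $\Rbasic_C \cup \{s \rewrite t\}$; and by compatibility with contexts~\ref{cond:PF:order:comp-with-contexts} the result is again $\preceq s$, so the whole reduction stays within $\Rbasic_C \cup \{s \rewrite t\}$. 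Thus $\Rbasic_N$ and $\Rbasic_C \cup \{s \rewrite t\}$ have the same reducts on terms $\preceq s$, and using confluence of $\Rbasic_N$ (Lemma~\ref{lem:RN-oriented}) it follows that if $L'$ holds under $\Rbasic_N$ then it already holds under $\Rbasic_C \cup \{s \rewrite t\}$. Hence $\Rbasic_C \cup \{s \rewrite t\} \modelsfol L'$ and so $\Rbasic_C \cup \{s \rewrite t\} \modelsfol C'$, contradicting condition~\ref{cond:deltac-rest-false}. The one genuinely delicate point, and the place where the argument needs care, is the ``$\ell \not\prec s$'' case above: a clause strictly above $C$ cannot produce a rule strictly below $s$, which is exactly where the multiset clause order and the maximality of $s \ceq t$ in $C$ are used.
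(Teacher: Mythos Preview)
Your proof is correct and follows the same approach as the paper's: both argue that the rules in $\Rbasic_N \setminus (\Rbasic_C \cup \{s \rewrite t\})$ have left-hand sides strictly greater than $s$ and therefore cannot affect the truth value of $C'$, whose terms are all $\preceq s$ by~\ref{cond:deltac-order} and~\ref{cond:deltac-maximal}. The paper's version is terser, simply citing ``the irreducibility conditions'' and confluence; you spell out the case analysis on whether $e \succ C$ is a term or a clause, and in the clause case you make explicit the two-step argument ($\ell \neq s$ from irreducibility, $\ell \not\prec s$ from the clause order and maximality of $s \ceq t$) that the paper leaves implicit.
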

\begin{proof}
  \begin{full}Let $C = C_0\theta$, $C' = C'_0\theta$, $s = s_0\theta$, and $t = t_0\theta$.\end{full}
  By \ref{cond:deltac-order} and \ref{cond:deltac-maximal}, all terms in $C$ are smaller or equal to $s$.
  By \ref{cond:deltac-rest-false}, we have $R_{C}\cup\left\{ s\rewrite t\right\} \not\models C'$.
  The other rules
  $R_{N}\setminus (R_{C}\cup\left\{ s\rewrite t\right\})$
  do not play any role in the truth of $C$ because
  their left-hand
  sides are greater than $s$,
  as we can see by inspecting the rules of Definition~\ref{def:RbasicN},
  in particular the irreducibility conditions,
  and because 
  $R_N$ is confluent and terminating (Lemma~\ref{lem:RN-oriented}).
  So,
  $R_{C}\cup\left\{ s\rewrite t\right\} \not\modelsfol C'$
  implies
  $R_{N}\not\modelsfol C'$.
  \qedhere
\end{proof}

\begin{full}
\begin{lem}\label{lem:productive-variable-irreducible}
  If $C\closure\theta \in \clausesPF$ is productive,
  then it is variable-irreducible \wrt\ $R_N$.
\end{lem}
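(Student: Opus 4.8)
The plan is to bootstrap variable-irreducibility from $R_C$ (where $C\theta$ is the ground instance of the productive closure $C\closure\theta$) to the full system $R_N$. Write $s \rewrite t$ for the rule produced by $C\closure\theta$, so that, by conditions~\ref{cond:deltac-order} and~\ref{cond:deltac-maximal}, $s$ is a $\succ$-maximal term and $s \ceq t$ is the maximal literal of $C\theta$. By condition~\ref{cond:deltac-varirred} the closure $C\closure\theta$ is already variable-irreducible \wrt\ $R_C$, and $R_C \subseteq R_N$ by construction. The Boolean clause of Definition~\ref{def:PF:irred} (``all Boolean subterms of $x\theta$ are $\itrue$ or $\ifalse$'') does not refer to the rewrite system, so it transfers verbatim. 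Hence it only remains to show that, for every variable $x$ in a literal $L$ of $C$, the term $x\theta$ is irreducible \wrt\ every rule $u \rewrite v \in R_N$ with $L\theta \succ u \eq v$.

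I would argue by contradiction. Suppose such a rule $u \rewrite v \in R_N$ with $L\theta \succ u \eq v$ rewrites a subterm of $x\theta$; then $u$ is a subterm of $x\theta$. First I would place $u$ strictly below $s$. Since the literal order is the total multiset extension of $\succ$ and $L\theta$ is a literal of $C\theta$, we have $L\theta \preceq s \ceq t$, hence $u \eq v \prec s \ceq t$; together with $u \succ v$ and $s \succ t$ (condition~\ref{cond:deltac-order}) this leaves only $u \prec s$ or $u = s$. The case $u = s$ is excluded: if $v = t$ it contradicts $u \eq v \prec s \ceq t$, and if $v \ne t$ then the two distinct rules $u \rewrite v$ and $s \rewrite t$ with identical left-hand side form a root critical pair of $R_N$, contradicting Lemma~\ref{lem:RN-oriented}. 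So $u \prec s$.

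Next I would case-split on which $\Delta_e$ of Definition~\ref{def:RbasicN} contains $u \rewrite v$. A logical Boolean rewrite~\ref{cond:deltac-logical-boolean} is impossible: there $e = u$ is a term headed by one of $\inot, \iand, \ior, \iimplies, \ieq, \ineq$, hence a Boolean term distinct from $\itrue$ and $\ifalse$, so it cannot occur as a subterm of $x\theta$ without contradicting the Boolean clause of variable-irreducibility \wrt\ $R_C$. In each remaining case---\ref{cond:deltac-backstop-boolean}, \ref{cond:deltac-function}, \ref{cond:deltac-produced}---the object $e$ is a ground clause whose $\succ$-maximal term is exactly $u$ (using \ref{cond:deltac-order}, \ref{cond:PF:order:t-f-minimal}, and the respective side conditions to bound the right-hand side). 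Since $u \prec s$ and $s \ceq t$ is the maximal literal of $C\theta$, standard properties of the clause order~\ref{cond:PF:order:clause-extension} give $e \prec C\theta$, whence $u \rewrite v \in \Delta_e \subseteq R_C$. Variable-irreducibility of $C\closure\theta$ \wrt\ $R_C$ then says---since $L\theta \succ u \eq v$---that $x\theta$ is irreducible by $u \rewrite v$, a contradiction. Therefore no such rule exists, and $C\closure\theta$ is variable-irreducible \wrt\ $R_N$.

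I expect the main obstacle to be the routine but fiddly bookkeeping in the last step: verifying in each $\Delta$-case that the produced rule's left-hand side really is the $\succ$-maximal term of its producing clause $e$, and that ``every term of $e$ lies strictly below the maximal term of the maximal literal of $C\theta$'' indeed yields strict smallness $e \prec C\theta$ in the multiset clause order. The logical Boolean case would otherwise force the awkward comparison of a bare term (as embedded into $\termsF \cup \clausesF$) against a clause, which I prefer to sidestep via the Boolean-subterm argument above.
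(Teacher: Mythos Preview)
Your proof is correct and follows essentially the same strategy as the paper: both show that any rule in $R_N \setminus R_{C\theta}$ corresponds to a literal $s' \ceq t' \succeq s \ceq t$, hence is at least as large as every literal of $C\theta$ and so cannot violate the irreducibility condition of Definition~\ref{def:PF:irred}. The paper argues this directly and uniformly over all four $\Delta$-cases, using the mixed term--clause ordering introduced in Definition~\ref{def:RbasicN} to handle case~\ref{cond:deltac-logical-boolean}; you instead frame it by contradiction and dispatch case~\ref{cond:deltac-logical-boolean} via the Boolean-subterm clause of variable-irreducibility, which is a legitimate shortcut that avoids comparing a bare term with a clause at the cost of an extra case split.
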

\begin{proof}
  Let $s \rewrite t$ be the rule produced by $C\closure\theta$.
  By \ref{cond:deltac-varirred},
  $C\closure\theta$ is variable-irreducible \wrt\ $R_C$.
  Let $s' \rewrite t' \in \Rbasic_N \setminus \Rbasic_C$.
  Then $s' \ceq t' \in \Delta_e$ for some $e \in \termsF \cup \clausesF$
  that is larger than $C\theta$.
  So if $e$ is a term, then $s' = e \succ s$
  and thus $s' \ceq t' \succ s \ceq t$.
  If $e$ is a clause,
  then its maximal literal
  (which is $s' \ceq t'$ by \ref{cond:deltac-backstop-boolean}, \ref{cond:deltac-function}, and \ref{cond:deltac-maximal}) is at least as large as $C$'s
  maximal literal (which is $s \ceq t$ by \ref{cond:deltac-maximal}).
  So in either case, $s' \ceq t' \succeq s \ceq t$.
  Since $s \ceq t$ is the maximal literal of $C\theta$,
  $s' \ceq t'$ is at least as large as each literal of $C\theta$.
  So the rule $s' \ceq t'$ has no effect on the variable-irreducibility of $C\closure\theta$
  by Definition~\ref{def:PF:irred}.
  Therefore, $C\closure\theta$ is variable-irreducible \wrt\ $R_N$.
\end{proof}
\end{full}

\begin{lem}\label{lem:RN-argcong-diff}
  Let $u$ and $w$ be higher-order ground terms of type $\tau \fun \upsilon$.
  If $\mapF{u} \leftrightrewrite^*_{R_N} \mapF{w}$,
  then $\mapF{u\>\diff^{\tau,\upsilon}_{s,t}} \leftrightrewrite^*_{R_N} \mapF{w\>\diff^{\tau,\upsilon}_{s,t}}$
  for all $s,t$.
\end{lem}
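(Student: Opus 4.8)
The plan is to exploit that, by Lemma~\ref{lem:RN-oriented}, $R_N$ has no critical pairs and is oriented by the well-founded order $\succ$, so $R_N$ is confluent and terminating; hence $\mapF{u}\leftrightrewrite^*_{R_N}\mapF{w}$ holds if and only if $\mapF{u}$ and $\mapF{w}$ reduce to a common $R_N$-normal form. Inspecting Definition~\ref{def:RbasicN}, the only rewrite rules whose left-hand side is a functional (equivalently, $\cst{fun}$-headed) first-order term are the function rewrites~\ref{cond:deltac-function}, whose right-hand side is again the encoding of a functional ground term (rules~\ref{cond:deltac-logical-boolean} and~\ref{cond:deltac-backstop-boolean} have Boolean left-hand sides, and rule~\ref{cond:deltac-produced} has a nonfunctional left-hand side by~\ref{cond:deltac-no-fun}); and rewriting such a term at a proper position leaves its $\cst{fun}$-head in place. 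Since all rules are moreover type-preserving, rewriting preserves both functionality and the type $\tau\fun\upsilon$, so by Lemma~\ref{lem:fol-bijection} the common normal form of $\mapF{u}$ and $\mapF{w}$ is $\mapF{\hat u}$ for a functional ground term $\hat u$ of type $\tau\fun\upsilon$. Using symmetry and transitivity of $\leftrightrewrite^*_{R_N}$, it therefore suffices to prove the one-sided statement: for all functional ground $a,b$ of type $\tau\fun\upsilon$ with $\mapF{a}\rewrite^*_{R_N}\mapF{b}$ and all $s,t$, we have $\mapF{a\>\diff^{\tau,\upsilon}_{s,t}}\leftrightrewrite^*_{R_N}\mapF{b\>\diff^{\tau,\upsilon}_{s,t}}$ (applying this to $\mapF{u}\rewrite^*\mapF{\hat u}$ and to $\mapF{w}\rewrite^*\mapF{\hat u}$ and composing). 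By transitivity of $\leftrightrewrite^*_{R_N}$ it further suffices to treat one rewrite step $\mapF{a}\rewrite_{R_N}\mapF{b}$, on which I case-split according to the position $p$ where the step occurs.

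If $p$ is the root, the applied rule has the functional left-hand side $\mapF{a}$, so by the discussion above it is a function rewrite~\ref{cond:deltac-function} $\mapF{a}\rewrite\mapF{b}$ attached to some clause $C=\mapF{a}\ceq\mapF{b}$. Its third side condition states exactly that $\mapF{a\>\diff^{\tau,\upsilon}_{s,t}}\leftrightrewrite^*_{R_C}\mapF{b\>\diff^{\tau,\upsilon}_{s,t}}$ for all $s,t$ (the ``$v$'' appearing there being the rule's right-hand side $b$), and since $R_C\subseteq R_N$ this is what we need.

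If $p$ is a proper position, write $a$ in $\beta$-normal form; then $\mapF{a}=\cst{fun}_{a^\square}(\mapF{v_1},\dots,\mapF{v_n})$, where $v_1,\dots,v_n$ are the outermost proper yellow subterms of $a$ in order of occurrence and $a^\square$ is $a$ with those subterms replaced by $\square$. The step lies inside one argument, say $\mapF{v_i}\rewrite_{R_N}\mapF{v_i'}$, and $b$ is obtained from $a$ by replacing $v_i$ at its yellow position by $v_i'$ (which is of the same type as $v_i$ by type-preservation). Now $a\>\diff^{\tau,\upsilon}_{s,t}$, after $\beta$-normalization, arises from $a$ by appending the argument $\diff^{\tau,\upsilon}_{s,t}$; since $v_i$ contains no free De Bruijn indices, the $\beta$-reduction this triggers---which substitutes $\diff^{\tau,\upsilon}_{s,t}$ only for the De Bruijn indices bound by the leading $\lambda$'s of $a$, if any---does not touch $v_i$, so $v_i$ is still a yellow subterm of $a\>\diff^{\tau,\upsilon}_{s,t}$, and the analogous statement holds for $v_i'$ and $b\>\diff^{\tau,\upsilon}_{s,t}$ at the corresponding position, with the surrounding context unchanged. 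By Lemma~\ref{lem:IPG:mapF-subterms} (applied to ground terms) together with the definition of $\mapFonly$ (Definition~\ref{def:fol}), $\mapF{v_i}$ then occurs as a subterm of $\mapF{a\>\diff^{\tau,\upsilon}_{s,t}}$, and replacing that occurrence by $\mapF{v_i'}$ yields $\mapF{b\>\diff^{\tau,\upsilon}_{s,t}}$. Since $\mapF{v_i}\rewrite_{R_N}\mapF{v_i'}$ implies $\mapF{v_i}\leftrightrewrite^*_{R_N}\mapF{v_i'}$, compatibility of $\leftrightrewrite^*_{R_N}$ with first-order contexts finishes this case.

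The main obstacle is the bookkeeping in the last case: making precise, from Definition~\ref{def:fol}, how appending the argument $\diff^{\tau,\upsilon}_{s,t}$ and then $\beta$-normalizing transforms the $\mapFonly$-encoding---in particular, that the outermost-proper-yellow-subterm decomposition of $a$ is inherited (up to the newly appended argument) by $a\>\diff^{\tau,\upsilon}_{s,t}$, and that the case where $a$ is a $\lambda$-abstraction, so that the $\beta$-reduction actually substitutes $\diff^{\tau,\upsilon}_{s,t}$ for the bound index, disturbs neither the subterm $v_i$ nor the surrounding first-order context. The root case, by contrast, is immediate, since the third side condition of~\ref{cond:deltac-function} was designed precisely to make this argument go through.
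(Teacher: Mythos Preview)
Your proof is correct and follows essentially the same route as the paper: reduce to a single rewrite step, then case-split on whether the step occurs at the root (use the third side condition of~\ref{cond:deltac-function}, which is exactly the required joinability for all $s,t$) or at a proper position (the affected yellow subterm of the functional term survives application to $\diff^{\tau,\upsilon}_{s,t}$, so the same rewrite step transfers). The paper's argument is slightly leaner: rather than going through confluence and a common normal form, it simply inducts directly on the length of the conversion $\mapF{u}\leftrightrewrite^*_{R_N}\mapF{w}$, treating each step (in either direction) at once; since $\mapFonly$ is a bijection and rewriting is type-preserving, every intermediate first-order term is automatically $\mapF{z}$ for some functional ground $z$ of type $\tau\fun\upsilon$, so your detour through normal forms and the one-sided formulation is unnecessary, though not wrong. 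Your careful discussion of the proper-position case---in particular, that $\beta$-reducing $a\>\diff^{\tau,\upsilon}_{s,t}$ when $a$ is a $\lambda$-abstraction does not disturb the yellow subterm $v_i$ because $v_i$ has no free De~Bruijn indices---is exactly the content the paper compresses into the sentence ``Yellow subterms of a functional term remain when applying the term to an argument.''
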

\begin{proof}
By induction over each rewrite step in $\mapF{u} \leftrightrewrite^*_{R_N} \mapF{w}$,
it suffices to show the following claim:
If $\mapF{u} \rewrite_{R_N} \mapF{w}$,
then $\mapF{u\>\diff^{\tau,\upsilon}_{s,t}} \leftrightrewrite^*_{R_N} \mapF{w\>\diff^{\tau,\upsilon}_{s,t}}$
for all $s,t$.
Here, it is crucial that $s$ and $t$ are not necessarily equal to $u$ and $w$.

\begin{full}
If the rewrite position is in a proper subterm of $\mapF{u}$,
by definition of $\mapFonly$,
the rewrite position
corresponds to 
a proper yellow subterm of $u$.
Yellow subterms of a functional term remain when applying the term to an argument.
So the same rewrite step can be applied to obtain
$\mapF{u\>\diff^{\tau,\upsilon}_{s,t}} \rewrite_{R_N} \mapF{w\>\diff^{\tau,\upsilon}_{s,t}}$.

For a rewrite in the root of the term,
the rewrite rule must originate from \ref{cond:deltac-function}
because the terms are functional.
\end{full}
\begin{slim}
By definition of $\mapFonly$, since $u$ is functional,
$\mapF{u} = \cst{fun}_u$. So $\mapF{u}$ has no proper subterms,
and thus the rewrite step must happen at the root of $\mapF{u}$.
Inspecting the definition of $R_N$, we observe that the rewrite rule must originate from \ref{cond:deltac-function}.
\end{slim}
One of the conditions of \ref{cond:deltac-function} then yields the claim.
\end{proof}

\begin{lem}\label{lem:RN-ext}
Let $u$ and $w$ be higher-order ground terms of type $\tau \fun \upsilon$.
If $\mapF{u\>\diff^{\tau,\upsilon}_{s,t}} \leftrightarrow^*_{R_N} \mapF{w\>\diff^{\tau,\upsilon}_{s,t}}$
for all $s,t$, then $\mapF{u} \leftrightarrow^*_{R_N} \mapF{w}$.
\end{lem}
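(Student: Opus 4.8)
The plan is to argue by well-founded induction on the multiset $\{\mapF{u},\mapF{w}\}$, ordered by the multiset extension of $\succ$ on $\termsF$ (well founded and total by \ref{cond:PF:order:total}). If $\mapF{u}=\mapF{w}$ there is nothing to show, so by \ref{cond:PF:order:total} I may assume without loss of generality that $\mapF{u}\succ\mapF{w}$; set $C=\mapF{u}\ceq\mapF{w}$, whose maximal term is then $\mapF{u}$. Both $R_C$ and $R_N$ are confluent and terminating by Lemma~\ref{lem:RN-oriented} (no critical pairs; termination since they are oriented by the well-founded $\succ$), so $\leftrightarrow^*_{R_C}$ and $\leftrightarrow^*_{R_N}$ each amount to equality of normal forms.

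The key preliminary step is to \emph{downgrade} the hypothesis from $R_N$ to $R_C$, i.e.\ to show $\mapF{u\>\diff^{\tau,\upsilon}_{s,t}}\leftrightarrow^*_{R_C}\mapF{w\>\diff^{\tau,\upsilon}_{s,t}}$ for all $s,t$. For this I would first prove, by well-founded induction on a first-order term $a$, that whenever $a\prec\mapF{u}$ the $R_N$-normal form and the $R_C$-normal form of $a$ coincide. The point is that any $R_N$-rule usable on a subterm of $a$ has left-hand side $\preceq a\prec\mapF{u}$, whereas, inspecting Definition~\ref{def:RbasicN}, every $R_N$-rule that is not already in $R_C$ is produced by some $\Delta_e$ with $e\not\prec C$, and then its left-hand side is $\succeq\mapF{u}$: for a term-indexed $e$ this is immediate since the left-hand side is $e$ itself; for a clause-indexed $e$ the left-hand side is the maximal term of $e$, which dominates the maximal term $\mapF{u}$ of $C$ as $e\succeq C$ (the rules from \ref{cond:deltac-backstop-boolean}, \ref{cond:deltac-function}, \ref{cond:deltac-produced} all rewrite a maximal term); and $\Delta_C$ itself has left-hand side $\mapF{u}$ by \ref{cond:deltac-function}. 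By the subterm property this contradicts usability on $a$, so only $R_C$-rules are ever applied, and the inner induction closes. Since $\mapF{u\>\diff^{\tau,\upsilon}_{s,t}}\prec\mapF{u}$ and $\mapF{w\>\diff^{\tau,\upsilon}_{s,t}}\prec\mapF{w}\prec\mapF{u}$ by \ref{cond:PF:order:ext} (and transitivity), the downgraded equivalence follows from the hypothesis.

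With the claim in hand I would split into two cases. If $\mapF{u}$ is reducible w.r.t.\ $R_N$, say $\mapF{u}\rewrite_{R_N}\mapF{u'}$ (the reduct lies in $\termsF$ and, since rules preserve types, has type $\tau\fun\upsilon$, so $u'=\mapFonly^{-1}(\mapF{u'})$ is a well-defined functional term), then $\mapF{u'}\prec\mapF{u}$ by orientation of $R_N$, and Lemma~\ref{lem:RN-argcong-diff} applied to $\mapF{u}\leftrightarrow^*_{R_N}\mapF{u'}$ yields $\mapF{u\>\diff^{\tau,\upsilon}_{s,t}}\leftrightarrow^*_{R_N}\mapF{u'\>\diff^{\tau,\upsilon}_{s,t}}$ for all $s,t$; combining with the original hypothesis shows that $u'$ and $w$ satisfy the lemma's hypothesis, and since $\{\mapF{u'},\mapF{w}\}\prec\{\mapF{u},\mapF{w}\}$ the induction hypothesis gives $\mapF{u'}\leftrightarrow^*_{R_N}\mapF{w}$, hence $\mapF{u}\leftrightarrow^*_{R_N}\mapF{w}$. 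Otherwise $\mapF{u}$ is irreducible w.r.t.\ $R_N$, a fortiori w.r.t.\ $R_C$; then all side conditions of the function-rewrite rule \ref{cond:deltac-function} are met for $C=\mapF{u}\ceq\mapF{w}$ ($u,w$ functional; $\mapF{u}\succ\mapF{w}$; $\mapF{u}$ is $R_C$-irreducible; and the $R_C$-extensionality condition is exactly the downgrading claim), so $\Delta_C=\{\mapF{u}\rewrite\mapF{w}\}\subseteq R_N$, giving $\mapF{u}\rewrite_{R_N}\mapF{w}$ and thus $\mapF{u}\leftrightarrow^*_{R_N}\mapF{w}$.

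The main obstacle is the downgrading claim: it requires carefully reading the layered construction of $R_N$ to confirm that no rule produced "at or above" $C$ can touch the reductions of the $\diff$-applied terms, which ultimately rests on \ref{cond:PF:order:ext} placing $\mapF{u\>\diff^{\tau,\upsilon}_{s,t}}$ and $\mapF{w\>\diff^{\tau,\upsilon}_{s,t}}$ strictly below $\mapF{u}$. The remaining ingredients—the induction on the reduct, the invocation of Lemma~\ref{lem:RN-argcong-diff}, and the matching of the side conditions of \ref{cond:deltac-function}—are routine bookkeeping.
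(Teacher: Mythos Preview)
Your proof is correct, but the paper takes a noticeably shorter route that bypasses your well-founded induction. The paper simply passes to the $R_N$-normal forms $\mapF{u'}=\mapF{u}{\downarrow_{R_N}}$ and $\mapF{w'}=\mapF{w}{\downarrow_{R_N}}$, uses Lemma~\ref{lem:RN-argcong-diff} to transport the hypothesis to $u',w'$, and then argues by contradiction: if $\mapF{u'}\neq\mapF{w'}$, say $\mapF{u'}\succ\mapF{w'}$, then the side conditions of \ref{cond:deltac-function} hold for $C'=\mapF{u'}\ceq\mapF{w'}$, so $\Delta_{C'}$ produces the rule $\mapF{u'}\rewrite\mapF{w'}$, contradicting that $\mapF{u'}$ is already in $R_N$-normal form. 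In effect, jumping to normal forms collapses your case split and induction into a single step: your Case~1 (reducing $\mapF{u}$ step by step) is replaced by normalizing all at once, and your Case~2 is exactly the paper's contradiction. Incidentally, your Case~2 as written silently reaches a contradiction too (you assume $\mapF{u}$ is $R_N$-irreducible and then exhibit an $R_N$-rule rewriting it), which is valid but would read more cleanly if stated as such. Both arguments rely on the same ``downgrading'' observation---that for terms strictly below $\mapF{u'}$ (resp.\ $\mapF{u}$), $R_N$-joinability coincides with $R_{C'}$-joinability (resp.\ $R_C$-joinability)---which you spell out carefully and the paper compresses into the phrase ``using \ref{cond:PF:order:ext}.''
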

\begin{proof}
Let $\mapF{u'} = \mapF{u}\downarrow_{R_N}$ and $\mapF{w'} = \mapF{w}\downarrow_{R_N}$.
By applying Lemma~\ref{lem:RN-argcong-diff}
to $\mapF{u} \leftrightarrow^*_{R_N} \mapF{u'}$ and to $\mapF{w'} \leftrightarrow^*_{R_N} \mapF{w}$,
we have $\mapF{u'\>\diff^{\tau,\upsilon}_{s,t}} \leftrightarrow^*_{R_N} \mapF{w'\>\diff^{\tau,\upsilon}_{s,t}}$
for all $s,t$.

We want to show that $\mapF{u} \leftrightarrow^*_{R_N} \mapF{w}$---i.e., that $\mapF{u'} = \mapF{w'}$.
To derive a contradiction, we assume that $\mapF{u'} \neq \mapF{w'}$.
Without loss of generality, we may assume that $\mapF{u'}\succ \mapF{w'}$.
Then, using \ref{cond:PF:order:ext}, all conditions of \ref{cond:deltac-function} are satisfied for the rule $\mapF{u'}\rewrite \mapF{w'}$,
contradicting the fact that $\mapF{u'}$ is a normal form.
\end{proof}

\begin{lem}\label{lem:PF:R-olammodels}
$R_N$ is a $\modelsolam$-interpretation.
\end{lem}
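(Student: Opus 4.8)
The plan is to unfold the definition of a $\modelsolam$-interpretation and verify its three defining conditions for $R_N$, leaning on the facts already established about $R_N$. First I would record that, by Lemma~\ref{lem:RN-oriented} together with well-foundedness of $\succ$, the rewrite system $R_N$ is confluent and terminating, so it induces a $\modelsfol$-interpretation (also written $R_N$) characterized by $R_N \modelsfol a \ceq b$ iff $a \leftrightarrow^*_{R_N} b$; in particular $R_N$ is a $\modelsfol$-interpretation whose Boolean domain consists of the $R_N$-equivalence classes of ground Boolean terms. It then remains to check the three restrictions that carve $\modelsolam$ out of $\modelsfol$: interpreted Booleans, extensionality \wrt\ $\diff$, and argument congruence \wrt\ $\diff$.

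Two of these are immediate. The argument-congruence condition---if $R_N \modelsfol \mapF{u} \ceq \mapF{w}$ then $R_N \modelsfol \mapF{u\>\diff\typeargs{\tau,\upsilon}(s,t)} \ceq \mapF{w\>\diff\typeargs{\tau,\upsilon}(s,t)}$ for all ground $s,t$---is, after rewriting $\modelsfol$ as joinability and using the standing identification of $\diff\typeargs{\tau,\upsilon}(s,t)$ with the indexed constant $\diff^{\tau,\upsilon}_{s,t}$ (Conventions~\ref{convention:indexing-subscripts} and~\ref{convention:first-order-signatures}, Lemma~\ref{lem:PG:mapI-mapF-fol}), exactly the statement of Lemma~\ref{lem:RN-argcong-diff}. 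Dually, the extensionality condition \wrt\ $\diff$ is exactly Lemma~\ref{lem:RN-ext}. So only the interpreted-Booleans condition requires real work.

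For that condition, Lemma~\ref{lem:RN-bool-normal} shows every ground Boolean term has $R_N$-normal form $\itrue$ or $\ifalse$; since $\itrue$ and $\ifalse$ are distinct and both $R_N$-irreducible, the Boolean domain of $R_N$ has exactly the two elements $\interpret{\itrue}{R_N}{}$ and $\interpret{\ifalse}{R_N}{}$. To see that $\inot$, $\iand$, $\ior$, $\iimplies$, $\ieq^\tau$, $\ineq^\tau$ are interpreted by the corresponding logical operations, I would prove an auxiliary claim: any ground term $s$ whose head is one of these symbols and whose immediate arguments are $R_N$-irreducible is itself irreducible \wrt\ $R_s$. Indeed, the proper subterms of $s$ are the (irreducible) arguments, and inspecting the rule schemas of Definition~\ref{def:RbasicN} the only schema yielding a rule with a logical-symbol head is \ref{cond:deltac-logical-boolean} (schema \ref{cond:deltac-backstop-boolean} producing $s \rewrite \ifalse$ would require $s \ceq \ifalse \prec s$, which fails; \ref{cond:deltac-function} needs a functional head; \ref{cond:deltac-produced} explicitly forbids a logical head), and the instance of \ref{cond:deltac-logical-boolean} for $s$ lives in $\Delta_s$, not in $R_s$. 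Hence, by \ref{cond:deltac-logical-boolean}, such an $s$ rewrites in one step to the value dictated by the logical operation: for $\inot$, $\iand$, $\ior$, $\iimplies$ applied to arguments in $\{\itrue,\ifalse\}$ this is read off the table, and for $\ieq^\tau$, $\ineq^\tau$ applied to arbitrary $R_N$-normal forms $u,v$ it is $\itrue$ when $u=v$ and $\ifalse$ otherwise, which by confluence of $R_N$ means precisely according to whether the two argument classes coincide. Evaluating the interpretation of each symbol on classes by taking normal-form representatives (and reducing arguments first) then yields that $\inot$ swaps the two Boolean elements, $\iand$ and $\ior$ act as meet and join, $\iimplies$ as implication, and $\ieq^\tau$, $\ineq^\tau$ as equality and disequality---exactly as $\modelsolam$ demands.

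The main obstacle is precisely this auxiliary irreducibility claim: one has to go through every rule schema of Definition~\ref{def:RbasicN} and confirm that none of them, at any point $\prec s$ in the well-founded order, can rewrite a logical-headed term $s$ with irreducible arguments---both because no schema other than \ref{cond:deltac-logical-boolean} produces a logical-headed left-hand side (and the \ref{cond:deltac-logical-boolean}-instance for $s$ sits above $s$, not below it) and because the proper subterms of $s$ are already in normal form. Everything else is notational bookkeeping plus quoting Lemmas~\ref{lem:RN-argcong-diff},~\ref{lem:RN-ext}, and~\ref{lem:RN-bool-normal}.
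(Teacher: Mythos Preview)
Your proposal is correct and follows essentially the same approach as the paper: invoke Lemma~\ref{lem:RN-bool-normal} for the two-element Boolean domain, rule~\ref{cond:deltac-logical-boolean} for the logical connectives, and Lemmas~\ref{lem:RN-argcong-diff} and~\ref{lem:RN-ext} (via the $\mapIonly$/$\mapFonly$ identification) for argument congruence and extensionality. Your schema-by-schema case analysis for the irreducibility of a logical-headed $s$ \wrt\ $R_s$ is more laborious than the paper's one-line observation that no rule in $R_s$ can have left-hand side $s$ (since $s$ is smaller than any clause containing $s$, and any term $f \prec s$ has $\Delta_f$'s left-hand side equal to $f$), but the two arguments establish the same fact.
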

\begin{proof}
We must prove all conditions listed in Section~\ref{ssec:redundancy}.
\begin{itemize}
  \item By Lemma~\ref{lem:RN-bool-normal}, the Boolean type has exactly two elements,
  namely the interpretations of $\itrue$ and $\ifalse$.
  The rule~\ref{cond:deltac-logical-boolean} ensures that the symbols
  $\inot$, $\iand$, $\ior$, $\iimplies$, $\ieq^\tau$, $\ineq^\tau$ are interpreted as the corresponding logical operations.
  Note that $R_s$ never contains any rules rewriting $s$
  because $s$ is smaller than any clause containing $s$.
  So $s$ can be reducible \wrt\ $R_s$ only when one of its proper subterms is reducible.
  Since every term has a normal form, adding rules only for the irreducible terms is sufficient.
  
  \item By 
  Lemma~\ref{lem:PG:mapI-mapF-fol}, we have
   $\mapF{\mapI{u}\>\diff^{\tau,\upsilon}_{s,t}} = \mapF{u\>\diff\typeargs{\tau,\upsilon}(s,t)}$
   for all $u,s,t \in \TT_\mathrm{ground}(\SigmaH)$.
  Since $\mapIonly$ is a bijection on ground terms,
  Lemma~\ref{lem:RN-ext} proves the extensionality condition in Section~\ref{ssec:redundancy}.
  \item 
  The argument congruence condition in Section~\ref{ssec:redundancy}
  follows from Lemma~\ref{lem:RN-argcong-diff} in the same way.\qedhere
\end{itemize}
\end{proof}

\begin{full}
\begin{lem} \label{lem:inferences-preserve-varirred}
If the premises of an inference are variable-irreducible \wrt~a ground rewrite system $R$
with $R \subseteq {\succ}$
and the inference is not a $\PFExt$ or $\PFDiff$ inference,
then the conclusion is also variable-irreducible \wrt~$R$.
\end{lem}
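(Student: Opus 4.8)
The plan is to prove the statement by a case analysis over the ten inference rules of $\PFInf$, of which $\PFExt$ and $\PFDiff$ are excluded by hypothesis. First I would isolate a uniform observation that handles most of the work: in each of the remaining rules ($\PFSup$, $\PFEqRes$, $\PFEqFact$, $\PFClausify$, $\PFBoolHoist$, $\PFLoobHoist$, $\PFFalseElim$, $\PFArgCong$) every variable of the conclusion clause $C_{m+1}$ already occurs in one of the premise clauses, and $\theta_{m+1}$ instantiates it to the same ground term as the corresponding premise substitution does --- for $\PFSup$ this is exactly how $\rho\cup\theta$ is defined, and it is precisely here that we use the exclusion of $\PFExt$ and $\PFDiff$, the only rules that introduce genuinely new variables (the variables of the term $w$). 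Because of this, the second half of Definition~\ref{def:PF:irred} --- that every Boolean subterm of $x\theta_{m+1}$ is $\itrue$ or $\ifalse$ --- is inherited verbatim from the premises, so it remains only to establish the irreducibility half: for every literal $L$ of $C_{m+1}$ and every variable $x \in L$, the term $x\theta_{m+1}$ is irreducible with respect to all rules $g \rewrite h \in R$ with $L\theta_{m+1} \succ g \ceq h$.

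The main tool for the irreducibility half is the following reduction. If $x$ also occurs in a premise literal $L'$ with $L'\theta_i \succeq L\theta_{m+1}$ (where $\theta_i$ is the substitution of the premise containing $L'$, so $x\theta_i = x\theta_{m+1}$), then we are done: variable-irreducibility of that premise makes $x\theta_i$ irreducible with respect to all rules below $L'\theta_i$, and since the literal order is the multiset extension of a total order (\ref{cond:PF:order:total}, \ref{cond:PF:order:clause-extension}) it is transitive, so the rules below $L\theta_{m+1}$ form a subset. Exhibiting such an $L'$ is immediate for literals carried over unchanged from a premise, and for the one ``modified'' literal of each rule it follows from the appropriate order property: \ref{cond:PF:order:subterm} when the new side is a subterm of the old one (as in $\PFClausify$), \ref{cond:PF:order:t-f-minimal} with \ref{cond:PF:order:comp-with-contexts} when a subterm is replaced by $\itrue$ or $\ifalse$ (as in $\PFBoolHoist$, $\PFLoobHoist$, $\PFClausify$), \ref{cond:PF:order:ext} when a $\diff$-argument is appended (as in $\PFArgCong$, using that $u,w$ are ground so no new variable enters), and the maximality side conditions together with routine multiset comparisons (as in $\PFEqFact$). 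For $\PFSup$, the literals of $D'$ and the unchanged parts of $\greensubterm{C}{t'}$ are covered the same way, using side condition~5 ($D\rho \prec \greensubterm{C}{u}\theta$) and \ref{cond:PF:order:comp-with-contexts} with $t'\rho \prec u\theta$.

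The one situation where this pattern breaks, and what I expect to be the main obstacle, is a variable $x$ occurring inside the rewriting term $t'$ of a $\PFSup$ inference: the left-premise literal $t \ceq t'$ that contains $x$ can be strictly smaller than the rewritten conclusion literal, so there is no dominating premise literal. Here the fix is to show that $x\rho$ is irreducible with respect to \emph{all} of $R$. Since $x$ occurs in $t'$ we have $t'\rho \succeq x\rho$ by \ref{cond:PF:order:subterm}, and side condition~4 gives $t\rho \succ t'\rho$, hence $t\rho \succ x\rho$. Any rule $g \rewrite h \in R$ reducing $x\rho$ has $g$ a subterm of $x\rho$, so $g \preceq x\rho \prec t\rho$; together with $h \prec g$ (as $R \subseteq {\succ}$) this forces $(t \ceq t')\rho \succ g \ceq h$, because the larger side $t\rho$ of the premise literal strictly dominates both $g$ and $h$. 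But variable-irreducibility of the left premise states exactly that $x\rho$ is not reducible by any rule below $(t \ceq t')\rho$ --- a contradiction. Hence $x\rho$ is irreducible outright, and in particular irreducible with respect to the rules below the conclusion literal.

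Putting the uniform observation together with these two tools, case by case, yields variable-irreducibility of the conclusion for each of the eight admitted rules. Most of the per-rule work is a mechanical check of the order conditions sketched above; the only genuinely new ingredient beyond the standard pattern of (basic) superposition arguments is the ``outright irreducibility'' argument for the rewriting term in $\PFSup$, which hinges on the strict decrease $t\rho \succ t'\rho$ enforced by that rule.
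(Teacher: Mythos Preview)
Your proposal is correct and follows essentially the same approach as the paper's proof: both observe that the excluded rules are exactly those introducing fresh variables (so the Boolean condition is inherited), both handle carried-over literals trivially and new literals by exhibiting a dominating premise literal, and both single out the case of a variable inside $t'$ in $\PFSup$ for the ``outright irreducibility'' argument via $t\rho \succ t'\rho \succeq x\rho$. The paper is terser---it dispatches all non-$\PFSup$ rules in one sentence rather than naming the specific order axioms---but the structure and the key $\PFSup$ argument are the same.
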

\begin{proof}
Let $C_0\closure\theta$ be the conclusion of the inference.
By Definition~\ref{def:PF:irred},
we have to show that for all literals $L\closure\theta$ of $C_0\closure\theta$
and all variables $x$ of $L$,
$x\theta$ is is irreducible \wrt\ all rules $l \rewrite r \in R$ with $L\theta \succ l \eq r$
and all Boolean subterms of $x\theta$ are either $\itrue$ or $\ifalse$.
Since the premises of the infererence are variable-irreducible,
this is evident for all literals that occur also in one of the premises.
The Boolean subterm condition is satisfied for all inferences because no inference
other than $\PFExt$ and $\PFDiff$ introduces
a variable that is not present in the premises.
It remains to check the the irreducibility condition for newly introduced literals in $C_0\closure\theta$.

For $\PFSup$ inferences, the only newly introduced literal has the form
$L[t']\closure\theta$, where $L[u]\closure\theta$
is a literal of the second premise and $(t \ceq t')\closure\theta$
is a literal in the first premise with $t\theta = u\theta$ and $t\theta \succ t'\theta$.
Let $x$ be a variable in $L[t']$.
If $x$ occurs in $t'$, then $x\theta$ is irreducible
\wrt\ all rules $l \rewrite r \in R$ with $t\theta \ceq t'\theta \succ l \eq r$
since the first premise is variable-irreducible \wrt~$R$.
On the other hand,
$x\theta$ must also be irreducible \wrt\ all rules $l \rewrite r \in R$
with $t\theta \ceq t'\theta \preceq l \eq r$,
since then $l \succeq t\theta \succ t'\theta \succeq x\theta$.
Therefore, $x\theta$ is irreducible \wrt\ all rules $l \rewrite r \in R$.
If $x$ occurs in $L[t']$ but not in $t'$,
then it occurs in $L[u]$,
and because the second premise is variable-irreducible \wrt~$R$,
$x\theta$ is irreducible
\wrt\ all rules $l \rewrite r \in R$ with $L[u]\theta \succ l \eq r$.
Since $L[t']\theta \prec L[u]\theta$,
$x\theta$ is also irreducible
\wrt\ all rules $l \rewrite r \in R$ with $L[t']\theta \succ l \eq r$.

For all other inferences, it is easy to verify that
whenever a variable $x$ occurs in a newly introduced literal $L\closure\theta$
in the conclusion,
then $x$ occurs also in a literal $L'\closure\theta$ in the premise
with $L\theta \preceq L'\theta$,
so the premise's variable-irreducibility implies
the conclusion's variable-irreducibility.
\qedhere
\end{proof}
\end{full}

We employ a variant of Bachmair and Ganzinger's
framework of reducing counterexamples \cite[Sect.~4.2]{bachmair-ganzinger-2001-resolution}.
Let $N \subseteq \clausesPF$ with $\bot \not\in\slimfull{N}{\mapT{N}}$.
A \slimfull{clause $C_0 \in\clausesF$}{closure $C_0\closure\theta \in \clausesPF$} 
is called a \emph{counterexample} if 
\begin{full}it is variable-irreducible w.r.t.~$\Rbasic_N$ and\end{full}
$\Rbasic_N \not\modelsfol C_0\slimfull{}{\closure\theta}$.
An inference \emph{reduces} a counterexample $C_0\slimfull{}{\closure\theta}$ if its main premise is $C_0\slimfull{}{\closure\theta}$,
its side premises are in $N$ and true in $\Rbasic_N$, and
its conclusion is a counterexample smaller than $C_0\slimfull{}{\closure\theta}$.
An inference system has the \emph{reduction property for counterexamples}
if for all $N \subseteq \clausesPF$ and all counterexamples $C_0\slimfull{}{\closure\theta} \in N$, there exists an inference from $N$
that reduces $C_0\slimfull{}{\closure\theta}$.

\begin{lem}
  \label{lem:reducible-s}
  Let $\slimfull{C}{C_0\closure\theta} \in N$ be a counterexample.
  Let $\slimfull{L}{L_0}$ be a literal in $\slimfull{C}{C_0\closure\theta}$
  that is eligible and negative or strictly eligible and positive.
  \begin{full}Let $C = C_0\theta$ and $L = L_0\theta$.\end{full}
  We assume that the larger side of $L$
  is reducible by a rule $s \rewrite s' \in \Rbasic_C$.
  Then the inference system $\PFInf$ reduces the counterexample $\slimfull{C}{C_0\closure\theta}$.
\end{lem}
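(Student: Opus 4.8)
The plan is to follow Bachmair and Ganzinger's counterexample-reduction pattern, turning the given rewrite step on the larger side of $L$ into one of the inference rules of $\PFInf$ applied to $C_0\closure\theta$. Write $L_0 = u_0 \ceqneq v_0$ so that $u_0\theta$ is the larger side of $L = L_0\theta$, and fix an occurrence of the rule's left-hand side $s$ inside $u_0\theta$, say at position $p$.

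First I would locate the rewrite precisely. Since $s \rewrite s'$ lies in some $\Delta_f$ with $f \prec C_0\theta$ and $u_0\theta \succeq s$, the literal $s \ceq s'$ is strictly below $L_0\theta$ in the literal order; this is a short multiset check, distinguishing $u_0\theta = s$ from $u_0\theta \succ s$ and using that a strictly eligible positive literal not of the form $\cdot \ceq \ifalse$ is strictly maximal. Hence variable-irreducibility of $C_0\closure\theta$ forbids $p$ from lying at or below a variable of $u_0$, so $p$ is a position $p_0$ of $u_0$ with $u_0|_{p_0}\theta = s$ and $u_0|_{p_0}$ not a variable. I would also arrange, by passing to an innermost reducible occurrence where needed, that the rewrite takes place at a green position when $s$ is nonfunctional and at the root of a green functional ($\cst{fun}$-headed) subterm when $s$ is $\cst{fun}$-headed; this is what makes $\PFSup$ respectively $\PFExt$ applicable.

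Then comes a case split on which clause of Definition~\ref{def:RbasicN} produced the rule. If it is a produced rewrite~\ref{cond:deltac-produced}, some closure $D_0\closure\rho = (D'_0 \llor s_0 \ceq t_0)\closure\rho \in N$ produces it with $s_0\rho = s$, $t_0\rho = s'$; by Lemma~\ref{lem:producing-rest-false} we get $R_N \not\modelsfol D'_0\rho$, hence $R_N \modelsfol D_0\rho$, and by Lemma~\ref{lem:productive-variable-irreducible} the closure $D_0\closure\rho$ is variable-irreducible. I would form the $\PFSup$ inference with left premise $D_0\closure\rho$, main premise $C_0\closure\theta$, superposing $s_0 \ceq t_0$ into $p_0$: its side conditions come from the conditions of \ref{cond:deltac-produced} (in particular \ref{cond:deltac-rest-false} forces $s_0 \ceq t_0$ to occur exactly once in $D_0\rho$, hence to be strictly maximal), from $D_0\rho \prec C_0\theta$ (since $\Delta_{D_0\rho} \subseteq R_C$), from $u_0|_{p_0}$ not being a variable, and from lifting the eligibility of $L_0\theta$ and of the position to the unifier $\sigma$ agreeing with $\theta$ on $C_0$ and with $\rho$ on $D_0$. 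Its conclusion is $D'_0\rho$ together with $C_0\theta$ with $s$ replaced by $s'$ at $p_0$: it is $\succ$-smaller than $C_0\theta$ by \ref{cond:PF:order:comp-with-contexts} and $s \succ s'$, still false in $R_N$ by $R_N \not\modelsfol D'_0\rho$ and congruence ($s \leftrightarrow^*_{R_N} s'$), and variable-irreducible by Lemma~\ref{lem:inferences-preserve-varirred}, so it reduces the counterexample. If the rule is a function rewrite~\ref{cond:deltac-function}, say $s = \mapF{u}$, $s' = \mapF{w}$ with $\mapF{u} \succ \mapF{w}$ and $\mapF{u\>\diff^{\tau,\upsilon}_{a,b}} \leftrightarrow^*_{R_C} \mapF{w\>\diff^{\tau,\upsilon}_{a,b}}$ for all $a,b$, I would use $\PFExt$ at $p_0$, choosing the witness term and a grounding $\rho$ (agreeing with $\theta$ on $C_0$) so that the $\mapFonly$-image of the witness equals $s'$; the added literal is then false in $R_N$ by the function-rewrite condition, and the rest of the conclusion is $C_0\theta$ with $\mapF{u}$ replaced by $\mapF{w}$, hence false and $\succ$-smaller. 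If the rule is a logical or backstop Boolean rewrite~\ref{cond:deltac-logical-boolean}/\ref{cond:deltac-backstop-boolean} (so $s$ is Boolean, $s \notin \{\itrue, \ifalse\}$, $s' \in \{\itrue, \ifalse\}$), I would apply $\PFBoolHoist$ if $s' = \ifalse$ and $\PFLoobHoist$ if $s' = \itrue$ at $p_0$; the side condition that this occurrence of $s$ is not itself the side of a literal $s \ceq \ifalse$ or $s \ceq \itrue$ can fail only in situations where either $s$ has a logical-symbol root (then use $\PFClausify$ with the matching triple) or the rule is a produced one (the branch above) — one checks that a backstop rule whose trigger is a literal side already present in $C_0\theta$ would exceed $C_0\theta$ and so could not belong to $R_C$. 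In all these Boolean subcases the conclusion replaces $s$ by its $R_N$-normal form (preserving falsity), the added hoisting/clausification literals are false in $R_N$, and the conclusion is $\succ$-smaller by \ref{cond:PF:order:t-f-minimal}.

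The main obstacle is the interplay of positions, greenness, and variable-irreducibility rather than any single inference. Concretely: (a) guaranteeing that the rewrite can always be staged at a green position, or at the root of a green functional subterm, which requires a careful analysis of the shape of $R_N$-reducible terms in the presence of the $\square$-decomposed $\cst{fun}$-symbols; (b) re-checking variable-irreducibility of the conclusion for $\PFExt$ (and for the Boolean cases), where a fresh function or Boolean variable is introduced and one must use that the yellow subterms of the chosen witness are in $R_N$-normal form; and (c) the routine but delicate bookkeeping of lifting eligibility and maximality from $\theta$- and $\rho$-instances to the closure-level unifier $\sigma$, and verifying in every case that the conclusion is strictly below $C_0\theta$, so that the inference genuinely reduces the counterexample.
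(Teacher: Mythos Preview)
Your overall structure matches the paper's, but the step ``by passing to an innermost reducible occurrence where needed, [arrange] that the rewrite takes place at a green position when $s$ is nonfunctional'' does not work, and this is the main gap. A nonfunctional redex can sit strictly inside a $\cst{fun}$-headed argument: take $\cst{g}(\cst{fun}_\square(\cst{f}(\cst{a})))$ with the only applicable rule being a produced rewrite $\cst{f}(\cst{a}) \rewrite \cst{b}$. The innermost reducible subterm is $\cst{f}(\cst{a})$, and it is \emph{not} at a green position (its parent is functional), so $\PFSup$ is unavailable there; no amount of ``passing to innermost'' fixes this. The paper handles it by an upfront branch \emph{before} the case split on the rule's origin: if $p$ is not green, take the green functional subterm $t$ enclosing $s$, let $t' = t{\downarrow_{R_N}}$, and apply $\PFExt$ at $t$'s position rewriting $\mapF{u} = t$ to $\mapF{w} = t'$; Lemma~\ref{lem:RN-argcong-diff} makes the added negative literal false in $R_N$. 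The use of the full $R_N$-normal form (rather than a single step) is essential: the fresh variables required in the $\PFExt$ witness get instantiated with the yellow subterms of $t'$, and only their being in normal form yields variable-irreducibility of the conclusion. The same remark applies to your green function-rewrite branch~\ref{cond:deltac-function}: taking the one-step target $s'$ instead of the normal form would break the irreducibility check you flag in obstacle~(b).

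A smaller gap: your exclusion of variable positions relies solely on $s \ceq s' \prec L\theta$, but for rules from~\ref{cond:deltac-logical-boolean} this can fail---e.g.\ $L\theta = s \ceq \itrue$ with $s' = \ifalse$ gives $\{s,\ifalse\} \succ \{s,\itrue\}$, so $s \ceq s' \succ L\theta$. The paper sidesteps this by invoking the \emph{Boolean} clause of variable-irreducibility when $s$ is Boolean: every Boolean subterm of $x\theta$ is $\itrue$ or $\ifalse$, yet a reducible $s$ is neither.
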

\begin{proof}
  Let $p$ be the position of $C$ that is
  located at the larger side of $L$ and reducible by $s \rewrite s'$.%
\begin{full}

  First, we claim that $p$ is not at or below a variable position of $C_0$
  and thus $s = s_0\theta$ for a subterm $s_0$ of $C_0$ that is not a variable.

  To see this, assume for a contradiction
  that $p$ is at or below a variable position of $C_0$, i.e.,
  there is a variable $x$ in $C$ such that $s$ is a subterm of $x\theta$.

  If $s$ were Boolean, then $s$ must be $\itrue$ or $\ifalse$ by variable-irreducibility,
  contradicting the fact that $s \rewrite s' \in \Rbasic_C$ because Definition~\ref{def:RbasicN}
  does not produce rules rewriting $\itrue$ or $\ifalse$.
  
  So $s$ is not a Boolean term. Then, by the rules of Definition~\ref{def:RbasicN},
  $s \rewrite s' \in \Rbasic_C$ implies that $C \succeq s \ceq s'$.
  By variable-irreducibility, since $s$ is a subterm of $x\theta$,
  $L \preceq s \ceq s'$, where $L$ is the literal of $C$ containing position $p$.
  Since $L$ contains $s$, it follows that $L$ must be positive
  and its larger side must be $s$.
  Since $p$ is eligible and $s$ is not a Boolean term, $L$ must be the maximal literal of $C$.
  So, since $C \succeq s \ceq s'$ and $L \preceq s \ceq s'$,
  we have $L = s \ceq s'$.
  But this contradicts this lemma's assumption that $\Rbasic_{N}\not\modelsfol C$.

  This concludes the proof of our claim that $s = s_0\theta$ for a subterm $s_0$ of $C_0$ that is not a variable.

  \smallskip

  If the subterm $s$ at position $p$ is not a green subterm of $C$, then
  it must be contained in functional green subterm $t$ of $C$.
  Let $q$ be the position of $t$ in $C$.
  Since $p$ is eligible in $C_0\closure\theta$, $q$ is also eligible in $C_0\closure\theta$.
  Let $t'$ the normal form of $t$ \wrt\ $\Rbasic_N$.
  Let $u = \mapFonly^{-1}(t)$ and $w = \mapFonly^{-1}(t')$.
  Then $\mapF{u} \leftrightrewrite^*_{R_N} \mapF{w}$.
  By Lemma~\ref{lem:RN-argcong-diff},
  $\mapF{u\>\diff^{\tau,\upsilon}_{u,w}} \leftrightrewrite^*_{R_N} \mapF{w\>\diff^{\tau,\upsilon}_{u,w}}$.
  Since $t$ contains $s$, it is reducible \wrt\ $\Rbasic_N$,
  and thus $\mapF{u} = t \succ t' = \mapF{w}$.
  Thus, we can apply \PFExt{} to reduce the counterexample.
  To fulfill the conditions of \PFExt{} on $w$,
  we must replace the nonfunctional yellow subterms of $w$
  by fresh variables and choose $\rho$ accordingly.
  Given the above properties of $R_N$, the conclusion of this inference is equivalent to the premise.
  It is also smaller than the premise by \ref{cond:PF:order:ext}
  and because $\mapF{u} \succ \mapF{w}$.

  It remains to show that the conclusion is variable-irreducible.
  First, consider one of the fresh variables introduced to replace the
  nonfunctional yellow subterms of $w$.
  Since yellow subterms in $w$ correspond to subterms in $t'$,
  and $t'$ is a normal form,
  $x\rho$ is irreducible \wrt\ $R_N$.
  Next, consider a variable $x$
  that occurs in a literal $L\closure\theta$
  in the conclusion but is not one of the fresh variables.
  Then $x$ occurs also in a literal $L'\closure\theta$ in the premise $C_0\closure\theta$
  with $L\theta \preceq L'\theta$.
  Since $C_0\closure\theta$ is variable-irreducible,
  this shows that the conclusion is variable-irreducible.

  \smallskip

  Otherwise, $s$ is a green subterm of $C$.
\end{full}
  \slimfull{We}{Then we} make a case distinction on which case of Definition~\ref{def:RbasicN} the rule $s \rewrite s'$ originates from:
  \begin{itemize}
    \item 
    \ref{cond:deltac-logical-boolean}
      Then the root of $s$ is a logical symbol and $s\notin\{\itrue,\ifalse\}$.
      By Lemma~\ref{lem:RN-bool-normal},
      $R_N$ reduces $s$ to $\itrue$ or to $\ifalse$.
      \begin{full}
      By our claim above, $s_0$ is not a variable and
      since $s = s_0\theta$, $s_0$ has a logical symbol at its root.
      \end{full}
      \begin{itemize}
      \item First consider the case where the position $p$ in $C$ is in a literal of the form $s \ceq \itrue$
      or $s \ceq \ifalse$.
      Then \slimfull{\FClausify}{\PFClausify{}} is applicable to $\slimfull{C}{C_0\closure\theta}$
      and the conclusion of this inference
      is smaller than it.
      Moreover, the conclusion is equivalent to $\slimfull{C}{C_0\closure\theta}$
      by Lemma~\ref{lem:PF:R-olammodels}%
\begin{full}
      and variable-irreducible by Lemma~\ref{lem:inferences-preserve-varirred}
\end{full}.
      \item Otherwise, we apply either \PFBoolHoist{}
      (if $\slimfull{s}{s_0}$ reduces to $\ifalse$)
      or \PFLoobHoist{} (if $\slimfull{s}{s_0}$ reduces to $\itrue$).
      In both cases, the conclusion of the inference is smaller than $\slimfull{C}{C_0\closure\theta}$.
      Moreover, the conclusion is equivalent to $\slimfull{C}{C_0\closure\theta}$
      by Lemma~\ref{lem:PF:R-olammodels}%
\begin{full}
      and variable-irreducible by Lemma~\ref{lem:inferences-preserve-varirred}
\end{full}.
      \end{itemize}
    \item \ref{cond:deltac-backstop-boolean}
    Then $R_N$ reduces $s$ to $\ifalse$ and $s\notin\{\itrue,\ifalse\}$.
    Due to the presence of the rule $s\rewrite \ifalse$ in $R_C$, $C$ must be larger than $s \ceq \ifalse$.
    So, since $p$ is eligible in $C$, this position cannot be in a literal of the form $s \ceq \itrue$.
    It cannot be in a literal of the form $s \ceq \ifalse$ either because $s \ceq \ifalse$ is true in $R_N$.
    So we can apply \slimfull{\FBoolHoist}{\PFBoolHoist{}} to reduce the counterexample,
    again using Lemma~\ref{lem:PF:R-olammodels}\begin{full} and Lemma~\ref{lem:inferences-preserve-varirred}\end{full}.
    \item 
    \ref{cond:deltac-function} 
    Then $s$ is functional and reducible \wrt\ $\Rbasic_N$.
    \begin{full}Then we can proceed as in the $\PFExt$ case above, using $s$ in the role of $t$.\end{full}%
    \begin{slim}
      Consider the normal form $s\downarrow_{\Rbasic_N}$ of $s$ \wrt\ $\Rbasic_N$.
      Let $u = \mapFonly^{-1}(s)$ and $w = \mapFonly^{-1}(s\downarrow_{\Rbasic_N})$.
      Then $\mapF{u} \leftrightrewrite^*_{R_N} \mapF{w}$.
      By Lemma~\ref{lem:RN-argcong-diff},
      $\mapF{u\>\diff^{\tau,\upsilon}_{u,w}} \leftrightrewrite^*_{R_N} \mapF{w\>\diff^{\tau,\upsilon}_{u,w}}$.
      Since $\Rbasic_N$ is oriented by $\succ$,
      we have $\mapF{u} = s \succ s\downarrow_{\Rbasic_N} = \mapF{w}$.
      Thus, we can apply \slimfull{\FExt}{\PFExt{}} to reduce the
      counterexample\slimfull{, using Remark~\ref{rem:FInf-SigmaI-SigmaH-equivalence}}{}.
      Given the above properties of $R_N$, the conclusion of this inference is equivalent to the premise.
      It is also smaller than the premise by \ref{cond:PF:order:ext}
      and because $\mapF{u} \succ \mapF{w}$.
    \end{slim}
    \item 
    \ref{cond:deltac-produced}
  Then some 
  \begin{slim}clause $D \llor s\ceq s'$\end{slim}%
  \begin{full}closure $D_0 \llor t \ceq t'\closure\>\rho$ with $(D_0 \llor t \ceq t')\rho = D \vee s\ceq s'$\end{full}
  smaller than $C$ produces the rule
  $s\rewrite s'$. 
  We claim that the counterexample $C$ is reduced by the
  inference
  \begin{full}
  \[
  \namedinference{\PFSup}{D_0 \llor t\ceq t'\closure\rho \quad \greensubterm{C_0}{s_0} \closure \theta}
  {D_0\llor \greensubterm{C_0}{t'}\closure(\rho\cup\theta)}
  \]
  \end{full}%
  \begin{slim}
    \[
    \namedinference{\FSup}{D \llor s\ceq s'\quad \subterm{C}{s}}
    {D\llor \subterm{C}{s'}}
    \]
  \end{slim}
  This superposition is a valid inference:
  \begin{itemize}
  \begin{full}
  \item $t\rho = s = s_0\theta$.
  \item By our claim above, $s_0$ is not a variable.
  \end{full}
  \item $\slimfull{s}{s_0}$ is nonfunctional by \ref{cond:deltac-no-fun}.
  \item We have $s\succ s'$ by \ref{cond:deltac-order}.
  \item $D\llor s\ceq s' \prec C[s]$ because $D\llor s\ceq s'$ produces a rule in $R_C$.
  \item The position $p$ of $s$ in $\slimfull{C}{C_0\closure\theta}$ is eligible by assumption of this lemma.
  \item The literal $\slimfull{s\ceq s'}{t\ceq t'}$ is eligible in
  $\slimfull{D\llor s\ceq s'}{(D_0 \llor t \ceq t')\closure\>\rho}$ by \ref{cond:deltac-maximal} and \ref{cond:deltac-no-sel}.
  It is strictly eligible because if $s\ceq s'$ also occurred as a literal in $D$,
  we would have $R_{D \llor s\ceq s'} \cup \{s \rewrite s'\} \modelsfol D$,
  in contradiction to \ref{cond:deltac-rest-false}.
  \item If $\slimfull{s'}{t'\rho}$ is Boolean, then $\slimfull{s'}{t'\rho} = \itrue$ by \ref{cond:deltac-boolean}.
  \end{itemize}
  As $\slimfull{D\llor s\ceq s'}{D_0 \llor t \ceq t'\closure\>\rho}$ is productive, $R_{N}\not\modelsfol D$ by Lemma \ref{lem:producing-rest-false}.
  Hence $D\llor C\left[s'\right]$ is
  equivalent to $C\left[s'\right]$, which is equivalent
  to $C\left[s\right]$ \wrt\ $R_{N}$.
  \begin{full}
  Moreover, $(D_0 \llor t\ceq t')\closure\rho$
  is variable-irreducible by Lemma~\ref{lem:productive-variable-irreducible}.
  So $D\llor C\left[s'\right]$ is variable-irreducible by Lemma~\ref{lem:inferences-preserve-varirred}.
  \end{full}
  It remains to show that the new counterexample $D\llor C\left[s'\right]$ is strictly smaller than $C$.
  Using \ref{cond:PF:order:comp-with-contexts}, $C[s']\prec C$ because $s'\prec s$ and  $D\prec C$ because $D\vee s\approx s'\prec C$.
  Thus, the inference reduces the counterexample $C$.\qedhere
  \end{itemize}
\end{proof}

\begin{lem} \label{lem:reduction-prop}
  The inference system $\PFInf$ has the reduction property for counterexamples.
\end{lem}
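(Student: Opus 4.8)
The plan is to establish the reduction property by a case analysis on the eligible literal of a counterexample, reducing most cases to Lemma~\ref{lem:reducible-s}. Fix $N\subseteq\clausesPF$ with $\bot\notin\mapT{N}$ and a counterexample $C_0\closure\theta\in N$; write $C=C_0\theta$. Since $C$ is false in $R_N$ and $R_N$ is confluent and terminating (Lemma~\ref{lem:RN-oriented}), every literal of $C$ is false in $R_N$. Let $L$ be a selected literal of $C_0\closure\theta$ if one exists, and otherwise a literal whose $\theta$-instance is maximal in $C$; then $L$ is eligible, and it is either negative or (being selected) strictly eligible and positive. The general strategy is: whenever the larger side of $L\theta$ turns out to be reducible by $\Rbasic_C$, Lemma~\ref{lem:reducible-s} immediately produces a reducing inference. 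Here I will repeatedly use the observation, justified exactly as in the proof of Lemma~\ref{lem:preservation-of-truth}, that any rule of $R_N$ reducing a term occurring in $C$ has left-hand side $\preceq$ that term, hence lies in $\Rbasic_C\cup\Delta_C$; and that $C_0\closure\theta$ being a counterexample is variable-irreducible w.r.t.\ $R_N\supseteq\Rbasic_C$ and hence w.r.t.\ $\Rbasic_C$.

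First I would treat a negative eligible literal $L\theta=u\theta\cneq u'\theta$. If $u\theta=u'\theta$, then $\PFEqRes$ applies: its conclusion is a subclause of $C$ (hence false in $R_N$), strictly smaller, and variable-irreducible by Lemma~\ref{lem:inferences-preserve-varirred}, so it reduces the counterexample. Otherwise, say $u\theta\succ u'\theta$; since $L\theta$ is false, $u\theta$ and $u'\theta$ are $R_N$-joinable, and orientedness forces $u\theta$ itself to be $R_N$-reducible. Here $\Delta_C=\emptyset$, because the productive case \ref{cond:deltac-produced} requires a maximal \emph{positive} literal and no selected literal, while the backstop case \ref{cond:deltac-backstop-boolean} needs $C$ to be a positive unit; hence $u\theta$ is reducible by $\Rbasic_C$ and Lemma~\ref{lem:reducible-s} applies. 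The selected-positive case, where $L=s\ceq\ifalse$ and $L\theta$ is false, is analogous: either $s\theta=\itrue$, so $L\theta=\ifalse\ceq\itrue$ and $\PFFalseElim$ reduces the counterexample, or $s\theta\notin\{\itrue,\ifalse\}$, in which case Lemma~\ref{lem:RN-bool-normal} makes $s\theta$ $R_N$-reducible, and the same argument—now also using that \ref{cond:deltac-backstop-boolean} would make $C$ true if $C$ were the unit $s\theta\ceq\ifalse$—shows $s\theta$ is $\Rbasic_C$-reducible, so Lemma~\ref{lem:reducible-s} applies.

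It remains to handle a positive, non-selected, hence maximal eligible literal $L\theta=s\theta\ceq t\theta$ with $s\theta\succ t\theta$ (not $R_N$-joinable). If $L\theta$ is not strictly maximal, another literal of $C_0$ has $\theta$-instance $s\theta\ceq t\theta$, and $\PFEqFact$, pairing the two $s\theta$-sides, yields a smaller counterexample whose new literals $t\theta\cneq t\theta$ and $s\theta\ceq t\theta$ are false in $R_N$. If $L\theta$ is strictly maximal and $s\theta$ is $\Rbasic_C$-reducible, Lemma~\ref{lem:reducible-s} applies. Otherwise $s\theta$ is $\Rbasic_C$-irreducible, hence $R_N$-irreducible: if its root is a logical symbol, Lemma~\ref{lem:RN-bool-normal} forces $s\theta\in\{\itrue,\ifalse\}$, so $s\theta=\ifalse$, $t\theta=\itrue$, and $\PFFalseElim$ reduces it; if $s\theta$ is functional, writing $s\theta=\mapF{u_0}$ and $t\theta=\mapF{w_0}$ (bijectivity of $\mapFonly$, Lemma~\ref{lem:fol-bijection}), the contrapositive of Lemma~\ref{lem:RN-ext} gives ground $a,b$ with $\mapF{u_0\>\diff^{\tau,\upsilon}_{a,b}}$ not $R_N$-joinable with $\mapF{w_0\>\diff^{\tau,\upsilon}_{a,b}}$, so $\PFArgCong$ with parameters $a,b$ produces a conclusion false in $R_N$ and strictly smaller by \ref{cond:PF:order:ext}; if $s\theta$ is nonfunctional with non-logical root, I would check that all conditions for $C_0\closure\theta$ producing $s\theta\rewrite t\theta$ via \ref{cond:deltac-produced} hold except possibly \ref{cond:deltac-rest-false} (using the contrapositive of Lemma~\ref{lem:preservation-of-truth} for \ref{cond:deltac-false} and the variable-irreducibility remark above for \ref{cond:deltac-varirred}). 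If \ref{cond:deltac-rest-false} holds the rule is produced, making $C$ true in $R_N$, which contradicts that $C_0\closure\theta$ is a counterexample, so this sub-case is vacuous; if it fails, $C'\theta$ (where $C=C'\theta\llor s\theta\ceq t\theta$) must contain a positive literal $s\theta\ceq b$ with $b\prec s\theta$ and $b$ $\Rbasic_C$-joinable with $t\theta$, and $\PFEqFact$ on these two literals again gives a smaller counterexample. Finally, $\PFBoolHoist$, $\PFLoobHoist$, $\PFClausify$, $\PFSup$, and $\PFExt$ do not need to be invoked directly, as they are dispatched inside Lemma~\ref{lem:reducible-s}, and $\PFDiff$ is never needed since extensionality of $R_N$ is built into rule \ref{cond:deltac-function}.

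The main obstacle is the bookkeeping relating $\Rbasic_C$ to $R_N$: establishing in each relevant case that $R_N$-reducibility of the larger side of the eligible literal already implies $\Rbasic_C$-reducibility—which rests on the combined term/clause order satisfying "a term occurring in $C$ is $\prec C$" together with a precise determination of when $\Delta_C$ is nonempty—since only then may Lemma~\ref{lem:reducible-s} be invoked. The other delicate points are the functional sub-case, where the right $\diff$-parameters must be supplied to $\PFArgCong$ via Lemma~\ref{lem:RN-ext}, and the analysis of the failure of \ref{cond:deltac-rest-false}, which is exactly where $\PFEqFact$ enters.
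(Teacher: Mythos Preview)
Your approach matches the paper's closely: the same case split on the eligible literal, the same reducing inferences ($\PFEqRes$, $\PFEqFact$, $\PFFalseElim$, $\PFArgCong$ for the explicit cases, Lemma~\ref{lem:reducible-s} for the $R_C$-reducible cases), and the same endgame via the failure of \ref{cond:deltac-rest-false}. Two points deserve comment.

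First, in the negative and selected-positive cases you argue $R_N$-reducibility $\Rightarrow$ $R_C$-reducibility via the observation ``rules reducing a term in $C$ lie in $R_C\cup\Delta_C$'' plus $\Delta_C=\emptyset$. That observation is stated too broadly---it can fail for terms in small literals of $C$---though it does hold for the larger side of the specific literals you actually apply it to. The paper sidesteps this bookkeeping by taking the contrapositive of Lemma~\ref{lem:preservation-of-truth} directly: $R_N\not\models C$ gives $R_C\not\models C$, so the relevant literal is already false in $R_C$, hence its larger side is $R_C$-reducible.

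Second, and more substantively, the step ``$R_C$-irreducible, hence $R_N$-irreducible'' in your strictly-maximal positive case is not justified and does not follow from your earlier observation (which runs in the opposite direction). You invoke it only to dismiss the logical-root sub-case, but that dismissal needs a different argument. The paper shows directly that if $s\theta$ has a logical root and is $R_C$-irreducible, then its Boolean arguments are already $\itrue/\ifalse$ (any rule reducing a proper subterm of $s\theta$ lies in $R_{s\theta}\subseteq R_C$), so case~\ref{cond:deltac-logical-boolean} fires for the term $s\theta$, putting a rule into $\Delta_{s\theta}\subseteq R_C$---contradicting $R_C$-irreducibility. The sub-case is thus vacuous, but not for the reason you give. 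You also gloss over the verification of \ref{cond:deltac-boolean} (if $t\theta$ is Boolean then $t\theta=\itrue$), which the paper handles via the backstop rule~\ref{cond:deltac-backstop-boolean}; this is precisely where the selected-positive case you carved out separately gets absorbed in the paper's organization.
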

\begin{proof}
  Let $\slimfull{C}{C_0\closure\theta} \in N$ be a counterexample---i.e.,\ a \slimfull{clause}{closure in $\irred_{\Rbasic_{N}}(N)$}
  that is false in $R_{N}$.
  We must show that there is an inference from $N$ that reduces
  $\slimfull{C}{C_0\closure\theta}$; i.e., 
  the inference has main premise $\slimfull{C}{C_0\closure\theta}$,
  side premises in $N$ that are true in $R_{N}$,
  and a conclusion that is a smaller counterexample
  than $\slimfull{C}{C_0\closure\theta}$.
  \begin{full}For all claims of a reducing inference in this proof, we use Lemma~\ref{lem:inferences-preserve-varirred}
  to show that the conclusion is variable-irreducible.\end{full}
  
  Let $\slimfull{L}{L_0}$ be an eligible literal in $\slimfull{C}{C_0\closure\theta}$.
  \begin{full}Let $C = C_0\theta$.\end{full}
  We proceed by a case distinction:

  \medskip
  \noindent\textsc{Case 1:}\enskip $\slimfull{L}{L_0\theta}$ is of the form $s \cneq s'$.
    \begin{itemize}
      \item Case 1.1: $s = s'$. Then \slimfull{\FEqRes}{\PFEqRes{}} reduces $C$.
      \item Case 1.2: $s \ne s'$. Without loss of generality, $s \succ s'$.
      Since $R_{N}\not\modelsfol C$, we have $R_C \not\modelsfol C$ by Lemma~\ref{lem:preservation-of-truth}.
      Therefore, $R_C\not\modelsfol s\not\approx s'$
      and $R_C\modelsfol s\approx s'$. Thus, $s$ must be reducible
      by $R_C$ because $s\succ s'$. Therefore,
      we can apply Lemma~\ref{lem:reducible-s}.
    \end{itemize}
  \noindent\textsc{Case 2:}\enskip $\slimfull{L}{L_0\theta}$ is of the form $s \ceq s'$. Since $R_N \not\modelsfol C$, we can assume without loss of generality that $s \succ s'$.
    \begin{itemize}
      \item Case 2.1: $\slimfull{L}{L_0}$ is eligible, but not strictly eligible.
      Then $\slimfull{L}{L_0\theta}$ occurs more than once in $C$.
      So we can apply \slimfull{\FEqFact}{\PFEqFact{}} to reduce the counterexample.
      \item Case 2.2: $\slimfull{L}{L_0}$ is strictly eligible and $s$ is reducible by $R_C$.
      Then we apply Lemma \ref{lem:reducible-s}.
      \item Case 2.3: $\slimfull{L}{L_0}$ is strictly eligible and $s = \ifalse$.
       Then, since $s \succ s'$, we have $s' = \itrue$ by \ref{cond:PF:order:t-f-minimal}. So, \PFFalseElim{} reduces the counterexample.
      \item Case 2.4: $\slimfull{L}{L_0}$ is strictly eligible and $s$ is functional.
      Then we apply \slimfull{\FArgCong}{\PFArgCong{}} to reduce the counterexample.
      The conclusion is smaller than the premise by \ref{cond:PF:order:ext}.
      By Lemma~\ref{lem:RN-ext},
      there must be at least one choice of $u$ and $w$ in the \slimfull{\FArgCong}{\PFArgCong{}} rule such that
      the conclusion is a counterexample.
      \item Case 2.5: $\slimfull{L}{L_0}$ is strictly eligible and $s \ne \ifalse$ is nonfunctional and not reducible by $R_C$.
      Since $R_N \not\modelsfol C$, $\slimfull{C}{C_0\closure\theta}$ cannot be productive.
      So at least one of the conditions of \ref{cond:deltac-produced} of Definition~\ref{def:RbasicN} is violated.
      \begin{full}\ref{cond:deltac-instance}, \ref{cond:deltac-varirred}, \end{full}\ref{cond:deltac-split-clause}, \ref{cond:deltac-no-fun}, \ref{cond:deltac-order},
      \ref{cond:deltac-irreducible},
      and \ref{cond:deltac-false} are clearly satisfied.

      For  \ref{cond:deltac-logical-symbol}, \ref{cond:deltac-boolean}, \ref{cond:deltac-maximal}, and \ref{cond:deltac-no-sel}, we argue as follows:
      \begin{itemize}
        \item\ref{cond:deltac-logical-symbol}:
          If $s$ were headed by a logical symbol, then
          one of the cases of \ref{cond:deltac-logical-boolean} applies.
          The condition in \ref{cond:deltac-logical-boolean} that
          any Boolean arguments of $s$ must be $\itrue$ or $\ifalse$
          is fulfilled by Lemma~\ref{lem:RN-bool-normal} and the fact that
          the rules applicable to subterms of $s$ in $R_N$
          are already contained in $R_s$.
          So \ref{cond:deltac-logical-boolean}
          adds a rewrite rule for $s$ to $R_C$, contradicting irreducibility of~$s$.

        \item\ref{cond:deltac-boolean}:
        If $s'$ were a Boolean other than $\itrue$, 
        since $s \succ s'$, we would have $s \ne \itrue,\ifalse$ by \ref{cond:PF:order:t-f-minimal}.
        Moreover, $s' \succeq \ifalse$, and thus $C \succeq s \ceq \ifalse$.
        Since $s$ is not reducible by $R_C$, is is also irreducible by $R_{s \ceq \ifalse} \subseteq R_C$.
        So \ref{cond:deltac-backstop-boolean} triggers and sets $\Delta_{s \ceq \ifalse} = \{s \rewrite \ifalse\}$.
        Since $s$ is not reducible by $R_C$, we must have $C = s \ceq \ifalse$.
        But then $C$ istrue in $R_N$, a contradiction.

        \item\ref{cond:deltac-maximal}: By \ref{cond:deltac-boolean}, $\slimfull{L}{L_0}$ cannot be selected and thus eligibility implies maximality.
        
        \item\ref{cond:deltac-no-sel}: By \ref{cond:deltac-boolean}, $\slimfull{L}{L_0}$ cannot be selected. If another literal was selected, $L_0$ would not be eligible.
      \end{itemize}
      So \ref{cond:deltac-rest-false} must be violated.
      Then $R_{C}\cup\{ s\rewrite s'\} \modelsfol C'$,
      where $C'$ is the subclause of $C$ with $\slimfull{L}{L_0\theta}$ removed.
      However, $R_{C}\not\modelsfol C$, and
      therefore, $R_{C}\not\modelsfol C'$.
      Thus, we must have $C'=C''\llor r\ceq t$ for some terms $r$ and $t$, where $R_{C}\cup\{ s\rewrite s'\} \modelsfol r\ceq t$
      and $R_{C}\not\modelsfol r\ceq t$.
      So $r\neq t$ and without loss of generality we assume $r\succ t$. Moreover $s\rewrite s'$ must
      participate in the normalization of $r$ or $t$ by $R_{C}\cup\left\{ s\rewrite s'\right\} $.
      Since $s \ceq s'$ is maximal in $C$ by \ref{cond:deltac-maximal},
      $r \preceq s$.
      So the rule $s\rewrite s'$ can be
      used only as the first step in the normalization of $r$. Hence $r=s$ and
      $R_{C}\modelsfol s'\ceq t$.
      Then \slimfull{\FEqFact}{\PFEqFact{}} reduces the counterexample.\qedhere
    \end{itemize}
\end{proof}

Using Lemma~\ref{lem:reduction-prop} and the same ideas as for Theorem 4.9 of
Bachmair and Ganzinger's framework~\cite{bachmair-ganzinger-2001-resolution},
we obtain the following theorem:

\begin{thm} \label{thm:PF:model-for-varirred-closures}
Let $N$ be a set of closures that is saturated up to redundancy
\wrt\ \slimfull{$\FInf$ and $\FRedI$}{$\PFInf$ and $\PFRedI$},
and 
\begin{full}$N$ does not contain $\bot\closure\theta$ for any $\theta$.\end{full}%
\begin{slim}$\bot \not\in N$.\end{slim}
Then $\Rbasic_N \modelsolam \slimfull{N}{\irred_{\Rbasic_N}(N)}$.
\end{thm}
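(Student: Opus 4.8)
The plan is to run the standard minimal-counterexample argument of Bachmair and Ganzinger's framework of reducing counterexamples, assembling the pieces already established in this subsection. First I would argue by contradiction: suppose $\Rbasic_N \not\modelsolam \irred_{\Rbasic_N}(N)$. Since truth of a clause in the fixed interpretation $\Rbasic_N$ does not depend on whether we read entailment as $\modelsolam$ or as $\modelsfol$, this means some closure in $\irred_{\Rbasic_N}(N)$ is falsified by $\Rbasic_N$, i.e.\ there is a counterexample in $N$ in the sense defined just before Lemma~\ref{lem:reducible-s} (a closure that is variable-irreducible w.r.t.\ $\Rbasic_N$ and false in $\Rbasic_N$). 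Because $\succ$ is a well-founded total order on ground terms and its multiset/clause and closure extensions inherit well-foundedness, I can pick a $\succ$-minimal counterexample $C_0\closure\theta \in N$.

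Next I would invoke the reduction property for counterexamples (Lemma~\ref{lem:reduction-prop}) to obtain a $\PFInf$ inference $\iota$ whose main premise is $C_0\closure\theta$, whose side premises lie in $N$ and are true in $\Rbasic_N$, and whose conclusion $\concl(\iota)$ is a counterexample strictly smaller than $C_0\closure\theta$. I would note that $\iota$ is not a $\PFDiff$ inference: $\PFDiff$ is nullary and therefore has no main premise (equivalently, inspection of the proof of Lemma~\ref{lem:reduction-prop} shows only the other rules occur). Since $N$ is saturated up to redundancy, $\iota \in \PFRedI(N)$. I would then apply Definition~\ref{def:PF:RedI} with the concrete rewrite system $R = \Rbasic_N$, checking its hypotheses: $\Rbasic_N$ is oriented by $\succ$ and has no critical pairs (Lemma~\ref{lem:RN-oriented}), hence is confluent (local confluence is immediate without critical pairs, and termination from $\succ$ together with Newman's lemma gives confluence); its only Boolean normal forms are $\itrue$ and $\ifalse$ (Lemma~\ref{lem:RN-bool-normal}); and $\concl(\iota)$ is variable-irreducible w.r.t.\ $\Rbasic_N$, because it is a counterexample and counterexamples are variable-irreducible by definition. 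Since $\iota$ is not a $\Diff$ inference, Definition~\ref{def:PF:RedI} yields $\Rbasic_N \cup O \modelsolam \concl(\iota)$ with $O = \{E \in \irred_{\Rbasic_N}(N) \mid E \prec \mprem(\iota)\} = \{E \in \irred_{\Rbasic_N}(N) \mid E \prec C_0\closure\theta\}$.

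Finally I would close the contradiction. By Lemma~\ref{lem:PF:R-olammodels}, $\Rbasic_N$ is a $\modelsolam$-interpretation; it satisfies every equation $s \ceq t$ with $s \rewrite t \in \Rbasic_N$ (the two sides being joinable in $\Rbasic_N$), so it is a $\modelsolam$-model of $\Rbasic_N$; and by minimality of $C_0\closure\theta$ among counterexamples, every $E \in O$ — a variable-irreducible closure of $N$ strictly below $C_0\closure\theta$ — fails to be a counterexample and is thus true in $\Rbasic_N$, so $\Rbasic_N$ also models $O$. Hence $\Rbasic_N$ is a $\modelsolam$-model of $\Rbasic_N \cup O$, so $\Rbasic_N \modelsolam \concl(\iota)$, i.e.\ $\Rbasic_N$ satisfies $\concl(\iota)$, contradicting that $\concl(\iota)$ is a counterexample. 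Therefore no counterexample exists and $\Rbasic_N \modelsolam \irred_{\Rbasic_N}(N)$.

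\textbf{Main obstacle.} I expect the only genuinely delicate point to be the careful bookkeeping when invoking Definition~\ref{def:PF:RedI}: verifying all three side conditions on $R = \Rbasic_N$ and, in particular, making sure the conclusion of the reducing inference is variable-irreducible. This is precisely why the definition of ``counterexample'' bundles variable-irreducibility together with falsity and why Lemma~\ref{lem:reduction-prop} is stated so as to deliver a counterexample rather than merely a smaller false closure. Everything else — well-foundedness of the order on closures, $\Rbasic_N$ being a $\modelsolam$-model of its own rules, and the passage from $\modelsolam$-entailment back to truth in the fixed interpretation $\Rbasic_N$ — is routine given Lemmas~\ref{lem:reduction-prop}, \ref{lem:PF:R-olammodels}, \ref{lem:RN-oriented}, and \ref{lem:RN-bool-normal}.
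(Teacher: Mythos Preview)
Your proposal is correct and follows essentially the same minimal-counterexample argument as the paper: assume a counterexample, pick a minimal one by well-foundedness, reduce it via Lemma~\ref{lem:reduction-prop}, and derive a contradiction from saturation and Definition~\ref{def:PF:RedI}. Your write-up is in fact more careful than the paper's in a few places---you explicitly note that the reducing inference cannot be $\PFDiff$ (since it has a main premise), and you verify the side conditions on $R = \Rbasic_N$ (confluence via Lemma~\ref{lem:RN-oriented} plus Newman's lemma, Boolean normal forms via Lemma~\ref{lem:RN-bool-normal})---whereas the paper simply invokes Definition~\ref{def:PF:RedI} without spelling these checks out.
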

\begin{proof}
By Lemma~\ref{lem:PF:R-olammodels},
it suffices to show that $\Rbasic_N \modelsfol \slimfull{N}{\irred_{\Rbasic_N}(N)}$.
For a proof by contradiction,
we assume that  $\Rbasic_N \not\modelsfol \slimfull{N}{\irred_{\Rbasic_N}(N)}$.
Then $N$ contains a minimal counterexample, i.e.,
a \slimfull{clause $C$}{closure $C_0\closure \theta$} \begin{full}that is variable-irreducible \wrt\ $\Rbasic_N$\end{full}
with $\Rbasic_N \not\modelsfol \slimfull{C}{C_0\closure \theta}$.
Since $\PFInf$ has the reduction property for counterexamples by Lemma~\ref{lem:reduction-prop},
there exists an inference that reduces $\slimfull{C}{C_0\closure \theta}$---i.e., an inference $\iota$ with main premise $\slimfull{C}{C_0\closure \theta}$,
side premises in $N$ that are true in $\Rbasic_N$,
and a conclusion $\concl(\iota)$ that is smaller than $\slimfull{C}{C_0\closure\theta,}$
\begin{full}variable-irreducible \wrt\ $\Rbasic_N$,\end{full} and false in $\Rbasic_N$.
By saturation up to redundancy,
$\iota \in \slimfull{\FRedI}{\PFRedI}$.
By Definition~\ref{def:PF:RedI},
we have 
\begin{slim}$\{E \in N \mid E \prec C\} \modelsolam \concl(\iota)$.\end{slim}
\begin{full}$\Rbasic_N \cup \{E \in \irred_{\Rbasic_N}(N) \mid E \prec C_0\closure\theta\}\modelsolam \concl(\iota)$.\end{full}
By minimality of the counterexample $\slimfull{C}{C_0\closure\theta}$,
the 
\begin{full}closures $\{E \in \irred_{\Rbasic_N}(N) \mid E \prec C_0\closure\theta\}$\end{full}
\begin{slim}clauses $\{E \in N \mid E \prec C\}$\end{slim}
must be true in $\Rbasic_N$,
and it follows that $\concl(\iota)$ is true in
$\Rbasic_N$, a contradiction.
\end{proof}

\begin{full}
\begin{lem}\label{lem:PF:irred-entails-all}
Let $R$ be a confluent term rewrite system oriented by $\succ$
whose only Boolean normal forms are $\itrue$ and $\ifalse$.
Let $N\subseteq\clausesPF$
such that for every $C\closure\theta \in N$
and every grounding substitution $\rho$ that
coincides with $\theta$ on all variables not occurring in $C$, 
we have $C\closure \rho\in N$.
Then $R \cup \irred_R(N) \modelsolam N$.
\end{lem}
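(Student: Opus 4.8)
The plan is to prove the implication directly: fix a $\modelsolam$-interpretation $\III$ with $\III \models R \cup \irred_R(N)$ and an arbitrary closure $C\closure\theta \in N$, and show that $\III \modelsfol C\theta$. The idea is to replace $\theta$ by a ``normalized'' grounding substitution $\theta'$ that yields a variable-irreducible instance of the same clause. By the closure hypothesis on $N$ this instance lies in $\irred_R(N)$, hence is true in $\III$, and truth can then be transported back to $C\theta$ along the rewrite equalities of $R$.

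First I would define $\theta'$ by $x\theta' = x\theta{\downarrow_R}$ for every variable $x$ occurring in $C$ and $x\theta' = x\theta$ for all other variables. This is well defined because $R$ is terminating, being oriented by the order $\succ$, which is well founded and compatible with contexts (by \ref{cond:PF:order:total} and \ref{cond:PF:order:comp-with-contexts}); thus $R$-normal forms exist, and since $\theta$ is grounding so is $\theta'$. As $\theta'$ coincides with $\theta$ on all variables not occurring in $C$, the hypothesis on $N$ gives $C\closure\theta' \in N$.

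Next I would verify that $C\closure\theta'$ is variable-irreducible w.r.t.\ $R$ in the sense of Definition~\ref{def:PF:irred}. For a literal $L \in C$ and a variable $x$ in $L$, the term $x\theta' = x\theta{\downarrow_R}$ is an $R$-normal form, hence irreducible w.r.t.\ every rule of $R$ and in particular w.r.t.\ those $s \rewrite t$ with $L\theta' \succ s \eq t$; moreover every subterm of an $R$-normal form is again an $R$-normal form, so every Boolean subterm of $x\theta'$ is a Boolean normal form and is therefore $\itrue$ or $\ifalse$ by the hypothesis on $R$. Hence $C\closure\theta' \in \irred_R(N)$, and since $\III \models \irred_R(N)$ we get $\III \modelsfol C\theta'$.

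Finally I would transfer truth from $C\theta'$ to $C\theta$. Reading each rule $s \rewrite t \in R$ as the unit equation $s \ceq t$, from $\III \models R$ and $x\theta \rewrite_R^* x\theta'$ for every variable $x$ in $C$ we obtain $\III \modelsfol x\theta \ceq x\theta'$ for each such $x$. By congruence for equality, for every literal $L = s \doteq t$ of $C$ the literals $L\theta$ and $L\theta'$ have the same truth value in $\III$, so $C\theta$ and $C\theta'$ have the same truth value as well; with $\III \modelsfol C\theta'$ this yields $\III \modelsfol C\theta$. Since $C\closure\theta \in N$ was arbitrary, $\III$ is a model of $N$, and since the $\modelsolam$ restriction is used only insofar as it makes $\III$ an ordinary first-order interpretation with equality, this proves $R \cup \irred_R(N) \modelsolam N$. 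The only steps requiring a little care are the verification that the normalized instance is genuinely variable-irreducible---in particular that Boolean subterms of a normal form are again normal forms---and the (routine) observation that the closure hypothesis really delivers $C\closure\theta' \in N$.
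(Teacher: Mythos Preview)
Your proof is correct and follows essentially the same approach as the paper: define $\theta'$ by $R$-normalizing $\theta$ on the variables of $C$, use the closure hypothesis to get $C\closure\theta'\in N$, verify variable-irreducibility, and transfer truth along $R$. The paper phrases the last step directly as the entailment $R\cup\{C\closure\theta'\}\modelsolam C\closure\theta$ rather than fixing a model, but this is only a cosmetic difference; your version is in fact more explicit about the variable-irreducibility check (including the Boolean-subterm condition) than the paper's.
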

\begin{proof}
Let $C\closure\theta \in N$.
We must show that $R \cup \irred_R(N) \modelsolam C\closure\theta$.
We define a substitution $\theta'$
by $x\theta' = (x\theta)\!\!\downarrow_{R}$
for variables $x$ occurring in $C$ and $x\theta' = x\theta$ for all other variables.
Then $R \cup \{ C\closure\theta'\} \modelsolam  C\closure\theta$. 
Moreover, $\theta'$ is grounding and coincides with $\theta$ on all variables not occurring in $C$.
By the assumption of this lemma, we have $C\closure\theta' \in N$.
Finally, we observe that the closure $C\closure\theta'$ 
is variable-irreducible \wrt{} $R$---i.e.,
$C\closure\theta' \in \irred_R(N)$.
It follows that $R \cup \irred_R(N) \modelsolam  C\closure\theta$.  
\end{proof}
\end{full}

\begin{slim}\subsubsection{Indexed Ground Higher-Order Level}\mbox{}\end{slim}%
\begin{full}\subsubsection{Indexed Partly Substituted Ground Higher-Order Level}\mbox{}\end{full}%

In this subsubsection, let $\succ$ be an admissible term order for $\IPGInf$ (Definition~\ref{def:IPG:admissible-term-order}),
let $\mathit{i\slimfull{}{p}gsel}$ be a selection function on $\clausesIPG$, and
let $N \subseteq \clausesIPG$ such that $N$ is saturated up to redundancy \wrt{} $\IPGInf$ and 
\begin{full}$\bot\closure\theta\not\in N$ for all $\theta$.\end{full}%
\begin{slim}$\bot \not\in N$.\end{slim}
We write $R$ for the term rewrite system $R_{\mapF{N}}$ constructed 
in the previous subsubsection
\wrt\ $\succ_\mapFonly$ and $\mapF{\mathit{i\slimfull{}{p}gsel}}$.
We write $\eqR{t}{s}$ for $\mapF{t} \leftrightrewrite^*_R \mapF{s}$, where $t, s \in \TT_\mathrm{ground}(\SigmaI)$.

Our goal in this subsubsection is to use $R$
to define a higher-order interpretation
that is a model of $N$.
To obtain a valid higher-order
interpretation, we need to show that
$\eqR{s\theta}{s\theta'}$ 
whenever $\eqR{x\theta}{x\theta'}$ for all $x$ in $s$.

\begin{lem}[Argument congruence]\label{lem:argcong}
  Let $\eqR{s}{s'}$ for $s, s' \in \TT_\mathrm{ground}(\SigmaI)$.
  Let $u \in \TT_\mathrm{ground}(\SigmaI)$.
  Then $\eqR{s\>u}{s'\>u}$.
\end{lem}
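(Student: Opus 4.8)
The plan is to read argument congruence off the rewrite system $R$, using the fact that saturation of $N$ forces $R$ to validate (the first‑order encoding of) the $\Diff$ axiom, and combining this with the argument congruence for $\diff$‑arguments that is already in hand. First I would collect the standing properties of $R$: by Lemma~\ref{lem:PF:R-olammodels} it is a $\modelsolam$‑interpretation, and, since $R = R_{\mapF{N}}$ with $\mapF{N}$ saturated (Lemma~\ref{lem:IPG:saturated}) and free of the empty clause, Theorem~\ref{thm:PF:model-for-varirred-closures} gives $R \modelsolam \mapF{N}$, in particular $R \modelsfol \mapF{N}$. I would also note, for typing, that the rewrite rules of Definition~\ref{def:RbasicN} are type‑preserving, so $\eqR{s}{s'}$ places $\mapF{s}$ and $\mapF{s'}$ in a common $R$‑class and hence $s$ and $s'$ share a type $\tau \fun \upsilon$ with $u$ of type $\tau$; all terms below are then well typed.

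Next I would instantiate the nullary rule \slimfull{\IGDiff}{\IPGDiff} taking its ``$u$'' and ``$w$'' to be $s$ and $s'$ and its remaining argument to be $u$, with types $\tau,\upsilon$. Since $N$ is saturated this inference is redundant, so unfolding Definition~\ref{def:IPG:RedI} and the $\Diff$‑case of Definition~\ref{def:PF:RedI} its first‑order conclusion is $\modelsolam$‑entailed by $\mapF{N}$; as $R \modelsolam \mapF{N}$ and $R$ is a $\modelsolam$‑interpretation, we obtain
\[
R \modelsfol \mapF{s\>\diff^{\tau,\upsilon}_{s,s'}} \cneq \mapF{s'\>\diff^{\tau,\upsilon}_{s,s'}} \;\llor\; \mapF{s\>u} \ceq \mapF{s'\>u}.
\]
Hence at least one of the two literals holds in $R$. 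If the right‑hand one holds, then $\eqR{s\>u}{s'\>u}$ and we are done, so it remains to exclude the case where only the left‑hand one holds, i.e.\ where $\mapF{s\>\diff^{\tau,\upsilon}_{s,s'}}$ and $\mapF{s'\>\diff^{\tau,\upsilon}_{s,s'}}$ are not $R$‑convertible.

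To kill that case I would invoke Lemma~\ref{lem:RN-argcong-diff} (equivalently, the ``argument congruence \wrt\ $\diff$'' clause of Lemma~\ref{lem:PF:R-olammodels}): applied to the hypothesis $\eqR{s}{s'}$ with both $\diff$‑parameters taken to be $s$ and $s'$, it yields $\mapF{s\>\diff^{\tau,\upsilon}_{s,s'}} \leftrightrewrite^{*}_{R} \mapF{s'\>\diff^{\tau,\upsilon}_{s,s'}}$, which directly contradicts the surviving literal. Therefore that case is impossible, $R \modelsfol \mapF{s\>u} \ceq \mapF{s'\>u}$, and $\eqR{s\>u}{s'\>u}$ follows. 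This is also essentially the reason a purely syntactic induction on the conversion $\mapF{s}\leftrightrewrite^{*}_R\mapF{s'}$ does not go through on its own: a single rewrite of a functional term can only come from the ``function rewrites'' case of Definition~\ref{def:RbasicN}, whose side condition again speaks only about $\diff$‑arguments, so the $\Diff$ axiom (hence saturation) has to be used somewhere.

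The step that needs genuine care is the redundancy bookkeeping for the nullary rule \slimfull{\IGDiff}{\IPGDiff}: one must check that its $\Diff$‑special case in the inference‑redundancy definitions really does force $R$ to satisfy every ground instance of its conclusion, and, in the constrained/closure setting of the full development, that the required instance of that conclusion can be taken variable‑irreducible (a precondition in the redundancy definition). Everything else — the typing argument, the well‑typedness of $s\>\diff^{\tau,\upsilon}_{s,s'}$ and $s'\>\diff^{\tau,\upsilon}_{s,s'}$, and chasing the definition of $\eqR{\cdot}{\cdot}$ as $R$‑convertibility of $\mapFonly$‑images — is routine.
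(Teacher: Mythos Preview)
Your argument is exactly the paper's: instantiate the nullary $\Diff$ rule, use saturation to get that $R$ satisfies its conclusion, and use Lemma~\ref{lem:RN-argcong-diff} on the hypothesis $\eqR{s}{s'}$ to discharge the negative $\diff$-literal. In the simplified (clause-based) setting this is complete as stated.

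In the full (closure-based) setting, the concern you flag at the end is real and slightly sharper than you describe. The side condition~2 of \IPGDiff{} requires that the three terms in the rule have \emph{distinct variables} as their nonfunctional yellow subterms; ground $s$, $s'$, $u$ typically violate this, so you cannot instantiate the rule with them directly. The paper's fix is to pick lifted terms $t,t',v$ whose nonfunctional yellow subterms are fresh variables, together with a grounding $\theta$ so that $t\theta=s$, $t'\theta=s'$, $v\theta=u$. To make the conclusion variable-irreducible one then replaces $\theta$ by its $R$-normalization $\rho$ (so every $x\rho$ is an $R$-normal form). After the main argument yields $R\models\mapF{(t\>v\ceq t'\>v)\rho}$, a short transfer step is still needed: because applying a functional term to an argument preserves the nonfunctional yellow subterms of both, one gets $\eqR{(t\>v)\theta}{(t\>v)\rho}$ and $\eqR{(t'\>v)\theta}{(t'\>v)\rho}$, and hence $\eqR{s\>u}{s'\>u}$. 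None of this changes your plan, but it is the concrete content behind the ``genuine care'' you mention.
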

\begin{proof}
\begin{full}
Let $t,t',v$ be terms and $\theta$ a grounding substitution such that
$t\theta = s$, $t'\theta = s'$, $v\theta = u$, and
the nonfunctional yellow subterms of $t,t',v$ are different variables.
Let $\rho$ the substitution 
resulting from $R$-normalizing all values of $\theta$ (via $\mapFonly$).
Then there exists the %
inference
\begin{align*}
  \namedinference{\IPGDiff}{}
  {t\>\diff^{\tau,\upsilon}_{t\rho,t'\rho} \noteq t'\>\diff^{\tau,\upsilon}_{t\rho,t'\rho} \llor t\>v \eq t'\>v \closure\rho}
\end{align*}
which we call $\iota$.
By construction of $\rho$, its conclusion is variable-irreducible.

\end{full}%
\begin{slim}
Consider the following inference $\iota$:
\begin{align*}
  \namedinference{\IGDiff}{}
  {s\>\diff^{\tau,\upsilon}_{s,s'} \noteq s'\>\diff^{\tau,\upsilon}_{s,s'} \llor s\>u \eq s'\>u}
\end{align*}%
\end{slim}%
Since $N$ is saturated,
$\iota$ is redundant and thus
$\slimfull{\mapF{N}}{R\cup\mapF{\irred_R(N)}} \models \mapF{\concl(\iota)}$.
Hence
$\RfN\models\mapF{\concl(\iota)}$ by
Theorem~\ref{thm:PF:model-for-varirred-closures}
and
Lemma~\ref{lem:IPG:saturated}.

\begin{full}
We have $\eqR{t\rho} {t\theta}= \eqR{s}{s'} = \eqR{t'\theta}{t'\rho}$.
\end{full}
By
Lemma~\ref{lem:RN-argcong-diff},
\begin{full}$R \models \mapF{t\rho\>\diff^{\tau,\upsilon}_{t\rho,t'\rho}} \ceq \mapF{t'\rho\>\diff^{\tau,\upsilon}_{t\rho,t'\rho}}$.\end{full}%
\begin{slim}$R \models \mapF{s\>\diff^{\tau,\upsilon}_{s,s'}} \ceq \mapF{s'\>\diff^{\tau,\upsilon}_{s,s'}}$.\end{slim}
\slimfull{It}{Using Lemma~\ref{lem:IPG:mapF-subst}, it} follows that 
$R \models \mapF{\slimfull{s\>u\eq s'\>u}{(t\>v \eq t'\>v)\rho}}$.
\begin{slim}Thus, $\eqR{s\>u}{s'\>u}$.\end{slim}%
\begin{full}
Since applying a functional term to an argument preserves nonfunctional yellow subterms of both
the functional term and its argument,
we have $s\>u = \eqR{(t\>v)\theta}{(t\>v)\rho}$ and
$s'\>u = \eqR{(t'\>v)\theta}{(t'\>v)\rho}$.
So $R \models \mapF{s\>u \eq s'\>u}$ and thus $\eqR{s\>u}{s'\>u}$.
\end{full}
\end{proof}

The following lemma and its proof are essentially identical to
Lemma~54 of
Bentkamp et al.~\cite{bentkamp-et-al-2023-hosup-journal}.
We have adapted the proof to use De Bruijn indices,
and we have removed the notion of term-ground
and replaced it by preprocessing term variables,
which arguably would have been more elegant in the 
original proof as well.

\begin{lem} \label{lem:subst-congruence}
	Let $s\in\TT(\SigmaI)$, and let $\theta$, $\theta'$ be grounding
	substitutions such that $\eqR{x\theta}{x\theta'}$ for all variables~$x$
	and $\alpha\theta = \alpha\theta'$ for all type variables $\alpha$.
	Then $\eqR{s\theta}{s\theta'}$.
\end{lem}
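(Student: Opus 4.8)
The plan is to prove the statement by strong induction on the size of a $\beta$-normal representative of $s$, using three tools established above: argument congruence (Lemma~\ref{lem:argcong}), the extensionality characterisation of $\sim$ on functional terms (Lemma~\ref{lem:RN-ext}), and the fact that $\leftrightrewrite^*_R$ is a congruence at green positions, which holds because green subterms correspond to first-order subterms of the $\mapFonly$-encoding (Lemma~\ref{lem:IPG:mapF-green-subterms}). Following Bentkamp et al.'s Lemma~54, I would first perform a preprocessing step: for each \emph{proper} maximal variable-headed subterm $y\>\bar u$ of $s$ that contains no free De Bruijn indices, introduce a fresh variable $z$ of the same type, replace the subterm by $z$, and extend $\theta$ and $\theta'$ by $z \mapsto (y\>\bar u)\theta$ resp.\ $z \mapsto (y\>\bar u)\theta'$. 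By the induction hypothesis applied to the strictly smaller subterm $y\>\bar u$ we obtain $\eqR{(y\>\bar u)\theta}{(y\>\bar u)\theta'}$, so the extended substitutions still satisfy the hypotheses; since the surrounding context is untouched, it suffices to prove the claim for the preprocessed (smaller) term. After this, $s$ is a lone variable, a $\lambda$-abstraction, a rigid-headed term $\cst f\typeargs{\bar\tau}\>\bar t_m$, or a variable applied to (already preprocessed) arguments.

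The easy cases are a lone variable (the claim is one of the hypotheses) and a rigid-headed term $s = \cst f\typeargs{\bar\tau}\>\bar t_m$: here the induction hypothesis yields $\eqR{t_i\theta}{t_i\theta'}$ for every $i$, and $\bar\tau\theta = \bar\tau\theta'$ because $\theta$ and $\theta'$ agree on type variables. If $s\theta$ is nonfunctional, then $\mapF{s\theta}$ and $\mapF{s\theta'}$ are first-order terms with the same head symbol $\cst f^{\bar\tau\theta}$ and first-order arguments $\mapF{t_i\theta}$ resp.\ $\mapF{t_i\theta'}$, so congruence of $\leftrightrewrite^*_R$ finishes the case; if $s\theta$ is functional, say of type $\rho_1 \fun \cdots \fun \rho_k \fun \iota$ with $\iota$ nonfunctional (note $s\theta$ and $s\theta'$ have the same type), I would apply the same argument to $s\theta\>d_1\cdots d_k$ and $s\theta'\>d_1\cdots d_k$ for arbitrary ground $\diff$-terms $d_1, \dots, d_k$ of the matching types, and then remove the $d_i$ one at a time using Lemma~\ref{lem:RN-ext}. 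For a $\lambda$-abstraction $s = \lambda\langle\tau\rangle\>t$, both $s\theta$ and $s\theta'$ are functional of type $\tau\theta\fun\upsilon\theta = \tau\theta'\fun\upsilon\theta'$, so by Lemma~\ref{lem:RN-ext} it is enough to show $\eqR{s\theta\>d}{s\theta'\>d}$ for every $d = \diff\typeargs{\tau\theta,\upsilon\theta}(a,b)$; fixing such a $d$ and a fresh variable $y$ of type $\tau$, a routine substitution-commutation computation gives $s\theta\>d = t\dbsubst{y}\,\sigma$ and $s\theta'\>d = t\dbsubst{y}\,\sigma'$ as ground terms, where $\sigma = \theta[y \mapsto d]$ and $\sigma' = \theta'[y \mapsto d]$ satisfy the hypotheses (they map $y$ to the same ground term), and $t\dbsubst{y}$ is strictly smaller than $s$, so the induction hypothesis applies.

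The remaining case, $s = x\>\bar t_m$ with $m \geq 1$, is the main obstacle. The induction hypothesis gives $\eqR{t_i\theta}{t_i\theta'}$ and the lemma's hypothesis gives $\eqR{x\theta}{x\theta'}$; iterating Lemma~\ref{lem:argcong} moves from $x\theta\>(\bar t\theta)$ to $x\theta'\>(\bar t\theta)$, and it then remains to change the arguments from $\bar t\theta$ to $\bar t\theta'$ under the fixed — but possibly $\lambda$-headed — ground head $x\theta'$. I would reduce this to the main statement for the fresh term $x'\>w_1\cdots w_m$ under the substitutions $x' \mapsto x\theta'$, $w_i \mapsto t_i\theta$ resp.\ $x' \mapsto x\theta'$, $w_i \mapsto t_i\theta'$: when some $t_i$ is not a single variable this term is strictly smaller than $s$, so the induction hypothesis applies; the residual subcase — $s$ a variable applied to variables, which amounts to the pure assertion that $\eqR{g_i}{g_i'}$ for all $i$ implies $\eqR{g_0\>g_1\cdots g_m}{g_0'\>g_1'\cdots g_m'}$ for ground terms — is dispatched by a short auxiliary argument combining Lemma~\ref{lem:argcong} for the head, green-position congruence and Lemma~\ref{lem:RN-ext} for rigid (including partially applied) heads, and the main induction hypothesis for a $\lambda$-headed ground head. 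Making the interleaving of these inductions terminate is the part that needs the most care, and I expect this is exactly where the paper's ``preprocessing of term variables'' streamlines the bookkeeping relative to the original term-ground formulation.
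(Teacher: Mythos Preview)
Your induction measure does not support the argument in the variable-headed case. After you change the head via Lemma~\ref{lem:argcong}, you need $x\theta'\;(\bar t\theta) \sim x\theta'\;(\bar t\theta')$ with $x\theta'$ a fixed ground term. You propose to obtain this from the main statement applied to $x'\,w_1\cdots w_m$, and ultimately to the residual case where $g_0'$ is a $\lambda$-abstraction. But if $g_0' = \lambda\,u$, then $g_0'\,g_1 \cdots g_m$ $\beta$-reduces to $u\{0\mapsto g_1\}\,g_2\cdots g_m$, and the $\beta$-normal size of the term you would need to invoke the hypothesis on (e.g.\ $u\{0\mapsto z\}\,z_2\cdots z_m$) is governed by the size of $u$, which is part of $x\theta'$ and hence unbounded in terms of $|s|$. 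Your ``short auxiliary argument'' names this sub-case but does not supply a decreasing measure; the interleaving you allude to does not terminate under a size-of-$\beta$-normal-form induction.

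The paper closes exactly this gap with a different well-founded order, not by any preprocessing of variable-headed subterms. It introduces a fresh binary symbol $\oplus$ with reductions $\oplus\,t\,s \to t$ and $\oplus\,t\,s \to s$, and takes as primary measure the maximal length of a $\beta\oplus$-reduction sequence from $s\sigma$, where $\sigma$ maps each variable $y$ to $\oplus\,(y\theta)\,(y\theta')$; termination of $\beta\oplus$-reduction is established via the computability path order. This lets the paper first reduce (Cases~2 and~3) to a term with \emph{exactly one variable occurring exactly once}. In that situation the case $s = x\,\bar t$ becomes trivial: the arguments $\bar t$ are automatically ground, so Lemma~\ref{lem:argcong} alone finishes. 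The $\lambda$-case substitutes the ground term $\diff_{s\theta,s\theta'}$ (not a fresh variable) for the bound index, which introduces no new $\beta$-redexes and hence does not increase the primary measure; a secondary ``live size'' measure then drops. Your size-based induction cannot replicate this because $\beta$-reduction does not decrease $\beta$-normal size, whereas it strictly decreases the $\beta\oplus$-reduction count.
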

\begin{proof}
  In this proof, we work directly on $\lambda$-terms. To prove the
  lemma, it suffices to prove it
  for any $\lambda$-term $s\in\TT^\lambda(\SigmaI)$.
  Here, for $t_1, t_2\in\TT^\lambda_\mathrm{ground}(\SigmaI)$, the notation
  $\eqR{t_1}{t_2}$ is to be read as $\eqR{\bnf{t_1}}\bnf{{t_2}}$
  because $\flooronly$ is defined only on $\beta$-normal $\lambda$-terms.

  Without loss of generality, we may assume that $s$ contains no type variables.
  If $s$ does contain type variables, we can instead use the term
  $s_0$ resulting from instantiating each type variable $\alpha$ in $s$ with $\alpha\theta$.
  If the result holds for the term $s_0$, which does not contain type variables, then
  $\eqR{s_0\theta}{s_0\theta'}$,
  and thus the result also holds for $s$ because $s\theta = s_0\theta$
  and $s\theta' = s_0\theta'$.

	\newcommand{\choice}{\oplus}%
  \medskip\noindent
	\textsc{Definition}\enskip  %
  We extend the syntax of $\lambda$-terms with a new
  polymorphic function symbol $\choice\oftype\forallty{\alpha}\alpha\fun\alpha\fun\alpha$.
  We will omit its type argument. It is equipped with two reduction rules:
  $\choice\>t\>s \rewrite t$ and $\choice\>t\>s \rewrite s$.
  A \emph{$\beta\choice$-reduction step}
  is either a rewrite step following one of these rules or a $\beta$-reduction step.

  \medskip\noindent
  The computability path order $\succ_\cst{CPO}$ \cite{blanqui-et-al-2015} guarantees that
  \begin{itemize}
    \item $\choice\>t\>s \succ_\cst{CPO} s$ by applying rule $@\rhd$;
    \item $\choice\>t\>s \succ_\cst{CPO} t$ by applying rule $@\rhd$ twice;
    \item $(\lambda\>t)\>s \succ_\cst{CPO} t\dbsubst{s}$ by applying rule $@\beta$.
 \end{itemize}
  Since this order is moreover monotone, it decreases with $\beta\choice$-reduction steps.
  The order is also well founded; thus, $\beta\choice$-reductions terminate.
  And since the $\beta\choice$-reduction steps describe a finitely branching term rewriting
  system, by K\H{o}nig's lemma \cite{koenigs-lemma-1927}, there exists a maximal number of
  $\beta\choice$-reduction steps
  from each $\lambda$-term.

	\newcommand{\choicesubst}{\sigma}%
	\newcommand{\livesize}{\mathscr{S}}%
  \medskip\noindent
	\textsc{Definition}\enskip %
  We introduce an auxiliary function $\livesize$
	that essentially measures the size of a $\lambda$-term but assigns a size of $1$ to
  ground $\lambda$-terms.
\[\livesize(s) =
\begin{cases}
1
  & \text{if $s$ is ground or if $s$ is a variable} \\
1 + \livesize(t)
  & \text{if $s$ is not ground and has the form $\lambda\>t$} \\
\livesize(t) + \livesize(u)
  & \text{if $s$ is not ground and has the form $t\>u$}
\end{cases}\]

    We prove $\eqR{s\theta}{s\theta'}$ by well-founded
    induction on $s$, $\theta$, and $\theta'$ using
    the left-to-right lexicographic order on the triple
	$\bigl(n_1(s), n_2(s), n_3(s)\bigr)\in\mathbb{N}^4$, where
	\begin{itemize}
		\item \label{mea:bi-red} $n_1(s)$ is the maximal number of
            $\beta\choice$-reduction steps starting from $s\choicesubst$, where
            $\choicesubst$ is the substitution mapping each variable $x$
            to  $\choice\>x\theta\>x\theta'$;
        \item $n_2(s)$ is the number of variables occurring more than once in $s$;
		\item $n_3(s) = \livesize(s)$.
	\end{itemize}

	\medskip\noindent
	\textsc{Case 1:}\enskip
	The $\lambda$-term $s$ is ground. Then the lemma is trivial.

	\medskip\noindent
	\textsc{Case 2:}\enskip
	The $\lambda$-term $s$ contains $k \geq 2$ variables. Then we can apply the induction
	hypothesis twice and use the transitivity of $\eqR{}{}$ as follows. Let $x$
	be one of the variables in $s$. Let $\rho = \{x \mapsto x\theta\}$ the
	substitution that maps $x$ to $x\theta$ and ignores all other variables. Let
	$\rho' = \theta'[x\mapsto x]$.

	We want to invoke the induction hypothesis on $s\rho$ and $s\rho'$.
  This is justified because $s\choicesubst$ $\choice$-reduces to $s\rho\choicesubst$ and to
  $s\rho'\choicesubst$, for $\choicesubst$ as given in the definition of $n_1$.
  These $\choice$-reductions have at least one step because $x$ occurs in $s$ and $k \geq 2$.
  Hence, $n_1(s)>n_1(s\rho)$ and $n_1(s)>n_1(s\rho')$.

	This application of the induction hypothesis gives us
	$\eqR{s\rho\theta}{s\rho\theta'}$ and $\eqR{s\rho'\theta}{s\rho'\theta'}$.
  Since $s\rho\theta = s\theta$ and $s\rho'\theta' = s\theta'$,
  this is equivalent to
  $\eqR{s\theta}{s\rho\theta'}$ and $\eqR{s\rho'\theta}{s\theta'}$.
  Since moreover $s\rho\theta' = s\rho'\theta$, we have $\eqR{s\theta}{s\theta'}$ by
  transitivity of $\eqR{}{}$.
  The following illustration visualizes the above argument:
	\[
	\begin{tikzpicture}
		\matrix (m) [matrix of math nodes,row sep=1.5em,column sep=0.1em,minimum width=1em]
		{
		   		   & s\rho &    &    &    & s\rho' &          \\
		   s\theta & \underset{\scriptscriptstyle\text{IH}}{\eqR{}{}} & s\rho\theta' & = & s\rho'\theta & \underset{\scriptscriptstyle\text{IH}}{\eqR{}{}} & s\theta' \\};
		\draw[-stealth] (m-1-2) edge node [left] {$\theta$\,} (m-2-1);
		\draw[-stealth] (m-1-2) edge node [right] {\,\,$\theta'$} (m-2-3);
		\draw[-stealth] (m-1-6) edge node [left] {$\theta$\,} (m-2-5);
		\draw[-stealth] (m-1-6) edge node [right] {\,\,$\theta'$} (m-2-7);
	\end{tikzpicture}
	\]

  \vskip-\baselineskip %
  \medskip\noindent
	\textsc{Case 3:}\enskip
	The $\lambda$-term $s$ contains a variable that occurs more than once. Then we rename
	variable occurrences apart by replacing each occurrence of each variable $x$ by
	a fresh variable $x_i$, for which we define $x_i\theta = x\theta$ and
	$x_i\theta' = x\theta'$. Let $s'$ be the resulting $\lambda$-term.
  Since $s\choicesubst
    = s'\choicesubst$ for $\choicesubst$ as given in the definition of $n_1$, we have $n_1(s)=n_1(s')$. All variables occur only once in $s'$.
    Hence, $n_2(s)>0=n_2(s')$. Therefore, we can invoke the induction
	hypothesis on $s'$ to obtain $\eqR{s'\theta}{s'\theta'}$. Since $s\theta =
	s'\theta$ and $s\theta' = s'\theta'$, it follows that
	$\eqR{s\theta}{s\theta'}$.

	\medskip\noindent
	\textsc{Case 4:}\enskip
	The $\lambda$-term $s$ contains only one variable $x$, which occurs exactly once.

	\medskip\noindent
	\textsc{Case 4.1:}\enskip
	The $\lambda$-term $s$ is of the form $\cst{f}\typeargs{\tuple{\tau}}\>\tuple{t}$
	for some symbol~$\cst{f}$, some types $\tuple{\tau}$, and some
    $\lambda$-terms~$\tuple{t}$. Then let $u$ be the $\lambda$-term in
    $\tuple{t}$ that contains $x$.
    We want to apply the induction hypothesis to $u$, which can be
	justified as follows.
  For $\choicesubst$ as given in the definition of $n_1$, consider  the longest sequence of
	$\beta\choice$-reductions from $u\choicesubst$. This sequence can be
	replicated inside $s\choicesubst=(\cst{f}\typeargs{\tuple{\tau}}\>\tuple{t})\choicesubst$.
	Therefore,
	the longest sequence of $\beta\choice$-reductions from $s\choicesubst$ is at
	least as long---i.e., $n_1(s)\geq n_1(u)$. Since both $s$ and $u$ have
	only one variable occurrence, we have $n_2(s) = 0 =
	n_2(u)$. But $n_3(s) > n_3(u)$ because $u$ is a nonground subterm of
	$s$.

	Applying the induction hypothesis gives us $\eqR{u\theta}{u\theta'}$. By
	definition of $\flooronly$, we have
  $\floor{(\cst{f}\typeargs{\tuple{\tau}}\>\tuple{t})\theta}
  = \cst{f}_m^{\smash{\tuple{\tau}}}\>\floor{\tuple{t}\theta}$
  and analogously for $\theta'$,
	where $m$ is the length of $\tuple{t}$.
	By congruence of $\eq$ in first-order logic, it follows that $\eqR{s\theta}{s\theta'}$.

	\medskip\noindent
	\textsc{Case 4.2:}\enskip
	The $\lambda$-term $s$ is of the form $x\>\tuple{t}$ for some $\lambda$-terms $\tuple{t}$. Then we
	observe that, by assumption, $\eqR{x\theta}{x\theta'}$.
  Since $x$ occurs only once,
	$\tuple{t}$ are ground.
  Then $\eqR{x\theta\>\tuple{t}}{x\theta'\>\tuple{t}}$ by applying
	Lemma~\ref{lem:argcong} repeatedly.
  Hence
  $s\theta = x\theta\>\tuple{t}$ and $s\theta = x\theta'\>\tuple{t}$,
  and it follows that $\eqR{s\theta}{s\theta'}$.

	\medskip\noindent
	\textsc{Case 4.3:}\enskip
	The $\lambda$-term $s$ is of the form $\lambda\>u$ for some $\lambda$-term $u$. Then we observe that to prove
	$\eqR{s\theta}{s\theta'}$, by Lemma~\ref{lem:RN-ext}, it suffices to show that
	$\eqR{s\theta\>\diff_{s\theta,s\theta'}}{s\theta' \>\diff_{s\theta,s\theta'}}$.
  Via $\beta$-conversion, this is equivalent to
	$\eqR{v\theta}{v\theta'}$, where $v = u\dbsubst{\diff_{s\theta,s\theta'}}$.
	To prove $\eqR{v\theta}{v\theta'}$, we apply the
	induction hypothesis on $v$.

	It remains to show that the induction hypothesis applies on $v$.
	For $\choicesubst$ as given in the definition of $n_1$, consider the longest sequence of $\beta\choice$-reductions from
	$v\choicesubst$. Since $\diff_{s\theta,s\theta'}$ is not a $\lambda$-abstraction, substituting it for $\DB{0}$
	will not cause additional $\beta\choice$-reductions. Hence, the same
	sequence of $\beta\choice$-reductions can be applied inside $s\choicesubst =
    (\lambda\>u)\choicesubst$, proving that $n_1(s) \geq n_1(v)$.
    Since both $s$ and $v$ have	only one variable occurrence,
	$n_2(s) = 0 = n_2(v)$. But $n_3(s) = \livesize(s) = 1 + \livesize(u)$
	because $s$ is nonground. Moreover,
	$\livesize(u)=\livesize(v)=n_3(v)$. Hence, $n_3(s)
	> n_3(v)$, which justifies the application of the induction hypothesis.

	\medskip\noindent
	\textsc{Case 4.4:}\enskip
	The $\lambda$-term $s$ is of the form $(\lambda\>u)\>t_0\>\tuple{t}$ for some $\lambda$-terms $u$,
	$t_0$, and $\tuple{t}$.
	We apply the induction hypothesis on $s' = u\dbsubst{t_0}\>\tuple{t}$,
	justified as follows. 
  For $\choicesubst$ as given in the definition of $n_1$, consider  the longest sequence of $\beta\choice$-reductions from
	$s'\choicesubst$. Prepending the reduction $s\choicesubst \rewrite_\beta
	s'\choicesubst$ to it gives us a longer sequence from $s\choicesubst$. Hence,
	$n_1(s) > n_1(s')$.
	The induction hypothesis gives us $\eqR{s'\theta}{s'\theta'}$. Since
	$\eqR{}{}$ is invariant under $\beta$-reductions, it follows that
	$\eqR{s\theta}{s\theta'}$.
	\qedhere
\end{proof}

Using the term rewrite system $R$,
we define a higher-order interpretation 
\begin{full}$\mathscr{I}^\levelIPG = (\mathscr{U}^\levelIPG, \mathscr{J}^\levelIPG_\text{ty}, \mathscr{J}^\levelIPG, \mathscr{L}^\levelIPG)$.\end{full}%
\begin{slim}$\mathscr{I}^\levelIPG = (\mathscr{U}^\levelIPG, \mathscr{J}^\levelIPG_\text{ty}, \mathscr{J}^\levelIPG, \mathscr{L}^\levelIPG)$.\end{slim}
The construction proceeds as in 
the completeness proof of the original $\lambda$-superposition calculus~\cite{bentkamp-et-al-2023-hosup-journal}.
Let $(\mathscr{U}, \mathscr{J}) = R$; i.e., $\mathscr{U}_\tau$ is the universe for the first-order type $\tau$, and $\mathscr{J}$ is the interpretation function.
Since the higher-order and first-order type signatures are identical, we can identify ground higher-order and first-order types.
We will define a domain $\mathscr{D}_\tau$ for each ground type $\tau$ and then let $\mathscr{U}^{\levelIPG}$ be the set of all these domains $\mathscr{D}_\tau$.
We cannot identify the domains $\mathscr{D}_\tau$ with the first-order domains $\mathscr{U}_\tau$
because domains $\mathscr{D}_\tau$ for functional types $\tau$ must contain functions.
Instead, we will define suitable domains $\mathscr{D}_\tau$ and a bijection $\mathscr{E}_\tau$ between $\mathscr{U}_\tau$ and $\mathscr{D}_\tau$ for each ground type~$\tau$.

We define $\mathscr{E}_\tau$ and $\mathscr{D}_\tau$ in mutual recursion.
To ensure well definedness, we must show that $\mathscr{E}_\tau$ is bijective.
We start with nonfunctional types $\tau$:
Let $\mathscr{D}_\tau = \mathscr{U}_\tau$, and let $\mathscr{E}_\tau : \mathscr{U}_\tau \to \mathscr{D}_\tau$ be the identity.
Clearly, the identity is bijective.
For functional types, we define
\begin{align*}
  &\mathscr{D}_{\tau \to \upsilon} = \{ \varphi : \mathscr{D}_\tau \to \mathscr{D}_\upsilon \mid \exists s : \tau \to \upsilon.\ \forall u : \tau.\ \varphi\left(\mathscr{E}_\tau(\interpret{\mapF{u}}{R}{})\right) = \mathscr{E}_\upsilon\left(\interpret{\mapF{s\> u}}{R}{}\right) \} \\[1.5\jot]
  &\mathscr{E}_{\tau \to \upsilon} : \mathscr{U}_{\tau \to \upsilon} \to \mathscr{D}_{\tau \to \upsilon} \\
  &\mathscr{E}_{\tau \to \upsilon}(\interpret{\mapF{s}}{R}{})\left(\mathscr{E}_\tau(\interpret{\mapF{u}}{R}{})\right) = \mathscr{E}_\upsilon(\interpret{\mapF{s\> u}}{R}{})
\end{align*}

To verify that this equation is a valid definition of $\mathscr{E}_{\tau \to \upsilon}$, we must show that
\begin{itemize}
\item every element of $\mathscr{U}_{\tau \to \upsilon}$ is of the form $\interpret{\mapF{s}}{R}{}$ for some $s \in \TT_\mathrm{ground}(\SigmaI)$;
\item every element of $\mathscr{D}_\tau$ is of the form $\mathscr{E}_\tau\left(\interpret{\mapF{u}}{R}{}\right)$ for some $u \in \TT_\mathrm{ground}(\SigmaI)$;
\item the definition does not depend on the choice of such $s$ and $u$; and
\item $\mathscr{E}_{\tau \to \upsilon}(\interpret{\mapF{s}}{R}{}) \in \mathscr{D}_{\tau \to \upsilon}$ for all $s \in \TT_\mathrm{ground}(\SigmaI)$.
\end{itemize}

The first claim holds because $R$ is term-generated and $\mapFonly$ is a bijection.
The second claim holds because $R$ is term-generated and $\mapFonly$ and $\mathscr{E}_\tau$ are bijections.
To prove the third claim, we assume that there are other terms $t\in \TT_\mathrm{ground}(\SigmaI)$ and $v \in \TT_\mathrm{ground}(\SigmaI)$ such that $\interpret{\mapF{s}}{R}{} = \interpret{\mapF{t}}{R}{}$ and $\mathscr{E}_\tau\left(\interpret{\mapF{u}}{R}{}\right) = \mathscr{E}_\tau\left(\interpret{\mapF{v}}{R}{}\right)$.
Since $\mathscr{E}_\tau$ is bijective, we have $\interpret{\mapF{u}}{R}{} = \interpret{\mapF{v}}{R}{}$---i.e., $u \sim v$.
The terms $s, t, u, v$ are in $\TT_\mathrm{ground}(\SigmaI)$, allowing us to apply Lemma \ref{lem:subst-congruence}
to the term $x\> y$ and the substitutions $\{x \mapsto s{,}\; y \mapsto u\}$ and $\{x \mapsto t{,}\; y \mapsto v\}$.
Thus, we obtain $s\> u \sim t\> v$---i.e., $\interpret{\mapF{s\> u}}{R}{} = \interpret{\mapF{t\> v}}{R}{}$---indicating that the definition of $\mathscr{E}_{\tau \to \upsilon}$ above does not depend on the choice of $s$ and $u$.
The fourth claim is obvious from the definition of $\mathscr{D}_{\tau \to \upsilon}$ and the third claim.

It remains to show that $\mathscr{E}_{\tau \to \upsilon}$ is bijective.
For injectivity, we fix two terms $s,t \in \TT_\mathrm{ground}(\SigmaI)$ such that for all $u \in \TT_\mathrm{ground}(\SigmaI)$, we have $\interpret{\mapF{s\> u}}{R}{} = \interpret{\mapF{t\> u}}{R}{}$.
By Lemma~\ref{lem:RN-ext}, it follows that $\interpret{\mapF{s}}{R}{} = \interpret{\mapF{t}}{R}{}$, which shows that $\mathscr{E}_{\tau \to \upsilon}$ is injective.
For surjectivity, we fix an element $\varphi \in \DD_{\tau \to \upsilon}$.
By definition of $\DD_{\tau \to \upsilon}$, there exists a term $s$ such that $\varphi\left(\mathscr{E}_\tau(\interpret{\mapF{u}}{R}{})\right) = \mathscr{E}_\upsilon\left(\interpret{\mapF{s\> u}}{R}{}\right)$ for all $u$.
Hence, $\EE_{\tau \to \upsilon}(\interpret{\mapF{s}}{R}{}) = \varphi$, proving surjectivity and therefore bijectivity of $\EE_{\tau \to \upsilon}$.
Below, we will usually write $\EE$ instead of $\EE_\tau$ since the type $\tau$ is determined by $\EE_\tau$'s first argument.

We define the higher-order universe as $\UU^{\levelIPG} = \{ \DD_\tau \mid \tau \text{ ground}\}$.
In particular, by Lemma~\ref{lem:PF:R-olammodels}, this implies that $\DD_\omicron = \{0,1\} \in \UU^{\levelIPG}$ as needed, where $0$ is identified with $[\ifalse]$ and $1$ with $[\itrue]$.
Moreover, we define $\IIty^{\levelIPG}(\kappa)(\DD_{\tuple{\tau}}) = \DD_{\kappa(\tuple{\tau})}$ for all $\kappa \in \Sigma_\ty$,
completing the type interpretation of $\IIIty^{\levelIPG} = (\UU^{\levelIPG}, \IIty^{\levelIPG})$ and ensuring that $\IIty^{\levelIPG}(\omicron) = \DD_\omicron = \{0,1\}$.

We define the interpretation function $\II^{\levelIPG}$ for symbols $\cst{f} : \forallty{\tuple{\alpha}_m} \tau$
by $\II^{\levelIPG}(\cst{f}, \DD_{\tuple{\upsilon}_{m}}) = \EE(\interpret{\mapF{\cst{f}\typeargs{\tuple{\upsilon}_{m}}}}{R}{})$.

We must show that this definition indeed fulfills the requirements of an interpretation function.
By definition, we have
\ref{item:interpretation:true} $\II^{\levelIPG}(\itrue) = \EE(\interpretR{\itrue}) = \interpretR{\itrue} = 1$
and \ref{item:interpretation:false} $\II^{\levelIPG}(\ifalse) = \EE(\interpretR{\ifalse}) = \interpretR{\ifalse} = 0$.

Let $a,b \in \{0,1\}$, $u_0 = \ifalse$, and $u_1 = \itrue$. Then, by Lemma~\ref{lem:PF:R-olammodels},

\[
\begin{array}{crl}
\text{\ref{item:interpretation:and}} &
\II^{\levelIPG}(\iand)(a,b) & = \EE(\interpretR{\mapF{\iand}})(\interpretR{\mapF{u_a}}, \interpretR{\mapF{u_b}}) \\
& &= \EE(\interpretR{\mapF{u_a \iand u_b}}) = \min\{a,b\} \\[\jot]
\text{\ref{item:interpretation:or}} &
\II^{\levelIPG}(\ior)(a,b) & = \EE(\interpretR{\mapF{u_a \ior u_b}}) = \max\{a,b\} \\[\jot]
\text{\ref{item:interpretation:not}} &
\II^{\levelIPG}(\inot)(a) & = \EE(\interpretR{\mapF{\inot}})(\interpretR{\mapF{u_a}}) \\
& & = \EE(\interpretR{\mapF{\inot u_a}}) = \interpretR{\mapF{\inot u_a}} = 1 - a \\[\jot]
\text{\ref{item:interpretation:implies}} &
\II^{\levelIPG}(\iimplies)(a,b) & = \EE(\interpretR{\mapF{u_a \iimplies u_b}}) = \max\{1-a,b\}
\end{array}
\]

\ref{item:interpretation:eq} 
Let $\DD_\tau \in \UU^{\levelIPG}$ and $a',b' \in \DD_\tau$.
Since $\EE$ is bijective and $R$ is term-generated, there exist ground terms $u$ and $v$ such that $\EE(\interpretRmapF{u}) = a'$ and $\EE(\interpretRmapF{v}) = b'$.
Then
\[
\II^{\levelIPG}(\ieq, \DD_\tau)(a',b') = \EE(\interpretRmapF{{\ieq}\typeargs{\tau}})(\EE(\interpretRmapF{u}), \EE(\interpretRmapF{v})) = \EE(\interpretRmapF{u\mathrel{{\ieq}\typeargs{\tau}} v})
\]
which is $1$ if $a' = b'$ and $0$ otherwise by Lemma~\ref{lem:PF:R-olammodels}.
\ref{item:interpretation:neq} Similarly $\II^{\levelIPG}(\ineq, \DD_\tau)(a',b') = 0$ if $a' = b'$ and $1$ otherwise.
This concludes the proof that $\II^{\levelIPG}$ is an interpretation function.

Finally, we need to define the designation function $\LL^{\levelIPG}$, which takes a valuation $\xi$ and a $\lambda$-expression as arguments.
Given a valuation $\xi$, we choose a grounding substitution $\theta$ such that $\DD_{\alpha\theta} = \xity(\alpha)$ and $\EE(\interpretRmapF{x\theta}) = \xite(x)$ for all type variables $\alpha$ and all variables $x$.
Such a substitution can be constructed as follows: We can fulfill the first equation in a unique way because there is a one-to-one correspondence between ground types and domains.
Since $\EE^{-1}(\xite(x))$ is an element of a first-order universe and $R$ is term-generated, there exists a ground term $s$ such that $\interpret{s}{R}{\xi} = \EE^{-1}(\xite(x))$.
Choosing one such $s$ and defining $x\theta = \mapFonly^{-1}(s)$ gives us a grounding substitution $\theta$ with the desired property.

Let $\LL^{\levelIPG}(\xi, \lambda \> t) = \EE(\interpretRmapF{(\lambda \> t)\theta})$.
We need to show that our definition does not depend on the choice of $\theta$.
We assume that there exists another substitution $\theta'$ with the properties $\DD_{\alpha\theta'} = \xity(\alpha)$ for all $\alpha$ and $\EE(\interpretRmapF{x\theta'}) = \xite(x)$ for all $x$.
Then we have $\alpha\theta = \alpha\theta'$ for all $\alpha$ due to the one-to-one correspondence between domains and ground types.
We have $\interpretRmapF{x\theta} = \interpretRmapF{x\theta'}$ for all $x$ because $\EE$ is injective.
By Lemma \ref{lem:subst-congruence} it follows that $\interpretRmapF{(\lambda \> t)\theta} = \interpretRmapF{(\lambda \> t)\theta'}$, which proves that $\LL^{\levelIPG}$ is well defined.
This concludes the definition of the interpretation $\III^{\levelIPG} = (\UU^{\levelIPG}, \IIty^{\levelIPG}, \II^{\levelIPG}, \LL^{\levelIPG})$.
It remains to show that $\III^{\levelIPG}$ is proper.

The higher-order interpretation $\III^{\levelIPG}$ relates to the first-order
interpretation $\RfN$ as follows:

\begin{lem}
  \label{lem:IPG:ho-fo-correspondence}
  Given a ground $\lambda$-term $t\in\TT^\lambda_\mathrm{ground}(\SigmaI)$, we have
  \[
  \interpret{t}{\III^{\levelIPG}}{} = \EE(\interpretRmapF{\bnf{t}})
  \]
  \end{lem}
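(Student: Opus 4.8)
The plan is to prove the equality by structural induction on the $\lambda$-term $t$. The key point is that $\II^{\levelIPG}$ and $\LL^{\levelIPG}$ were set up precisely so that the bijections $\EE$ carry the whole argument, via their defining property $\EE(\interpretRmapF{s})\bigl(\EE(\interpretRmapF{v})\bigr) = \EE(\interpretRmapF{s\>v})$ for ground terms $s \oftype \tau\fun\upsilon$ and $v \oftype \tau$. Since $t$ is a locally closed ground $\lambda$-term over $\SigmaI$, its head is neither a term variable nor a free De Bruijn index, so $t$ is either a symbol $\cst{f}\typeargs{\tuple\tau}$ (possibly of functional type), an application $s\>v$, or an abstraction $\lambda\>u$; in the application case $s$ and $v$ are themselves ground $\lambda$-terms, hence proper subterms to which the induction hypothesis applies, so plain structural induction suffices. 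Throughout I will use that $\mapFonly$ is a function on $\beta\eta$-equivalence classes, so that $\mapF{t} = \mapF{\bnf{t}}$ for every ground $\lambda$-term $t$.

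First, for a symbol $t = \cst{f}\typeargs{\tuple\tau}$ the denotation clause gives $\interpret{t}{\III^{\levelIPG}}{} = \II^{\levelIPG}(\cst{f}, \DD_{\tuple\tau})$, which equals $\EE(\interpretRmapF{\cst{f}\typeargs{\tuple\tau}})$ by the definition of $\II^{\levelIPG}$; as symbols are $\beta$-normal this is $\EE(\interpretRmapF{\bnf{t}})$. Second, for an application $t = s\>v$ the denotation clause gives $\interpret{s\>v}{\III^{\levelIPG}}{} = \interpret{s}{\III^{\levelIPG}}{}\bigl(\interpret{v}{\III^{\levelIPG}}{}\bigr)$; applying the induction hypothesis to $s$ and $v$ turns this into $\EE(\interpretRmapF{s})\bigl(\EE(\interpretRmapF{v})\bigr)$, and the defining property of $\EE$ at the type $\tau\fun\upsilon$ of $s$ rewrites it to $\EE(\interpretRmapF{s\>v}) = \EE(\interpretRmapF{\bnf{t}})$. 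This single case uniformly handles both applied symbols and genuine $\beta$-redexes, because the property of $\EE$ is phrased on $\beta\eta$-classes of ground terms and does not care whether $s$ happens to be an abstraction. Third, for an abstraction $t = \lambda\>u$ the denotation is $\LL^{\levelIPG}(\xi, \lambda\>u)$, and since $\lambda\>u$ is ground, the grounding substitution chosen in the definition of $\LL^{\levelIPG}$ acts trivially on it, so $\LL^{\levelIPG}(\xi, \lambda\>u) = \EE(\interpretRmapF{\lambda\>u}) = \EE(\interpretRmapF{\bnf{t}})$; note also that $\LL^{\levelIPG}$ depends only on $\xi$ at variables occurring in $\lambda\>u$, of which there are none. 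This exhausts the cases.

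The only delicate point is the bookkeeping around $\beta\eta$-normal forms: one must check that the first-order term $\mapF{t}$ is independent of the chosen $\beta\eta$-representative of $t$ (which is immediate from the definition of $\mapFonly$ together with the $\benf{}$-independence of yellow subterms established earlier), so that $\mapF{t} = \mapF{\bnf{t}}$ and so that the defining property of $\EE$ may be invoked on $s\>v$ regardless of whether a redex is exposed. I do not expect any circularity with the properness of $\III^{\levelIPG}$, which this lemma is meant to support: $\LL^{\levelIPG}$ was already defined through the $\beta$-quotienting encoding $\mapFonly$, and in the $\beta$-redex case the equality is discharged by the induction hypothesis on operator and operand, not by properness.
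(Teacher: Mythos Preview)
Your proof is correct and follows essentially the same approach as the paper: structural induction on the ground $\lambda$-term with the three cases (symbol, application, abstraction) handled via the definitions of $\II^{\levelIPG}$, the defining equation of $\EE$, and $\LL^{\levelIPG}$ respectively. Your additional remarks on the exhaustiveness of the case analysis, the uniform treatment of $\beta$-redexes through the $\beta\eta$-invariance of $\mapFonly$, and the absence of circularity with properness are all accurate and make explicit points the paper leaves implicit.
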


\begin{proof}
  The proof is adapted from the proof of Lemma 40 in Bentkamp et al.~\cite{bentkamp-et-al-2021-lamsup-journal}.
	We proceed by induction on $t$.
	If $t$ is of the form $\cst{f}\typeargs{\tuple{\tau}}$, then
	\begin{align*}
	\interpret{t}{\III^{\levelIPG}}{} &= \II^{\levelIPG}(\cst{f},\DD_{\tuple{\tau}})\\
	&=\EE(\interpretfo{\floor{\cst{f}\typeargs{\tuple{\tau}} }}{})
	=\EE(\interpretfo{\floor{\bnf{t}}}{})
	\end{align*}
  If $t$ is an application $t = t_1\>t_2$, where $t_1$ is of type $\tau\fun\upsilon$, then
	\begin{align*}
	\interpret{t_1\>t_2}{\III^{\levelIPG}}{}
	&= \interpret{t_1}{\III^{\levelIPG}}{} (\interpret{t_2}{\III^{\levelIPG}}{}) \\
	&\overset{\!\scriptscriptstyle\text{IH}\!}{=}
	\EE_{\tau\fun\upsilon}(\interpretfo{\floor{\bnf{t_1}}}{}) (\EE_\tau(\interpretfo{\floor{\bnf{t_2}}}{}))\\
	&\overset{\kern-10mm\text{Def }\EE\kern-10mm}{=}
	\enskip\EE_\upsilon(\interpretfo{\floor{\bnf{(t_1\>t_2)}}}{})
	\end{align*}
	If $t$ is a $\lambda$-expression, then
	\begin{align*}
	\interpret{\lambda\>u}{\III^{\levelIPG}}{\xi}
	&= \LL^{\levelIPG}(\xi, \lambda\>u) \\
	& = \EE(\interpret{\floor{\bnf{(\lambda\>u)\theta}}}{R}{}) \\
	& = \EE(\interpret{\floor{\bnf{(\lambda\>u)}}}{R}{})
  \end{align*}
  where $\theta$
  is a substitution as required by the definition of $\LL^{\levelIPG}$.
\end{proof}

We need to show that the interpretation $\III^{\levelIPG}$ is
proper. In the proof, we will need the following lemma, which is
very similar to the substitution lemma (Lemma~\ref{lem:subst-lemma}),
but we
must prove it here for our particular interpretation $\III^{\levelIPG}$ because we have not
shown that $\III^{\levelIPG}$ is proper yet.

\begin{lem}
  \label{lem:IPG:interpretation-substitution}
  Let $\rho$ be a grounding substitution, $t$ be a $\lambda$-term, and $\xi$ be a valuation.
  Moreover, we define a valuation $\xi'$ by $\vphantom{(_{(_(}}\xity'(\alpha) = \interpret{\alpha\rho}{\IIIty^\levelIPG}{\xity}$ for all type variables $\alpha$ and $\xite'(x) = \interpret{x\rho}{\III^\levelIPG}{\xi}$ for all term variables $x$.
  We then have
  \[
  \interpret{t\rho}{\III^{\levelIPG}}{\xi} = \interpret{t}{\III^{\levelIPG}}{\xi'}
  \]
  \end{lem}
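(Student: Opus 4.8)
The plan is to argue by induction on the structure of the $\lambda$-term $t$, reducing each case to first-order facts about $R$ by means of the correspondence Lemma~\ref{lem:IPG:ho-fo-correspondence}. We cannot invoke the ordinary substitution lemma (Lemma~\ref{lem:subst-lemma}) here, since $\III^{\levelIPG}$ has not yet been shown to be proper; this is precisely why the argument must stay at the first-order level. Recall that a $\lambda$-term has no free De Bruijn indices, so the only shapes to treat are a variable $x$, a symbol $\cst{f}\typeargs{\tuple{\tau}}$ (symbols of $\SigmaI$ carry no parameters), an application $t_1\>t_2$, and a $\lambda$-expression $\lambda\langle\tau\rangle\>u$ --- and in the last case the recursion stops, because the denotation of a $\lambda$-expression is handed directly to $\LL^{\levelIPG}$. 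Throughout I will use that $\eqR{a}{b}$ holds iff $\interpretRmapF{a} = \interpretRmapF{b}$, since $R$, as a first-order interpretation, satisfies $\mapF{a}\ceq\mapF{b}$ exactly when $\mapF{a}\leftrightrewrite^*_R\mapF{b}$.

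The variable, symbol, and application cases are routine unfoldings. For $t = x$, both sides equal $\xite'(x)$ by the definition of $\xi'$. For $t = t_1\>t_2$, I would unfold the denotation of an application on both sides, use $(t_1\>t_2)\rho = t_1\rho\>t_2\rho$, and apply the induction hypothesis to $t_1$ and $t_2$, which are again $\lambda$-terms. For $t = \cst{f}\typeargs{\tuple{\tau}}$, the claim reduces --- after unfolding $\II^{\levelIPG}$ --- to $\interpret{\tuple{\tau}\rho}{\IIIty^\levelIPG}{\xity} = \interpret{\tuple{\tau}}{\IIIty^\levelIPG}{\xity'}$, i.e.\ to the type-level substitution lemma, which is a short separate induction on types using $\xity'(\alpha) = \interpret{\alpha\rho}{\IIIty^\levelIPG}{\xity}$.

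The actual content is the $\lambda$-expression case $t = \lambda\langle\tau\rangle\>u$. Since $\rho$ is grounding, $t\rho$ is a ground $\lambda$-term (grounding removes all term and type variables and never creates free De Bruijn indices), so $\interpret{t\rho}{\III^{\levelIPG}}{\xi} = \interpret{t\rho}{\III^{\levelIPG}}{}$ is valuation-free, and by Lemma~\ref{lem:IPG:ho-fo-correspondence} it equals $\EE(\interpretRmapF{\bnf{t\rho}})$. On the other side, picking a grounding substitution $\theta'$ that realizes $\xi'$ in the sense used to define $\LL^{\levelIPG}$ --- that is, $\DD_{\alpha\theta'} = \xity'(\alpha)$ for all type variables $\alpha$ and $\EE(\interpretRmapF{x\theta'}) = \xite'(x)$ for all term variables $x$ --- we get $\interpret{t}{\III^{\levelIPG}}{\xi'} = \LL^{\levelIPG}(\xi', t) = \EE(\interpretRmapF{\bnf{t\theta'}})$. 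Because $\EE$ is injective, it then suffices to prove $\eqR{t\rho}{t\theta'}$, which I would obtain from Lemma~\ref{lem:subst-congruence} applied to the $\lambda$-term $t$ with the grounding substitutions $\rho$ and $\theta'$, once its two hypotheses are verified: (i) $\eqR{x\rho}{x\theta'}$ for every term variable $x$, which holds since $\EE(\interpretRmapF{x\theta'}) = \xite'(x) = \interpret{x\rho}{\III^{\levelIPG}}{\xi} = \EE(\interpretRmapF{\bnf{x\rho}})$ --- using valuation-freeness and Lemma~\ref{lem:IPG:ho-fo-correspondence} on the ground term $x\rho$ --- together with injectivity of $\EE$; and (ii) $\alpha\rho = \alpha\theta'$ for every type variable $\alpha$, which follows from $\DD_{\alpha\theta'} = \xity'(\alpha) = \interpret{\alpha\rho}{\IIIty^\levelIPG}{\xity} = \DD_{\alpha\rho}$ (the type denotation of a ground type $\sigma$ in $\IIIty^\levelIPG$ being $\DD_{\sigma}$) and the bijection between ground types and domains.

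I expect the $\lambda$-expression case to be the only delicate point, and within it the bookkeeping around the realizing substitution $\theta'$ and the interaction of $\bnf{\cdot}$-normalization with $\mapFonly$ and with $\eqR{}{}$; the remaining cases are mechanical once Lemma~\ref{lem:IPG:ho-fo-correspondence}, Lemma~\ref{lem:subst-congruence}, and the injectivity of $\EE$ are in hand.
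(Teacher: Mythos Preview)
Your proposal is correct and follows the same overall structure as the paper: induction on $t$, with the variable, symbol, and application cases handled exactly as in the ordinary substitution lemma (Lemma~\ref{lem:subst-lemma}), and the $\lambda$-expression case treated separately using that $\rho$ is grounding.

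In the $\lambda$-case you take a slight detour that the paper avoids. You pick an arbitrary realizing substitution $\theta'$ for $\xi'$ and then invoke Lemma~\ref{lem:subst-congruence} to show $\eqR{t\rho}{t\theta'}$. The paper instead observes directly that $\rho$ \emph{is} a realizing substitution for $\xi'$: your hypotheses (i) and (ii) say precisely that $\DD_{\alpha\rho} = \xity'(\alpha)$ and $\EE(\interpretRmapF{\bnf{x\rho}}) = \xite'(x)$, which are exactly the conditions needed to take $\theta' = \rho$ in the definition of $\LL^{\levelIPG}$. Hence $\LL^{\levelIPG}(\xi', \lambda\langle\tau\rangle\>u) = \EE(\interpretRmapF{\bnf{t\rho}})$ immediately, without the appeal to Lemma~\ref{lem:subst-congruence}. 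Your route works because the well-definedness of $\LL^{\levelIPG}$ (itself proved via Lemma~\ref{lem:subst-congruence}) guarantees that any two realizing substitutions give the same value; you are just re-doing that step explicitly.
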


\begin{proof}
  The proof is adapted from the proof of Lemma 41 in Bentkamp et al.~\cite{bentkamp-et-al-2021-lamsup-journal}.
  We proceed by induction on the structure of $\tau$ and $t$.
  The proof is identical to that of Lemma~\ref{lem:subst-lemma},
  except for the last case, which uses properness of the interpretation, a
  property we cannot assume here.
  However, here, we have the assumption that $\rho$ is a grounding substitution.
  Therefore, if $t$ is a $\lambda$-expression, we argue as follows:
\begin{align*}
\interpret{(\lambda\>u)\rho}{\III^{\levelIPG}}{\xi}
&=\interpret{\lambda\>u\rho}{\III^{\levelIPG}}{\xi}&&\text{}\\
&= \LL^{\levelIPG}(\xi,\lambda\>u\rho)
&&\text{ by the definition of the term denotation}\\
&= \EE(\interpret{\floor{\bnf{(\lambda\>u)\rho\theta}}}{R}{})
&&\text{ for some $\theta$ by the definition of $\LL^{\levelIPG}$}\\
&= \EE(\interpret{\floor{\bnf{(\lambda\>u)\rho}}}{R}{})
&&\text{ because $(\lambda\>u)\rho$ is ground}\\
&\overset{\smash{*}}{=} \LL^{\levelIPG}(\xi',\lambda\>u)
&&\text{ by the definition of $\LL^{\levelIPG}$ and Lemma~\ref{lem:IPG:ho-fo-correspondence}}\\
&= \interpret{\lambda\>u}{\III^{\levelIPG}}{\xi'}
&&\text{ by the definition of the term denotation}
\end{align*}
The step labeled with $*$ is justified as follows:
We have
$\LL^{\levelIPG}(\xi',\lambda\>u) = \EE(\interpretfo{\floor{\bnf{(\lambda\>u)\theta'}}}{})$
by the definition of $\LL^{\levelIPG}$,
if $\theta'$ is a substitution such that
$\smash{\dho_{\alpha\theta'}}=\xity'(\alpha)$ for all $\alpha$ and
$\EE(\interpretfo{\floor{\bnf{x\theta'}}}{}) = \xite'(x)$
for all $x$.
By the definition of $\xi'$ and by Lemma~\ref{lem:IPG:ho-fo-correspondence},
$\rho$ is such a substitution.
Hence,
$\LL^{\levelIPG}(\xi',\lambda\>u) = \EE(\interpretfo{\floor{\bnf{(\lambda\>u)\rho}}}{})$.
\qedhere
\end{proof}

\begin{lem}
\label{lem:IPG:interpretation-proper}
The interpretation $\III^{\levelIPG}$ is proper.
\end{lem}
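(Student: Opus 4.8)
The plan is to unfold the definition of properness (Definition of proper interpretation in Section~\ref{sec:logic}) and verify the required equation directly for $\III^{\levelIPG}$, using the first-order correspondence established in Lemma~\ref{lem:IPG:ho-fo-correspondence} together with the substitution-style Lemma~\ref{lem:IPG:interpretation-substitution}. Recall that $\III^{\levelIPG}$ is proper if $\interpret{\lambda\langle\tau\rangle\>t}{\III^{\levelIPG}}{(\xity,\xite)}(a) = \interpret{t\dbsubst{x}}{\III^{\levelIPG}}{(\xity,\xite[x\mapsto a])}$ for all $\lambda$-expressions $\lambda\langle\tau\rangle\>t$, all valuations $\xi = (\xity,\xite)$, and all $a$ in the appropriate domain, where $x$ is a fresh variable.

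First I would fix $\xi = (\xity, \xite)$ and an element $a \in \DD_\tau$, and pick a fresh variable $x$ of type $\tau$. Since $R$ is term-generated and $\EE$ is a bijection, I can choose a ground term $s$ such that $\EE(\interpretRmapF{s}) = a$. I then form the valuation $\xi'' = (\xity, \xite[x \mapsto a])$. The key computation has two prongs. On the left-hand side, I use Lemma~\ref{lem:IPG:ho-fo-correspondence} after passing to a grounding substitution $\theta$ (as in the definition of $\LL^{\levelIPG}$) chosen so that $\DD_{\alpha\theta} = \xity(\alpha)$ and $\EE(\interpretRmapF{y\theta}) = \xite(y)$ for all $y$; extend it by $x\theta = s$ so that it also realizes $\xi''$. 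Then $\interpret{\lambda\langle\tau\rangle\>t}{\III^{\levelIPG}}{\xi} = \EE(\interpretRmapF{\bnf{(\lambda\>t)\theta}})$, and applying this function to $a = \EE(\interpretRmapF{s})$ and unfolding the definition of $\EE_{\tau\to\upsilon}$ gives $\EE(\interpretRmapF{\bnf{((\lambda\>t)\theta\>s)}}) = \EE(\interpretRmapF{\bnf{(t\dbsubst{x}\theta)}})$, the last equality by $\beta$-reduction together with the fact that $\bnf{}$-normalizing commutes with the relevant substitution since $s$ is ground. On the right-hand side, I apply Lemma~\ref{lem:IPG:ho-fo-correspondence} to the ground $\lambda$-term $(t\dbsubst{x})\theta$, or equivalently invoke Lemma~\ref{lem:IPG:interpretation-substitution} with the grounding substitution $\theta$ to get $\interpret{t\dbsubst{x}}{\III^{\levelIPG}}{\xi''} = \interpret{(t\dbsubst{x})\theta}{\III^{\levelIPG}}{} = \EE(\interpretRmapF{\bnf{((t\dbsubst{x})\theta)}})$. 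Comparing the two sides yields the desired equality.

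The main obstacle I anticipate is the careful bookkeeping around De Bruijn indices and $\beta$-normal forms: one must check that $(\lambda\langle\tau\rangle\>t)\theta\>s$ $\beta$-reduces to $(t\dbsubst{x})\theta$ up to $\bnf{}$-normalization, i.e., that substituting the ground term $s$ for the De Bruijn index bound by the outer $\lambda$ is faithfully tracked through $\theta$ and through the fresh variable $x$. This is exactly the kind of locally-nameless juggling that the paper's notation $\mapdbtuple{t}{n-1}{\tuple{s}}$ and $\dbsubst{s}$ is set up to handle, so it should go through, but it requires attention. A secondary point is ensuring the chosen $\theta$ simultaneously realizes both $\xi$ (for the outer valuation) and $\xi''$ (which only differs at the fresh $x$), which is immediate since $x$ does not occur in the definition of $\theta$ restricted to the variables of $\lambda\langle\tau\rangle\>t$. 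Everything else is a routine application of the already-established lemmas, and with properness in hand, $\III^{\levelIPG}$ is a genuine higher-order interpretation, as needed for the model-construction argument to proceed.
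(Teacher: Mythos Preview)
Your proposal is correct and follows essentially the same route as the paper's proof: unfold $\LL^{\levelIPG}$ using a grounding $\theta$, apply the defining equation of $\EE_{\tau\to\upsilon}$ to the chosen $s$ with $\EE(\interpretRmapF{s})=a$, $\beta$-reduce $((\lambda\>t)\theta)\>s$ to $(t\dbsubst{x})(\theta[x\mapsto s])$, and then convert back to the higher-order side via Lemma~\ref{lem:IPG:ho-fo-correspondence} and Lemma~\ref{lem:IPG:interpretation-substitution}. The only cosmetic difference is that the paper introduces the extension $\theta[x\mapsto s]$ at the $\beta$-reduction step rather than up front, and it uses the two lemmas sequentially rather than presenting them as alternatives, but the argument is otherwise identical.
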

\begin{proof}
We need to show that $\interpret{\lambda \> t}{\III^{\levelIPG}}{(\xity,\xite)}(a) = \interpret{t\dbsubst{x}}{\III^\levelIPG}{(\xity,\xite[x\mapsto a])}$, where $x$ is a fresh variable.
\[
\begin{array}{rll}
\interpret{\lambda \> t}{\III^{\levelIPG}}{(\xity,\xite)}(a) & = \LL^{\levelIPG}((\xity,\xite), \lambda\> t)(a)
& \text{by the definition of term denotation}  \\[\jot]
& = \EE(\interpretRmapF{\bnf{(\lambda\> t)\theta}})(a)
& \text{by the definition of $\LL^{\levelIPG}$ for some $\theta$} \\
&& \text{such that $\EE(\interpretRmapF{z\theta}) = \xite(z)$ for} \\
&& \text{all $z$ and $\DD_{\alpha\theta} = \xity(\alpha)$ for all $\alpha$} \\[\jot]
& = \EE(\interpretRmapF{\bnf{((\lambda\> t)\theta\> s)}})
& \text{by the definition of $\EE$} \\
&& \text{where $\EE(\interpretRmapF{s}) = a$} \\[\jot]
& = \EE(\interpretRmapF{\bnf{t\dbsubst{x}(\theta[x \mapsto s])}})
& \text{by $\beta$-reduction} \\
&& \text{where $x$ is fresh} \\[\jot]
& = \interpret{t\dbsubst{x}(\theta[x \mapsto s])}{\III^{\levelIPG}}{}
& \text{by Lemma \ref{lem:IPG:ho-fo-correspondence}} \\[\jot]
& = \interpret{t\dbsubst{x}}{\III^{\levelIPG}}{(\xity,\xite[x\mapsto a])}
& \text{by Lemma \ref{lem:IPG:interpretation-substitution}}
\end{array}
\]
\end{proof}

\begin{lem}\label{lem:IPG:term-generated}
$\III^{\levelIPG}$ is term-generated; i.e.,
for all $\DD \in \UU^{\levelIPG}$ and all $a \in \DD$,
there exists a ground type $\tau$ such that
$\interpret{\tau}{\IIIty^{\levelIPG}}{} = \DD$ and
a ground term $t$ such that
$\interpret{t}{\III^{\levelIPG}}{} = a$.
\end{lem}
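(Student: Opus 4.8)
The plan is to reduce everything to the bijectivity of $\mapFonly$ and of the maps $\EE_\tau$, together with Lemma~\ref{lem:IPG:ho-fo-correspondence}, which already links $\III^{\levelIPG}$ to the first-order interpretation $R$. The lemma has two halves — finding a type and finding a term — and both are essentially bookkeeping.

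First I would dispatch the type part. Given $\DD \in \UU^{\levelIPG}$, the definition $\UU^{\levelIPG} = \{\DD_\tau \mid \tau \text{ ground}\}$ gives a ground type $\tau$ with $\DD = \DD_\tau$. A routine induction on the structure of $\tau$ then shows $\interpret{\tau}{\IIIty^{\levelIPG}}{} = \DD_\tau$: a ground type is of the form $\kappa(\tuple{\tau})$, so by the definition of the type denotation, the defining equation $\IIty^{\levelIPG}(\kappa)(\DD_{\tuple{\tau}}) = \DD_{\kappa(\tuple{\tau})}$, and the induction hypothesis, $\interpret{\kappa(\tuple{\tau})}{\IIIty^{\levelIPG}}{} = \IIty^{\levelIPG}(\kappa)(\interpret{\tuple{\tau}}{\IIIty^{\levelIPG}}{}) = \IIty^{\levelIPG}(\kappa)(\DD_{\tuple{\tau}}) = \DD_{\kappa(\tuple{\tau})}$. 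Hence $\interpret{\tau}{\IIIty^{\levelIPG}}{} = \DD$.

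For the term part, fix $a \in \DD_\tau$. Since $\EE_\tau : \mathscr{U}_\tau \to \DD_\tau$ is a bijection, $\EE_\tau^{-1}(a)$ lies in the first-order domain $\mathscr{U}_\tau$. Since $\mathscr{U}_\tau$ consists of the $R$-equivalence classes of ground first-order terms of type $\tau$ (that is, $R$ is term-generated), there is a ground first-order term $s$ over $\mapF{\SigmaI}$ of type $\tau$ with $\interpret{s}{R}{} = \EE_\tau^{-1}(a)$. Because $\mapFonly$ is a type-preserving bijection between ground higher-order terms over $\SigmaI$ and ground first-order terms over $\mapF{\SigmaI}$ (Lemma~\slimfull{\ref{lem:fol-bijection}}{\ref{lem:IPG:F-bijection}}), there is a ground higher-order term of type $\tau$ whose encoding is $s$; let $t \in \TT^\lambda_\mathrm{ground}(\SigmaI)$ be a $\beta$-normal $\lambda$-term representing it, so that $\bnf{t} = t$ and $\mapF{t} = s$. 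Lemma~\ref{lem:IPG:ho-fo-correspondence} then yields $\interpret{t}{\III^{\levelIPG}}{} = \EE(\interpretRmapF{\bnf{t}}) = \EE(\interpret{s}{R}{}) = \EE_\tau(\EE_\tau^{-1}(a)) = a$, which finishes the proof.

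I do not expect any genuine obstacle here; the only points needing care are that $\mapFonly$ preserves the type of a term (so that the chosen $s$ of type $\tau$ produces a higher-order term landing in the correct domain $\DD_\tau$, and so that $\EE_\tau$ — rather than $\EE$ for some other type — is the relevant map), that $\mathscr{U}_\tau$ is genuinely term-generated as asserted in the construction of $R$, and that choosing a $\beta$-normal representative $t$ makes $\bnf{t} = t$ so that Lemma~\ref{lem:IPG:ho-fo-correspondence} applies with $\mapF{t} = s$ exactly. All of these are immediate from the respective definitions.
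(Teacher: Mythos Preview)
Your proof is correct and follows essentially the same approach as the paper: use the one-to-one correspondence between ground types and domains for the type part, then for the term part invoke term-generatedness of $R$, pull back along $\mapFonly^{-1}$, and apply Lemma~\ref{lem:IPG:ho-fo-correspondence}. You are simply more explicit than the paper about the type-level induction and about choosing a $\beta$-normal representative so that $\bnf{t} = t$; the paper glosses over both points.
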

\begin{proof}
In the construction above, it is clear that there is a one-to-one
correspondence between ground types and domains, which yields a suitable
ground type $\tau$. 

Since $R$ is term-generated, there must be a ground term $s\in\termsPF$ such
that $\interpret{s}{R}{} = \EE^{-1}(a)$.
Let $t = \mapFonly^{-1}(s)$.
Then, by Lemma~\ref{lem:IPG:ho-fo-correspondence},
$\interpret{t}{\III^{\levelIPG}}{} = \EE(\interpret{s}{R}{}) = a$.
\end{proof}

\begin{lem}\label{thm:IPG:model-mapF-iff}
Given $C\slimfull{}{\closure\theta} \in \clausesIPG$,
we have $\III^\levelIPG \models C\slimfull{}{\closure\theta}$ if and only if
$R \models \mapF{C\slimfull{}{\closure\theta}}$.
\end{lem}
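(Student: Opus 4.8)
The plan is to reduce the statement to the level of individual literals and then apply the higher-order/first-order correspondence of Lemma~\ref{lem:IPG:ho-fo-correspondence} together with the bijectivity of $\EE$. First I would unfold both sides. On the left, $\III^\levelIPG \models C\slimfull{}{\closure\theta}$ holds exactly when the ground clause $C\slimfull{}{\theta}$ is true in $\III^\levelIPG$ (the denotation of a ground clause does not depend on the valuation), i.e.\ when at least one of its literals is true. On the right, $R \models \mapF{C\slimfull{}{\closure\theta}}$ holds exactly when at least one literal of the first-order ground clause $\mapF{C\slimfull{}{\theta}}$ is true in $\RfN$; here I use that $\mapFonly$ acts on each side of each literal individually and, in the full version where $C\closure\theta$ is a genuine closure, that $\mapFonly$ is compatible with grounding substitutions, so that $R \models \mapF{C\closure\theta}$ amounts to truth of $\mapF{C\theta}$ in $\RfN$. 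Hence it suffices to prove, for each literal $s \doteq t$ of $C\slimfull{}{\theta}$, the equivalence that $s \doteq t$ is true in $\III^\levelIPG$ if and only if $\mapF{s} \doteq \mapF{t}$ is true in $\RfN$.

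To establish this per-literal equivalence, recall that $s \ceq t$ (resp.\ $s \cneq t$) is true in $\III^\levelIPG$ precisely when $\interpret{s}{\III^\levelIPG}{} = \interpret{t}{\III^\levelIPG}{}$ (resp.\ when they differ). By Lemma~\ref{lem:IPG:ho-fo-correspondence}, $\interpret{s}{\III^\levelIPG}{} = \EE(\interpretRmapF{\bnf{s}})$ and $\interpret{t}{\III^\levelIPG}{} = \EE(\interpretRmapF{\bnf{t}})$; since $\mapFonly$ is computed on $\beta$-normal representatives, these equal $\EE(\interpret{\mapF{s}}{R}{})$ and $\EE(\interpret{\mapF{t}}{R}{})$. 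Because $\EE$ is a bijection, equality of these two domain elements is equivalent to $\interpret{\mapF{s}}{R}{} = \interpret{\mapF{t}}{R}{}$, which is exactly the condition for $\mapF{s} \ceq \mapF{t}$ to be true in $\RfN$; the negative literal is handled in the same way. Chaining this with the reduction from the first paragraph yields the biconditional.

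I do not anticipate a genuine obstacle: the substantive work has already been packaged into Lemma~\ref{lem:IPG:ho-fo-correspondence} and into the bijectivity of $\EE$ that was verified when $\III^\levelIPG$ was constructed. The only points needing a little care are keeping the treatment of $\beta\eta$-equivalence classes coherent---recording that $\mapF{s}$ and $\mapF{\bnf{s}}$ denote the same first-order term, so that Lemma~\ref{lem:IPG:ho-fo-correspondence} can be read off directly---and, in the full version, unwinding the meanings of $\III \models C\closure\theta$ and $R \models \mapF{C\closure\theta}$ for closures, which follows routinely from the compatibility of the encoding $\mapFonly$ with grounding substitutions.
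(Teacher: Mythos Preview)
Your proposal is correct and follows essentially the same approach as the paper: reduce to the literal level, apply Lemma~\ref{lem:IPG:ho-fo-correspondence} to get $\interpret{t}{\III^\levelIPG}{} = \EE(\interpretRmapF{\bnf{t}})$, and use bijectivity of $\EE$ to conclude that equality of denotations in $\III^\levelIPG$ coincides with equality in $R$. Your write-up is more detailed about the closure-versus-clause unwinding and the $\beta$-normal form bookkeeping, but the argument is the same.
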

\begin{proof}
By Lemma~\ref{lem:IPG:ho-fo-correspondence},
we have 
\[
\interpret{t}{\III^\levelIPG}{} = \EE(\interpretRmapF{\bnf{t}})
\]
for any $t \in \TT_\mathrm{ground}(\SigmaI)$.
Since $\EE$ is a bijection, 
it follows that a ground literal 
$s\slimfull{}{\theta} \doteq t\slimfull{}{\theta}$ in a clause $C\slimfull{}{\closure\theta} \in \clausesIPG$
is true in $\III^\levelIPG$ if and only if 
$\mapF{s\slimfull{}{\theta} \doteq t\slimfull{}{\theta}}$ is true in $R$.
So any closure $C\slimfull{}{\closure\theta} \in \clausesIPG$ is 
true in $\III^\levelIPG$ if and only if
$\mapF{C\slimfull{}{\closure\theta}}$ is true in $R$.
\end{proof}

\begin{thm} \label{thm:IPG:model-for-varirred-closures}
  Let $N\subseteq \clausesIPG$ be saturated up to redundancy \wrt\ $\IPGRedI$,
  and 
  \begin{full}$N$ does not contain a closure of the form $\bot\closure\theta$ for any $\theta$.\end{full}%
  \begin{slim}$\bot \not\in N$.\end{slim}
  Then $\III^\levelIPG \models \slimfull{N}{\irred_{R}({N})}$\slimfull{.}{, where $R = \Rbasic_{\mapF{N}}$.}
\end{thm}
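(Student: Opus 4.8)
The plan is to derive this theorem purely by transporting the first-order completeness result Theorem~\ref{thm:PF:model-for-varirred-closures} across the encoding $\mapFonly$ and then invoking the properties of the higher-order interpretation $\III^\levelIPG$ constructed above; recall that $R = \Rbasic_{\mapF{N}}$, so no new model construction is needed.

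First I would push the hypotheses down to the first-order level. Since $N$ is saturated up to redundancy \wrt\ $\IPGInf^{\succ,\mathit{i\slimfull{}{p}gsel}}$, Lemma~\ref{lem:IPG:saturated} shows that $\mapF{N}$ is saturated up to redundancy \wrt\ $\PFInf^{\succ_\mapFonly,\mapF{\mathit{i\slimfull{}{p}gsel}}}$, and $\succ_\mapFonly$ is an admissible term order for $\PFInf$ by Lemma~\ref{lem:IPG:succ-transfer}. Because $\mapFonly$ maps each side of each literal individually, $\mapF{C} = \bot$ holds only when $C = \bot$, so the assumption on $N$ implies that $\mapF{N}$ contains no empty \slimfull{clause}{closure}. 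Theorem~\ref{thm:PF:model-for-varirred-closures}, applied to $\mapF{N}$, then yields $R \modelsolam \slimfull{\mapF{N}}{\irred_R(\mapF{N})}$; that is, every \slimfull{clause of}{variable-irreducible closure of} $\mapF{N}$ is true in $R$.

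Next I would lift this back to level $\levelIPG$. \begin{full}By Definition~\ref{def:IPG:irred}, an element of $\clausesIPG$ is variable-irreducible \wrt\ $R$ exactly when its $\mapFonly$-image is, so $\irred_R(\mapF{N}) = \mapF{\irred_R(N)}$.\end{full} Consequently $R \models \mapF{E}$ for every $E \in \slimfull{N}{\irred_R(N)}$, and Lemma~\ref{thm:IPG:model-mapF-iff} turns each such statement into $\III^\levelIPG \models E$. Since $\III^\levelIPG$ is proper by Lemma~\ref{lem:IPG:interpretation-proper}, it is a legitimate higher-order interpretation, and therefore $\III^\levelIPG \models \slimfull{N}{\irred_R(N)}$, as claimed.

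I do not expect a genuine obstacle: the substantial work is already in place — the construction of $\III^\levelIPG$ together with Lemmas~\ref{lem:argcong} and~\ref{lem:subst-congruence} that make it proper, and the first-order model existence in Theorem~\ref{thm:PF:model-for-varirred-closures}. The only care required is to verify that the structure carried along $\mapFonly$ — the term order, the selection function, and, in the full calculus, the notion of variable-irreducibility — lines up across the encoding, which is precisely what the transfer lemmas cited above provide.
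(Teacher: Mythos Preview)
Your proposal is correct and follows essentially the same route as the paper: push saturation down to the $\levelPF$ level via Lemma~\ref{lem:IPG:saturated}, apply Theorem~\ref{thm:PF:model-for-varirred-closures} to obtain $R \models \slimfull{\mapF{N}}{\irred_R(\mapF{N})}$, and lift back via Lemma~\ref{thm:IPG:model-mapF-iff}. You supply more detail than the paper's terse proof---the admissibility of $\succ_\mapFonly$, the preservation of the empty-clause condition, and the identity $\irred_R(\mapF{N}) = \mapF{\irred_R(N)}$---but the structure is identical.
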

\begin{proof}
By Lemma~\ref{thm:IPG:model-mapF-iff}, it suffices to show that $R$ is a model of 
$\slimfull{N}{\irred_{R}(\mapF{N})}$.
We apply Theorem~\ref{thm:PF:model-for-varirred-closures}.
Lemma~\ref{lem:IPG:saturated} shows the condition of saturation up to redundancy.
\end{proof}

\begin{full}
\begin{lem}\label{lem:IPG:irred-entails-all}
  Let $R$ be a confluent term rewrite system on $\termsPF$ oriented by $\succ_\mapFonly$
  whose only Boolean normal forms are $\itrue$ and $\ifalse$.
  Let $N\subseteq\clausesIPG$
  such that for every $C\closure\theta \in N$
  and every grounding substitution $\rho$ that
  coincides with $\theta$ on all variables not occurring in $C$, 
  we have $C\closure \rho\in N$.
  Then $R \cup \mapF{\irred_R(N)} \modelsolam \mapF{N}$.
\end{lem}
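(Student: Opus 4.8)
The plan is to reduce the statement to its $\levelPF$-level counterpart, Lemma~\ref{lem:PF:irred-entails-all}, using the bijective first-order encoding $\mapFonly$. I would apply Lemma~\ref{lem:PF:irred-entails-all} to the clause set $\mapF{N} \subseteq \clausesPF$ together with the rewrite system $R$. The hypotheses of that lemma are met: $R$ is confluent, oriented by $\succ_\mapFonly$ (which is an admissible term order for $\PFInf$ by Lemma~\ref{lem:IPG:succ-transfer}), and its only Boolean normal forms are $\itrue$ and $\ifalse$; the one remaining hypothesis, closure of $\mapF{N}$ under grounding substitutions agreeing with the closure's own substitution outside the clause, is checked separately below. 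Lemma~\ref{lem:PF:irred-entails-all} then gives $R \cup \irred_R(\mapF{N}) \modelsolam \mapF{N}$. Since variable-irreducibility on the $\levelIPG$ level is defined via $\mapFonly$ (Definition~\ref{def:IPG:irred}) and $\mapFonly$ is a bijection between $\clausesIPG$ and $\clausesPF$ (Lemma~\ref{lem:IPG:F-bijection}), we have $\irred_R(\mapF{N}) = \mapF{\irred_R(N)}$, so this is exactly the desired $R \cup \mapF{\irred_R(N)} \modelsolam \mapF{N}$.

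The substantive step is verifying the closure property of $\mapF{N}$. Let $\mapF{C}\closure\mapF{\theta} = \mapF{C\closure\theta} \in \mapF{N}$ with $C\closure\theta \in N$, and let $\rho'$ be a grounding substitution on $\termsPF$ with $x\rho' = x\mapF{\theta}$ for every variable $x$ not occurring in $\mapF{C}$. By the definition of $\mapFonly$ on terms (Definition~\ref{def:IPG:mapF}), $C$ and $\mapF{C}$ have exactly the same free term variables, so $\rho'$ agrees with $\mapF{\theta}$ on all variables not occurring in $C$. I would then pull $\rho'$ back by setting $x\rho = \mapFonly^{-1}(x\rho')$; since $\mapFonly$ restricts to a bijection between ground $\termsIPG$ terms and ground first-order terms, $\rho$ is a grounding substitution, and for $x$ not occurring in $C$ we get $x\rho = \mapFonly^{-1}(x\mapF{\theta}) = \mapFonly^{-1}(\mapF{x\theta}) = x\theta$. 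The hypothesis on $N$ then yields $C\closure\rho \in N$, hence $\mapF{C\closure\rho} = \mapF{C}\closure\mapF{\rho} \in \mapF{N}$; and since $x\mapF{\rho} = \mapF{x\rho} = \mapF{\mapFonly^{-1}(x\rho')} = x\rho'$ for all $x$, we have $\mapF{\rho} = \rho'$, so $\mapF{C}\closure\rho' \in \mapF{N}$, as required.

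I do not expect a real obstacle: beyond citing Lemma~\ref{lem:PF:irred-entails-all}, the proof is pure bookkeeping. The only point that calls for a little care is the interplay of $\mapFonly$ and $\mapFonly^{-1}$ with substitutions and free variables --- that pulling a grounding substitution back through the bijection, invoking the $N$-closure hypothesis, and pushing forward again returns precisely the closure one started with. This rests on $\mapF{C}\mapF{\theta} = \mapF{C\theta}$ (Lemma~\ref{lem:IPG:mapF-subst}) and on $\mapFonly$ being a variable-preserving bijection (Lemma~\ref{lem:IPG:F-bijection}, Definition~\ref{def:IPG:mapF}); everything else --- admissibility of $\succ_\mapFonly$ and the identification $\irred_R(\mapF{N}) = \mapF{\irred_R(N)}$ --- is a direct appeal to results already in place.
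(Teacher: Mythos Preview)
Your proposal is correct and follows the same approach as the paper: reduce to Lemma~\ref{lem:PF:irred-entails-all} applied to $\mapF{N}$, using the bijectivity of $\mapFonly$ to transfer the closure hypothesis on $N$ to $\mapF{N}$ and to identify $\irred_R(\mapF{N})$ with $\mapF{\irred_R(N)}$. The paper's proof is essentially a two-sentence sketch of exactly this argument; your version simply spells out the details of the substitution pullback that the paper leaves implicit.
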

\begin{proof}
We apply Lemma~\ref{lem:PF:irred-entails-all}.
The required condition on $\mapF{N}$
can be derived from this lemma's condition on $N$
and the fact that $\mapFonly$ is a bijection (Lemma~\ref{lem:IPG:F-bijection}).
\end{proof}
\end{full}

\begin{full}\subsubsection{Partly Substituted Ground Higher-Order Level}\mbox{}\end{full}%
\begin{slim}\subsubsection{Ground Higher-Order Level}\mbox{}\end{slim}%
In this subsubsection,
let $\succ$ be an admissible term order for $\PGInf$ (Definition~\ref{def:PG:admissible-term-order}),
and let $\mathit{\slimfull{}{p}gsel}$ be a selection function on $\clausesPG$\slimfull{}{ (Definition~\ref{def:PF:selection-function})}.

It is inconvenient to construct a model of $N_0$ for the $\levelPG$ level
because $\mapIonly$ converts parameters into subscripts.
For example, in the model constructed in the previous subsubsection,
it can happen that $\cst{a} \ceq \cst{b}$ holds,
but $\cst{f}_\cst{a} \ceq \cst{f}_\cst{b}$ does not hold,
where $\cst{a}$ and $\cst{b}$ are constants
and $\cst{f}_\cst{a}$ and $\cst{f}_\cst{b}$ are constants originating
from a constant $\cst{f}$ with a parameter.
For this reason, our completeness result for the $\levelPG$ level
only constructs a model of $\slimfull{\mapI{N}}{\irred_R(\mapI{N})} \subseteq \clausesIPG$ instead of $\slimfull{N}{\irred_R(N)} \subseteq \clausesPG$.
We will overcome this flaw when we lift the result to the $\levelH$ level
where the initial clause set can be assumed not to contain any constants with parameters.

\begin{thm} \label{thm:PG:model-for-varirred-closures}
  Let $N\subseteq \clausesPG$ be saturated up to redundancy \wrt\ $\PGRedI$,
  and 
  \begin{full}$N$ does not contain a closure of the form $\bot\closure\theta$ for any $\theta$.\end{full}%
  \begin{slim}$\bot \not\in N$.\end{slim}
  Then $\III^\levelIPG \models \slimfull{\mapI{N}}{\mapI{\irred_{R}(N)}}$\slimfull{}{, where $R = \Rbasic_{\mapF{\mapI{N}}}$}.
\end{thm}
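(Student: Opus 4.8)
The plan is to deduce this from the completeness theorem for $\IPGInf$ (Theorem~\ref{thm:IPG:model-for-varirred-closures}), applied not to $N$ itself but to its image $\mapI{N} \subseteq \clausesIPG$ under the indexing transformation. Since $\mapIonly$ is a bijection on ground terms and ground clauses, it transports $N$ faithfully into the $\levelIPG$ level, and the interpretation $\III^\levelIPG$ and rewrite system $R = \Rbasic_{\mapF{\mapI{N}}}$ named in the statement are exactly the objects the $\IPGInf$-level theorem produces when fed $\mapI{N}$.

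First I would check the three hypotheses of Theorem~\ref{thm:IPG:model-for-varirred-closures} for $\mapI{N}$. The relation $\succ_\mapIonly$ transported from $\succ$ via Definition~\ref{def:PG:succ-transfer} is an admissible term order for $\IPGInf$ by Lemma~\ref{lem:PG:succ-transfer}. Saturation of $\mapI{N}$ up to redundancy \wrt\ $\IPGInf$ with term order $\succ_\mapIonly$ and the transported selection function follows from Lemma~\ref{lem:PG:saturated}, which is where the hypothesis that $N$ is saturated up to redundancy \wrt\ $\PGRedI$ is used. Finally, $\mapI{N}$ contains no closure of the form $\bot\closure\rho$: since $\mapIonly$ acts on each literal of a clause individually, $\mapIonly_\theta(C) = \bot$ forces $C = \bot$, so $\bot\closure\rho \in \mapI{N}$ would entail $\bot\closure\theta \in N$ for some $\theta$, contradicting the hypothesis on $N$.

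With these in place, Theorem~\ref{thm:IPG:model-for-varirred-closures} gives $\III^\levelIPG \models \irred_{R}(\mapI{N})$ with $R = \Rbasic_{\mapF{\mapI{N}}}$, the rewrite system appearing in our statement. It then remains to identify $\irred_R(\mapI{N})$ with $\mapI{\irred_R(N)}$. This is pure unfolding of definitions: by Definition~\ref{def:PG:irred} a closure $C\closure\theta \in N$ lies in $\irred_R(N)$ iff $\mapF{\mapI{C\closure\theta}}$ is variable-irreducible \wrt\ $R$ and $\succ_\mapFonly$, which by Definition~\ref{def:IPG:irred} is precisely the condition for $\mapI{C\closure\theta}$ to lie in $\irred_R(\mapI{N})$; combined with the bijectivity of $\mapIonly$ on ground clauses, this yields $\mapI{\irred_R(N)} = \irred_R(\mapI{N})$ and finishes the argument.

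I do not expect a real obstacle here; the theorem essentially repackages the $\levelPG \to \levelIPG$ transfer already set up in Section~\ref{ssec:redundancy-criteria-and-saturation}. The one point that needs care is the selection function: Lemma~\ref{lem:PG:saturated} delivers saturation only \wrt\ the choice-dependent selection function $\mapI{\mathit{pgsel},N}$ of Definition~\ref{def:PG:sel-transfer}, so Theorem~\ref{thm:IPG:model-for-varirred-closures} must be invoked with that same selection function --- harmless, since that theorem is stated for an arbitrary selection function and both $\III^\levelIPG$ and $R$ depend on the selection only through the saturation hypothesis.
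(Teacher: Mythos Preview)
Your proposal is correct and follows exactly the paper's approach: the paper's own proof is the single line ``This follows from Theorem~\ref{thm:IPG:model-for-varirred-closures} and Lemma~\ref{lem:PG:saturated},'' and you have simply unpacked the implicit steps (checking the hypotheses, identifying $\irred_R(\mapI{N})$ with $\mapI{\irred_R(N)}$ via Definitions~\ref{def:IPG:irred} and~\ref{def:PG:irred}, and tracking the selection function). One small inaccuracy worth noting: your final remark that $\III^\levelIPG$ and $R$ depend on the selection function ``only through the saturation hypothesis'' is not quite right---$R_N$ depends on it directly via condition~\ref{cond:deltac-no-sel} in Definition~\ref{def:RbasicN}---but this does not affect the argument, since the $\levelIPG$-level construction and theorem are parametric in the selection function and you correctly instantiate them with $\mapI{\mathit{pgsel},N}$.
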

\begin{proof}
This follows from
Theorem~\ref{thm:IPG:model-for-varirred-closures}
and Lemma~\ref{lem:PG:saturated}.
\end{proof}

\begin{full}
\begin{lem}\label{lem:PG:irred-entails-all}
  Let $R$ be a confluent term rewrite system on $\termsPF$ oriented by $\succ_{\mapIonly\mapFonly}$
  whose only Boolean normal forms are $\itrue$ and $\ifalse$.
  Let $N\subseteq\clausesPG$ be a clause set without parameters
  such that for every $C\closure\theta \in N$
  and every grounding substitution $\rho$ that
  coincides with $\theta$ on all variables not occurring in $C$, 
  we have $C\closure \rho\in N$.
  Then $R \cup \mapF{\mapI{\irred_R(N)}} \modelsolam \mapF{\mapI{N}}$.
\end{lem}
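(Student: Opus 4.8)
The plan is to reduce the statement to Lemma~\ref{lem:IPG:irred-entails-all}, applied to the clause set $\mapI{N} \subseteq \clausesIPG$. The role of the parameter-freeness hypothesis is to make the indexing map $\mapIonly$ transparent on the clauses of $N$: since no clause $C$ of $N$ contains a constant with parameters, the recursion of Definition~\ref{def:PG:mapI} never enters the $k>0$ case, so $\mapIonly_\theta(C)$ does not depend on $\theta$ and, after the identification of Convention~\ref{convention:indexing-subscripts}, equals $C$. Hence $\mapI{C\closure\theta} = C\closure\mapI{\theta}$ for every $C\closure\theta \in N$, and therefore $\mapI{N} = \{C\closure\mapI{\theta} \mid C\closure\theta \in N\}$.

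First I would check that $\mapI{N}$ satisfies the hypothesis of Lemma~\ref{lem:IPG:irred-entails-all}. Let $C\closure\mapI{\theta} \in \mapI{N}$ and let $\sigma$ be a grounding substitution on $\termsIPG$ that agrees with $\mapI{\theta}$ on every variable not occurring in $C$. Since $\mapIonly$ is a bijection on ground terms, $\sigma' := \mapIonly^{-1}\circ\sigma$ is a well-defined grounding substitution on $\termsPG$, and for each variable $x$ not occurring in $C$ we have $x\sigma' = \mapIonly^{-1}(x\sigma) = \mapIonly^{-1}(\mapI{x\theta}) = x\theta$. By the hypothesis on $N$, $C\closure\sigma' \in N$, hence $C\closure\sigma = \mapI{C\closure\sigma'} \in \mapI{N}$, as required. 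Applying Lemma~\ref{lem:IPG:irred-entails-all} to $\mapI{N}$ — noting that the order it requires, namely $(\succ_\mapIonly)_\mapFonly$, is exactly $\succ_{\mapIonly\mapFonly}$, so $R$ is oriented as needed — yields $R \cup \mapF{\irred_R(\mapI{N})} \modelsolam \mapF{\mapI{N}}$.

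It then remains to observe that $\irred_R(\mapI{N}) = \mapI{\irred_R(N)}$. By Definition~\ref{def:PG:irred}, a closure $C\closure\theta \in N$ lies in $\irred_R(N)$ exactly when $\mapF{\mapI{C\closure\theta}}$ is variable-irreducible \wrt\ $R$; by Definition~\ref{def:IPG:irred}, the closure $\mapI{C\closure\theta} \in \mapI{N}$ lies in $\irred_R(\mapI{N})$ under the very same condition. Since $\mapIonly$ maps $N$ onto $\mapI{N}$, these two sets of irreducible closures agree. Substituting this equality into the conclusion of the previous paragraph gives $R \cup \mapF{\mapI{\irred_R(N)}} \modelsolam \mapF{\mapI{N}}$, which is the claim.

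I expect the only genuinely delicate points to be the transparency of $\mapIonly$ on parameter-free clauses and the translation of the regrounding hypothesis through $\mapIonly^{-1}$; everything else is routine bookkeeping with the bijections $\mapIonly$ and $\mapFonly$ and with the matching of orders and of the variable-irreducibility notions across the levels. Were the parameter-freeness assumption dropped, $\mapIonly_\theta(C)$ would depend on $\theta$ via the substitution into parameters, and the identity $\mapI{N} = \{C\closure\mapI{\theta}\mid C\closure\theta\in N\}$ — on which the whole reduction rests — would fail; this is precisely why the lemma restricts to clause sets without parameters.
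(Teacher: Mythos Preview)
Your proof is correct and follows essentially the same approach as the paper: reduce to Lemma~\ref{lem:IPG:irred-entails-all} applied to $\mapI{N}$, transfer the regrounding hypothesis via $\mapIonly^{-1}$, and use parameter-freeness to ensure $\mapIonly_{\theta}(C)$ is independent of $\theta$. You are in fact more careful than the paper, which leaves the step $\irred_R(\mapI{N}) = \mapI{\irred_R(N)}$ implicit; your explicit verification of this from Definitions~\ref{def:IPG:irred} and~\ref{def:PG:irred} is welcome.
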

\begin{proof}
We apply Lemma~\ref{lem:IPG:irred-entails-all}.
The required condition on $\mapI{N}$
can be derived from this lemma's condition on $N$ as follows.
We must show that
for every $C\closure\theta \in \mapI{N}$ and every grounding substitution $\rho$ that
coincides with $\theta$ on all variables not occurring in $C$,
we have $C\closure \rho\in \mapI{N}$.
The closure $C\closure\theta \in \mapI{N}$ must be of the form $\mapI{C'\closure\theta'}$
with $C'\closure\theta' \in N$.
Define $\rho'$ as $x\rho' =\mapIonly^{-1}(x\rho)$ for all $x$.
By this lemma's condition on $N$,
it follows that $C'\closure \rho'\in N$.
and thus $C\closure \rho = \mapI{C'\closure \rho'} \in \mapI{N}$.
Here, it is crucial that $N$ does not contain parameters because only this guarantees
that $C = \mapIonly_{\theta'}(C') = \mapIonly_{\rho'}(C')$.
\end{proof}
\end{full}

\subsubsection{Full Higher-Order Level}

In this subsubsection, let $\succ$ be an admissible term order
(Definition~\ref{def:admissible-term-order}),
\begin{slim}which by Lemma~\ref{lem:G:admissible-term-order} is also an admissible term order for $\PGInf$,\end{slim}%
\begin{full}extended to be an admissible term order for $\PGInf$ as in Section~\ref{ssec:H:redundancy},\end{full}
and let
$\mathit{hsel}$ be a selection function \slimfull{on $\clausesH$}{} (Definition~\ref{def:lit-sel}).

\begin{defi}
  A \emph{derivation} is a finite or infinite sequence of sets $(N_i)_{i\geq 0}$
  such that $N_i \setminus N_{i+1} \subseteq \HRedC(N_{i+1})$ for all $i$.
  A derivation is called \emph{fair} if 
  all $\HInf$-inferences from clauses in $\bigcup_i\bigcap_{j\geq i} N_j$
  are contained in $\bigcup_i \HRedI(N_i)$.
\end{defi}

\begin{lem}\label{lem:H:red-criteria-properties}
  The redundancy criteria $\HRedC$ and $\HRedI$ fulfill the following properties,
  as stated by Waldmann et al.~\cite{waldmann-et-al-saturation-journal}:
  \begin{enumerate}
    \item[(R2)] if $N \subseteq N'$, then $\HRedC(N) \subseteq \HRedC(N')$ and $\HRedI(N) \subseteq \HRedI(N')$;
    \item[(R3)] if $N' \subseteq \HRedC(N)$, then $\HRedC(N)\subseteq \HRedC(N \setminus N')$
     and $\HRedI(N)\subseteq \HRedI(N \setminus N')$;
    \item[(R4)] if $\iota \in \HInf$ and $\concl(\iota) \in N$, then $\iota \in HRedI(N)$.
  \end{enumerate}
\end{lem}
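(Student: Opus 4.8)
The plan is to check (R2), (R3), (R4) directly against Definitions~\ref{def:H:RedC} and~\ref{def:H:RedI}, using only the following soft facts: for a fixed confluent rewrite system $R$ oriented by $\succ_{\mapIonly\mapFonly}$ with Boolean normal forms $\itrue,\ifalse$, the relation ``$R\cup O\modelsolam X$'' (read as: $X$ holds in every $\modelsolam$-model of $R\cup O$) is monotone and transitive in $O$; the maps $\mapGonly,\mapPonly,\mapIonly,\mapFonly$ distribute over unions of clause sets, so $\fipg{N}=\bigcup_{D\in N}\fipg{D}$ and $\fipg{\cdot}$, $\irred_R(\cdot)$ are monotone; and $\succ_{\mapIonly\mapFonly}$ on ground first-order clauses is well founded, being the multiset extension of the pullback of the well-founded ground order $\succ$. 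As the cited definitions do, I treat a closure $E\in\clausesPF$ as the ground clause $\mapT{E}$ whenever it occurs under $\prec$ or $\modelsolam$. Property (R2) is then immediate: if $N\subseteq N'$ then $\irred_R(\fipg{N})\subseteq\irred_R(\fipg{N'})$; for $\HRedC$, fixing $R$ and $C'\in\irred_R(\fipg{C})$, condition~\ref{cond:H:RedC:entailment} relative to $N$ upgrades to $N'$ by monotonicity of $\modelsolam$, while a witness $D\in N$ for condition~\ref{cond:H:RedC:subsumed} also lies in $N'$; for $\HRedI$, the set $O$ computed from $N$ is contained in the one from $N'$ with the same order cutoff, so the entailment survives.

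For (R4), let $\concl(\iota)=C_{m+1}\constraint{S_{m+1}}\in N$, fix a tuple $(\theta_1,\dots,\theta_{m+1})$ for which $\iota$ is rooted in $\FInf$, and fix a suitable $R$ with $C_{m+1}\closure\theta_{m+1}$ variable-irreducible. Rootedness forces $S_{m+1}\theta_{m+1}$ true (Definition~\ref{def:fol-inferences}), so $C_{m+1}\closure\theta_{m+1}\in\mapG{N}$ and hence $\fip{C_{m+1}\closure\theta_{m+1}}\in\fipg{N}$; by Definition~\ref{def:G:irred} it is variable-irreducible, so it lies in $\irred_R(\fipg{N})$, and by Lemma~\ref{lem:fo-eq-tfip} its $\mapTonly$-image is $\mapF{C_{m+1}\theta_{m+1}}$. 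If $\iota$ is a $\Diff$ inference, $O=\irred_R(\fipg{N})$ already contains this closure, so $R\cup O\modelsolam\mapF{C_{m+1}\theta_{m+1}}$ holds reflexively. Otherwise I invoke that the conclusion of every $\FInf$-inference is strictly $\succ_{\mapIonly\mapFonly}$-smaller than its main premise (a routine case check of the $\FInf$ rules using the order properties $(\text{O2})_{\levelPF}$--$(\text{O5})_{\levelPF}$; for $\FEqFact$ one uses that $s\succ t'$ follows from maximality of $s\ceq t$): with Lemma~\ref{lem:fo-eq-tfip} this gives $\mapF{C_{m+1}\theta_{m+1}}\prec_{\mapIonly\mapFonly}\mapF{C_m\theta_m}$, so $\fip{C_{m+1}\closure\theta_{m+1}}\in O$ and again the entailment holds reflexively.

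The substantial case is (R3). Fix $N'\subseteq\HRedC(N)$ and a suitable $R$. I would prove one auxiliary claim $P$ and read both halves off it: for every $D\in\HRedC(N)$ and every $G\in\irred_R(\fipg{D})$, either \textup{(a)} $R\cup\{E\in\irred_R(\fipg{N\setminus N'})\mid\mapT{E}\prec_{\mapIonly\mapFonly}\mapT{G}\}\modelsolam\mapT{G}$, or \textup{(b)} there are $D_1\in N\setminus N'$ and $G_1\in\irred_R(\fipg{D_1})$ with $D\sqsupset D_1$ and $\mapT{G_1}=\mapT{G}$. The proof of $P$ is a well-founded induction on the pair $(\mapT{G},D)$ ordered lexicographically by $(\succ_{\mapIonly\mapFonly},\sqsupset)$, which is well founded as a lexicographic product of well-founded orders. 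Applying Definition~\ref{def:H:RedC} to $D,G$ yields its condition~\ref{cond:H:RedC:entailment} or~\ref{cond:H:RedC:subsumed} relative to $N$. In the subsumption case with witness $D_0\in N$, $G_0\in\irred_R(\fipg{D_0})$, $\mapT{G_0}=\mapT{G}$, $D\sqsupset D_0$: if $D_0\in N\setminus N'$ then \textup{(b)} holds; if $D_0\in N'\subseteq\HRedC(N)$ then $(\mapT{G_0},D_0)$ is strictly lex-below $(\mapT{G},D)$, so the induction hypothesis on $D_0,G_0$ gives \textup{(a)} or \textup{(b)}, which transfers to $D,G$ via $\mapT{G_0}=\mapT{G}$ and transitivity of $\sqsupset$. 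In the entailment case, for any $E\in\irred_R(\fipg{N})$ with $\mapT{E}\prec_{\mapIonly\mapFonly}\mapT{G}$: if $E\in\irred_R(\fipg{N\setminus N'})$ it already sits in the set of~\textup{(a)}; otherwise, being variable-irreducible, $E$ lies only in $\fipg{D_E}$ for some $D_E\in N'$, so $(\mapT{E},D_E)$ is strictly lex-below $(\mapT{G},D)$ and the induction hypothesis on $D_E,E$ shows $\mapT{E}$ is entailed by $R$ together with $\{E'\in\irred_R(\fipg{N\setminus N'})\mid\mapT{E'}\prec_{\mapIonly\mapFonly}\mapT{G}\}$ (either via \textup{(a)} and monotonicity, or because \textup{(b)} supplies a same-$\mapTonly$-image closure from $N\setminus N'$); chaining these through the given entailment by transitivity of $\modelsolam$ produces \textup{(a)} for $D,G$. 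This proves $P$. The $\HRedC$-half of (R3) is $P$ with $D=C$, $G=C'$, where \textup{(a)} is condition~\ref{cond:H:RedC:entailment} and \textup{(b)} condition~\ref{cond:H:RedC:subsumed}, both relative to $N\setminus N'$. For the $\HRedI$-half, $P$ shows that every $E\in\irred_R(\fipg{N})$ is entailed by $R$ together with $\{E'\in\irred_R(\fipg{N\setminus N'})\mid\mapT{E'}\preceq_{\mapIonly\mapFonly}\mapT{E}\}$; substituting this into the $\HRedI(N)$-witness for $\iota$ converts the set $O$ for $N$ into the one for $N\setminus N'$ (the $\preceq$-bounded replacements respect the cutoff $\prec_{\mapIonly\mapFonly}\mapF{C_m\theta_m}$, or its absence for $\Diff$), and transitivity of $\modelsolam$ closes the argument.

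The main obstacle I anticipate is designing the induction in (R3) so that the two alternatives of Definition~\ref{def:H:RedC} do not cause circularity: a chain of clauses each subsuming the next is tamed by well-foundedness of $\sqsupset$ but not by $\prec_{\mapIonly\mapFonly}$, since a clause and its subsumer can share their $\mapTonly$-image, whereas a chain of clauses each entailed by smaller ones is tamed by $\prec_{\mapIonly\mapFonly}$; the lexicographic measure $(\mapT{G},D)$ is exactly what makes every recursive call strictly decrease, either shrinking $\mapT{G}$ or fixing it and shrinking $D$ under $\sqsupset$. Getting this bookkeeping right---together with the entirely routine verification for (R4) that $\FInf$-conclusions are $\succ_{\mapIonly\mapFonly}$-smaller than their main premises---is where the care goes; the rest is monotonicity and transitivity of $\modelsolam$.
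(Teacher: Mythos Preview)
Your proposal is correct and follows essentially the same strategy as the paper. For (R2) and (R4) the arguments coincide (the paper also does the routine case check over all $\FInf$ rules to show conclusions are $\succ_{\mapIonly\mapFonly}$-smaller than main premises). For (R3), the paper defines a single relation $C\closure\theta\blacktriangleright D\closure\rho$ iff $C\theta\succ D\rho$ or ($C\theta=D\rho$ and $C\sqsupset D$), reformulates $\HRedC$ equivalently as a single entailment bounded by $\blacktriangleleft$, and then runs the well-founded induction on $\blacktriangleleft$; your lexicographic measure $(\mapT{G},D)$ under $(\succ_{\mapIonly\mapFonly},\sqsupset)$ is exactly this $\blacktriangleright$ in different clothing, and your auxiliary claim $P$ with alternatives (a)/(b) is just the two $\HRedC$ cases kept separate rather than folded into $\blacktriangleleft$. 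The paper then derives the $\HRedI$-half of (R3) from the $\HRedC$-half rather than from the shared claim $P$, but this is a purely organizational difference.
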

\begin{proof}

  (R2): This is obvious by definition of clause and inference redundancy.
  
  (R3) for clauses:

  Define $\blacktriangleright$ as a relation on sets of 
  \begin{full}closures\end{full}
  \begin{slim}pairs of a clause $C \in \clausesH$ and a grounding substitution $\theta$, written\end{slim}
  $C\closure\theta$,
  \begin{full}where $C \in \clausesH$\end{full}
  as
  \[C \closure \theta \blacktriangleright D \closure \rho \text{\quad iff\quad 
  $C\theta \succ D\rho$ or ($C\theta = D\rho$ and $C \sqsupset D$)}\]

  Clearly, for all $C \in \clausesH$ and all $N \subseteq \clausesH$,
  we have $C\in\HRedC(N)$ 
  if and only if for
  \begin{full}
  all  confluent term rewrite systems $R$ on $\termsF$ oriented by $\succ$
  whose only Boolean normal forms are $\itrue$ and $\ifalse$
  and
  \end{full}
  all \slimfull{grounding substitutions $\theta$}{$C \closure \theta \in \irred_R(\mapG{C})$}, we have
  \[\slimfull{}{R \cup} \{\mapF{E\zeta} \mid \slimfull{E \in N,\ \zeta\text{ grounding, and }}{E\closure\zeta \in \irred_R(\mapG{N}) \text{ and }} E\closure\zeta \blacktriangleleft C\closure\theta\}\modelsolam \mapF{C\theta}\]

  Now we are ready to prove (R3).
  Let $C\in\HRedC(N)$. We must show that $C \in \HRedC(N\setminus N')$.
  \begin{full}
  Let $R$ be a confluent term rewrite system on $\termsPF$ oriented by $\succ$
  whose only Boolean normal forms are $\itrue$ and $\ifalse$.
  Let $C\closure\theta \in \irred_R(\fipg{C})$.
  \end{full}%
  \begin{slim}Let $\theta$ be a grounding substitution.\end{slim}
  We must show that
  \[\slimfull{}{R \cup} \{\mapF{E\zeta} \mid \slimfull{E \in N\setminus N',\ \zeta\text{ grounding, and }}{E\closure\zeta \in \irred_R(\mapG{N\setminus N'}) \text{ and }} E\closure\zeta \blacktriangleleft C\closure\theta\}\modelsolam \mapF{C\theta}\]
  Since $C\in\HRedC(N)$, we know that
  \[\slimfull{}{R \cup} \{\mapF{E\zeta} \mid \slimfull{E \in N,\ \zeta\text{ grounding, and }}{E\closure\zeta \in \irred_R(\mapG{N}) \text{ and }} E\closure\zeta \blacktriangleleft C\closure\theta\}\modelsolam \mapF{C\theta}\]
  So it suffices to show that 
  \begin{align*}
  & \slimfull{}{R \cup} \{\mapF{E\zeta} \mid \slimfull{E \in N\setminus N',\ \zeta\text{ grounding, and }}{E\closure\zeta \in \irred_R(\mapG{N\setminus N'}) \text{ and }} E\closure\zeta \blacktriangleleft C\closure\theta\}\\[-\jot]
  {\modelsolam}\; & \slimfull{}{R \cup} \{\mapF{E\zeta} \mid \slimfull{E \in N,\ \zeta\text{ grounding, and }}{E\closure\zeta \in \irred_R(\mapG{N}) \text{ and }} E\closure\zeta \blacktriangleleft C\closure\theta\}
  \end{align*}
  Let \slimfull{$E_0\in N$ and $\zeta_0$ grounding}{$E_0\closure\zeta_0 \in \irred_R(\mapG{N})$}
  with $E_0\closure\zeta_0 \blacktriangleleft C\closure\theta$.
  We will show by well-founded induction on $E_0\closure\zeta_0$ \wrt\ $\blacktriangleleft$  that
  \[\slimfull{}{R \cup} \{\mapF{E\zeta} \mid \slimfull{E \in N\setminus N',\ \zeta\text{ grounding, and }}{E\closure\zeta \in \irred_R(\mapG{N\setminus N'}) \text{ and }} E\closure\zeta \blacktriangleleft C\closure\theta\}\modelsolam
  \mapF{E_0\zeta_0} \tag{$*$}\]
  Our induction hypothesis states:
  \begin{align*}
    & \slimfull{}{R \cup} \{\mapF{E\zeta} \mid \slimfull{E \in N\setminus N',\ \zeta\text{ grounding, and }}{E\closure\zeta \in \irred_R(\mapG{N\setminus N'}) \text{ and }} E\closure\zeta \blacktriangleleft C\closure\theta\}\\[-\jot]
    {\modelsolam}\; & \{ \mapF{E\zeta} \mid \slimfull{E \in N,\ \zeta\text{ grounding, and }}{E\closure\zeta \in \irred_R(\mapG{N}) \text{ and }} E\closure\zeta \blacktriangleleft E_0\closure\zeta_0\}
  \end{align*}
If 
\begin{full}$E_0\closure\zeta_0 \in \irred_R(\mapG{N\setminus N'})$,\end{full}%
\begin{slim}$E_0 \in N\setminus N'$,\end{slim}
the claim ($*$) is obvious.
So we may assume that
\begin{full}$E_0\closure\zeta_0 \in \irred_R(\mapG{N'})$.\end{full}%
\begin{slim}$E_0 \in N'$.\end{slim}
The assumption of (R3) states $N' \subseteq \HRedC(N)$, and thus
we have
\[\slimfull{}{R \cup} \{\mapF{E\zeta} \mid \slimfull{E \in N,\ \zeta\text{ grounding, and }}{E\closure\zeta \in \irred_R(\mapG{N}) \text{ and }} E\closure\zeta \blacktriangleleft E_0\closure\zeta_0\}\modelsolam \mapF{E_0\zeta_0}\]
By the induction hypothesis, this implies ($*$).

(R3) for inferences:

Inspecting this definition of $\HRedI$ (Definition~\ref{def:H:RedI}), we observe that
to show that $\HRedI(N)\subseteq \HRedI(N \setminus N')$,
it suffices to prove that
\begin{gather*}
\slimfull{}{R \cup }\{E \in \slimfull{\mapF{\mapG{N\setminus N'}}}{\irred_R(\fipg{N\setminus N'})} \mid E \prec_{\slimfull{}{\mapIonly}\mapFonly} \mapF{C_m\theta_m}\}\\
\modelsolam\\
\slimfull{}{R \cup }\{E \in \slimfull{\mapF{\mapG{N}}}{\irred_R(\fipg{N})} \mid E \prec_{\slimfull{}{\mapIonly}\mapFonly} \mapF{C_m\theta_m}\}\\
\end{gather*}
(possibly without the condition $E \prec_{\slimfull{}{\mapIonly}\mapFonly} \mapF{C_m\theta_m}$ for $\Diff$ inferences),
where $C_m$\slimfull{ and $\theta_m$}{, $\theta_m$, and $R$} are given in the definition of $\HRedI$.
We can equivalently write this as
\begin{align*}
  & \slimfull{}{R \cup }\{\mapF{E\zeta} \mid \slimfull{E \in N\setminus N',\ \zeta\text{ grounding, and }}{E\closure\zeta \in \irred_R(\mapG{N\setminus N'}) \text{ and }} E\zeta \prec C_m\theta_m\}\\[-\jot]
  \modelsolam\; & \{\mapF{E\zeta} \mid \slimfull{E \in N,\ \zeta\text{ grounding, and }}{E\closure\zeta \in \irred_R(\mapG{N}) \text{ and }} E\zeta \prec C_m\theta_m\}
\end{align*}

Let \slimfull{$E_0\in N$ and $\zeta_0$ grounding}{$E_0\closure\zeta_0 \in \irred_R(\mapG{N})$} with $E_0\zeta_0 \prec C_m\theta_m$.
We must show that 
\[  \slimfull{}{R \cup }\{\mapF{E\zeta} \mid \slimfull{E \in N\setminus N',\ \zeta\text{ grounding, and }}{E\closure\zeta \in \irred_R(\mapG{N\setminus N'}) \text{ and }} E\zeta \prec C_m\theta_m\}
\modelsolam \mapF{E_0\zeta_0} \tag{$\dagger$}\]

If \begin{full}$E_0\closure\zeta_0 \in \irred_R(\mapG{N\setminus N'})$,\end{full}%
\begin{slim}$E_0 \in N\setminus N'$,\end{slim}
the claim ($\dagger$) is obvious.
So we may assume that
\begin{full}$E_0\closure\zeta_0 \in \irred_R(\mapG{N'})$.\end{full}%
\begin{slim}$E_0 \in N'$.\end{slim}
The assumption of (R3) states $N' \subseteq \HRedC(N)$, and thus
$N' \subseteq\HRedC(N \setminus N')$ by (R3) for clauses.
So we have
\[ \slimfull{}{R \cup }\{\mapF{E\zeta} \mid \slimfull{E \in N\setminus N',\ \zeta\text{ grounding, and }}{E\closure\zeta \in \irred_R(\mapG{N\setminus N'}) \text{ and }} E\closure\zeta \blacktriangleleft E_0\closure\zeta_0\}\modelsolam \mapF{E_0\zeta_0}\]
This implies ($\dagger$) because 
for any $E\closure\zeta$ with $E\closure\zeta \blacktriangleleft E_0\closure\zeta_0$,
we have $E\zeta \preceq E_0\zeta_0 \prec C_m\theta_m$.

(R4)
Let $\iota \in \HInf$ with $\concl(\iota) \in N$.
We must show that $\iota \in \HRedI(N)$.
Let $C_1\slimfull{}{\constraint{S_1}}$, \dots, $C_m\slimfull{}{\constraint{S_m}}$ be $\iota$'s premises and
$C_{m+1}\slimfull{}{\constraint{S_{m+1}}}$ its conclusion.
Let $\theta_1, \dots, \theta_{m+1}$ be a tuple of substitutions
for which $\iota$ is rooted in $\FInf{}$ (Definition~\ref{def:fol-inferences}).
\begin{full}
Let $R$ be a confluent term rewrite systems $R$ oriented by $\succ_{\mapIonly\mapFonly}$
whose only Boolean normal forms are $\itrue$ and $\ifalse$
such that $C_{m+1}\closure\theta_{m+1}$ is variable-irreducible. 
\end{full}
According to the definition of $\HRedI$ (Definition~\ref{def:H:RedI}),
we must show that
\begin{full}
\[R \cup O \modelsolam \mapF{C_{m+1}\theta_{m+1}}\]
where 
$O = \irred_R(\fipg{N})$
if $\iota$ is a $\Diff$ inference and
$O = \{E \in \irred_R(\fipg{N}) \mid E \prec_{\mapIonly\mapFonly} \mapF{C_m\theta_m}\}$
if $\iota$ is some other inference.
\end{full}%
\begin{slim}
\begin{itemize}
  \item $\mapF{\mapG{N}}\modelsolam \mapF{C_{m+1}\theta_{m+1}}$ if $\iota$ is a $\Diff$ inference; and
  \item $\{E \in \mapF{\mapG{N}} \mid E \prec_{\mapFonly} \mapF{C_m\theta_m}\}\modelsolam \mapF{C_{m+1}\theta_{m+1}}$ if $\iota$ is some other inference.
\end{itemize}
\end{slim}

Since $\concl(\iota) \in N$ and $\concl(\iota) = C_{m+1}\slimfull{}{\constraint{S_{m+1}}}$, we
have $C_{m+1}\slimfull{}{\constraint{S_{m+1}}} \in N$.
\begin{full}
Thus, by Lemma~\ref{lem:fo-eq-tfip},
$\mapF{C_{m+1}\theta_{m+1}} \in \fipg{N}$.
Since $C_{m+1}\closure\theta_{m+1}$ is variable-irreducible,
we have 
$\mapF{C_{m+1}\theta_{m+1}} \in \irred_R(\fipg{N})$.
\end{full}%
\begin{slim}
Thus, $\mapF{C_{m+1}\theta_{m+1}} \in \mapF{\mapG{N}}$.
\end{slim}
This completes the proof for
$\Diff$ inferences
because
$\mapF{C_{m+1}\theta_{m+1}} \modelsolam \mapF{C_{m+1}\theta_{m+1}}$.
For the other inferences, it remains to prove that
$\mapF{C_{m+1}\theta_{m+1}} \prec_{\slimfull{}{\mapIonly}\mapFonly} \mapF{C_m\theta_m}$.

By Definition~\ref{def:fol-inferences},
$\mapF{C_m\theta_m}$ is the main premise and
$\mapF{C_{m+1}\theta_{m+1}}$ is the conclusion of
an $\FInf$ inference.
We will show for each $\FInf$ rule that the conclusion is smaller than the main premise.

\begin{slim}
By Lemma~\ref{lem:PG:mapI-mapF-fol},
$\succ_{\mapIonly\mapFonly} = \succ_{\mapFonly}$.
By Lemmas~\ref{lem:PG:succ-transfer}
and~\ref{lem:IPG:succ-transfer}, it follows that $\succ_\mapFonly$ is admissible for $\FInf$.
\end{slim}

For $\FSup$,
we must argue that $\subterm{C}{t} \succ_{\slimfull{}{\mapIonly}\mapFonly} D'\llor\subterm{C}{t'}$.
Since the literal $t \ceq t'$ is strictly eligible in $D$
and if $t'$ is Boolean, then $t' = \itrue$,
the literal $t \ceq t'$ is strictly maximal in $D$.
Since the position of $t$ is eligible in $C[t]$,
it must either occur in a negative literal, 
in a literal of the form $t \ceq \ifalse$, or in a maximal literal in $C[t]$.
If the position of $t$ is in a negative literal or in a literal of the form $t \ceq \ifalse$, then
that literal is larger than $t \ceq t'$
because if $t'$ is Boolean, then $t' = \itrue$.
Thus, the literal in which $t$ occurs in $C[t]$ is larger than $D'$ because $t \ceq t'$ is
strictly maximal in $D$.
If the position of $t$ is in a  maximal literal of $C[t]$,
then that literal is larger than or equal to $t \ceq t'$
because $D \prec_{\slimfull{}{\mapIonly}\mapFonly} \subterm{C}{t}$,
and thus it is larger than $D'$ as well.
In $\subterm{C}{t'}$, this literal is replaced by a smaller literal
because $t \succ_{\slimfull{}{\mapIonly}\mapFonly} t'$.
So $\subterm{C}{t} \succ_{\slimfull{}{\mapIonly}\mapFonly} D'\llor\subterm{C}{t'}$.

For $\FEqRes$, clearly, $C' \llor u \cneq u \succ_{\slimfull{}{\mapIonly}\mapFonly} C'$.

For $\FEqFact$, we have $u \ceq v \succeq_{\slimfull{}{\mapIonly}\mapFonly} u \ceq v'$ and thus $v \succeq_{\slimfull{}{\mapIonly}\mapFonly} v'$.
Since $u \succ_{\slimfull{}{\mapIonly}\mapFonly} v$, we have $u \ceq v \succ_{\slimfull{}{\mapIonly}\mapFonly} v \cneq v'$ and thus
the premise is larger than the conclusion.

For $\FClausify$, it is easy to see that for any of the listed
values of $s$, $t$, and $D$, we have $s \ceq t \succ_{\slimfull{}{\mapIonly}\mapFonly} D$,
using \ref{cond:PF:order:subterm} and \ref{cond:PF:order:t-f-minimal}.
Thus the premise is larger than the conclusion.

For $\FBoolHoist$ and $\FLoobHoist$,
we have $u \succ_{\slimfull{}{\mapIonly}\mapFonly} \ifalse$ and $u \succ_{\slimfull{}{\mapIonly}\mapFonly} \itrue$
by \ref{cond:PF:order:t-f-minimal}
because $u \ne \ifalse$ and $u \ne \itrue$.
Moreover, the occurrence of $u$ in $\subterm{C}{u}$
is required not to be in a literal of the form
$u \ceq \ifalse$ or $u \ceq \itrue$, 
and thus, by \ref{cond:PF:order:t-f-minimal},
it must be in a literal larger than these.
It follows that the premise is larger than the conclusion.

For $\FFalseElim$, clearly, $C' \llor \ifalse \ceq \itrue \succ_{\slimfull{}{\mapIonly}\mapFonly} C'$.

For $\FArgCong$, the premise is larger than the conclusion by \ref{cond:PF:order:ext}.

For $\FExt$, we use the condition that $u \succ_{\slimfull{}{\mapIonly}\mapFonly} w$ and \ref{cond:PF:order:subterm}
to show that $\subterm{C}{\mapF{w}}$ is smaller than the premise.
We use $u \succ_{\slimfull{}{\mapIonly}\mapFonly} w$ and \ref{cond:PF:order:ext} to show that
$\mapF{u\>\diff\typeargs{\tau,\upsilon}(u,w)} \cneq \mapF{w\>\diff\typeargs{\tau,\upsilon}(u,w)}$
is smaller than the premise.
\end{proof}

\begin{full}
\begin{lem}\label{lem:G:irred-entails-all}
  Let $R$ be a confluent term rewrite system on $\termsPF$ oriented by $\succ_{\mapIonly\mapFonly}$
  whose only Boolean normal forms are $\itrue$ and $\ifalse$.
  Let $N\subseteq\clausesG$ be a clause set without parameters
  such that for every $C\closure\theta \in N$
  and every grounding substitution $\rho$,
  we have $C\closure \rho\in N$.
  Then $R \cup \mapF{\mapI{\mapP{\irred_R(N)}}} \modelsolam \mapF{\mapI{\mapP{N}}}$.
\end{lem}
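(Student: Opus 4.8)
The plan is to derive the statement from Lemma~\ref{lem:PG:irred-entails-all} applied to the clause set $\mapP{N} \subseteq \clausesPG$. So the first task is to check that $\mapP{N}$ satisfies the hypotheses of that lemma. Every closure in $\mapP{N}$ has the form $\mapP{C\closure\theta} = C\mapp{\theta}\closure\mapq{\theta}$ for some $C\closure\theta \in N$, and since $N$ is without parameters---so that both the clause $C$ and the images $y\theta$ are parameter-free---and since $\mapponly$ and $\mapqonly$ produce only subterm-patterns of the images of $\theta$ (replacing certain nonfunctional yellow subterms by fresh variables; cf.\ Lemma~\ref{lem:G:mapp-most-general}), neither $C\mapp{\theta}$ nor any image of $\mapq{\theta}$ introduces a constant with parameters. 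Hence $\mapP{N}$ is again without parameters.

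For the closure-completeness hypothesis, let $D\closure\eta \in \mapP{N}$, say $D\closure\eta = \mapP{C\closure\theta}$ with $C\closure\theta \in N$, so $D = C\mapp{\theta}$ and $\eta = \mapq{\theta}$, and let $\rho$ be a grounding substitution coinciding with $\eta$ on all variables not occurring in $D$. I would set $\theta' = \mapp{\theta}\rho$; this is grounding, so $C\closure\theta' \in N$ by the hypothesis of the present lemma. By Lemma~\ref{lem:G:mapp-mapp}, $\mapp{\theta'} = \mapp{\theta}$, hence $C\mapp{\theta'} = D$. Since $\theta$ is grounding, the only variables of $D = C\mapp{\theta}$ are the fresh ones introduced by $\mapponly$, so ``not occurring in $D$'' subsumes ``not introduced by $\mapp{\theta}$'', and the assumption on $\rho$ therefore gives exactly the premise of Lemma~\ref{lem:G:mapq-mapp}, yielding $\mapq{\theta'} = \rho$. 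Thus $\mapP{C\closure\theta'} = D\closure\rho$, so $D\closure\rho \in \mapP{N}$, and the hypothesis holds.

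Lemma~\ref{lem:PG:irred-entails-all} then gives $R \cup \mapF{\mapI{\irred_R(\mapP{N})}} \modelsolam \mapF{\mapI{\mapP{N}}}$, and it remains only to identify the two families of variable-irreducible closures. By Definition~\ref{def:PG:irred}, a closure $D \in \mapP{N}$ lies in $\irred_R(\mapP{N})$ iff $\mapF{\mapI{D}}$ is variable-irreducible; writing $D = \mapP{C\closure\theta}$, this is by Definition~\ref{def:G:irred} exactly the condition $C\closure\theta \in \irred_R(N)$, so $D \in \mapP{\irred_R(N)}$. Hence $\irred_R(\mapP{N}) \subseteq \mapP{\irred_R(N)}$, and enlarging the hypothesis side of the entailment yields $R \cup \mapF{\mapI{\mapP{\irred_R(N)}}} \modelsolam \mapF{\mapI{\mapP{N}}}$, as desired. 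The one step I expect to need genuine care is the parameter-freeness of $\mapP{N}$: it rests on the precise meaning of ``without parameters'' for a closure set together with the structural description of $\mapponly$ and $\mapqonly$; the rest is routine bookkeeping chaining the algebraic identities for $\mapponly$ and $\mapqonly$ (Lemmas~\ref{lem:G:mapp-comp-mapq}, \ref{lem:G:mapp-mapp}, \ref{lem:G:mapq-mapp}) with the level-$\levelG$ and level-$\levelPG$ definitions of variable-irreducibility.
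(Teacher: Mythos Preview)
Your approach is essentially the paper's: reduce to Lemma~\ref{lem:PG:irred-entails-all} applied to $\mapP{N}$, verify its closure-completeness hypothesis via $\theta' = \mapp{\theta}\rho$ and Lemmas~\ref{lem:G:mapp-mapp} and~\ref{lem:G:mapq-mapp}, and then identify $\irred_R(\mapP{N})$ with $\mapP{\irred_R(N)}$ through Definitions~\ref{def:PG:irred} and~\ref{def:G:irred}. The paper leaves the last step implicit; your spelling it out is helpful, and the inclusion you prove is in fact an equality.

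The one flaw is precisely the step you flagged as needing care. You write that ``$N$ without parameters'' makes the images $y\theta$ parameter-free, but that is not what the hypothesis gives you: only the clause parts $C$ are parameter-free, while $\theta$ is an arbitrary grounding substitution and may well introduce $\diff$-terms. Consequently $C\mapp{\theta}$ can contain parameters --- for instance, if $y\theta = \lambda\,\diff\typeargs{\iota,\iota\to\iota}(\cst{g},\cst{h})\,\DB{0}$, this term has no proper nonfunctional yellow subterm, so $y\mapp{\theta} = y\theta$ retains the $\diff$-parameters. Thus $\mapP{N}$ need not be literally parameter-free. What saves the argument is that any parameter appearing in $C\mapp{\theta}$ is necessarily \emph{ground}: orange positions never descend into parameters, so $\mapponly$ never places a fresh variable inside one. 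And the only use of parameter-freeness in the proof of Lemma~\ref{lem:PG:irred-entails-all} is to ensure $\mapIonly_{\theta}(D) = \mapIonly_{\rho}(D)$, which already holds whenever the parameters contain no variables. The paper's own proof glosses over this point entirely; your instinct that it needed care was right, but your resolution of it is not.
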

\begin{proof}
We apply Lemma~\ref{lem:PG:irred-entails-all}.
The required condition on $\mapI{N}$
can be derived from this lemma's condition on $N$ as follows.
We must show that
for every $C\closure\theta \in \mapP{N}$ and
every grounding substitution $\rho$ that
coincides with $\theta$ on all variables not occurring in $C$,
we have $C\closure \rho\in \mapP{N}$.
The closure $C\closure\theta \in \mapP{N}$ must be of the form $C'\mapp{\theta'}\closure\mapq{\theta'}$
with $C'\closure\theta' \in N$.
Define $\rho' = \mapp{\theta'}\rho$.
By this lemma's condition on $N$,
it follows that $C'\closure \rho'\in N$.
By Lemma~\ref{lem:G:mapp-mapp},
$\mapp{\rho'} = \mapp{\theta'}$.
We have
$y\rho = y\theta = y\mapq{\theta'}$ 
for all variables $y$ not occurring in $C$
and in particular
for all $y$ not introduced by $\mapp{\theta'}$.
Thus, by Lemma~\ref{lem:G:mapq-mapp},
$\mapq{\rho'} = \rho$.
So, $C\closure \rho = \mapP{C'\closure \rho'} \in \mapP{N}$.
\end{proof}
\end{full}

\begin{thm}\label{thm:H:completeness}
  Given a fair derivation 
  $(N_i)_{i\geq 0}$, where 
  \begin{enumerate}[label=\arabic*.,ref=\arabic*]
    \item \label{thm:H:completeness:unsat} $N_0$ does not have a term-generated model,
    \item \label{thm:H:completeness:no-params} $N_0$ does not contain parameters, \slimfull{}{and}
    \begin{full}\item \label{thm:H:completeness:no-constr} $N_0$ does not contain constraints,\end{full}
  \end{enumerate}
  we have $\bot\slimfull{}{\constraint{S}} \in N_i$ for \slimfull{}{some satisfiable constraints $S$ and} some index $i$.
\end{thm}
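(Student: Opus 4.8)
The plan is to run the standard saturation-completeness argument, threaded through the levels and encodings of the preceding sections. Let $N_\infty=\bigcup_i\bigcap_{j\ge i}N_j$ be the set of persistent clauses. By fairness, every $\HInf$-inference with premises in $N_\infty$ lies in $\bigcup_i\HRedI(N_i)$; together with properties (R2)--(R4) of Lemma~\ref{lem:H:red-criteria-properties} and the general saturation framework of Waldmann et al.~\cite{waldmann-et-al-saturation-journal}, this shows that $N_\infty$ is saturated up to redundancy \wrt\ $\HInf$ and $\HRedI$ and that $N_0\subseteq N_\infty\cup\HRedC(N_\infty)$. Assume, for contradiction, that no $N_i$ contains $\bot\constraint{S}$ with satisfiable $S$; then neither does $N_\infty$. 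It then suffices to construct a term-generated model of $N_0$, contradicting hypothesis~\ref{thm:H:completeness:unsat}.

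First I would descend through the encoding chain. Since $\mapPonly$ leaves the clause component of a closure untouched and $\bot\closure\theta\in\mapG{N_\infty}$ would force $\bot\constraint{S}\in N_\infty$ with $S\theta$ true, hence $S$ satisfiable, the set $\mapPG{N_\infty}$ contains no closure of the form $\bot\closure\theta$. By Lemma~\ref{lem:G:admissible-term-order}, $\succ$ is an admissible term order for $\PGInf$, so the Lifting Lemma (Lemma~\ref{lem:H:saturation}) gives that $\mapPG{N_\infty}$ is saturated up to redundancy \wrt\ $\PGInf^{\succ,\mapPG{\mathit{hsel},N_\infty}}$. Theorem~\ref{thm:PG:model-for-varirred-closures}, applied to this set with $R:=\Rbasic_{\mapF{\mapI{\mapPG{N_\infty}}}}$, produces a higher-order interpretation $\III^\levelIPG$ with $\III^\levelIPG\models\mapI{\irred_R(\mapPG{N_\infty})}$; by Theorem~\ref{thm:IPG:model-mapF-iff} and the definitions of the $\irred_R$ predicates this says that $R\modelsolam\irred_R(\fipg{N_\infty})$. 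Moreover $\III^\levelIPG$ is term-generated (Lemma~\ref{lem:IPG:term-generated}), and $R$ is confluent, oriented by $\succ_{\mapIonly\mapFonly}$, with $\itrue$ and $\ifalse$ as its only Boolean normal forms (Lemmas~\ref{lem:RN-oriented} and~\ref{lem:RN-bool-normal}).

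The crux is to upgrade this into a model of $N_0$. Fix $C\in N_0$ and a grounding $\theta$ for which the closure $C\closure\theta$ is variable-irreducible \wrt\ $R$; I would show that $R$ models $C':=\mapF{\mapI{\mapP{C\closure\theta}}}\in\irred_R(\fipg{C})$. If $C\in N_\infty$, then, as $C$ carries no constraints, $C\closure\theta\in\mapG{N_\infty}$, so $\mapP{C\closure\theta}\in\irred_R(\mapPG{N_\infty})$ and the previous paragraph applies. If $C\in\HRedC(N_\infty)$, then Definition~\ref{def:H:RedC} offers one of its two conditions for $C'$: condition~\ref{cond:H:RedC:entailment} states that $R\cup\{E\in\irred_R(\fipg{N_\infty})\mid E\prec_{\mapIonly\mapFonly}C'\}\modelsolam C'$, and since $R$ already models every clause in the antecedent, $R\models C'$; condition~\ref{cond:H:RedC:subsumed} produces $D\in N_\infty$ and $D'\in\irred_R(\fipg{D})\subseteq\irred_R(\fipg{N_\infty})$ with $\mapT{D'}=\mapT{C'}$, and because $R\models D'$ while the truth of a closure in $R$ depends only on its $\mapTonly$-image, again $R\models C'$. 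Hence $R\models\irred_R(\fipg{N_0})$. As $N_0$ has neither constraints nor parameters, $\mapG{N_0}$ satisfies the hypothesis of Lemma~\ref{lem:G:irred-entails-all}, which yields $R\modelsolam\fipg{N_0}$; translating back through the encoding chain (Lemmas~\ref{lem:fo-eq-tfip} and~\ref{lem:G:mapp-comp-mapq} and Theorem~\ref{thm:IPG:model-mapF-iff}), $\III^\levelIPG$ satisfies every ground instance of every clause of $N_0$, and being term-generated it is a model of $N_0$ --- the required contradiction.

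The step I expect to be hardest is making the third paragraph fully rigorous: tracking a clause of $N_0$ through $\mapGonly$, $\mapPonly$, $\mapIonly$, $\mapFonly$ while keeping straight which $\irred_R$ predicate lives on which level, checking that $\mapPonly$ commutes with $\irred_R$, and verifying that the concrete sets involved meet the hypotheses of Theorem~\ref{thm:PG:model-for-varirred-closures} and Lemma~\ref{lem:G:irred-entails-all}. The delicate substep is the subsumption case of Definition~\ref{def:H:RedC}, where one must argue that $\mapT{D'}=\mapT{C'}$ really transports $R$-truth from $D'$ to $C'$. A secondary point to settle is that a term-generated higher-order interpretation over $\SigmaI$ that satisfies all ground instances of the parameter-free set $N_0$ counts as a term-generated model of $N_0$ for hypothesis~\ref{thm:H:completeness:unsat}; this is routine once one observes that $N_0$ never mentions constants with parameters, so its evaluation under $\III^\levelIPG$ is insensitive to how $\III^\levelIPG$ is extended to such constants.
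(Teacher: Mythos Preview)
Your approach is essentially the same as the paper's: saturate $N_\infty$ via the Waldmann et al.\ framework, descend through the encodings to apply Theorem~\ref{thm:PG:model-for-varirred-closures}, use $N_0\subseteq N_\infty\cup\HRedC(N_\infty)$ together with Lemma~\ref{lem:G:irred-entails-all} to get $R\modelsolam\fipg{N_0}$, and then lift to a term-generated model of $N_0$ via Lemma~\ref{lem:IPG:term-generated} and the substitution lemma. Your case analysis on Definition~\ref{def:H:RedC} is in fact more explicit than the paper, which simply asserts $R\cup\irred_R(\fipg{N_\infty})\modelsolam\irred_R(\fipg{N_0})$ without unpacking the two cases; your treatment of the subsumption case via $\mapTonly$ is exactly the right idea.

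One small imprecision: $\mapPonly$ does \emph{not} leave the clause component untouched---by definition $\mapP{C\closure\theta}=C\mapp{\theta}\closure\mapq{\theta}$. Your conclusion still holds, since $\mapp{\theta}$ is a substitution and hence $C\mapp{\theta}=\bot$ forces $C=\bot$, but the stated reason is wrong. Also, the paper makes explicit one step you leave implicit: after obtaining $\III^\levelIPG\models\mapI{\mapP{\mapG{N_0}}}$, it observes that applying each closure's substitution yields the same ground clauses as $\mapI{\gnd(N_0)}$, and then uses the absence of parameters in $N_0$ to identify $C\theta$ with an element of $\mapI{\gnd(N_0)}$ (since $\mapIonly$ acts trivially on parameter-free terms). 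This is the precise content of your ``secondary point,'' and it is indeed routine.
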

\begin{proof}
By Lemma~9 of Waldmann et al.~\cite{waldmann-et-al-saturation-journal},
using Lemma~\ref{lem:H:red-criteria-properties},
the limit $N_\infty = \bigcup_i\bigcap_{j\geq i} N_j$
is saturated up to redundancy \wrt\ $\HInf$ and $\HRedI$.
By Lemma~\ref{lem:H:saturation},
$\slimfull{\mapG{N_\infty}}{\mapPG{N_\infty}}$ is saturated up to redundancy
\wrt\ $\PGInf$ and $\PGRedI$.

For a proof by contradiction,
assume that for
\begin{full}all $S$ and\end{full}
all $i$,
$\bot\slimfull{}{\constraint{S}} \not\in N_i$.
Then $N_\infty$ does not contain \slimfull{$\bot$}{such a clause $\bot\constraint{S}$} either,
and thus $\slimfull{\mapG{N_\infty}}{\mapPG{N_\infty}}$ does not contain \slimfull{$\bot$}{a clause of the form $\bot\closure\theta$ for any $\theta$}.
By 
Lemma~\ref{thm:PG:model-for-varirred-closures},
$\III^\levelIPG \models \slimfull{\mapI{\mapG{N_\infty}}}{\irred_{R}(\mapI{\mapPG{N_\infty}})}$%
\begin{full}, where $R = \Rbasic_{\fipg{N_\infty}}$\end{full}.

By Lemma~8 of Waldmann et al.~\cite{waldmann-et-al-saturation-journal},
using Lemma~\ref{lem:H:red-criteria-properties},
$N_0 \subseteq N_\infty \cup \HRedC(N_\infty)$.
Thus, 
\begin{full} $R \cup \irred_R(\fipg{N_\infty}) \modelsolam \irred_R(\fipg{N_0})$.\end{full}%
\begin{slim} $\mapF{\mapG{N_\infty}} \modelsolam \mapF{\mapG{N_0}}$.\end{slim}%
\begin{full}
By Lemma~\ref{lem:G:irred-entails-all} and
conditions \ref{thm:H:completeness:no-params}~and~\ref{thm:H:completeness:no-constr} from the present
theorem,
$R \cup \irred_R(\fipg{N_0}) \modelsolam \fipg{N_0}$
and thus $R \cup \irred_R(\fipg{N_\infty}) \modelsolam \fipg{N_0}$.
\end{full}
Since 
$\III^\levelIPG \models \slimfull{\mapI{\mapG{N_\infty}}}{\irred_{R}(\mapI{\mapPG{N_\infty}})}$,
by Lemma~\ref{thm:IPG:model-mapF-iff}\slimfull{ and Lemma~\ref{lem:PG:mapI-mapF-fol}}{},
it follows that
$\III^\levelIPG \models \mapI{\slimfull{\mapG{N_0}}{\mapP{\mapG{N_0}}}}$.

\begin{full}
If we applied each closure's substitution to its clause in the sets
$\mapI{\mapP{\mapG{N_0}}}$ and $\mapI{\gnd(N_0)}$,
the two sets would be identical.
So, since 
$\III^\levelIPG \models \mapI{\mapP{\mapG{N_0}}}$,
we have $\III^\levelIPG \models \mapI{\gnd(N_0)}$.
\end{full}

Now $\III^\levelIPG$ can be shown to be a model of $N_0$ as follows.
Let $C \in N_0$. Let $\xi$ be a valuation. Since $\III^\levelIPG$ is term-generated by Lemma~\ref{lem:IPG:term-generated},
there exists a substitution $\theta$ such that $\interpret{\alpha\theta}{\IIIty^\levelIPG}{} = \xity(\alpha)$
for all type variables $\alpha$ in $C$ and $\interpret{x\theta}{\III^\levelIPG}{} = \xite(x)$ for all term variables $x$ in $C$.
Since $C$ does not contain parameters by condition~\ref{thm:H:completeness:no-params} of this theorem, 
$C\theta \in \mapI{\slimfull{\mapG{N_0}}{\gnd(N_0)}}$. Thus we have $\III^\levelIPG \models C\theta$.
By Lemma~\ref{lem:subst-lemma-clause}, it follows that $C$ is true \wrt\ $\xi$ and $\III^\levelIPG$.
Since $\xi$ and $C \in N_0$ were arbitrary, we have $\III^\levelIPG \models N_0$.
This contradicts condition~\ref{thm:H:completeness:unsat} of the present theorem.

\end{proof}

\begin{lem}\label{lem:H:soundmodels-implies-models}
  Let $N$ be a clause set that does not contain $\cst{diff}$.
  If $N$ has a term-generated model, then $N$ has a $\cst{diff}$-aware model.
\end{lem}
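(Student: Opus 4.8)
The plan is to keep the type interpretation of $\III$ unchanged and to interpret $\diff$ as a meta-level choice of an argument witnessing disagreement between two functions. Write $\III = (\IIIty, \II, \LL)$ for the given term-generated proper model of $N$. Since each domain $\IIty(\fun)(\DD_1, \DD_2)$ is by definition a set of genuine set-theoretic functions, functional extensionality holds automatically in $\III$: distinct elements $c, d \in \IIty(\fun)(\DD_1, \DD_2)$ must differ on some argument. I would therefore define $\II'$ to agree with $\II$ on every symbol except $\diff$, and, for $\diff$ with domain arguments $\DD_1, \DD_2$ and element arguments $c, d \in \IIty(\fun)(\DD_1, \DD_2)$, set $\II'(\diff, (\DD_1, \DD_2), (c, d))$ to be some $a \in \DD_1$ with $c(a) \neq d(a)$ when $c \neq d$, and an arbitrary element of the nonempty domain $\DD_1$ otherwise.

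Next I would define a $\lambda$-designation function $\LL'$ simultaneously with the denotation $\interpret{\cdot}{\III'}{\cdot}$, by recursion on term size, so that properness holds by construction: for a $\lambda$-expression set $\interpret{\lambda\langle\tau\rangle\>t}{\III'}{(\xity, \xite)}$ to be the function $a \mapsto \interpret{t\dbsubst{x}}{\III'}{(\xity, \xite[x \mapsto a])}$ for a fresh variable $x$, and let $\LL'(\xi, \lambda\langle\tau\rangle\>t)$ be this same function. With $\III' = (\IIIty, \II', \LL')$, the properness equation then holds by definition, and the only genuine obligation is that this set-theoretic function really is an element of the Henkin domain $\IIty(\fun)(\interpret{\tau}{\IIIty}{\xity}, \interpret{\upsilon}{\IIIty}{\xity})$, so that $\LL'$ is well-defined.

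This well-definedness is the step I expect to be the main obstacle. I would discharge it by a simultaneous induction on term size establishing the following: for every $\lambda$-term $s$ and every valuation $\xi$ compatible with $s$, replacing every occurrence of a maximal subterm headed by $\diff$, say $D = \diff\typeargs{\bar\tau}(u_1, u_2)$, by a fresh variable $z_D$ of the type of $D$ yields a $\diff$-free $\lambda$-term $s^\circ$ with $\interpret{s}{\III'}{\xi} = \interpret{s^\circ}{\III}{\xi'}$, where $\xi'$ extends $\xi$ by mapping each $z_D$ to $\interpret{D}{\III'}{\xi}$. The enabling syntactic fact is that the parameters of a constant are locally closed and may not contain $\lambda$-bound variables, so each such $D$ is De Bruijn closed and its denotation depends only on the named free variables of $s$; consequently $s^\circ$ is a legitimate term, the substitution $\dbsubst{x}$ used when descending under a $\lambda$ leaves the $D$'s untouched, and the corrected valuation propagates coherently through the recursion. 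Since $\III$ is proper, $\interpret{s^\circ}{\III}{\xi'}$ lies in the appropriate domain by the very definition of a $\lambda$-designation function; specializing to $s$ a $\lambda$-expression yields the needed membership, hence the well-definedness of $\LL'$. An easy sub-induction of the same shape shows that $\III'$ and $\III$ assign the same denotations to all $\diff$-free terms, so $\LL'$ extends $\LL$ and the two recursions are mutually consistent. (Term-generatedness of $\III$ is available and could alternatively be used to prove $\III'$ term-generated, but the functional-extensionality route above does not need it.)

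To finish, I would observe that $\III'$ is proper by construction, and that under any valuation sending the type variables $\alpha, \beta$ of the extensionality axiom to domains $\DD_1, \DD_2$ and the variables $z, y$ to $c, d \in \IIty(\fun)(\DD_1, \DD_2)$, the term $\diff\typeargs{\alpha,\beta}(z, y)$ denotes precisely the witness $a = \II'(\diff, (\DD_1, \DD_2), (c, d))$; hence either $c = d$, so $z \ceq y$ holds, or $c \neq d$ and $c(a) \neq d(a)$, so $z\>(\diff\typeargs{\alpha,\beta}(z, y)) \cneq y\>(\diff\typeargs{\alpha,\beta}(z, y))$ holds. Thus $\III'$ models the extensionality axiom and is $\diff$-aware. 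Finally, since $N$ contains no occurrence of $\diff$, every clause of $N$ has the same truth value under $\III'$ as under $\III$ by the agreement of denotations on $\diff$-free terms, so $\III' \models N$; therefore $\III'$ is a $\diff$-aware model of $N$.
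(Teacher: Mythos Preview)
Your proof is correct and shares the paper's core idea: since parameters cannot contain bound De~Bruijn indices, every occurrence of $\diff\typeargs{\tau,\upsilon}(u_1,u_2)$ inside a $\lambda$-body is De~Bruijn closed and can be replaced uniformly, reducing the well-definedness of $\LL'$ to a $\diff$-free term whose denotation under $\III$ already lies in the right Henkin domain.

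The execution differs in one point worth noting. The paper replaces each $\diff$-subterm by a \emph{ground} term $s$ with the right $\III$-denotation, which exists precisely because $\III$ is term-generated; then $\LL'(\xi,t) := \LL(\xi,t')$ for the resulting $\diff$-free $t'$, and properness is inherited from $\III$ directly. You instead replace each maximal $\diff$-subterm by a fresh \emph{variable} and shift the work into the valuation, building $\LL'$ and the denotation by a mutual recursion on term size. Your route avoids invoking term-generatedness (as you observe), at the cost of managing the simultaneous induction and the coherent propagation of the augmented valuation under binders. Both approaches hinge on the same syntactic fact about parameters; the paper's is a bit shorter to write down, while yours shows the hypothesis of term-generatedness is not essential for this lemma.
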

\begin{proof}
Let $\III = (\IIIty, \II, \LL)$ be a model of $N$.
We assume that the signature of $\III$ does not contain $\cst{diff}$.
We extend it into a $\cst{diff}$-aware model $\III'=(\IIIty', \II', \LL')$ as follows.

We define $\II'(\cst{diff}, \DD_1, \DD_2, a, b)$
to be an element $e \in \DD_1$ such that
$a(e) \ne b(e)$ if such an element exists and an arbitrary element of $\DD_1$ otherwise.
This ensures that $\III'$ is $\diff$-aware (Definition~\ref{def:H:diff-aware}).

To define $\LL'$, let $\xi$ be a valuation and $t$ be a $\lambda$-abstraction.
We replace each occurrence of $\diff\typeargs{\tau,\upsilon}(u,w)$ in $t$
with a ground term $s$ that does not contain $\cst{diff}$ such that
$\interpret{s}{\III}{} = \II'(\cst{diff}, \interpret{\tau}{\IIIty}{\xity},  \interpret{\upsilon}{\IIIty}{\xity}, \interpretaxi{u},  \interpretaxi{w})$.
Such a term $s$ exists because $\III$ is term-generated.
We start replacing the innermost occurrences of $\diff$ and proceed outward
to ensure that the parameters of a replaced $\cst{diff}$ do not contain $\cst{diff}$ themselves.
Let $t'$ be the result of this replacement.
Then we define $\LL'(\xi, t) = \LL(\xi, t')$.
This ensures that $\III'$ is a proper interpretation.

Since $N$ does not contain $\cst{diff}$
and $\III$ is a model of $N$,
it follows that $\III'$ is a model of $N$ as well.
\end{proof}

\begin{cor}\label{cor:H:completeness}
  Given a fair derivation 
  $(N_i)_{i\geq 0}$, where 
  \begin{enumerate}[label=\arabic*.,ref=\arabic*]
    \item \label{col:H:completeness:unsat} $N_0 \soundmodels \bot$,\slimfull{ and}{}
    \item \label{col:H:completeness:no-params} $N_0$ does not contain parameters,\slimfull{}{ and}
    \begin{full}\item \label{col:H:completeness:no-constr} $N_0$ does not contain constraints,\end{full}
  \end{enumerate}
  we have $\bot\slimfull{}{\constraint{S}} \in N_i$ for \slimfull{}{some satisfiable constraints $S$ and} some index $i$.
\end{cor}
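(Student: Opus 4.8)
The plan is to derive Corollary~\ref{cor:H:completeness} immediately from Theorem~\ref{thm:H:completeness} and Lemma~\ref{lem:H:soundmodels-implies-models}, arguing by contraposition. Suppose, for a contradiction, that no $N_i$ contains an empty clause $\bot\constraint{S}$ with satisfiable $S$. I will show that $N_0$ then has a $\diff$-aware model, which contradicts assumption~\ref{col:H:completeness:unsat} that $N_0 \soundmodels \bot$, since by Definition~\ref{def:H:diff-aware} the latter means that every $\diff$-aware model of $N_0$ is a model of $\bot$, and $\bot$ has no model.

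First I would record that assumption~\ref{col:H:completeness:no-params}, that $N_0$ contains no constants with parameters, entails in particular that $N_0$ does not contain the constant $\diff$, which is declared with two parameters. Next I would apply Theorem~\ref{thm:H:completeness} in contrapositive form: its hypotheses~\ref{thm:H:completeness:no-params} and~\ref{thm:H:completeness:no-constr} coincide with assumptions~\ref{col:H:completeness:no-params} and~\ref{col:H:completeness:no-constr} of the corollary, and its conclusion has just been assumed to fail; hence its hypothesis~\ref{thm:H:completeness:unsat} must fail, i.e., $N_0$ has a term-generated model. Applying Lemma~\ref{lem:H:soundmodels-implies-models} to this model, which is legitimate precisely because $N_0$ does not contain $\diff$, produces a $\diff$-aware model of $N_0$, completing the contradiction and hence the proof.

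The only point that needs genuine care — and it is minor — is the step from ``$N_0$ has no constants with parameters'' to ``$N_0$ does not contain $\diff$'', together with verifying that the hypothesis of Lemma~\ref{lem:H:soundmodels-implies-models} is thereby met; one should note that $\diff$ is never present in the input clause set but only ever introduced into conclusions of inferences such as \Ext, \Diff, and \infname{NegExt}. Everything else is a routine chaining of the two previously established results, so I do not anticipate any real obstacle.
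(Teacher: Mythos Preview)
Your proposal is correct and matches the paper's own proof, which simply reads ``By Theorem~\ref{thm:H:completeness} and Lemma~\ref{lem:H:soundmodels-implies-models}.'' You have merely made explicit the contrapositive chaining and the observation that ``no parameters'' entails ``no $\diff$'', which is exactly the intended unpacking of that one-line proof.
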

\begin{proof}
  By Theorem~\ref{thm:H:completeness} and Lemma~\ref{lem:H:soundmodels-implies-models}.
\end{proof}

\begin{full}

\section{Conclusion}

We presented the optimistic $\lambda$-superposition calculus. It is inspired by
the original $\lambda$-super\-position calculus of Bentkamp et
al.~\cite{bentkamp-et-al-2023-hosup-journal}, which in turn generalizes the
standard superposition calculus by Bachmair and Ganzinger
\cite{bachmair-ganzinger-1994}.
Our calculus has many advantages over the original $\lambda$-superposition
calculus, including more efficient handling of unification, functional
extensionality, and redundancy. Admittedly, its main disadvantage is its
lengthy refutational completeness proof.

We have some ideas on how to extend the calculus further:

\begin{itemize}
\item We believe that the inference rules that still require full unification
  could be adapted to work with partial unification by adding annotations to constrained clauses.
  The annotations would indicate which variables and which constraints stem from
  rules with the $\infname{Fluid}$- prefix.
  A modification of the map $\mapponly$ used in our proof could ensure that these variables
  do not carry the guarantee of being variable-irreducible that currently all variables carry.
  As a result, the proof of Lemma~\ref{lem:H:saturation} would
  no longer require full unification,
  but additional $\FluidSup$ inferences would be required into the variables
  marked by the annotations.

\item We conjecture that the $\Diff$ axiom is not necessary for
  refutational completeness although our proof
  currently requires it.
  Our proof uses it in Lemma~\ref{lem:argcong} to show that the constructed model
  is a valid higher-order model in the sense that
  equality of functions
  implies equality of their values on all arguments.
  We suspect that one can construct a
  model with this property using saturation \wrt\ $\ArgCong$ alone,
  but the model construction must be different from the one used in the present proof.

\item
One of the most explosive rules of the calculus is $\FluidSup$.
Bhayat and Suda \cite{bhayat-suda-2024} propose 
a modification of inference rules 
that delays flex-rigid pairs and flex-flex pairs by adding
them as negative literals to the conclusion.
They suggest that this modification in conjunction
with additional inference rules for the unification of
flex-rigid pairs could remove the need for $\FluidSup$.
We conjecture that one could prove refutational completeness of such a calculus
by restructuring Lemma~\ref{lem:reduction-prop}
to apply the modified inference rules
instead of Lemma~\ref{lem:reducible-s} whenever
the only terms reducible by $R_C$ 
correspond to positions below applied variables on level $\levelH$.

\item Similarly,
we conjecture that one could remove the $\Ext$ rule
by following the idea of Bhayat \cite{bhayat-2021-thesis} to
delay unification of functional terms
by adding them as negative literals to the conclusion.
If we immediately apply $\infname{NegExt}$ to these additional literals,
one can possibly prove refutational completeness
by restructuring Lemma~\ref{lem:reduction-prop}
to apply the modified inference rules
instead of Lemma~\ref{lem:reducible-s} whenever
the only terms reducible by $R_C$ are functional terms.
\end{itemize}

\subsubsection*{Acknowledgment}

We thank
Ahmed Bhayat,
Massin Guerdi, and
Martin Desharnais
for suggesting textual improvements.
We thank Nicolas Peltier and Maria Paola Bonacina
who we discussed some early ideas with.

Bentkamp and Blanchette's research has received funding from the 
European Research Council
(ERC, Matryoshka, 713999 and Nekoka, 101083038)
Bentkamp's research has
received funding from a Chinese Academy of Sciences President's International Fellowship
for Postdoctoral Researchers (grant No. 2021PT0015)
and from the program Freiraum 2022 of the Stiftung Innovation in der Hochschullehre
(ADAM: Anticipating the Digital Age of Mathematics, FRFMM-83/2022).
Blanchette's research has received
funding from the Netherlands Organization for Scientific Research (NWO) under the Vidi
program (project No. 016.Vidi.189.037, Lean Forward).
Hetzenberger's research has received funding from the 
European Research Council (ERC, ARTIST, 101002685).

Views and opinions expressed are however those of the authors only and do not necessarily reflect those of the European Union or the European Research Council. Neither the European Union nor the granting authority can be held responsible for them.

We have used artificial intelligence tools for textual editing.
\end{full}

\bibliographystyle{plain}
\bibliography{ms}

\end{document}